\newcommand{\f}{\frac}
\newcommand{\rd}{\partial}
\newcommand{\nab}{\nabla}
\newcommand{\alp}{\alpha}
\newcommand{\bt}{\beta}
\newcommand{\gi}{(g^{-1})}
\newcommand{\mfg}{\mathfrak g}
\newcommand{\ls}{\lesssim}
\newcommand{\de}{\delta}
 \def\f {\frac}
 \def\rd {\partial}
 \def\ls {\lesssim}
 \def\de {\delta}
 \def\ep {\epsilon}
 \def\i {\infty}
 \def\alp {\alpha}
 \def\bt {\beta}
 \def\Db {\langle D_x \rangle}
 \def\la {\langle}
 \def\ra {\rangle}
 \def\th {\theta}
 \def\nab {\nabla}
 \def\wo2 {\la x\ra^{-\f r2}}
    \newcommand{\Bes}{  B^{u_k,u_{k'}}_{\infty,1}(\Sigma_t)}
  \newcommand{\uphi}{\underline{\phi}}
  \newcommand{\uphil}{\underline{\phi}^{\bm \lambda}}
  \newcommand{\srd}{\slashed{\rd}}
  \newcommand{\partialuk}{  \srd_{{u}_k}}
   \newcommand{\partialukp}{  \srd_{{u}_{k'}}}
  \def\ls {\lesssim}
  \def\om {\omega}
  \def\th {\theta}
  \def\rd {\partial}
  \def\ep {\epsilon}
  \def\nab {\nabla}
  \def\f {\frac}
  \def\i {\infty}
  \def\de {\delta}
  \newcommand{\ud}{\mathrm{d}}
  \def\alp {\alpha}
  \def\bt {\beta}
  \def\Hb {\underline{H}}
  \def\mfg {\mathfrak{g}}
  \newcommand{\tphi}{ \widetilde{\phi}_{k}}
  \def\Omg {\Omega} 
  \newcommand{\Sd}{S_{\delta}}
    \newcommand{\sdelta}{\delta^{\frac{1}{2}}}
  \def\Db {\langle D_x \rangle}
  \def\la {\langle}
  \def\ra {\rangle}
  \def\th {\theta}
  \def\nab {\nabla}
  \def\wo2 {\la x\ra^{-\f r2}}
   \def\gi {(g^{-1})}
 \theoremstyle{plain}
 \newtheorem{theorem}{Theorem}[section]
\newtheorem{proposition}[theorem]{Proposition}
\newtheorem{lemma}[theorem]{Lemma}
\newtheorem{remark}[theorem]{Remark}
\newtheorem{rmk}[theorem]{Remark}
\newtheorem{definition}[theorem]{Definition}
 \newtheorem{thm}[theorem]{Theorem}
 \newtheorem{lem}[theorem]{Lemma}
 \newtheorem{prop}[theorem]{Proposition}
 \newtheorem{cor}[theorem]{Corollary}
 \newtheorem{defn}[theorem]{Definition}
 \newcommand{\RR}{\mathbb{R}}
 \newcommand{\Lgeo}{L_k^{geo}}
 \newcommand{\barL}{X_k}
 \newcommand{\n}{ \vec{n}}
 \newcommand{\rphi}{ \phi_{reg}}
 \newcommand{\pfstep}[1]{\vspace{.5em} {\it \noindent #1.}}
 \newcommand{\blue}[1]{{\color{blue} #1}}
 \def\f {\frac}
 \def\rd {\partial}
 \def\ls {\lesssim}
 \def\de {\delta}
 \def\ep {\epsilon}
 \def\i {\infty}
 \def\alp {\alpha}
 \def\bt {\beta}
 \def\Db {\langle D_x \rangle}
 \def\la {\langle}
 \def\ra {\rangle}
 \def\th {\theta}
 \def\nab {\nabla}
 \def\wo2 {\la x\ra^{-\f r2}}
 \DeclareMathOperator*{\esssup}{ess\,sup}
 \numberwithin{equation}{section}
\title{Nonlinear interaction of three impulsive gravitational waves I:\\
 main result and the geometric estimates}
\author{Jonathan Luk\thanks{jluk@stanford.edu}}
\affil{\small  Department of Mathematics, Stanford University, 450~Serra~Mall~Building~380,~Stanford~CA~94305-2125,~United~States~of~America \ }
\author{Maxime Van de Moortel\thanks{mmoortel@princeton.edu}}
\affil{\small  Department of Mathematics, Princeton University, Fine~Hall,~Washington~Road,~Princeton~NJ~08544,~United~States~of~America \ }
\begin{document}

\maketitle

\begin{abstract}
Impulsive gravitational waves are (weak) solutions to the Einstein vacuum equations such that the Riemann curvature tensor admits a delta singularity along a null hypersurface. The interaction of impulsive gravitational waves is then represented by the transversal intersection of these singular null hypersurfaces. 

This is the first of a series of two papers in which we prove that for all suitable $\mathbb U(1)$-symmetric initial data representing three ``small amplitude'' impulsive gravitational waves propagating towards each other transversally, there exists a local solution to the Einstein vacuum equations featuring the interaction of these waves. Moreover, we show that the solution remains Lipschitz everywhere and is $H^2_{loc} \cap C_{loc}^{1, \f 14-}$ away from the impulsive gravitational waves. This is the first construction of solutions to the Einstein vacuum equations featuring the interaction of three impulsive gravitational waves.

In this paper, we focus on the geometric estimates, i.e.~we control the metric and the null hypersurfaces assuming the wave estimates. The geometric estimates rely crucially on the features of the spacetime with three interacting impulsive gravitational waves, particularly that each wave is highly localized and that the waves are transversal to each other. In the second paper of the series, we will prove the wave estimates and complete the proof.
\end{abstract}

 	\tableofcontents

\section{Introduction}

It is well-known that  the Einstein vacuum equations
\begin{equation}\label{EE}
Ric(g) = 0
\end{equation}
admit (weak) solutions $(\mathcal M,g)$ in $(3+1)$ dimensions for which the Riemann curvature tensor admits delta singularities on a null hypersurface; see for instance \cite{PenroseIGW}. These are interpreted as \emph{impulsive gravitational waves}. 

Remarkably, an \emph{explicit} solution has been discovered by Khan--Penrose \cite{KhanPenrose} (see also \cite{Szekeres1}), in which there are two transversally intersecting null hypersurfaces on which (different) components of the Riemann curvature tensor admit delta singularities. This was interpreted as representing the interaction of two impulsive gravitational waves. The Khan--Penrose solution also exhibits interesting global properties in that the spacetime remains smooth locally beyond the interaction of the two impulsive gravitational waves but eventually a stronger Kasner-like spacelike singularity develops in the future \cite{Yurtsever88}. This is thought of as an idealized situation representing two very strong gravitational waves coming together from (infinitely) far-away strongly gravitating objects such that the interaction of the gravitational waves gives rise to a focusing effect and ultimately leads to a (more severe) singularity. After Khan--Penrose, there are many other explicit constructions of solutions featuring the interaction of impulsive gravitational waves, all of which rely on introducing a high degree of symmetry; see Section~\ref{sec:history.IGW}.

In \cite{LR1,LR2}, Luk--Rodnianski initiated the study of the propagation and interaction of impulsive gravitational waves without any symmetry assumptions. Even though the impulsive gravitational waves have lower regularity than that required for general local existence results \cite{sKiR2003, sKiRjS2015, hSdT2005}, Luk--Rodnianski developed a general local theory for solutions to the Einstein vacuum equations incorporating not only the propagation of one impulsive gravitational wave, but also the interactions of two impulsive gravitational waves. Their results can be summarized as follows:
\begin{theorem}[Luk--Rodnianski \cite{LR1,LR2}]\label{thm:LR}
Consider the characteristic initial value problem for the Einstein vacuum equations with characteristic initial data posed on two null hypersurfaces $H_0$ and $\Hb_0$ transversally intersecting at a spacelike $2$-sphere. Suppose on the initial hypersurface $H_0$ (respectively $\Hb_0$), the null second fundamental form has a jump discontinuity across the $2$-sphere $S_*$ (respectively $\underline{S}_*$) but smooth otherwise.

Then, assuming $S_*$ and $\underline{S}_*$ are sufficiently close to each other, there exists a unique local solution to the Einstein vacuum equations with two singular hypersurfaces emanating from the initial singularities $S_*$ and $\underline{S}_*$, and intersecting in the future. Moreover, the spacetime metric is everywhere Lipschitz and is smooth away from the union of null hypersurface emanating from $S_*$ and the null hypersurface emanating from $\underline{S}_{*}$.
\end{theorem}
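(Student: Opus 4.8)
The plan is to work in a \emph{double null gauge}. One introduces two optical functions $u,\underline u$ (solutions of the eikonal equation $(g^{-1})^{\mu\nu}\partial_\mu u\,\partial_\nu u=0$) so that $H_0=\{u=0\}$, $\underline H_0=\{\underline u=0\}$, foliates a neighborhood of the intersection sphere by the two-spheres $S_{u,\underline u}=\{u=\mathrm{const}\}\cap\{\underline u=\mathrm{const}\}$, and rewrites \eqref{EE} in the associated null frame as the standard coupled system: (i) transport equations along the two null directions for the Ricci (connection) coefficients $\mathrm{tr}\chi,\hat\chi,\mathrm{tr}\underline\chi,\hat{\underline\chi},\eta,\underline\eta,\omega,\underline\omega$; (ii) Hodge-elliptic systems for those coefficients on the $S_{u,\underline u}$; and (iii) the null Bianchi equations for the curvature components $\alpha,\beta,\rho,\sigma,\underline\beta,\underline\alpha$. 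In this gauge the hypotheses mean that $\hat\chi$ jumps across $S_*$ along $H_0$ and $\hat{\underline\chi}$ jumps across $\underline S_*$ along $\underline H_0$; propagating these, the two singular hypersurfaces of the theorem are $\{\underline u=\underline u_*\}$ (carrying a delta singularity in $\alpha$, across which $\hat\chi$ jumps) and $\{u=u_*\}$ (a delta in $\underline\alpha$, across which $\hat{\underline\chi}$ jumps), and they meet transversally at $S_{u_*,\underline u_*}$. The fundamental obstruction is that $\alpha$ is merely a measure in $\underline u$ (and $\underline\alpha$ a measure in $u$), so neither lies in $L^2$ of an outgoing (resp.\ incoming) null cone, and the usual Bel--Robinson curvature-flux estimates are unavailable.

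The heart of the argument is an a priori estimate on a thin slab $\{0\le u\le u_*+\epsilon,\ 0\le \underline u\le \underline u_*+\epsilon\}$ that never uses $\alpha$ or $\underline\alpha$ in $L^2$. In the Bianchi system $\alpha$ enters only through $\nabla_4\hat\chi=-\alpha+\cdots$ and $\nabla_4\beta=\mathrm{div}\,\alpha+\cdots$ (and $\underline\alpha$ only through the conjugate equations), so one closes the energy identity using solely the \emph{middle} Bianchi pairs $\nabla_3\beta\sim\nabla(\rho,\sigma)$, $\nabla_4(\rho,\sigma)\sim\mathrm{div}\,\beta$ and their conjugates, producing curvature fluxes of $\beta,\underline\beta$ and of $\rho,\sigma$ on the null cones but never of $\alpha,\underline\alpha$. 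To make these equations — and their commutations with angular derivatives — genuinely free of $\alpha,\underline\alpha$, one replaces $\rho,\sigma$ by the renormalized quantities $\check\rho=\rho-\tfrac12\hat{\underline\chi}\cdot\hat\chi$ and $\check\sigma=\sigma+\tfrac12\hat{\underline\chi}\wedge\hat\chi$; using $\nabla_4\hat\chi=-\alpha+\cdots$ and $\nabla_3\hat{\underline\chi}=-\underline\alpha+\cdots$ the bad terms in $\nabla_4\check\rho$ and $\nabla_3\check\rho$ (and similarly for $\check\sigma$) cancel, leaving only quadratic Ricci-coefficient sources. This yields $L^2$ control, with up to two angular derivatives, of $\beta,\check\rho,\check\sigma,\underline\beta$ on all cones, while $\alpha$ and $\underline\alpha$ are estimated only in $L^1_{\underline u}L^2(S)$ and $L^1_u L^2(S)$ from their own transport equations — consistent with being bounded measures. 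In parallel one bounds the Ricci coefficients by integrating their transport equations and recovering angular derivatives from the elliptic systems: $\mathrm{tr}\chi,\mathrm{tr}\underline\chi,\eta,\underline\eta,\omega,\underline\omega$ stay bounded (with a derivative in $L^2$), whereas $\hat\chi$ (resp.\ $\hat{\underline\chi}$) is merely bounded with a jump across $\{\underline u=\underline u_*\}$ (resp.\ $\{u=u_*\}$). All of this is run as a bootstrap on the slab, the closeness of $S_*,\underline S_*$ ensuring both that the optical functions stay regular — no caustics, $\mathrm{tr}\chi,\mathrm{tr}\underline\chi$ near their background values — so that the foliation covers the whole slab past the interaction, and that the nonlinear errors, which all carry a smallness factor from the thin $u,\underline u$ intervals, can be absorbed.

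What distinguishes the genuine \emph{interaction} from the propagation of a single wave is that each singularity lives in a different curvature component attached to a different optical function: the two singular hypersurfaces are transversal, and no error term — in a renormalized Bianchi equation, a null structure equation, or any of their angular-derivative commutations — ever contains the product of the two singular measures; near $S_{u_*,\underline u_*}$ the estimates degrade at worst additively, and the renormalization above is exactly what turns the would-be bilinear interactions $\hat{\underline\chi}\cdot\alpha$, $\hat\chi\cdot\underline\alpha$ into harmless quadratic Ricci terms there. Granting the a priori estimates, existence is obtained by regularization: mollify the jumps in the characteristic data to get smooth data, solve by the classical local existence theory for the characteristic initial value problem, note that the estimates above are uniform in the mollification parameter on a fixed slab, and pass to a limit solving \eqref{EE} weakly; uniqueness follows from a parallel renormalized energy estimate for the difference of two solutions in the same class. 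Finally, away from $\{\underline u=\underline u_*\}\cup\{u=u_*\}$ all curvature components are smooth (the data are smooth there and the renormalized system propagates this), so higher-order versions of the same estimates give that $g$ is smooth, while the transport and elliptic estimates give that $g$ is $C^{0,1}$ everywhere. I expect the main obstacle to be calibrating the renormalized curvature quantities together with the anisotropic, jump/measure-adapted function spaces so that the fully coupled transport--elliptic--Bianchi system closes while respecting the one-directional nature of each singularity — in particular, checking that after commuting with angular derivatives no term reintroduces an uncontrolled $\nabla\alpha$, $\nabla\underline\alpha$, or a product of the two singular measures near the interaction sphere.
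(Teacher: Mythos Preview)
The paper does not contain a proof of this statement: Theorem~\ref{thm:LR} is quoted as background from \cite{LR1,LR2} and is not re-proved here, so there is no ``paper's own proof'' to compare against. That said, your sketch is an accurate high-level summary of the Luk--Rodnianski strategy in those references --- the double null gauge, the renormalized Bianchi system closing on $(\beta,\check\rho,\check\sigma,\underline\beta)$ with $\check\rho=\rho-\tfrac12\hat{\underline\chi}\cdot\hat\chi$ and $\check\sigma=\sigma+\tfrac12\hat{\underline\chi}\wedge\hat\chi$ to eliminate $\alpha,\underline\alpha$ from the $L^2$ scheme, the transport--elliptic estimates for the Ricci coefficients, and the regularize-then-pass-to-the-limit construction --- and the structural point you emphasize (that the two singularities live in \emph{different} curvature components attached to \emph{different} optical functions, so no product of singular measures ever appears) is exactly the mechanism exploited there.
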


Theorem~\ref{thm:LR} shows that at least \emph{locally} near the interaction, the structure of the spacetime (in terms of smoothness) is similar to that of the Khan--Penrose solution. It moreover provides the setup to understand more generally the \emph{global} structure of spacetimes.

However, all the existing examples and results cover only the interaction of two impulsive gravitational waves (despite the fact that the Luk--Rodnianski theory applies without any symmetries and allows for very general wave fronts). The question remains as to what is the structure of the spacetime singularities --- even locally! --- associated with the interaction of three impulsive gravitational waves coming in from different directions. In fact, there is not even a single example of a solution to \eqref{EE} featuring the transversal interaction of three impulsive gravitational waves. In particular, a symmetry assumption such as $\mathbb T^2$-symmetry is too restrictive to allow for the construction of such examples.

\textbf{The purpose of this work is to go beyond the interaction of \emph{two} impulsive gravitational waves and to consider the interaction of \emph{three} impulsive gravitational waves. Our main result is a local theory for vacuum spacetime solutions under polarized $\mathbb U(1)$ symmetry which feature the transversal interaction of three small amplitude impulsive gravitational waves.}

To see the difference between two and three impulsive gravitational waves, first recall that in the proof of Theorem~\ref{thm:LR}, one fundamental insight is that even though the spacetime metric necessarily very singular in the directions transversal to each of the impulsive gravitational waves, one can find \emph{two vector fields which are linearly independent at every spacetime point}, such that the spacetime metric is more regular when (Lie-)differentiated in the direction of these two vector fields. These vector fields are constructed with the use of a so-called double null foliation. As a result of the strong reliance of the double null foliation, the methods of Theorem~\ref{thm:LR} cannot be extended to in the case of three impulsive gravitational waves, which necessarily requires new techniques. Moreover, known results on much weaker singularities for semi-linear problems suggest that the local singularity structure after the interaction of three impulsive gravitational waves may even be qualitatively different from that for two impulsive gravitational waves \cite{jRmR1982}; see further discussions in Remark~\ref{rmk:higher.reg} and Section~\ref{sec:semilinear}.

To make the problem slightly more tractable, we impose the simplifying assumption that the spacetime is polarized $\mathbb U(1)$ symmetric, i.e.~we consider an ambient manifold $(I\times \mathbb R^{3}, ^{(4)}g)$, where $I\subseteq \mathbb R$ is an interval, and stipulate that the metric takes the following ansatz
$$^{(4)}g= e^{-2\phi}g+ e^{2\phi}(dx^3)^2,$$
where $\phi:I\times \mathbb R^{2}\to \mathbb R$ is a scalar function and $g$ is a Lorentzian metric on $I\times \mathbb R^{2}$. The Einstein vacuum equations then reduce to the $(2+1)$-dimensional Einstein--scalar field problem
\begin{equation}\label{eq:Einstein.scalar.field}
\begin{cases}
Ric(g) = 2 \ud \phi \otimes \ud \phi \\
\Box_g \phi = 0
\end{cases},
\end{equation}
which simplifies the analysis. Notice that unlike $\mathbb T^2$-symmetry, in our setting the symmetry group is one-dimensional and therefore the transversal interaction of three impulsive gravitational waves is still allowed.

The setup of the problem (see the precise statements in Section~\ref{dataroughsection}) is the following. Consider initial data which are compactly supported such that $\rd_i \phi$ and $\rd_t\phi$ are small in $L^\i$ and are smooth except along three lines $\{ \ell_k \}_{k=1}^3$ where they have a (small) jump discontinuity. Moreover, prescribe the jump discontinuity in a manner such that locally they propagate towards each other, and, assuming that the metric remains $C^1$-close to Minkowski, arrange them to interact before time $t=1$. Prescribe $g$ by solving the constraint equations and imposing suitable gauge conditions (using modifications of methods in \cite{Huneau.constraints}).

The following is an informal version of our main theorem (see Section~\ref{sec:precise.statements} for a more precise statement):
\begin{theorem}\label{thm:intro}
Given a polarized $\mathbb U(1)$ symmetric initial data set corresponding to three (non-degenerate) small-amplitude impulsive gravitational waves propagating towards each other, there exists a weak solution to the Einstein vacuum equations corresponding to the given data up to and beyond the transversal interaction of these waves. In particular, in the solution, the metric is everywhere Lipschitz and is $H^2_{loc}\cap C^{1,\th}_{loc}$ for some $\th \in (0, \f 14)$ away from the three null hypersurfaces corresponding to the impulsive gravitational waves. 
\end{theorem}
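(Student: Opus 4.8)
The plan is to set up the problem as a small-data existence result for the $(2+1)$-dimensional Einstein--scalar field system \eqref{eq:Einstein.scalar.field}, reduced to a gauge in which the metric components solve quasilinear wave equations with source terms quadratic in $\rd\phi$, and to construct the solution by a nonlinear iteration (or continuity argument) using function spaces adapted to the three singular directions. The key structural idea generalizing Luk--Rodnianski is that although $\phi$ is merely Lipschitz and its derivatives jump across three transversal null hypersurfaces $\{\mathcal H_k\}_{k=1}^3$, for each $k$ one can find a foliation by null hypersurfaces $\{u_k = \text{const}\}$ adapted to $\mathcal H_k$; differentiating along the null generators $L_k$ of this foliation and along the $2$-dimensional ``good'' directions tangent to $\Sigma_t\cap\{u_k=\text{const}\}$ gains regularity. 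The decomposition $\phi = \sum_k \tilde\phi_k + \phi_{reg}$ localizes the singularity of each wave, $\tilde\phi_k$ carrying only the singularity transversal to $\mathcal H_k$; the crucial point is that since the three hypersurfaces are pairwise transversal, the bad direction of $\tilde\phi_k$ is a good direction for $\tilde\phi_{k'}$ when $k\ne k'$, so products like $\rd\tilde\phi_k\cdot\rd\tilde\phi_{k'}$ that appear in $Ric(g)$ are better than a naive count suggests.

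First I would fix the gauge and reduce to the scalar problem: prescribe initial data on $\Sigma_0$ by solving the constraint equations with the method of \cite{Huneau.constraints} (adapted to handle the jump discontinuities of $\rd\phi$ across the three lines $\ell_k$), arrange the jumps so the waves propagate transversally and interact before $t=1$, and verify the data are $\epsilon$-small in the appropriate norms. Second, I would construct the three null foliations $u_k$ by solving eikonal equations $g^{\alpha\beta}\rd_\alpha u_k\rd_\beta u_k = 0$ with data making $\{u_k = 0\}$ the singular hypersurface $\mathcal H_k$, and derive transport/elliptic estimates for the associated null structure coefficients ($\chi_k$, $\underline\chi_k$, $\eta_k$, torsion, the lapse) — this is the geometric heart of the present paper. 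Third, with the foliations in hand, I would define the anisotropic Besov-type spaces (the $B^{u_k,u_{k'}}_{\infty,1}$-type spaces appearing in the macros) measuring regularity along the good directions of each foliation, and establish the linear wave estimates for $\Box_g$ in these spaces (deferred to Paper II), including the product/commutator estimates that exploit transversality. Fourth, close the bootstrap: assume the metric is $C^1$-close to Minkowski with the null coefficients in the bootstrap region, re-derive improved estimates for $\phi$ (hence $\tilde\phi_k$, $\phi_{reg}$) from the wave equation, feed these into the source of $Ric(g)$, and recover improved metric estimates, controlling the interaction region where all three hypersurfaces meet. Finally, extract the claimed regularity: Lipschitz bounds everywhere from the uniform $C^1$ control plus the BV-type structure of the singularity, and $H^2_{loc}\cap C^{1,\theta}_{loc}$ away from $\cup_k\mathcal H_k$ by the elliptic/parabolic gain in the regular region, with $\theta<\tfrac14$ dictated by the worst triple-interaction term.

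The main obstacle is the triple interaction itself. For two waves the double null foliation provides two globally linearly independent good vector fields, and all bad behavior is confined to a single transversal derivative of a single wave; for three waves no such pair of foliations suffices simultaneously, and in the region where all three $\mathcal H_k$ are present the product $\rd\tilde\phi_1\cdot\rd\tilde\phi_2$ appearing in the curvature is singular in a direction that is ``bad'' for the third foliation $u_3$. Controlling the back-reaction of this term on the geometry — i.e.~showing the eikonal functions $u_3$ and their null coefficients retain enough regularity despite a source that is only a product of two transversal jumps — requires the fine localization and transversality structure, and is where one expects to lose regularity down to $\theta<\tfrac14$; by contrast, in the two-wave case one retains essentially $C^{1,1^-}$. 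A secondary difficulty is that the three foliations are only mutually transversal in a quantitative but not uniform way near their common intersection, so the constants in the product estimates must be tracked carefully to ensure the bootstrap closes on a time interval $I\ni[0,1]$ independent of $\epsilon$.
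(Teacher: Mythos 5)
Your geometric skeleton (reduction to \eqref{eq:Einstein.scalar.field}, constraints via \cite{Huneau.constraints}, eikonal functions $u_k$ with null frames and transport equations for $\chi_k,\eta_k$, the decomposition $\phi=\rphi+\sum_k\tphi$, anisotropic Besov spaces, transversality to handle $\rd\widetilde\phi_k\cdot\rd\widetilde\phi_{k'}$, and a bootstrap) matches the paper. However, there are two genuine gaps in how you propose to set up and close the argument.

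First, the gauge. You propose a gauge in which the metric components solve quasilinear wave equations sourced by $(\rd\phi)^2$. The paper instead works in an \emph{elliptic} gauge (maximal foliation $\Sigma_t$ with conformally flat induced metric), so that $N,\bt^i,\gamma$ satisfy Poisson-type equations \eqref{eq:intro.g.elliptic}, \eqref{Nellipticequation}--\eqref{betaellipticequation}. This is not a cosmetic difference: with $\rd\phi$ only in $L^\infty\cap H^{s'}$, $s'<\tfrac12$, a wave-type evolution for the metric would at best propagate the same low regularity, and you would not obtain $\rd^2_x\mfg\in L^\infty$ — which is exactly what is needed to control the eikonal functions in $W^{2,\infty}$, the frames $(L_k,E_k,X_k)$, and the commutators in the wave estimates. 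The elliptic gain (including the endpoint Besov elliptic estimate of Section~\ref{sec:elliptic.W2i}, run in physical space because the Besov control of $\rd\phi$ lives in $(u_k,u_{k'})$ coordinates) is the mechanism that makes the geometric part close; your proposal has no substitute for it.

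Second, the construction of the rough solution. You propose a direct nonlinear iteration (or continuity argument) on the impulsive data in anisotropic spaces. The paper does not do this, and it is not clear it can be done: the data barely fail $H^{3/2}$ and are below any available well-posedness threshold, so there is no local existence theory in which to run such an iteration. Instead the paper regularizes to $\de$-impulsive data (Definition~\ref{smoothdef}), proves a priori estimates \emph{uniform in $\de$} (with certain norms allowed to blow up like $\de^{-1/2}$, compensated by the $\de$-localization of the singular zones $S^k_\de$), and then extracts the rough solution by a compactness/limiting argument (Aubin--Lions, Banach--Alaoglu) in Section~\ref{sec:approximation}; this is also why uniqueness is not claimed, and why the non-degeneracy condition \eqref{eq:nondegeneracy.in.thm} enters (to solve the constraints for the approximating data). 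Your proposal is missing this approximation-and-limit scheme, and without it the step ``construct the solution by iteration'' has no foundation. A minor further point: the transversality of the three wavefronts is quantitatively uniform near the interaction point (the angle condition \eqref{cangle2} is propagated up to $O(\ep^{5/4})$ errors, cf.\ Proposition~\ref{prop:angle}), so the difficulty you flag there is not the real one; the delicate terms are rather the interaction terms in the $\rd_t\mfg$ elliptic estimates and the $\de^{-1/2}$-sized fluxes, handled via the spatially-dependent decomposition of $\rd_t$ and the smallness of the singular strips.
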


\begin{figure}
\centering
\includegraphics[width=60mm]{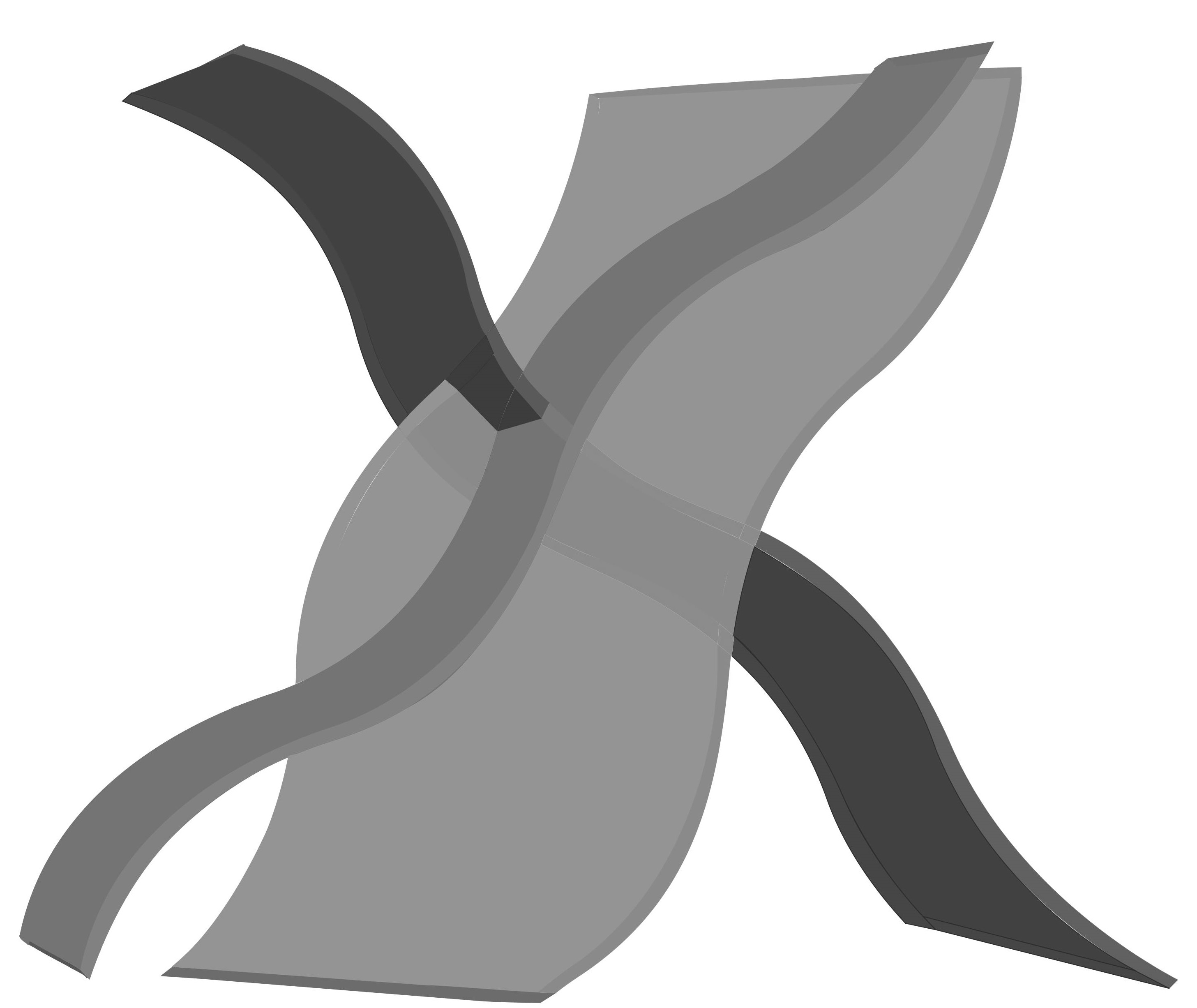}
\caption{Nonlinear interaction of three impulsive gravitational waves}
\label{fig:only.figure}
\end{figure}

A few remarks regarding Theorem~\ref{thm:intro} are in order:

\begin{remark}[More than three waves]\label{rmk:>3}
While Theorem~\ref{thm:intro} only explicitly treats the case of three transversally interacting impulsive gravitational waves, the techniques that are introduced can handle initial data featuring any finite number of impulsive gravitational waves propagating in different directions.

When there are more than three waves in the initial data, generically at each interaction point only three waves interact. Moreover, even when four waves are arranged to interact at the same point in the reduced $2+1$-dimensional spacetime, in the original $3+1$-dimensional spacetime, the waves interact at a one-dimension curve, which should be considered as a non-generic case. Put differently, to understand genuine interaction of four impulsive gravitational waves, it seems necessary to relax the symmetry assumption.
\end{remark}

\begin{remark}[Anisotropic estimates in $L^2$ spaces]
Even though Theorem~\ref{thm:intro} is most conveniently stated in terms of isotropic spaces (Lipschitz, $C^{1,\alp}$ and $H^2$), when we prove the Lipschitz and $C^{1,\alp}$ estimates, we need to first obtain higher regularity estimates in $L^2$ based spaces with respect to some geometrically defined vector fields; see Section~\ref{sec:aprioriestimates2}.
\end{remark}

\begin{remark}[$\de$-impulsive gravitational waves]\label{rmk:smooth.approx}
Impulsive gravitational waves should be viewed as an idealized description of very strong and localized gravitational waves. It may be argued that instead of having a solution whose curvature has a delta singularity, a more physically relevant description would be a \underline{smooth} solution whose curvature scales like an \underline{approximate delta singularity}. To capture this more general class of solutions, we introduce the notion of $\de$-impulsive gravitational waves (for $\de >0$), which roughly speaking corresponds to solutions to the Einstein vacuum equations whose Riemann curvature tensor is of amplitude $O(\de^{-1})$ in a $\de$-neighborhood of a null hypersurface (and of $O(1)$ otherwise). The class of $\de$-impulsive gravitational waves also includes the particular cases where the curvature profile has zero average (which is not possible with a delta function) as discussed \cite{KhanPenrose, Szekeres2}.

In this paper, we will also prove a version of Theorem~\ref{thm:intro} for the nonlinear interaction of three $\de$-impulsive gravitational waves (for all small $\de>0$); see Theorem~\ref{smooththeorem}. In fact, our approach in proving Theorem~\ref{thm:intro} for the impulsive gravitational waves proceeds by first approximating the impulsive wave data by those of $\de$-impulsive waves, and then passing to the $\de \to 0$ limit; see Section~\ref{sec:intro.delta}.

We remark also that the non-degeneracy condition in Theorem~\ref{thm:intro} (see 7 in Definition~\ref{roughdef}) will only be used for solving the constraint equations, in order to show that the impulsive wave data can indeed be approximated by $\de$-impulsive wave data.
\end{remark}

\begin{remark}[Uniqueness]\label{rmk:uniqueness}
We note explicitly that our proof does not give a uniqueness statement as we rely on a compactness argument. 
\end{remark}

\begin{remark}[Relation to low regularity problem]
The main difficulty of Theorem~\ref{thm:intro} (and of understanding impulsive gravitational waves in general) is the low regularity of the initial data. Without any symmetry assumptions, the best-known general local result is the celebrated bounded $L^2$ curvature theorem, which requires the initial data to have curvature in $L^2$ \cite{sKiRjS2015}, while impulsive gravitational waves have much lower regularity.

In this paper, we impose polarized $\mathbb U(1)$ symmetry, and under such a symmetry assumption, local well-posedness holds in a lower regularity than in the general $(3+1)$-dimensional case without symmetry. For $\mathbb U(1)$ symmetry (even without polarization), the results of \cite{hSdT2005} imply that local well-posedness can be obtained for in $H^{\f 74+\ep}$. While the optimal regularity in polarized $\mathbb U(1)$ symmetry is not explicitly discussed in the literature, some interesting progress has been made on a related quasilinear model problem \cite{hBjyC2001, Huneau.mastersthesis}.

Note that the initial data that we consider barely fail to be in the $H^{\f 32}$ space, and are below the threshold for any standard theorem.
More importantly, our main focus is not just to obtain a local existence result. Instead, we also obtain control of the Lipschitz norm and a finer description of the singularity structure. 
\end{remark}

\begin{remark}[Higher regularity]\label{rmk:higher.reg}
In the case of the interaction of two impulsive gravitational waves (recall Theorem~\ref{thm:LR}), the spacetime metric is smooth away from the impulsive gravitational waves. In our setting, the improved regularity we obtain away from the impulsive gravitational waves is only in class $C^{1,\alp}$. However, in view of some examples for even much weaker singularities for some simpler model semilinear problems (see~Section~\ref{sec:semilinear}), one may conjecture that in our setting the spacetime metric is \underline{not} smooth away from the union of impulsive gravitational waves, and that there is a weaker singularity that emanated from the intersection point of the three impulsive gravitational waves. It would be interesting to understand what is the optimal regularity that can be obtained.
\end{remark}

Our proof of Theorem~\ref{thm:intro} has three main components, which are highly coupled to each other. 
\begin{enumerate}
\item Control the geometric quantities, including the metric components, the null hypersurfaces and the commuting vector fields, assuming suitable bounds on the scalar wave. 
\item Show that the $L^2$-based wave energy estimates for the scalar wave imply via an anisotropic Sobolev embedding theorem that the scalar field is everywhere Lipschitz with improved H\"older regularity away from the singular hypersurfaces.
\item Prove $L^2$-based wave energy estimates for the scalar wave with appropriately chosen commutators.
\end{enumerate}
The three parts are of somewhat different nature, and are further discussed in Section~\ref{sec:aprioriestimates1}--\ref{sec:aprioriestimates3} respectively. In this paper, we will discuss the relation between the three steps, and carry out Step 1. Steps 2 and 3 will be performed in the companion paper \cite{LVdM2}.

The remainder of the introduction is structured as follows: In \textbf{Section~\ref{sec:method}} we will give a brief indication of the ideas used in the proof of Theorem~\ref{thm:intro}, emphasizing the ideas for the geometric estimates. In \textbf{Section~\ref{sec:discussions}}, we will discuss some related works. In \textbf{Section~\ref{sec:open.problems}}, we then give a list of some related problems. Finally, in \textbf{Section~\ref{sec:outline}}, we will outline the remainder of the paper.

\subsection{Ideas of the proof}\label{sec:method}

\subsubsection{$\de$-impulsive gravitational waves}\label{sec:intro.delta}

Rather than directly constructing a solution with the impulsive wave data, our strategy will be to consider a $\de$-approximate problem and the pass to the $\de \to 0$ limit. Recall that in the reduction \eqref{eq:Einstein.scalar.field}, the original Einstein vacuum equations reduce to a lower dimensional Einstein--scalar field system. This naturally separates the estimates into the scalar field part and the geometry part. For the impulsive gravitational wave problem, the scalar field $\phi$ (in the reduced system) is only Lipschitz, and $\rd \phi$ has jump discontinuities along three different hypersurfaces. We will instead first regularize the initial data, so that the data are smooth, and while $\rd\phi$ remains $O(\ep)$, we only have $\rd^2\phi = O(\ep \de^{-1})$ in a $\de$-neighborhood of three curves (for $\de \ll \ep \ll 1$). We will call these regularized waves the $\de$-impulsive gravitational waves; see precise conditions in Section~\ref{datasmoothsection}. 

The advantage of first considering the $\de$-impulsive waves before passing to the limit is that we can bound some norms which blow up in a controlled manner in terms of $\de^{-1}$. This is useful in the analysis because after introducing a suitable decomposition (see \eqref{eq:intro.decomposition} below), some quantities are \emph{small} in terms of $\de$, and can compensate for the large $\de^{-1}$ powers in the estimates. This is reminiscent of Christodoulou's short pulse method; see \cite{dC2009}.

The challenge will now be to show that for all $\delta>0$ sufficiently small, (1) there is a uniform time of existence of the solutions,  that (2) we can prove some estimates that are independent of $\delta$, and that (3) the estimates are sufficiently strong for us to pass to the $\de\to 0$ limit to obtain a solution.

We rely on a compactness argument to extract the $\de \to 0$; for this reason, we do not prove uniqueness (see Remark~\ref{rmk:uniqueness}). Note the importance to have \emph{strong} convergence of $(\phi,g)$ in $H^1$ since general weak $H^1$ limits of solutions to the Einstein vacuum equations are not necessarily (weak) vacuum solutions (see \cite{Burnett,HLHF}).

\subsubsection{Geometric constructions and the choice of gauge}\label{sec:intro.gauge}

The choice of gauge plays a fundamental role for low-regularity problems in general relativity. In the present work, we will in fact need to choose multiple gauges: one global system of coordinates determined by an elliptic gauge and three sets of null coordinates. The global elliptic gauge is chosen to maximize the regularity of the (reduced $(2+1)$-dimensional) metric coefficients given the low regularity setting, and each set of null coordinates is adapted to each propagating impulsive gravitational wave. Because we use multiple sets of coordinates, it is also important to control the transformation between any two sets of coordinates.

\textbf{Elliptic gauge.} Since the $\mathbb U(1)$-reduced problem \eqref{eq:Einstein.scalar.field} is effectively $(2+1)$-dimensional, the scalar field, which determines the Ricci curvature tensor, completely determines the Riemann curvature tensor.  In order to maximize the gain in regularity when reconstructing the metric from the curvature tensor, we use an elliptic gauge. More precisely, we foliate the spacetime by maximal hypersurfaces $\{\Sigma_t\}_{t\in [0,T]}$ and we choose spatial coordinates such that the induced metric is conformal to the flat metric on each $t\in [0,T]$. In doing so, each metric component $\mathfrak g$ obeys a spatial elliptic equation schematically of the form
\begin{equation}\label{eq:intro.g.elliptic}
\Delta \mathfrak g = (\rd \phi)^2 + (\rd_x \mathfrak g)^2,
\end{equation}
where $\Delta$ is the flat Laplacian and we use the convention that $\rd_x$ denotes a spatial derivative, while $\rd$ denotes either a spatial or a time derivative.

\textbf{Eikonal functions, null frames, and geometric coordinates.} To understand the propagation of the $\de$-impulsive waves, a crucial role is played by the \emph{eikonal functions} $u_k$, for $k=1,2,3$. Each $u_k$ is defined so that its level sets correspond to the null hypersurfaces along which one of the $\de$-impulsive wave propagates.

Associated with each eikonal function $u_k$, we will introduce
\begin{itemize}
\item a null frame $(L_k, E_k, X_k)$ such that $L_k$ and $E_k$ are tangential to constant-$u_k$ (null) hypersurfaces, and
\item a system of geometric coordinates $(u_k, t_k,\th_k)$ with $u_k$ as above, $t_k = t$ and $\th_k$ transported by $L_k \th_k = 0$.
\end{itemize}

The significance of the eikonal functions and the null frames lie in that
\begin{itemize}
\item $u_k$ captures the location of each $\de$-impulsive wave. In particular, for the $k$-th wave\footnote{To make precise the notion of the three different propagating $\de$-impulsive waves require a decomposition of $\phi$; see \eqref{eq:intro.decomposition} below.}, the most singular behavior is only expected in $u_k \in [-\de, \de]$.
\item The $L_k$ and $E_k$ vector fields corresponds to regular directions for the $k$-th wave. In other words, the $L_k$ and $E_k$ derivatives are better behaved than a generic derivative.
\end{itemize}

While the geometric constructions associated with $u_k$ are important for capturing the propagation of the $\de$-impulsive waves, in order for them to be useful, we need to obtain the relevant geometric control, including estimating the connection coefficients such as $\nab_{L_k} E_k$ etc. All the connection coefficients can algebraically determined from $\chi_k := g(\nab_{E_k} L_k, E_k)$, $\eta_k:=g(\nab_{X_k}L_k, E_k)$ and spatial derivatives of the metric coefficients $\mfg$ in the elliptic gauge coordinates. A bound for $\chi_k$ is particular means that the constant-$u_k$ null hypersurface are regular without conjugate points. 

The Einstein equations imply that $\chi_k$ and $\eta_k$ satisfy nonlinear transport equations (see \eqref{Leta}, \eqref{Lchi})
\begin{equation}\label{eq:Ray.intro}
L_k \chi_k = -2(L_k \phi)^2 + \cdots,\quad L_k \eta_k = -2(L_k\phi)(E_k \phi) + \cdots,
\end{equation}
where $\cdots$ are lower order terms.

\subsubsection{Estimates for the wave part}\label{sec:intro.waves}

The reduced equations \eqref{eq:Einstein.scalar.field} naturally divide the estimates into those for the wave part and for the geometric part. This paper is focused on the geometric part; we refer the reader to the introduction to \cite{LVdM2} for the discussion of the proof of the wave estimates. Nevertheless, since the geometric estimates are highly coupled with the wave estimates, we first point out the main wave estimates, before we explain how they dictate the geometric estimates that we prove.

The wave estimates we state here are natural from the point of view of propagation of singularities for linear wave equations. The much less obvious part, which will be addressed in \cite{LVdM2}, is that these estimates continue to hold in the quasilinear setting, particularly under the low regularity of the metric that we establish in this paper.

In order to capture the three propagating singularities,  we show that $\phi$ admits a decomposition 
\begin{equation}\label{eq:intro.decomposition}
\phi = \rphi + \sum_{k=1}^3 \tphi,
\end{equation}
where $\rphi$ is a ``regular'' part, and $\tphi$ are the ``singular'' parts, each corresponding to one of the impulsive waves. Each of these parts are defined to satisfy the wave equation, i.e.~$\Box_g \rphi = 0$ and $\Box_g \tphi = 0$.

The following are the most important features of $\rphi$ and $\tphi$.
\begin{enumerate}
\item On constant-$t$ hypersurfaces $\Sigma_t$, $\phi$ obeys the following isotropic bounds:
\begin{enumerate}
\item For fixed $s' \in (0, \f 12)$, $\|\phi \|_{H^{1+s'}(\Sigma_t)} + \|\rd \phi \|_{H^{s'}(\Sigma_t)} \ls \ep$.
\item $\|\rd \phi \|_{L^\i(\Sigma_t)} \ls \ep$. Importantly, this also holds with a Besov improvement\footnote{Notice that the Besov estimate implies that $\rd\phi$ is continuous, and thus fails for the (rough) impulsive gravitational waves. Importantly for our argument, the Besov estimate nonetheless holds for the $\de$-impulsive waves, uniformly for all sufficiently small $\de>0$.}. 
\end{enumerate}
\item The second derivatives of $\phi$ is not better than $\|\rd^2 \phi \|_{L^2(\Sigma_t)} \ls \ep \de^{-\f 12}$, but have the following features:
\begin{enumerate}
\item The regular part is better: $\| \rd \rphi \|_{H^{1+s'}(\Sigma_t)} \ls \ep$.
\item The bad part for $\rd^2 \tphi$ is only localized to $S^k_\de = \{ u_k \in [-\de,\de]\}$: in fact $\|\rd^2\tphi \|_{L^2(\Sigma_t \setminus S^k_\de)} \ls \ep$.
\item $L_k$ and $E_k$ are \emph{better} than general derivatives on $\tphi$: $\| \rd L_k \tphi \|_{L^2(\Sigma_t)} + \|\rd E_k \tphi \|_{L^2(\Sigma_t)} \ls \ep$. 
\end{enumerate}
\item The following flux estimates on constant-$u_{k'}$ null hypersurfaces $C^{k'}_{u_{k'}}$:
\begin{equation}\label{eq:intro.flux}
\sum_{Z_{k'} \in \{L_{k'}, E_{k'} \}} \|Z_{k'} \rd_x \tphi\|_{L^2(C^{k'}_{u_{k'}})} \ls \ep \de^{-\f 12}, \quad \sum_{Z_{k'} \in \{L_{k'}, E_{k'} \}} \|Z_{k'} \rd_x \rphi\|_{L^2(C^{k'}_{u_{k'}})} \ls \ep.
\end{equation}
There are two additional improvements to \eqref{eq:intro.flux}:
\begin{enumerate}
\item The bounds for $\tphi$ improve away from $S^k_\de = \{ u_k \in [-\de,\de]\}$:
\begin{equation}\label{eq:intro.flux.imp.1}
\sum_{Z_{k'} \in \{L_{k'}, E_{k'} \}} \|Z_{k'} \rd_x \tphi\|_{L^2(C^{k'}_{u_{k'}}\setminus S^k_\de)} \ls \ep.
\end{equation}
\item When $k=k'$, there is a $\de$-independent bound for all $u_k$ for a more restricted choice of derivatives:
\begin{equation}\label{eq:intro.flux.imp.2}
\|L_k \rd_x \tphi\|_{L^2(C^k_{u_k})} + \|E_k E_k \tphi\|_{L^2(C^k_{u_k})} \ls \ep.
\end{equation}
\end{enumerate}
\item Singular parts of \emph{different} impulsive waves $\widetilde{\phi}_k$ and $\widetilde{\phi}_{k'}$ ($k \neq k'$) are \emph{transversal} in a quantitative manner.
\end{enumerate}

For brevity, we have suppressed some additional wave estimates for $\phi$ which are proven and used in our arguments. Moreover, in order to obtain the Lipschitz bound for $\phi$, we need further bound $\|\rd E_k \rphi \|_{H^{s''}(\Sigma_t)} \ls \ep$ (for fixed $s'' \in (0, s')$). We will defer all these discussions to \cite{LVdM2}.

\subsubsection{Estimates for the metric component in the elliptic gauge}\label{sec:intro.elliptic} Terms related to the elliptic gauge are in principle the most regular due to the ellipticity of the equations \eqref{eq:intro.g.elliptic}. There are two main technical issues:
\begin{itemize}
\item In order to close our estimates, we need to bound $\rd^2_{ij} \mfg$ in $L^\i$. Since $(\rd \phi)^2 \in L^\infty$, this corresponds exactly to an end-point elliptic estimate that fails.
\item A priori, ellipticity only gains in \underline{spatial}, but not \underline{temporal} regularity.
\end{itemize}

\textbf{The easy estimates.} Since $\phi\in W^{1,\infty}\cap H^{1+s'}$ (1 in Section~\ref{sec:intro.waves}) and is compactly supported, standard elliptic estimates immediately imply that $\mathfrak{g} \in H^{2+s'}$ and $\mathfrak{g} \in W^{2,p}$ (with weights) for any $p\in [1,+\infty)$. The only subtlety concerns the weights at infinity: $\rd_x \mfg$ are no better than $\la x \ra^{-1}$ at $\infty$, and thus to handle the $(\rd_x \mfg)^2$ term on the RHS of \eqref{eq:intro.g.elliptic} requires using the precise structure of the nonlinear terms. These weight issues can be handled in a similar manner as \cite{Huneau.constraints, HL.elliptic}.

\textbf{The endpoint Besov space elliptic estimates.} To close our estimates we need further that $\rd^2_x \mfg \in L^\i$. This corresponds to the $p=+\infty$ case for the $L^p$ Calderon--Zygmund elliptic theory, which does \underline{not} hold. Hence to obtain the $W^{2,\infty}$ estimate we need a slightly stronger \underline{Besov} type estimate for $\rd\phi$ (recall 1(a) in Section~\ref{sec:intro.waves}). Our anisotropic Sobolev embedding theorem (see Theorem~\ref{thm:bootstrap.Li}), which we use to obtain Lipschitz bounds for $\phi$, naturally gives a Besov strengthening. However, the Besov estimates we get are with respect to $(u_k, u_{k'})$ coordinates ($k\neq k'$), and for this reason we introduce an extra physical space argument to obtain a good endpoint elliptic estimates for the operator $\Delta$ defined in the elliptic gauge coordinates.

\textbf{Estimates for $\rd_t \mfg$.} It is slightly more delicate to control second derivatives of the metric coefficients with one spatial and one $\rd_t$ derivative. Differentiating \eqref{eq:intro.g.elliptic} by $\rd_t$, we obtain an equation with the follow main term:
\begin{equation}\label{eq:dtN.intro}
\Delta \rd_t \mathfrak g = (\rd^2 \phi) (\rd \phi) + \dots
\end{equation}
In terms of $W^{s,p}$ spaces, the RHS is no better than being in $L^1$. This by itself would only give an estimate for $\rd_t \mathfrak g$ which is much worse than that for $\rd_i \mathfrak g$.

Here is the key idea: we decompose $\rd_t$ as linear combination of a good null derivative and spatial derivatives. Take for instance a contribution from two parallel waves, i.e.~a term $\rd_t  [(\rd \widetilde{\phi}_k)^2]$ on the RHS of \eqref{eq:dtN.intro}. With some (well-controlled) coefficients $\alp$ and $\sigma^i$, we have
$$\rd_t [(\rd \widetilde{\phi}_k)^2]  = \alp L_k [(\rd \widetilde{\phi}_k)^2] + \sigma^i \rd_i [(\rd \widetilde{\phi}_k)^2] =  \alp L_k [(\rd \widetilde{\phi}_k)^2] + \rd_i [\sigma^i  (\rd \widetilde{\phi}_k)^2] + \dots.$$
The $L_k$ derivative is a good derivative for $\tphi$ which then gives better estimates. The other term is essentially a total \emph{spatial} derivative so that we gain with the ellipticity of the equation \eqref{eq:dtN.intro}.

The above argument becomes more subtle when there is an interaction of two waves propagating in different directions. Nevertheless, for a term such as
$\rd_t [(\rd \widetilde{\phi}_j)(\rd \widetilde{\phi}_k)],$ (with $j\neq k$)
we exploit precisely that the waves are \emph{transversal} (point 4 in Section~\ref{sec:intro.waves}) and decompose $\rd_t$ is a \underline{spatially-dependent} manner. The resulting decomposition of $\rd_t$ is not regular so that we are not able to control $\rd_x \rd_t\mfg$ in $L^\i$, yet the error generated is sufficiently lower order that we still bound $\rd_x \rd_t \mfg$ in sufficiently high $L^p$ spaces, as well as in $H^{s'}$.

\subsubsection{Bounds for the eikonal functions and Ricci coefficients}\label{sec:intro.eikonal}
We now turn to the bounds for geometric quantities related to the eikonal functions $u_k$, particularly the Ricci coefficients $\chi_k$ and $\eta_k$, which obey the transport equations \eqref{eq:Ray.intro}. 

Using the transport equation \eqref{eq:Ray.intro} and the Lipschitz bound for $\phi$, it immediately follows that $\chi_k$ and $\eta_k$ are bounded in (weighted) $L^\i$. 
 
The first derivative estimates for $\chi_k$ and $\eta_k$ are more subtle. Clearly, $L_k \chi_k$ and $L_k \eta_k$ are in $L^\i$ by \eqref{eq:Ray.intro} and the above discussions. For the other first derivative, consider first $\chi_k$. We differentiate \eqref{eq:Ray.intro}  by $\rd_q$ to get the following schematic equation:
\begin{equation}\label{eq:LEchi.intro}
L_k \rd_q \chi_k = - 2(L_k \rd_q \phi)(L_k \phi) + \cdots,
\end{equation}
where $\cdots$ are lower order terms as before.

To control the second derivative term $L_k \rd_q \phi$ (recall that the first derivative $L_k\phi$ is bounded), we use the flux estimates (3 in Section~\ref{sec:intro.waves}). Decompose the top derivative $L_k \rd_q \phi= L_k \rd_q \phi_{reg} + \sum_{j=1,2,3} L_k \rd_q \widetilde{\phi}_j$. The regular part is well under control by \eqref{eq:intro.flux}, but for the singular parts, \eqref{eq:intro.flux} itself (which has a $\de^{-\f 12}$ weight) is too weak, and we need to separately consider the cases $k=j$ or $k\neq j$. 
\begin{itemize}
\item First, when $k=j$, we use the fact $L_k$ is a good derivative for $\phi_k$ and the term $L_k \rd_q \widetilde{\phi}_k$ is controlled by \eqref{eq:intro.flux.imp.2} without $\de^{-\f 12}$ weights.
\item Second, when $k \neq j$, the energy flux does \underline{not} give good estimates for  $L_k \rd_q \widetilde{\phi}_j$. Here, we use crucially the \emph{transversality} of the $\de$-impulsive waves (4 in Section~\ref{sec:intro.waves}). While the $L^2$ norm of $L_k \rd_q \widetilde{\phi}_k$ is large it is concentrated in $S^k_\de$, which is a small region of length scale $\de$. On the other hand, integrating \eqref{eq:LEchi.intro} requires an $L^1$ --- instead of $L^2$ --- estimate along the integral curve of $L_k$. We can thus gain a power of $\de^{\f 12}$ (by the Cauchy--Schwarz inequality) using the smallness of the length scale.
\end{itemize}

This gives a good estimate for $\rd_q \chi_k$ in a mixed $L^\i_{u_k} L^2_{\th_k}$ type space. A similar argument controls $E_k \eta_k$. However, for a general spatial derivative $\rd_x \eta_k$, when running the above argument for the $k = j$ case, there is a second derivative term which is \underline{not} controlled by \eqref{eq:intro.flux.imp.2}, which results in the $L^\i_{u_k} L^2_{\th_k}$ norm of $\rd_x\eta_k$ blowing up as $\de^{-\f 12}$. Instead, we can only control $\rd_x \eta_k$ in the $L^2_{u_k}L^2_{\th_k}$ space.

Finally, we have some bounds for special combinations of second derivatives such as $L_k^2 \chi_k$, $L_k \rd_q \chi_k$, etc., by virtue of the equations \eqref{eq:Ray.intro} and the already established bounds.

\subsubsection{Final remarks}

Ultimately, the geometric estimates are important because they are needed to close the wave estimates. In \cite{LVdM2}, we will show indeed that the geometric estimates we obtain are sufficient. 

Since we will need some anisotropic bounds for the wave variables up to $2+s''$ derivatives, as well as some $\de^{-\f12}$-dependent bounds up to $3$ derivatives, all the geometric estimates that we mentioned above (for instance the $L^\i$ bound for $\rd^2_x \mfg$, the $L^p\cap H^{s'}$ ($p\in [4,\infty)$) bound for $\rd_x \rd_t \mfg$ and the $L^\i_{u_k} L^2_{\th_k}$ bound for $\chi_k$, etc.) are necessary to carry out the energy estimates for the scalar wave.

On the other hand, it is quite remarkable that even though for some geometric quantities we only have weaker estimates (for instance we do \underline{not} put $\rd_x \rd_t \mfg$ in $L^\i$ or obtain a $\de^{-\f 12}$-independent $L^\i_{u_k} L^2_{\th_k}$ bound for $\rd_x \eta_k$ or control general second derivatives of $\chi_k$ and $\eta_k$), theses bounds are sufficient in the commutator estimates that we need to bound the wave part in \cite{LVdM2}. This is for instance because certain potentially dangerous terms do not appear due to the structure of the commutators.

Finally, in order to close the argument, we need to control the change between the elliptic gauge coordinates and the geometric coordinates, as well as the commutators for various vector fields. It will turn out that the control we establish for the geometric quantities will just be sufficient to justify that the eikonal function $u_k$ is a $W^{2,\infty}$ function in terms of the elliptic gauge coordinates, and that the second derivatives of the commutation vector fields $(L_k, E_k, X_k)$ with respect to the elliptic gauge coordinate derivatives are in $L^2$. Both of these statements are used in order to close the geometric and wave estimates.

\subsection{Related works}\label{sec:discussions}

\subsubsection{Impulsive gravitational waves}\label{sec:history.IGW}

Beyond \cite{KhanPenrose,Szekeres1}, there are further examples of interactions of two impulsive gravitational waves, see for instance \cite{sCvF1984,sCbcX1986,vFjI1987,HE1,HE2,HE3,NutkuHalil}. All these constructions rely heavily on symmetry assumptions. The singularity structures in these examples and their stability were further discussed in \cite{Tipler,Yurtsever,Yurtsever88}. We refer the readers to the books \cite{cBpaH2003,jbG1991} for further details and related examples (including those where matter fields are present).

In terms of mathematical results, priori to the works \cite{LR1, LR2}, there were low-regularity existence results in $\mathbb T^2$-symmetry \cite{pgLjS2010, pgLjmS2011} which in particular included impulsive gravitational waves and their interactions. Relatedly, Christodoulou \cite{dC1993} constructed solutions in the BV class to the Einstein--scalar field system in spherical symmetry. This can be thought of as including as a particular case a scalar field analogue of impulsive gravitational waves. Finally, very recently, a class of spacetimes featuring the interactions of two impulsive gravitational waves without any symmetry but still possessing a piece of future null infinity has been constructed in \cite{Yannis}.

\subsubsection{Low-regularity problems in general relativity and beyond}

Our problem can be viewed in the larger context of low-regularity problems in general relativity. In the Sobolev $H^s$ spaces, this has attracted much interest \cite{bHjyC99b,bHjyC99a,sKiR2003,sKiR2005d,sKiR2010,hSdT2005,dT2002}, culminating in the seminal proof of the bounded $L^2$ curvature theorem \cite{sKiRjS2015}, which requires the initial data to only be in $H^2$. 

Low-regularity problems are interesting for quasilinear wave equations beyond the Einstein equations, see for example \cite{mDcLgMjS2019,hSdT2005,qW2019}. We highlight particularly the work \cite{hBjyC2001} of Bahouri--Chemin on the high-dimensional low-regularity well-posedness of a coupled wave-elliptic system similar to the structure of the polarized $\mathbb U(1)$-reduced Einstein vacuum equations in an elliptic gauge.

\subsubsection{High-frequency waves and high-frequency limits}\label{sec:HF}

We compare our result with the work of Huneau--Luk \cite{HLHF} on high-frequency limits in polarized $\mathbb U(1)$ symmetry. In both \cite{HLHF} and this paper, a local existence result is proved where $\phi$ is Lipschitz but not better. On the one hand, the use of the elliptic gauge and (approximate) eikonal functions plays an important role in both papers. On the other hand, however, the analysis is quite different as one needs to rely on precise features of the problems (either that the waves are of high frequency in \cite{HLHF} or are highly localized in our setting).  

\subsubsection{Semi-linear model problems and propagation of weak singularities}\label{sec:semilinear}

The present work can be viewed in the larger context of interaction of conormal singularities for hyperbolic equations. There is a large literature for \emph{weak} conormal singularities, beginning with the pioneering works of Bony \cite{jmB1982, jmB1981, jmBICM}. In particular, these singularities are sufficiently weak so that \emph{classical well-posedness results hold}. For the interaction of two conormal singularities in the quasilinear case, see \cite{sA1988, sjK1995}. 

The interaction of three conormal singularities --- even for very weak singularities --- has only been studied for semilinear model problems. It has been shown \cite{jmB1984, jmB1986, rbM1984, rMnR1985} that in this case the only possible new singularity after the triple interaction must be weaker and lies in the cone emanating from the intersection. Moreover, it has been demonstrated that in general a new singularity could indeed arise in various different models \cite{jRmR1982, aSByW2018, aSB2020}.

For a sample of further related works, see \cite{mB1983,mB1988,hyC1987,pG1988,gL1989,rbMaSB1995,jRmcR1980,aSB1990,mZ1994} and the references therein. See also \cite{xCmLlOgP2019, yKmLgU2018, mLgUyW2017, mLgUyW2018, gUyW2018, yWtZ2019} for related more recent works concerning inverse problems.

\subsection{Open problems and discussions}\label{sec:open.problems}

We discuss some open problems and possible future directions related to our work.

\begin{enumerate}
\item (\textbf{Non-compactly supported initial data}) Our main theorem assumes that $\phi$ is initially compactly supported. It could be expected that the compact support can be replaced by fast decay of the initial data, but this creates a few technical issues in view of the fact that the metric coefficients grow logarithmically as $|x|\to \infty$.

\item (\textbf{Large data}) The theory of \cite{LR2} allows also for the interaction of impulsive gravitational waves of \emph{large} amplitude. Among other things, our theory is limited to the small amplitude region due to the global elliptic gauge.\footnote{We remark that a smallness assumption is already needed in the smooth theory in such a gauge \cite{HL.elliptic}.}

\item (\textbf{Beyond polarized $\mathbb U(1)$ symmetry}) The present work restricts to polarized $\mathbb U(1)$ symmetry. While the completely general case seems out of reach at the moment, the natural next step would be to study the interaction still under the $\mathbb U(1)$ symmetry assumption but without polarization. In this case, the equations reduce to an Einstein--wave map system (as opposed to simply the Einstein--scalar field system) in $(2+1)$ dimensions. It seems plausible that the extra ``wave map'' part can be controlled after choosing an elliptic (e.g.~Coulomb) gauge, so that one can use some of the techniques in this paper. We hope to return to this problem in the future. 

\item (\textbf{More singular initial data}) In \cite{LR2}, a more general theorem was proven, which allows the interaction of not only impulsive gravitational waves, but also of more singular data where the worst Christoffel symbol is only $L^2$ (instead of being in $L^\infty\cap BV$). This stronger result has various other applications \cite{mDjL2017,jL2013,LR3}, and is in particular related to the interaction of null dust shells \cite{LR3}. It is therefore natural to ask whether we can extend our results in the present paper on the interaction of three impulsive gravitational waves to more singular initial data.

\item (\textbf{Uniqueness}) As already mentioned in Remark~\ref{rmk:uniqueness}, our main theorem does not give uniqueness. It is of interest to appropriately formulate and prove a uniqueness result for these solutions.\footnote{Note that on the other hand, uniqueness for the $\de$-impulsive waves of course follows from standard theory.}

\item (\textbf{Higher regularity})
Ideally one would like to prove stronger regularity statements away from the union of the impulsive waves, or better yet to understand the optimal regularity.

\item (\textbf{Lower bounds and creation of new singularities})
Related to the last point, it would be interesting to show that the jump in the data persists along null characteristics, or even to derive a transport equation for the jump. More ambitiously, one can study whether new (but weaker) singularities appear in the cone emanating from the intersection point of the three impulsive gravitational waves as in the semilinear model problems \cite{jRmR1982}.

\item (\textbf{Interaction of four impulsive gravitational waves}) While our work allows for the transversal interaction of any number of impulsive gravitational waves \emph{under the polarized $\mathbb U(1)$ symmetry assumption}, it would be of interest to study the \emph{generic} transversal interaction of four impulsive gravitational waves, where four waves interact at a point in $(3+1)$ dimensions. See Remark~\ref{rmk:>3}.

\end{enumerate}

\subsection{Outline of the paper}\label{sec:outline}
The remainder of the paper is structured as follows. 

We begin with definitions for the geometric setup and the norms in \textbf{Section~\ref{sec:setup} and \ref{sec:norms}} respectively.

In \textbf{Section~\ref{data}}, we then define the class of data corresponding to both impulsive gravitational waves and $\de$-impulsive gravitational waves (recall Remark~\ref{rmk:smooth.approx}). The precise statements of the main theorem for impulsive gravitational waves (Theorem~\ref{roughtheorem}) and for $\de$-impulsive gravitational waves (Theorem~\ref{smooththeorem}) are then given in \textbf{Section~\ref{sec:precise.statements}}. In \textbf{Section~\ref{sec:approximation}}, we prove Theorem~\ref{roughtheorem} assuming Theorem~\ref{smooththeorem}. In \textbf{Section~\ref{sec:proof.smooth.theorem}}, we prove Theorem~\ref{smooththeorem} by reducing it to three theorems on a priori estimates (Theorems~\ref{thm:bootstrap.metric}, \ref{thm:bootstrap.Li} and \ref{thm:energyest}). 

The remainder of this paper is devoted to the proof of Theorem~\ref{thm:bootstrap.metric} (Theorems~\ref{thm:bootstrap.Li} and \ref{thm:energyest} will be proven in \cite{LVdM2}). After proving preliminary estimates in \textbf{Section~\ref{preliminary.estimates.section}}, we obtain geometric estimates associated to the elliptic gauge in \textbf{Section~\ref{metricsection}} and geometric estimates associated to the eikonal functions in \textbf{Section~\ref{sec:Ricci.coeff}}. In \textbf{Section~\ref{sec:thm.bootstrap.metric.conclusion}}, we then conclude the proof Theorem~\ref{thm:bootstrap.metric}.

Finally, in \textbf{Appendix~\ref{sec:appendix}}, we handle all issues regarding initial data and constraint equations.

\subsection*{Acknowledgements} We are grateful to Igor Rodnianski for inspiring conversations. We particularly thank him for some ideas that greatly simplified our original arguments. Those ideas now appear as part of Section~\ref{sec:metric.inho.1}. We also thank Hayd\'ee Pacheco for Figure~\ref{fig:only.figure}.

Part of this work was carried out when M.~Van de Moortel~was a visiting student at Stanford University. During the time that this work was pursued, J.~Luk~has been supported by a Terman fellowship and the NSF grants DMS-1709458 and DMS-2005435.

 	 	\section{Basic geometric setup and the Einstein equations}\label{sec:setup}
 	
 	In this section, we introduce the basic geometric setup. This plays a fundamental role for the whole series of papers.
 	
 	In \textbf{Section~\ref{ellipticgaugedef}}, we introduce the polarized $\mathbb U(1)$ symmetry and our elliptic gauge condition. In \textbf{Section~\ref{Einsteineqsection}}, we discuss the Einstein vacuum equations under these symmetry and gauge conditions.
 	
 	In \textbf{Section~\ref{XELexpressionsection}}, we introduce the eikonal functions and the related null frames $(L_k,\barL,E_k)$, which are important to capture the propagating impulsive waves. In \textbf{Section~\ref{relationXELgeocoordinatesection}}, we introduce a system of geometric coordinates associated to the eikonal functions. In \textbf{Section~\ref{coordinatetransform}}, we derive transformation formulas between the null frames, and the coordinate vector fields in various different coordinate system. 
	
	In \textbf{Section~\ref{riccinullframesection}}, we compute all the connection coefficients with respect to the null frames $(L_k,\barL,E_k)$. In \textbf{Section~\ref{sec:XEL.derivative}}, we compute the derivatives of the coefficients of $(L_k,\barL,E_k)$ in the $(\rd_t, \rd_1, \rd_2)$ basis. 
	
	In \textbf{Section~\ref{torsionnal}} and \textbf{Section~\ref{Lderivativericcisection}}, we compute respectively the transport equations for the frame coefficients and the connection coefficients.
	
	Finally, in \textbf{Section~\ref{sec:eikonal.initial}}, we compute the initial values of all the eikonal quantities.

 	\subsection{Elliptic gauge and conformally flat spatial coordinates} \label{ellipticgaugedef}

 	\begin{defn}[Polarized $\mathbb U(1)$ symmetry]\label{def:U1}
 		We say that a (3+1) Lorentzian manifold $(\mathcal{M} = I \times\RR^2 \times \mathbb{S}^1, ^{(4)}g)$, where $I\subseteq \RR$ is an interval, has \textbf{polarized $\mathbb U(1)$ symmetry} if the metric $^{(4)}g$ can be expressed as: 
 		
 		\begin{equation}\label{eq:U1}
 		^{(4)}g = e^{-2\phi} g + e^{2\phi} (dx^3)^2,
 		\end{equation}
 		where $\phi$ is a scalar function on $I\times \RR^2$ and $g$ is a $(2+1)$ Lorentzian metric\footnote{Note that since $\phi$ and $g$ are defined on $I\times \RR^2 $, they do not depend on $x^3$, the coordinate on $\mathbb{S}^1$.} on  $I \times \RR^2$.
 	\end{defn}
 	
 	\begin{defn}[The foliation $\Sigma_t$]\label{def:Sigmat}
 		Given a spacetime as in Definition~\ref{def:U1}, we foliate the $2+1$ spacetime $(I\times \RR^2, g)$ with hypersurfaces $\{ \Sigma_t\}_{t\in I}$, where each $\Sigma_t$ is spacelike. We will later make a particular choice of $t$; see Definition~\ref{def:gauge}. 
 		
 		The metric can then be written as 
 		
 		\begin{equation} \label{metric2+1}
 		g = -N^2 dt^2 +\bar{g}_{i j} (dx^i + \beta^i dt) (dx^j + \beta^j dt),
 		\end{equation}
 	for some function $N>0$ and Riemannian metric $\bar{g}_{ij}$.
	
	Here, and the remainder of the paper, we use the convention the \textbf{lower case Latin indices refer to the spatial coordinates $(x^1, x^2)$}, and repeated indices are summed over. In contrast, we use \textbf{lower case Greek indices to refer to spacetime coordinates $(x^0,x^1,x^2) := (t,x^1,x^2)$.} 
	\end{defn}
	
	\begin{defn}[Coordinate derivatives]\label{def:coord.der}
	From now on, we use $\rd_i = \rd_{x^i}$ ($i=1,2$) to denote the spatial coordinate partial derivatives, 
	and $\rd_\alp$ ($\alp = 0,1,2$) to denote spacetime coordinate partial derivatives with respect to the $(x^0,x^1,x^2) := (t,x^1,x^2)$ coordinate system in \eqref{metric2+1}. 
 	\end{defn}

 	\begin{defn}\label{def:miscellaneous}
 		Given $(I \times \RR^2, g)$ and $\{ \Sigma_t\}_{t\in I}$ in Definition~\ref{def:Sigmat}.
 		\begin{enumerate}
 			\item (Spacetime connection) Denote by $\nabla$ the Levi--Civita connection of the spacetime metric $g$.
 			\item (Induced metric) Denote by $\bar{g}$ the induced metric on the two-dimensional hypersurface $\Sigma_t$.
 			\item (Normal to $\Sigma_t$) Denote by $\n$ the future-directed unit normal to $\Sigma_t$;
 				\begin{equation} \label{defnormal}
 				\n =  \frac{  \partial_t - \beta^i \partial_{i} }{N}.
 				\end{equation}
 				satisfying $g(\n,\n)=-1$. 
 				 Define also $e_0$ to be the vector field
 			\begin{equation}\label{def:e0}
 			e_0=  \partial_t - \beta^i \partial_{i} = N\cdot \n.
 			\end{equation}
 			\item (Second fundamental form) Define\footnote{We remark that the definition of $K$ differs by a factor of $-1$ from that in \cite{HL.elliptic}.} $K$ to be the second fundamental form on $\Sigma_t$:
 			\begin{equation} \label{Kdef}
 			K(Y,Z) = g( \nabla_Y \n , Z),
 			\end{equation}
 			for every $Y,\,Z \in T\Sigma_t$. 
			 \end{enumerate} 
 	\end{defn}
 	
 	\begin{defn}[Gauge conditions]\label{def:gauge}
 		We define our gauge conditions (assuming already \eqref{eq:U1}) as follows:
 		\begin{enumerate}
 			\item For every $t\in I$, $\Sigma_t$ is required to be maximal, i.e.
 			\begin{equation} \label{maximality}
			(\bar{g}^{-1})^{ i j}  K_{i j} = 0 .
 			\end{equation}
 			Note that \eqref{maximality} defines the coordinate $t$.
 			\item We choose the coordinate system on $\Sigma_t$ so that $\bar{g}_{i j}$ is conformally flat, i.e.
 			\begin{equation} \label{gauge}
 			\bar{g}_{ i j} = e^{2\gamma} \delta_{i j},
 			\end{equation}
 			where, from now on, $\de_{ij}$ (or $\de^{ij}$) denotes the Kronecker delta.
 		\end{enumerate}
 	\end{defn}

 	We collect some simple computations:
 	
 	\begin{lem}
 		The following holds for $g$ of the form \eqref{metric2+1} satisfying Definition~\ref{def:gauge}:
 		\begin{enumerate}
 			\item The inverse metric $g^{-1}$ is given by
 			\begin{equation} \label{inversegelliptic}
 			g^{-1}=\frac{1}{N^2}\left(\begin{array}{ccc}-1 & \beta^1 & \beta^2\\
 			\beta^1 & N^2e^{-2\gamma}-\beta^1\beta^1 & -\beta^1 \beta^2\\
 			\beta^2 & -\beta^1 \beta^2 & N^2e^{-2\gamma}-\beta^2\beta^2
 			\end{array}
 			\right).
 			\end{equation} 	
			\item The following commutation formula holds:
			\begin{equation} \label{nspatial-spatialn}
\left[\n, \partial_q \right]= \partial_q \log(N) \cdot \n+ \f 1N (\rd_q \bt^i) \cdot \partial_i.
 				\end{equation}
 			\item The spacetime volume form associated to $g$ is given by 
 			\begin{equation} \label{volelliptic}dvol= Ne^{2\gamma} dx^1 dx^2 dt.\end{equation}
 			The induced volume form on the spacelike hypersurface $\Sigma_t$ associated to $g$ is given by
 			\begin{equation} \label{volelliptict=0} dvol_{\Sigma_t}= e^{2\gamma} dx^1 dx^2.\end{equation}
 			\item The wave operator $\Box_g$ is defined to be the Laplace--Beltrami operator associated to $g$, which is given by
 			\begin{equation} \label{Box2+1}
 			\Box_g f = \frac{-e_0^2 f}{N^2} + e^{-2 \gamma} \delta^{i j} \partial^{2}_{i j} f + \frac{e_0 N}{N^3} e_0 f + \frac{ e^{-2 \gamma}}{N}  \delta^{i j} \partial_{i} N \partial_{j} f = - \n^2 f + e^{-2 \gamma} \delta^{i j} \partial^{2}_{i j} f  + \frac{ e^{-2 \gamma}}{N}  \delta^{i j} \partial_{i} N  \partial_j f.
 			\end{equation}
 			\item The condition \eqref{maximality} can be rephrased as
 			\begin{equation} \label{maximality2}
 			\partial_q \beta^q = 2 e_0 (\gamma),
 			\end{equation} 
 			\item The second fundamental form is given by
 			\begin{equation} \label{maximality3}
 			K_{i j } = \frac{  e^{2\gamma}}{2  N} \cdot \left(  \partial_q \beta^q  \cdot  \delta_{i j} - \partial_i \beta^q \cdot  \delta_{q j}  -  \partial_j \beta^q \cdot \delta_{i q}\right) =: - \f{e^{2\gamma}}{2 N}(\mathfrak L\bt)_{ij},
 			\end{equation} 
			where $\mathfrak L$ is the conformal Killing operator $(\mathfrak L \bt)_{ij}:= - \partial_q \beta^q  \cdot  \delta_{i j} + \partial_i \beta^q \cdot  \delta_{q j}  +  \partial_j \beta^q \cdot \delta_{i q}$.
 		\end{enumerate}
 	\end{lem}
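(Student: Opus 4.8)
All six identities are computational and follow directly from the ADM form \eqref{metric2+1} together with the conformal flatness \eqref{gauge}; the plan is to establish them in the order (1), (3), (2), then (4), and finally (6) and (5), since each later item reuses the earlier computations. For (1), write $g_{\mu\nu}$ explicitly from \eqref{metric2+1}, namely $g_{00} = -N^2 + \bar g_{ij}\beta^i\beta^j$, $g_{0i} = \bar g_{ij}\beta^j$, $g_{ij} = e^{2\gamma}\delta_{ij}$, and check by direct multiplication that the matrix \eqref{inversegelliptic} satisfies $g^{\mu\rho}g_{\rho\nu} = \delta^\mu_\nu$; equivalently, recognize the standard ADM inverse $g^{00} = -N^{-2}$, $g^{0i} = N^{-2}\beta^i$, $g^{ij} = \bar g^{ij} - N^{-2}\beta^i\beta^j$ and substitute $\bar g^{ij} = e^{-2\gamma}\delta^{ij}$. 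For (3), since $\det g = -N^2\det\bar g = -N^2 e^{4\gamma}$ one has $\sqrt{-\det g} = N e^{2\gamma}$, which gives the spacetime volume form, and $\sqrt{\det\bar g} = e^{2\gamma}$ gives the induced one. For (2), write $\n = N^{-1}e_0$ with $e_0 = \partial_t - \beta^i\partial_i$ and expand $[\n,\partial_q]f = N^{-1}[e_0,\partial_q]f - (\partial_q N^{-1})(e_0 f)$; using $[\partial_t,\partial_q] = 0$, $[\beta^i\partial_i,\partial_q] = -(\partial_q\beta^i)\partial_i$, and $-\partial_q N^{-1} = N^{-1}\partial_q\log N$, this collapses to $N^{-1}(\partial_q\beta^i)\partial_i + (\partial_q\log N)\n$, which is \eqref{nspatial-spatialn}.

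For (4), the most direct route is to substitute \eqref{inversegelliptic} and $\sqrt{-\det g} = Ne^{2\gamma}$ into the coordinate expression $\Box_g f = (\sqrt{-\det g})^{-1}\partial_\mu\big(\sqrt{-\det g}\,g^{\mu\nu}\partial_\nu f\big)$ and simplify, using that in two spatial dimensions the conformal factor cancels in the spatial part, $e^{-2\gamma}\partial_i\big(e^{2\gamma}\cdot e^{-2\gamma}\delta^{ij}\partial_j f\big) = e^{-2\gamma}\delta^{ij}\partial^2_{ij}f$. Alternatively one uses the splitting $g^{-1} = -\n\otimes\n + \bar g^{-1}$ to get $\Box_g f = -\n(\n f) + (\bar\nabla_\n\n)f + (\mathrm{tr}_{\bar g}K)\,\n f + \Delta_{\bar g}f$, discards the $\mathrm{tr}_{\bar g}K$ term by the maximality condition \eqref{maximality}, and inserts $\bar\nabla_\n\n = N^{-1}\bar\nabla N$. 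The two forms displayed in \eqref{Box2+1} are then identified through $\n(\n f) = N^{-2}e_0^2 f - N^{-3}(e_0 N)(e_0 f)$, which follows from $\n = N^{-1}e_0$.

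For (6) and (5), start from $K_{ij} = g(\nabla_{\partial_i}\n,\partial_j)$. Since $g(\n,\partial_j) = 0$ for spatial $j$ and $g(\n,\partial_t) = -N$, this equals $-g(\n,\nabla_{\partial_i}\partial_j) = N\Gamma^0_{ij}$, and computing $\Gamma^0_{ij}$ with the inverse metric \eqref{inversegelliptic} yields $2NK_{ij} = \partial_t\bar g_{ij} - \beta^k\partial_k\bar g_{ij} - \bar g_{jk}\partial_i\beta^k - \bar g_{ik}\partial_j\beta^k$. Substituting $\bar g_{ij} = e^{2\gamma}\delta_{ij}$, all $\gamma$-derivatives assemble into the single term $2(\partial_t\gamma - \beta^k\partial_k\gamma)\delta_{ij} = 2(e_0\gamma)\delta_{ij}$, giving $K_{ij} = \frac{e^{2\gamma}}{2N}\big(2(e_0\gamma)\delta_{ij} - \partial_i\beta^q\delta_{qj} - \partial_j\beta^q\delta_{iq}\big)$. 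Taking the $\delta$-trace (in two dimensions $\delta^{ij}\delta_{ij} = 2$ and $\delta^{ij}\partial_i\beta^q\delta_{qj} = \partial_q\beta^q$) gives $(\bar g^{-1})^{ij}K_{ij} = e^{-2\gamma}\delta^{ij}K_{ij} = N^{-1}(2e_0\gamma - \partial_q\beta^q)$, so \eqref{maximality} is equivalent to \eqref{maximality2}; substituting $2e_0\gamma = \partial_q\beta^q$ back into the formula for $K_{ij}$ produces \eqref{maximality3}, which is manifestly $\delta$-trace-free in two dimensions.

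None of these steps is deep. The only points that demand care are the sign and index bookkeeping in (5)--(6) --- in particular that the shift enters through $\bar g_{jk}\partial_i\beta^k$, i.e.\ the index of $\beta$ is lowered with $\bar g$ (hence with $\delta$ up to the factor $e^{2\gamma}$), not with the flat metric directly --- and keeping the conventions for $\n$, $e_0$, and $K$ throughout consistent with Definitions~\ref{def:miscellaneous} and~\ref{def:gauge}. Since the convention for $K$ here differs in sign from \cite{HL.elliptic} (noted in the footnote to Definition~\ref{def:miscellaneous}), this bookkeeping is the natural place for errors; everything else is routine.
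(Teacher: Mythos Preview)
Your proposal is correct; the paper does not supply a proof for this lemma at all, introducing it simply as a collection of ``simple computations'' and leaving every item unproven. Your detailed verification --- ADM inverse for (1), determinant for (3), direct commutator expansion for (2), the divergence formula (with the 2D conformal cancellation and the maximality condition) for (4), and the Lie-derivative/ADM formula $2NK_{ij} = \partial_t\bar g_{ij} - \mathcal L_\beta\bar g_{ij}$ for (5)--(6) --- is exactly the routine argument the authors are implicitly invoking.
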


 	Finally, we compute the connection coefficients with respect to $\{e_0,\rd_1,\rd_2\}$:
 	
 	\begin{lem} \label{Christoffel}
 		Given $g$ of the form \eqref{metric2+1} satisfying Definition~\ref{def:gauge},
 		\begin{equation} \label{e0e0e0}
 		g(\nabla_{e_0} e_0, e_0) = -N \cdot e_0 N,
 		\end{equation}
 		\begin{equation} \label{e0e0ei}
 		g(\nabla_{e_0} e_0, \rd_i) = -	g(\nabla_{\rd_i} e_0, e_0)=g(\nabla_{e_0} \rd_i, e_0)=N \cdot \partial_i N,
 		\end{equation}		
 		\begin{equation} \label{eje0ei} \begin{split}
 		g(\nabla_{\rd_j} e_0, \rd_i) = &\:	g(\nabla_{e_0} \rd_j, \rd_i)-e^{2\gamma} \cdot \partial_j \beta^l \delta_{i l}=-	g(\nabla_{\rd_j} \rd_i, e_0) \\ = &\: \frac{e^{2\gamma} }{2}\cdot \left( 2e_0 \gamma \cdot \delta_{i j}-\partial_i \beta^q \cdot \delta_{ j q}  -\partial_j \beta^q  \cdot \delta_{i q}                          \right) = \frac{e^{2\gamma} }{2}\cdot \left( \partial_q \beta^q  \cdot \delta_{i j}-\partial_i \beta^q \cdot \delta_{ j q}  -\partial_j \beta^q \cdot \delta_{i q}                          \right). \end{split} \end{equation}
 		
 		Moreover,		
 		\begin{equation} \label{Diej}
 		\nabla_{\rd_i} \rd_j= \frac{e^{2\gamma} }{2N}\cdot \left( \partial_q \beta^q \cdot \delta_{i j}-\partial_i \beta^q \cdot \delta_{ j q}  -\partial_j \beta^q  \cdot \delta_{i q}                          \right) \n + \left(\delta^q_i \partial_j \gamma + \delta^q_j \partial_i \gamma - \delta_{i j} \delta^{q l} \partial_l \gamma  \right) \rd_q, \end{equation}
 		\begin{equation} \label{D0e0}
 		\nabla_{e_0} e_0= \frac{e_0 N }{N}\cdot e_0+  e^{-2\gamma} \delta^{i j} N \partial_i N \rd_j,\end{equation}
 		\begin{equation} \label{D0ei}
 		\nabla_ {e_0}\rd_i= \nabla_{\rd_i} e_0+\partial_i \beta^j \blue{\rd_j}= \frac{\partial_i N }{N}\cdot e_0+\frac{1 }{2}\cdot \left( \partial_q \beta^q \cdot \delta_{i}^{ j}+\partial_i \beta^j   -\delta_{ i q} \delta^{j l}\partial_l \beta^q            \right) \blue{\rd_j}.  \end{equation}
 		
 	\end{lem}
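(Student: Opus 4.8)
The plan is to derive every identity in the lemma from the Koszul formula, exploiting that $\{e_0,\partial_1,\partial_2\}$ is an \emph{orthogonal} frame with explicitly known norms. First I would record the frame inner products coming directly from \eqref{metric2+1}, \eqref{gauge} and \eqref{def:e0}: expanding $e_0=\partial_t-\beta^i\partial_i$ and using $g(\partial_t,\partial_i)=e^{2\gamma}\delta_{ij}\beta^j$ together with $g(\partial_t,\partial_t)=-N^2+e^{2\gamma}\delta_{ij}\beta^i\beta^j$ gives
$$g(e_0,e_0)=-N^2,\qquad g(e_0,\partial_i)=0,\qquad g(\partial_i,\partial_j)=e^{2\gamma}\delta_{ij}.$$
Then I would record the two commutators that enter the Koszul formula: $[\partial_i,\partial_j]=0$ trivially, and $[e_0,\partial_i]=(\partial_i\beta^j)\partial_j$ by a one-line computation (noting $[\partial_t,\partial_i]=0$). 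The one structural point to keep in mind throughout is that $e_0$ is \emph{not} a coordinate vector field, so these bracket contributions are genuine and, in particular, $\nabla_{e_0}\partial_i\neq\nabla_{\partial_i}e_0$; they differ precisely by $(\partial_i\beta^j)\partial_j$, which is exactly the first equality in \eqref{D0ei}.

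Next, for each of the pairs $(e_0,e_0)$, $(\partial_i,e_0)$, $(e_0,\partial_i)$ and $(\partial_i,\partial_j)$ I would apply
$$2g(\nabla_X Y,Z)=X\,g(Y,Z)+Y\,g(X,Z)-Z\,g(X,Y)+g([X,Y],Z)-g([X,Z],Y)-g([Y,Z],X)$$
with $Z$ running over the frame. Since the frame is orthogonal, knowing $g(\nabla_X Y,e_0)$ and $g(\nabla_X Y,\partial_i)$ immediately recovers $\nabla_X Y$ after dividing by $-N^2$ and by $e^{2\gamma}$ respectively; this is how \eqref{Diej}, \eqref{D0e0}, \eqref{D0ei} follow from \eqref{e0e0e0}--\eqref{eje0ei}, while the remaining equalities inside \eqref{e0e0ei} and \eqref{eje0ei} are then just torsion-freeness ($\nabla_X Y-\nabla_Y X=[X,Y]$) and metric compatibility ($Z\,g(X,Y)=g(\nabla_Z X,Y)+g(X,\nabla_Z Y)$, applied to $g(e_0,\partial_i)\equiv0$). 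The scalar products that come out involve only $e_0 N$, $\partial_i N$, $e_0\gamma$, $\partial_i\gamma$ and $\partial_i\beta^j$. At this point I would invoke the maximality condition in the form \eqref{maximality2}, $\partial_q\beta^q=2e_0\gamma$, to replace every $e_0\gamma$ by $\tfrac12\partial_q\beta^q$ — this is exactly the step that turns the middle expression of \eqref{eje0ei} into the last one — and I would observe that the purely tangential part of $\nabla_{\partial_i}\partial_j$ is nothing but the Christoffel symbol of the conformally flat metric $\bar g=e^{2\gamma}\delta$, namely $\delta^q_i\partial_j\gamma+\delta^q_j\partial_i\gamma-\delta_{ij}\delta^{ql}\partial_l\gamma$, while its normal part is $K_{ij}\,\vec n$ with $K_{ij}$ as in \eqref{maximality3}; together these give \eqref{Diej}.

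No step here is deep: once the frame data and the two brackets are in place, the whole lemma is bookkeeping with the Koszul formula and one substitution of the gauge condition. The only place that warrants genuine care — the ``main obstacle'' in the sense that a sign slip there contaminates several of the formulas — is the appearance of the time derivative $e_0(e^{2\gamma})=2e^{2\gamma}(e_0\gamma)$ inside the Koszul identities for $g(\nabla_{\partial_i}\partial_j,e_0)$ and $g(\nabla_{\partial_i}e_0,\partial_j)$, and its correct interplay with the $\beta$-bracket terms once \eqref{maximality2} is applied; this is precisely what produces the symmetric combinations $\partial_q\beta^q\delta_{ij}-\partial_i\beta^q\delta_{jq}-\partial_j\beta^q\delta_{iq}$. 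An alternative route would be to compute the coordinate Christoffel symbols $\Gamma^\mu_{\alpha\beta}$ of \eqref{metric2+1} from the inverse metric \eqref{inversegelliptic} and then change frame to $\{e_0,\partial_1,\partial_2\}$, but I would prefer the Koszul approach, since the orthogonality of the frame keeps every intermediate expression short and makes the role of the maximal and conformally flat gauge conditions transparent.
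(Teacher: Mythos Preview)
The paper states this lemma without proof; it is a routine computation. Your Koszul-formula approach with the orthogonal frame $\{e_0,\partial_1,\partial_2\}$ is correct and is the natural way to carry it out: the frame data and brackets you record are right, torsion-freeness and metric compatibility give the internal equalities of \eqref{e0e0ei} and \eqref{eje0ei}, and the substitution of the maximality condition \eqref{maximality2} is exactly what produces the final form of \eqref{eje0ei} and \eqref{Diej}.
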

 	
 	\subsection{Einstein equations} \label{Einsteineqsection}
 	
 	With the polarized $\mathbb U(1)$ symmetry \eqref{eq:U1}, the Einstein equation $Ric_{\mu \nu}( ^{(4)}g)=0$ can be re-written in terms of the $(2+1)$-dimensional metric $g$ and the scalar field $\phi$ as:
 	
 	\begin{equation} \label{Einsteinricci}
 	Ric_{\alpha \beta}(g) = 2 \partial_{\alpha}  \phi \partial_{\beta} \phi,
 	\end{equation}
 	\begin{equation} \label{Einsteinwave}
 	\Box_g \phi= 0 .
 	\end{equation}
 	
 	Additionally, given the form of the metric \eqref{metric2+1} and the gauge conditions in Definition~\ref{def:gauge}, \eqref{Einsteinricci} implies the following elliptic equations (see \cite[(4.26)--(4.28)]{HLHF}, but note the sign difference in definitions of $K$): 
 	
	\begin{align}
	\label{Kellipticequation}
 	\delta^{i k } \partial_k K_{i j} = &\: 2e^{2\gamma} \cdot \n \phi \cdot \partial_j \phi, \\
	\label{Nellipticequation}
 	\Delta N = &\: \f{e^{2\gamma}}{4N} |\mathfrak L \bt|^2 + 2 N e^{2\gamma} \cdot  (\n \phi)^2, \\
 	\label{gammaellipticequation}
 	\Delta \gamma = &\: -\de^{il} (\rd_i \phi)(\rd_l\phi) - \f{e^{2\gamma}}{8 N^2} |\mathfrak L \bt|^2 -  e^{2\gamma} \cdot (\n \phi)^2, \\
 	\label{betaellipticequation}
 	\Delta \beta^j = &\:   \delta^{i k } \delta^{j l } (\f{\rd_k N}N - 2\rd_k\gamma) (\mathfrak L \bt)_{i l } -4  N   \delta^{j l } \cdot \n \phi \cdot \partial_l \phi,
 	\end{align}
 	where $\Delta$ denotes the \emph{Euclidean} Laplacian $\Delta =\sum_{i=1}^2 \rd^2_{ii}$.
 	
 	We remark that the above equations are not independent, as \eqref{betaellipticequation} can be derived from taking the divergence of \eqref{maximality3} and using \eqref{Kellipticequation}.

 	The equations \eqref{metric2+1}, Definition~\ref{def:gauge} and \eqref{Einsteinricci} also imply
 	\begin{equation} \label{Khypequation} 
	\begin{split}
 	&\: \n (K_{i j}) - N^{-1} \partial_i \partial_j N + \frac{1}{2}  \cdot \delta_{i j} \cdot N^{-1} \cdot \Delta N \\
	= &\: 2e^{-2\gamma} K_i^l K_{j l } + N^{-1} \cdot ( \partial_j \beta^k K_{k i }+\partial_i \beta^k K_{k j })   \\ 
 	&\: - N^{-1} \cdot ( \delta_{i}^k \partial_j \gamma +\delta_{j}^k \partial_i \gamma- \delta_{i j }  \delta^{l k} \partial_l \gamma) \cdot \partial_k N + 2 \partial_i \phi \cdot \partial_j \phi - \delta_{i j } \de^{kl}\rd_k \phi\cdot \rd_l\phi.
	\end{split}
 	\end{equation}

 	\subsection{Eikonal functions and null frames}\label{XELexpressionsection}
 	
 	We will define three eikonal functions together with null hypersurfaces and null frames. Each of these will later be chosen to adapted to one propagating wave.
 	
 	\begin{defn}[Eikonal functions]\label{def:eikonal}
 		Given a spacetime $(I\times \mathbb R^2, g)$ of the form \eqref{metric2+1} satisfying Definition~\ref{def:gauge}, define three eikonal functions $u_k$, $k=1,2,3$, corresponding to the three impulsive waves, as the unique solutions to\footnote{For simplicity, we stipulate here that the initial wavefront $(u_k)_{|\Sigma_0}$ are exact lines in the $(x^1,x^2)$ coordinates. This can easily be relaxed so that they are only approximate lines.}
 		
 		\begin{equation} \label{eikonal1}
 		\gi^{\alpha \beta} \partial_{\alpha} u_k \partial_{\beta}  u_k =0,
 		\end{equation}
 		\begin{equation} \label{eikonalinit}
 		(u_k)_{|\Sigma_0} = a_k + c_{k j }x^j,
 		\end{equation}
 		which satisfies $e_0 u_k >0$. Here, $a_k$, $c_{k j } \in \RR$ are constants obeying the following conditions\footnote{The identity $| -c_{k 2} \cdot c_{k' 1} + c_{k 1} \cdot c_{k' 2}|^2 =1- | c_{k 1} \cdot c_{k' 1} + c_{k 2} \cdot c_{k' 2}|^2$ is an immediate consequence of \eqref{cnormalization}.}: 
		\begin{align} 
		\label{cnormalization}
 		\sqrt{c_{k1}^2+c_{k2}^2}= &\: 1, \\
		\label{cangle2} 
		|-c_{k 2} \cdot c_{k' 1} + c_{k 1} \cdot c_{k' 2}| = \sqrt{1-|c_{k 1} \cdot c_{k' 1}+ c_{k 2} \cdot c_{k' 2}|^2} \geq &\: \upkappa_0,
 		\end{align} 
		for some fixed constant $\upkappa_0 \in (0,\frac{\pi}{2})$, and for every  $k \neq k' \in \{1,2,3\}$. 
 	\end{defn} 
	 	
	\begin{defn}[Sets associated with the eikonal functions]\label{def:sets.eikonal}
 		Let $u_k$ ($k=1,2,3$) satisfying \eqref{eikonal1} and \eqref{eikonalinit} in $(I\times\mathbb R^2, g)$ be given.
 		\begin{enumerate}
 			\item For all $w \in \mathbb{R}$, define
 			\begin{equation} \label{defCu}
 			C_{w}^k := \{(t,x): u_k(t,x)=w\}, \quad C_{\leq w}^k := \bigcup\limits_{u_k \leq w}  	C_{u_k}^k, \quad C_{\geq w}^k := \bigcup\limits_{u_k \geq w}  	C_{u_k}^k,
 			\end{equation}
			and, for every $T\in I$, define
			\begin{equation} \label{defCuCutoff}
 			C_{w}^k([0,T)) := C_{w}^k \cap (\cup_{t\in [0,T)} \Sigma_t).
 			\end{equation}
 			\item For all $w_1,\,w_2 \in \mathbb{R}$,  
			define \begin{equation} \label{defStwosided}
S^k(w_1,w_2):= \bigcup\limits_{ w_1 \leq u_k \leq w_2}  	C_{u_k}^k.
 			\end{equation}
			For $\de_0>0$, define also 
			\begin{equation} \label{defS}
 			S_{\delta_0}^k := S^k(-\delta_0,\delta_0).
 			\end{equation}
 			We will later understand $\Sd^k$ as ``the singular zone'' for $\tphi$.
 		\end{enumerate}
 	\end{defn}

 	\begin{defn}[Definition of the null frame]\label{def:null.frame}
 		\begin{enumerate}
 			\item Define the null vector $\Lgeo$ associated to the eikonal function $u_k$ by
 			\begin{equation} \label{Lgeodefinition}
 			\Lgeo= - \gi^{\alpha \beta} \partial_{\beta} u_k \cdot   \partial_{\alpha}.
 			\end{equation}
 			\item Define $L_k$ to be the vector field parallel to $\Lgeo$ which satisfies $L_k t = N^{-1}$, i.e.
 			\begin{equation} \label{Ldefinition}
 			L_k= \mu_k \cdot \Lgeo,\quad \mu_k =  (N \cdot \Lgeo t)^{-1}.
 			\end{equation}
 			\item  Define the vector field $X_k$ to be the unique vector field tangential to $\Sigma_{t}$  which is everywhere orthogonal (with respect to $\bar{g}$) to $C_{u_k}^k \cap \Sigma_t$ and such that $g(X_k,L_k)=-1$.
 			\item Define $E_k$ to be the unique vector field which is tangent to $C_{u_k}^k \cap \Sigma_t$, satisfies $ g(E_k,E_k)=1$, and such that $(X_k, E_k)$ has the same orientation as $(\rd_1, \rd_2)$. 
 		\end{enumerate}
 	\end{defn}
 	
 	\begin{lem}
 		\begin{enumerate}
 			\item $\Lgeo$ is null and geodesic, i.e.
 			\begin{equation} \label{eikonal2}
 			g(\Lgeo, \Lgeo) = 0,\quad \nabla_{\Lgeo} \Lgeo=0.
 			\end{equation}
 			\item The following holds:
 			\begin{equation}\label{eq:silly.tangential}
 			L_k u_k = E_k u_k = 0,\quad E_k t = X_k t = 0,\quad L_k t = N^{-1},\quad X_k u_k = \mu_k^{-1}.
 			\end{equation}
 			\item The normal $\n$ can be expressed in terms of $X_k$ and $L_k$ as:	
 			\begin{equation} \label{nXEL}
 			\n = L_k +  X_k .
 			\end{equation}
 			\item The triplet $(X_k,E_k,L_k)$ forms a null frame, i.e.~it satisfies 
 			\begin{equation} \label{XELframecondition}
 			g(L_k,X_k) = -1, \quad g(E_k,L_k)=g(E_k,X_k)=g(L_k,L_k)=0, \quad g(E_k,E_k)=g(X_k,X_k)=1 .
 			\end{equation}
 			\item $g^{-1}$ can be given in terms of the $(X_k, E_k, L_k)$ frame by
 			\begin{equation} \label{inversegXEL}
 			g^{-1} = -  L_k \otimes L_k - L_k \otimes X_k  -  X_k \otimes L_k  + E_k \otimes E_k.
 			\end{equation}
 		\end{enumerate}
 	\end{lem}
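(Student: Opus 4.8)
The plan is to verify the five assertions by unwinding Definition~\ref{def:null.frame} together with the eikonal equation \eqref{eikonal1}; the only genuine computation is the geodesic property in item (1), and the only point requiring care otherwise is a sign/time-orientation issue in item (3).

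\textbf{Item (1).} Nullity is immediate from \eqref{Lgeodefinition}: since $\Lgeo^\mu = -\gi^{\mu\nu}\partial_\nu u_k$, we get $g(\Lgeo,\Lgeo) = g_{\mu\nu}\gi^{\mu\alpha}\gi^{\nu\beta}\partial_\alpha u_k\,\partial_\beta u_k = \gi^{\alpha\beta}\partial_\alpha u_k\,\partial_\beta u_k = 0$ by \eqref{eikonal1}. For $\nabla_{\Lgeo}\Lgeo=0$ I would run the classical argument: differentiate \eqref{eikonal1} covariantly, using $\nabla g^{-1}=0$ and that the Hessian $\nabla^2 u_k$ is symmetric (torsion-freeness of $\nabla$), to obtain $\gi^{\alpha\beta}\partial_\beta u_k\,\nabla_\alpha(\partial_\gamma u_k)=0$, i.e.\ $\Lgeo^\alpha\nabla_\alpha(\partial_\gamma u_k)=0$; then $(\nabla_{\Lgeo}\Lgeo)^\mu = -\gi^{\mu\nu}\,\Lgeo^\alpha\nabla_\alpha(\partial_\nu u_k)=0$.

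\textbf{Items (2) and (4).} Most relations here are definitional. Because the level sets of $u_k$ are the $C^k_{u_k}$, and $E_k$ is tangent to $C^k_{u_k}\cap\Sigma_t$ while $X_k,E_k$ are tangent to $\Sigma_t$, we get $E_k u_k=0$ and $E_k t = X_k t = 0$. From \eqref{Lgeodefinition} one has, for any vector field $V$, the identity $g(\Lgeo,V) = -\gi^{\mu\beta}\partial_\beta u_k\, g_{\mu\nu}V^\nu = -V u_k$. Taking $V=\Lgeo$ recovers item (1); taking $V=E_k$ gives $g(L_k,E_k)=\mu_k g(\Lgeo,E_k)=0$, and since $\Lgeo u_k = -g(\Lgeo,\Lgeo)=0$ also $L_k u_k = \mu_k\Lgeo u_k = 0$. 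The normalization \eqref{Ldefinition} gives $L_k t = \mu_k \Lgeo t = N^{-1}$. Taking $V=X_k$ in the identity gives $g(X_k,L_k) = \mu_k g(X_k,\Lgeo) = -\mu_k\,(X_k u_k)$, and since $g(X_k,L_k)=-1$ by construction of $X_k$ we read off $X_k u_k = \mu_k^{-1}$. The remaining frame identities in (4) are then in place: $g(L_k,L_k) = \mu_k^2\, g(\Lgeo,\Lgeo)=0$ by item (1), $g(L_k,E_k)=0$ was just shown, while $g(X_k,X_k)=g(E_k,E_k)=1$, $g(X_k,E_k)=0$ and $g(X_k,L_k)=-1$ are exactly the normalizations stipulated in Definition~\ref{def:null.frame}.

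\textbf{Items (3) and (5).} By item (4) and the fact that $\n$ is the future-directed unit normal to $\Sigma_t$, the triple $(\n,X_k,E_k)$ is an orthonormal frame with $\n$ timelike. Writing $L_k = a\,\n + b\,X_k + c\,E_k$ and pairing against this frame: $g(L_k,E_k)=c=0$; $g(L_k,X_k)=b=-1$; and $0=g(L_k,L_k) = -a^2 + b^2$ forces $a=\pm1$. The sign is fixed by the time-orientation: from $L_k t = N^{-1}$, $\n t = N^{-1}$ (by \eqref{defnormal}) and $X_k t = 0$ we get $aN^{-1}=N^{-1}$, hence $a=1$, so $L_k = \n - X_k$, i.e.\ $\n = L_k + X_k$. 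Finally, item (5) is linear algebra: in the ordered basis $(L_k,X_k,E_k)$ the Gram matrix of $g$ is, by (4), $G = \left(\begin{smallmatrix} 0 & -1 & 0 \\ -1 & 1 & 0 \\ 0 & 0 & 1 \end{smallmatrix}\right)$, with inverse $G^{-1} = \left(\begin{smallmatrix} -1 & -1 & 0 \\ -1 & 0 & 0 \\ 0 & 0 & 1 \end{smallmatrix}\right)$; since $g^{-1} = (G^{-1})^{ab}\,v_a\otimes v_b$ for any basis $(v_a)$, this gives precisely $g^{-1} = -L_k\otimes L_k - L_k\otimes X_k - X_k\otimes L_k + E_k\otimes E_k$.

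The step most deserving of care is the geodesic identity in item (1) (a short index computation where torsion-freeness and metric compatibility must be used correctly), and the determination of the sign in item (3), where the hypothesis $e_0 u_k>0$ --- equivalently $L_k t = N^{-1}>0$ --- is exactly what selects $\n = L_k + X_k$ rather than $\n = L_k - X_k$. The rest is bookkeeping with the definitions.
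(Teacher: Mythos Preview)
Your argument is essentially sound and close in spirit to the paper's, with one gap: you assert that $g(X_k,X_k)=1$ is ``exactly the normalization stipulated in Definition~\ref{def:null.frame}'', but it is not. In that definition $X_k$ is normalized by the condition $g(X_k,L_k)=-1$, \emph{not} by $g(X_k,X_k)=1$. This matters in your proof of item~(3), where you treat $(\n,X_k,E_k)$ as an orthonormal frame to read off $b=g(L_k,X_k)=-1$ and $g(L_k,L_k)=-a^2+b^2$ from the expansion $L_k=a\,\n+b\,X_k+c\,E_k$; without $g(X_k,X_k)=1$ you only obtain $b\,g(X_k,X_k)=-1$ and $g(L_k,L_k)=-a^2+b^2 g(X_k,X_k)$.

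The fix is minor and already implicit in your own argument: determine $a$ first. Having established $c=0$, your time-orientation computation $L_k t = a\,\n t + b\,X_k t + c\,E_k t = a\,N^{-1}$ gives $a=1$ immediately, with no hypothesis on $g(X_k,X_k)$. Then the two relations $0=g(L_k,L_k)=-1+b^2\,g(X_k,X_k)$ and $-1=g(L_k,X_k)=b\,g(X_k,X_k)$ together force $b=-1$ and $g(X_k,X_k)=1$. The paper runs this in the opposite order: it first proves $\n=L_k+X_k$ by checking that $\n-L_k$ satisfies the defining properties of $X_k$, and only then deduces $g(X_k,X_k)=1$ from $0=g(L_k,L_k)=g(\n-X_k,\n-X_k)=-1+g(X_k,X_k)$. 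Apart from this ordering issue, your treatment of items~(1), (2), (5), and the remaining frame relations in~(4), matches the paper.
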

 	\begin{proof}
 		\eqref{eikonal2} is an immediate consequence of \eqref{eikonal1}.
 		
 		For \eqref{eq:silly.tangential}, the first two chains of equalities simply follow from tangential properties of the vector fields. That $L_k t = N^{-1}$ follows from Definition~\ref{def:null.frame}.1. Finally, using Definition~\ref{def:null.frame}, $X_k u_k = g_{\sigma \rho}  \gi^{\alpha \rho} \partial_{\alpha} u_k X_k^{\sigma}=-g(\Lgeo,X_k)=\mu_k^{-1}$.
 		
 		To establish \eqref{nXEL}, we need to show that $-L_k +\n$ satisfies all the defining properties of $X_k$ in Definition~\ref{def:null.frame}. First, \eqref{defnormal} and \eqref{Ldefinition} imply $(-L_k + \n) t = 0$, i.e.~$L_k - \n$ is tangent to $\Sigma_t$. Moreover, $g(-L_k + \n, E_k) = -g(L_k, E_k) + g(\n, E_k) = 0$, and also $g(-L_k + \n, X_k) = -g(L_k, X_k) = 1$. Hence $-L_k + \n = X_k$, i.e.~\eqref{nXEL} holds.
 		
 		Turning to \eqref{XELframecondition}, first note that $g(L_k,X_k) = -1$, $g(E_k,E_k) = 1$ and $g(E_k, X_k) = 0$ by Definition~\ref{def:null.frame}, and $g(L_k,L_k)=0$ can be derived using additionally \eqref{eikonal2}.
 		
 		Next, $g(E_k,\Lgeo)=0$ (and hence $g(E_k,L_k)=0$) follows from $E_k u_k=0$ and \eqref{Lgeodefinition}. Finally, note that using $g(L_k,L_k)=0$, \eqref{nXEL} and the fact that $\n$ is the unit normal to $\Sigma_t$, we have 
 			$$0=g(L_k,L_k) = g(\n-X_k, \n -X_k) = g(\n,\n) -2g(\n,X_k) + g(X_k,X_k) = -1 +g(X_k,X_k), $$
 			which gives $g(X_k,X_k)=1$.

 		Now that we have established \eqref{XELframecondition}, the equation \eqref{inversegXEL} follows as an immediate consequence.
 		\qedhere

 	\end{proof}

 	\subsection{Geometric coordinate system $(u_k,\theta_k,t_k)$} \label{relationXELgeocoordinatesection}
 	
 	We now introduce the coordinate $\theta_k$ such that $(u_k,\theta_k,t_k)$ is a regular coordinate system on $I \times \mathbb R^2$.
 	\begin{defn}
 		\begin{enumerate}
 			\item Given $u_k$ satisfying \eqref{eikonal1}--\eqref{eikonalinit}, and fixing some constants $b_k$, define $\th_k$ by
 			\begin{equation} \label{thetadef}
 			L_k\theta_k =0,
 			\end{equation}
			\begin{equation} \label{thetainit}
			(\theta_k)_{|\Sigma_0}=  b_k + c_{k j}^{\perp} x^j,
			\end{equation} 
			where $c_{k 1}^{\perp} = -c_{k 2}$ and $c_{k 2}^{\perp} = c_{k 1}$, and $c_{k i}$ are the constants in \eqref{eikonalinit}.
 			\item Let $t_k = t$. 
			\item Denote by $(\partial_{u_k},\partial_{\theta_k}, \partial_{t_k})$ the coordinate vector fields in the $(u_k,\theta_k,t_k)$ coordinate system. (Note that we continue to use $\rd_t$ to denote the coordinate derivative in the $(x^1,x^2,t)$ coordinate system of Section~\ref{ellipticgaugedef}.)
 		\end{enumerate}
 	\end{defn}

 	\begin{lem}\label{lem:XEL.in.rd}
 		Defining $\varTheta_k = (E_k\theta_k)^{-1}$ and $\Xi_k = X_k\th_k$, we have
 		\begin{equation} \label{thetaE}
 		L_k = \f 1N \cdot \rd_{t_k},\quad E_k =  \varTheta_k^{-1} \cdot \rd_{\th_k}, \quad X_k = \mu_k^{-1}\cdot \partial_{u_k}+ \Xi_k \cdot  \partial_{\theta_k}.
 		\end{equation}
 	\end{lem}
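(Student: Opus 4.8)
The plan is to identify each of the three frame vectors with a coordinate vector field by checking how they act on the three coordinate functions $(u_k,\theta_k,t_k)$, using the already-established relations \eqref{eq:silly.tangential} and \eqref{thetadef}--\eqref{thetainit}. The key observation is that in the $(u_k,\theta_k,t_k)$ coordinate system, a vector field $V$ is completely determined by the triple $(Vu_k, V\theta_k, Vt_k)$, since $V = (Vu_k)\,\partial_{u_k} + (V\theta_k)\,\partial_{\theta_k} + (Vt_k)\,\partial_{t_k}$. So it suffices to compute these three scalars for $V = L_k, E_k, X_k$ and match.

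First I would treat $L_k$. By \eqref{eq:silly.tangential} we have $L_k u_k = 0$ and $L_k t = N^{-1}$, so $L_k t_k = N^{-1}$; and $L_k \theta_k = 0$ by the defining transport equation \eqref{thetadef}. Hence $L_k = N^{-1}\partial_{t_k}$, which is the first identity. Next, for $E_k$: by \eqref{eq:silly.tangential}, $E_k u_k = 0$ and $E_k t = 0$, so $E_k t_k = 0$; and by definition $\varTheta_k = (E_k\theta_k)^{-1}$, so $E_k\theta_k = \varTheta_k^{-1}$. Therefore $E_k = \varTheta_k^{-1}\partial_{\theta_k}$. (One should note that $E_k\theta_k \neq 0$, so that $\varTheta_k$ is well-defined; this follows because $(u_k,\theta_k,t_k)$ is a regular coordinate system and $E_k$, being tangent to $C^k_{u_k}\cap\Sigma_t$ and nonzero, cannot be annihilated by both $u_k$ and $t_k$ unless it also moves $\theta_k$ — indeed $E_k$ together with $L_k$ spans the tangent space to $C^k_{u_k}$, on which $\theta_k$ restricts to a genuine coordinate.) Finally, for $X_k$: by \eqref{eq:silly.tangential}, $X_k t = 0$ so $X_k t_k = 0$, and $X_k u_k = \mu_k^{-1}$; and by the definition $\Xi_k = X_k\theta_k$. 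Thus $X_k = \mu_k^{-1}\partial_{u_k} + \Xi_k\partial_{\theta_k}$, the third identity.

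The argument is essentially bookkeeping once the coordinate system is known to be regular, so there is no serious obstacle; the only point requiring a word of care is the nondegeneracy $E_k\theta_k \neq 0$ (equivalently, that $(u_k,\theta_k,t_k)$ really is a coordinate system near the region under consideration), which is where the transversality/construction hypotheses on $\theta_k$ enter. I expect this to be the main — and only — subtlety, and it is handled by the construction of $\theta_k$ via the transport equation \eqref{thetadef} with the linear initial data \eqref{thetainit}, together with a continuity/bootstrap argument guaranteeing the Jacobian of the change of coordinates stays nondegenerate on the time interval in question.
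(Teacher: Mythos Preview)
Your proof is correct and follows the same approach as the paper: both simply read off the coordinate-vector-field expressions by applying each of $L_k,E_k,X_k$ to the three coordinate functions $(u_k,\theta_k,t_k)$, citing \eqref{eq:silly.tangential}, \eqref{thetadef}, and the definitions of $\varTheta_k,\Xi_k$. The paper's proof is a one-liner to this effect; your version is more explicit and also flags the nondegeneracy $E_k\theta_k\neq 0$, which the paper takes for granted as part of $(u_k,\theta_k,t_k)$ being a regular coordinate system.
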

 	\begin{proof}
 		This follows from combining \eqref{eq:silly.tangential}, \eqref{thetadef} with the definitions of $\varTheta_k$ and $\Xi_k$. \qedhere
 	\end{proof}

 	\begin{lem}
 		\begin{enumerate}
 			\item The metric $g$ in the $(t_k,u_k,\theta_k)$ coordinate system is given by
 			\begin{equation} \label{gthetau}
 			g= \varTheta_k^2 \,d\theta^2_k - 2 \mu_k N \,dt_k\, du_k - 2 \mu_k \varXi_k  \varTheta_k^2 \,du_k\, d\theta_k + \mu_k^2(1+  \varXi_k^2  \varTheta_k^2 ) \,du_k^2.
 			\end{equation} 
 			\item The volume form induced by $g$ and $\bar{g}$ in the $(t_k,u_k,\theta_k)$ coordinate system are given by
 			\begin{equation} \label{volthetau}
 			dvol = \mu_k   \cdot N \cdot \varTheta_k \, dt_k \, du_k \, d\theta_k,\quad dvol_{\Sigma_t} = \mu_k^2 \varTheta_k^2 \, du_k\, d\th_k.
 			\end{equation}
 			\item Letting $dvol_{C_{u_k}}$ be the volume form on $C_{u_k}$ such that $du_k \wedge dvol_{C_{u_k}} =dvol$. Then
 			\begin{equation} \label{volCuk}
 			dvol_{C_{u_k}}= \mu_k \cdot N \cdot \varTheta_k \, dt_k\, d\theta_k.
 			\end{equation}
 		\end{enumerate}
 	\end{lem}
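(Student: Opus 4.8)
The plan is to deduce the entire lemma from Lemma~\ref{lem:XEL.in.rd} together with the orthonormality relations \eqref{XELframecondition}; no new geometric input is required. First I would invert the three identities in \eqref{thetaE}. Since $(u_k,\theta_k,t_k)$ is a regular coordinate system, $N$ and $E_k\theta_k$ are nonvanishing, so \eqref{thetaE} can be solved for the coordinate vector fields:
$$\partial_{t_k}=N\,L_k,\qquad \partial_{\theta_k}=\varTheta_k\,E_k,\qquad \partial_{u_k}=\mu_k X_k-\mu_k\Xi_k\varTheta_k\,E_k,$$
where $\Xi_k=X_k\theta_k$ and the last identity is obtained by rearranging $X_k=\mu_k^{-1}\partial_{u_k}+\Xi_k\partial_{\theta_k}$ and substituting $\partial_{\theta_k}=\varTheta_k E_k$.

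For part (1), I would then compute the components of $g$ in the $(t_k,u_k,\theta_k)$ chart by pairing these three expressions and using \eqref{XELframecondition}, i.e.~$g(L_k,L_k)=g(L_k,E_k)=g(X_k,E_k)=0$, $g(L_k,X_k)=-1$, $g(E_k,E_k)=g(X_k,X_k)=1$. For instance $g(\partial_{t_k},\partial_{t_k})=N^2 g(L_k,L_k)=0$, $g(\partial_{t_k},\partial_{u_k})=\mu_k N\,g(L_k,X_k)=-\mu_k N$, and $g(\partial_{u_k},\partial_{u_k})=\mu_k^2 g(X_k,X_k)+\mu_k^2\Xi_k^2\varTheta_k^2\,g(E_k,E_k)=\mu_k^2(1+\Xi_k^2\varTheta_k^2)$; the remaining three components ($g(\partial_{t_k},\partial_{\theta_k})=0$, $g(\partial_{u_k},\partial_{\theta_k})=-\mu_k\Xi_k\varTheta_k^2$, $g(\partial_{\theta_k},\partial_{\theta_k})=\varTheta_k^2$) are equally immediate. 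Assembling the symmetric $2$-tensor with these components (diagonal terms once, off-diagonal terms twice) reproduces \eqref{gthetau}.

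For parts (2) and (3), the volume forms come out of \eqref{gthetau} by the usual $\sqrt{|\det|}$ prescription. Expanding the $3\times 3$ Gram matrix in \eqref{gthetau} along the $t_k$-row (where only the $t_k u_k$-entry contributes) gives $\det g=-\mu_k^2N^2\varTheta_k^2$; extracting $\sqrt{-\det g}$ and fixing the sign using $N>0$, $\mu_k>0$ (immediate from \eqref{Ldefinition} and $e_0 u_k>0$ in Definition~\ref{def:eikonal}, via \eqref{inversegelliptic}) and $\varTheta_k>0$ (a consequence of the orientation convention for $(X_k,E_k)$ in Definition~\ref{def:null.frame}) yields the stated formula for $dvol$. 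Since $t_k=t$, the induced metric on $\Sigma_t$ written in the coordinates $(u_k,\theta_k)$ is exactly the lower-right $2\times 2$ block of \eqref{gthetau}, and the same prescription applied to that block gives $dvol_{\Sigma_t}$ in \eqref{volthetau}. Finally, the defining relation $du_k\wedge dvol_{C_{u_k}}=dvol$ together with the formula for $dvol$ just obtained immediately gives \eqref{volCuk}.

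There is no real obstacle here: once Lemma~\ref{lem:XEL.in.rd} is in hand this is a bookkeeping exercise. The only points that warrant a moment's care are (i) justifying the inversion of \eqref{thetaE}, i.e.~the nonvanishing of $N$ and $E_k\theta_k$, which is precisely the statement that $(u_k,\theta_k,t_k)$ is a genuine coordinate system, and (ii) the sign/orientation bookkeeping needed to take the correct (positive) square roots of the determinants, which uses the conventions $e_0 u_k>0$ and the prescribed orientation of $(X_k,E_k)$.
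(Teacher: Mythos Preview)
Your proposal is correct and follows essentially the same approach as the paper's proof, which simply states that \eqref{gthetau} follows from \eqref{thetaE} and \eqref{XELframecondition}, and that \eqref{volthetau} and \eqref{volCuk} follow from \eqref{gthetau} directly. You have spelled out exactly this argument: inverting \eqref{thetaE}, pairing via \eqref{XELframecondition}, and then reading off the determinants.
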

 	\begin{proof}
 		\eqref{gthetau} follows from \eqref{thetaE} and \eqref{XELframecondition}; \eqref{volthetau} and \eqref{volCuk} follow from \eqref{gthetau} directly.
 	\end{proof}

 	We establish using Proposition~\ref{lem:XEL.in.rd} a first set of relations between the frame coefficients. 	
 	
 	\begin{lem} We have the following relations between the commutator and the frame coefficients:
 		\begin{equation} \label{LE-EL-torsionnal}
 		\left[E_k,L_k\right]= L_k \log(\varTheta_k) \cdot E_k - E_k \log(N) \cdot L_k .
 		\end{equation}
 		\begin{equation} \label{LX-XL-torsionnal}
 		\left[L_k,X_k\right]= -L_k \log(\mu_k) \cdot X_k+ (  L_k \log(\mu_k) \cdot  \Xi_k+  L_k \Xi_k ) \cdot \varTheta_k \cdot E_k+ X_k \log(N) \cdot L_k,
 		\end{equation}
 		\begin{equation} \label{XE-EX-torsionnal}
 		\left[E_k,X_k\right]= -E_k \log(\mu_k) \cdot X_k+ \left( E_k \Xi_k -X_k \varTheta_k^{-1} + \Xi_k \cdot E_k \log(\mu_k)\right) \cdot \varTheta_k \cdot E_k.
 		\end{equation}
 	\end{lem}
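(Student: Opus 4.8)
\emph{Proof strategy.} The idea is to reduce all three brackets to the trivial fact that the coordinate vector fields $(\partial_{u_k},\partial_{\theta_k},\partial_{t_k})$ commute. By Lemma~\ref{lem:XEL.in.rd} (equation~\eqref{thetaE}) we have $L_k = N^{-1}\partial_{t_k}$, $E_k = \varTheta_k^{-1}\partial_{\theta_k}$ and $X_k = \mu_k^{-1}\partial_{u_k} + \Xi_k\partial_{\theta_k}$; inverting these relations records that $\partial_{t_k} = N L_k$, $\partial_{\theta_k} = \varTheta_k E_k$ and $\partial_{u_k} = \mu_k X_k - \mu_k\Xi_k\varTheta_k E_k$. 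We will also repeatedly use that for any scalar $f$, $L_k f = N^{-1}\partial_{t_k} f$, $E_k f = \varTheta_k^{-1}\partial_{\theta_k} f$ and $X_k f = \mu_k^{-1}\partial_{u_k} f + \Xi_k\partial_{\theta_k} f$. The computation then proceeds by writing each frame vector field in the coordinate basis, applying the Leibniz rule $[fU,gV] = fg[U,V] + f(Ug)V - g(Vf)U$ with $[U,V]=0$, collecting the coefficients of $\partial_{u_k}, \partial_{\theta_k}, \partial_{t_k}$, and finally converting back to the frame $(L_k,E_k,X_k)$.

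For \eqref{LE-EL-torsionnal}, expanding $[\varTheta_k^{-1}\partial_{\theta_k}, N^{-1}\partial_{t_k}]$ gives $\varTheta_k^{-1}(\partial_{\theta_k}N^{-1})\partial_{t_k} - N^{-1}(\partial_{t_k}\varTheta_k^{-1})\partial_{\theta_k}$. Rewriting $\varTheta_k^{-1}\partial_{\theta_k}N^{-1} = -N^{-1}(E_k\log N)$ and $N^{-1}\partial_{t_k}\varTheta_k^{-1} = -\varTheta_k^{-1}(L_k\log\varTheta_k)$, and using $\partial_{t_k} = N L_k$, $\partial_{\theta_k} = \varTheta_k E_k$, yields exactly $L_k\log(\varTheta_k)\cdot E_k - E_k\log(N)\cdot L_k$. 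For \eqref{LX-XL-torsionnal} and \eqref{XE-EX-torsionnal}, the same machinery applies, the only new feature being that $X_k$ is a sum $\mu_k^{-1}\partial_{u_k} + \Xi_k\partial_{\theta_k}$, so one expands bilinearly. The key bookkeeping point is that whenever a $\partial_{u_k}$-coefficient appears one must re-express the coordinate field by $\partial_{u_k} = \mu_k X_k - \mu_k\Xi_k\varTheta_k E_k$, and whenever a $\partial_{u_k}$-\emph{derivative} of a frame coefficient appears one repackages it via $\mu_k^{-1}\partial_{u_k} = X_k - \Xi_k\partial_{\theta_k}$ into an $X_k$-derivative plus a $\partial_{\theta_k}$-correction; the latter combines with the remaining $\partial_{\theta_k}$-terms, and in particular the two loose occurrences of $\Xi_k\,\partial_{\theta_k}\varTheta_k^{-1}$ in the $[E_k,X_k]$ computation cancel against each other. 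After regrouping, the coefficient of $X_k$ and the coefficient of $\varTheta_k E_k$ assemble into precisely the expressions in \eqref{LX-XL-torsionnal} and \eqref{XE-EX-torsionnal}.

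There is no conceptual obstacle here — the whole statement is a direct consequence of \eqref{thetaE} together with the flatness of the coordinate frame. The only real work is organizational: tracking in the brackets $[L_k,X_k]$ and $[E_k,X_k]$ the several terms that arise (from differentiating $\mu_k^{-1}$, $\Xi_k$, $\varTheta_k$ and $N$) and verifying that, after converting $\partial_{u_k}$-derivatives into $X_k$-derivatives, they recombine into the stated compact form. I would carry out these regroupings by hand, keeping the coefficients of $\partial_{u_k},\partial_{\theta_k},\partial_{t_k}$ separate until the very end, and only then pass to the null frame using $\partial_{t_k} = N L_k$, $\partial_{\theta_k} = \varTheta_k E_k$, $\partial_{u_k} = \mu_k X_k - \mu_k\Xi_k\varTheta_k E_k$.
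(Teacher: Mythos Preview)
Your proposal is correct and follows essentially the same approach as the paper: both compute the brackets directly from the coordinate expressions \eqref{thetaE}, use that $(\partial_{u_k},\partial_{\theta_k},\partial_{t_k})$ commute, and then re-express the $\partial_{u_k}$-contributions via $\mu_k^{-1}\partial_{u_k} = X_k - \Xi_k\,\partial_{\theta_k}$ to recover the null-frame form. The paper's write-up is slightly terser (it groups the $X_k\varTheta_k^{-1}$ term directly rather than producing and then cancelling the two $\Xi_k\,\partial_{\theta_k}\varTheta_k^{-1}$ pieces), but the content is the same.
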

 	
 	\begin{proof} \eqref{LE-EL-torsionnal} follows directly from  \eqref{thetaE}.
 		
 		For \eqref{LX-XL-torsionnal}, we first compute 
		$$ \left[L_k,X_k\right]= \left[N^{-1} \partial_{t_k},\mu_k^{-1}\partial_{u_k}+ \Xi_k \cdot  \partial_{\theta_k}\right]= X_k \log(N) \cdot L_k-L_k \log(\mu_k) \mu_k^{-1}\partial_{u_k}+L_k \Xi_k  \cdot  \partial_{\theta_k} ,$$
 		and then use $$ L_k \log(\mu_k) \cdot \mu_k^{-1}\partial_{u_k} = L_k \log(\mu_k) \cdot  X_k - \Xi_k \cdot L_k \log(\mu_k) \cdot\partial_{\theta_k}, $$
 		which gives \eqref{LX-XL-torsionnal}. 
 		
 		Finally \eqref{XE-EX-torsionnal} is an easy consequence of 
 			$$[E_k, X_k] = \varTheta_k^{-1} (\rd_{\th_k} \mu_k^{-1}) \rd_{u_k} + \varTheta_k^{-1} (\rd_{\th_k} \Xi_k) \rd_{\th_k} - (X_k \varTheta_k^{-1}) \rd_{\th_k}, $$
 			and 
 		$$E_k \log \mu_k \cdot \mu_k^{-1} \rd_{u_k} = E_k \log \mu_k \cdot X_k - \Xi_k \cdot E_k \log \mu_k \cdot \rd_{\th_k}. \qedhere$$

 	\end{proof}
 	
	\subsection{Transformations between different coordinate systems} \label{coordinatetransform}

 	\subsubsection{Relations on $\Sigma_t$ between $(X_k,E_k)$ and the elliptic coordinate vector fields $(\partial_1,\partial_2)$ }

	\begin{lem}
 		The following identities between $E^i_k$ and $X^i_k$ hold: 
 		\begin{equation} \label{EXinellipticcoord}  
		\begin{split}
 		E^1_k=-X_k^2, \quad
 		E^2_k=X_k^1. 
		\end{split}
 		\end{equation}
 		Moreover, the coordinate vector fields $(\partial_1,\partial_2)$ can be expressed in terms of $(E_k,X_k)$ as follows: 
 		\begin{equation} \label{partial12EX}
 		\partial_1 = e^{2\gamma} \cdot \left( -X^2_k \cdot E_k +  E^2_k \cdot X_k \right),\quad \partial_2 = e^{2\gamma} \cdot \left( X^1_k \cdot E_k -  E^1_k \cdot X_k \right).
 		\end{equation}

 	\end{lem}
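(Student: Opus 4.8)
The plan is to exploit the orthonormality of the frame $(X_k, E_k)$ with respect to the conformally flat metric $\bar g_{ij} = e^{2\gamma}\delta_{ij}$ on $\Sigma_t$, together with the fixed orientation convention in Definition~\ref{def:null.frame}.4. First I would note that $E_k$ and $X_k$ are both tangent to $\Sigma_t$ (they annihilate $t$, by \eqref{eq:silly.tangential} and the definition of $X_k$), so in the $(\partial_1,\partial_2)$ basis we may write $E_k = E_k^1 \partial_1 + E_k^2\partial_2$ and $X_k = X_k^1\partial_1 + X_k^2\partial_2$. The conditions $\bar g(E_k,E_k)=1$, $\bar g(X_k,X_k)=1$, $\bar g(E_k,X_k)=0$ from \eqref{XELframecondition} become, using $\bar g_{ij}=e^{2\gamma}\delta_{ij}$: $e^{2\gamma}((E_k^1)^2+(E_k^2)^2)=1$, $e^{2\gamma}((X_k^1)^2+(X_k^2)^2)=1$, and $E_k^1 X_k^1 + E_k^2 X_k^2 = 0$. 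The orthogonality relation forces $(E_k^1,E_k^2)$ to be proportional to $(-X_k^2, X_k^1)$, and matching the norms (both equal to $e^{-\gamma}$ in the Euclidean sense) pins down the proportionality constant to $\pm 1$; the orientation condition that $(X_k,E_k)$ agrees with $(\partial_1,\partial_2)$ (i.e. the determinant $X_k^1 E_k^2 - X_k^2 E_k^1 > 0$) selects the sign, giving exactly $E_k^1 = -X_k^2$, $E_k^2 = X_k^1$, which is \eqref{EXinellipticcoord}.

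For the second identity \eqref{partial12EX}, I would simply invert the linear change of basis. The matrix sending $(E_k, X_k)$ to $(\partial_1,\partial_2)$ has rows $(E_k^1, E_k^2)$ and $(X_k^1, X_k^2)$; its determinant is $E_k^1 X_k^2 - E_k^2 X_k^1 = -X_k^2\cdot X_k^2 - X_k^1\cdot X_k^1 = -((X_k^1)^2+(X_k^2)^2) = -e^{-2\gamma}$ using \eqref{EXinellipticcoord} and the normalization of $X_k$. Cramer's rule (or direct inversion of a $2\times 2$ matrix) then yields $\partial_1 = e^{2\gamma}(-X_k^2 E_k + E_k^2 X_k)$ and $\partial_2 = e^{2\gamma}(X_k^1 E_k - E_k^1 X_k)$, after substituting $E_k^1=-X_k^2$, $E_k^2=X_k^1$ where convenient to present the answer in the stated form. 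This is a direct computation once the determinant is known.

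I do not expect a serious obstacle here; the only mild subtlety is getting the \emph{sign} in \eqref{EXinellipticcoord} correct, which requires carefully tracking the orientation convention in Definition~\ref{def:null.frame}.4 (that $(X_k,E_k)$ has the same orientation as $(\partial_1,\partial_2)$), rather than the a priori equally plausible choice $E_k^1 = X_k^2$, $E_k^2 = -X_k^1$. One must check that the chosen sign indeed makes $\det\begin{pmatrix} X_k^1 & X_k^2 \\ E_k^1 & E_k^2\end{pmatrix}$ positive: with $E_k^1=-X_k^2, E_k^2 = X_k^1$ this determinant is $(X_k^1)^2 + (X_k^2)^2 = e^{-2\gamma} > 0$, confirming the correct orientation. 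Everything else is routine linear algebra with the conformal factor $e^{2\gamma}$ bookkept consistently.
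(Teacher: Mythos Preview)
Your proposal is correct and follows essentially the same approach as the paper: both arguments use that $(X_k,E_k)$ is $\bar g$-orthonormal (hence Euclidean-orthonormal after rescaling by $e^{\gamma}$), invoke the orientation convention from Definition~\ref{def:null.frame}.4 to fix the sign, and then invert the $2\times 2$ change-of-basis matrix for \eqref{partial12EX}. The paper packages the first step via an explicit rotation-angle parametrization $\tilde X_k = (\sin\varphi,-\cos\varphi)$, $\tilde E_k = (\cos\varphi,\sin\varphi)$, while you argue directly from orthogonality and the sign of the determinant; these are the same computation in slightly different dress.
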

 	\begin{proof}

 		Since $E_k$ and $X_k$ are orthonormal for $\bar{g}$ (see \eqref{XELframecondition}), it follows that $\delta_{i j} X_k^i X_k^j=\delta_{i j} E_k^i E_k^j=e^{-2\gamma}$.
 		
 		Now set $\tilde{E}_k=e^{\gamma} \cdot E_k$ and $\tilde{X}_k=e^{\gamma} \cdot X_k$. Then, using also point 4 of Definition~\ref{def:null.frame}, $(\tilde{X}_k, \tilde{E}_k)$ is an orthonormal basis for the Euclidean metric $\delta_{i j}$ on $\RR^2$ with the same orientation as $(\rd_1, \rd_2)$. Hence there exists $\varphi \in \RR$ such that 
 		\begin{equation*}   \begin{split}
 		\tilde{E}_k= \cos(\varphi) \cdot \partial_1+\sin(\varphi) \cdot \partial_2, \quad
 		\tilde{X}_k= \sin(\varphi) \cdot \partial_1-\cos(\varphi) \cdot \partial_2, \end{split}
 		\end{equation*}
 		
 		This, in particular, gives \eqref{EXinellipticcoord}. Inverting the rotation matrix, we obtain
 		\begin{equation*}   \begin{split}
 		\partial_1= \cos(\varphi) \cdot 	\tilde{E}_k-\sin(\varphi) \cdot 	\tilde{X}_k, \quad
 		\partial_2= -\sin(\varphi) \cdot \tilde{E}_k-\cos(\varphi) \cdot \tilde{X}_k, \end{split}
 		\end{equation*}
 		which gives directly \eqref{partial12EX}. \qedhere

 	\end{proof}

 	\subsubsection{Elliptic coordinate derivatives of $u_k$}

 	Now we compute the derivatives of $u_k$ and $\th_k$ with respect to $(\rd_t, \rd_i)$.
 	\begin{lem} \label{duprop}

 		The following identities hold: 
 		
 		\begin{equation} \label{ucartderivative}
 		\partial_i u_k = e^{2\gamma} \cdot \mu_k^{-1} \cdot  \delta_{i j}X^j_k,
 		\end{equation}
		\begin{equation} \label{thetacartderivative}
 		\partial_i \theta_k = e^{2\gamma}  \delta_{i j} \cdot \left(\varTheta_k^{-1} \cdot E^j_k + \Xi_k  \cdot X^j_k\right),
 		\end{equation}
 		\begin{equation}  \label{dtui}  	
		\partial_{t} u_k =  \beta^q \partial_q u_k+ N \cdot \mu_k^{-1},
		\end{equation}
		\begin{equation} \label{dttheta}
 		\partial_t \theta_k = \beta^i \partial_i \theta_k+ N \cdot   \Xi_k = e^{2\gamma} \cdot \beta_j \cdot \left(\varTheta_k^{-1} \cdot E^j_k + \Xi_k  \cdot X^j_k\right)+ N \cdot  \Xi_k.
 		\end{equation}
 		Moreover, for all vector field $Y$ in the tangent space of $\Sigma_t$, we have 
		\begin{equation} \label{Yu}
Yu_k = \mu_k^{-1}  \cdot g(Y,X_k).
 		\end{equation}

 	\end{lem}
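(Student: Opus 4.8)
\emph{Plan.} The whole lemma reduces to the frame identities already established, together with the observation that on each slice $\Sigma_t$ the pair $(X_k,E_k)$ is a $\bar g$-orthonormal basis of the tangent space $T\Sigma_t$: both are tangent to $\Sigma_t$ (since $X_k t = E_k t = 0$, see \eqref{eq:silly.tangential}), they are $g$-orthonormal by \eqref{XELframecondition}, and $\dim \Sigma_t = 2$; moreover the restriction of $g$ to $T\Sigma_t$ is $\bar g$, so in particular $g(\partial_i,X_k) = \bar g_{ij}X^j_k = e^{2\gamma}\delta_{ij}X^j_k$ by \eqref{metric2+1} and \eqref{gauge}.

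\emph{Step 1: the identity \eqref{Yu}.} I would prove this first, as it is the conceptual core. Given $Y$ tangent to $\Sigma_t$, expand $Y = a\,X_k + b\,E_k$ in the above basis. Then $Y u_k = a\, X_k u_k + b\, E_k u_k = a\,\mu_k^{-1}$ using $X_k u_k = \mu_k^{-1}$ and $E_k u_k = 0$ from \eqref{eq:silly.tangential}, while $g(Y,X_k) = a\,g(X_k,X_k) + b\,g(E_k,X_k) = a$ by \eqref{XELframecondition}. Combining yields $Y u_k = \mu_k^{-1} g(Y,X_k)$.

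\emph{Step 2: the spatial derivatives \eqref{ucartderivative}--\eqref{thetacartderivative}.} Since each $\partial_i$ is tangent to $\Sigma_t$, one route is to apply \eqref{Yu} with $Y = \partial_i$ and insert $g(\partial_i,X_k) = e^{2\gamma}\delta_{ij}X^j_k$ to get \eqref{ucartderivative} immediately. More economically, I would apply the expansion \eqref{partial12EX} of $\partial_1,\partial_2$ in the frame $(E_k,X_k)$ directly to the functions $u_k$ and $\theta_k$: for $u_k$ only the $X_k$-term survives (as $E_k u_k = 0$) and one reads off \eqref{ucartderivative} after using $E^1_k = -X^2_k$, $E^2_k = X^1_k$ from \eqref{EXinellipticcoord}; for $\theta_k$ one inserts $X_k\theta_k = \Xi_k$, $E_k\theta_k = \varTheta_k^{-1}$ (the definitions from Lemma~\ref{lem:XEL.in.rd}), and then uses \eqref{EXinellipticcoord} again to rearrange the coefficients into the claimed symmetric form $e^{2\gamma}\delta_{ij}(\varTheta_k^{-1}E^j_k + \Xi_k X^j_k)$, giving \eqref{thetacartderivative}.

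\emph{Step 3: the time derivatives \eqref{dtui}--\eqref{dttheta}.} Here I would use $\partial_t = e_0 + \beta^i\partial_i$ together with $e_0 = N\n = N(L_k + X_k)$ from \eqref{def:e0} and \eqref{nXEL}. Applying this to $u_k$ and using $L_k u_k = 0$, $X_k u_k = \mu_k^{-1}$ gives $\partial_t u_k = N\mu_k^{-1} + \beta^q\partial_q u_k$, i.e.~\eqref{dtui}. Applying it to $\theta_k$ and using $L_k\theta_k = 0$, $X_k\theta_k = \Xi_k$ gives $\partial_t\theta_k = N\Xi_k + \beta^i\partial_i\theta_k$; substituting \eqref{thetacartderivative} for $\partial_i\theta_k$ and contracting with $\beta^i$ (writing $\beta_j := \delta_{ij}\beta^i$) yields the second displayed form in \eqref{dttheta}.

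\emph{Main difficulty.} There is no substantial obstacle; the lemma is a matter of organizing the frame algebra. The only place requiring minor care is the passage between the components $E^i_k$ and $X^i_k$ through \eqref{EXinellipticcoord} when matching the coefficients in \eqref{thetacartderivative} to the stated symmetric form, and tracking which terms drop out because $L_k,E_k$ annihilate $u_k$ and $L_k$ annihilates $\theta_k$.
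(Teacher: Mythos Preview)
Your proposal is correct and follows essentially the same approach as the paper: both use \eqref{partial12EX} together with \eqref{thetaE} and \eqref{EXinellipticcoord} for the spatial derivatives, and the decomposition $\partial_t = N(L_k+X_k) + \beta^i\partial_i$ (equivalently, $L_k = N^{-1}(\partial_t - \beta^q\partial_q) - X_k$) for the time derivatives. The only cosmetic difference is that you prove \eqref{Yu} first via the orthonormal-frame expansion, whereas the paper deduces it last from \eqref{ucartderivative} and \eqref{gauge}; both arguments are equally short.
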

 	\begin{proof}
 		We use \eqref{partial12EX} to compute $\rd_i u_k$ and $\rd_i \th_k$, and apply also \eqref{thetaE} and \eqref{EXinellipticcoord} to obtain  \eqref{ucartderivative} and \eqref{thetacartderivative}. 
		
		Both \eqref{dtui} and \eqref{dttheta} can be derived by $L_k u_k = L_k \th_k = 0$ (by \eqref{thetaE}), the identity $ L_k = N^{-1} \cdot (\partial_t -\beta^q \partial_q) -X_k$ (by  \eqref{nXEL} and \eqref{defnormal}), and \eqref{thetaE}.
				
		Finally, the identity \eqref{Yu} follows from \eqref{ucartderivative} and \eqref{gauge}. \qedhere

 	\end{proof}
 	
 	\subsubsection{Spatial coordinate system $(u_k,u_{k'})$ on $\Sigma_t$} \label{ukuk'coordinatesection}
 	
 	Fix $k,\,k'\in \{1,2,3\}$ with $k \neq k'$. Introduce the spatial coordinate system $(u_k,u_{k'})$. So as to distinguish it from other coordinate derivatives, we define the coordinate vector fields on $\Sigma_t$ in the $(u_k,u_{k'})$ coordinate system by $(\partialuk, \partialukp)$. 
	
	We now express $(\partialuk, \partialukp)$ in terms of $(X_k,E_k)$ in the following lemma: 
	\begin{lem} \label{ukuk'toXE.lemma}
 		The vector fields $X_k$ and $E_k$ can be expressed in the $(u_k, u_{k'})$ coordinate system as follows:
 		\begin{equation} \label{Xukukprime}
 	X_k =\mu_{k}^{-1} \cdot  \partialuk+ \mu_{k'}^{-1} \cdot g(X_k,X_{k'}) \cdot 	\partialukp	,
 		\end{equation}
 		\begin{equation} \label{Eukukprime}
 	E_k= \mu_{k'}^{-1} \cdot g(E_k,X_{k'}) \cdot  	\partialukp.
 		\end{equation}
 		The above transformation can be inverted to give
		\begin{equation} \label{partialukpEX}
	\partialukp=  \mu_{k'} \cdot g(E_k,X_{k'}) ^{-1} E_k,
 		\end{equation}
 		\begin{equation}
 \partialuk = \mu_k X_k- \frac{ \mu_k\cdot g(X_k,X_{k'})}{ g(E_k,X_{k'})} \cdot E_k.
 		\end{equation}

 	\end{lem}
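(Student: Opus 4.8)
The plan is to read off the components of $X_k$ and $E_k$ in the coordinate frame $(\partialuk,\partialukp)$ by testing against the coordinate functions $u_k$ and $u_{k'}$, and then to invert the resulting $2\times 2$ linear system. The only inputs are \eqref{eq:silly.tangential}, \eqref{Yu}, and the defining property of coordinate vector fields, namely $\partialuk u_k = 1$, $\partialuk u_{k'} = 0$, $\partialukp u_k = 0$, $\partialukp u_{k'} = 1$.

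First I would write $X_k = a \cdot \partialuk + b \cdot \partialukp$ for some scalar functions $a, b$ on $\Sigma_t$; this is legitimate since $(\partialuk,\partialukp)$ is a frame on $\Sigma_t$. Applying both sides to $u_k$ gives $a = X_k u_k = \mu_k^{-1}$ by \eqref{eq:silly.tangential}, and applying both sides to $u_{k'}$ gives $b = X_k u_{k'} = \mu_{k'}^{-1} \cdot g(X_k, X_{k'})$ by \eqref{Yu} (with $Y = X_k$ and the eikonal function $u_{k'}$). This yields \eqref{Xukukprime}. Writing similarly $E_k = c \cdot \partialuk + d \cdot \partialukp$ and testing against $u_k$ gives $c = E_k u_k = 0$ by \eqref{eq:silly.tangential}, while testing against $u_{k'}$ gives $d = E_k u_{k'} = \mu_{k'}^{-1} \cdot g(E_k, X_{k'})$ by \eqref{Yu}; this yields \eqref{Eukukprime}.

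For the inversion, \eqref{Eukukprime} can be solved directly for $\partialukp$, which gives \eqref{partialukpEX}; substituting \eqref{partialukpEX} into \eqref{Xukukprime} and solving for $\partialuk$ then gives the last identity of the lemma, after simplifying $\mu_{k'}^{-1} g(X_k,X_{k'}) \cdot \mu_{k'}\, g(E_k,X_{k'})^{-1} = g(X_k,X_{k'})/g(E_k,X_{k'})$. The computations are entirely routine; the only genuine point is that the inversion requires $g(E_k, X_{k'}) \neq 0$ — equivalently that $(u_k,u_{k'})$ is a nondegenerate coordinate system on $\Sigma_t$ — which is precisely the quantitative transversality of the eikonal functions built into \eqref{cangle2}. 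That is the one place where the hypothesis that the three waves propagate in genuinely distinct directions enters, so I would expect that (and its propagation to later slices $\Sigma_t$, established elsewhere in the bootstrap) to be the only substantive ingredient beyond bookkeeping.
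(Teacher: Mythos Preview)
Your proof is correct and essentially identical to the paper's: both write $X_k$ and $E_k$ in the $(\partialuk,\partialukp)$ frame, read off the coefficients by applying to $u_k$ and $u_{k'}$ (using $E_k u_k=0$, $X_k u_k=\mu_k^{-1}$ from \eqref{eq:silly.tangential}/\eqref{thetaE} and \eqref{Yu} for the $u_{k'}$-components), and then invert. Your explicit remark that the inversion requires $g(E_k,X_{k'})\neq 0$ --- i.e., transversality --- is a nice addition the paper leaves implicit here (it is established quantitatively in \eqref{anglecontrol}).
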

 	\begin{proof}
We start to define $a$, $b$, $c$, $d$ as $$ X_k = a  \cdot  \partialuk+  b \cdot \partialukp,$$  $$ E_k = c  \cdot \partialuk+d	\cdot \partialukp.$$ Since $E_k u_k=0$ we know that $c=0$. To determine $d$ we compute $d= E_k(u_{k'})=\mu_{k'}^{-1} \cdot g(E_k,X_{k'}) $ by \eqref{Yu}.

We also know by \eqref{thetaE} that $a=X_ku_k = \mu_k^{-1}$. To determine $b$ we compute $b= X_k(u_{k'})=\mu_{k'}^{-1} \cdot g(X_k,X_{k'}) $ by \eqref{Yu} again. \qedhere
 	\end{proof}

 	\subsection{Ricci coefficients, covariant derivatives and commutators in the XEL frame} \label{riccinullframesection}
 	
 	We now define some Ricci coefficients in terms of the frame $(X_k, E_k, L_k)$:
 	
 	\begin{equation} \label{chi}
 	\chi_k= g(\nabla_{E_k} L_k, E_k)=-g(\nabla_{E_k} E_k,L_k),
 	\end{equation}
 	\begin{equation} \label{eta}
 	\eta_k=   g(\nabla_{X_k} L_k, E_k)=- g(\nabla_{X_k} E_k, L_k).
 	\end{equation}
 	
 	All the other Ricci coefficients can, in fact, be determined from $\chi_k$, $\eta_k$, $N$, $\mu_k$ and the contractions of $K$.
 	
 	\begin{lem} \label{riccibarXEL}
 		The following identities hold:
 		\begin{equation} \label{barchiXEL}
 		g( \nabla_{E_k}X_k,E_k) = -g( X_k, \nabla_{E_k}E_k)=  K(E_k,E_k) - \chi_k ,
 		\end{equation}	\begin{equation} \label{zetaXEL}
 		g( \nabla_{E_k}L_k,X_k) =-	g( \nabla_{E_k}X_k,L_k) = K(E_k,X_k),
 		\end{equation}
 		\begin{equation} \label{baretaXEL}
 		g( \nabla_{L_k}E_k,X_k)=-	g( \nabla_{L_k}X_k,E_k)=K(E_k,X_k)-E_k \log(N),
 		\end{equation}
 		\begin{equation} \label{baromegaXEL}
 		g( \nabla_{X_k}L_k,X_k)=-	g( \nabla_{X_k}X_k,L_k) 	=K(X_k,X_k), 
 		\end{equation}
 		\begin{equation} \label{barnuXEL}
 		g( \nabla_{X_k}X_k,E_k) =	-	g( \nabla_{X_k}E_k,X_k)= K(E_k,X_k) -\eta_k,
 		\end{equation}	
 		\begin{equation} \label{muXEL}
 		g( \nabla_{L_k}X_k,L_k)		=	-g( \nabla_{L_k}L_k,X_k)= L_k \log(\mu_k)=K( X_k,X_k) -X_k\log(N).
 		\end{equation}	
 		All the other Ricci coefficients that have not been mentioned in \eqref{chi}--\eqref{muXEL} are zero.
 		
 		As a consequence, we have the following covariant derivatives and commutators: 
 		
 		\begin{equation} \label{EL}
 		\nabla_{E_k} L_k = \chi_k \cdot E_k -K(E_k,X_k) L_k ,
 		\end{equation}
 		\begin{equation} \label{LE}
 		\nabla_{L_k} E_k = (E_k \log(N)-K(E_k,X_k) )\cdot L_k ,
 		\end{equation}
 		\begin{equation} \label{EL-LE}
 		\left[E_k,L_k \right] = \chi_k \cdot E_k - E_k \log(N) \cdot L_k.
 		\end{equation}	\begin{equation} \label{EX}
 		\nabla_{E_k} X_k = K(E_k,X_k) X_k +  (K(E_k,E_k)- \chi_k) \cdot E_k +  K(E_k,X_k) L_k,
 		\end{equation}
 		\begin{equation} \label{XE}
 		\nabla_{X_k} E_k =  \eta_k X_k + K(E_k,X_k) L_k, \end{equation}
 		\begin{equation} \label{EX-XE}
 		\left[E_k,X_k\right]  =   (K(E_k,X_k)-\eta_k) \cdot X_k +  (K(E_k,E_k)- \chi_k) \cdot E_k, 
 		\end{equation} 			\begin{equation} \label{LX}
 		\nabla_{L_k} X_k = (-K(E_k,X_k) + E_k\log N) \cdot E_k  -(K( X_k,X_k) -X_k\log(N)) \cdot  X_k-( K( X_k,X_k) -X_k\log(N) )\cdot L_k,
 		\end{equation}
 		\begin{equation} \label{XL}
 		\nabla_{X_k} L_k = \eta_k \cdot E_k -K(X_k,X_k) \cdot L_k,
 		\end{equation}
 		\begin{equation} \label{LX-XL}
 		[L_k,X_k] =-(K(E_k,X_k) -E_k\log N  +\eta_k) \cdot E_k  - (K( X_k,X_k) -X_k\log(N)) \cdot  X_k+  X_k\log(N) \cdot L_k,
 		\end{equation}	
 		\begin{equation} \label{EE2}
 		\nabla_{E_k} E_k=  \chi_k \cdot X_k+  K(E_k,E_k) \cdot L_k,\end{equation} 
 		\begin{equation} \label{XX}
 		\nabla_{X_k} X_k = K(X_k,X_k) \cdot X_k+  (K(E_k,X_k)-\eta_k) \cdot E_k + K(X_k,X_k) \cdot  L_k,
 		\end{equation}	
 		\begin{equation} \label{LL}
 		\nabla_{L_k}L_k = (K( X_k,X_k) -X_k\log(N)) \cdot L_k. 
 		\end{equation}

 	\end{lem}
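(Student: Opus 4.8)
The plan is to derive every connection coefficient of $(X_k,E_k,L_k)$ from four inputs, all already available. First, \emph{metric compatibility} of $\nabla$: since every pairwise inner product of $(X_k,E_k,L_k)$ in \eqref{XELframecondition} is constant, we get $g(\nabla_AB,C)=-g(\nabla_AC,B)$ for frame vectors $A,B,C$, which yields both the ``second equalities'' in \eqref{chi}, \eqref{eta} and \eqref{barchiXEL}--\eqref{muXEL} and the vanishing of $g(\nabla_AL_k,L_k)$, $g(\nabla_AX_k,X_k)$, $g(\nabla_AE_k,E_k)$. Second, the splitting $\n=L_k+X_k$ of \eqref{nXEL}: its key consequence is that $(E_k,X_k,\n)$ is an \emph{orthogonal} frame ($g(\n,E_k)=g(\n,X_k)=0$, $g(\n,\n)=-1$), so that, using $E_kt=X_kt=0$ (hence $E_k,X_k$ tangent to $\Sigma_t$), the definition \eqref{Kdef} of $K$, the symmetry of $K$, and $g(\nabla_Y\n,\n)=\tfrac12 Y(g(\n,\n))=0$, one reads off $\nabla_{E_k}\n=K(E_k,E_k)E_k+K(E_k,X_k)X_k$ and $\nabla_{X_k}\n=K(E_k,X_k)E_k+K(X_k,X_k)X_k$. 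Third, the geodesic equation $\nabla_{\Lgeo}\Lgeo=0$ of \eqref{eikonal2} together with $L_k=\mu_k\Lgeo$, giving $\nabla_{L_k}L_k=(L_k\log\mu_k)\,L_k$. Fourth, the acceleration identity $\nabla_{\n}\n=(E_k\log N)E_k+(X_k\log N)X_k$, obtained by unwinding \eqref{D0e0} with $e_0=N\n$ and rewriting $e^{-2\gamma}\delta^{ij}\partial_iN\,\partial_j$ in terms of $g$-pairings with $E_k,X_k$ via \eqref{gauge}.

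With these in hand, each coefficient is a short computation; the one point of care is that, $(X_k,E_k,L_k)$ being a \emph{null} frame, the components of a vector $V$ are recovered from \eqref{inversegXEL} as $V=-\big(g(V,L_k)+g(V,X_k)\big)L_k-g(V,L_k)X_k+g(V,E_k)E_k$, so the ``$X_k$-component'' is \emph{not} $g(V,X_k)$. I would organize by the differentiating direction. For $E_k$: using $L_k=\n-X_k$ and $g(\nabla_{E_k}X_k,X_k)=0$, one gets $g(\nabla_{E_k}L_k,X_k)=g(\nabla_{E_k}\n,X_k)=K(E_k,X_k)$, i.e.\ \eqref{zetaXEL}; feeding this and $\chi_k=g(\nabla_{E_k}L_k,E_k)$ into the component formula gives \eqref{EL}; then $\nabla_{E_k}X_k=\nabla_{E_k}\n-\nabla_{E_k}L_k$ gives \eqref{EX} and hence \eqref{barchiXEL}; and $\chi_k=-g(\nabla_{E_k}E_k,L_k)$ with metric compatibility gives \eqref{EE2}. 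The $X_k$-derivatives \eqref{baromegaXEL}, \eqref{barnuXEL}, \eqref{XE}, \eqref{XL}, \eqref{XX} come out in exactly the same way with $\eta_k$ and $\nabla_{X_k}\n$ in place of $\chi_k$ and $\nabla_{E_k}\n$. For $L_k$: write $\nabla_{L_k}\n=\nabla_{\n}\n-\nabla_{X_k}\n=(E_k\log N-K(E_k,X_k))E_k+(X_k\log N-K(X_k,X_k))X_k$; pairing $\nabla_{L_k}X_k=\nabla_{L_k}\n-\nabla_{L_k}L_k$ with $L_k$ and comparing with $g(\nabla_{L_k}X_k,L_k)=L_k\log\mu_k$ (from the geodesic equation and metric compatibility) forces $L_k\log\mu_k=K(X_k,X_k)-X_k\log N$, which is \eqref{muXEL}, after which \eqref{LL} and \eqref{LX} drop out; finally $\nabla_{L_k}E_k$ has vanishing $E_k$- and $X_k$-components (by metric compatibility and $\nabla_{L_k}L_k\parallel L_k$), and its $L_k$-component follows from $g(\nabla_{L_k}E_k,X_k)=-g(E_k,\nabla_{L_k}X_k)$, giving \eqref{LE} and \eqref{baretaXEL}. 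The claim that all unlisted Ricci coefficients vanish is then simply the bookkeeping that \eqref{chi}--\eqref{muXEL} plus metric compatibility exhaust all $g(\nabla_AB,C)$. The covariant-derivative formulas \eqref{EL}--\eqref{LL} are the reassembled form of the above (via the component formula), and the commutators \eqref{EL-LE}, \eqref{EX-XE}, \eqref{LX-XL} follow from them by torsion-freeness, $[A,B]=\nabla_AB-\nabla_BA$.

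I do not expect a real obstacle: once the orthogonality of $(E_k,X_k,\n)$ and the expression of $\nabla_Y\n$ through $K$ are noted, everything is algebraic. The one recurring trap is the sign bookkeeping in the null frame, which I would defuse by consistently working with the pairings $g(\nabla_AB,L_k)$, $g(\nabla_AB,X_k)$, $g(\nabla_AB,E_k)$ rather than with ``components''. As a final consistency check I would confirm that the derived \eqref{EL-LE} agrees with the previously established \eqref{LE-EL-torsionnal}; this is automatic and, as a useful byproduct, identifies $L_k\log\varTheta_k=\chi_k$ together with $L_k\log\mu_k=K(X_k,X_k)-X_k\log N$ from \eqref{muXEL}, i.e.\ the geometric meanings of $\chi_k$ and $\mu_k$ that will be used freely in later sections.
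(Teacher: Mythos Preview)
Your proposal is correct and follows the same core strategy as the paper: both rely on the splitting $\n = L_k + X_k$, metric compatibility of the frame, and the definition of $K$ to compute everything. The only genuine difference lies in how the $L_k$-direction coefficients \eqref{baretaXEL} and \eqref{muXEL} are obtained. The paper appeals to the coordinate-commutator identities \eqref{LE-EL-torsionnal} and \eqref{LX-XL-torsionnal} (derived earlier in the $(u_k,\theta_k,t_k)$ system) to read off $g([E_k,L_k],X_k)=E_k\log N$ and $g([L_k,X_k],L_k)=L_k\log\mu_k$, and then combines these with \eqref{zetaXEL} and \eqref{baromegaXEL}. You instead compute the acceleration $\nabla_{\n}\n=(E_k\log N)E_k+(X_k\log N)X_k$ from \eqref{D0e0} and combine it with the geodesic equation $\nabla_{L_k}L_k=(L_k\log\mu_k)L_k$ coming from \eqref{eikonal2}. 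Your route is slightly more intrinsic in that it never invokes the auxiliary $(u_k,\theta_k,t_k)$ coordinates, and it also gives a cleaner reason for the one ``nontrivial'' vanishing $g(\nabla_{L_k}E_k,L_k)=0$: this drops out immediately from $\nabla_{L_k}L_k\parallel L_k$ and $g(E_k,L_k)=0$, rather than from \eqref{LE-EL-torsionnal} as the paper argues. The paper's route, on the other hand, simply reuses formulas already established in the preceding subsection. The two arguments are of equal length and difficulty.
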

 	
 	\begin{proof}
 		
 		Recall from \eqref{nXEL} that $L_k = \n - X_k$, an identity we will use throughout the proof.
 		In particular \eqref{barchiXEL}	follows immediately from this identity, \eqref{chi} and the definition of $K$; so does \eqref{zetaXEL}, after noticing that $g(\nabla_{E_k}X_k,X_k)=\frac{1}{2} \cdot E_k( g(X_k,X_k))=0$ as $X_k$ is $g$-unitary.
 		
 		\eqref{baretaXEL} follows from \eqref{zetaXEL} and the observation that, by \eqref{LE-EL-torsionnal},  $$ g(\left[E_k,L_k\right],X_k)= E_k \log(N).$$
 		
 		\eqref{baromegaXEL} follows from $L_k = \n - X_k$ and the fact that $X_k$ is $g$-unit; so does \eqref{barnuXEL}, using also the definition \eqref{eta}. 
 		
 		For \eqref{muXEL}, first note using \eqref{LX-XL-torsionnal} and the fact that $L_k$ is null, we have $g( \nabla_{L_k}X_k,L_k) = g([L_k, X_k], L_k)= L_k \log(\mu_k)$. Then, by \eqref{XL}, \eqref{XELframecondition} and \eqref{LX-XL-torsionnal},
 			$$K(X_k,X_k) = g(\nab_{X_k}L_k, X_k) =  g([X_k, L_k], X_k) = L_k \log(\mu_k) +X_k \log(N),$$
 			which implies the last equation in \eqref{muXEL} after rearranging.
 		
 		The fact that all the other Ricci coefficients vanish is mostly trivial, except for $g(\nabla_{L_k} E_K, L_k) = 0$, which holds by \eqref{LE-EL-torsionnal}.
 		
 		Finally, the covariant derivatives and commutators follow straightforwardly from the Ricci coefficients and the frame conditions \eqref{XELframecondition}. Details are left to the reader. \qedhere

 	\end{proof}

 	\subsection{Derivatives of the components of the $XEL$ vector fields in the elliptic gauge}\label{sec:XEL.derivative}
 	
 	The goal of this section is to compute $\partial (Y_k^{\alpha})$ for $Y_k \in \{ X_k, E_k, L_k\}$ and $Y_k=Y_k^{\alpha} \partial_{\alpha}$ in the coordinate system $(t,x^1,x^2)$ of section \ref{ellipticgaugedef}.
 	
 	\begin{prop} \label{dXELellipticprop}
	The derivatives of $E^i_k$ with respect to $(L_k, X_k, E_k)$ can be expressed as follows:
 		\begin{subequations} 
		\begin{align}
		\notag
 		L_k(E^i_k) = &\: -(E_k \log(N)-K(E_k,X_k)) \cdot  X^i_k-\frac{ 1 }{2 N}\cdot \left( E_k^i \partial_q \beta^q + E_k^j\partial_j \beta^i   - \delta_{j q}  E_k^j        \cdot \delta^{i l} \partial_l \beta^q                   \right) \\
		&\: - \left( X^i_k  E^j_k  \partial_j \gamma + E^i_k  X^j_k \partial_j \gamma \right), \label{LEi}\\
 		\label{XEi}
 		X_k(E^i_k) = &\: - (X^i_k \cdot E_k \gamma  + E^i_k \cdot X_k \gamma    ) + (\eta_k -K(E_k,X_k)) \cdot X^i_k, \\
 		\label{EEi}
 		E_k(E^i_k) = &\: -2 E^i_k \cdot E_k \gamma  + e^{-2\gamma}\cdot \de^{il} \cdot \rd_l\gamma + (-K(E_k,E_k)+\chi_k) \cdot X_k^i.
		\end{align}
 		\end{subequations}
	The derivatives of $X^i_k$ with respect to $Y \in \{L_k, X_k, E_k\}$ can be expressed as follows: 
		\begin{align}
		\label{YXj}
 		Y (X^1_k) = Y (E^2_k),\quad Y (X^2_k) = -Y (E^1_k).
 		\end{align}
 	The derivatives of $L^t_k$ with respect to $Y \in \{L_k, X_k, E_k\}$ can be expressed as follows:
		\begin{equation} \label{YLt}
 		Y(L^{t}_k) = -\frac{Y \log(N)}{N}.
 		\end{equation}
	The derivatives of $L^i_k$ with respect to $Y \in \{L_k, X_k, E_k\}$ can be expressed as follows:
 		\begin{equation} \label{YLi}
 		Y(L^i_k)= -Y (X^i_k) - Y (\f{\bt^i}{N}).
 		\end{equation}

 	\end{prop}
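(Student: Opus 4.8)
The plan is to establish the four groups of identities in turn: the first three are immediate from relations already recorded, and the identities for $E^i_k$ follow from a short computation combining the covariant derivatives in Lemma~\ref{riccibarXEL} with the Christoffel symbols in Lemma~\ref{Christoffel}.

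First, for \eqref{YXj}, \eqref{YLt} and \eqref{YLi}: by \eqref{eq:silly.tangential} we have $L_k t = N^{-1}$ and $X_k t = E_k t = 0$, i.e.\ $L^t_k = N^{-1}$ and $X^t_k = E^t_k = 0$. Applying $Y$ to $L^t_k = N^{-1}$ gives \eqref{YLt}. Combining $\n = L_k + X_k$ from \eqref{nXEL} with the explicit form $\n = N^{-1}(\partial_t - \beta^i \partial_i)$ from \eqref{defnormal} and $X^t_k = 0$, one reads off $L^i_k = -\beta^i N^{-1} - X^i_k$; applying $Y$ then yields \eqref{YLi}. Finally, applying $Y$ to the relations $E^1_k = -X^2_k$, $E^2_k = X^1_k$ of \eqref{EXinellipticcoord} gives \eqref{YXj}.

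It remains to establish \eqref{LEi}, \eqref{XEi} and \eqref{EEi}. Since $E_k = E^j_k \partial_j$ is tangent to $\Sigma_t$ and has only spatial components, for any vector field $Y$ one has
$$ (Y E^i_k)\, \partial_i = \nab_Y E_k - E^j_k\, \nab_Y \partial_j . $$
The equations \eqref{EE2}, \eqref{XE}, \eqref{LE} of Lemma~\ref{riccibarXEL} express $\nab_{E_k} E_k$, $\nab_{X_k} E_k$ and $\nab_{L_k} E_k$ in the $(X_k, E_k, L_k)$ frame. For $Y \in \{X_k, E_k\}$, which are spatial, $\nab_Y \partial_j = Y^l \nab_{\partial_l} \partial_j$ with $\nab_{\partial_l} \partial_j$ given by \eqref{Diej}; for $Y = L_k$ I would first write $L_k = N^{-1} e_0 - X_k$ (using $\n = L_k + X_k$ together with $e_0 = N\n$ from \eqref{def:e0}), so that $\nab_{L_k} \partial_j = N^{-1} \nab_{e_0} \partial_j - X^l_k \nab_{\partial_l} \partial_j$ with $\nab_{e_0} \partial_j$ given by \eqref{D0ei}. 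Inserting these into the displayed identity and expanding everything in the frame $\{\n, \partial_1, \partial_2\}$, one simplifies using two facts: the formula \eqref{maximality3} for $K$ shows that the $\n$-directed part of $E^j_k Y^l \nab_{\partial_l}\partial_j$ is exactly $K(Y, E_k)\, \n$, which combines (via $L_k = \n - X_k$) with the $L_k$- and $X_k$-terms produced by $\nab_Y E_k$, so that the $\n$-component of the right-hand side vanishes, consistently with $E^t_k = 0$; and the orthonormality relations $\delta_{ij} X^i_k X^j_k = \delta_{ij} E^i_k E^j_k = e^{-2\gamma}$, $\delta_{ij} X^i_k E^j_k = 0$ (consequences of \eqref{XELframecondition} and \eqref{gauge}) collapse the $\partial\gamma$-contractions coming from the spatial part of \eqref{Diej}. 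Reading off the coefficient of $\partial_i$ then gives \eqref{LEi}, \eqref{XEi} and \eqref{EEi}.

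The computation is conceptually routine; the main thing to be careful about is the bookkeeping of the splitting into the normal direction ($\n$, equivalently $e_0$ or $\partial_t$) and the spatial directions, in particular tracking the shift-vector contributions when $\n$ or $e_0$ is expanded in the coordinate basis, and recognizing the $K$-contractions from \eqref{maximality3} so that the normal component cancels as it must. No step requires more than elementary tensor algebra.
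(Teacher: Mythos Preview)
Your proposal is correct and follows essentially the same approach as the paper. The only cosmetic differences are that the paper treats the $E^i_k$ identities first and the easy ones \eqref{YXj}, \eqref{YLt}, \eqref{YLi} last, and that the paper extracts the spatial components by pairing with $g(\cdot,\partial_i)$ (which kills the $\n$-part automatically since $g(\n,\partial_i)=0$) rather than expanding in $\{\n,\partial_1,\partial_2\}$ and checking the normal component cancels; the ingredients invoked---\eqref{LE}, \eqref{XE}, \eqref{EE2}, \eqref{Diej}, \eqref{D0ei} (equivalently \eqref{eje0ei}), the decomposition $L_k=N^{-1}e_0-X_k$, and the orthonormality relations from \eqref{XELframecondition} and \eqref{gauge}---are the same.
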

 	\begin{proof}

 		\pfstep{Step~1: Proof of \eqref{LEi}} We start with the elementary  
		\begin{equation}\label{eq:nabLE.easy.Liebniz}
		\nabla_{L_k} E_k = \nabla_{L_k} ( E^j_k \partial_j ) = L_k(E^j_k)\partial_j + E_k^j \nabla_{L_k} ( \partial_j).
		\end{equation}
		Writing $L_k = \f 1N e_0 - X_k^l\rd_l$ (by \eqref{def:e0} and \eqref{nXEL}). Hence, using \eqref{eje0ei},  \eqref{Diej}, we have 
 		\begin{equation} \label{nablaLij}
 		g( \nabla_{L_k} (\partial_j), \partial_i ) =  \frac{e^{2\gamma} \cdot   }{2 N}\cdot \left( \partial_q \beta^q  \cdot \delta_{i j}  + \partial_j \beta^q \cdot \delta_{ q i }  -\partial_i \beta^q  \cdot \delta_{j q}                          \right) -  e^{2\gamma} \cdot X_k^l \cdot  \left(\delta_{l i} \partial_j \gamma - \delta_{l j} \partial_i \gamma + \delta_{i j} \partial_l \gamma  \right) .
 		\end{equation} 
 		Hence, combining \eqref{eq:nabLE.easy.Liebniz} and \eqref{nablaLij}, and using \eqref{gauge}, we obtain
		\begin{equation}
		\begin{split}
		g(\nabla_{L_k} E_k , \partial_i) = &\: e^{2\gamma} \cdot \delta_{i l} \cdot L_k (E^l) + \frac{e^{2\gamma} \cdot   E_k^j }{2 N}\cdot \left( \partial_q \beta^q  \cdot \delta_{i j} + \partial_j \beta^q \cdot \delta_{ q i }  -\partial_i \beta^q  \cdot \delta_{j q}                          \right) \\
		&\: -  e^{2\gamma} \cdot X^l_k E_k^j  \cdot  \left(\delta_{l i} \partial_j \gamma - \delta_{l j} \partial_i \gamma + \delta_{i j} \partial_l \gamma  \right).
		\end{split}
		\end{equation}
 		
 		Now, by \eqref{LE} (and \eqref{def:e0} and \eqref{nXEL}), we also know that $g(\nabla_{L_k} E_k , \partial_i) =  - e^{2\gamma} \cdot (E_k \log(N)-K(E_k,X_k)) \cdot  \delta_{i l} X^l_k$. Hence, we get
 		\begin{equation*}
		\begin{split}
		L_k(E^i_k) = &\: -(E_k \log(N)-K(E_k,X_k)) \cdot  X^i_k-\frac{ 1 }{2 N}\cdot \left( E_k^i \partial_q \beta^q  + E_k^j\partial_j \beta^i   - \delta_{j q}  E_k^j        \cdot \delta^{i l} \partial_l \beta^q                   \right) \\
		&\: - \left( X^i_k  E^j_k  \partial_j \gamma + E^i_k  X^j_k \partial_j \gamma \right),
		\end{split}
		\end{equation*}
 		using the fact that $\delta_{l j} X^l_k  E^j_k  =0 $ (by \eqref{XELframecondition}). This gives \eqref{LEi}.
 		 		
 		\pfstep{Step~2: Proof of \eqref{XEi} and \eqref{EEi}} We use \eqref{Diej} and \eqref{gauge} to deduce that for $Z_k \in \{ X_k,E_k\}$: 		\begin{equation} \label{nablaZij}
 		g( \nabla_{Z_k} (\partial_j), \partial_i ) =  e^{2\gamma} \cdot Z_k^l \cdot  \left(\delta_{l i} \partial_j \gamma +\delta_{i j} \partial_l \gamma - \delta_{l j} \partial_i \gamma  \right).
 		\end{equation} 
		
		We now combine \eqref{nablaZij} with $X_k (E^j_k) \rd_j = \nab_{X_k} E_k - E_k^j \nab_{X_k}\rd_j$ and \eqref{XE}, and use additionally \eqref{gauge}, \eqref{def:e0} and \eqref{nXEL}, to obtain
		\begin{equation*}
		\begin{split}
		&\:(\eta_k  - K(E_k, X_k)) \cdot X_k^l \cdot e^{2\gamma} \cdot \de_{il} = g(\nab_{X_k} E_k, \rd_i) \\
		= &\: X_k(E_k^j)\cdot e^{2\gamma} \cdot \de_{ij} + E_k^j \cdot X_k^l \cdot e^{2\gamma} \cdot  \left(\delta_{l i} \partial_j \gamma +\delta_{i j} \partial_l \gamma - \delta_{l j} \partial_i \gamma  \right) \\
		= &\: X_k(E_k^j)\cdot e^{2\gamma} \cdot \de_{ij} +  e^{2\gamma} \cdot  \left(X_k^l \cdot \delta_{l i} \cdot E_k \gamma + E_k^j \cdot \delta_{i j} \cdot X_k \gamma  \right),
		\end{split}
		\end{equation*}
		where in the last line we used $\delta_{l j} X^l_k  E^j_k  =0 $ (by \eqref{XELframecondition}). We obtain \eqref{XEi} after rearranging.
		
		To obtain \eqref{EEi}, we argue similarly. Combining \eqref{nablaZij} with $E_k (E^j_k) \rd_j = \nab_{E_k} E_k - E_k^j \nab_{E_k}\rd_j$ and \eqref{EE2}, and using also \eqref{gauge}, \eqref{def:e0} and \eqref{nXEL}, we obtain
		\begin{equation*}
		\begin{split}
		&\:(\chi_k  - K(E_k, E_k)) \cdot X_k^l \cdot e^{2\gamma} \cdot \de_{il} = g(\nab_{E_k} E_k, \rd_i) \\
		= &\: E_k(E_k^j)\cdot e^{2\gamma} \cdot \de_{ij} + E_k^j \cdot E_k^l \cdot e^{2\gamma} \cdot  \left(\delta_{l i} \partial_j \gamma +\delta_{i j} \partial_l \gamma - \delta_{l j} \partial_i \gamma  \right) \\
		= &\: E_k(E_k^j)\cdot e^{2\gamma} \cdot \de_{ij} +  2 e^{2\gamma} \cdot E_k^l \cdot \delta_{l i} \cdot E_k \gamma - \rd_i\gamma,
		\end{split}
		\end{equation*}
		where in the last line we used $\de_{lj} E_k^l E_k^j = e^{-2\gamma}$ (by \eqref{gauge} and \eqref{XELframecondition}).

 		\pfstep{Step~3: Proof of \eqref{YXj}} This is an immediate consequence of \eqref{EXinellipticcoord}.
		
 		\pfstep{Step~4: Proof of \eqref{YLt} and \eqref{YLi}} Finally, we get \eqref{YLt} and \eqref{YLi} from the formulas 
		$$L^t_k= N^{-1},\quad L^i_k=-X^i_k - N^{-1} \cdot \beta ^i,$$ 
		which in turn follow from \eqref{nXEL} and \eqref{defnormal}. \qedhere
 	\end{proof}
 	
 	\subsection{Transport equations for the frame coefficients} \label{torsionnal}
 	We now derive transport equations for $\mu_k$ and $\varTheta_k$.
 	
 	\begin{lem} \label{torsionnallemma}
 		The frame coefficients $\mu_k$ and $\varTheta_k$ satisfy the following transport equations:
 		\begin{equation}\label{Lmu} 	
 		L_k  \log(\mu_k) =  K( X_k,X_k) -X_k \log(N),
 		\end{equation}
 		\begin{equation} \label{Lvartheta}
 		L_k( \log( \varTheta_k)) = \chi_k.
 		\end{equation}

 	\end{lem}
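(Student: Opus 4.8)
The plan is to obtain both transport equations as immediate corollaries of the identities already assembled in Lemma~\ref{riccibarXEL} together with the torsion identities of Section~\ref{relationXELgeocoordinatesection}; no new computation beyond bookkeeping is needed, which is why I expect there to be no genuine obstacle here — the only point requiring a word is the interpretation of $L_k\log\mu_k$ and $L_k\log\varTheta_k$ as \emph{transport} (ODE) quantities, which follows from $L_k = N^{-1}\rd_{t_k}$ in \eqref{thetaE}, so that each equation is an ODE in $t_k$ along the integral curves of $L_k$ (parametrized by fixed $(u_k,\th_k)$), with $\chi_k$, $K(X_k,X_k)$, $X_k\log N$ playing the role of source terms.

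For \eqref{Lmu} I would simply invoke \eqref{muXEL}: that line of Lemma~\ref{riccibarXEL} already records $L_k\log(\mu_k) = K(X_k,X_k) - X_k\log(N)$. If one prefers a self-contained derivation, I would start from $L_k = \mu_k\Lgeo$ and $\nabla_{\Lgeo}\Lgeo = 0$ (equation \eqref{eikonal2}) to compute $\nabla_{L_k}L_k = (L_k\log\mu_k)\,L_k$ (this is exactly \eqref{LL}), then pair with $X_k$ using \eqref{XELframecondition} to get $g(\nabla_{L_k}L_k,X_k) = -L_k\log\mu_k$, and compare with \eqref{muXEL} (equivalently, with the first equality in \eqref{muXEL}, $g(\nabla_{L_k}X_k,L_k) = -g(\nabla_{L_k}L_k,X_k) = K(X_k,X_k)-X_k\log N$).

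For \eqref{Lvartheta} the key is to compare the two available expressions for the commutator $[E_k,L_k]$. The torsion computation \eqref{LE-EL-torsionnal}, which came purely from the coordinate identities $L_k = N^{-1}\rd_{t_k}$, $E_k = \varTheta_k^{-1}\rd_{\th_k}$, gives
\[
[E_k,L_k] = L_k\log(\varTheta_k)\cdot E_k - E_k\log(N)\cdot L_k,
\]
while the Ricci-frame computation \eqref{EL-LE} gives
\[
[E_k,L_k] = \chi_k\cdot E_k - E_k\log(N)\cdot L_k.
\]
Since $(X_k,E_k,L_k)$ is a frame, matching the $E_k$-components of these two expressions yields $L_k\log(\varTheta_k) = \chi_k$, which is \eqref{Lvartheta}. (Equivalently, one could expand $[E_k,L_k] = [\varTheta_k^{-1}\rd_{\th_k},\,N^{-1}\rd_{t_k}]$ directly, read off the $\rd_{\th_k}$-component as $L_k(\varTheta_k^{-1})$, rewrite it as $-\varTheta_k^{-1}L_k\log\varTheta_k$, and identify with the $E_k$-component of \eqref{EL-LE}.) I would present the comparison-of-commutators argument as the main line since it makes transparent that \eqref{Lvartheta} is the ``Raychaudhuri-type'' companion to the definition of $\chi_k$ in \eqref{chi}. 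The only care needed is to confirm the sign conventions line up between \eqref{LE-EL-torsionnal} and \eqref{EL-LE}, which I expect to be routine.
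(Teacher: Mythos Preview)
Your proposal is correct and follows exactly the paper's approach: \eqref{Lmu} is read off directly from \eqref{muXEL}, and \eqref{Lvartheta} is obtained by matching the $E_k$-components of the two expressions \eqref{LE-EL-torsionnal} and \eqref{EL-LE} for $[E_k,L_k]$. The additional self-contained derivation you sketch for \eqref{Lmu} via $\nabla_{L_k}L_k=(L_k\log\mu_k)L_k$ is also fine but not needed.
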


 	\begin{proof}  \eqref{Lmu} has already been proven in \eqref{muXEL}. To obtain \eqref{Lvartheta}, it suffices to compare the expressions in \eqref{EL-LE} with \eqref{LE-EL-torsionnal}. \qedhere 
 	\end{proof}

 	\subsection{Null structure equations for the Ricci coefficients} \label{Lderivativericcisection}

 	We now derive transport equations for $\chi_k$ and $\eta_k$. These equations will involve the Ricci curvature, which can then be expressed in terms of derivatives of the scalar field using the Einstein equations \eqref{Einsteinricci}.

 	\begin{lem} \label{Lricci}
 		Given $(I \times \mathbb R^2, g)$ in the gauge of Definition~\ref{def:gauge} and solving the Einstein equations \eqref{Einsteinricci}, it holds that for $k=1,2,3$,
 		\begin{equation} \label{Leta}
 		L_k \eta_k = -2 L_k \phi \cdot E_k \phi-\chi_k  \cdot ( K(E_k,X_k) - E_k\log N +  \eta_k),
 		\end{equation}
 		\begin{equation} \label{Lchi}
 		L_k\chi_k = -2 (L_k \phi )^2 -\chi^2_k +( K( X_k,X_k) -X_k \log(N)) \cdot \chi_k.
 		\end{equation}
 		
 	\end{lem}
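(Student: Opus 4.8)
The plan is to differentiate the defining relations $\chi_k = g(\nabla_{E_k} L_k, E_k)$ and $\eta_k = g(\nabla_{X_k} L_k, E_k)$ along $L_k$, using metric compatibility of $\nabla$ and then commuting the second covariant derivatives at the cost of a curvature term, which in $(2+1)$ dimensions is expressible through the Ricci tensor, hence through $\phi$ via the Einstein equation \eqref{Einsteinricci}.

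Concretely, for $A \in \{E_k, X_k\}$ I would start from
\[
L_k\big(g(\nabla_A L_k, E_k)\big) = g(\nabla_{L_k}\nabla_A L_k, E_k) + g(\nabla_A L_k, \nabla_{L_k} E_k),
\]
and commute $\nabla_{L_k}\nabla_A L_k = \nabla_A \nabla_{L_k} L_k + \nabla_{[L_k, A]} L_k + R(L_k, A) L_k$. Then one substitutes the already-computed formulas from Lemma~\ref{riccibarXEL}: $\nabla_{L_k} L_k$ from \eqref{LL}, $\nabla_{L_k} E_k$ from \eqref{LE}, $\nabla_{E_k} L_k$ from \eqref{EL} (resp.\ $\nabla_{X_k} L_k$ from \eqref{XL}), and the commutators $[L_k, E_k]$ from \eqref{EL-LE} (resp.\ $[L_k, X_k]$ from \eqref{LX-XL}). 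Contracting everything against $E_k$ and using the frame relations \eqref{XELframecondition} — in particular $g(L_k, L_k) = g(L_k, E_k) = 0$ and $g(X_k, L_k) = -1$ — collapses almost all terms: the $\nabla_{L_k} E_k$ pairing vanishes entirely since $\nabla_{L_k} E_k$ is parallel to $L_k$, and in the $\eta_k$ computation the two $(K(X_k,X_k) - X_k\log N)\eta_k$ contributions (one from $\nabla_{X_k} \nabla_{L_k} L_k$, one from $\nabla_{[L_k, X_k]} L_k$) cancel. What survives of the non-curvature part is exactly $-\chi_k^2 + (K(X_k,X_k) - X_k\log N)\chi_k$ in the first equation and $-\chi_k\big(K(E_k,X_k) - E_k\log N + \eta_k\big)$ in the second.

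The substantive step is the curvature term. Since $(I\times\mathbb R^2, g)$ is $(2+1)$-dimensional, the Weyl tensor vanishes and the Riemann tensor is a Kulkarni--Nomizu product of $g$ with $Ric$ plus a scalar-curvature term $S\,(g\owedge g)$. Contracting this decomposition against $(L_k, E_k, L_k, E_k)$ and against $(L_k, X_k, L_k, E_k)$ and using \eqref{XELframecondition}, all the pure-metric (scalar curvature) pieces drop out because of the null and orthogonality relations, leaving $g(R(L_k,E_k)L_k, E_k) = -Ric(L_k,L_k)$ and $g(R(L_k,X_k)L_k, E_k) = -Ric(L_k,E_k)$ (with the paper's curvature sign convention). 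Finally one inserts \eqref{Einsteinricci}, which gives $Ric(L_k,L_k) = 2(L_k\phi)^2$ and $Ric(L_k,E_k) = 2(L_k\phi)(E_k\phi)$, and collects terms to obtain \eqref{Leta} and \eqref{Lchi}. I expect the main difficulty to be purely organizational — many connection-coefficient terms must be tracked and the cancellations verified — together with fixing the sign in the $3$d curvature decomposition correctly; the underlying algebra is routine given Lemma~\ref{riccibarXEL}.
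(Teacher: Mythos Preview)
Your proposal is correct and follows essentially the same route as the paper: differentiate the defining inner products, commute $\nabla_{L_k}\nabla_A$ via the curvature identity, evaluate the connection-coefficient pieces using Lemma~\ref{riccibarXEL}, and insert \eqref{Einsteinricci}. The only genuine difference is how you pass from the Riemann component to the Ricci component. You invoke the vanishing of the Weyl tensor in $(2{+}1)$ dimensions and the Kulkarni--Nomizu decomposition; the paper instead just computes the Ricci trace directly against the explicit inverse metric \eqref{inversegXEL}, i.e.\ writes $Ric(L_k,E_k) = -R(L_k,L_k,L_k,E_k) - R(L_k,X_k,L_k,E_k) - R(L_k,L_k,X_k,E_k) + R(L_k,E_k,E_k,E_k)$ and kills three of the four terms by the symmetries of $R$. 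The paper's method is slightly more elementary (no Schouten tensor or Kulkarni--Nomizu formula needed, and it makes the sign conventions self-evident), while your method is more conceptual and would generalize if one ever wanted the other Riemann components; either way the computation is the same length and there is no gap.
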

 	
 	\begin{proof}\pfstep{Step~1: Proof of \eqref{Leta}} By \eqref{eta},
 		$$ L_k \eta_k =   g( \nabla_{X_k}L_k,\nabla_{L_k} E_k)+  g( \nabla_{L_k}(\nabla_{X_k}L_k),E_k)=g( \nabla_{L_k}(\nabla_{X_k}L_k),E_k),$$
 		where for the second equality we used $g( \nabla_{X_k}L_k,\nabla_{L_k} E_k)=0$ coming directly from \eqref{XL} and \eqref{LE}. 
 		
 		Then, by the definition of the Riemann curvature tensor, we obtain 
 		\begin{equation}\label{eq:eta.eqn.Riemann}
		g( \nabla_{L_k}(\nabla_{X_k}L_k),E_k) =g( \nabla_{X_k}(\nabla_{L_k}L_k),E_k)+ R(L_k,\barL,L_k,E_k)+ g( \nabla_{\left[L_k, \barL \right]} L_k,E_k). 
		\end{equation}
 		
 		Using \eqref{LL} and $g(E_k,L_k)=0$, we rewrite the first term in \eqref{eq:eta.eqn.Riemann} as 
		$$ g( \nabla_{X_k}(\nabla_{L_k}L_k),E_k)=   (K( X_k,X_k) -X_k \log(N)) \cdot g(\nabla_{X_k} L_k,E_k)= (K( X_k,X_k) -X_k \log(N)) \cdot  \eta_k.$$
 		For the second term in \eqref{eq:eta.eqn.Riemann}, we use \eqref{inversegXEL} to deduce that
 		\begin{equation*}
 		\begin{split}
 		-R(L_k,\barL,L_k,E_k) = &\: -\overbrace{R(L_k,L_k,L_k,E_k)}^{=0} -R(L_k,\barL,L_k,E_k) -  \overbrace{R(L_k,L_k,\barL,E_k)}^{=0} +\overbrace{R(L_k,E_k,E_k,E_k)}^{=0} \\
 		=&\: Ric(L_k,E_k).
 		\end{split}
 		\end{equation*}
 		The third term in \eqref{eq:eta.eqn.Riemann} can be computed using \eqref{LX-XL} and also the definition \eqref{chi} and \eqref{eta} as 
 		$$ g(  \nabla_{ \left[ L_k , \barL\right]} L_k, E_k) = -\chi_k \cdot ( K(E_k,X_k)  - E_k\log N  + \eta_k)+ \eta_k \cdot ( X_k \log(N)-K(X_k,X_k)).$$
 		
 		Combining the three terms and using \eqref{Einsteinricci} give \eqref{Leta}.

 		\pfstep{Step~2: Proof of \eqref{Lchi}} A similar computation as in Step~1, but using \eqref{chi} instead, gives
 		\begin{equation}\label{eq:chi.eqn.Riemann}
 		L_k \chi_k = g( \nabla_{E_k}( \nabla_{L_k}L_k),E_k)+R(L_k,E_k,L_k,E_k)  + g(  \nabla_{ \left[ L_k , E_k\right]} L_k, E_k) . 
 		\end{equation}
		
 		The first term in \eqref{eq:chi.eqn.Riemann} can be written using \eqref{LL} and \eqref{LE} as 
		$$ g( \nabla_{E_k}( \nabla_{L_k}L_k),E_k)=( K( X_k,X_k) -X_k \log(N)) \cdot \chi_k,$$
 		and for the second term in \eqref{eq:chi.eqn.Riemann}, we get from \eqref{inversegXEL} that
 		\begin{equation*}
 		\begin{split}
 		R(L_k,E_k,L_k,E_k)= &\: -R(L_k,E_k,E_k,L_k)  -\overbrace{R(L_k,\barL,L_k,L_k)}^{=0}- \overbrace{R(L_k,L_k,\barL,L_k)}^{=0}-\overbrace{R(L_k,L_k,L_k,L_k)}^{=0}  \\
 		= &\: -Ric(L_k,L_k) .
 		\end{split}
 		\end{equation*}
 		
 		Since $g(  \nabla_{ \left[ L_k , E_k\right]} L_k, E_k) = -\chi_k^2$ by \eqref{EL-LE}, we get \eqref{Lchi} using \eqref{Einsteinricci}. \qedhere

 	\end{proof}

 	\subsection{Initial values of eikonal quantities on $\protect\Sigma_0$}\label{sec:eikonal.initial}

 	\begin{lem} \label{riccisigma0expression}
 		Let $(u_k)_{|\Sigma_0}= a_k + c_{k j } x^j$ and $(\theta_k)_{|\Sigma_0}= b_k+ c_{k j}^{\perp} x^j $, where $a_k$, $b_k$ and $c_{ki}$ are as in \eqref{eikonalinit} and \eqref{thetainit}, $c_{ki}$ satisfies \eqref{cnormalization}--\eqref{cangle2}, and $c_{k 1}^{\perp} = -c_{k 2}$ and $c_{k 2}^{\perp} = c_{k 1}$. 
		
		Then the following identities hold on $\Sigma_0$:  	
		\begin{equation} \label{Xi(0)}
 		(\Xi_k)_{|\Sigma_0} =0,
 		\end{equation}	
		\begin{equation} \label{mu(0)formula}
 		(\mu_k)_{|\Sigma_0}=  e^{\gamma},
 		\end{equation} 	\begin{equation} \label{initialvarTheta}
 	(\varTheta_k)_{|\Sigma_0}=  e^{\gamma},
 		\end{equation} 	 	 \begin{equation} \label{X^i(0)formula}
 		(X_k^i)_{|\Sigma_0}= e^{-\gamma} \cdot  \delta^{i q} \cdot c_{k q},
 		\end{equation}		\begin{equation} \label{initialEj}
 		(E^i_k)_{|\Sigma_0} = e^{-\gamma} \cdot   \delta^{i q} \cdot  c_{k q}^{\perp}.
 		\end{equation}	 	 	
 		
 		The two Ricci coefficients $\chi_k$ and $\eta_k$ are given initially by:	
 		\begin{equation} \label{chisigma0}
 		(\chi_k)_{|\Sigma_0}= e^{-2\gamma} \cdot \de^{ii'} \de^{jj'} K_{i' j'} c_{k i}^{\perp} c_{k j}^{\perp}- X_k \gamma=e^{-2\gamma} \cdot \de^{ii'} \de^{jj'} K_{i' j'} c_{k i}^{\perp} c_{k j}^{\perp}- e^{-\gamma} c_{k q} \delta^{i q}  \partial_i \gamma,
 		\end{equation}	\begin{equation} \label{etasigma0}
 		(\eta_k)_{|\Sigma_0}=e^{-2\gamma} \cdot \de^{ii'} \de^{jj'} K_{i' j'} c_{k i} c_{k j}^{\perp}+E_k \gamma= e^{-2\gamma} \cdot \de^{ii'} \de^{jj'} K_{i' j'} c_{k i} c_{k j}^{\perp}+e^{-\gamma}\cdot  c_{k q}^{\perp}  \cdot   \delta^{j q}\partial_j \gamma.
 		\end{equation}
 	\end{lem}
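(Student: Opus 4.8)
The strategy is purely algebraic: all the listed quantities are read off on $\Sigma_0$ from the linear profiles $(u_k)_{|\Sigma_0}=a_k+c_{kj}x^j$ and $(\theta_k)_{|\Sigma_0}=b_k+c_{kj}^\perp x^j$, using the transformation identities of Section~\ref{coordinatetransform} and, for $\chi_k$ and $\eta_k$, the differentiation formulas of Proposition~\ref{dXELellipticprop}. First I would handle the frame coefficients: since $\partial_i u_k=c_{ki}$ on $\Sigma_0$, \eqref{ucartderivative} gives $X_k^j=e^{-2\gamma}\mu_k\,\delta^{ji}c_{ki}$ there; substituting into $\bar{g}(X_k,X_k)=e^{2\gamma}\delta_{ij}X_k^iX_k^j=1$ (cf.~\eqref{XELframecondition} and \eqref{gauge}) and using \eqref{cnormalization} forces $\mu_k^2=e^{2\gamma}$, and since $\mu_k>0$ (a consequence of $e_0 u_k>0$, cf.~Definition~\ref{def:eikonal} and \eqref{Ldefinition}) this yields \eqref{mu(0)formula} and hence \eqref{X^i(0)formula}. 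Identity \eqref{initialEj} then follows at once from \eqref{EXinellipticcoord} together with $c_{k1}^\perp=-c_{k2}$, $c_{k2}^\perp=c_{k1}$. Next, since $\partial_i\theta_k=c_{ki}^\perp$ on $\Sigma_0$, substituting the expressions just found for $X_k^j$ and $E_k^j$ into \eqref{thetacartderivative} gives $c_{ki}^\perp=e^{\gamma}\varTheta_k^{-1}c_{ki}^\perp+e^{\gamma}\Xi_k c_{ki}$; contracting with $\delta^{iq}c_{kq}^\perp$ and using $\sum_i(c_{ki}^\perp)^2=1$, $\sum_i c_{ki}c_{ki}^\perp=0$ yields $\varTheta_k^{-1}=e^{-\gamma}$, i.e.~\eqref{initialvarTheta}, while contracting with $\delta^{iq}c_{kq}$ yields $\Xi_k=0$, i.e.~\eqref{Xi(0)}.

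For the two Ricci coefficients I would use the identities \eqref{EEi} and \eqref{XEi}. Since $X_k t=E_k t=0$, both $X_k$ and $E_k$ are tangent to $\Sigma_0$, so their action on the $\Sigma_0$-values $E_k^i=e^{-\gamma}\delta^{iq}c_{kq}^\perp$ only produces a factor from differentiating the exponential: $E_k(E_k^i)=-(E_k\gamma)E_k^i$ and $X_k(E_k^i)=-(X_k\gamma)E_k^i$ on $\Sigma_0$. Inserting these into \eqref{EEi} and \eqref{XEi}, the terms proportional to $E_k^i$ cancel and one is left with $(\chi_k-K(E_k,E_k))X_k^i=(E_k\gamma)E_k^i-e^{-2\gamma}\delta^{il}\partial_l\gamma$ and $(\eta_k-K(E_k,X_k))X_k^i=(E_k\gamma)X_k^i$. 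Contracting each relation with $\delta_{im}X_k^m$ and using $\delta_{ij}X_k^iX_k^j=e^{-2\gamma}$ and $\delta_{ij}E_k^iX_k^j=0$ (both from \eqref{XELframecondition} and \eqref{gauge}) isolates $\chi_k=K(E_k,E_k)-X_k\gamma$ and $\eta_k=K(E_k,X_k)+E_k\gamma$ on $\Sigma_0$. Expanding $K(E_k,E_k)$, $K(E_k,X_k)$ (using the symmetry of $K$), $X_k\gamma$ and $E_k\gamma$ via \eqref{X^i(0)formula}--\eqref{initialEj} then produces \eqref{chisigma0} and \eqref{etasigma0}.

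There is no essential difficulty here; the computation is elementary. The only points requiring attention are bookkeeping ones: that $\mu_k>0$, so that the square root in $\mu_k^2=e^{2\gamma}$ is taken with the correct sign --- whereas $\varTheta_k$ and $\Xi_k$ are obtained from a \emph{linear} system and carry no sign ambiguity; that $E_k^i$ is pinned down by the orientation convention encoded in \eqref{EXinellipticcoord} rather than being an independent choice; and that the conformal factor $e^{2\gamma}$ and the various $\delta$-contractions are tracked carefully when isolating $\chi_k$ and $\eta_k$.
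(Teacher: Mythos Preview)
Your argument is correct. The computations for the frame coefficients and for $\chi_k$, $\eta_k$ all check out; in particular, your differentiation of $E_k^i|_{\Sigma_0}$ along the $\Sigma_0$-tangent vectors $E_k$ and $X_k$ is legitimate, and the contractions with $\delta_{im}X_k^m$ correctly isolate $\chi_k$ and $\eta_k$.

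Your route differs from the paper's in two places. For $\mu_k$, the paper expands the eikonal equation \eqref{eikonal1} as a quadratic in $\partial_t u_k$, solves it, and reads off $\mu_k^{-1}=e^{-\gamma}$ by comparison with \eqref{dtui}; you instead use the normalization $g(X_k,X_k)=1$, which is somewhat cleaner. For $\chi_k$ and $\eta_k$, the paper invokes \eqref{barchiXEL} and \eqref{barnuXEL} to write $\chi_k=K(E_k,E_k)-g(\nabla_{E_k}X_k,E_k)$ and $\eta_k=K(E_k,X_k)-g(\nabla_{X_k}X_k,E_k)$, then evaluates the covariant-derivative terms by a direct Christoffel computation via \eqref{Diej}. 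You instead use the packaged identities \eqref{EEi} and \eqref{XEi} from Proposition~\ref{dXELellipticprop} (which themselves were proved from those same Christoffel symbols), compute the left-hand sides independently on $\Sigma_0$, and solve for $\chi_k$, $\eta_k$. The two approaches are equivalent in substance; yours trades a bit of Christoffel bookkeeping for an appeal to the already-established Proposition~\ref{dXELellipticprop}, while the paper's approach is more self-contained.
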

 	\begin{proof}  
	\pfstep{Step~1: Proof of \eqref{Xi(0)}--\eqref{initialEj}} First, we notice that, for our choice of $(u_k)_{|\Sigma_0}$ and $(\theta_k)_{|\Sigma_0}$, $\partial_{u_k}$ and $\partial_{\theta_k}$ are $g$-orthogonal. Since $E_k$ is proportional to $\partial_{\theta_k}$ and $g$-orthogonal to $X_k$, it means that $X_k$ is proportional to $\partial_{u_k}$, hence \eqref{Xi(0)}.

 		We expand \eqref{eikonal1} to find the following equation: 
		$$ (g^{-1})^{0 0} ((\partial_t u_k)_{|\Sigma_0})^2 + 2 (g^{-1})^{0 i} (\partial_{i}u_k)_{|\Sigma_0} \cdot (\partial_t u_k)_{|\Sigma_0}+  (g^{-1})^{i j} (\partial_{i}u_k)_{|\Sigma_0} (\partial_{j}u_k)_{|\Sigma_0}=0.$$
 		Solving the quadratic equation in $\rd_t u_k$, and using also \eqref{inversegelliptic}, we obtain that on $\Sigma_0$,
 		$$ \partial_t u_k = \beta^i \partial_{i}u_k \pm N \cdot e^{-\gamma}.$$
 		Since $e_0 u_k = \rd_t u_k - \bt^i \rd_i u_k >0$ and $N>0$ (by Definitions~\ref{def:eikonal} and \ref{def:Sigmat}), we have $\rd_t u_k = \bt^i \rd_i u_k + N e^{-\gamma}$. Comparing this with \eqref{dtui}, we obtain $\partial_t u_k = \beta^i c_{k i} + N e^{-\gamma}= \beta^i c_{k i} + N \mu_k^{-1}$, which gives \eqref{mu(0)formula}.
 		
 		To prove \eqref{X^i(0)formula}, we combine $(\partial_{i}u_k)_{|\Sigma_0} = c_{k i}$ with \eqref{ucartderivative} and \eqref{mu(0)formula}. By \eqref{EXinellipticcoord}, this also gives \eqref{initialEj}. 
		
		Then we use \eqref{thetaE} and \eqref{initialEj}, \eqref{thetacartderivative} to get $$ E_k \theta_k = \varTheta_k^{-1}= e^{-\gamma} \cdot   \delta^{i j} \cdot  c_{k i}^{\perp} \cdot  c_{k j}^{\perp}=e^{-\gamma},$$ which is \eqref{initialvarTheta}.
 		
 		\pfstep{Step~2: Proof of \eqref{chisigma0}--\eqref{etasigma0}} By \eqref{barchiXEL}, \eqref{barnuXEL}, \eqref{X^i(0)formula} and \eqref{initialEj}, we find 
		\begin{equation}\label{eq:chi.eta.data.0}
		 (\chi_k)_{|\Sigma_0} =  e^{-2\gamma} \cdot \de^{ii'} \de^{jj'} K_{i' j'} c_{k i}^{\perp} c_{k j}^{\perp}-g(\nabla_{E_k} X_k, E_k), \quad \eta_k =  e^{-2\gamma} \cdot \de^{ii'} \de^{jj'} K_{i' j'} c_{k i} c_{k j}^{\perp}-g(\nabla_{X_k} X_k, E_k).
 		\end{equation}
 		
 		Since $g(E_k,X_k)=0$, we use \eqref{X^i(0)formula}--\eqref{initialEj} to obtain that for any vector field $Y$,
 		$$  g( \nabla_Y X_k, E_k)_{|\Sigma_0} = e^{-\gamma} \delta^{i q} c_{k q} Y^j E_k^l  \cdot g( \nabla_{\partial_j} \partial_i, \partial_l). $$
 		
 		By \eqref{Diej}, we find the following formula: 
		\begin{equation}\label{eq:YE.Gamma.ijl}
		 Y^j E_k^l  \cdot g( \nabla_{\partial_j} \partial_i, \partial_l)= e^{2\gamma}  \cdot Y^j E_k^l  \cdot ( \delta_{i l} \partial_j \gamma+ \delta_{j l} \partial_i \gamma-\delta_{i j}  \partial_l \gamma). 
		 \end{equation}
 		
 		Using \eqref{eq:YE.Gamma.ijl} for $Y= E_k$ and $Y=X_k$, and applying \eqref{X^i(0)formula} and \eqref{initialEj}, we then find that  
		$$ E^j_k E_k^l  \cdot g( \nabla_{\partial_j} \partial_i, \partial_l)=  \partial_i \gamma, \quad X^j_k E_k^l  \cdot g( \nabla_{\partial_j} \partial_i, \partial_l)= (  c_{k i}^{\perp} c_{k q}-  c_{k i} c_{k q}^{\perp} ) \cdot   \delta^{j q}\partial_j \gamma,$$
 		which ultimately implies, using the orthonormality of $c$: 
		\begin{equation}\label{eq:chi.eta.data.1}
		e^{-\gamma}\cdot  \delta^{i q} c_{k q} \cdot E^j_k E_k^l  \cdot g( \nabla_{\partial_j} \partial_i, \partial_l)=  e^{-\gamma}\cdot  c_{k q} \cdot   \delta^{j q}\partial_j \gamma,\quad 
 		e^{-\gamma} \delta^{i q} c_{k q} \cdot X^j_k E_k^l  \cdot g( \nabla_{\partial_j} \partial_i, \partial_l)= -  e^{-\gamma}\cdot  c_{k q}^{\perp}  \cdot   \delta^{j q}\partial_j \gamma.
		\end{equation}
		
 		Combining \eqref{eq:chi.eta.data.0} with \eqref{eq:chi.eta.data.1} gives \eqref{chisigma0} and \eqref{etasigma0}.  \qedhere

 	\end{proof}

	\section{Function spaces and norms}\label{sec:norms}
	
	This section is devoted to the definition of all the function spaces and norms that are used throughout the remainder of the paper.
	
	\subsection{Pointwise norms}
	
	\begin{defn}\label{def:pointwise.norm}
	Define the following pointwise norms in the coordinate system $(t,x^1,x^2)$ associated to the elliptic gauge (see \ref{ellipticgaugedef}):
	\begin{enumerate}
	\item Given a scalar function $f$, define
	$$|\rd_x f|^2 := \sum_{i=1}^2 (\partial_{i}f)^2, \quad 
	|\partial f|^2 :=  \sum_{\alp =0}^2 (\partial_{\alp} f)^2.$$
	\item Given a higher order tensor field, define its norm and the norms of its derivatives componentwise, e.g.
	$$|\bt|^2:= \sum_{i=1}^2 |\bt^i|,\quad |\rd_x \bt|^2:= \sum_{i,j =1}^2|\rd_i \bt^j|^2,\quad |K|^2:= \sum_{i,j=1}^2 |K_{ij}|^2, \quad |\rd K|^2:= \sum_{\alp=0}^2 \sum_{i,j=1}^2 |\rd_\alp K_{ij}|^2\quad etc.$$
	\item Higher derivatives are defined analogously, e.g.
	$$|\rd^2 f|^2 := \sum_{\alp,\sigma = 0}^2 (\rd_{\alp\sigma}^2 f)^2, \quad |\rd \rd_x K|^2:= \sum_{\substack{\alp =0,1,2 \\ i,j,l = 1,2}} |\rd_\alp\rd_i K_{jl}|^2, \quad etc.$$
	\end{enumerate}
 	\end{defn}
 	
	\subsection{Lebesgue and Sobolev spaces on $\Sigma_t$}
	
	Unless otherwise stated, all Lebesgue spaces are defined with respect to the measure $dx^1\, dx^2$ (which is in general different from the volume form induced by $\bar{g}$).
	
	Before we define the norms, we define the following weight function.
	\begin{defn}[Japanese brackets]\label{def:JapBra}
	Define $ \la x \ra := \sqrt{1+|x|^2}$ for $x\in \RR^2$ and $\la s \ra := \sqrt{1 + s^2}$ for $s\in \RR$.
	\end{defn}

	\begin{defn}[$C^k$ and H\"older norms]\label{def:Holder}
	For $k\in \mathbb N \cup \{0\}$ and $s \in (0,1)$, define $C^k(\Sigma_t)$ to be the space of continuously $k$-differentiable functions with respect to elliptic gauge coordinate vector fields $\rd_x$ with norm $\|f \|_{C^k(\Sigma_t)} := \sum_{|\alp|\leq k} \sup_{\Sigma_t} |\rd_x^\alp f|$, and define $C^{k,s}(\Sigma_t) \subseteq C^k(\Sigma_t)$ with H\"older norm defined with respect to the elliptic gauge coordinates as $\|f\|_{C^{k,s}(\Sigma_t)} := \|f \|_{C^k(\Sigma_t)} + \sup_{\substack{ x,y\in \Sigma \\ x\neq y}} \sum_{|\alp| = k} \f{|\rd_x^\alp f(x) - \rd_x^\alp f(y)|}{|x-y|^s}$.
	\end{defn}
	
	\begin{defn}[Standard Lebesgue and Sobolev norms]\label{def:Sobolev.norm}
	\begin{enumerate}
	\item For $k\in \mathbb N \cup \{0\}$ and $p \in [1,+\infty)$, define the (unweighted) Sobolev norms $$ \| f \|_{W^{k,p}(\Sigma_t)} := \sum_{|\alpha| \leq k } \left(\int_{\Sigma_t}  |\partial_x^{\alpha} f|^p(t,x^1,x^2)\, dx^1 dx^2 \right)^{\frac{1}{p}}.$$
	For $k\in \mathbb N \cup \{0\}$, define
	$$\| f \|_{W^{k,\infty}(\Sigma_t)} := \sum_{|\alp|\leq k} \esssup_{(x^1,x^2)\in \Sigma_t} |\partial_x^{\alpha} f|^p(t,x^1,x^2).$$
	\item Define $L^p(\Sigma_t) :=W^{0,p}(\Sigma_t)$ and $H^k(\Sigma_t):=W^{k,2}(\Sigma_t)$.
	\end{enumerate}
	\end{defn}
	
	\begin{defn}[Fractional Sobolev norms]\label{def:fractional.Sobolev.norm}
	For $s\in \mathbb R\setminus (\mathbb N \cup \{0\})$, define $H^{s}(\Sigma_t)$ by
	$$\|f\|_{H^{s}(\Sigma_t)} := \|\Db^s f\|_{L^2(\Sigma_t)}.$$
	where $\Db^s$ is defined via the (spatial) Fourier transform $\mathcal F$ (in the $x$ coordinates) by $\mathcal F(\Db^s f):= \la \xi\ra^s \mathcal F$.
	\end{defn}
	
	\begin{defn}[Weighted norms]\label{def:weighted.Sobolev.norm}
	\begin{enumerate}
	\item For $k\in \mathbb N \cup \{0\}$, $p \in [1,+\infty)$ and $r\in \RR$, define the weighted Sobolev norms by
	$$ \| f \|_{W^{k,p}_r(\Sigma_t)}= \sum_{|\alpha| \leq k } \left(\int_{\Sigma_t} \la x \ra^{p \cdot (r+|\alpha|)} |\partial_x^{\alpha} f|^p(t,x^1,x^2) \,dx^1 \,dx^2 \right)^{\f 1p},$$
	with obvious modifications for $p = \infty$. 
	\item Define also $L^p_r(\Sigma_t) := W^{0,p}_r(\Sigma_t)$ and $H^k_r(\Sigma_t) := W^{k,2}_r(\Sigma_t)$. Moreover, define $C^k_r(\Sigma_t)$ as the closure of Schwartz functions under the $L^\i_r(\Sigma_t)$ norm.
	\end{enumerate}
	\end{defn}
	
	\begin{defn}[Mixed norms]\label{def:mixed.Sobolev.norm}
	We will use mixed Sobolev norms, mostly in the $(u_k,\th_k,t_k)$ coordinates in spacetime or the $(u_k,u_{k'})$ coordinates on $\Sigma_t$. Our convention is that the norm on the right is taken first. For instance,
	$$ \| f \|_{L^2_{u_{k'}} L^{\infty}_{u_k}(\Sigma_t)}= (\int_{u_{k'} \in \RR} (\sup_{u_k \in \RR} f(t,u_k,u_{k'}))^2 d u_{k'})^{\frac{1}{2}},$$
	and analogously for other combinations.
	\end{defn}
	
	\begin{defn}[Norms for derivatives]
	We combine the notations in Definition~\ref{def:pointwise.norm} with those in Definitions~\ref{def:Sobolev.norm}--\ref{def:mixed.Sobolev.norm}. For instance, given a scalar function $f$,
	$$\|\rd f\|_{L^2(\Sigma_t)} := (\int_{\Sigma_t} \sum_{\alp = 0}^2 |\rd_\alp f|^2 \, dx^1\, dx^2)^{\f 12},$$
	and similarly for $\|\rd_x f \|_{L^2(\Sigma_t)}$, $\|\rd\rd_x f\|_{L^2(\Sigma_t)}$, etc.
	\end{defn}
	
	\subsection{The Littlewood--Paley projection and Besov spaces in $(u_k,u_{k'})$ coordinates}\label{sec:LPukukp}
	
	Assume for this subsection that $k \neq k'$, so that $(u_k, u_{k'})$ forms a coordinate system on $\Sigma_t$. 
	
	\begin{defn}[Littlewood--Paley projection]
	Define the Fourier transform in the $(u_k, u_{k'})$ coordinates by
	$$(\mathcal F^{u_k, u_{k'}} f)(\xi_k, \xi_{k'}) =  \iint_{\mathbb R^2} f(u_k, u_{k'}) e^{-2\pi i(u_k \xi_k + u_{k'} \xi_{k'})}\, du_k \, du_{k'}.$$
	Let $\varphi:\mathbb R^2 \to [0,1]$ be radial, smooth such that $\varphi(\xi) = \begin{cases} 1\quad \mbox{for $|\xi|\leq 1$} \\
	 0\quad \mbox{for $|\xi|\geq 2$}
	 \end{cases}$, where $|\xi| = \sqrt{|\xi_k|^2 + |\xi_{k'}|^2}$. 
	
	Define $P_0^{u_k,u_{k'}}$ by 
	$$P_0^{u_k,u_{k'}} f := (\mathcal F^{u_k, u_{k'}})^{-1} (\varphi(\xi) \mathcal F^{u_k, u_{k'}} f),$$
	and for $q\geq 1$, define $P_q^{u_k,u_{k'}} f$ by
	$$P_q^{u_k,u_{k'}} f := ( \mathcal F^{u_k, u_{k'}})^{-1} ((\varphi(2^{-q} \xi) - (\varphi(2^{-q+1} \xi)) \mathcal F^{u_k, u_{k'}} f(\xi)).$$
	\end{defn}
	
	\begin{defn}[The Besov space $B^{u_k,u_{k'}}_{\infty,1}$]\label{def:Besov}
	Define the Besov norm $\Bes$ by
	$$ \|f \|_{\Bes}:= \sum_{q \geq 0} \|P^{u_k, u_{k'}}_q f \|_{L^{\infty}(\Sigma_t)}.$$
	\end{defn}
	
	\subsection{Lebesgue norms on $C^k_{u_k}$ and $\Sigma_t\cap C^k_{u_k}$}
	Recall the definition of $C^k_{u_k}$ from Definition~\ref{def:sets.eikonal}. The $L^2$ norm on $C^k_{u_k}$ is defined with respect to the measure $d\th_k\, dt_k$.
	\begin{defn}[$L^2$ norm on $C^k_{u_k}$]
	For every fixed $u_k$, define the $L^2(C_{u_k}^k([0,T)))$ norm by
	$$\| f\|_{L^2(C_{u_k}^k([0,T)))} :=  (\int_0^T \int_{\mathbb R} |f|^2(u_k,\th_k,t_k) \, d\th_k\, dt_k)^{\f 12}.$$
	\end{defn}
	
	The $L^2$ norm $\Sigma_t\cap C^k_{u_k}$ is defined with respect to the measure $d\th_k$.
	\begin{defn}[$L^2$ norm on $\Sigma_t \cap C_{u_k}^k$]\label{def:L2.in.th_k}
	For every fixed $t$ and $u_k$ (and recall $t=t_k$), define the $L^2_{\th_k}(\Sigma_t \cap C_{u_k}^k)$ norm by
	$$\| f\|_{L^2_{\th_k}(\Sigma_t \cap C_{u_k}^k)} :=  (\int_{\mathbb R} |f|^2(u_k,\th_k,t_k) \, d\th_k)^{\f 12}.$$
	\end{defn}

\section{Initial data assumptions on $\Sigma_0$} \label{data}

In this section, we give the precise assumptions on the initial data for our theorem. Recall from Section~\ref{sec:intro.delta} that we will consider both impulsive wave data and $\de$-impulsive wave data, which are approximation of impulsive wave data on a length scale $\de>0$.

In \textbf{Section~\ref{sec:data}}, we first recall the notion of initial data in \cite{HL.elliptic}, which in particular involves the constraint equations. The precise assumptions on the impulsive wave data and the $\de$-impulsive data will be stated in \textbf{Section~\ref{dataroughsection}} and \textbf{Section~\ref{datasmoothsection}} respectively.

\subsection{Choice of admissible initial data}\label{sec:data}

Before we proceed, we need to fix a cutoff function for the rest of the paper:
\begin{defn}[Cutoff function $\omega$]\label{def.cutoff}
	From now on, fix a smooth cutoff function $\omega:\mathbb R\to [0,1]$ such that $\omega(\tau) \equiv 0$ for $\tau\leq 1$ and $\omega(\tau) \equiv 1$ for $\tau\geq 2$
\end{defn}

We are now ready to define the notion of an admissible initial data set (c.f.~\cite{HL.elliptic}).
 \begin{defn}[Admissible initial data]\label{datachoice}
An {\bf admissible initial data set} with respect to the elliptic gauge for the system \eqref{Einsteinricci}, \eqref{Einsteinwave} is a quadruple $(\phi,\phi',\gamma,K)$, where
	\begin{enumerate}
		\item $(\phi, \phi') \in W^{1,4}(\RR^2)\times L^4(\RR^2)$ (where $\phi'$ is the prescribed initial value for $\n \phi$) are a pair of real-valued compactly supported functions,
		\item $\gamma$ is a real-valued function with the decomposition 
		$\gamma=-\gamma_{asymp} \cdot  \omega(|x|)  \cdot \log (|x|) +\widetilde{\gamma},$
		where $\gamma_{asymp} \geq 0$ is a constant, $\omega(|x|)$ is as in Definition~\ref{def.cutoff}, and 
		$\widetilde{\gamma} \in H^2_{-\f 18}(\RR^2)$, and
		\item $K_{ij} \in H^1_{\f 78}(\RR^2)$ is a symmetric traceless (with respect to $\de^{ij}$) $2$-tensor,
		\end{enumerate}
which satisfy
\begin{enumerate}
\item the {\bf constraint equations} \eqref{Kellipticequation} and\footnote{Note that \eqref{eq:gamma.constraint} is simply a restatement of \eqref{gammaellipticequation}, but in terms of $K$ (instead of $\bt$).}
\begin{equation}\label{eq:gamma.constraint}
\Delta\gamma = -\de^{il} (\rd_i \phi)(\rd_l\phi) - \f{e^{-2\gamma}}{2} |K|^2 -  e^{2\gamma} \cdot (\n \phi)^2,
\end{equation}
 and
\item the {\bf integral compatibility condition} 
\begin{equation} \label{compatibility}
	\int_{\Sigma_0} e^{2\gamma} \phi' \cdot \partial_j \phi =0.
	\end{equation}
	\end{enumerate}

\end{defn}

\subsection{Assumptions on impulsive waves data} \label{dataroughsection}

In this subsection, we define the precise notion of impulsive wave data. That such initial data sets exist require solving the constraint equations; this will be carried out in Appendix~\ref{sec:appendix.constraint.for.rough}; see Lemma~\ref{lem:data.exist!}.

In the statement of the following theorem, $u_k$, $X_k$ and $E_k$ are to be understood as their values on $\Sigma_0$, according to \eqref{eikonalinit}, \eqref{X^i(0)formula}, and \eqref{initialEj}.
\begin{defn} \label{roughdef}
Let $\ep>0$, $0< s'' < s' < \f 12$ with $s'-s'' <\f 13$, $R\geq 10$ and $\upkappa_0 >0$. We say that $(\phi,\phi',\gamma,K)$ is an \textbf{admissible initial data set featuring three impulsive waves with parameters $(\ep,s',s'',R,\upkappa_0)$} if the following conditions are satisfied:

 \begin{enumerate}
	
	\item \label{data1} $(\phi,\phi',\gamma,K)$ is an admissible initial data set according to Definition~\ref{datachoice}.
		
	\item \label{data.transverse} The transversality condition \eqref{cangle2} hold with the parameter $\upkappa_0$.
	
	\item \label{data2} We have the decomposition $\phi= \rphi + \sum_{k=1}^3 \widetilde{\phi}_k$ and $\phi'= \rphi' + \sum_{k=1} \tphi'$ on $\Sigma_0$, where for every $k = 1,2,3$, $\mathrm{supp}(\rphi),\,\mathrm{supp}(\rphi'),\,\mathrm{supp}(\phi_k),\, \mathrm{supp}(\phi'_k) \subseteq B(0,\f R2):= \{ (x^1,x^2) \in \Sigma_0 ,\ \sqrt{(x^1)^2+ (x^2)^2} < \f R 2\}$. Moreover, for every $k = 1,2,3$, $\mathrm{supp}(\tphi) \cup \mathrm{supp}(\tphi') \subseteq \{u_k \geq 0\}$.

	\item \label{datareg} $\rphi$ and $\rphi'$ satisfy the following estimates:
	\begin{equation}\label{eq:datareg}
	\|\rphi\|_{H^{2+s'}(\Sigma_0)} + \|\rphi' \|_{H^{1+s'}(\Sigma_0)} \leq \ep.
	\end{equation}
	
	\item\label{data5} For $k=1,2,3$, $\tphi$ and $\tphi'$ satisfy the following estimates: 
		\begin{subequations}
	\begin{align}
	\label{eq:assumption.rough.energy}
	\|\tphi\|_{W^{1,\infty}(\Sigma_0)} + \|\tphi\|_{H^{1+s'}(\Sigma_0)} + \|\tphi' \|_{L^\infty(\Sigma_0)} + \|\tphi'\|_{H^{s'}(\Sigma_0)} \leq &\: \ep, \\
	\label{eq:assumption.rough.energy.commuted}
	\|E_k\tphi\|_{H^{1+s''}(\Sigma_0)} + \|E_k \tphi' \|_{H^{s''}(\Sigma_0)} + \| \tphi' - X_k \tphi \|_{H^{1+s''}(\Sigma_0)} \leq &\: \ep.
	\end{align}
	\end{subequations}

\item \label{datameasure} For $k = 1,2,3$, there exist signed Radon measures $T_{i j,k}$, $T_{i,k}'$, $T_{ijE,k}$, $T'_{iE,k}$ and $T_{ijL,k}$ such that 
\begin{equation}\label{eq:T.assumptions.1}
\begin{split}
\| \rd^2_{ij}\tphi - T_{i j,k}\|_{L^2(\Sigma_0)} + \| \rd_{i}\tphi' - T_{i,k}'\|_{L^2(\Sigma_0)} + \| \rd^2_{ij}E_k \tphi - T_{ijE,k}\|_{L^2(\Sigma_0)} &\\
+ \| \rd_{i}E_k\tphi' - T_{iE,k}'\|_{L^2(\Sigma_0)} + \| \rd^2_{ij}(\tphi' - X_k \tphi) - T_{ijL,k}\|_{L^2(\Sigma_0)} & \leq \ep,
\end{split}
\end{equation}
\begin{equation} \label{eq:T.assumptions.2}
\mathrm{supp}(T_{i j,k}) \cup \mathrm{supp}(T_{i,k}') \cup \mathrm{supp}(T_{i j E,k}) \cup \mathrm{supp}(T_{iE,k}') \cup  \mathrm{supp}(T_{i j L,k}) \subseteq \{ u_k=0\},
\end{equation}  
and 
\begin{equation}\label{eq:T.assumptions.3}
T.V.(T_{i j,k })+  T.V.(T_{i,k }') + T.V.(T_{i j E,k })+  T.V.(T_{i E,k }') + T.V.(T_{i j L,k }) \leq \ep,
\end{equation} 
where $T.V.$ is the total variation norm of Radon measures. 
	\item \label{nondegeneracy} The following lower bound holds:
	\begin{equation}\label{eq:nondegeneracy.in.thm}
	\left\| \rd_1 \phi - \f{\la \rd_1\phi,\, \rd_2\phi \ra_{L^2(\Sigma_0, dx)}}{\|\rd_2\phi \|_{L^2(\Sigma_0)}^2} \rd_2\phi \right\|_{H^{-3}(\Sigma_0)} \times \left\| \rd_2 \phi - \f{\la \rd_1\phi,\, \rd_2\phi \ra_{L^2(\Sigma_0, dx)}}{\|\rd_1\phi \|_{L^2(\Sigma_0)}^2} \rd_1\phi \right\|_{H^{-3}(\Sigma_0)} \geq  \ep^{\f 52}.
	\end{equation}
\end{enumerate}
\end{defn}

The following remarks clarify Definition~\ref{roughdef}.

\begin{rmk}
It may be helpful to rephrase the main points of our assumptions in words:
\begin{enumerate}
\item $\rphi$ and $\rphi'$ are the regular parts of $\phi$ and $\phi'$.
\item $\tphi$ and $\tphi'$ are singular. In particular, the second derivatives of $\tphi$ and first derivatives of $\tphi'$ are Radon measures with singular parts supported on $\{u_k = 0\}$. 
\item However, $E_k \tphi$, $E_k \tphi'$ and $L_k \tphi$ are better behaved. (Note that $\tphi' - X_k \tphi$ corresponds to $L_k \tphi$.)
\end{enumerate}

Notice in particular that all these bounds are consistent with $X_k \tphi$ having a jump discontinuity of amplitude $O(\epsilon)$ across $\{u_k=0\}$.

\end{rmk}

\begin{rmk}
In Definition~\ref{roughdef}, we assumed that $\mathrm{supp}(\tphi) \cup \mathrm{supp}(\tphi') \subseteq \{u_k \geq 0\}$. This is not a severe restriction: we can remove this condition as long as we assume instead that $(\tphi,\tphi')_{|\{u_k <0\}}$ can be extended to a pair of functions with $H^{2+s'}(\Sigma_0)\times H^{1+s'}(\Sigma_0)$ norms of $O(\ep)$. In this case, it is easy to redefine the decomposition so that the new decomposition obeys the assumptions in Definition~\ref{roughdef} (including those for the support properties), after allowing $\ep$ and $R$ to increase by a constant multiplicative factor.
\end{rmk}

\begin{rmk}
As part of the proof, we will show that the three singularities propagate along $u_k = 0$. As a result, the three impulsive waves interact at the spacetime point characterized by $u_1 = u_2 = u_3 = 0$. As we will see, defining $\underline{u}_k=t + a_k + c_{kj} x^j$, we have $u_k = \underline{u}_k + O(\ep^{\f 34})$ on the support of $\phi$. Therefore, while our theorem applies to any choice of $a_k$'s and $c_{kj}$'s, in the particular case where $\underline{u}_1 = \underline{u}_2 = \underline{u}_3 = 0$ corresponds to a spacetime point\footnote{Notice that the corresponding spacetime point $(\underline{t}, \underline{x}^1,\underline{x}^2)$ can be given explicitly by
$$\begin{bmatrix}
\underline{t} \\ \underline{x}^1 \\ \underline{x}^2
\end{bmatrix} 
= -\begin{bmatrix}
1 & c_{11} & c_{12} \\
1 & c_{21} & c_{22} \\
1 & c_{31} & c_{32} 
\end{bmatrix}^{-1}
\begin{bmatrix}
a^1 \\ a^2 \\ a^3
\end{bmatrix}
$$
whenever inverse matrix above is well-defined.} in $[0,1]\times \RR^2$ within $\cup_{k=1}^3 \mathrm{supp}(\tphi)$, the theorem indeed features the interaction of three impulsive waves within the time interval $t\in [0,1]$.
\end{rmk}

\begin{rmk}\label{rmk:nondegeneracy}
The condition \eqref{eq:nondegeneracy.in.thm} can be thought of as a non-degeneracy assumption, which is used only to solve the constraints to obtain data for $\de$-impulsive waves; see the proof of Lemma~\ref{lem:data.approx}. Notice that while $\rd_1\phi$ and $\rd_2\phi$ are $O(\ep)$ in $L^2(\Sigma_0)$, we only need a weaker lower bound of order $\ep^{\f 52}$ (as opposed to $\ep^2$). In particular, it can be checked that \eqref{eq:nondegeneracy.in.thm} can always be guaranteed after adding, say, an $O(\ep^{\f 65})$ smooth perturbation.

We remark also that for any non-zero compactly supported $H^1(\Sigma_0)$ function $\phi$, LHS of  \eqref{eq:nondegeneracy.in.thm} is non-zero; see Lemma~\ref{lem:some.nontrivial.lower.bd}.
\end{rmk}

\subsection{Assumptions on $\de$-impulsive waves data} \label{datasmoothsection}

In this section we present a choice of smooth data, which are not strictly speaking impulsive, but which obey scaled estimates \textit{consistent} with the data being a smooth approximation of the data of Definition \ref{roughdef}. Note (c.f.\ introduction) that such data are less idealized, perhaps more realistic representations of impulsive gravitational waves. Most of the paper concerns the propagation of low regularity norms for such smooth data, a result from which we \textit{ultimately} obtain local existence for the rough data of section \ref{dataroughsection}.

\begin{defn} \label{smoothdef}
Let $\ep>0$, $0< s'' < s' < \f 12$ with $s'-s'' <\f 13$, $R\geq 10$, $\upkappa_0 >0$ and $\de>0$. We say that $(\phi,\phi') \in C^{\infty}(\Sigma_0) \times C^{\infty}(\Sigma_0)$ is \textbf{admissible initial data set featuring three $\de$-impulsive waves with parameters $(\ep,s',s'',R,\upkappa_0)$} if the following holds:	
\begin{itemize}
\item The conditions \ref{data1} and \ref{data.transverse} of Definition~\ref{roughdef} are satisfied.
\item $\phi$ and $\phi'$ admit decompositions as in \ref{data2} of Definition~\ref{roughdef}, and $\rphi$, $\rphi'$, $\phi_k$, $\phi'_k$ are supported in $B(0,\f R2)$ for $k=1,2,3$. Unlike in Definition~\ref{roughdef}, however, for each $k = 1,2,3$, $\mathrm{supp}(\tphi) \cup \mathrm{supp}(\tphi') \subseteq \{u_k \geq -\de\}$.
\item The estimates in conditions \ref{datareg} and \ref{data5} of Definition~\ref{roughdef} are satisfied.
\item For $k= 1,2,3$, $(\tphi,\tphi')$ satisfy the following additional bounds (recall definitions from \eqref{defStwosided}, \eqref{defS}): 
	\begin{align} 
		 \label{eq:delta.waves.1}
		 \|  \tphi\|_{H^2(\Sigma_0)}+ \| \tphi'\|_{H^1(\Sigma_0)}+  \| E_k \tphi\|_{H^2(\Sigma_0)}+ \| E_k  \tphi'\|_{H^1(\Sigma_0)} + \|\tphi' - X_k\tphi \|_{H^2(\Sigma_t)} \  \leq &\:  \epsilon \cdot \delta^{-\frac{1}{2}}, \\
		 \label{eq:delta.waves.2}
		 \| \tphi \|_{H^2(\Sigma_0 \setminus S^k(-\de,0))} + \| \tphi' \|_{H^1(\Sigma_0 \setminus S^k(-\de,0))} \leq &\: \ep. 
	\end{align}
\end{itemize}
\end{defn}

\begin{rmk}
	The conditions \eqref{eq:delta.waves.1} and \eqref{eq:delta.waves.2} can be viewed as a smoothed-out version of \eqref{eq:T.assumptions.1}--\eqref{eq:T.assumptions.3}. Here, the second derivatives of $\tphi$ (and first derivatives of $\tphi'$, etc.) can be thought of as a Radon measure smoothed out at a length scale $\de$. 
	
	The $\de$-impulsive waves of Definition~\ref{smoothdef} can indeed by constructed by smoothing out the impulsive waves of Definition~\ref{roughdef}; see Lemma~\ref{lem:data.approx}.
\end{rmk}

\begin{rmk}
Notice that the condition \ref{nondegeneracy} in Definition~\ref{roughdef} is only needed for solving the constraint equations in the approximation argument. In particular, the class of data we can handle for interaction of $\de$-impulsive waves (as defined in Definition~\ref{smoothdef}) do not require such a condition.
\end{rmk}

\section{Precise statement of the main theorems}\label{sec:precise.statements}

In this section, we present the precise version of the main results. 

In parallel with the definitions in Section~\ref{data}, we give two versions of the main theorem. The first version (\textbf{Theorem~\ref{roughtheorem}}) concerns existence of impulsive waves (see Definition~\ref{roughdef}), while the second version (\textbf{Theorem~\ref{smooththeorem}}) concerns existence and uniqueness of $\de$-impulsive waves (see Definition~\ref{smoothdef}). We recall again that (see Section~Section~\ref{sec:intro.delta}) our proof of existence for impulsive waves relies on first understanding $\de$-impulsive waves and taking limits.

\subsection{Three (rough) impulsive gravitational waves}

We first begin with a notion of weak solutions.\footnote{It should be remarked that there is a geometric notion of weak solutions which requires only the metric to be continuous and the Christoffel symbols to be $L^2_{loc}$; see for instance \cite[Definition~2.1]{LR3}. In particular, this notion does not require any symmetry and gauge assumptions. We will however take advantage of the symmetry and gauge conditions in our definition. Note that weak solutions in the sense of Definition~\ref{def:weak.solution} are automatically weak solutions in the sense of \cite[Definition~2.1]{LR3}.}

\begin{definition}\label{def:weak.solution}
Let $\gamma, \bt^1, \bt^2,N,\phi$ be functions on $[0,T)\times \mathbb R^2$, where $N$ is everywhere non-vanishing.
\begin{enumerate}
\item We say that $(\gamma, \bt^j, N, \phi)$ is a \textbf{weak solution to the Einstein vacuum equations in polarized $\mathbb U(1)$ symmetry under elliptic gauge} if
\begin{enumerate}
\item The following regularity conditions hold: 
$$\gamma,\,\bt,\,N \in (C^0_{loc})_{t,x},\quad \rd_i \gamma,\,\rd_i\bt^j,\,\rd_i N\in (C^0_{loc})_{t,x},\quad \rd_t\gamma \in (C^0_{loc})_{t,x},\quad \rd_t\bt^j,\,\rd_t N \in (L^\i_{loc})_{t,x},$$
$$\phi \in (C^0_{loc})_t (H^1_{loc})_x,\quad \rd_t\phi \in  (C^0_{loc})_t (L^2_{loc})_x.$$
\item The following maximality condition holds pointwise:
\begin{equation}\label{eq:main.thm.maximality}
-2e_0\gamma + \rd_i\bt^i = 0.
\end{equation}
\item The elliptic equations \eqref{Kellipticequation}--\eqref{gammaellipticequation} hold weakly in the sense that for every $t\in [0,T)$, and every $\varsigma \in C^\infty_c(\mathbb R^2)$, we have\footnote{We remark that these conditions correspond respectively to $Ric(\n, \rd_i) = 2(\n\phi)(\rd_i\phi)$, $Ric(\n,\n)= 2(\n\phi)^2$ and $\de^{ij}Ric(\rd_i,\rd_j)+e^{2\gamma}Ric(\n,\n) = 2\de^{ij} (\rd_i\phi)(\rd_j\phi)+ 2e^{2\gamma}(\n\phi)^2$.}:
\begin{equation}\label{eq:main.thm.beta}
-\int_{\{t\}\times \mathbb R^2} \delta^{i k } K_{i j} \rd_k \varsigma\, dx = \int_{\{t\}\times \mathbb R^2} (\mbox{RHS of \eqref{Kellipticequation}})\times \varsigma \, dx.
\end{equation}
\begin{equation}\label{eq:main.thm.N}
-\int_{\{t\}\times \mathbb R^2} \delta^{i k } \rd_i N \rd_k \varsigma\, dx = \int_{\{t\}\times \mathbb R^2} (\mbox{RHS of \eqref{Nellipticequation}})\times \varsigma \, dx,
\end{equation}
\begin{equation}\label{eq:main.thm.gamma}
-\int_{\{t\}\times \mathbb R^2} \delta^{i k } \rd_i \gamma \rd_k \varsigma\, dx = \int_{\{t\}\times \mathbb R^2} (\mbox{RHS of \eqref{gammaellipticequation}})\times \varsigma \, dx,
\end{equation}
\item The evolution equation\footnote{We remark that this corresponds to $Ric(\rd_i,\rd_j) - \f 12 \de_{ij} \de^{kl} Ric(\rd_k,\rd_l) = 2(\rd_i\phi)(\rd_j\phi) - \de_{ij} \de^{kl}(\rd_k \phi)(\rd_l\phi)$.} \eqref{Khypequation} for $K_{ij}$ (where $K_{ij}$ is defined in terms of $\gamma$, $\bt^j$, $N$ by \eqref{maximality3}) holds  weakly in the sense that for every $\varsigma \in C^\infty_c((0,T) \times \mathbb R^2)$:
\begin{equation}\label{eq:main.thm.K}
\begin{split}
&\: \int_{(0,T) \times \mathbb R^2} (- K_{ij} (e_0 \varsigma) + K_{ij} (\rd_l \bt^l) \varsigma + \f 12 \rd_i N \rd_j \varsigma + \f 12 \rd_j N \rd_i \varsigma - \f 12 \de_{ij} \de^{lk} \rd_l N \rd_k \varsigma) \, dx \\
= &\: \int_{(0,T) \times \mathbb R^2} N \cdot (\mbox{RHS of \eqref{Khypequation}})\times \varsigma \, dx.
\end{split}
\end{equation}
\item The wave equation \eqref{Einsteinwave} holds weakly in the sense that for every $\varsigma \in C^\infty_c((0,T)\times \mathbb R^2)$,
\begin{equation}\label{eq:main.thm.wave}
 \int_{(0,T)\times \mathbb R^2} (g^{-1})^{\alp\sigma} \rd_\alp \varsigma \rd_\sigma \phi \, N e^{2\gamma} \,dx^1\, dx^2\, dt = 0.
 \end{equation}
\end{enumerate}
\item Given a weak solution $(\gamma, \bt^j, N, \phi)$ to the Einstein vacuum equations in polarized $\mathbb U(1)$ symmetry under elliptic gauge (as defined in part 1), we moreover say that the solution \textbf{achieves initial data $(\gamma_0,K_0,\phi_0, \phi'_0)$} if 
\begin{enumerate}
\item $\gamma$, $K$ and $\phi$ converges to the prescribed initial value pointwise, i.e.~
$$\lim_{t \to 0} (\gamma, K, \phi)(t,x) = (\gamma, K, \phi)(0,x).$$
\item The initial data for $\n \phi$ is achieved in an $L^2$ sense, i.e.
\begin{equation}\label{eq:def.achieving.nphi.data}
\lim_{t \to 0} \|\n \phi(t,\cdot) - \phi_0'(\cdot)\|_{L^2(\mathbb R^2)} =0.
\end{equation}
\end{enumerate}
\end{enumerate}
\end{definition}

We now state our main result for the interaction of three impulsive waves (recall the definition for the data in Definition~\ref{roughdef}).
\begin{thm} \label{roughtheorem}
For every $0< s'' < s' < \f 12$ with $s'-s'' <\f 13$, $R\geq 10$ and $\upkappa_0 > 0$, there exists $\ep_0 = \ep_0 (s',s'',R,\upkappa_0)>0$ such that the following holds.

Let $(\phi_{0},\phi'_{0},\gamma_0,K_{0})$ be an admissible initial data set featuring three impulsive waves with parameters $(\ep,s',s'',R,\upkappa_0)$ as in Definition \ref{roughdef}.

Then, whenever $\epsilon \in (0,\epsilon_0]$, there exists a Lorentzian metric $g$ 
$$g = -N^2 dt^2 + e^{2\gamma} \de_{i j} (dx^i + \beta^i dt) (dx^j + \beta^j dt)$$ 
on the manifold $M:= [0,1]\times \RR^2$ and a scalar function $\phi:M\to \RR$ such that $(\gamma,\bt^1,\bt^2,N,\phi)$ is a weak solution to the Einstein vacuum equation in polarized $\mathbb U(1)$ symmetry under elliptic gauge, with the initial data $(\phi_{0},\phi'_{0},\gamma_0,K_{0})$ (see Definition~\ref{def:weak.solution}).

	Moreover, $\phi = \rphi + \sum_{k=1}^3 \tphi$, where each of $\rphi$ and $\tphi$ is defined to satisfy the wave equation weakly in the sense of \eqref{eq:main.thm.wave}, with initial data as given in Definition~\ref{roughdef} (understood as in part 2 of Definition~\ref{def:weak.solution}). Furthermore, each of $\rphi$ and $\tphi$ is supported in $B(0,R)$ for every $t\in [0,1]$.

	Additionally, the following estimates are satisfied for all $k = 1,2,3$ and all $t\in [0,1]$, for some implicit constants depending only on $s'$, $s''$, $R$ and $\upkappa_0$, and for $\alp = 10^{-2}$: 
	\begin{enumerate}
		\item 
 The following $L^2$ estimates for $\phi$ hold:
 	\begin{subequations}
	\begin{align} 
	\label{eq:main.thm.rphi} 
 		\|\rphi \|_{H^{2+s'}(\Sigma_t)} + \|\rd_t \rphi \|_{H^{1+s'}(\Sigma_t)} \ls &\: \ep,\\
	\label{eq:main.thm.tphi.1}
		\| \tphi \|_{H^{1+s'}(\Sigma_t)} +  \| \rd_t \tphi \|_{H^{s'}(\Sigma_t)} \lesssim &\: \epsilon, \\
	\label{eq:main.thm.tphi.2}
		\|    L_k  \tphi \|_{H^{1+s'}(\Sigma_t)}+  \|  E_k  \tphi \|_{H^{1+s'}(\Sigma_t)} +   \| \rd_t L_k \tphi \|_{H^{s'}(\Sigma_t)}+ \| \rd_t E_k \tphi \|_{H^{s'}(\Sigma_t)}  \lesssim &\: \epsilon.
	\end{align}
	\end{subequations} 
	\item \label{roughb} There exist signed Radon measures $T_{\mu\nu,k} \in \mathcal{M}(\Sigma_t)$ for all $t\in [0,1]$ such that 
	 \begin{equation} 
	 \begin{split}
	\| \partial^2_{\mu\nu}\tphi -  T_{\mu\nu,k}\|_{L^2(\Sigma_t)} \ls \ep,
	\end{split}
	\end{equation} 
	Moreover, the following holds for $T_{\mu\nu,k}$:
	\begin{equation} 
	\begin{split}
			\mathrm{supp}(T_{\mu\nu,k}) \subseteq \{ u_k=0\},	\quad
				T.V._{|\Sigma_t}(T_{\mu\nu,k }) \ls \ep.
	\end{split}
	\end{equation}
\item The following Lipschitz and improved H\"older estimates hold for $\phi$:
	\begin{equation}\label{eq:roughtheorem.Lipschitz}
	\|\rd\phi\|_{L^\infty(\Sigma_t)} \ls \ep,\quad \|\rd\rphi\|_{C^{0,\f{s''}{2}}(\Sigma_t)} \ls \ep.
	\end{equation}
	Moreover, the following improved H\"older estimates in each half space\footnote{Note that it then follows easily that $\rd\tphi$ admits an extension from $\{u_k > 0\}$ to $\{u_k \geq 0\}$, which is moreover H\"older continuous on the closed subset $\{u_k \geq 0\}$. Similarly, there is a H\"older continuous extension from $\{u_k < 0\}$ to $\{u_k \leq 0\}$. These two extensions are in general different on $\{u_k = 0\}$.} away from $\{(t,x): u_k = 0\}$:
	\begin{equation}\label{eq:main.thm.Holder.away}
		\|\rd \tphi \|_{C^{0,\f{s''}2}(\Sigma_t \cap \{u_k > 0\})} + \|\rd \tphi \|_{C^{0,\f{s''}2}(\Sigma_t \cap \{u_k < 0\})} \ls \ep.
	\end{equation}
\item The wavefronts $u_k$ of the waves $\tphi$ are $(C^{1,1}_{loc})_{t,x}$ with the following estimates:
	\begin{equation}\label{eq:wavefront}
	\|\rd_i u_k\|_{W^{1,\i}(\Sigma_t)}  \ls 1,
	\end{equation}
	and the components of the vector fields $(L_k, E_k, X_k)$ adapted to the wavefronts are $(C^{0,1}_{loc})_{t,x}$ and satisfy
	\begin{equation}\label{eq:frames.in.thm}
	|L^{\beta}_k|+	|X^i_k|+ |E^i_k| \lesssim  \la x\ra^{\epsilon},\quad  |\rd_t L_k^t| + \la x \ra (|\rd_t L_k^i| + |\rd_i L_k^\bt| + |\partial X^i_k|+ |\partial E^i_k|)  \lesssim  \epsilon^{\frac{5}{4}}  \cdot\la x\ra^{4\alp}.
	\end{equation}
\item \label{roughtheorem.metric}Finally, the metric components $\gamma$ and $N$ admit a decomposition for every $(t,x) \in [0,1]\times \RR^2$
\begin{equation}\label{eq:metric.decomposition.in.thm}
\begin{split}
\gamma(t,x) = &\: -\gamma_{asymp}\,\omega(|x|)\log |x| + \widetilde{\gamma}(t,x), \\
N(t,x) = &\: 1+ N_{asymp}(t)\,\omega(|x|)\log |x| + \widetilde{N}(t,x),
\end{split}
\end{equation}
where $\gamma_{asymp}\geq 0$ is a constant, $N_{asymp}(t)\geq 0$ is a Lipschitz function of $t$, and $\omega$ is the cutoff function in Definition~\ref{def.cutoff}. Moreover, $\gamma$, $\bt^i$ and $N$ satisfy the following estimates for all $t\in [0,1]$:
\begin{equation}  \begin{split}  \label{metric.est}
|\gamma_{asymp}| + |N_{asymp}|(t) \ls \ep^{\f 32} ,\\
\sum_{\widetilde{\mfg} \in \{\widetilde{\gamma},\bt^i,\widetilde{N}\}} (\|\widetilde{\mfg} \|_{W^{1,\infty}_{1-\alp}(\Sigma_t)} + \|\rd^2_x \widetilde{\mfg} \|_{L^\i_{2-\alp}(\Sigma_t)}  ) \ls \ep^{\f 32},
\end{split}
\end{equation}
and the following estimates hold for a.e.~$t\in [0,1]$:
\begin{equation}\label{metric.est.t}
 |\rd_t N_{asymp}|(t) + \sum_{\widetilde{\mfg} \in \{\widetilde{\gamma},\bt^i,\widetilde{N}\}} \|\rd_t \widetilde{\mfg} \|_{W^{1,\f{2}{s'-s''}}_{1-s'+s''-2\alp}(\Sigma_t)} \ls \ep^{\f 32}.
\end{equation}
	\end{enumerate}

\end{thm}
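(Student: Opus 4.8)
The plan is to deduce Theorem~\ref{roughtheorem} from its $\de$-regularized counterpart Theorem~\ref{smooththeorem} by an approximation-and-compactness argument. First one must exhibit an actual admissible data set as in Definition~\ref{roughdef}, which amounts to solving the constraint equations \eqref{Kellipticequation} and \eqref{eq:gamma.constraint} for the prescribed $(\phi,\phi')$; this is done in the appendix (Lemma~\ref{lem:data.exist!}). Next --- and this is the only place the non-degeneracy hypothesis \eqref{eq:nondegeneracy.in.thm} is needed --- one constructs, for every sufficiently small $\de>0$, a $\de$-impulsive data set $(\phi_\de,\phi_\de')$ in the sense of Definition~\ref{smoothdef} by mollifying $(\phi,\phi')$ at length scale $\de$ and re-solving the constraints, arranged so that $(\phi_\de,\phi_\de',\gamma_\de,K_\de)$ converges to $(\phi,\phi',\gamma,K)$ in the relevant topology as $\de\to 0$ (Lemma~\ref{lem:data.approx}); in particular $\widetilde\phi_{k,\de}$ is supported in $\{u_k\geq -\de\}$ and obeys the scaled bounds \eqref{eq:delta.waves.1}--\eqref{eq:delta.waves.2}. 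Applying Theorem~\ref{smooththeorem} to each $(\phi_\de,\phi_\de')$ then produces a smooth solution $(\gamma_\de,\bt_\de^j,N_\de,\phi_\de)$ on $[0,1]\times\RR^2$, a decomposition $\phi_\de=(\phi_{\de})_{reg}+\sum_{k=1}^3\widetilde\phi_{k,\de}$, and --- crucially --- a family of estimates \emph{uniform in} $\de$: the $L^2$-type bounds of item (1), the total-variation bounds of item (2) (after observing the $L^2$ error lives outside the shrinking zone $S^k(-\de,\de)$), the Lipschitz/H\"older bounds of item (3), the eikonal and frame bounds of item (4), and the metric bounds of item (5).

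The second step is to extract a limit. Theorem~\ref{smooththeorem} controls, uniformly in $\de$, $\phi_\de$ in $C^0_t H^{1+s'}_x$, $\rd_t\phi_\de$ in $C^0_t H^{s'}_x$, the pieces $(\widetilde\gamma_\de,\bt_\de^i,\widetilde N_\de)$ in $C^0_t W^{1,\infty}_{1-\alp}\cap L^\infty_t W^{2,\infty}_{2-\alp}$ with $\rd_t$-derivatives in the weaker weighted spaces of \eqref{metric.est.t}, and it pins down the support of $\phi_\de$ and the $\log$-type asymptotics of $(\gamma_\de,N_\de)$ independently of $\de$. Combining Rellich--Kondrachov on compact sets with the uniform $\rd_t$-bounds (an Aubin--Lions/Arzel\`a--Ascoli argument), one extracts a sequence $\de_n\to 0$ along which $\phi_{\de_n}\to\phi$ \emph{strongly} in $C^0_t H^1_{loc}$ with $\rd\phi_{\de_n}\to\rd\phi$ strongly in $C^0_t L^2_{loc}$, $(\gamma_{\de_n},\bt_{\de_n},N_{\de_n})$ and their spatial first derivatives converge strongly in $C^0_{loc}$, $\rd_t(\gamma_{\de_n},\bt_{\de_n},N_{\de_n})$ converges in the relevant sense, and $\widetilde\phi_{k,\de_n}\to\widetilde\phi_k$ strongly in $C^0_t H^1_{loc}$. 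The uniform weighted bounds upgrade these local statements to convergence with the prescribed decay, so that $\phi$ and each $\widetilde\phi_k$ stay supported in $B(0,R)$ and the metric coefficients retain the decomposition \eqref{eq:metric.decomposition.in.thm}.

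The third step is to check the limit is a weak solution in the sense of Definition~\ref{def:weak.solution} and achieves the data. The reason for insisting on strong (not merely weak) $H^1$ convergence is that every nonlinearity in \eqref{Kellipticequation}--\eqref{Khypequation} and in the wave equation \eqref{eq:main.thm.wave} is quadratic in $\rd\phi$, $\rd_x\mfg$ and $K$; under strong convergence these products converge in $L^1_{loc}$, so the test-function identities \eqref{eq:main.thm.beta}--\eqref{eq:main.thm.wave} and the maximality condition \eqref{eq:main.thm.maximality} pass to the limit, and the limiting $\phi_{reg}$ and each $\widetilde\phi_k$ solve the wave equation weakly. For the measures $T_{\mu\nu,k}$, one passes to a further subsequence so $\rd^2_{\mu\nu}\widetilde\phi_{k,\de_n}$ converges weak-$*$ as Radon measures; the uniform bound of item (2) controls the total variation, the support is squeezed into $\{u_k=0\}$ as $S^k(-\de_n,\de_n)$ shrinks, and weak-$*$ lower semicontinuity of $T.V.$ gives the claimed bound. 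The remaining estimates of items (1), (3)--(5) are $\de$-independent and descend to the limit by strong convergence and weak/weak-$*$ lower semicontinuity of the Sobolev, Besov and H\"older norms. Finally, the data is achieved (part 2 of Definition~\ref{def:weak.solution}) because the $\de$-solutions achieve their data, $(\phi_\de,\phi_\de',\gamma_\de,K_\de)\to(\phi,\phi',\gamma,K)$, and the uniform-in-$\de$ continuity in $t$ lets one interchange $t\to 0$ with $\de\to 0$.

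The main obstacle is obtaining the \emph{strong} $H^1$ convergence of $(\phi,g)$: without it the limit need not solve the Einstein equations, since weak $H^1$ limits of vacuum solutions need not be vacuum. It is available here precisely because Theorem~\ref{smooththeorem} supplies uniform control of one extra $\rd_t$-derivative in the weaker norms \eqref{metric.est.t} (enough for Aubin--Lions-type compactness) together with uniform $C^0_tH^{1+s'}_x$-control of $\phi_\de$ with $s'>0$ (supplying the spatial compactness); the delicate point is that $s'<\f12$ is genuinely low regularity and all the estimates degenerate as $|x|\to\infty$, so one must work throughout with the weighted spaces and exploit the fixed compact spatial support of $\phi$.
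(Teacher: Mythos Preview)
Your proposal is essentially correct and follows the same approach as the paper: approximate the rough data by $\de$-impulsive data via Lemma~\ref{lem:data.approx}, apply Theorem~\ref{smooththeorem}, extract limits by Aubin--Lions/Arzel\`a--Ascoli compactness, and verify the limit is a weak solution achieving the data. Two small remarks: (i) Lemma~\ref{lem:data.exist!} is not part of the proof of Theorem~\ref{roughtheorem} (the data set is \emph{given}); (ii) for the Radon measures $T_{\mu\nu,k}$, the paper does not take the weak-$*$ limit of $\rd^2_{\mu\nu}\widetilde\phi_{k,\de}$ directly but first decomposes it via a cutoff $\rho_k^{(\de)}$ supported in $S^k_{2\de}$, so that $\rho_k^{(\de)}\rd^2_{\mu\nu}\widetilde\phi_{k,\de}$ has a uniform $L^1$ bound (Cauchy--Schwarz: $\de^{1/2}\cdot\ep\de^{-1/2}$) and the complementary piece has a uniform $L^2$ bound by \eqref{eq:smooththeorem.2}.
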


The proof of Theorem~\ref{roughtheorem} can be found in Section~\ref{sec:approximation}. We will show there that Theorem~\ref{roughtheorem} follows from the theorem on $\de$-impulsive gravitational waves (Theorem~\ref{smooththeorem} below), after a suitable approximation and limiting argument.

We give a few technical remarks regarding the estimates.

\begin{rmk}
The proof of the theorem gives a few other estimates, which are not stated explicitly in Theorem~\ref{roughtheorem}. For instance, we have additional bounds for the commuting vector fields $L_k$, $E_k$, as well as for the metric components. 
\end{rmk}

\begin{rmk}
Notice that while we assume $\rd^2_{ij}E_k \tphi$, $\rd_{i} \n E_k \tphi$ and $\rd^2_{ij}L_k \tphi$ to be Radon measures initially, the theorem does not guarantee that this is propagated. We only propagate (see point 2 in Theorem~\ref{roughtheorem}) that $\rd^2_{ij}\tphi$ is a Radon measure.
\end{rmk}

\begin{rmk}
We can impose, in addition to \eqref{eq:assumption.rough.energy}--\eqref{eq:assumption.rough.energy.commuted}, the stronger assumption that for \underline{all} $s\in (0,\f 12)$,
$$ \|\tphi\|_{H^{1+s}(\Sigma_0)} + \|\tphi'\|_{H^{s}(\Sigma_0)} + \|E_k\tphi\|_{H^{1+s}(\Sigma_0)} + \|E_k \tphi' \|_{H^{s}(\Sigma_0)} + \| \tphi' - X_k \tphi \|_{H^{1+s}(\Sigma_0)} \leq \frac{\epsilon}{\sqrt{1-2s}}.$$
(Note that this is still consistent with $X_k\tphi$ having a jump discontinuity.) In this case, one can in principle also show a posteriori that the stronger estimate is propagated. Moreover, using Theorem~\ref{thm:bootstrap.Li}, one also have that $\rd\tphi \in \cap_{\th \in [0,\f 14)} C^{0,\th}$ in the sets $\{u_k >0\}$ and $\{u_k <0\}$ (as opposed to only being in $C^{0,\f{s''}{2}}$).
\end{rmk}

\subsection{Three $\de$-impulsive impulsive waves}

Now we present our result for \textit{smooth}, quantitatively impulsive data as in Definition \ref{smoothdef}. It is, in fact, the following theorem that we prove in most of the paper, and we use this theorem to obtain Theorem \ref{roughtheorem} eventually.
\begin{thm} \label{smooththeorem}
For every $0< s'' < s' < \f 12$ with $s'-s'' <\f 13$, $R\geq 10$ and $\upkappa_0 > 0$, there exists $\ep_0 = \ep_0 (s',s'',R,\upkappa_0)>0$ such that the following holds.

Let $(\phi_{0},\phi'_{0},\gamma_0,K_{0})$ be an admissible initial data set featuring three $\de$-impulsive waves with parameters $(\ep,s',s'',R,\upkappa_0)$ as in Definition \ref{smoothdef} for some $\ep>0$ and $\de >0$.

Then, whenever $\epsilon\in (0,\epsilon_0]$, there exists $\delta_0 = \de_0(\epsilon,s',s'',R,\upkappa_0)>0$ such that for all $0< \delta< \delta_0$, there exists a unique smooth Lorentzian metric $g$ 
$$g = -N^2 dt^2 + e^{2\gamma} \de_{i j} (dx^i + \beta^i dt) (dx^j + \beta^j dt)$$ 
on the manifold $M:= [0,1]\times \RR^2$ and a unique smooth scalar function $\phi: M\to \RR$ such that $(g,\phi)$ satisfy the Einstein vacuum equations in polarized $\mathbb U(1)$ symmetry under elliptic gauge \eqref{Einsteinwave}, \eqref{Kellipticequation}--\eqref{gammaellipticequation} and \eqref{Khypequation} all hold, with initial data $(\phi_{0},\phi'_{0},\gamma_0,K_{0})$, in the classical sense.	

Moreover, $\phi = \rphi + \sum_{k=1}^3 \tphi$, where $\rphi$ and $\tphi$ are defined to satisfy 
\begin{equation}\label{eq:separate.wave.equations}
\Box_g \rphi = 0,\quad \Box_g \tphi = 0,
\end{equation}
with initial data as prescribed by the corresponding decomposition in Definition~\ref{smoothdef}. Furthermore, each of $\rphi$ and $\tphi$ is supported in $B(0,R)$ for every $t\in [0,1]$.
	
Additionally, the following estimates hold for all $k = 1,2,3$ and all $t\in [0,1]$, for some implicit constants depending only on $s'$, $s''$, $R$ and $\upkappa_0$: 
\begin{enumerate}
\item The wave estimates in \eqref{eq:main.thm.rphi}--\eqref{eq:main.thm.tphi.2}, the wavefront estimate \eqref{eq:wavefront}, the vector fields estimates \eqref{eq:frames.in.thm}, and the metric estimates \eqref{eq:metric.decomposition.in.thm}--\eqref{metric.est.t} all hold.
\item The following higher-order estimates hold:
\begin{equation}\label{eq:smooththeorem.1}
\|\rd^2 \tphi\|_{L^2(\Sigma_t)} + \sum_{ \substack{ Y_k^{(1)}, Y_k^{(2)}, Y_k^{(3)} \in \{ X_k, E_k, L_k\} \\ \exists i, Y_k^{(i)} \neq X_k} } \|Y_k^{(1)} Y_k^{(2)} Y_k^{(3)} \tphi \|_{L^2(\Sigma_t)} \ls \ep\cdot \de^{-\f 12},
\end{equation}
\begin{equation}\label{eq:smooththeorem.top}
\|\phi \|_{H^3(\Sigma_t)} + \|\n \phi\|_{H^2(\Sigma_t)} \ls \ep \cdot \de^{-\f 12} + \|\phi \|_{H^3(\Sigma_0)} + \|\n \phi\|_{H^2(\Sigma_0)},
\end{equation}
together with the following improvement away from $S^k_\de$ (recall $S^k_\de = S^k(-\de,\de)$ in Definition~\ref{def:sets.eikonal}):
\begin{equation}\label{eq:smooththeorem.2}
\| \partial^2  \tphi \|_{ L^2(\Sigma_t \setminus S^k_\de)}  \lesssim \epsilon.
\end{equation}
\item The Lipschitz and improved H\"older estimates \eqref{eq:roughtheorem.Lipschitz} hold. Moreover,
\begin{equation}\label{eq:smooththeorem.Lipschitz}
\| \partial \tphi \|_{ C^{\frac{s''}{2}}(\Sigma_t \setminus S^k_\de) }\ls \ep.
\end{equation}
\end{enumerate}

\end{thm}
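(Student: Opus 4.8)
The plan is to prove Theorem~\ref{smooththeorem} via a continuity (bootstrap) argument in time, reducing it to three coupled families of a priori estimates. Since the $\de$-impulsive data of Definition~\ref{smoothdef} are smooth and the elliptic gauge is well-posed near the (nearly flat) slice $\Sigma_0$, I would first invoke standard local well-posedness for the coupled elliptic--hyperbolic system \eqref{Einsteinwave}, \eqref{Kellipticequation}--\eqref{gammaellipticequation}, \eqref{Khypequation} to obtain a unique smooth solution on a maximal interval $[0,T^\ast)$ with $N>0$; this also yields the uniqueness asserted in the theorem. Next I would introduce the bootstrap set $\mathcal B\subseteq[0,\min\{1,T^\ast\}]$ consisting of those $T$ for which the solution on $[0,T]$, together with the decomposition $\phi=\rphi+\sum_k\tphi$ solving \eqref{eq:separate.wave.equations}, obeys all the estimates in the conclusion of Theorem~\ref{smooththeorem} with the implicit constants enlarged by a fixed factor (the bootstrap assumptions). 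One checks that $\mathcal B$ is non-empty (it contains $0$, by the data assumptions of Definition~\ref{smoothdef}) and closed (by continuity of the smooth solution), so the whole argument reduces to showing $\mathcal B$ is open, which simultaneously forces $T^\ast>1$.

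The opening step is the heart of the matter and must run all three ingredients in tandem. First, assuming the wave bounds, one controls the geometry: $\gamma,N,\bt^i$ are recovered from the elliptic equations \eqref{Nellipticequation}--\eqref{betaellipticequation} (with $(\rd\phi)^2$ and $(\rd_x\mfg)^2$ sources), and the eikonal quantities $u_k,\mu_k,\varTheta_k,X_k,E_k,\chi_k,\eta_k$ from the transport equations \eqref{Lmu}, \eqref{Lvartheta}, \eqref{Leta}, \eqref{Lchi} and the ODEs of Proposition~\ref{dXELellipticprop}, with initial values from Lemma~\ref{riccisigma0expression} --- this is exactly Theorem~\ref{thm:bootstrap.metric}, which I would prove in Sections~\ref{metricsection}--\ref{sec:thm.bootstrap.metric.conclusion} of this paper. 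Second, feeding this geometric control into an anisotropic Sobolev embedding (Theorem~\ref{thm:bootstrap.Li}) would upgrade the $L^2$ energy bounds on $\phi$ to the Lipschitz and improved H\"older bounds \eqref{eq:roughtheorem.Lipschitz}, \eqref{eq:smooththeorem.Lipschitz}, and supply the Besov strengthening needed for the endpoint elliptic estimate $\rd_x^2\mfg\in L^\infty$. Third, one proves the $L^2$ energy estimates \eqref{eq:main.thm.rphi}--\eqref{eq:main.thm.tphi.2}, \eqref{eq:smooththeorem.1}--\eqref{eq:smooththeorem.2} for $\rphi$ and $\tphi$ with the geometrically defined commutators $L_k,E_k$ (Theorem~\ref{thm:energyest}), using the metric regularity from the first step. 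Steps two and three are carried out in the companion paper \cite{LVdM2}; here I would establish the first step and check that the three outputs are mutually consistent, so that the joint bootstrap closes with $\de$-independent constants once $\ep\leq\ep_0$ and $\de\leq\de_0(\ep)$. Uniformity in $\de$ is possible because, after the decomposition \eqref{eq:intro.decomposition}, every large $\de^{-1/2}$ is paired with a factor either localized to the $\de$-thin zone $S^k_\de$ (gaining $\de^{1/2}$ by Cauchy--Schwarz) or itself small in $\de$; taking $\de_0$ small absorbs the residue.

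The hard part will be the endpoint and anisotropic nature of the geometric step under the $\de$-uniform requirement. Ordinary elliptic theory does not place $\rd_x^2\mfg$ in $L^\infty$ (the $p=\infty$ Calder\'on--Zygmund endpoint fails), so one must combine the Besov improvement for $\rd\phi$ with a physical-space argument for the Laplacian in the non-flat elliptic coordinates; the mixed derivative $\rd_x\rd_t\mfg$ is worse still, forcing a decomposition of $\rd_t$ into a good null derivative $\alp L_k$ plus a (possibly spatially dependent) spatial derivative, which uses the transversality \eqref{cangle2} of the three wavefronts to tame sources such as $\rd_t[(\rd\tphi)(\rd\widetilde{\phi}_{k'})]$ with $k\neq k'$. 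On the eikonal side the flux bound \eqref{eq:intro.flux} carries a $\de^{-1/2}$ weight which must be beaten: for $k=k'$ via the improved flux \eqref{eq:intro.flux.imp.2}, and for $k\neq k'$ via the $\de$-length-scale gain from transversality, in order to control $\rd_q\chi_k$ in $L^\infty_{u_k}L^2_{\th_k}$ (while $\rd_x\eta_k$ is only bounded in $L^2_{u_k}L^2_{\th_k}$). Finally, one must verify the compatibility between the elliptic-gauge coordinates and the geometric coordinates $(u_k,\th_k,t_k)$ --- in particular that $u_k$ is $W^{2,\infty}$ in $x$ and that the second elliptic-coordinate derivatives of $X_k,E_k,L_k$ are $L^2$ --- which sits right at the edge of what these geometric estimates afford and is needed to close both the geometric and the wave estimates.
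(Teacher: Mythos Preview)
Your proposal is correct and follows essentially the same approach as the paper: local existence (from \cite{HL.elliptic}) plus a continuity/bootstrap argument reduced to the three a priori estimate theorems (Theorems~\ref{thm:bootstrap.metric}, \ref{thm:bootstrap.Li}, \ref{thm:energyest}), with the latter two deferred to \cite{LVdM2}. One small point worth making explicit: to force $T^\ast>1$ you need to invoke the continuation criterion of the local existence result (part~2 of Theorem~\ref{thm:local}), which is why the top-order bound \eqref{eq:smooththeorem.top} and the Lipschitz bound \eqref{eq:roughtheorem.Lipschitz} are both needed in the output of the bootstrap --- the paper uses these to rule out both blow-up alternatives and reach the contradiction.
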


The high-level proof of Theorem~\ref{smooththeorem} can be found in Section~\ref{sec:pf.of.smooththeorem}. (It relies in particular on the a priori estimates stated in Section~\ref{sec:aprioriestimates}, whose proof will occupy most of the remainder of the series.)

\section{Approximation argument and the proof of Theorem~\ref{roughtheorem}}\label{sec:approximation}

In this section, we assume the validity of Theorem~\ref{smooththeorem} and prove Theorem~\ref{roughtheorem}. We first approximate the impulsive wave data in Theorem~\ref{roughtheorem} by $\de$-impulsive waves data, then use Theorem~\ref{smooththeorem} to obtain solutions for $\de$-impulsive waves, and finally pass to the $\de\to 0$ limit.

\textbf{For the remainder of this section, we assume the validity of Theorem~\ref{smooththeorem}.}

\subsection{Approximating the initial data}\label{sec:approx.data}

Our first step is to show that data in Definition~\ref{roughdef} can be approximated by data in Definition~\ref{smoothdef}. This is given by the following lemma, whose proof will be postponed to Section~\ref{sec:data.approx}. Notice that both the size and the support are allowed to be slightly larger for the approximate data.

\begin{lemma}\label{lem:data.approx}
For every $0< s'' < s' < \f 12$, $R\geq 10$ and $\upkappa_0 > 0$, there exists $\ep_0 = \ep_0 (s',s'',R,\upkappa_0) \in (0,1]$ such that the following holds for all $\ep \in (0, \ep_0]$.

Let $(\phi, \phi', \gamma, K)$ be an initial data set featuring three impulsive waves with parameters $(\ep,s',s'',R,\upkappa_0)$ as in Definition~\ref{roughdef} . Then, for every $\de \in (0,\ep^{\f 1{s'}}]$, there exists an initial data set $(\phi^{(\de)},(\phi')^{(\de)},\gamma^{(\de)},K^{(\de)})$ such that
\begin{enumerate}
\item $(\phi^{(\de)}, (\phi')^{(\de)}, \gamma^{(\de)}, K^{(\de)})$ correspond to data for three $\de$-impulsive waves with parameters $(3\ep,s',s'',2R,\upkappa_0)$ in Definition~\ref{smoothdef}, and
\item $(\phi^{(\de)}, (\phi')^{(\de)}, \gamma^{(\de)}, K^{(\de)})$ is an approximation of $(\phi, \phi', \gamma, K)$ in the sense that as $\de \to 0$, $\rphi^{(\de)} \to \rphi$, $\tphi^{(\de)} \to \tphi$ in $H^{1+s'}(\Sigma_0)$, $(\rphi')^{(\de)} \to \rphi'$, $(\tphi')^{(\de)} \to \tphi'$ in $H^{s'}(\Sigma_0)$, and, after writing $\gamma^{(\de)} = -\gamma_{asymp}^{(\de)}\om(|x|) \log |x| + \widetilde{\gamma}^{(\de)}$ and $\gamma = -\gamma_{asymp} \om(|x|) \log |x| + \widetilde{\gamma}$, it holds that $(\gamma_{asymp}^{(\de)}, \widetilde{\gamma}^{(\de)},K^{(\de)}) \to (\gamma_{asymp}, \widetilde{\gamma},K)$ in $\RR \times H^2_{-\f 18}(\Sigma_0) \times H^1_{\f 78}(\Sigma_0)$.
\end{enumerate}
\end{lemma}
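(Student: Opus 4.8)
The plan is to build the $\de$-impulsive data by mollifying the impulsive data $(\phi,\phi')$ on length scale $\sim\de$ in a way adapted to the level sets of $u_k$, and then re-solving the constraint equations for the accompanying $(\gamma^{(\de)},K^{(\de)})$ while correcting for the integral compatibility condition \eqref{compatibility}. Everything except this last correction is a routine combination of mollification estimates, the constraint-solving machinery of Appendix~\ref{sec:appendix.constraint.for.rough} (Lemma~\ref{lem:data.exist!}), and continuous dependence; the correction is exactly what the non-degeneracy condition \eqref{eq:nondegeneracy.in.thm} is designed to make possible, and dovetailing it with the coupled elliptic constraint system is the main difficulty.

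\textbf{Step 1: mollifying the wave data.} On $\Sigma_0$ each pair $(u_k,\th_k)$ is an affine coordinate system (by \eqref{eikonalinit}, \eqref{thetainit}), so for each $k$ I would fix a smooth mollifier on $\RR^2$ of total mass one whose support lies in $[-\de,0]$ in the $u_k$-direction and in $[-\de,\de]$ in the $\th_k$-direction, and define $\tphi^{(\de)}$, $(\tphi')^{(\de)}$ as the corresponding convolutions of $\tphi$, $\tphi'$; the regular parts $\rphi^{(\de)}$, $(\rphi')^{(\de)}$ would be obtained by ordinary mollification on a scale $\le\de$. One-sidedness in $u_k$ gives $\mathrm{supp}(\tphi^{(\de)})\cup\mathrm{supp}((\tphi')^{(\de)})\subseteq\{u_k\ge-\de\}$, with all supports staying in $B(0,2R)$. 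Since convolution is a contraction on every Sobolev space, the bounds \eqref{eq:datareg}, \eqref{eq:assumption.rough.energy}, \eqref{eq:assumption.rough.energy.commuted} persist with $\ep$ replaced by $2\ep$; for the $E_k$- and $L_k$-commuted norms one uses that $E_k=\varTheta_k^{-1}\rd_{\th_k}$ and $X_k=\mu_k^{-1}\rd_{u_k}$ on $\Sigma_0$ with $\mu_k|_{\Sigma_0}=\varTheta_k|_{\Sigma_0}=e^{\gamma}$ (by \eqref{mu(0)formula}, \eqref{initialvarTheta}), so these fields are $e^{\mp\gamma}$ times the constant-coefficient fields $\rd_{\th_k},\rd_{u_k}$ and, $\gamma$ being in $H^2$, the convolution interacts well with the commuted quantities. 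The new scaled bounds \eqref{eq:delta.waves.1}--\eqref{eq:delta.waves.2} follow from the measure structure \eqref{eq:T.assumptions.1}--\eqref{eq:T.assumptions.3}: writing $\rd^2_{ij}\tphi=T_{ij,k}+(\text{an }L^2\text{-function of size }\le\ep)$ with $T_{ij,k}$ of total variation $\le\ep$ supported on the line $\{u_k=0\}$ (and likewise for $\rd^2_{ij}E_k\tphi$ and $\rd^2_{ij}(\tphi'-X_k\tphi)$), convolving transverse to that line produces a function of pointwise size $\lesssim\ep\de^{-1}$ supported in $S^k(-\de,0)$ --- hence of size $\lesssim\ep\de^{-1/2}$ in $L^2(\Sigma_0)$ --- while outside $S^k(-\de,0)$ only the mollified $L^2$-function and regular parts survive, of size $\lesssim\ep$.

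\textbf{Step 2: re-solving the constraints and restoring \eqref{compatibility}.} The mollified matter data must be completed to an admissible set in the sense of Definition~\ref{datachoice}: one solves the coupled elliptic system \eqref{Kellipticequation}, \eqref{eq:gamma.constraint} for $(\gamma^{(\de)},K^{(\de)})$ by the same contraction-mapping argument as in Lemma~\ref{lem:data.exist!} (smallness from $\ep\le\ep_0$), with elliptic regularity promoting the solution to be smooth and fixing $\gamma^{(\de)}_{asymp}$ from its logarithmic growth at infinity. The only genuine issue is the integral compatibility condition \eqref{compatibility}, which is the solvability condition for \eqref{Kellipticequation} with a decaying $K$: it holds for the original data, but for the mollified data leaves a residual $r_j^{(\de)}:=\int_{\Sigma_0}e^{2\gamma^{(\de)}}(\phi')^{(\de)}\rd_j\phi^{(\de)}$, which --- using that $\phi'_0$ and $\rd\phi_0$ are of bounded variation, a consequence of \eqref{eq:T.assumptions.1}--\eqref{eq:T.assumptions.3} --- is $O(\de\ep^2)$ and hence tends to $0$. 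To cancel it I would perturb $(\phi')^{(\de)}\mapsto (\phi')^{(\de)}+\sum_{l=1,2}\mu^{(\de)}_l\Psi_l$, where $\Psi_1,\Psi_2$ are fixed smooth compactly supported functions suitably approximating $\rd_1\phi_0,\rd_2\phi_0$, and $\mu^{(\de)}_1,\mu^{(\de)}_2$ solve a $2\times 2$ linear system whose coefficient matrix is a small perturbation of the Gram matrix of $(\rd_1\phi_0,\rd_2\phi_0)$; its invertibility, with a quantitative lower bound on the determinant, is furnished precisely by the left-hand side of \eqref{eq:nondegeneracy.in.thm} (this is the sole role of the non-degeneracy assumption, cf.\ Remark~\ref{rmk:nondegeneracy}, and the use of the weak $H^{-3}$ norm together with the threshold $\ep^{5/2}$ rather than $\ep^2$ is what keeps the hypothesis robustly attainable). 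Consequently $|\mu^{(\de)}_l|\lesssim\de\ep^{-1/2}$, which tends to $0$ and stays well below $\ep$ throughout the admissible range $\de\in(0,\ep^{1/s'}]$ after shrinking $\ep_0$ (note $\ep^{1/s'}\le\ep^2$ since $s'<1/2$); absorbing $\sum_l\mu^{(\de)}_l\Psi_l$ into $(\rphi')^{(\de)}$ then preserves the support and all the estimates, at the cost of letting the constants grow to $3\ep$ and $2R$. One could equally perturb $\phi^{(\de)}$ in place of $(\phi')^{(\de)}$; \eqref{eq:nondegeneracy.in.thm} is symmetric enough for either. Iterating the constraint solution together with this correction closes the construction and yields admissible data for three $\de$-impulsive waves with parameters $(3\ep,s',s'',2R,\upkappa_0)$; re-establishing \eqref{compatibility} while keeping the perturbation both small and vanishing, and interlocking it with the coupled elliptic solve, is the step I expect to be the main obstacle.

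\textbf{Step 3: convergence as $\de\to 0$.} Convolution against an approximate identity converges strongly in every Sobolev space to which the function belongs, so (using the estimates of Definition~\ref{roughdef}) $\rphi^{(\de)}\to\rphi$ and $\tphi^{(\de)}\to\tphi$ in $H^{1+s'}(\Sigma_0)$ and $(\rphi')^{(\de)}\to\rphi'$, $(\tphi')^{(\de)}\to\tphi'$ in $H^{s'}(\Sigma_0)$; since also $\sum_l\mu^{(\de)}_l\Psi_l\to 0$, the matter sources of \eqref{Kellipticequation}, \eqref{eq:gamma.constraint} converge in the relevant weighted spaces, and the continuity of the elliptic solution operators built into the contraction argument gives $(\gamma^{(\de)}_{asymp},\widetilde{\gamma}^{(\de)},K^{(\de)})\to(\gamma_{asymp},\widetilde{\gamma},K)$ in $\RR\times H^2_{-1/8}(\Sigma_0)\times H^1_{7/8}(\Sigma_0)$, which is the asserted approximation property.
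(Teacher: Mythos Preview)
The overall strategy --- mollify the wave data, re-solve the constraints, and correct the compatibility condition \eqref{compatibility} by a two-parameter perturbation --- matches the paper's. However, there are two genuine gaps in how you implement Step~2.

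\textbf{The perturbation functions do not connect to the nondegeneracy condition.} You take $\Psi_l$ to be smooth approximations of $\rd_l\phi$, so your $2\times 2$ coefficient matrix is essentially the $L^2$ Gram matrix of $(\rd_1\phi,\rd_2\phi)$. But \eqref{eq:nondegeneracy.in.thm} is a lower bound on the $H^{-3}$ norms of the $L^2$-orthogonal residuals, and an $H^{-3}$ lower bound does \emph{not} give an $L^2$ lower bound; so \eqref{eq:nondegeneracy.in.thm} furnishes no bound on your determinant. The paper's approach (Lemma~\ref{lem:nondegeneracy.consequence}) is dual: from the $H^{-3}$ lower bound it extracts $\psi_1,\psi_2$ with $\|\psi_j\|_{H^3}\le 1$ and $\bigl|\det[\int\psi_i\,\rd_j\phi]\bigr|\gtrsim\ep^{5/2}$. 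The point of formulating nondegeneracy in $H^{-3}$ rather than $L^2$ is precisely so that the test functions added to $\uphi_{reg}$ can be taken smooth enough (in $H^3$) not to spoil the regularity bounds on $\rphi'$.

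\textbf{The residual estimate ignores $\gamma^{(\de)}\neq\gamma$.} Your claim $r_j^{(\de)}=O(\de\ep^2)$ compares the mollified compatibility integral against the original one, but the original carries weight $e^{2\gamma}$ while yours carries $e^{2\gamma^{(\de)}}$; the mismatch contributes $O(\ep^2\|\gamma^{(\de)}-\gamma\|_{L^\infty})$, which is not a priori $O(\de)$ since $\gamma^{(\de)}$ is part of the unknown. The paper handles this circularity by prescribing $\underline{\phi}:=e^{\gamma}\phi'$ rather than $\phi'$ (so that the $\tphi'-X_k\tphi$ estimate decouples from $\gamma$; see the opening of Section~\ref{sec:appendix}), setting up a two-parameter family $\uphi^{\bm\lambda,(\de)}$, and applying the Schauder-based Lemma~\ref{lem:constraint}: at the fixed-point stage one only needs the crude bound $|\bm\lambda|\lesssim\ep^{3/2}$ coming from an $O(\ep^3)$ residual, and only \emph{after} obtaining the fixed point does one bootstrap $|\bm\lambda|\to 0$ by coupling \eqref{eq:check.convergence.2} with the difference estimate \eqref{eq:check.convergence.3} for the elliptic system.
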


Since in what follows we will need to consider the data and the solution for both the limit and the approximations, let us introduce the following conventions for the remainder of the section: \textbf{(1) we will use subscripts ${}_0$ to denote \emph{data} quantities, and quantities without the ${}_0$ subscripts corresponds to those in the \emph{solution}, and (2) a superscript ${}^{(\de)}$ denotes quantities from the approximating $\de$-impulsive waves, while a superscript ${}^{(0)}$ denotes quantities from the limit impulsive waves (for both the data and the solution).}

Suppose now we are given data $(\phi_0, \phi'_0, \gamma_0, K_0)$ as in Definition~\ref{roughdef} with parameters $\ep$, $s'$, $s''$, $R$, $\upkappa_0$. Take also $\alp = 10^{-2}$. Assuming $\ep \in (0, \ep_0]$ for a sufficiently small $\ep_0$, we apply Lemma~\ref{lem:data.approx} to obtain a $1$-parameter family of $\de$-impulsive wave data $(\phi_0^{(\de)}, (\phi')_0^{(\de)}, \gamma_0^{(\de)}, K_0^{(\de)})$.

By Theorem~\ref{smooththeorem} (local existence for $\de$-impulsive waves), there exists $\de_0>0$ such that for all $\de \in (0,\de_0]$, the initial data set $(\phi_0^{(\de)}, (\phi')_0^{(\de)}, \gamma_0^{(\de)}, K_0^{(\de)})$ gives rise to a unique solution in $[0,1]\times \RR^2$, which we denote as $(\gamma^{(\de)}, (\bt^i)^{(\de)}, N^{(\de)},\phi^{(\de)})$. \textbf{For the remainder of the section, consider such one-parameter family of $\de$-impulsive wave solutions $(\gamma^{(\de)}, (\bt^i)^{(\de)}, N^{(\de)},\phi^{(\de)})$.}

In order to prove Theorem~\ref{roughtheorem}, our goal will be to show that there exists a sequence $\de_j \to 0$ such that 
\begin{enumerate}
\item one can take appropriate limits of $(\gamma^{(\de_j)}, (\bt^i)^{(\de_j)}, N^{(\de_j)},\phi^{(\de_j)})$ as $j\to \infty$ (\textbf{Section~\ref{sec:extract.limit}}),
\item and the limit is a weak solution which satisfies all the bounds stated in Theorem~\ref{roughtheorem} (\textbf{Section~\ref{sec:limit.is.solution}}).
\end{enumerate}
See \textbf{Section~\ref{sec:conclusion.rough.theorem}} for the conclusion of the proof of Theorem~\ref{roughtheorem}.

\subsection{Extracting a limit}\label{sec:extract.limit}

Let $(\gamma^{(\de)}, (\bt^i)^{(\de)}, N^{(\de)},\phi^{(\de)})$ to be as in the end of Section~\ref{sec:approx.data}.

The goal of this subsection is to extract a suitable limit. We will combine the bounds for the $\de$-impulsive waves with various compactness results and the following standard Aubin--Lions lemma:
\begin{lemma}[Aubin--Lions lemma]\label{lem:AL}
Let $X_0\subseteq X \subseteq X_1$ be three Banach spaces such that the embedding $X_0\subseteq X$ is compact and the embedding $X \subseteq X_1$ is continuous. For $T>0$ and $q>1$, let
$$W:= \{ v \in L^\infty([0,T]; X_0): \dot{v} \in L^q([0,T]; X_1) \},$$
where $\dot{}$ denotes the (weak) derivative in the variable on $[0,T]$. 

Then $W$ embeds compactly into $C^0([0,T]; X)$.
\end{lemma}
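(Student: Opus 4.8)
The plan is to deduce the claim from the Arzel\`a--Ascoli theorem, the bridge being Ehrling's interpolation inequality, which transports the compactness of $X_0 \hookrightarrow X$ and the time-equicontinuity coming from control of $\dot v$ in $X_1$ onto the intermediate space $X$. First I would prove Ehrling's lemma: for every $\eta > 0$ there is $C_\eta > 0$ with $\|u\|_X \le \eta \|u\|_{X_0} + C_\eta \|u\|_{X_1}$ for all $u \in X_0$. This is the usual contradiction argument --- if it failed there would be $\eta_0 > 0$ and $u_j \in X_0$ with $\|u_j\|_X = 1$, $\|u_j\|_{X_0} \le \eta_0^{-1}$, $\|u_j\|_{X_1} \to 0$; compactness of $X_0 \hookrightarrow X$ yields a subsequence converging in $X$ to some $u$ with $\|u\|_X = 1$, while continuity of $X \hookrightarrow X_1$ forces $u = 0$, a contradiction.

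Next, let $(v_n) \subseteq W$ be bounded, say $\|v_n\|_{L^\infty([0,T];X_0)} + \|\dot v_n\|_{L^q([0,T];X_1)} \le M$. Using that (a representative of) $v_n$ is absolutely continuous into $X_1$ with $v_n(t) - v_n(s) = \int_s^t \dot v_n(\tau)\, d\tau$, H\"older's inequality and $q > 1$ give
\[
\|v_n(t) - v_n(s)\|_{X_1} \le M\,|t-s|^{1 - \f 1q}, \qquad 0 \le s \le t \le T.
\]
Combining this with Ehrling's inequality,
\[
\|v_n(t) - v_n(s)\|_X \le \eta\,\|v_n(t) - v_n(s)\|_{X_0} + C_\eta\,\|v_n(t) - v_n(s)\|_{X_1} \le 2M\eta + C_\eta M\,|t-s|^{1-\f 1q},
\]
so for any $\varepsilon > 0$, choosing first $\eta = \varepsilon/(4M)$ and then $|t-s|$ small makes the right-hand side $< \varepsilon$. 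Hence $(v_n)$ is uniformly equicontinuous as a family of maps $[0,T] \to X$; in particular each $v_n$ admits a representative in $C^0([0,T];X)$, so that $W \subseteq C^0([0,T];X)$ is meaningful.

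It remains to extract a convergent subsequence. Fix a countable dense $D \subseteq [0,T]$ contained in the full-measure set on which $\|v_n(t)\|_{X_0} \le M$ for every $n$. For each $t \in D$, the set $\{v_n(t) : n \in \mathbb{N}\}$ lies in a bounded subset of $X_0$, hence is precompact in $X$ by the compact embedding; a diagonal extraction produces a subsequence, still written $(v_n)$, converging in $X$ at each point of $D$. The equicontinuity just established promotes this to a Cauchy sequence in $C^0([0,T];X)$, whose uniform limit again lies in $C^0([0,T];X)$. Therefore every bounded sequence in $W$ has a subsequence convergent in $C^0([0,T];X)$, which is precisely the asserted compact embedding.

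The only genuinely delicate point --- hence the main obstacle --- is that an element of $W$ is a priori only $X_0$-valued for almost every $t$ and is continuous merely into the weaker space $X_1$; one must therefore take care in giving meaning to ``precompactness of $\{v_n(t)\}$ in $X$'' and in the choice of representatives, which is why the argument passes through a countable dense subset chosen inside a full-measure set and leans on Ehrling's inequality rather than a direct application of Arzel\`a--Ascoli. Everything else is routine.
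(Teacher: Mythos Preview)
The paper does not supply a proof of this lemma; it is simply invoked as ``the following standard Aubin--Lions lemma'' and used as a black box in the limiting argument of Section~\ref{sec:extract.limit}. Your argument is correct and is exactly the standard route: Ehrling's interpolation inequality furnishes equicontinuity in $X$ from the $L^\infty_t X_0$ bound together with the H\"older-in-time control of $\dot v$ in $X_1$, and a diagonal extraction on a countable dense set (chosen inside the full-measure set where the $X_0$ bound holds pointwise) combined with that equicontinuity yields a subsequence convergent in $C^0([0,T];X)$. Your closing remark about representatives is the only genuinely nontrivial point, and you have handled it appropriately.
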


We now begin extracting a limit of $(\gamma^{(\de)}, (\bt^i)^{(\de)}, N^{(\de)},\phi^{(\de)})$. For the remainder of the subsection, we will repeatedly extract subsequences of $\de$, which will always be denoted by $\de_j$ without relabelling.

\begin{proposition}[Limiting metric]\label{prop:limit.metric}
There exists a sequence $\de_j \to 0$ and $\gamma^{(0)} = -\gamma_{asymp}^{(0)}\om(|x|) \log |x| + \widetilde{\gamma}^{(0)}$, $(\bt^i)^{(0)}$, and $N^{(0)} = 1+ N_{asymp}^{(0)}(t) \om(|x|) \log |x| + \widetilde{N}^{(0)}$ such that after writing $\gamma^{(\de)} = -\gamma_{asymp}^{(\de)}\om(|x|) \log |x| + \widetilde{\gamma}^{(\de)}$ and $N^{(\de)} = 1+ N_{asymp}^{(\de)}(t) \om(|x|) \log |x| + \widetilde{N}^{(\de)}$, it holds that
$$(\gamma_{asymp}^{(\de_j)}, \widetilde{\gamma}^{(\de_j)}, (\bt^i)^{(\de_j)}, N_{asymp}^{(\de_j)}, \widetilde{N}^{(\de_j)}) \to (\gamma_{asymp}^{(0)}, \widetilde{\gamma}^{(0)}, (\bt^i)^{(0)}, N_{asymp}^{(0)}, \widetilde{N}^{(0)}) $$
in $\RR \times C^0([0,1]; C^1(\RR))\times C^0([0,1]; C^1(\RR)) \times C^0([0,1];\RR) \times C^0([0,1]; C^1(\RR))$.
Moreover, the following additional estimates hold for all $t \in [0,1]$:
\begin{equation}\label{eq:limit.metric.est.asymp}
0\leq \gamma_{asymp}^{(0)} \ls \ep^{\f 32},\quad |N_{asymp}^{(0)}|(t) \ls \ep^{\f 32},
\end{equation}
\begin{equation}\label{eq:limit.metric.est}
\sum_{\widetilde\mfg^{(0)} \in \{\widetilde{\gamma}^{(0)}, (\bt^i)^{(0)}, \widetilde{N}^{(0)} \} } (\|\widetilde\mfg^{(0)} \|_{W^{1,\i}_{1-\alp}(\Sigma_t)} + \|\rd^2_x \widetilde\mfg^{(0)} \|_{L^\i_{2-\alp}(\Sigma_t)} ) \ls \ep^{\f 32},
\end{equation}
and the following holds for a.e.~$t\in [0,1]$:
\begin{equation}\label{eq:limit.metric.est.2}
|\rd_t N_{asymp}^{(0)}|(t) + \|\rd_t \widetilde{\mfg}^{(0)} \|_{W^{1,\f{2}{s'-s''}}_{1-s'+s''-2\alp}(\Sigma_t)} \ls \ep^{\f 32}.
\end{equation}
\end{proposition}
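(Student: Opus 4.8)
The plan is to extract the limiting metric by combining the uniform (in $\de$) bounds for the $\de$-impulsive wave solutions provided by Theorem~\ref{smooththeorem} with standard weak-$*$ compactness and the Aubin--Lions lemma (Lemma~\ref{lem:AL}). First I would record the uniform bounds: from \eqref{eq:metric.decomposition.in.thm}--\eqref{metric.est.t} applied to each $(\gamma^{(\de)}, (\bt^i)^{(\de)}, N^{(\de)},\phi^{(\de)})$ we get, uniformly in $\de$, that $\gamma_{asymp}^{(\de)}$ is bounded in $\RR$, $N_{asymp}^{(\de)}$ is bounded in $C^{0,1}([0,1])$, and $\widetilde\mfg^{(\de)} \in \{\widetilde\gamma^{(\de)}, (\bt^i)^{(\de)}, \widetilde N^{(\de)}\}$ are bounded in $L^\i([0,1]; W^{1,\i}_{1-\alp} \cap W^{2,\i}_{2-\alp})$ with $\rd_t\widetilde\mfg^{(\de)}$ bounded in $L^\i([0,1]; W^{1,\f{2}{s'-s''}}_{1-s'+s''-2\alp})$ (and $\rd_t N_{asymp}^{(\de)}$ bounded in $L^\i([0,1])$). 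Since $\RR$ and $C^{0,1}([0,1])$ embed compactly into $\RR$ and $C^0([0,1])$ respectively, a diagonal argument extracts a sequence $\de_j \to 0$ with $\gamma_{asymp}^{(\de_j)} \to \gamma_{asymp}^{(0)}$ in $\RR$ and $N_{asymp}^{(\de_j)} \to N_{asymp}^{(0)}$ in $C^0([0,1])$, with $N_{asymp}^{(0)} \in C^{0,1}([0,1])$.

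Next I would handle $\widetilde\gamma^{(\de)}$, $(\bt^i)^{(\de)}$, $\widetilde N^{(\de)}$. Fix an arbitrary ball $B(0,\rho)\subseteq \RR^2$. On $B(0,\rho)$ the weighted norms are comparable to unweighted ones, so $\widetilde\mfg^{(\de)}$ is bounded in $L^\i([0,1]; W^{2,\i}(B(0,\rho)))$ and $\rd_t\widetilde\mfg^{(\de)}$ is bounded in $L^\i([0,1]; W^{1,p}(B(0,\rho)))$ for $p = \f{2}{s'-s''}$. Applying Lemma~\ref{lem:AL} with $X_0 = W^{2,\i}(B(0,\rho))$, $X = C^1(\overline{B(0,\rho)})$ (note $W^{2,\i} \hookrightarrow\hookrightarrow C^{1,\theta}\hookrightarrow C^1$ by Morrey/Arzel\`a--Ascoli), $X_1 = W^{1,p}(B(0,\rho))$ and $q = \infty$ (which is admissible after noting $L^\infty([0,1];X_1)\subseteq L^{q'}([0,1];X_1)$ for any finite $q'>1$, so we may apply the lemma with that $q'$), we get compactness in $C^0([0,1]; C^1(\overline{B(0,\rho)}))$. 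A diagonal argument over $\rho \to \infty$ and a further subsequence extraction yields limits $\widetilde\gamma^{(0)}$, $(\bt^i)^{(0)}$, $\widetilde N^{(0)}$ with $\widetilde\mfg^{(\de_j)} \to \widetilde\mfg^{(0)}$ in $C^0([0,1]; C^1(K))$ for every compact $K$; combining with the decay of the weights (which forces the limit and the tails to be small uniformly at infinity) upgrades this to convergence in $C^0([0,1]; C^1(\RR^2))$ in the sense stated, i.e.~with respect to the (unweighted) $C^1(\RR)$ topology as asserted.

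Finally, the estimates \eqref{eq:limit.metric.est.asymp}--\eqref{eq:limit.metric.est.2} for the limit follow from lower semicontinuity of the relevant norms under the modes of convergence obtained: $\gamma_{asymp}^{(0)} \geq 0$ and the bound $\ls \ep^{\f 32}$ pass to the limit directly from the scalar convergence; for the weighted $W^{1,\i}$ and $W^{2,\i}$ bounds we use that $C^1_{loc}$-convergence plus uniform weighted bounds gives the weighted bound on the limit (testing against characteristic functions of bounded sets and using the uniform tail control); and for the $\rd_t$ bounds \eqref{eq:limit.metric.est.2}, since $\rd_t\widetilde\mfg^{(\de_j)}$ is uniformly bounded in $L^\i([0,1]; W^{1,p}_{1-s'+s''-2\alp})$ which is the dual of a separable space, we extract a further weak-$*$ convergent subsequence whose limit must equal $\rd_t\widetilde\mfg^{(0)}$ (distributionally, using the strong convergence of $\widetilde\mfg^{(\de_j)}$), and weak-$*$ lower semicontinuity of the norm gives the desired bound for a.e.~$t$; similarly for $\rd_t N_{asymp}^{(0)}$. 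The main obstacle is bookkeeping the weighted norms at spatial infinity while invoking Aubin--Lions (which is naturally a local-in-space statement): one must carefully combine the local compactness with the uniform decay of the weights to get genuine convergence in the global (weighted) topology rather than merely locally, and to ensure the limiting decomposition into the $\log|x|$ part and the $\widetilde{\mfg}^{(0)}$ part is preserved.
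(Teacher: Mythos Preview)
Your proposal is correct and follows essentially the same approach as the paper: extract limits for the asymptotic constants via Bolzano--Weierstrass/Arzel\`a--Ascoli, apply Aubin--Lions (Lemma~\ref{lem:AL}) for $\widetilde\gamma, \bt, \widetilde N$, and pass the uniform bounds to the limit via lower semicontinuity. The only minor difference is that you apply Aubin--Lions locally on balls with a diagonal argument, whereas the paper applies it directly with the global choices $X_0 = C^2(\RR^2)$, $X = C^1(\RR^2)$, $X_1 = W^{1,4}_{-1}(\RR^2)$ (the compactness of $X_0\subseteq X$ resting on Arzel\`a--Ascoli together with the weighted decay); both routes are valid.
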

\begin{proof}
By Theorem~\ref{smooththeorem}, $\gamma^{(\de)}_{asymp}$ is a set of bounded numbers obeying \eqref{eq:limit.metric.est.asymp}, which by the Bolzano--Weierstrass theorem has a limit for some $\de_j$. In particular, the bound for $\gamma_{asymp}^{(0)}$ in \eqref{eq:limit.metric.est.asymp} holds.

By Theorem~\ref{smooththeorem}, $N_{asymp}^{(\de_j)}$ are bounded $C^1$ functions of $t$, and thus by the Arzel\`a--Ascoli theorem on $C^0([0,1])$, has a subsequential strong limit in $C^0([0,1])$. Hence, the bound for $N_{asymp}^{(0)}$ in \eqref{eq:limit.metric.est.asymp} holds. Finally, the $C^1$ bound and the uniform convergence imply the $L^\i$ bound for $|\rd_t N_{asymp}^{(0)}|(t)$ in \eqref{eq:limit.metric.est.2}. 

For the convergences of $\widetilde{\gamma}$, $\bt^i$ and $\widetilde{N}$, we apply Lemma~\ref{lem:AL} with $T=1$, $X_0 = C^2(\RR^2)$, $X = C^{1}(\RR^2)$, $X_1 = W^{1,4}_{-1}(\RR^2)$, and $W$ as in Lemma~\ref{lem:AL}. (The compactness of $X_0 \subseteq X$ follows from the Arzel\`a--Ascoli theorem.) The uniform boundedness of $\widetilde{\gamma}^{(\de_j)}$, $(\bt^i)^{(\de_j)}$ and $\widetilde{N}^{(\de_j)}$ in $W$ is an immediate consequence of the estimates in  Theorem~\ref{smooththeorem}. Finally, note that by Theorem~\ref{smooththeorem}, for all sufficiently small $\de$, $\gamma^{(\de)}$, $\bt^{(\de)}$ and $N^{(\de)}$ satisfy analogous bounds as \eqref{eq:limit.metric.est.asymp}--\eqref{eq:limit.metric.est.2}, which imply the estimates in \eqref{eq:limit.metric.est.asymp}--\eqref{eq:limit.metric.est.2}. \qedhere
\end{proof}

\begin{proposition}[Limiting wavefront]\label{prop:limit.wavefront}
There exists a subsequence $\de_j$ (of that in Proposition~\ref{prop:limit.metric}, but not relabelled) such that for $k=1,2,3$, the following holds:
\begin{enumerate}
\item The eikonal functions $u_k^{(\de_j)}$ converges in $(C^1_{loc})_{t,x}([0,1]\times \mathbb R^2)$ to a limit $u_k^{(0)}$. Moreover, $u_k^{(0)}$ is a $(C^{1,1}_{loc})_{t,x}$ function satisfying the following estimate:
\begin{equation}\label{eq:limit.u}
\|\rd_i u_k^{(0)} \|_{W^{1,\infty}(\Sigma_t)} \ls 1.
\end{equation}
\item The vector fields\footnote{Here, we use the obvious notation that for every $\de$, $L_k^{(\de)}$, $E_k^{(\de)}$ and $X_k^{(\de)}$ denote respectively the $L_k$, $E_k$ and $X_k$ vector fields arising from the initial data set $(\phi^{(\de)},(\phi')^{(\de)},\gamma^{(\de)},K^{(\de)})$.} $L_k^{(\de_j)}$, $E_k^{(\de_j)}$ and $X_k^{(\de_j)}$ converge uniformly on compact sets to limiting vector fields $L_k^{(0)}$, $E_k^{(0)}$ and $X_k^{(0)}$. Moreover, on any compact set, $L_k^{(0)}$, $E_k^{(0)}$ and $X_k^{(0)}$ are $(C_{loc}^{0,1})_{t,x}$ vector fields\footnote{Notice that higher regularity holds if we separately consider different components. For instance, $E_k^{(0)}$ and $X_k^{(0)}$ are in $W^{2,\infty}_{loc}$ in spacetime; and while our estimates do not necessarily show that $L_k^{(0)}$ is $W^{2,\infty}_{loc}$ in spacetime, it does show that $\rd_i (L_k^j)^{(0)}$, $\rd_t (L_k^j)^{(0)}$ and $\rd_i (L_k^t)^{(0)}$ are spacetime-Lipschitz, and that $\rd_t (L_k^t)^{(0)}$ is Lipschitz in space.} with the following estimates (recall that $\alp = 10^{-2}$):
\begin{equation}\label{eq:limit.frames}
|L^{\beta}_k|+	|X^i_k|+ |E^i_k| \lesssim  \la x\ra^{\epsilon},\quad  |\rd_t L_k^t| + \la x \ra (|\rd_t L_k^i| + |\rd_i L_k^\bt| + |\partial X^i_k|+ |\partial E^i_k|)  \lesssim  \epsilon^{\frac{5}{4}}  \cdot\la x\ra^{4\alp}.
\end{equation}
\end{enumerate}
\end{proposition}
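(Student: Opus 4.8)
The plan is a soft compactness argument. The only inputs are the uniform-in-$\de$ estimates that Theorem~\ref{smooththeorem} already provides for the one-parameter family of $\de$-impulsive solutions --- the wavefront bound \eqref{eq:wavefront}, the frame bound \eqref{eq:frames.in.thm}, and the metric bounds \eqref{eq:metric.decomposition.in.thm}--\eqref{metric.est.t} --- from which I would extract convergent subsequences of $\{u_k^{(\de)}\}$, $\{L_k^{(\de)}\}$, $\{E_k^{(\de)}\}$, $\{X_k^{(\de)}\}$ and then read off the regularity of the limits by lower semicontinuity of the relevant norms under uniform convergence.

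For the eikonal functions, I would first note that for every $\de < \de_0$ one has $\|\rd_i u_k^{(\de)}\|_{W^{1,\i}(\Sigma_t)}\ls 1$ on $[0,1]$ by Theorem~\ref{smooththeorem}, while $(u_k^{(\de)})|_{\Sigma_0} = a_k + c_{kj}x^j$ is $\de$-independent, and that the time derivative is algebraically determined: by \eqref{dtui} and \eqref{ucartderivative} (together with $\bar g(X_k,X_k)=1$), $\rd_t u_k^{(\de)} = (\bt^q)^{(\de)}\rd_q u_k^{(\de)} + N^{(\de)}\,(\mu_k^{(\de)})^{-1}$ with $\mu_k^{(\de)} = e^{\gamma^{(\de)}}(\de^{ij}\rd_i u_k^{(\de)}\rd_j u_k^{(\de)})^{-1/2}$. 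Hence $u_k^{(\de)}$ and $\rd u_k^{(\de)}$ are uniformly bounded on compact subsets of $[0,1]\times\RR^2$, and $\rd_x u_k^{(\de)}$ is moreover uniformly Lipschitz there (in the $\rd_x$ directions by \eqref{eq:wavefront}, in the $\rd_t$ direction by differentiating the formula for $\rd_t u_k^{(\de)}$ spatially, which produces only $\rd_x$ of uniformly controlled quantities). Arzel\`a--Ascoli then yields a subsequence $\de_j$ (refining that of Proposition~\ref{prop:limit.metric}) along which $u_k^{(\de_j)}\to u_k^{(0)}$ and $\rd_x u_k^{(\de_j)}\to\rd_x u_k^{(0)}$ uniformly on compact sets; feeding this back into the formula for $\rd_t u_k$ and using the convergences of $(\bt^q)^{(\de_j)}$, $N^{(\de_j)}$ from Proposition~\ref{prop:limit.metric} (hence of $\mu_k^{(\de_j)}$) upgrades the convergence to $(C^1_{loc})_{t,x}$. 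A finite extra extraction handles $k = 1,2,3$ at once. Lower semicontinuity of $\|\cdot\|_{W^{1,\i}(\Sigma_t)}$ gives \eqref{eq:limit.u}, and the uniform Lipschitz bounds on $\{u_k^{(\de_j)}\}$ and $\{\rd u_k^{(\de_j)}\}$ pass to $u_k^{(0)}$, giving the claimed $(C^{1,1}_{loc})_{t,x}$ regularity.

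For the null frames, the point I would use is that on $[0,1]\times\RR^2$ the scalar $\de^{ij}\rd_i u_k\rd_j u_k$ stays bounded above and below away from $0$ (it is $\approx 1$, $u_k$ remaining close to its Minkowskian value), so the frame components are explicit continuous functions of $(\gamma,\bt^i,N,\rd_x u_k)$: $X_k^i = e^{-\gamma}\de^{ij}\rd_j u_k\,(\de^{lm}\rd_l u_k\rd_m u_k)^{-1/2}$ (sign pinned down by \eqref{X^i(0)formula} and continuity), $E_k^1 = -X_k^2$, $E_k^2 = X_k^1$ by \eqref{EXinellipticcoord}, and $L_k^t = N^{-1}$, $L_k^i = -N^{-1}\bt^i - X_k^i$ by \eqref{nXEL}--\eqref{defnormal}. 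Consequently, along the subsequence $\de_j$ above, $L_k^{(\de_j)}$, $E_k^{(\de_j)}$, $X_k^{(\de_j)}$ converge uniformly on compact sets, to the frame vector fields $L_k^{(0)}, E_k^{(0)}, X_k^{(0)}$ attached to $(\gamma^{(0)},(\bt^i)^{(0)},N^{(0)},u_k^{(0)})$. Since \eqref{eq:frames.in.thm} holds uniformly for all $\de<\de_0$, I would then pass to the limit and invoke lower semicontinuity of the weighted $L^\i$ norms (weights $\la x\ra^{\ep}$, $\la x\ra^{4\alp}$) under uniform-on-compact-sets convergence to obtain \eqref{eq:limit.frames}; the uniform Lipschitz control of the components implicit in \eqref{eq:frames.in.thm} transfers the same way, giving the $(C^{0,1}_{loc})_{t,x}$ regularity of the limits and the sharper component-wise statements of the footnote.

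The one genuinely delicate point --- and the reason the argument is not completely automatic --- is the passage from $C^0$ to $C^1$ convergence of the $u_k^{(\de_j)}$ (and the attendant temporal control of the frame coefficients): because the elliptic gauge only gives \emph{spatial} regularity of $\gamma,\bt,N$ a priori, one cannot estimate $\rd_t u_k$ or $\rd_t(L_k,E_k,X_k)$ naively without losing a derivative. This is exactly circumvented by the first-order relation \eqref{dtui} --- equivalently, by the transport equations \eqref{Lmu}, \eqref{Lvartheta}, \eqref{Leta}, \eqref{Lchi} --- which express the $\rd_t$ derivatives through spatially-differentiated, uniformly controlled quantities; it is the same structure underlying the proof of \eqref{eq:wavefront} and \eqref{eq:frames.in.thm} in Theorem~\ref{smooththeorem}. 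Once those uniform bounds are available, the Arzel\`a--Ascoli and lower-semicontinuity steps are routine, and only a finite diagonal extraction over the finitely many quantities $\{u_k,L_k,E_k,X_k\}_{k=1}^3$ remains.
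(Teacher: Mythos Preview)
Your proposal is correct and follows essentially the same approach as the paper: uniform-in-$\de$ $C^2_{t,x}$ bounds on $u_k^{(\de)}$ (spatial ones from \eqref{eq:wavefront}, time ones via the algebraic relation for $\rd_t u_k$), Arzel\`a--Ascoli, then passage of the estimates to the limit; the paper writes $\rd_t u_k^{(\de)} = -N^{(\de)}(L_k^{(\de)})^i\rd_i u_k^{(\de)}$ from $L_k u_k = 0$ and $(L_k)^t = N^{-1}$, which is algebraically equivalent to your use of \eqref{dtui}. The one cosmetic difference is the frame step: the paper applies Arzel\`a--Ascoli directly to $L_k^{(\de)},E_k^{(\de)},X_k^{(\de)}$ using the $C^1$ bounds \eqref{eq:frames.in.thm}, whereas you express the frame components as explicit continuous functions of $(\gamma,\bt,N,\rd_x u_k)$ and deduce convergence from that of the ingredients --- your route is slightly longer but has the side benefit of immediately identifying the limiting frames as those attached to $(\gamma^{(0)},(\bt^i)^{(0)},N^{(0)},u_k^{(0)})$.
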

\begin{proof}
\pfstep{Step~1: Limit for $u_k$} By the Arzel\`a--Ascoli Theorem, in order to prove $(C^1_{loc})_{t,x}$ convergence, it suffices to prove a $C^{2}_{t,x}$ bound for $u_j^{(\de)}$ on any compact set, uniformly in $\de$. The bounds for the spatial derivatives follow from the fact that \eqref{eq:wavefront} holds for all small enough $\de$ by Theorem~\ref{smooththeorem}. To obtain the bounds for $\rd_t^2 u_k^{(\de)}$ and $\rd_x \rd_t u_k^{(\de)}$, we combine $L_k^{(\de)} u_k^{(\de)} = 0$ and $(L_k^{(\de)})^t = \f 1{N^{(\de)}}$ to write $\rd_t u_k^{(\de)} =-  N^{(\de)} \cdot (L_k^{(\de)})^i \cdot \rd_i u_k^{(\de)}$, and use the bounds in \eqref{metric.est}, \eqref{metric.est.t} and \eqref{eq:frames.in.thm} (which hold for small $\de$ by Theorem~\ref{smooththeorem}) with the estimates on $\rd_i u_k^{(\de)}$.

The estimate \eqref{eq:limit.u} then follows from the $(C^1_{loc})_{t,x}$ convergence and the estimate \eqref{eq:wavefront} (for $\de>0$).

\pfstep{Step~2: Limit for $L_k$, $E_k$ and $X_k$} Using the Arzel\`a--Ascoli Theorem, in order to prove the convergence statement, we need uniform $C^1_{t,x}$ bounds for $L_k^{(\de)}$, $E_k^{(\de)}$ and $X_k^{(\de)}$ on compact sets. These bounds are consequences of Theorem~\ref{smooththeorem}, which states that \eqref{eq:frames.in.thm} holds for small $\de>0$. Using the estimate \eqref{eq:frames.in.thm} again then implies \eqref{eq:limit.frames}. \qedhere
\end{proof}

\begin{proposition}[Limiting $\phi$]\label{prop:limit.phi}
There exists a subsequence $\de_j$ (of that in Proposition~\ref{prop:limit.wavefront}, but not relabelled) such that the following holds:
\begin{enumerate}
\item There exists $\rphi^{(0)}$ such that $(\phi_{reg}^{(\de_j)}, \rd_t \rphi^{(\de_j)}) \to (\rphi^{(0)}, \rd_t \rphi^{(0)})$ in the $C^0([0,1], H^{1+s}(\mathbb R^2)) \times C^0([0,1], H^{s}(\mathbb R^2))$ norm for all $s<s'$. Moreover, for every $t\in [0,1]$,
\begin{equation}\label{eq:phi.limit.final.1}
\| \rphi^{(0)}\|_{H^{2+s'}(\Sigma_t)} + \| \rd_t \rphi^{(0)} \|_{H^{1+s'}(\Sigma_t)} \ls \ep.
\end{equation}
\item For each $k=1,2,3$, there exists $\tphi^{(0)}$ such that $(\tphi^{(\de_j)}, \rd_t \tphi^{(\de_j)}) \to (\tphi^{(0)}, \rd_t \tphi^{(0)})$ in the $C^0([0,1], H^{1+s}(\mathbb R^2)) \times C^0([0,1], H^{s}(\mathbb R^2))$ norm for all $s<s'$. Moreover, for every $t\in [0,1]$,
\begin{equation}\label{eq:phi.limit.final.2}
\| \tphi^{(0)}\|_{H^{1+s'}(\Sigma_t)} + \| \rd_t \tphi^{(0)} \|_{H^{s'}(\Sigma_t)} \ls \ep.
\end{equation}
\item For each $k=1,2,3$, the second distributional derivatives of $\tphi^{(0)}$ satisfy the following properties:
\begin{enumerate}
\item For every $t\in [0,1]$, $\rd^2_{\mu\nu} \tphi^{(0)} = T_{\mu\nu,k} + f_{\mu\nu,k}$, where $T_{\mu\nu,k}$ is a signed Radon measure with $\mathrm{supp}(T_{\mu\nu,k}) \subseteq \{ u_k^{(0)} = 0\}$, and 
$$T.V._{|\Sigma_t}(T_{\mu\nu,k}) +  \|f_{\mu\nu,k} \|_{L^2(\Sigma_t)} \ls \ep.$$
\item For the vector fields $L_k^{(0)}$ and $E_k^{(0)}$ as in part 2 of Proposition~\ref{prop:limit.wavefront}, $L_k^{(0)} \tphi$ and $E_k^{(0)} \tphi$ are more regular and satisfy
$$\| L_k^{(0)} \tphi^{(0)} \|_{H^{1+s''}(\Sigma_t)} + \| E_k^{(0)} \tphi^{(0)} \|_{H^{1+s''}(\Sigma_t)} + \| \rd_t L_k^{(0)} \tphi^{(0)} \|_{H^{s''}(\Sigma_t)} + \| \rd_t E_k^{(0)} \tphi^{(0)} \|_{H^{s''}(\Sigma_t)} \ls \ep.$$
\end{enumerate}
\item For each $k=1,2,3$, $\rphi^{(0)}$ and $\tphi^{(0)}$ satisfy the following Lipschitz estimates for every $t\in [0,1]$:
\begin{equation}\label{eq:phi.limit.final.Lipschitz}
\|\rd\tphi^{(0)} \|_{L^\i(\Sigma_t)} \ls \ep,\quad \|\rd\rphi^{(0)} \|_{C^{0,\f{s''}2}(\Sigma_t)} \ls \ep.
\end{equation}
Moreover, for each $k=1,2,3$ 
and for every $t\in [0,1]$,
\begin{equation}\label{eq:phi.limit.final.Holder}
		\|\rd \tphi^{(0)} \|_{C^{0,\f{s''}2}(\Sigma_t \cap \{u_k > 0\})} + \|\rd \tphi^{(0)}  \|_{C^{0,\f{s''}2}(\Sigma_t \cap \{u_k < 0\})} \ls \ep.
\end{equation}
\item For each $k=1,2,3$, $\rphi^{(0)}$ and $\tphi^{(0)}$ are supported in $B(0,R)$ for every $t\in [0,1]$.
\end{enumerate}
\end{proposition}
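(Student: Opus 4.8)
The plan is to extract $\phi^{(0)}=\rphi^{(0)}+\sum_{k=1}^{3}\tphi^{(0)}$ along a further subsequence of the $\de_j$ from Proposition~\ref{prop:limit.wavefront}, combining the $\de$-uniform bounds of Theorem~\ref{smooththeorem} with the Aubin--Lions lemma (Lemma~\ref{lem:AL}) for strong convergence, and with weak-$*$ compactness and lower semicontinuity for the refined bounds. For parts 1 and 2, I apply Lemma~\ref{lem:AL} with $X_0=H^{1+s'}(\RR^2)$, $X=H^{1+s}(\RR^2)$ for $s<s'$, and $X_1=L^2(\RR^2)$: the embedding $X_0\hookrightarrow X$ is compact because, by Theorem~\ref{smooththeorem}, every $\rphi^{(\de)}$ and $\tphi^{(\de)}$ is supported in the fixed ball $B(0,R)$. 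By \eqref{eq:main.thm.rphi}, $\rphi^{(\de)}$ is bounded in $L^{\infty}([0,1];H^{2+s'})\subseteq L^{\infty}([0,1];X_0)$ with $\rd_t\rphi^{(\de)}$ bounded in $L^{\infty}([0,1];H^{1+s'})\subseteq L^{\infty}([0,1];L^2)$, so $\{\rphi^{(\de)}\}$ is bounded in $W$ and hence precompact in $C^0([0,1];H^{1+s})$; running the same argument with $X=H^{s}$ and $\rd_t\rphi^{(\de)}$ in place of $\rphi^{(\de)}$ (using $\Box_{g^{(\de)}}\rphi^{(\de)}=0$ to bound $\rd_t^{2}\rphi^{(\de)}$ in $L^{\infty}([0,1];L^2)$), treating each $\tphi^{(\de)}$ analogously via \eqref{eq:main.thm.tphi.1}, and diagonalizing over a sequence $s_n\uparrow s'$ yields a subsequence $\de_j$ and limits $\rphi^{(0)},\tphi^{(0)}$ with the stated convergences. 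For each $t$, the $H^{1+s}$-convergence of the $t$-slices together with their uniform boundedness in $H^{2+s'}$ (resp.~$H^{1+s'}$) forces weak convergence in these spaces to the same limit, so \eqref{eq:phi.limit.final.1}--\eqref{eq:phi.limit.final.2} follow from weak lower semicontinuity; part 5 is immediate since $L^2$-limits preserve the support in $B(0,R)$.

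For part 3(a), the key point is the $\de$-uniform bound $\|\rd^{2}_{\mu\nu}\tphi^{(\de)}\|_{L^1(\Sigma_t)}\ls\ep$. Decomposing $\rd^{2}_{\mu\nu}\tphi^{(\de)}$ into its restrictions to $\Sigma_t\cap S^k_\de$ and to $\Sigma_t\setminus S^k_\de$ and applying Cauchy--Schwarz together with $|\Sigma_t\cap S^k_\de\cap B(0,R)|\ls\de$ (a consequence of the wavefront being a small perturbation of a linear one, so $|\rd_x u_k^{(\de)}|\gtrsim 1$ on $\mathrm{supp}(\tphi^{(\de)})$), the estimate \eqref{eq:smooththeorem.1} gives $\|\rd^{2}\tphi^{(\de)}\|_{L^1(\Sigma_t\cap S^k_\de)}\ls\de^{\f12}\cdot\ep\de^{-\f12}=\ep$, while \eqref{eq:smooththeorem.2} gives $\|\rd^{2}\tphi^{(\de)}\|_{L^1(\Sigma_t\setminus S^k_\de)}\ls\ep$. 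Since by Step 1 one has $\rd^{2}_{\mu\nu}\tphi^{(\de_j)}(t)\to\rd^{2}_{\mu\nu}\tphi^{(0)}(t)$ in $\mathcal D'(\RR^2)$ for every $t$ (directly for the spatial second derivatives, and via $\Box_{g^{(\de)}}\tphi^{(\de)}=0$ for the ones involving $\rd_t$) and this sequence is bounded in the space of signed Radon measures, the limit $\rd^{2}_{\mu\nu}\tphi^{(0)}(t)$ is a finite signed Radon measure with $T.V.\ls\ep$. On any compact $K\subseteq\{u_k^{(0)}(t,\cdot)\neq 0\}$ one has $K\subseteq\Sigma_t\setminus S^k_\de$ for $\de$ small, so \eqref{eq:smooththeorem.2} and weak $L^2(K)$-compactness (the weak limit being forced to coincide with the distributional limit) give $\rd^{2}_{\mu\nu}\tphi^{(0)}(t)|_K\in L^2(K)$ with norm $\ls\ep$; exhausting $\{u_k^{(0)}(t,\cdot)\neq 0\}$ by such compacta produces $f_{\mu\nu,k}(t):=\rd^{2}_{\mu\nu}\tphi^{(0)}(t)|_{\{u_k^{(0)}\neq 0\}}\in L^2(\Sigma_t)$ with $\|f_{\mu\nu,k}(t)\|_{L^2}\ls\ep$, and $T_{\mu\nu,k}(t):=\rd^{2}_{\mu\nu}\tphi^{(0)}(t)|_{\{u_k^{(0)}=0\}}$ is the required measure. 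For part 3(b), since the components of $L_k^{(\de_j)},E_k^{(\de_j)}$ converge uniformly on compact sets (Proposition~\ref{prop:limit.wavefront}) and $\rd_\alp\tphi^{(\de_j)}\to\rd_\alp\tphi^{(0)}$ in $C^0([0,1];L^2_{loc})$, the functions $L_k^{(\de_j)}\tphi^{(\de_j)}$, $E_k^{(\de_j)}\tphi^{(\de_j)}$ and their $\rd_t$-derivatives converge in $\mathcal D'$ to $L_k^{(0)}\tphi^{(0)}$, $E_k^{(0)}\tphi^{(0)}$, $\rd_t(L_k^{(0)}\tphi^{(0)})$, $\rd_t(E_k^{(0)}\tphi^{(0)})$; since by \eqref{eq:main.thm.tphi.2} these are uniformly bounded in $L^{\infty}([0,1];H^{1+s'})$, resp.~$L^{\infty}([0,1];H^{s'})$, the claimed $H^{1+s''}$ and $H^{s''}$ bounds follow from lower semicontinuity.

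For part 4, \eqref{eq:roughtheorem.Lipschitz} holds for $\phi^{(\de)},\rphi^{(\de)}$ uniformly in $\de$ and \eqref{eq:smooththeorem.Lipschitz} holds for $\tphi^{(\de)}$ on $\Sigma_t\setminus S^k_\de$ uniformly, by Theorem~\ref{smooththeorem}. Passing $\rd\phi^{(\de_j)}\to\rd\phi^{(0)}$ a.e.~(along a further subsequence, via the $H^{s}$-convergence of Step 1) and using lower semicontinuity of $\|\cdot\|_{L^{\infty}(\Sigma_t)}$ gives the first bound in \eqref{eq:phi.limit.final.Lipschitz}; for $\rd\rphi^{(0)}$, the uniform $C^{0,\f{s''}2}$ bound together with $L^2$-convergence forces uniform convergence of $\rd\rphi^{(\de_j)}$ by Arzel\`a--Ascoli, so the $C^{0,\f{s''}2}$ bound is inherited. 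Finally, on any compact $K\subseteq\{u_k^{(0)}(t,\cdot)>0\}$ one has $K\subseteq\Sigma_t\setminus S^k_\de$ for small $\de$, so by \eqref{eq:smooththeorem.Lipschitz} and Arzel\`a--Ascoli $\rd\tphi^{(\de_j)}\to\rd\tphi^{(0)}$ uniformly on $K$ with the $C^{0,\f{s''}2}(K)$ bound; since $\{u_k^{(0)}(t,\cdot)>0\}$ is locally a half-space-type region exhausted by such $K$, \eqref{eq:phi.limit.final.Holder} follows, and symmetrically on $\{u_k^{(0)}(t,\cdot)<0\}$.

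The main obstacle is part 3(a): producing the splitting $\rd^{2}_{\mu\nu}\tphi^{(0)}=T_{\mu\nu,k}+f_{\mu\nu,k}$ with $T_{\mu\nu,k}$ a Radon measure supported on $\{u_k^{(0)}=0\}$. Its quantitative heart is the uniform $L^1(\Sigma_t)$ bound, obtained by playing the $\ep\de^{-\f12}$ bound \eqref{eq:smooththeorem.1} against the $O(\de)$ area of the singular zone $S^k_\de$; the improvement \eqref{eq:smooththeorem.2} away from $S^k_\de$ then confines the singular part to $\{u_k^{(0)}=0\}$. The one point requiring care is that these conclusions must hold for every $t$ with a single subsequence, which works precisely because — once the subsequence is fixed in Step 1 — all the relevant distributional limits of second derivatives are determined by the strong $C^0([0,1];H^{1+s})$ convergence and require no further extraction.
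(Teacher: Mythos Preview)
Your proposal is correct and follows essentially the same strategy as the paper: Aubin--Lions (exploiting the fixed compact support) for the strong convergences in parts 1--2, the uniform $L^1$ bound on $\rd^2\tphi^{(\de)}$ via Cauchy--Schwarz against the $O(\de)$ area of $S^k_\de\cap B(0,R)$ together with \eqref{eq:smooththeorem.2} for part 3(a), and Arzel\`a--Ascoli plus exhaustion by compacta of $\{u_k^{(0)}\neq 0\}$ for part 4. The only noteworthy variation is in 3(a): the paper introduces a smooth cutoff $\rho_k^{(\de)}=\widetilde\rho(u_k^{(\de)}/\de)$ and decomposes $\rd^2\tphi^{(\de)}=\rho_k^{(\de)}\rd^2\tphi^{(\de)}+(1-\rho_k^{(\de)})\rd^2\tphi^{(\de)}$ \emph{before} passing to weak(-$*$) limits of each piece, whereas you pass to the limit first (getting a Radon measure) and then restrict to $\{u_k^{(0)}=0\}$ and its complement; both are valid and rest on the same estimates.
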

\begin{proof}
\pfstep{Step~1: Limits for $\phi_{reg}$, $\rd_t\rphi$, $\tphi$ and $\rd_t \tphi$ and support properties (Statements 1, 2, 5)} All of $\phi_{reg}$, $\rd_t\rphi$, $\tphi$ and $\rd_t\tphi$ can be treated in a similar manner, except for the different regularity (i.e.~$H^{2+s'}$ for $\rphi$, $H^{1+s'}$ for $\rd_t \rphi$ and $\tphi$, and $H^{s'}$ for $\rd_t \tphi$). We will thus only discuss $\rphi$ in detail.

Denote, for the purpose only on this proof, by $H^{1+s}_{\mathfrak c}(\RR^2)$ the closed subspace of $H^{1+s}(\RR^2)$  where the function is compactly supported in $B(0,R)$; similarly for $H^{1+s'}_{\mathfrak c}(\RR^2)$. With this definition, for $s\in (0,s')$, $H^{1+s'}_{\mathfrak c}(\RR^2) \subseteq H^{1+s}_{\mathfrak c}(\RR^2)$ is compact (for instance using \cite[Lemma~2.1]{yCBdC1981} and Plancherel's theorem). 

For $s \in (0,s')$, we thus apply Lemma~\ref{lem:AL} with $T=1$, $X_0 = H^{1+s'}_{\mathfrak c}(\RR^2)$, $X = H^{1+s}_{\mathfrak c}(\RR^2)$, $X_1 = H^{s'}(\RR^2)$, together with the fact that Theorem~\ref{smooththeorem} guarantees the bound \eqref{eq:main.thm.rphi} holds for all small $\de$, to obtain the desired convergence.

To show that the first estimate in \eqref{eq:phi.limit.final.1} for $\rphi^{(0)}$ holds. Note that since \eqref{eq:main.thm.rphi} holds uniformly for all small $\de$, it admits a weak limit which also satisfies \eqref{eq:main.thm.rphi}. The weak limit necessarily coincides with $\rphi^{(0)}$, implying the desired bound.

Finally, it is clear from the definition that the limits are indeed supported in $B(0,R)$ as stated.

 \pfstep{Step~2: Regularity for $\rd^2_{\mu\nu} \tphi^{(0)}$ (Statement 3(a))} Fix $k$, $\mu$ and $\nu$. Let $\widetilde{\rho}:\mathbb R\to [0,1]$ be a smooth function with $\widetilde{\rho} \equiv 0$ on $\RR\setminus [-2,2]$, and $\widetilde{\rho} \equiv 1$ on $[-1, 1]$.
 
For each $\de>0$ sufficiently small, define $\rho_k^{(\de)}(u_k,\theta_k,t)=\widetilde{\rho}(\frac{u_k}{\delta})$. Introduce the decomposition 
\begin{equation}\label{eq:decomposition.to.take.limit.to.a.measure}
\rd^2_{\mu\nu}\tphi^{(\de)} = \rho_k^{(\de)}\cdot \rd^2_{\mu\nu}\tphi^{(\de)} + ( 1 - \rho_k^{(\de)} ) \cdot \rd^2_{\mu\nu}\tphi^{(\de)}.
\end{equation}

Using the Cauchy--Schwarz inequality, \eqref{eq:smooththeorem.1}, the support properties of $\rho_k^{(\de)}$, and Corollary~\ref{cor:diffeo}, we obtain $\| \rho_k^{(\de)} \cdot \rd^2_{\mu\nu}\tphi^{(\de)}\|_{L^1(\Sigma_t)} \ls \|\rho_k^{(\de)} \|_{L^2(\Sigma_t)} \|\rd^2_{\mu\nu} \tphi^{(\de)}\|_{L^2(\Sigma_t)} \ls \de^{\f 12} \cdot (\ep \de^{-\f 12}) = \ep.$ Combining this uniform $L^1$ bound with the Banach--Alaoglu theorem, it follows that there exists a subsequence of $\rho_k^{(\de)} \cdot \rd^2_{\mu\nu}\tphi^{(\de_j)}$ (not relabelled) which converges in weak-* to a signed Radon measure $T_{\mu\nu,k}$ with $T.V._{|\Sigma_t}(T_{\mu\nu,k}) \ls \ep$.

Note that indeed $\mathrm{supp}(T_{\mu\nu,k}) \subseteq \{(t,x): u_k^{(0)} = 0\}$. This is because of $\de_j \to 0$, Proposition~\ref{prop:limit.wavefront}, and $\mathrm{supp}(\rho_k^{(\de_j)} \cdot \rd^2_{\mu\nu} \tphi^{(\de_j)}) \subseteq \{(t,x): u_k^{(\de_j)} \in [-\de_j,\de_j] \}$.

On the other hand, note that $\| ( 1 - \rho_k^{(\de)} ) \cdot \rd^2_{\mu\nu}\tphi^{(\de)} \|_{L^2(\Sigma_t)} \ls \ep$ using \eqref{eq:smooththeorem.2} and $\mathrm{supp}(1-\rho_k^{(\de)}) \cap S^k_\de = \emptyset$. Therefore, by the Banach--Alaoglu theorem, there is a further subsequence $\de_j$ such that $( 1 - \rho_k^{(\de_j)} ) \cdot \rd^2_{\mu\nu}\tphi^{(\de_j)}$ converges weakly in $L^2(\Sigma_t)$ to some $f_{\mu\nu,k}$ satisfying $\| f_{\mu\nu, k} \|_{L^2(\Sigma_t)} \ls \ep$.

Finally, since by definition, $T_{\mu\nu,k} + f_{\mu\nu,k}$ is the distributional limit of $\rd^2_{\mu\nu}\tphi^{(\de_j)}$, it must hold that $\rd^2_{\mu\nu} \tphi^{(0)} = T_{\mu\nu,k} + f_{\mu\nu,k}$.

\pfstep{Step 3: Regularity for second derivatives of $\tphi$ with one good derivative (Statement 3(b))} Arguing as in Step~1, but using that $E_k^{(\de_j)} \tphi^{(\de_j)}$, $L_k^{(\de_j)} \tphi^{(\de_j)}$ satisfy uniformly the estimates \eqref{eq:main.thm.tphi.2}, it follows that (for a further subsequence) $E_k^{(\de_j)} \tphi^{(\de_j)}$, $L_k^{(\de_j)} \tphi^{(\de_j)}$, $\rd_t E_k^{(\de_j)} \tphi^{(\de_j)}$, $\rd_t L_k^{(\de_j)} \tphi^{(\de_j)}$ have (say, distributional) limits which again obey again the estimates \eqref{eq:main.thm.tphi.2}.

It thus remains to show that the distributional limit of $E_k^{(\de_j)} \tphi^{(\de_j)}$ (respectively $L_k^{(\de_j)} \tphi^{(\de_j)}$) is indeed $E_k^{(0)} \tphi^{(0)}$ (respectively $L_k^{(0)} \tphi^{(0)}$). To see this, it suffices to note that $(E_k^i)^{(\de_j)} \to (E_k^i)^{(0)}$ uniformly (by part 2 of Proposition~\ref{prop:limit.wavefront}) and that $\rd\tphi^{(\de_j)} \to \rd\tphi^{(0)}$ in $L^\i([0,1]; L^2(\RR^2))$ (by part 2 of this proposition).

\pfstep{Step~4: Lipschitz and improved H\"older bounds (Statement 4)} To show the first estimate in \eqref{eq:phi.limit.final.Lipschitz}, note that Step~1 in particular implies $\rd\tphi^{(\de_j)} \to \rd\tphi^{(0)}$ in $L^\i([0,1]; L^2(\RR^2))$. Thus, after passing to a further subsequence, $\rd\tphi^{(\de_j)} \to \rd\tphi^{(0)}$ almost everywhere. In particular, the desired $L^\infty$ bound follows from that for $\rd\tphi^{(\de_j)}$.

For the second estimates in \eqref{eq:phi.limit.final.Lipschitz} (for the $C^{1,\f{s''}{2}}$ bound for $\rphi$), we first note that the uniform $C^{1,\f{s''}{2}}$ estimates, together with Lemma~\ref{lem:AL}, imply that (for a further subsequence) $\rd\rphi^{(\de_j)}\to \rd\rphi^{(0)}$ uniformly on compact sets. As a result, the desired $C^{1,\f{s''}{2}}$ bound follows from that for $\rd\rphi^{(\de_j)}$.

Finally, we prove the H\"older estimates in \eqref{eq:phi.limit.final.Holder}. It suffices to consider $u_k > 0$ since $\tphi \equiv 0$ when $u_k <0$ by the finite speed of propagation. For every $\updelta_0 > 0$, define $\mathscr R_{\updelta_0}^k:= \{(t,x) \in [0,1]\times \RR^2 : u_k^{(0)}(t,x) \geq \updelta_0\}$. Now, by part 1 of Proposition~\ref{prop:limit.wavefront}, there exists $J \in \mathbb N$ such that $\mathscr R_{\updelta_0}^k \subseteq \{(t,x) \in [0,1]\times \RR^2 : u_k^{(\de_j)}(t,x)\geq \de_j \}$ for all $j\geq J$. Thus, using \eqref{eq:smooththeorem.Lipschitz}, we argue as for the improved H\"older estimate in \eqref{eq:phi.limit.final.Lipschitz} to obtain
\begin{equation}\label{eq:phi.limit.final.Lipschitz.almost}
\| \rd \tphi^{(0)} \|_{C^{0,\f{s''}{2}}(\Sigma_t \cap \mathscr R_{\updelta_0}^k)} \ls \ep.
\end{equation}
Importantly, since \eqref{eq:phi.limit.final.Lipschitz.almost} is independent of $\updelta_0$, we deduce the estimate in \eqref{eq:phi.limit.final.Holder} for $u_k >0$. \qedhere
\end{proof}

\subsection{The limit is a desired solution of Theorem~\ref{smooththeorem}}\label{sec:limit.is.solution}

We continue to take $(\phi_0, \phi'_0, \gamma_0, K_0)$, $(\phi_0^{(\de)}, (\phi')_0^{(\de)}, \gamma_0^{(\de)}, K_0^{(\de)})$ and $(\gamma^{(\de)}, (\bt^i)^{(\de)}, N^{(\de)},\phi^{(\de)})$ to be as in the end of Section~\ref{sec:approx.data}, and let $(\gamma^{(0)}, (\bt^i)^{(0)}, N^{(0)}, \phi^{(0)})$ be as given by Propositions~\ref{prop:limit.metric} and \ref{prop:limit.phi}.

\begin{proposition}\label{prop:limit.is.solution}
\begin{enumerate}
\item The limit $(\gamma^{(0)}, (\bt^i)^{(0)}, N^{(0)}, \phi^{(0)})$ given by Propositions~\ref{prop:limit.metric} and \ref{prop:limit.phi} is a weak solution to the Einstein vacuum equations in polarized $\mathbb U(1)$ symmetry under elliptic gauge (in the sense of part 1 Definition~\ref{def:weak.solution}). Moreover, each of $\rphi^{(0)}$ and $\tphi^{(0)}$ satisfies the wave equation weakly in the sense of \eqref{eq:main.thm.wave}.
\item The solution in part 1 achieves the given data $(\phi_0,\phi'_0, \gamma_0, K_0)$ in Theorem~\ref{roughtheorem}, and moreover, each of $\rphi^{(0)}$ and $\tphi^{(0)}$ achieve the initial data as given in Definition~\ref{roughdef}  (in the sense of part 2 of Definition~\ref{def:weak.solution}).
\end{enumerate}
\end{proposition}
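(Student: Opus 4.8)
The plan is to pass to the limit in each of the weak formulations collected in Definition~\ref{def:weak.solution}, using the convergences established in Propositions~\ref{prop:limit.metric}, \ref{prop:limit.wavefront} and \ref{prop:limit.phi}, and the fact that each $(\gamma^{(\de)}, (\bt^i)^{(\de)}, N^{(\de)}, \phi^{(\de)})$ is a classical (hence weak) solution by Theorem~\ref{smooththeorem}. First I would record the regularity conditions in part 1(a) of Definition~\ref{def:weak.solution}: the $C^0_{t,x}$ and $\rd_x \in C^0_{t,x}$ statements for $\gamma, \bt, N$ follow from the $C^0([0,1];C^1(\RR^2))$ convergence in Proposition~\ref{prop:limit.metric}; the $\rd_t\gamma \in C^0_{t,x}$ statement follows since $\rd_t\widetilde\gamma^{(\de_j)}$ is uniformly bounded in $W^{1,\f{2}{s'-s''}}_{1-s'+s''-2\alp}$ (which Sobolev-embeds into $C^0$ on compact sets in $2$ dimensions once the exponent $\f{2}{s'-s''}>2$, i.e.\ using $s'-s''<\f 13<1$) together with an Aubin--Lions argument as in Proposition~\ref{prop:limit.metric}; and the $\rd_t\bt, \rd_t N \in L^\i_{loc}$ statements follow from the a.e.\ bound \eqref{eq:limit.metric.est.2} plus weak-$*$ lower semicontinuity. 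The $\phi$-regularity in 1(a) is immediate from \eqref{eq:phi.limit.final.1}--\eqref{eq:phi.limit.final.2}. The maximality condition \eqref{eq:main.thm.maximality} passes to the limit because $-2e_0^{(\de_j)}\gamma^{(\de_j)} + \rd_i(\bt^i)^{(\de_j)} = 0$ holds pointwise and all terms converge in $C^0_{loc}$ (note $e_0 = \rd_t - \bt^i\rd_i$, and $\rd_t\gamma^{(\de_j)}$ converges in $C^0_{loc}$ by the above).

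Next I would handle the integrated equations. For the elliptic equations \eqref{eq:main.thm.beta}--\eqref{eq:main.thm.gamma}: fix $t$ and $\varsigma \in C^\infty_c(\RR^2)$; the left-hand sides are linear in $\rd_x$ of the metric coefficients, which converge uniformly on $\mathrm{supp}(\varsigma)$; for the right-hand sides one needs convergence of the quadratic expressions $(\n\phi)^2$, $(\n\phi)(\rd_j\phi)$, $|\mathfrak L\bt|^2$, etc. Here $\n^{(\de_j)}\phi^{(\de_j)} = \f{1}{N^{(\de_j)}}(\rd_t\phi^{(\de_j)} - (\bt^i)^{(\de_j)}\rd_i\phi^{(\de_j)})$ converges in $C^0([0,1];L^2(\RR^2))$ (since $\rd_t\phi^{(\de_j)}, \rd_i\phi^{(\de_j)}$ converge in $C^0([0,1];L^2)$ by Proposition~\ref{prop:limit.phi} and $N^{(\de_j)}, (\bt^i)^{(\de_j)}$ converge in $C^0_{t,x}$ with $N^{(\de_j)}$ bounded away from zero), so products like $(\n\phi)^2$ converge in $C^0([0,1];L^1(\RR^2))$, which suffices against the bounded test function $\varsigma$. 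The term $|\mathfrak L\bt^{(\de_j)}|^2$ converges in $C^0([0,1];L^1_{loc})$ by the $C^0([0,1];C^1)$ convergence. The evolution equation \eqref{eq:main.thm.K} is handled the same way with spacetime test functions $\varsigma \in C^\infty_c((0,T)\times\RR^2)$, recalling $K_{ij}$ is defined through \eqref{maximality3} in terms of $(\gamma,\bt,N)$ which converge appropriately; one uses the weak-$*$ convergence of $K_{ij}^{(\de_j)} = -\f{e^{2\gamma^{(\de_j)}}}{2N^{(\de_j)}}(\mathfrak L\bt^{(\de_j)})_{ij}$ (bounded in $C^0_{t,x}$) together with $e_0^{(\de_j)}\varsigma \to e_0^{(0)}\varsigma$ in $C^0_c$, and convergence of the other factors. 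For the wave equation \eqref{eq:main.thm.wave}, applied to $\phi^{(\de_j)}$, to $\rphi^{(\de_j)}$, and to $\tphi^{(\de_j)}$ separately (each of which solves $\Box_g\phi = 0$ classically by Theorem~\ref{smooththeorem}, hence \eqref{eq:main.thm.wave}): the integrand is $(g^{-1})^{\alp\sigma}\rd_\alp\varsigma\,\rd_\sigma\phi\, Ne^{2\gamma}$; here $(g^{-1})^{\alp\sigma}$ (an algebraic expression in $N,\bt,\gamma$ by \eqref{inversegelliptic}) and $Ne^{2\gamma}$ converge in $C^0_{t,x}$ on $\mathrm{supp}(\varsigma)$, while $\rd_\sigma\phi^{(\de_j)} \to \rd_\sigma\phi^{(0)}$ in $C^0([0,1];L^2_{loc})$, so the product converges in $C^0([0,1];L^1_{loc})$ and the integral passes to the limit.

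For part 2 (achieving the data): since each $(\phi_0^{(\de)},(\phi')_0^{(\de)},\gamma_0^{(\de)},K_0^{(\de)})$ converges to $(\phi_0,\phi'_0,\gamma_0,K_0)$ by Lemma~\ref{lem:data.approx} (in particular $\gamma_{asymp}^{(\de)}\to\gamma_{asymp}$, $\widetilde\gamma_0^{(\de)}\to\widetilde\gamma_0$ in $H^2_{-1/8}$, $K_0^{(\de)}\to K_0$ in $H^1_{7/8}$, and the $\phi$-parts in the relevant fractional spaces), it suffices to show the $\de_j\to 0$ limit solution attains $\lim_{\de_j\to 0}$ of the $\de_j$-data. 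The pointwise convergence $(\gamma,K,\phi)(t,\cdot)\to(\gamma,K,\phi)(0,\cdot)$ as $t\to 0$ for the limit solution follows by combining: the $C^0([0,1];C^1_{loc})$ regularity of $\gamma^{(0)}$ and $N^{(0)},\bt^{(0)}$ (so $\gamma^{(0)}, K^{(0)}$ are continuous in $t$ with values in $C^0_{loc}$, using the formula \eqref{maximality3} for $K^{(0)}$); the $C^0([0,1];H^{1+s})$ regularity of $\phi^{(0)}$ with $s>s''>0$, which Sobolev-embeds into $C^0([0,1];C^{0,s'}_{loc})$ hence into $C^0([0,1];C^0_{loc})$ in two dimensions; and then passing $t\to 0$. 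For the $\n\phi$ data \eqref{eq:def.achieving.nphi.data}: for each fixed $\de_j$ the classical solution attains $\n^{(\de_j)}\phi^{(\de_j)}(t,\cdot)\to (\phi')_0^{(\de_j)}$ in $L^2$; the limit solution has $\n^{(0)}\phi^{(0)}\in C^0([0,1];L^2(\RR^2))$ (shown above), so $\n^{(0)}\phi^{(0)}(t,\cdot)\to\n^{(0)}\phi^{(0)}(0,\cdot)$ in $L^2$ as $t\to 0$, and the limit value equals $\phi'_0$ because $\rd_t\phi^{(\de_j)}(0,\cdot), \rd_i\phi^{(\de_j)}(0,\cdot), N^{(\de_j)}(0,\cdot), \bt^{(\de_j)}(0,\cdot)$ all converge (the first two in $L^2$, the last two in $C^0$) to the corresponding limit data, hence $\n^{(\de_j)}\phi^{(\de_j)}(0,\cdot)= (\phi')_0^{(\de_j)}\to \phi'_0$ in $L^2$. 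The same argument applied componentwise to the decomposition $\phi = \rphi + \sum_k\tphi$ (each piece converging separately by Proposition~\ref{prop:limit.phi} and Lemma~\ref{lem:data.approx}) shows $\rphi^{(0)}$ and $\tphi^{(0)}$ attain their prescribed data.

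The main obstacle I anticipate is not any single limit but the bookkeeping of ensuring that the nonlinear (quadratic) terms — especially $(\n\phi)^2$, $(\n\phi)(\rd_j\phi)$ in the elliptic and hypersurface equations, and the $K_i^lK_{jl}$-type terms in \eqref{eq:main.thm.K} — genuinely pass to the limit in the distributional sense; this requires the \emph{strong} $H^1$-type convergence of $\phi^{(\de_j)}$ (as emphasized in Section~\ref{sec:intro.delta}, weak $H^1$ limits need not solve the vacuum equations) combined with the uniform $C^0$ convergence of the metric coefficients and the uniform lower bound on $N^{(\de_j)}$, and care that the test function in each weak formulation is compactly supported so that $L^1_{loc}$ convergence of the products is enough. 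A secondary subtlety is that $\rd_t\bt$ and $\rd_t N$ are only $L^\i_{loc}$, so where $e_0$-derivatives act on them (implicitly, e.g.\ in $K_{ij}$ appearing under $e_0$ in \eqref{eq:main.thm.K}) one must keep these inside the weak formulation and never integrate by parts onto them; this is already how Definition~\ref{def:weak.solution} is set up, so the limiting procedure respects it automatically.
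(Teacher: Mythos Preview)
Your plan is correct and, for Part~1, essentially the paper's argument: pass to the limit in each weak identity using the strong convergences already established, so that all quadratic terms converge in $L^1_{loc}$ against compactly supported test functions. One small point: invoking Aubin--Lions for $\rd_t\gamma^{(0)}\in C^0_{t,x}$ would require a bound on $\rd_t^2\gamma^{(\de_j)}$, which is not among the available estimates; the clean route (and the paper's implicit one) is via the maximality condition \eqref{maximality2}, which gives $\rd_t\gamma^{(\de_j)} = \tfrac12\rd_i(\bt^i)^{(\de_j)} + (\bt^i)^{(\de_j)}\rd_i\gamma^{(\de_j)}$, and the right-hand side converges in $C^0([0,1];C^0_{loc})$ by Proposition~\ref{prop:limit.metric}.

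For Part~2 your route to \eqref{eq:def.achieving.nphi.data} differs from the paper's and is slightly more economical. You observe that $\n^{(0)}\phi^{(0)}\in C^0([0,1];L^2)$ follows directly from the $C^0([0,1];H^s)$ convergence of $\rd_t\phi^{(\de_j)}$ and $\rd_i\phi^{(\de_j)}$ in Proposition~\ref{prop:limit.phi} combined with the $C^0_{t,x}$ convergence of $N,\bt$, and then identify the trace at $t=0$ with $\phi'_0$ by evaluating the convergence at $t=0$ and using Lemma~\ref{lem:data.approx}. The paper instead works quantitatively at the level of the approximants: it integrates $\rd_{t_k}(\n\tphi)^{(\de)} = N^{(\de)}(L_k\n\tphi)^{(\de)}$ along integral curves of $L_k$ and invokes the uniform bound $\|(L_k\n\tphi)^{(\de)}\|_{L^2(\Sigma_t)}\ls\ep^{3/4}$ (coming from \eqref{eq:main.thm.tphi.1}, \eqref{eq:main.thm.tphi.2} via Theorem~\ref{smooththeorem}) to get the explicit rate $\|(\n\tphi)^{(\de)}_{|\Sigma_t} - (\n\tphi)^{(\de)}_{|\Sigma_0}\|_{L^2}\ls\ep^{3/4}t$, uniformly in $\de$, and similarly for $\rphi$. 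Your version trades this extra input for the abstract continuity already packaged in Proposition~\ref{prop:limit.phi}; the paper's buys an explicit modulus of continuity at $t=0$ but needs one more estimate. Both are valid.
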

\begin{proof}
\pfstep{Step~1: The limit is a weak solution} For every $\de>0$ sufficiently small, we have a smooth solution and thus the equations \eqref{eq:main.thm.maximality}--\eqref{eq:main.thm.wave} all hold (either directly or after integrating by parts). Moreover, Theorem~\ref{smooththeorem} stated that each of $\rphi^{(0)}$ and $\tphi^{(0)}$ satisfies the wave equation. In order to pass to the limit, it suffices to have, for $\mfg \in \{\gamma, \bt^i, N\}$, $\mfg$, $\rd_i\mfg$ and $e_0 \gamma$ converge uniformly on compact sets, and $\rd_t \mfg$ and $\rd \phi$ both converge in $L^2([0,1];L^2_{loc}(\mathbb R^2))$. These convergences (in fact much stronger ones) follow from Propositions~\ref{prop:limit.metric} and \ref{prop:limit.phi}.

\pfstep{Step~2: The limit achieves the given initial data} In view of part 2 of Lemma~\ref{lem:data.approx}, it suffices to prove quantitative convergence for $\de>0$. 

Note that $\rd_t \gamma^{(\de)}$, $\rd_t K^{(\de)}$, $\rd_t \rphi^{(\de)}$ and $\rd_t \tphi^{(\de)}$ are all uniformly bounded on $B(0,R_0)$ for any $R_0>0$, which implies that
\begin{equation}\label{eq:achieving.data.easy.1}
\begin{split}
&\: \| \gamma^{(\de)}_{|\Sigma_t} - \gamma_0^{(\de)} \|_{L^\i(\RR^2 \cap B(0,R_0))} + \| K^{(\de)}_{|\Sigma_t} - K_0^{(\de)} \|_{L^\i(\RR^2 \cap B(0,R_0))} \\
&\: \qquad + \| (\rphi^{(\de)})_{|\Sigma_t} - (\rphi^{(\de)})_0 \|_{L^\i(\RR^2 \cap B(0,R_0))} + \| (\tphi^{(\de)})_{|\Sigma_t} - (\tphi^{(\de)})_0 \|_{L^\i(\RR^2 \cap B(0,R_0))} \ls \ep t.
\end{split}
\end{equation}
On the other hand, by Lemma~\ref{lem:data.approx} and Sobolev embedding ($H^{1+s'}_{loc}(\RR^2) \hookrightarrow L^\i_{loc}(\RR^2)$), we know that for any $R_0>0$,
\begin{equation}\label{eq:achieving.data.easy.2}
\lim_{\de\to 0} (\| \gamma^{(\de)}_0 - \gamma_0 \|_{L^\i(\Sigma_0 \cap B(0,R_0))} + \| K^{(\de)}_0 - K_0 \|_{L^\i(\Sigma_0 \cap B(0,R_0))} + \| \phi^{(\de)}_0 - \phi_0 \|_{L^\i(\Sigma_0 \cap B(0,R_0))}  = 0.
\end{equation}
Combining \eqref{eq:achieving.data.easy.1} and \eqref{eq:achieving.data.easy.2}, and using the triangle inequality give part 2(a) of Definition~\ref{def:weak.solution}.

It remains to check \eqref{eq:def.achieving.nphi.data} for each of $\rphi'$ and $\tphi'$. First, by Proposition~\ref{prop:limit.phi},
\begin{equation}\label{eq:def.achieving.nphi.data.1}
 \lim_{\de\to 0} \sup_{t\in [0,1]} ( \| (\n\rphi)^{(0)} - (\n\rphi)^{(\de)} \|_{L^2(\Sigma_t)} + \sum_{k=1}^3 \| (\n\tphi)^{(0)} - (\n\tphi)^{(\de)} \|_{L^2(\Sigma_t)} ) = 0,
\end{equation}
where by Lemma~\ref{lem:data.approx}, we also have
\begin{equation}\label{eq:def.acieving.nphi.data.1.1}
(\n\rphi)^{(0)}_{|\Sigma_0} = (\rphi')_0,\quad (\n\tphi)^{(0)}_{|\Sigma_0} = (\tphi')_0.
\end{equation}

Note now that for all $\de$ sufficiently small, we have the uniform bound $\|(L_k \n \tphi)^{(\de)} \|_{L^2(\Sigma_t)} \ls \ep^{\f 34}$ by Theorem~\ref{smooththeorem} (which implies \eqref{eq:main.thm.tphi.1}, \eqref{eq:main.thm.tphi.2}, \eqref{metric.est} and \eqref{metric.est.t} hold for small $\de>0$).
Working in the $(u_k^{(\de)},\th_k^{(\de)},t_k^{(\de)})$ coordinates, and recalling $\rd_{t_k^{(\de)}} = N^{(\de)} L_k^{(\de)}$, the fundamental theorem of calculus gives
$$\n\tphi^{(\de)}(u_k,\th_k,t_k) - (\tphi')^{(\de)}_0(u_k,\th_k) = \int_0^{t_k}  N^{(\de)} \cdot (L_k \n \tphi)^{(\de)}(u_k,\th_k,s) \, \, ds.$$
Thus the bound $\|L_k \n \tphi^{(\de)} \|_{L^2(\Sigma_t)} \ls \ep^{\f 34}$, together with the compact support of $\tphi^{(\de)}$ and Minkowski's inequality, imply that
\begin{equation}\label{eq:def.achieving.nphi.data.2}
 \sum_{k=1}^3 \| (\n\tphi)^{(\de)}_{|\Sigma_t}  - (\n\tphi)^{(\de)}_{|\Sigma_0}\|_{L^2(\RR^2)} \ls \ep^{\f 34} t.
\end{equation}
Also, an analogous (and indeed easier) statement as \eqref{eq:def.achieving.nphi.data.2} holds for $\rphi$ instead of $\tphi$ (which can be proven by replacing the use of \eqref{eq:main.thm.tphi.1}, \eqref{eq:main.thm.tphi.2} by that of \eqref{eq:main.thm.rphi}):
\begin{equation}\label{eq:def.achieving.nphi.data.3}
 \| (\n\rphi)^{(\de)}_{|\Sigma_t}  - (\n\rphi)^{(\de)}_{|\Sigma_0}\|_{L^2(\RR^2)} \ls \ep^{\f 34} t.
\end{equation}
Combining \eqref{eq:def.achieving.nphi.data.1}, \eqref{eq:def.acieving.nphi.data.1.1}, \eqref{eq:def.achieving.nphi.data.2} and \eqref{eq:def.achieving.nphi.data.3} yields the desired statement
$$\lim_{t \to 0} ( \| (\n\rphi)_{|\Sigma_t}^{(0)} - (\rphi')_0 \|_{L^2(\Sigma_0)} + \sum_{k=1}^3 \| (\n\tphi)^{(0)}_{|\Sigma_t}  - (\tphi')_0 \|_{L^2(\RR^2)} ) = 0. \qedhere$$
\end{proof}

\subsection{Proof of Theorem~\ref{roughtheorem}}\label{sec:conclusion.rough.theorem}
We now conclude the proof of Theorem~\ref{roughtheorem} (under the assumption of the validity of Theorem~\ref{smooththeorem}).
\begin{proof}[Proof of Theorem~\ref{roughtheorem}]
We start with an initial data set $(\phi_0, \phi'_0, \gamma_0, K_0)$ as in Definition~\ref{roughdef} with parameters $\ep$, $s'$, $s''$, $R$, $\upkappa_0$. For $\ep_0$ sufficiently small, we use the approximation procedure described in Section~\ref{sec:approx.data} and the limiting procedure in  Proposition~\ref{prop:limit.metric}--\ref{prop:limit.phi} to obtain a limiting quadruple $(\gamma, \bt^i, N,\phi)$ (called $(\gamma^{(0)}, (\bt^i)^{(0)}, N^{(0)},\phi^{(0)})$ above). By Proposition~\ref{prop:limit.is.solution}, $(\gamma, \bt^i, N,\phi)$ is a solution arising from the given initial data set $(\phi_0, \phi'_0, \gamma_0, K_0)$ (in the sense of Definition~\ref{def:weak.solution}).

We claim that $(\gamma, \bt^i, N,\phi)$ is the desired solution as asserted in Theorem~\ref{roughtheorem}. To see this, it remains to check all the estimates.
\begin{itemize}
\item Support properties of the scalar wave and the wave estimates \eqref{eq:main.thm.rphi}--\eqref{eq:main.thm.Holder.away} follow from Proposition~\ref{prop:limit.phi}.
\item The regularity properties for $u_k$, $L_k$, $E_k$, $X_k$ and the estimates \eqref{eq:wavefront}--\eqref{eq:frames.in.thm} follow from Proposition~\ref{prop:limit.wavefront}.
\item The estimates in \eqref{eq:metric.decomposition.in.thm}--\eqref{metric.est.t} follow from Proposition~\ref{prop:limit.metric}.
\end{itemize}
This concludes the proof. \qedhere
\end{proof}

\section{Bootstrap argument and the proof of Theorem~\ref{smooththeorem}}\label{sec:proof.smooth.theorem}

In this section, we outline the structure for the proof of the main theorem on $\de$-impulsive gravitational waves, i.e.~Theorem~\ref{smooththeorem}.

The proof of Theorem~\ref{smooththeorem} is based on a priori estimates proven in a bootstrap argument:
\begin{itemize}
\item The bootstrap assumptions will be set up in \textbf{Section~\ref{bootstrapsection}}.
\item Under the bootstrap assumptions of Section~\ref{bootstrapsection}, we prove a priori estimates in \textbf{Section~\ref{sec:aprioriestimates}}.
\begin{itemize}
\item The a priori estimates are split into three steps: the metric estimates will be stated in \textbf{Section~\ref{sec:aprioriestimates1}}, the Lipschitz and improved H\"older estimates for the wave variables will be stated in \textbf{Section~\ref{sec:aprioriestimates2}}, and finally, the $L^2$-based energy estimates can be found in \textbf{Section~\ref{sec:aprioriestimates3}}.
\end{itemize}
The proof of these estimates will occupy most of the remainder of this paper and \cite{LVdM2}.
\item In \textbf{Section~\ref{sec:pf.of.smooththeorem}}, we conclude the proof of Theorem~\ref{smooththeorem} assuming the a priori estimates of Section~\ref{sec:aprioriestimates}.
\end{itemize}

\subsection{Main bootstrap assumptions}\label{bootstrapsection}

Let $\ep$, $\de$, $s'$, $s''$, $R$ and $\upkappa_0$ be as in Theorem~\ref{smooththeorem}. We continue to take $\alp = 10^{-2}$ as in Theorem~\ref{roughtheorem}.

We now introduce the main bootstrap assumptions. In the setting of a bootstrap argument (see Section~\ref{sec:aprioriestimates}), we will assume that there is a $T_B \in (0,1]$ such that all the estimates below hold on $[0,T_B) \times \RR^2$. The definitions of all the norms below can be found in Section~\ref{sec:norms}.

\bigskip

\noindent\underline{\textbf{Estimates for the metric components in the elliptic gauge.}}
Let $\omega$ be a cutoff function as in Definition~\ref{def.cutoff} . Assume that the metric components $\gamma$ and $N$ admit the following decomposition\footnote{That such a decomposition exists is a consequence of the local theory; see \cite[Theorem~5.4]{HL.elliptic}.}:
\begin{equation}\label{eq:metric.decomposition}
\begin{split}
\gamma(t,x) = &\: -\gamma_{asymp}\,\omega(|x|)\log |x| + \widetilde{\gamma}(t,x), \\
N(t,x) = &\: 1+ N_{asymp}(t)\,\omega(|x|)\log |x| + \widetilde{N}(t,x),
\end{split}
\end{equation}
where $\gamma_{asymp}\geq 0$ is a constant, and $N_{asymp}(t)\geq 0$ is a function of $t$ alone. Moreover, $\gamma$, $\bt^i$ and $N$ satisfy
\begin{subequations}
\begin{align}
\label{eq:BA.g.asymp}
|\gamma_{asymp}| + \sup_{0\leq t < T_B} (|N_{asymp}|(t) + |\rd_t N_{asymp}|(t)) \leq &\: \ep^{\f 54}, \\
\label{eq:BA.g.Li}
\sup_{0\leq t < T_B} \sum_{\widetilde{\mfg} \in \{\widetilde{\gamma}, \bt^i,\widetilde{N} \} } ( \| \widetilde{\mfg} \|_{W^{1,\infty}_{1 - \alp}(\Sigma_t)} + \| \rd_t\widetilde{\mfg} \|_{L^{\infty}_{1 - 2\alp}(\Sigma_t)} ) \leq &\: \ep^{\f 54}, \\
\label{eq:BA.g.L4}
\sup_{0\leq t < T_B} \sum_{\widetilde{\mfg} \in \{\widetilde{\gamma}, \bt^i,\widetilde{N} \} } \| \widetilde{\mfg} \|_{W^{2,4}_{\f 12 - \alp}(\Sigma_t)}  \leq &\: \ep^{\f 54}. 
\end{align}
\end{subequations}

\noindent\underline{\textbf{Estimates for the Ricci coefficients.}}
\begin{subequations}
\begin{align}
\label{bootstrapK} \sup_{0 \leq t < T_B}  (\| K \|_{L^{\infty}_{2-\alpha}(\Sigma_t)} + \|\rd_x K \|_{L^\i_{2-\alp}(\Sigma_t)}) \leq \epsilon^{\f 54},\\
\label{bootstrapricci}
\max_k \sup_{0 \leq t < T_B} ( \| \chi_k \| _{ L^{\infty}_{1-\alpha}(\Sigma_t)}+ \| \eta_k \| _{ L^{\infty}_{1-\alpha}(\Sigma_t)} ) \leq \epsilon^{\f 54},\\ \label{bootstrapnablaricci}
\max_k  \sup_{0 \leq t < T_B, u_k \in \RR} (\| \partial_x  \chi_k \|_{L^2_{\theta_k}(\Sigma_t \cap C^k_{u_k})} + \| E_k \eta_k \|_{ L^2_{\theta_k}(\Sigma_t \cap C^k_{u_k})}) \leq \epsilon^{\f 54},\\
 \label{bootstrapmu}
\max_k \sup_{0 \leq t < T_B} (\|  \log \mu_k - \gamma_{asymp} \om(|x|) \log |x| \|_{L^{\infty}_{1-\alpha}(\Sigma_t)} +\| \partial_x \mu_k  \|_{L^{\infty}_{1-\alpha}(\Sigma_t)} ) \leq \epsilon^{\f 54} , \\
 \label{bootstrapvarTheta}
 \max_k \sup_{0 \leq t < T_B} (\| \log(\varTheta_k) - \gamma_{asymp} \om(|x|) \log|x|\|_{L^{\infty}_{1-2\alp}(\Sigma_t)} + \sup_{u_k \in \RR} \| \la x \ra^{-\alp} \rd_x \log \varTheta_k \|_{L^2_{\th_k}(\Sigma_t\cap C^k_{u_k})}) \leq \ep^{\f 54}.
\end{align}
\end{subequations}

\noindent\underline{\textbf{Energy estimates for $\rphi$.}}
\begin{align}  
\sup_{0 \leq t < T_B} (\| \phi_{reg} \|_{H^{s'}(\Sigma_t)}+\|\partial \phi_{reg} \|_{H^{s'}(\Sigma_t)} + \| \partial^2 \phi_{reg} \|_{H^{s'}(\Sigma_t)}) \leq &\: \epsilon^{\frac{3}{4}}. \label{BA:rphi}
\end{align}

\noindent\underline{\textbf{Energy estimates for $\tphi$.}}
\begin{subequations}
\begin{align} 
\label{bootstrapsmallnessenergy}
\max_k \sup_{0 \leq t < T_B} (\| \partial \tphi \|_{L^2(\Sigma_t)} + \sum_{Z_k \in \{L_k,\,E_k\}}\| Z_k \partial \tphi\|_{L^2(\Sigma_t)} ) \leq &\: \epsilon^{\frac{3}{4}}, \\
 \label{tphiH2bootstrap}
\max_k \sup_{0 \leq t < T_B} \|  \partial^2 \tphi \|_{L^2(\Sigma_t)} \leq &\: \epsilon^{\frac{3}{4}} \cdot \delta^{-\frac{1}{2}}, \\
 \label{tphiH3/2bootstrap}
\max_k \sup_{0 \leq t < T_B} \|  \rd \Db^{s'} \tphi \|_{L^2(\Sigma_t)} \leq &\: \epsilon^{\frac{3}{4}}, \\
\label{EtphiH2bootstrap}
\max_k \sup_{0 \leq t < T_B} \| \partial E_k \partial_x \tphi \|_{L^2(\Sigma_t)} \leq &\: \epsilon^{\frac{3}{4}} \cdot \delta^{-\frac{1}{2}}.	
\end{align}
\end{subequations}

\noindent\underline{\textbf{Improved energy estimates for $\tphi$.}}

\begin{subequations}\begin{align}
 \label{bootstrapbadunlocenergyhyp}
\max_k \sup_{0 \leq t < T_B} ( \|\partial \tphi\|_{L^2(\Sigma_t \cap S_{2\de}^k)} +\sum_{Z_k \in \{L_k,\,E_k\}}\| Z_k \partial \tphi\|_{L^2(\Sigma_t \cap S_{2\de}^k)} ) \leq &\: \epsilon^{\frac{3}{4}} \cdot \sdelta, \\
\label{BA:away.from.singular} \max_k \sup_{0 \leq t < T_B}  \| \partial^2 \tphi\|_{L^2(\Sigma_t \setminus \Sd^k)}  \leq  &\: \epsilon^{\frac{3}{4}}.
\end{align}
\end{subequations}

\noindent\underline{\textbf{Flux estimates for the wave variables.}}
\begin{subequations}
\begin{align}
 \max_k \sup_{u_k \in \mathbb R} \sum_{Z_k \in \{L_k, E_k\}} (\| Z_k \rd_x \rphi\|_{L^2(C^k_{u_k}([0,T_B)))} + \| Z_k \rphi\|_{L^2(C^k_{u_k}([0,T_B)))})  \leq &\: \ep^{\f 34}, \label{BA:flux.for.rphi}\\ 
\max_{k,k'} \sup_{u_{k'} \in\RR} \sum_{Z_{k'} \in \{L_{k'},\,E_{k'}\}} (  \| Z_{k'} \rd_x \tphi\|_{L^2(C^{k'}_{u_{k'}}([0,T_B))\setminus \Sd^k)} +   \| Z_{k'} \tphi\|_{L^2(C^{k'}_{u_{k'}}([0,T_B)))}) \leq &\: \epsilon^{\frac{3}{4}}, \label{BA:flux.for.tphi.improved} \\
\max_{k} \sup_{u_{k} \in\RR} (\| L_k \rd_x \tphi\|_{L^2(C^{k}_{u_k}([0,T_B)))} + \| E_k^2 \tphi\|_{L^2(C^{k}_{u_k}([0,T_B)))} )  \leq &\: \epsilon^{\frac{3}{4}}, \label{BA:flux.for.tphi.improved.2} \\
\max_{k,k'} \sup_{u_{k'} \in\RR} \sum_{Z_{k'} \in \{L_{k'},\,E_{k'}\}} \| Z_{k'} \rd_x \tphi\|_{L^2(C^{k'}_{u_{k'}}([0,T_B)))} \leq  &\:\epsilon^{\frac{3}{4}} \cdot  \delta^{-\frac{1}{2}}. \label{BA:flux.for.tphi}
\end{align}
\end{subequations}

\noindent\underline{\textbf{Besov and $L^\i$ estimates for the wave variables.}}
\begin{subequations}
\begin{align}  
\sup_{0 \leq t < T_B} \max_{(k,k'): k\neq k'} \| \partial \rphi \|_{ \Bes} \leq &\:\epsilon^{\frac{3}{4}}, \label{rphiBbootstrap} \\  
\sup_{0 \leq t < T_B}\max_k  \max_{k': k' \neq k} \| \partial \tphi \|_{ \Bes} \leq &\: \epsilon^{\frac{3}{4}}, \label{tphiBbootstrap} \\
\sup_{0 \leq t < T_B}(\| \partial \rphi \|_{ L^\infty(\Sigma_t)} + \max_k \| \partial \tphi \|_{ L^\infty(\Sigma_t)}) \leq &\: \ep^{\f 34}. \label{BA:Li}
\end{align}
\end{subequations} 

\subsection{Main a priori estimates}\label{sec:aprioriestimates}

\subsubsection{Estimates for metric components and other geometric quantities}\label{sec:aprioriestimates1}

\begin{theorem}\label{thm:bootstrap.metric}
Assume that for some $T_B \in [0,1]$, 
\begin{enumerate}
\item there is a smooth solution $(\gamma, \bt^i, N, \phi)$ to \eqref{Einsteinwave}, \eqref{Kellipticequation}--\eqref{gammaellipticequation} and \eqref{Khypequation} on $[0,T_B)\times \RR^2$,
\item arising from an initial data set as in Theorem~\ref{smooththeorem},
\item which satisfies all the bootstrap assumptions \eqref{bootstrapK}--\eqref{BA:Li}.
\end{enumerate}

Then, after choosing $\ep_0$ and $\de_0$ to be sufficiently small, there exists $C = C(s',s'',R,\upkappa_0) >0$ such that the following estimates hold for all $t\in [0, T_B)$:
\begin{enumerate}
\item The estimates \eqref{eq:BA.g.asymp}--\eqref{eq:BA.g.L4} and \eqref{bootstrapK}--\eqref{bootstrapvarTheta} all hold with $\ep^{\f 54}$ replaced by $C \ep^{\f 32}$.
\item Moreover, the following additional estimates hold uniformly for all $\de \in (0, \de_0]$:
\begin{enumerate}
\item The wave front and vector field estimates \eqref{eq:wavefront} and \eqref{eq:frames.in.thm} hold.
\item The metric estimates in \eqref{metric.est} hold.
\end{enumerate}
\end{enumerate}
\end{theorem}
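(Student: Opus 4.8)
The plan is to prove Theorem~\ref{thm:bootstrap.metric} by a continuity (bootstrap) argument in which one establishes the \emph{improved} estimates --- the same inequalities but with $\ep^{5/4}$ replaced by $C\ep^{3/2}$ --- on the interval $[0,T_B)$, using only the bootstrap assumptions \eqref{bootstrapK}--\eqref{BA:Li} and the structure of the equations developed in Section~\ref{sec:setup}. Since the solution is smooth on $[0,T_B)$ and the bootstrap assumptions are stated with a strict inequality, once the improved bounds are shown the standard continuity argument closes (this is carried out in Section~\ref{sec:thm.bootstrap.metric.conclusion}); so the content of the present theorem is entirely contained in the a priori estimates proven in Sections~\ref{preliminary.estimates.section}--\ref{sec:Ricci.coeff}. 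The logical order I would follow mirrors the paper's structure: (1) preliminary estimates (bounds for $\mu_k$, $\varTheta_k$, $\Xi_k$, the frame components $X_k^i, E_k^i, L_k^\bt$, and the change of variables between elliptic-gauge coordinates $(t,x^1,x^2)$, the geometric coordinates $(u_k,\th_k,t_k)$, and the $(u_k,u_{k'})$ coordinates on $\Sigma_t$), using the transport equations \eqref{Lmu}, \eqref{Lvartheta} and the derivative formulas of Proposition~\ref{dXELellipticprop} together with the $L^\infty$ bootstrap bounds; (2) the elliptic-gauge metric estimates, i.e.\ \eqref{eq:BA.g.asymp}--\eqref{eq:BA.g.L4}, \eqref{metric.est}, \eqref{metric.est.t} and the wavefront bound \eqref{eq:wavefront}, by solving the elliptic equations \eqref{Kellipticequation}--\eqref{betaellipticequation} and the hyperbolic equation \eqref{Khypequation} with the wave bounds; and (3) the eikonal/Ricci-coefficient estimates \eqref{bootstrapricci}--\eqref{bootstrapvarTheta} and the vector-field estimates \eqref{eq:frames.in.thm}, by integrating the transport equations \eqref{Leta}, \eqref{Lchi} (and their $\rd_q$-differentiated versions) along the integral curves of $L_k$, feeding in the flux estimates \eqref{BA:flux.for.rphi}--\eqref{BA:flux.for.tphi}.

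For step (2), the key is that the weights and the quadratic nonlinearities $(\rd\phi)^2$, $(\rd_x\mfg)^2$ on the right of \eqref{Nellipticequation}--\eqref{betaellipticequation} are handled by weighted Calder\'on--Zygmund / Schauder theory à la \cite{Huneau.constraints, HL.elliptic}: the easy $H^{2+s'}$ and $W^{2,4}$ bounds come directly from $\phi\in W^{1,\infty}\cap H^{1+s'}$ and the bootstrap bounds on $\mfg$; the $W^{2,\infty}$ (i.e.\ $L^\infty$ for $\rd_x^2\mfg$) bound is the endpoint case that fails for $L^p$ Calder\'on--Zygmund and requires the Besov input \eqref{rphiBbootstrap}, \eqref{tphiBbootstrap} --- but because these Besov norms are measured in $(u_k,u_{k'})$ coordinates, one must transfer them to an endpoint estimate for the flat Laplacian $\Delta$ in the elliptic-gauge coordinates via an extra physical-space argument (splitting the Newtonian potential into near/far pieces and using the change-of-coordinates bounds from step (1)). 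For the mixed derivative $\rd_x\rd_t\mfg$ one differentiates the elliptic equations in $t$, getting $\Delta\rd_t\mfg = (\rd^2\phi)(\rd\phi)+\dots$ with a right-hand side only in $L^1$; here one uses the decomposition $\rd_t = \alp L_k + \sigma^i\rd_i$ (for parallel-wave contributions $\rd_t[(\rd\tphi_k)^2]$) to trade $\rd_t$ for a good $L_k$-derivative plus a total spatial derivative, and for the cross terms $\rd_t[(\rd\tphi_j)(\rd\tphi_k)]$, $j\neq k$, a spatially-dependent decomposition exploiting the transversality \eqref{cangle2}; this yields $\rd_x\rd_t\mfg$ in $L^p\cap H^{s'}$ (but not $L^\infty$), consistent with \eqref{metric.est.t}.

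For step (3), one integrates \eqref{Lchi} along $L_k$: the $L^\infty$ bound for $\chi_k$ follows immediately from $\|L_k\phi\|_{L^\infty}\ls\ep$ (from \eqref{BA:Li}) and Gr\"onwall; similarly for $\eta_k$ via \eqref{Leta}. For the first spatial derivative, one differentiates to get $L_k\rd_q\chi_k = -2(L_k\rd_q\phi)(L_k\phi)+\cdots$ and must control $L_k\rd_q\phi$ in $L^2_{\th_k}$-type spaces; decomposing $\phi=\rphi+\sum_j\tphi_j$, the $\rphi$ and $j=k$ pieces are handled by \eqref{BA:flux.for.rphi}, \eqref{BA:flux.for.tphi.improved.2} without $\de^{-1/2}$ loss, while the $j\neq k$ piece is the delicate one: the flux \eqref{BA:flux.for.tphi} costs $\de^{-1/2}$, but since the bad $L^2$ mass of $L_k\rd_q\tphi_j$ sits in $S^j_\de$, a region of $u_k$-width $O(\de)$ by transversality, a Cauchy--Schwarz in the $u_k$-direction when passing from the $L^2$-flux to the $L^1$-integral along the $L_k$-curve recovers a $\de^{1/2}$ and closes the $\de$-independent $L^\infty_{u_k}L^2_{\th_k}$ bound \eqref{bootstrapnablaricci} for $\rd_x\chi_k$; the analogous argument gives $E_k\eta_k$, while for general $\rd_x\eta_k$ an uncontrollable second-derivative term forces one to settle for the $L^2_{u_k}L^2_{\th_k}$ bound in \eqref{bootstrapnablaricci}. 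The vector-field estimates \eqref{eq:frames.in.thm} then follow from Proposition~\ref{dXELellipticprop} by algebraically expressing $\rd(X_k^i),\rd(E_k^i),\rd(L_k^\bt)$ in terms of $\chi_k,\eta_k,K$ and $\rd_x\mfg$ and inserting the already-proven bounds (with the weights tracked carefully, using $\la x\ra^{\ep}$ for the components and $\la x\ra^{4\alp}$ for their derivatives). I expect the main obstacle to be the endpoint $W^{2,\infty}$ elliptic estimate for $\rd_x^2\mfg$ --- reconciling the $(u_k,u_{k'})$-Besov input with the elliptic-gauge Laplacian --- together with the closely related $\rd_x\rd_t\mfg$ estimate for the genuinely interacting cross-terms, where the non-regular, spatially-dependent splitting of $\rd_t$ must be arranged so that the error is lower order in the relevant $L^p\cap H^{s'}$ norms.
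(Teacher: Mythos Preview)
Your proposal is correct and follows essentially the same approach as the paper: the proof of Theorem~\ref{thm:bootstrap.metric} in Section~\ref{sec:thm.bootstrap.metric.conclusion} is just a collection of the a priori estimates proven in Sections~\ref{preliminary.estimates.section}--\ref{sec:Ricci.coeff}, and you have correctly identified the three-step structure, the endpoint Besov elliptic estimate via physical-space kernel splitting (Proposition~\ref{prop:elliptic.Besov}), the $\rd_t$-decomposition for $\rd_x\rd_t\mfg$ (Lemmas~\ref{lem:rodnianski.trick.1}--\ref{lem:rodnianski.trick.2}), and the transversality-plus-Cauchy--Schwarz mechanism for $\rd_x\chi_k$ (Proposition~\ref{prop:Ricci}). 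Two small corrections: the Cauchy--Schwarz gain of $\de^{1/2}$ for the $j\neq k$ term is taken in the $t_k$-direction along the $L_k$-integral curve (the intersection of the curve with $S^j_\de$ has $t_k$-length $O(\de)$ by \eqref{anglecontrol2}), not in the $u_k$-direction; and \eqref{bootstrapnablaricci} contains only $E_k\eta_k$ (not general $\rd_x\eta_k$) in $L^\infty_{u_k}L^2_{\th_k}$, while the $L^2(\Sigma_t)$ bound for $\rd_x\eta_k$ is an auxiliary estimate \eqref{nablaeta.estimate} that is not among the bootstrap assumptions to be improved.
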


\subsubsection{The main Lipschitz and improved H\"older bounds}\label{sec:aprioriestimates2}

\begin{defn}\label{def:Lipschitz.control.norm}
Define $\mathcal E(t)$ to be the following norm, 
\begin{equation*}
\begin{split}
\mathcal E(t) := &\: \| \partial \Db^{s'} \tphi\|_{L^2(\Sigma_t)} + \| E_k \rd \tphi\|_{L^2(\Sigma_t)} + \| \rd E_k \Db^{s''} \tphi\|_{L^2(\Sigma_t)} + \de^{\f 12} (\| \partial^2 \tphi\|_{L^2(\Sigma_t)} + \|\rd E_k \rd \tphi\|_{L^2(\Sigma_t)}) \\
&\: + \de^{-\f 12} \|\partial \tphi\|_{L^2(\Sigma_t \cap S_{2\de}^k)} + \de^{-\f 12} \| E_k \partial \tphi\|_{L^2(\Sigma_t \cap S_{2\de}^k)}  +\| \partial^2 \tphi\|_{L^2(\Sigma_t \setminus \Sd^k)} + \|\rd^2 \Db^{s'} \rphi \|_{L^2(\Sigma_t)}.
\end{split}
\end{equation*}
\end{defn}

The following theorem is achieved through an anisotropic Sobolev-type embedding result:
\begin{theorem}\label{thm:bootstrap.Li}
Under the assumptions of Theorem~\ref{thm:bootstrap.metric}, and after choosing $\ep_0$ and $\de_0$ smaller, the following holds for some $C = C(s',s'',R,\upkappa_0) >0$ and all $t\in [0,T_B)$:
$$\mbox{LHSs of \eqref{rphiBbootstrap}--\eqref{BA:Li}} + \|\rd\rphi\|_{C^{0,\f{s''}{2}}(\Sigma_t)} + \| \partial \tphi \|_{ C^{0,\frac{s''}{2}}(\Sigma_t \cap C^k_{\geq \delta}) } \ls \mathcal E.$$
\end{theorem}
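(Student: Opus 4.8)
The strategy is to obtain Theorem~\ref{thm:bootstrap.Li} from an abstract \emph{anisotropic Sobolev embedding} on $\mathbb R^2$, applied on each slice $\Sigma_t$ in the geometric coordinates $(u_k,u_{k'})$ (recall $k\neq k'$, so these form a chart). The $\rphi$--contributions to the left-hand side are handled by standard (isotropic) means: by \eqref{BA:rphi}, $\rd\rphi\in H^{1+s'}(\Sigma_t)$, and since $1+s'>1$ in two dimensions one has $H^{1+s'}(\mathbb R^2)\hookrightarrow B^0_{\infty,1}\cap L^\infty\cap C^{0,s'}$ (and $\f{s''}2<s'$); we therefore concentrate on the genuinely anisotropic part, namely the estimates for $\rd\tphi$. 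The heart of the matter is to control, for each component $f$ of $\rd\tphi$ and each $k$, the $L^\infty$ norm, the $B^{u_k,u_{k'}}_{\infty,1}$ norm (for each $k'\neq k$), and the $C^{0,\f{s''}2}$ norm on $\{u_k\ge\de\}$, by $\EE$.

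Fix $k$ and regard $f$ as a function of $(u_k,u_{k'})=:(y_1,y_2)$. By Lemma~\ref{ukuk'toXE.lemma}, $\partial_{y_2}$ is (up to uniformly non-degenerate factors, non-degeneracy guaranteed by the transversality constant $\upkappa_0$) the good direction $E_k$, while $\{|y_1|\le 2\de\}=\Sigma_t\cap S^k_{2\de}$. Transferring norms between the $(x^1,x^2)$ and $(u_k,u_{k'})$ charts by Theorem~\ref{thm:bootstrap.metric} (legitimate because $u_k\in W^{2,\infty}$ and the frame coefficients and Jacobian are controlled, uniformly in $\de$), the terms of $\EE$ become: $f$ in $L^2$ with $s'$ isotropic fractional regularity; $\partial_{y_2}f$ in $L^2$ with an additional $s''$ isotropic fractional regularity; $\de^{\f12}$--weighted bounds for $\partial f$ and $\partial\partial_{y_2}f$ in $L^2$; $\de^{-\f12}$--weighted bounds for $f$ and $\partial_{y_2}f$ in $L^2(\{|y_1|\le2\de\})$; and a full $\partial f\in L^2(\{|y_1|\ge\de\})$ away from the transition layer. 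One then splits $f=\rho f+(1-\rho)f$ with $\rho=\rho(y_1/\de)$ a cutoff to $\{|y_1|\le2\de\}$ equal to $1$ on $\{|y_1|\le\de\}$.

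\emph{Away from the layer} ($(1-\rho)f$): one full $y_1$--derivative together with $1+s''$ derivatives in the good direction $y_2$ is more than enough for the anisotropic Gagliardo--Nirenberg inequality $\|h\|_{L^\infty(\mathbb R^2)}\ls\|h\|_{L^2}^{\f12}\|\partial_{y_1}h\|_{L^2}^{a}\|\partial_{y_2}^{1+s''}h\|_{L^2}^{b}$, valid because $\f12+\f1{2(1+s'')}<1$; distributing derivatives optimally among dyadic blocks upgrades this to the summable $B^{u_k,u_{k'}}_{\infty,1}$ bound, and to the $C^{0,\f{s''}2}$ bound, the inequality $s'-s''<\f13$ of Definition~\ref{roughdef} being what makes the exponent bookkeeping close. \emph{Inside the layer} ($\rho f$): rescaling $\tilde y_1=y_1/\de$ makes all the $\de^{\pm\f12}$--weighted norms scale-invariant, so that $\rho f$, $\partial_{\tilde y_1}(\rho f)$, $\partial_{y_2}(\rho f)$, $\partial_{\tilde y_1}\partial_{y_2}(\rho f)$ are all $\ls\EE$ in $L^2(d\tilde y_1\,dy_2)$; the ``rectangular'' embedding (Agmon's inequality in each variable followed by Fubini) then gives $\|\rho f\|_{L^\infty}\ls\EE$, and its dyadic refinement, using the transferred fractional gains, gives the $B^{u_k,u_{k'}}_{\infty,1}$ bound after undoing the rescaling. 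Since the Hölder bound is only claimed on $\{u_k\ge\de\}$, where $\rho f\equiv0$, the layer region is irrelevant there.

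The main obstacle is the Besov estimate. A first derivative of $\tphi$ genuinely jumps (at scale $\de$) across $\{u_k=0\}$, which is exactly the borderline case where a naive dyadic sum for $B^0_{\infty,1}$ diverges logarithmically in $\de^{-1}$; a bound \emph{uniform in $\de$} is obtained only by exploiting that the transition layer is confined to the codimension-one set $\{u_k=0\}$ and is transversal to the second coordinate direction, so that the dangerous frequencies live in a cone $\{|\xi_{k'}|\ls|\xi_k|\}$ on which the extra $s'$ fractional regularity of $\rd\tphi$ --- and, after the layer rescaling, the essentially one-dimensional structure of $\rho f$, for which anisotropic Bernstein from $L^2$ to $L^\infty$ loses only $2^{q/2}$ instead of $2^q$ --- restores summability. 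A secondary, more bookkeeping difficulty, already anticipated in Section~\ref{sec:intro.elliptic}, is that the $L^\infty$ and Hölder norms on the left-hand side are measured in the elliptic-gauge coordinates whereas the argument runs in the $(u_k,u_{k'})$ chart, so one must use the uniform $W^{2,\infty}$ control of $u_k$ and $W^{1,\infty}$ control of $X_k,E_k$ from Theorem~\ref{thm:bootstrap.metric} to pass back and forth without degrading the Hölder exponent.
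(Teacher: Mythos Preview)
The present paper does not contain a proof of Theorem~\ref{thm:bootstrap.Li}: immediately after the statement it reads ``The proof of Theorem~\ref{thm:bootstrap.Li} will be carried out in \cite{LVdM2}.'' The only information given here is the one-line description preceding the theorem, ``The following theorem is achieved through an anisotropic Sobolev-type embedding result.'' Your proposal is entirely consistent with that description --- working in the $(u_k,u_{k'})$ chart, splitting according to the $\de$-layer, rescaling inside the layer to make the $\de^{\pm 1/2}$ weights scale-invariant, and exploiting the extra regularity in the good direction $E_k$ to recover Besov summability --- but there is no proof in this paper against which to check the details.

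One point worth flagging in your sketch: as defined in Definition~\ref{def:Lipschitz.control.norm}, $\mathcal E$ involves only $\tphi$, so the $\rphi$ contributions on the left-hand side (the LHS of \eqref{rphiBbootstrap}, the $\rphi$ part of \eqref{BA:Li}, and the term $\|\rd\rphi\|_{C^{0,s''/2}}$) cannot literally be bounded by $\mathcal E$ alone. Your appeal to the bootstrap assumption \eqref{BA:rphi} gives $\ls\ep^{3/4}$ rather than $\ls\mathcal E$, and since Theorem~\ref{thm:energyest} asserts $\mathcal E\ls\ep$, these are not the same thing. Presumably the companion paper either includes the relevant $\rphi$ Sobolev norms on the right-hand side as well, or states the $\rphi$ embedding separately; but as written, this part of your argument does not close to the stated conclusion.
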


The proof of Theorem~\ref{thm:bootstrap.Li} will be carried out in \cite{LVdM2}.

\subsubsection{Wave estimates}\label{sec:aprioriestimates3}

\begin{theorem}\label{thm:energyest}
Under the assumptions of Theorem~\ref{thm:bootstrap.metric}, and after choosing $\ep_0$ and $\de_0$ smaller, the following\footnote{Note that some bounds in 1~and 2~are repeated.} holds for some $C = C(s',s'',R,\upkappa_0) >0$ and all $t\in [0,T_B)$:
\begin{enumerate}
\item The wave energy estimates \eqref{BA:rphi}--\eqref{BA:flux.for.tphi} all hold with $\ep^{\f 34}$ replaced by $C \ep$.
\item The wave energy estimates stated in \eqref{eq:main.thm.rphi}--\eqref{eq:main.thm.tphi.2}, \eqref{eq:smooththeorem.1}--\eqref{eq:smooththeorem.2} hold.
\item In addition, the norm $\mathcal E$ satisfies the estimate $\mathcal E \ls \ep$.
\end{enumerate}

\end{theorem}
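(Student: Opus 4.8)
The plan is to close the entire wave-part bootstrap \eqref{BA:rphi}--\eqref{BA:Li} by an energy (vector field) method run simultaneously along the spacelike slices $\Sigma_t$ and the three null foliations $\{C^k_{u_k}\}$, feeding in the metric and Ricci-coefficient bounds supplied by Theorem~\ref{thm:bootstrap.metric}. Since $\Box_g\rphi=\Box_g\tphi=0$ holds exactly (and $\Box_g\phi=0$), there are no genuine forcing terms: every term on the right is a commutator error, and the whole difficulty is to arrange the commutators so that, term by term, the anisotropic and partly $\de^{-\f12}$-lossy regularity that is actually available for $g$, $\chi_k$, $\eta_k$, $K$ suffices.

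First I would record the basic $L^2$ identity. For $\Box_g\psi=F$, contracting with the multiplier $e_0\psi$ and applying the divergence theorem on the slab $\cup_{s\in[0,t]}\Sigma_s$, using \eqref{Box2+1} and the Christoffel symbols of Lemma~\ref{Christoffel}, gives
$$\|\rd\psi\|_{L^2(\Sigma_t)}^2 \ls \|\rd\psi\|_{L^2(\Sigma_0)}^2 + \int_0^t\Big(\|\rd g\|_{L^\infty(\Sigma_s)}\|\rd\psi\|_{L^2(\Sigma_s)}^2 + \|\rd\psi\|_{L^2(\Sigma_s)}\|F\|_{L^2(\Sigma_s)}\Big)\,ds,$$
and the same computation carried out with respect to the null foliation of $u_k$ (using the frame of Definition~\ref{def:null.frame} and the volume form \eqref{volCuk}) produces the accompanying flux bound on $C^k_{u_k}$. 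With $\|\rd g\|_{L^\infty}\ls\ep^{\f32}$ from Theorem~\ref{thm:bootstrap.metric}, a Gr\"onwall argument on $[0,T_B)\subseteq[0,1]$ absorbs the metric error and yields the zeroth-order energy and flux bounds (the $\psi=\phi,\rphi$ cases of \eqref{bootstrapsmallnessenergy} and \eqref{BA:flux.for.rphi}) with $\ep^{\f34}$ improved to $C\ep$; propagation of support of $\rphi,\tphi$ in $B(0,R)$ follows from finite speed of propagation for $\Box_g$.

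Next come the commuted estimates. I would commute $\Box_g$ with each operator appearing in $\mathcal E$ and in the bootstrap norms: the translation-invariant/fractional ones $\rd$, $\rd^2$, $\Db^{s'}$, $\rd\Db^{s'}$ (for \eqref{BA:rphi}, \eqref{tphiH2bootstrap}, \eqref{tphiH3/2bootstrap}, \eqref{EtphiH2bootstrap}), and the geometric vector fields $E_k,L_k$ and $E_k\rd$, $L_k\rd$. For the first family, $[\Box_g,\Db^{s'}]\psi$, $[\Box_g,\rd]\psi$ are handled by Kato--Ponce/commutator estimates and produce sources schematically $\rd g\cdot\rd^2\psi$, $\rd^2 g\cdot\rd\psi$, $\rd^{1+s'}g\cdot\rd^{1+s'}\psi$, bounded using respectively $\rd g\in L^\infty$, $\rd^2_x\mfg\in L^\infty$ and $\rd_x\rd_t\mfg\in L^p\cap H^{s'}$ from Theorem~\ref{thm:bootstrap.metric}. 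For the geometric family, $[\Box_g,E_k]$ and $[\Box_g,L_k]$ are computed from the connection coefficients of Lemma~\ref{riccibarXEL}, so the sources involve $\chi_k,\eta_k,K$ and their derivatives paired with $\rd^2\psi$ or $\rd\psi$; here the anisotropic bounds \eqref{bootstrapricci}--\eqref{bootstrapnablaricci}, in particular the $L^\infty_{u_k}L^2_{\th_k}$-control of $\rd_x\chi_k$ and $E_k\eta_k$, are exactly what is needed. The $\de^{-\f12}$-weighted $H^2$ estimates for $\tphi$ are obtained the same way, now tolerating the genuinely large $\|\rd^2\tphi\|_{L^2}\ls\ep\de^{-\f12}$ on the right. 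The two structural gains I would extract, following Section~\ref{sec:intro.waves}, are: \textbf{localization} --- the singular part of $\tphi_k$ is concentrated in $S^k_\de=\{|u_k|\le\de\}$, of $u_k$-width $\de$, so replacing an $L^2_{u_k}$ norm along $L_k$-integral curves by an $L^1_{u_k}$ norm costs only $\de^{\f12}$ by Cauchy--Schwarz, converting $\de^{-\f12}$ losses into $O(1)$; and \textbf{transversality} --- in a cross term $\rd\tphi_j\cdot\rd\tphi_k$ with $j\ne k$ the region $S^j_\de\cap S^k_\de$ has area $\ls\de$ (this is the quantitative transversality built into \eqref{cangle2}, point 4 of Section~\ref{sec:intro.waves}), again supplying a compensating power of $\de$. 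Running the energy and flux estimates localized by cutoffs $\widetilde\rho(u_k/\de)$, and noting that source terms not supported in $S^k_\de$ are already of size $\ep$, then yields the $\de$-independent improved bounds \eqref{bootstrapbadunlocenergyhyp}, \eqref{BA:away.from.singular}, \eqref{BA:flux.for.tphi.improved}, \eqref{BA:flux.for.tphi.improved.2}.

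Assembling: Gr\"onwall applied to the commuted identities gives all of \eqref{BA:rphi}--\eqref{BA:flux.for.tphi} with $\ep^{\f34}$ improved to $C\ep$, which is item~1; item~2 is the restatement of these same bounds in the form stated in Theorems~\ref{roughtheorem}--\ref{smooththeorem} (i.e.\ \eqref{eq:main.thm.rphi}--\eqref{eq:main.thm.tphi.2}, \eqref{eq:smooththeorem.1}--\eqref{eq:smooththeorem.2}); and item~3, $\mathcal E\ls\ep$, is immediate from Definition~\ref{def:Lipschitz.control.norm} once each constituent $L^2$ norm has been bounded. Feeding $\mathcal E\ls\ep$ into Theorem~\ref{thm:bootstrap.Li} then closes the Besov and $L^\infty$ bootstrap \eqref{rphiBbootstrap}--\eqref{BA:Li} and closes the loop, since those $L^\infty$/Besov bounds were used to control the metric errors above. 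The main obstacle is precisely the bookkeeping of the geometric commutator sources: one must verify, for each commuted quantity entering $\mathcal E$ and the bootstrap, that every product of (a derivative of) a Ricci coefficient or metric component with (a derivative of) $\psi$ can be estimated in $L^1_tL^2_x$ using \emph{only} the anisotropic, partly lossy norms Theorem~\ref{thm:bootstrap.metric} actually provides --- in particular that the dangerous combinations (an $L^\infty_{u_k}L^2_{\th_k}$ norm of $\rd_x\eta_k$, general second derivatives of $\chi_k,\eta_k$, or $\rd_x\rd_t\mfg$ in $L^\infty$) never appear against the worst factors --- and this works only because of the precise algebraic structure of $[\Box_g,E_k]$, $[\Box_g,L_k]$ combined with the localization and transversality gains.
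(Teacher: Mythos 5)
Your outline follows the strategy that the paper itself announces for this theorem (energy and flux identities on $\Sigma_t$ and on the null cones $C^k_{u_k}$, commutation with $\Db^{s'}$, $\rd$, and the adapted fields $L_k,E_k$, and recovery of the $\de^{-\f12}$ losses via localization to $S^k_\de$ and quantitative transversality), but note that the present paper does not prove Theorem~\ref{thm:energyest} at all: its proof is the content of the companion paper \cite{LVdM2}. Measured against that, your proposal is a plan rather than a proof, and the gap is exactly the part you defer to ``bookkeeping'' in your last paragraph. The entire difficulty of the theorem is to verify, for each commuted quantity (including the third-order combinations $Y_k^{(1)}Y_k^{(2)}Y_k^{(3)}\tphi$ of \eqref{eq:smooththeorem.1}, the fractional-commuted quantities $\rd E_k\Db^{s''}\tphi$ entering $\mathcal E$, and the $H^{1+s'}$ bounds on $L_k\tphi$, $E_k\tphi$ of \eqref{eq:main.thm.tphi.2}), that every commutator error can be closed using \emph{only} the anisotropic, partly lossy geometric norms actually supplied by Theorem~\ref{thm:bootstrap.metric} --- e.g.\ $\rd_x\eta_k$ only in $L^2_{u_k}L^2_{\th_k}$, no general second derivatives of $\chi_k,\eta_k$, no $L^\i$ control of $\rd_x\rd_t\mfg$. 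The paper stresses that this works only because certain dangerous terms are absent from the precise algebraic structure of $[\Box_g,E_k]$, $[\Box_g,L_k]$ and their iterates; asserting this without computing the commutators and checking each pairing is precisely the step that cannot be taken for granted, and it is not a routine Gr\"onwall argument.

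Two more concrete soft spots: (i) the plain multiplier identity with $e_0\psi$ plus Gr\"onwall does not yield the improved flux bounds \eqref{BA:flux.for.tphi.improved}--\eqref{BA:flux.for.tphi.improved.2}; the $\de$-independent fluxes of $L_k\rd_x\tphi$ and $E_k^2\tphi$ on $C^k_{u_k}$, and the fluxes on $C^{k'}_{u_{k'}}\setminus S^k_\de$, require commuted estimates with the good derivatives together with a careful treatment of the cutoff/boundary terms generated by $\widetilde\rho(u_k/\de)$ (these produce factors of $\de^{-1}$ times transversal derivatives of $u_k$ hitting the energy density, which must themselves be absorbed using the transversality and flux structure, not just Cauchy--Schwarz in area); and your claim that ``source terms not supported in $S^k_\de$ are already of size $\ep$'' is essentially the bound \eqref{BA:away.from.singular} you are trying to improve, so as written the argument is circular there. (ii) For the cross terms $\rd\widetilde\phi_j\cdot\rd\widetilde\phi_k$, the useful quantitative statement is not merely that $S^j_\de\cap S^k_\de$ has small area, but that one can integrate the bad $L^2_{u_j}$-sized factor in $L^1$ along the characteristics of the other wave, gaining $\de^{\f12}$ from the $O(\de)$ length of the intersection with $S^j_\de$ (as in the proof of Proposition~\ref{prop:Ricci} here); your area-based version is too coarse to close the borderline terms. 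So the approach is the right one, but the theorem's substance --- the term-by-term commutator analysis under the available low-regularity geometric bounds --- is missing from the proposal.
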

The proof of Theorem~\ref{thm:energyest} will be carried out in \cite{LVdM2}.

\subsection{Local existence and the proof of Theorem~\ref{smooththeorem}}\label{sec:pf.of.smooththeorem}

\begin{thm}[Local existence]\label{thm:local}
\begin{enumerate}
\item Given an initial data set in Theorem~\ref{smooththeorem}, there exists a time $T_{local} \in (0,1)$ (potentially depending on the initial data profile and not only the parameters involved) and a smooth solution $(\gamma, \bt^i, N, \phi)$ to \eqref{Einsteinwave}, \eqref{Kellipticequation}--\eqref{gammaellipticequation} and \eqref{Khypequation} on $[0,T_{local})\times \RR^2$
\item Moreover, there exists a universal $\ep_{local}>0$ such that if $[0,T_*)$ is the maximal time interval for which the solution exists for some $T_*\in (0,1)$, then at least one of the following holds:
\begin{enumerate}
\item $\liminf_{t\to T_*} (\|\phi\|_{H^3(\Sigma_t)} + \|\n \phi\|_{H^2(\Sigma_t)} ) = \infty$,
\item $\liminf_{t\to T_*} (\|\n \phi\|_{L^\infty(\Sigma_t)} + \|\rd_i \phi\|_{L^\infty(\Sigma_t)} ) \geq \ep_{local}$.
\end{enumerate}
\end{enumerate}
\end{thm}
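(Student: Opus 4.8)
The plan is to treat Theorem~\ref{thm:local} as a (largely standard) local well-posedness statement for the coupled wave--elliptic system \eqref{Einsteinwave}, \eqref{Kellipticequation}--\eqref{gammaellipticequation}, \eqref{Khypequation}, the only non-routine feature being the interplay between the quasilinear wave equation $\Box_g\phi = 0$ and the \emph{nonlocal} elliptic equations that reconstruct $(N,\gamma,\bt)$ from $\phi$. The argument is close to the smooth local theory of \cite{HL.elliptic}, combined with the weighted elliptic estimates of \cite{Huneau.constraints, HL.elliptic}.

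\textbf{Part 1 (existence).} I would run a Picard-type iteration. Given a current iterate $\phi^{(n)}$, which is supported in a fixed ball (finite speed of propagation) and lies in $C([0,T];H^3(\RR^2))\cap C^1([0,T];H^2(\RR^2))$ close to the data, one first solves on each slice $\Sigma_t$ the elliptic system for $(K^{(n)},N^{(n)},\gamma^{(n)},\bt^{(n)})$: equation \eqref{Nellipticequation} has the form $\Delta N^{(n)} = N^{(n)}\cdot(\text{nonnegative source quadratic in }\partial\phi^{(n)})$, uniquely solvable with $N^{(n)}\to 1$ at infinity provided $\|\partial\phi^{(n)}\|_{L^\infty}$ is small (with decay guaranteed by the support); likewise \eqref{gammaellipticequation} determines $\gamma^{(n)} = -\gamma^{(n)}_{asymp}\,\om(|x|)\log|x| + \widetilde\gamma^{(n)}$ with $\widetilde\gamma^{(n)}$ in a weighted Sobolev space, and \eqref{betaellipticequation} determines $\bt^{(n)}$, all with the weighted elliptic estimates of the type in \cite{Huneau.constraints, HL.elliptic} (at the $H^3$ level there is no Calderón--Zygmund obstruction, since $H^3(\RR^2)\hookrightarrow W^{2,p}$ for all $p<\infty$, so $(\partial\phi)^2\in W^{1,p}$ and the metric components are $C^{2,\alpha}$). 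One then defines $\phi^{(n+1)}$ as the solution of the \emph{linear} wave equation $\Box_{g^{(n)}}\phi^{(n+1)} = 0$ with the prescribed Cauchy data, where $g^{(n)}$ is built from $(N^{(n)},\gamma^{(n)},\bt^{(n)})$. Standard energy estimates for $\Box_{g^{(n)}}$ together with the elliptic estimates show that, for $T$ small depending on $\|\phi_0\|_{H^3}+\|\phi'_0\|_{H^2}$, the map is a self-map of a suitable ball and a contraction in the lower-regularity norm $C([0,T];H^2)\cap C^1([0,T];H^1)$; its fixed point is the desired solution. Uniqueness follows from the contraction, persistence of higher regularity (hence $C^\infty$ smoothness of $\phi$, and then of $g$ via the elliptic equations) is obtained by commuting $\Box_g\phi=0$ with $\rd_x$ and exploiting the elliptic gain, and the gauge conditions \eqref{maximality}, \eqref{gauge} are preserved since the elliptic equations are exactly those forced by \eqref{Einsteinricci} in this gauge. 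A maximal interval $[0,T_*)$ is produced by the usual union argument.

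\textbf{Part 2 (continuation criterion).} Fix $\ep_{local}$ small enough that, whenever $\|\n\phi\|_{L^\infty(\Sigma_t)}+\|\rd_i\phi\|_{L^\infty(\Sigma_t)}\le \ep_{local}$ and $\phi(t,\cdot)$ is supported in a fixed ball, the elliptic system has a unique solution with $N$ within a fixed factor of $1$ and $g$ within a fixed factor of the Minkowski metric, so that finite speed of propagation holds at a fixed speed and the weighted elliptic estimates hold with uniform constants. Write $A(t) := \|\n\phi\|_{L^\infty(\Sigma_t)}+\|\rd_i\phi\|_{L^\infty(\Sigma_t)}$ and $B(t):=\|\phi\|_{H^3(\Sigma_t)}^2+\|\n\phi\|_{H^2(\Sigma_t)}^2$. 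If $A(t)\le \ep_{local}$ on $[0,T_*)$ with $T_*<1$, then combining the uniform elliptic estimates with $\rd_x$-commuted (up to second order) and $\rd_t$-commuted (up to first order) energy estimates for $\Box_g\phi=0$ yields $\frac{d}{dt}B(t)\le C(1+A(t))B(t)+C(1+A(t))$ with $C=C(R,s',s'',\upkappa_0)$; Grönwall then bounds $B$ uniformly on $[0,T_*)$, giving a limiting Cauchy datum $(\phi,\n\phi)(T_*,\cdot)\in H^3\times H^2$ with $A(T_*)\le\ep_{local}$, and Part~1 applied from time $T_*$ extends the solution past $T_*$, contradicting maximality. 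Hence a maximal $[0,T_*)$ with $T_*<1$ cannot have both $\liminf_{t\to T_*}B(t)<\infty$ (which, with the a~priori bound, would propagate to all of $[0,T_*)$) and $\liminf_{t\to T_*}A(t)<\ep_{local}$; unwinding, at least one of (a), (b) holds.

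\textbf{Main obstacle.} The genuinely delicate point — and the source of the smallness threshold $\ep_{local}$ — is the nonlocal quasilinear coupling: the metric is not transported along the flow but reconstructed at each instant by solving the elliptic system, whose solvability (equivalently, strict positivity of $N$ and non-degeneracy of $g$) hinges on controlling $\partial\phi$ in $L^\infty$ together with spatial decay, and one must track the logarithmically growing pieces $\gamma_{asymp}\log|x|$, $N_{asymp}(t)\log|x|$ and work in weighted spaces throughout both the iteration and the energy estimates. All of this, however, takes place at the high-regularity level $H^3\times H^2$, comfortably above the critical threshold for the problem, so none of the endpoint elliptic estimates that occupy the rest of the paper are needed here.
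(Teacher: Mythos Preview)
Your sketch is essentially correct as a high-regularity local well-posedness argument, but the paper's proof is a single sentence: it cites \cite[Theorem~5.4]{HL.elliptic} as a black box. In other words, the authors do not reprove local existence at all; they simply invoke the smooth local theory already established by Huneau--Luk in the elliptic gauge, which delivers both the existence statement and the continuation criterion in exactly the form stated. Your proposal is, in effect, a reconstruction of what that cited theorem contains --- Picard iteration coupling the wave equation for $\phi$ with the weighted elliptic system for $(N,\gamma,\bt)$, together with a standard Gr\"onwall-based continuation criterion --- and you correctly identify the non-routine ingredients (weighted spaces to handle the logarithmic tails, the smallness threshold $\ep_{local}$ needed for solvability of the elliptic system). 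So the two ``approaches'' differ only in that the paper outsources the work to \cite{HL.elliptic} while you sketch its content; there is no mathematical discrepancy.
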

\begin{proof}
This is an immediate consequence of the local existence result in \cite[Theorem~5.4]{HL.elliptic}. \qedhere
\end{proof}

\begin{lemma}\label{lem:local}
Taking $T_{local}\in (0,1)$ smaller if necessary, the solution on $(0,T_{local})\times \RR^2$ given by Theorem~\ref{thm:local} obeys all the estimates in \eqref{bootstrapK}--\eqref{BA:Li} with $T_B$ replaced by $T_{local}$.
\end{lemma}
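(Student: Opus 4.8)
The plan is a short-time continuity argument. The solution produced by Theorem~\ref{thm:local} is smooth on $[0,T_{local})\times\RR^2$, and --- since $\tphi,\rphi$ remain compactly supported in $B(0,R)$ by finite speed of propagation, $\gamma, N$ retain the decomposition \eqref{eq:metric.decomposition} with remainders decaying at spatial infinity (by the local theory of \cite{HL.elliptic}), and the eikonal functions $u_k$ and frames $(L_k,E_k,X_k)$ depend smoothly on the metric and, being $C^1$-close to the flat linear objects near $t=0$, have no caustics there --- every norm appearing in \eqref{bootstrapK}--\eqref{BA:Li} is a continuous function of $t$ on $[0,T_{local})$. It therefore suffices to check that at $t=0$ each of these quantities is strictly below (say, by a factor $2$) its bootstrap threshold; shrinking $T_{local}$ then gives the claim. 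For the flux quantities \eqref{BA:flux.for.rphi}--\eqref{BA:flux.for.tphi} no $t=0$ computation is even needed: the null-cone segments $C^k_{u_k}([0,T_{local}))$ shrink to points as $T_{local}\to0$, so these norms tend to $0$.

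The substance is thus the verification of the remaining (spatial) estimates on $\Sigma_0$. Here I would proceed block by block. The metric bounds \eqref{eq:BA.g.asymp}--\eqref{eq:BA.g.L4} at $t=0$ are exactly the bounds on $(\gamma_0,K_0)$ (and on $N,\bt^i$ reconstructed from \eqref{Nellipticequation}, \eqref{betaellipticequation}) produced by the constraint-solving construction of Appendix~\ref{sec:appendix}, which yields size $O(\ep^{\f32})$ --- below the $\ep^{\f54}$ threshold. The Ricci-coefficient bounds \eqref{bootstrapK}--\eqref{bootstrapvarTheta} at $t=0$ follow by inserting these metric bounds into the explicit initial expressions of Lemma~\ref{riccisigma0expression} (in particular $\mu_k=\varTheta_k=e^{\gamma}$, $\Xi_k=0$, and $\chi_k,\eta_k$ algebraic in $K$ and $\rd_x\gamma$). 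The wave bounds \eqref{BA:rphi}--\eqref{BA:Li} at $t=0$ reduce to the assumptions of Definition~\ref{smoothdef}, using: that $\rphi$ and its derivatives are $O(\ep)$; that the singular second-derivative mass of $\tphi,\tphi'$ sits in the width-$\de$ strip $S^k(-\de,0)$, so that the estimates away from $S^k_\de$ are $O(\ep)$ and, on the $O(\de)$-area region $S^k_{2\de}$, the Lipschitz bound for $\rd\tphi$ yields the $\de^{\f12}$ gains; and that $E_k\tphi$, $E_k\tphi'$ and $\tphi'-X_k\tphi$ (i.e.\ $L_k\tphi$) are $O(\ep)$, which --- after expressing $\rd_t\tphi = N\n\tphi+\bt^i\rd_i\tphi$ and commuting $E_k,L_k$ through spatial and time derivatives using the frame-derivative formulas of Proposition~\ref{dXELellipticprop} --- gives the commuted bounds in \eqref{bootstrapsmallnessenergy}, \eqref{EtphiH2bootstrap}, \eqref{bootstrapbadunlocenergyhyp}.

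The main obstacle, as in many such ``the data lies in the bootstrap region'' lemmas, is bookkeeping rather than any single hard estimate: one must match each bootstrap norm against the correct data norm, being careful to \emph{commute the good vector fields $E_k,L_k$ before differentiating} (so as to land on the $O(\ep)$ data quantities rather than the $O(\ep\de^{-1/2})$ generic ones) and to exploit that the width of the singular strip is $O(\de)$ wherever a $\de^{\f12}$-improved norm is claimed. A secondary point requiring care is the continuity-in-$t$ claim for the weighted and $(u_k,\th_k)$-coordinate norms, which follows from uniform spatial decay of the metric remainders and of the frame/eikonal quantities on $[0,T_{local})$ --- itself a consequence of the local theory together with the solution staying $C^1$-close to its initial configuration on a short interval.
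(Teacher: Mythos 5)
Your overall route --- continuity in $t$ together with a verification of each bootstrap norm at $t=0$ --- is exactly the paper's, and most of your block-by-block matching (flux norms vanishing as $T_{local}\to 0$; metric bounds from the elliptic equations at $t=0$; Ricci coefficients, $\mu_k$, $\varTheta_k$ from Lemma~\ref{riccisigma0expression}) coincides with the paper's proof. However, two of your $t=0$ verifications have genuine gaps. The more serious one concerns \eqref{bootstrapbadunlocenergyhyp}: you claim the $\de^{\f 12}$ gain follows from the Lipschitz bound for $\rd\tphi$ together with the $O(\de)$ area of $S^k_{2\de}\cap B(0,R)$. That argument only handles the uncommuted term $\|\rd\tphi\|_{L^2(\Sigma_0\cap S^k_{2\de})}$. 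The commuted terms $\|L_k\rd\tphi\|_{L^2(\Sigma_0\cap S^k_{2\de})}$ and $\|E_k\rd\tphi\|_{L^2(\Sigma_0\cap S^k_{2\de})}$ are second-derivative quantities for which no pointwise bound is available: Definition~\ref{smoothdef} gives only $L^2$-based control (e.g.\ $\|E_k\tphi\|_{H^{1+s''}(\Sigma_0)}+\|\tphi'-X_k\tphi\|_{H^{1+s''}(\Sigma_0)}\leq\ep$ with no localization, and $\ep\,\de^{-\f 12}$ for generic second derivatives), and restricting a global $O(\ep)$ $L^2$ bound to the strip produces no $\de^{\f 12}$ smallness; one-dimensional trace/interpolation arguments in $u_k$ with the available $H^2$-type bounds also fall short of the full $\de^{\f 12}$ factor since $\de\ll\ep$. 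This initial estimate is in fact nontrivial: the paper does not prove it in this lemma but derives it from the other data estimates via a separate proposition in the companion paper \cite{LVdM2}.

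The second gap is the Besov bounds \eqref{rphiBbootstrap}--\eqref{tphiBbootstrap} at $t=0$. These are not assumptions of Definition~\ref{smoothdef} and do not reduce to them by inspection: $\rd\tphi$ lies only in $H^{s'}(\Sigma_0)$ with $s'<\f 12$ isotropically, which does not embed into $B^{u_k,u_{k'}}_{\infty,1}$; the $\de$-uniform Besov control relies on the anisotropic structure (good derivatives $E_k$, $L_k$ and the localized/weighted quantities entering $\mathcal E$). The paper obtains these initial bounds by applying the embedding of Theorem~\ref{thm:bootstrap.Li} at $t=0$ to the $L^2$-based data assumptions, and your argument needs to invoke this (or an equivalent anisotropic embedding) explicitly; relatedly, the $t=0$ bound on $\rd^2_x\mfg$ in $L^\i$ used for \eqref{eq:BA.g.Li}--\eqref{eq:BA.g.L4} itself requires these initial Besov estimates, so this step cannot simply be attributed to the constraint construction of Appendix~\ref{sec:appendix}.
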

\begin{proof}
By continuity, it suffices to check that the estimates \eqref{bootstrapK}--\eqref{BA:Li} are satisfied initially for $t = 0$. It is easy to verify that the following \emph{stronger} estimates all hold when $t = 0$:
\begin{enumerate}
\item The wave energy estimates \eqref{BA:rphi}--\eqref{BA:away.from.singular} hold at $t=0$ with $\ep^{\f 34}$ replaced by $C\ep$. For \eqref{BA:rphi}--\eqref{EtphiH2bootstrap} and \eqref{BA:away.from.singular}, this is an immediate consequence of the assumptions in Definition~\ref{smoothdef}. The initial estimate for \eqref{bootstrapbadunlocenergyhyp} follows from the other estimates; see \cite[Proposition~10.1]{LVdM2} for the proof of this fact.
\item The wave flux estimates \eqref{BA:flux.for.rphi}--\eqref{BA:flux.for.tphi} trivially hold when (a) $T_B$ is replaced by $0$ and (b) $\ep^{\f 34}$ is replaced by $0$. That this is the case is because $C^k_{u_k}(\{0\})$ has measure zero (with respect to $du_k\, d\th_k$).
\item The pointwise wave estimates \eqref{rphiBbootstrap}--\eqref{BA:Li} hold when $t = 0$ with $\ep^{\f 34}$ replaced by $C\ep$. For \eqref{BA:Li}, this is directly assumed in Definition~\ref{smoothdef}. For \eqref{rphiBbootstrap} and \eqref{tphiBbootstrap}, this is a consequence of the embedding in Theorem~\ref{thm:bootstrap.Li} together with the $L^2$-based assumptions in Definition~\ref{smoothdef}.
\item The estimates \eqref{eq:BA.g.asymp}--\eqref{eq:BA.g.L4} all hold at $t=0$ with $\ep^{\f 54}$ replaced by $C\ep^2$. That this is the case can be proven using the elliptic equations for the metric components, in the same manner as Propositions~\ref{prop:elliptic.easy} and \ref{prop:elliptic.dtg}. Moreover, since the Besov estimates \eqref{rphiBbootstrap}--\eqref{tphiBbootstrap} hold for $\phi$ initially (see point 3.~above), we have the estimates $\sum_{\mfg \in \{\gamma, \bt^l,N\}} \| \rd^2_{ij} \mfg \|_{L^\i_{2-\alp}(\Sigma_0)} \ls \ep^2$ (proved in the same way as Proposition~\ref{prop:elliptic.Besov.weighted}). (Note that the estimates for $C\ep^2$ instead of $C \ep^{\f 32}$ as in Propositions~\ref{prop:elliptic.easy}, \ref{prop:elliptic.Besov.weighted} and \ref{prop:elliptic.dtg} because we have the better initial wave estimates from points 1~and 3~above.) 
\item Finally, the bounds \eqref{bootstrapK}--\eqref{bootstrapvarTheta} for $K$, $\chi_k$, $\eta_k$, $\mu_k$ and $\varTheta_k$ hold at $t=0$ with $\ep^{\f 54}$ replaced by $C\ep^2$. That this is the case is due to (a) the formulas \eqref{maximality3} and Lemma~\ref{riccisigma0expression} for their initial values at $t=0$, and (b) the metric estimates \eqref{eq:BA.g.asymp}--\eqref{eq:BA.g.L4} and $\sum_{\mfg \in \{\gamma, \bt^l,N\}} \| \rd^2_{ij} \mfg \|_{L^\i_{2-\alp}(\Sigma_0)} \ls \ep^2$ discussed in point 4~above. 
\end{enumerate}
\end{proof}

We are now ready to combine the local existence results (Theorem~\ref{thm:local} and Lemma~\ref{lem:local}) with the bootstrap results (Theorems~\ref{thm:bootstrap.metric}, \ref{thm:bootstrap.Li} and \ref{thm:energyest}) to conclude the proof of Theorem~\ref{smooththeorem}.
\begin{proof}[Proof of Theorem~\ref{smooththeorem}]
\pfstep{Step~1: Solution exists in the time interval $[0,1]$} Suppose for the sake of contradiction that there exists $T_* \in (0,1)$ such that $[0,T_*)$ is the maximal time interval for which the solution exists. 

We claim that the bootstrap assumptions \eqref{bootstrapK}--\eqref{BA:Li} hold for all $t\in [0,T_*)$. Suppose not, then by continuity and Lemma~\ref{lem:local}, there exists $T_{**} \in (0, T_*)$ such that \eqref{bootstrapK}--\eqref{BA:Li} all hold for $t\in [0, T_{**} ]$, and that when $t = T_{**}$,  in at least one of the estimates \eqref{bootstrapK}--\eqref{BA:Li}, ``$\leq$'' can be replaced by ``$=$''.

We now apply Theorems~\ref{thm:bootstrap.metric}, \ref{thm:bootstrap.Li} and \ref{thm:energyest} with $T_B = T_{**}$. In particular, \eqref{bootstrapK}--\eqref{BA:Li} all hold with $\ep^{\f 54}$ replaced by $C\ep^{\f 32}$ and $\ep^{\f 34}$ replaced by $C\ep$ for all $t\in [0, T_{**})$. Choosing $\ep_0$ smaller, we have $C\ep^{\f 32} \leq \f 12 \ep^{\f 54}$ and $C\ep \leq \f 12 \ep^{\f 34}$. However, this contradicts that fact that at least one of the estimates \eqref{bootstrapK}--\eqref{BA:Li} is an equality at $t = T_{**}$.

We have thus established that the bootstrap assumptions \eqref{bootstrapK}--\eqref{BA:Li} hold for all $t\in [0,T_*)$. Apply now again Theorems~\ref{thm:bootstrap.metric}, \ref{thm:bootstrap.Li} and \ref{thm:energyest}, but with $T_B = T_*$, we obtain the following:
\begin{enumerate}[(a)]
\item Since the initial data are smooth, \eqref{eq:smooththeorem.top} implies $\liminf_{t\to T_*} (\|\phi\|_{H^3(\Sigma_t)} + \|\n \phi\|_{H^2(\Sigma_t)} ) < \infty$.
\item By \eqref{eq:roughtheorem.Lipschitz}, $\|\n \phi\|_{L^\infty(\Sigma_t)} + \|\rd_i \phi\|_{L^\infty(\Sigma_t)} \leq C\ep$.
\end{enumerate}
Therefore, after taking $\ep_0$ smaller if necessary (so that $C\ep_0 \leq \ep_{local}$), part 2 of Theorem~\ref{thm:local} implies that $[0,T_*)$ is not a maximal time interval, which leads to a contradiction.

\pfstep{Step~2: Estimates in Theorem~\ref{smooththeorem}} Repeating the argument in Step~1, we also show that the bootstrap assumptions \eqref{bootstrapK}--\eqref{BA:Li} hold for all $t\in [0,1]$. As a result, all the estimates stated in the conclusions of Theorems~\ref{thm:bootstrap.metric}, \ref{thm:bootstrap.Li} and \ref{thm:energyest} hold. This thus implies all the estimates stated in Theorem~\ref{smooththeorem}. \qedhere
\end{proof}

\section{Preliminary estimates resulting from the bootstrap assumptions} \label{preliminary.estimates.section}
From this section onwards until Section~\ref{sec:Ricci.coeff}, we will prove Theorem~\ref{thm:bootstrap.metric} (see Section~\ref{sec:thm.bootstrap.metric.conclusion} for the conclusion of the proof). In particular, we work under the assumptions of Theorem~\ref{thm:bootstrap.metric}.

Moreover, we will allow \textbf{all constants $C$ or implicit constants in $\ls$ to depend on $s'$, $s''$, $R$ and $\upkappa_0$} (as in the statement of Theorem~\ref{thm:bootstrap.metric}). Whenever necessary, we will also \textbf{take $\ep_0$ smaller (depending on $s'$, $s''$, $R$ and $\upkappa_0$) without further comments.}

In this section, we prove preliminary estimates which follow directly from the bootstrap assumptions. Before we begin, we prove an easy finite speed of propagation lemma in \textbf{Section~\ref{sec:support}} that will be useful for the remainder of the paper. Turning to the preliminary estimates, the first group of estimates involve the coefficients of $(X_k,E_k,L_k)$ in the $(\rd_t, \rd_1, \rd_2)$ basis; see \textbf{Section~\ref{dgeomsection}}. The second group of estimates can be viewed as quantitative bounds on the transversality between the three waves; see \textbf{Section~\ref{controlsection}}.

\subsection{Support properties}\label{sec:support}

\begin{lemma}\label{lem:support}
The following holds on $\Sigma_t$ for all $t\in [0,T_B)$ and for $k=1,2,3$:
\begin{enumerate}
\item $\mathrm{supp}(\rphi),\,\mathrm{supp}(\tphi) \subseteq B(0,R)$,
\item $\mathrm{supp}(\tphi) \subseteq \{(t,x): u_k(t,x) \geq - \de \}$.
\end{enumerate}
\end{lemma}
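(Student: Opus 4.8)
\textbf{Plan for the proof of Lemma~\ref{lem:support}.} The statement is a standard finite-speed-of-propagation result, and the plan is to derive it directly from the characteristic structure of the wave equations $\Box_g \rphi = 0$ and $\Box_g \tphi = 0$ (see \eqref{eq:separate.wave.equations}), using that the metric $g$ is controlled --- and in particular uniformly close to Minkowski --- by the bootstrap assumptions \eqref{eq:BA.g.asymp}--\eqref{eq:BA.g.L4}. Both parts are of the same nature: part (1) says the supports stay inside the fixed ball $B(0,R)$ for all $t\in[0,T_B)$, and part (2) says $\tphi$ remains supported in the ``future'' side $\{u_k \geq -\de\}$ of the initial wavefront $\{u_k = -\de\}$ (recall from Definition~\ref{smoothdef} that $\mathrm{supp}(\tphi)\cup\mathrm{supp}(\tphi')\subseteq\{u_k\geq-\de\}$ on $\Sigma_0$).

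\textbf{First step: a domain-of-dependence/energy inequality.} I would set up a standard energy estimate for $\Box_g \psi = 0$ (applied to $\psi\in\{\rphi,\tphi\}$) on truncated backward characteristic cones. Concretely, fix a point or a region and consider the null cone of $g$ emanating backwards from it; using the stress-energy tensor $\TE_{\mu\nu}[\psi] = \rd_\mu\psi\,\rd_\nu\psi - \f12 g_{\mu\nu}(g^{-1})^{\alp\bt}\rd_\alp\psi\,\rd_\bt\psi$ contracted with $\n$, one obtains the usual inequality: the energy of $\psi$ on $\Sigma_t \cap (\text{cone})$ is bounded by the energy on $\Sigma_0\cap(\text{cone})$ plus a spacetime integral of $N\cdot\Box_g\psi\cdot(\n\psi)$, which vanishes. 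The only point requiring the bootstrap assumptions is that the cones have the right opening: since $g = -N^2 dt^2 + e^{2\gamma}\de_{ij}(dx^i+\bt^i dt)(dx^j+\bt^j dt)$ with $N = 1 + O(\ep^{5/4})$, $\gamma = O(\ep^{5/4})$ (modulo the $\log$-term, which only matters far away and is harmless here since all data are compactly supported), and $\bt^i = O(\ep^{5/4})$, the null cones of $g$ are within $O(\ep^{5/4})$ of the Minkowski light cones. In particular, for $\ep_0$ small, the backward $g$-null cone from any point $(t,x)$ with $t\le 1$ stays inside $\{(s,y): |y-x| \le (1+C\ep^{5/4})(t-s)\}$.

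\textbf{Second step: conclude (1).} The initial data $\rphi,\tphi$ are supported in $B(0,R/2)$ (Definition~\ref{smoothdef}, via \ref{data2} of Definition~\ref{roughdef}). By the domain-of-dependence property from Step~1, if $(t,x)$ with $t<T_B\le 1$ satisfies $|x| \ge R/2 + (1+C\ep^{5/4})t$, then the backward $g$-null cone from $(t,x)$ meets $\Sigma_0$ outside $B(0,R/2)$, where both data and its $\n$-derivative vanish, so the energy inequality forces $\psi(t,x)=0$ and $\rd\psi(t,x)=0$. Since $t\le 1$ and $C\ep^{5/4}\le$ (say) $R/2$, this shows $\mathrm{supp}(\psi)\subseteq B(0, R/2 + R/2) = B(0,R)$ for all $t\in[0,T_B)$, which is part (1).

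\textbf{Third step: conclude (2), the main point.} This is the part I expect to be slightly more delicate, because it is a statement about the level sets of the eikonal function $u_k$ rather than about coordinate balls, and $u_k$ itself depends on the (bootstrapped) metric. The key geometric fact is that $C^k_{-\de} = \{u_k = -\de\}$ is a \emph{null} hypersurface for $g$ (by \eqref{eikonal1}), ruled by the integral curves of $L_k$, and that $L_k$ is future-directed ($e_0 u_k > 0$, $L_k t = N^{-1}>0$). Therefore $\{u_k \ge -\de\} \cap \{t\ge 0\}$ is precisely the causal future (in the $g$-sense) of the initial slice $\{u_k \ge -\de\}\cap\Sigma_0$, up to checking that no other null hypersurface sneaks in --- which follows since $C^k_{\le -\de}$ is foliated by the null cones $C^k_w$, $w \le -\de$. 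Now I would apply the same energy/domain-of-dependence argument of Step~1, but with the truncated region being the backward $g$-null cone from a point $(t,x)$ with $u_k(t,x) < -\de$: such a cone, combined with the fact that $C^k_{-\de}$ is an \emph{achronal} null hypersurface separating $\{u_k > -\de\}$ from $\{u_k < -\de\}$, meets $\Sigma_0$ only inside $\{u_k \le -\de\} \cap \Sigma_0$, where $\tphi$ and $\tphi' = \n\tphi$ vanish by hypothesis. Hence $\tphi(t,x) = 0$, giving part (2). The one technical check needed is that the bootstrap bounds on $\mu_k$, $\chi_k$, $\eta_k$ (\eqref{bootstrapmu}, \eqref{bootstrapricci}) and on the frame coefficients guarantee that $u_k$ is a well-defined $C^1$ (indeed $C^{1,1}$) function with $\rd u_k$ nonvanishing and timelike, so that the level sets behave as claimed and the causality argument is rigorous; alternatively, one can avoid eikonal regularity issues entirely by transporting along $L_k$: since $L_k u_k = 0$ and $L_k t = N^{-1} > 0$, every backward integral curve of $-L_k$ from a point with $u_k < -\de$ stays in $\{u_k < -\de\}$ and reaches $\Sigma_0$ in $\{u_k < -\de\}$, and one propagates the vanishing of $\tphi$ backwards combined with the transport equation structure. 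I would use whichever of these two formulations is cleanest given the estimates already in hand; the energy-cone version is more robust and is the one I would write up.
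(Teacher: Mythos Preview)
Your approach is correct and essentially the same as the paper's, which tersely invokes standard finite speed of propagation: part~(1) from the metric being $O(\ep^{5/4})$-close to Minkowski in $C^1$ on compact sets (so the support grows by at most $R/2$ in time $\leq 1$), and part~(2) from $\{u_k = -\de\}$ being a null hypersurface with $e_0 u_k > 0$. Two small notes: in Step~3 your phrase ``$\rd u_k$ nonvanishing and timelike'' should read \emph{null} (as you correctly use elsewhere in the same paragraph), and your alternative of ``transporting along $L_k$'' does not by itself propagate vanishing of $\tphi$ since $\tphi$ solves a wave equation rather than a transport equation --- you are right to commit to the energy/domain-of-dependence argument.
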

\begin{proof}
Both assertions are standard finite speed of propagation statements for the wave equation. The first statement follows from the facts that the initial supports are in $B(0,\f R2)$ and that for $\ep_0$ sufficiently small, the metric is $O(\ep^{\f 54})$-close to the Minkowski metric in the $C^1_{t,x}$ norm on compact sets (by \eqref{eq:BA.g.asymp} and \eqref{eq:BA.g.Li}). The second statement follows from the facts that $u_k\geq -\de$ on $\mathrm{supp}(\tphi)$ initially, and that $\{(t,x): u_k = -\de\}$ is a null hypersurface with $e_0 u_k >0$ (by Definition~\ref{def:eikonal}). \qedhere
\end{proof}

\subsection{Estimates for the coefficients of $(X_k,E_k,L_k)$ in the $(\rd_t, \rd_1, \rd_2)$ basis} \label{dgeomsection}

\begin{lem}\label{lem:L.X.E}
	The following estimates hold for the coefficients of the vector fields $L_k, E_k, X_k$:
\begin{equation} \label{Yibounded} 
 |L^{\beta}_k|+	|X^i_k|+ |E^i_k| \lesssim  \la x\ra^{\epsilon}.
\end{equation}
\end{lem}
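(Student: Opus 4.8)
The plan is to reduce everything to two elementary consequences of the metric decomposition \eqref{eq:metric.decomposition} and the bootstrap assumptions \eqref{eq:BA.g.asymp}--\eqref{eq:BA.g.Li}: \emph{(i)} $e^{-\gamma}\ls \la x\ra^{\ep}$ on $\Sigma_t$, and \emph{(ii)} $N\gtrsim 1$ on $\Sigma_t$. For \emph{(i)}, I would write $\gamma = -\gamma_{asymp}\,\om(|x|)\log|x| + \widetilde\gamma$ with $\gamma_{asymp}\in[0,\ep^{\f54}]$ (by \eqref{eq:BA.g.asymp}) and $|\widetilde\gamma|\ls \ep^{\f54}\la x\ra^{-(1-\alp)}\ls\ep^{\f54}$ (by \eqref{eq:BA.g.Li}), so that $-\gamma\le \gamma_{asymp}\max(\log|x|,0)+C\ep^{\f54}\le C\ep^{\f54}\log\la x\ra + C\ep^{\f54}$, whence $e^{-\gamma}\ls \la x\ra^{C\ep^{\f54}}\le\la x\ra^{\ep}$ after taking $\ep_0$ small. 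For \emph{(ii)}, I would write $N = 1 + N_{asymp}(t)\,\om(|x|)\log|x| + \widetilde N$ with $N_{asymp}(t)\ge 0$ (so the middle term is $\ge 0$) and $|\widetilde N|\ls\ep^{\f54}$ (by \eqref{eq:BA.g.Li}), giving $N\ge 1 - C\ep^{\f54}\ge \f12$ for $\ep_0$ small.

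Next I would bound $X_k^i$ and $E_k^i$. Since $X_k$ and $E_k$ are tangent to $\Sigma_t$ and $g$-unit by Definition~\ref{def:null.frame} and \eqref{XELframecondition}, and $\bar g_{ij}=e^{2\gamma}\de_{ij}$ by \eqref{gauge}, we have $\de_{ij}X_k^iX_k^j = \de_{ij}E_k^iE_k^j = e^{-2\gamma}$, hence $|X_k^i|+|E_k^i|\ls e^{-\gamma}\ls\la x\ra^{\ep}$ by \emph{(i)}. Then I would bound $L_k$ using $\n = L_k + X_k$ from \eqref{nXEL} together with the formula \eqref{defnormal} for $\n$: this yields $L_k = \n - X_k = N^{-1}\rd_t - N^{-1}\bt^i\rd_i - X_k^i\rd_i$, i.e.\ $L_k^t = N^{-1}$ and $L_k^i = -X_k^i - N^{-1}\bt^i$. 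By \emph{(ii)}, $|L_k^t|=N^{-1}\le 2\ls\la x\ra^{\ep}$; and since $|\bt^i|\ls\ep^{\f54}$ by \eqref{eq:BA.g.Li}, we get $|L_k^i|\le |X_k^i| + N^{-1}|\bt^i|\ls \la x\ra^{\ep} + \ep^{\f54}\ls \la x\ra^{\ep}$. Combining these bounds gives \eqref{Yibounded}.

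There is no genuine obstacle in this lemma. The only point of substance is the passage from the logarithmic growth of the metric coefficients $\gamma$ and $N$ at spatial infinity to the polynomial weight $\la x\ra^{\ep}$; this is possible precisely because the asymptotic coefficients $\gamma_{asymp}$ and $N_{asymp}$ are of size $O(\ep^{\f54})$, so their product with $\log\la x\ra$ can be absorbed into an arbitrarily small power of $\la x\ra$ after shrinking $\ep_0$.
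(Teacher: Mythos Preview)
Your proof is correct and follows essentially the same approach as the paper: both arguments use the unit-length conditions $g(X_k,X_k)=g(E_k,E_k)=1$ together with the conformal form $\bar g_{ij}=e^{2\gamma}\de_{ij}$ to get $|X_k^i|+|E_k^i|\ls e^{-\gamma}\ls \la x\ra^{C\ep^{5/4}}$, and then use $L_k=\n-X_k$ with \eqref{defnormal} and the bootstrap bounds on $N,\bt$ to control $L_k^\beta$. Your treatment of the lower bound $N\ge \tfrac12$ (via $N_{asymp}\ge 0$ and $|\widetilde N|\ls\ep^{5/4}$) is in fact slightly more explicit than the paper's, but the substance is identical.
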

\begin{proof}
Using $g(X_k,X_k)=g(E_k,E_k)=1$ (by \eqref{XELframecondition}), $g_{ij} = e^{2\gamma}\de_{ij}$ (by \eqref{gauge}), \eqref{eq:BA.g.asymp} and \eqref{eq:BA.g.Li}, we know that 
\begin{equation}\label{eq:Xki.Eki.pf}
 |X_k^i|+ |E_k^i| \lesssim e^{-2\gamma} \lesssim e^{C \cdot  \epsilon^{\f 54} \cdot  \log(1+ \la x\ra)} \lesssim  \la x\ra^{ C \ep^{\f 54}}.
 \end{equation}
Using \eqref{nXEL} and \eqref{def:e0}, and then \eqref{eq:BA.g.asymp}, \eqref{eq:BA.g.Li}, \eqref{eq:Xki.Eki.pf} and the inequality 
\begin{equation}\label{eq:trivial.calculus}
\log y \leq \f 1 \rho y^\rho\quad  \mbox{for all $\rho >0$ and $y\geq 1$},
\end{equation}
we obtain
\begin{equation}\label{eq:Xkalp.pf}
 |L^{\alpha}_k| \lesssim  |N^{-1} | + \max_{i=1,2} ( |X_k^i| + |\f {\beta^i}{N}| )\lesssim \ep^{\f 54} \log (1+ \la x\ra) + \la x\ra^{C\ep^{\f 54}} \ls \la x\ra^{C\ep^{\f 54}}.
\end{equation}

After choosing $\ep_0$ to be sufficiently small, \eqref{eq:Xki.Eki.pf} and \eqref{eq:Xkalp.pf} obviously imply \eqref{Yibounded}.
\end{proof}

\begin{lem}\label{lem:rd.in.terms.of.XEL}
For any sufficiently regular function $f$,
\begin{align} 
	\label{spatialintermsofEX}
 		|\partial_i f| \ls &\: \la x \ra^{\ep} \left( |E_k f |+ |X_k f |\right), \\
	\label{timeintermsofLandspace}
		|\partial_t f| \ls &\: \la x \ra^{\ep} (|L_k f| + |\rd_x f| ), \\
 	\label{timeintermsofLEX}
 		|\partial_t f| \ls &\: \la x\ra^{\ep} \left( |L_k f |+ |X_k f |\right) + \la x \ra^{-1+\ep} |E_k f |.
 \end{align}
\end{lem}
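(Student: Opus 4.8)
The plan is to invert the frame representations established earlier and control the resulting coefficients using Lemma~\ref{lem:L.X.E} and the metric bootstrap assumptions. First I would prove \eqref{spatialintermsofEX}: since $(X_k, E_k)$ is a $\bar g$-orthonormal frame on $\Sigma_t$, one has the expansion $\partial_i f = \bar g(\partial_i, X_k)\, X_k f + \bar g(\partial_i, E_k)\, E_k f$, and the coefficients $\bar g(\partial_i, X_k) = e^{2\gamma}\delta_{ij} X_k^j$, $\bar g(\partial_i, E_k) = e^{2\gamma}\delta_{ij} E_k^j$ are bounded by $\la x\ra^\ep$ after using \eqref{gauge}, the bound $|X_k^i| + |E_k^i| \ls \la x\ra^\ep$ from \eqref{Yibounded} (more precisely its proof, which gives a bound like $\la x\ra^{C\ep^{5/4}}$), and the bound $e^{2\gamma} \ls \la x\ra^{C\ep^{5/4}}$ coming from \eqref{eq:BA.g.asymp}--\eqref{eq:BA.g.Li}. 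Choosing $\ep_0$ small absorbs all the powers of $\ep^{5/4}$ into a single $\ep$; alternatively one can invoke \eqref{partial12EX} directly, which already expresses $\partial_1, \partial_2$ in terms of $E_k, X_k$ with coefficients $e^{2\gamma} X_k^i$, $e^{2\gamma} E_k^i$.

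Next, for \eqref{timeintermsofLandspace}, I would use $e_0 = \partial_t - \beta^i \partial_i$ (Definition~\eqref{def:e0}) together with $\n = L_k + X_k$ (equation \eqref{nXEL}) and $\n = e_0/N$ (equation \eqref{defnormal}), which yield $\partial_t = N L_k + N X_k^i \partial_i + \beta^i \partial_i$. Hence $|\partial_t f| \ls |N|\,|L_k f| + (|N|\,|X_k^i| + |\beta^i|)\,|\partial_x f|$. The factor $|N| \ls \la x\ra^{C\ep^{5/4}}$ (from \eqref{eq:metric.decomposition}, \eqref{eq:BA.g.asymp}, \eqref{eq:BA.g.Li} and the elementary inequality \eqref{eq:trivial.calculus}), while $|X_k^i|$ and $|\beta^i|$ are both $\ls \la x\ra^{C\ep^{5/4}}$ as above; taking $\ep_0$ small gives \eqref{timeintermsofLandspace}. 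Finally, \eqref{timeintermsofLEX} follows by combining \eqref{timeintermsofLandspace} with \eqref{spatialintermsofEX} to rewrite $|\partial_x f|$, except that one must track the weight more carefully: applying \eqref{spatialintermsofEX} naively would give $\la x\ra^{2\ep}$ and no gain on the $E_k$ term. Instead I would substitute the expansion $\partial_i f = e^{2\gamma}(\cdots) X_k f + e^{2\gamma}(\cdots) E_k f$ directly into $\partial_t = N L_k + (N X_k^i + \beta^i)\partial_i$, collect the $X_k f$ and $E_k f$ coefficients, and exploit that the coefficient of $E_k f$ is $e^{2\gamma}(N X_k^i + \beta^i)\delta_{ij} E_k^j$, which — using $\delta_{ij} X_k^i E_k^j = 0$ from \eqref{XELframecondition} to drop the main $N X_k^i$ contribution — is only $e^{2\gamma}\beta^i \delta_{ij} E_k^j$; since $\beta^i = O(\ep^{5/4}\la x\ra^{\text{small}})$ decays like $\la x\ra^{-1+\alp}$-ish...

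Here the honest statement is that $|\bt^i| \ls \ep^{5/4}$ with a weight slightly better than $\la x\ra^0$; but the claimed gain $\la x\ra^{-1+\ep}$ requires $|\bt^i| \ls \la x\ra^{-1+\ep}$, which does \emph{not} follow from \eqref{eq:BA.g.Li} alone (that only gives $|\widetilde\bt| \ls \ep^{5/4}\la x\ra^{-1+\alp}$ — wait, it does: $\widetilde\bt = \bt$ here since $\bt$ has no $\log$ piece, and $\|\bt\|_{W^{1,\infty}_{1-\alp}} \ls \ep^{5/4}$ means $|\bt^i(x)| \ls \ep^{5/4}\la x\ra^{-(1-\alp)}$). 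So in fact the $E_k f$ coefficient is $\ls e^{2\gamma}|\bt|\,|E_k^i| \ls \la x\ra^{C\ep^{5/4}}\cdot \ep^{5/4}\la x\ra^{-1+\alp}\cdot \la x\ra^{C\ep^{5/4}} \ls \la x\ra^{-1+\ep}$ for $\ep_0$ small, which is exactly the claimed weight. Thus \eqref{timeintermsofLEX} follows by keeping the $L_k f$ term and the $X_k f$ term with weight $\la x\ra^\ep$, and the $E_k f$ term with the improved weight $\la x\ra^{-1+\ep}$. The main obstacle is purely bookkeeping: correctly tracking which terms carry the decaying $\beta^i$ factor (using the orthogonality relation $\delta_{ij} X_k^i E_k^j = 0$ to eliminate the non-decaying $N X_k^i$ contribution to the $E_k f$ coefficient), and verifying that all the $\la x\ra^{C\ep^{5/4}}$ powers collapse to $\la x\ra^\ep$ after shrinking $\ep_0$.
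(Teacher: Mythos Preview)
Your argument is correct and matches the paper's proof. For \eqref{timeintermsofLEX} the paper is slightly more direct: from $\partial_t = N L_k + N X_k + \beta^i \partial_i$ it simply keeps $N X_k f$ as a frame term (which trivially contributes only to $|X_k f|$) and applies \eqref{spatialintermsofEX} just to the already-decaying piece $\beta^i \partial_i f$ --- so your orthogonality step $\delta_{ij} X_k^i E_k^j = 0$ is exactly this same observation, written out componentwise after first expanding $N X_k = N X_k^i \partial_i$ and then collapsing it back.
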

\begin{proof}
Starting with the formula \eqref{partial12EX}, estimating $e^{2\gamma}$ with \eqref{eq:BA.g.asymp}, \eqref{eq:BA.g.Li}, and bounded $X_k^i$, $E_k^i$ by \eqref{eq:Xki.Eki.pf}, \eqref{eq:Xkalp.pf}, we immediately obtain \eqref{spatialintermsofEX}.

		By \eqref{nXEL} and \eqref{defnormal},
		\begin{equation}\label{eq:partialtLEX}
		\partial_t = N L_k +N X_k+ \beta^i \partial_i.
		\end{equation} 
		Using \eqref{eq:partialtLEX} and applying \eqref{eq:BA.g.asymp}, \eqref{eq:BA.g.Li}, \eqref{eq:Xki.Eki.pf}, \eqref{eq:Xkalp.pf} and \eqref{eq:trivial.calculus}, we obtain
		\begin{equation}\label{eq:pf.timeintermsofLandspace}
 		|\partial_t f| \ls  N \cdot \left( |L_k f |+ |X_k f |\right) + |\bt| \cdot |\rd_x f| \ls \la x \ra^{\ep} (|L_k f| + |\rd_x f| ),
 		\end{equation}
		which implies \eqref{timeintermsofLandspace}. 
		
		Finally, arguing as in \eqref{eq:pf.timeintermsofLandspace} but controlling $ |\bt| \cdot |\rd_x f|$ by \eqref{eq:BA.g.asymp}, \eqref{eq:BA.g.Li} and \eqref{spatialintermsofEX}, we obtain \eqref{timeintermsofLEX}. \qedhere
\end{proof}

\begin{lem} \label{dgeomvflemma}
	
	The following estimates hold for the derivatives of the coefficients of the vector fields $L_k, E_k, X_k$:
\begin{align}
	 \label{dYibounded}
	|L_k L_k^t | + \la x \ra (|L_k L_k^i| + \sum_{\substack {Y_k,\,Z_k \in \{ L_k, E_k, X_k\} \\ (Y_k,Z_k) \neq (L_k, L_k)} } |Y_k Z^{\beta}_k|) \lesssim &\: \epsilon^{\frac{5}{4}}  \cdot\la x\ra^{3\alp} \\
	 \label{rdYibounded}
	|\rd_t L_k^t| + \la x \ra (|\rd_t L_k^i| + |\rd_i L_k^\bt| + |\partial X^i_k|+ |\partial E^i_k|)  \lesssim &\: \epsilon^{\frac{5}{4}}  \cdot\la x\ra^{4\alp}.
	\end{align}

\end{lem}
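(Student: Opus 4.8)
The strategy is to differentiate the explicit formulas for the components $L_k^\beta$, $X_k^i$, $E_k^i$ derived in Section~\ref{sec:XEL.derivative} (Proposition~\ref{dXELellipticprop}), and to estimate each term on the right-hand side using the already-established pointwise bounds on the metric, the Ricci coefficients, and the frame components themselves. Note first that by Lemma~\ref{lem:rd.in.terms.of.XEL}, a generic $\rd_\alpha$-derivative of any of these components is controlled (up to a loss of $\la x\ra^\epsilon$, hence $\la x\ra^{\alpha}$ after choosing $\ep_0$ small) by the $L_k$-, $E_k$-, $X_k$-derivatives of the same component, \emph{except} that one must keep track of which derivative one applies: $\rd_t L_k^t$ requires using \eqref{YLt} directly (it is cleanest), while $\rd_i L_k^\beta$ and the spatial derivatives of $X_k^i$, $E_k^i$ go through \eqref{spatialintermsofEX}, and the $L_k$-derivatives are handled directly by \eqref{LEi}, \eqref{YLt}, \eqref{YLi}. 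So the two displayed estimates \eqref{dYibounded} and \eqref{rdYibounded} are essentially equivalent modulo $\la x\ra^\alpha$ losses, and it suffices to bound $Y_k Z_k^\beta$ for $Y_k, Z_k \in \{L_k, E_k, X_k\}$ with the stated weights.

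\textbf{Key steps.} First I would bound $Y_k(E_k^i)$ for $Y_k \in \{L_k, E_k, X_k\}$ using \eqref{LEi}, \eqref{XEi}, \eqref{EEi}. Each RHS is a sum of products of: frame components (bounded by $\la x\ra^\epsilon$ via \eqref{Yibounded}); the Ricci coefficients $\chi_k$, $\eta_k$ and contractions of $K$ (bounded by $\epsilon^{\f 54}\la x\ra^{-1+\alpha}$ via \eqref{bootstrapK}, \eqref{bootstrapricci}); $E_k\log N$ and $\rd_x\gamma$ type terms (bounded by $\epsilon^{\f 54}\la x\ra^{-1+\alpha}$ via \eqref{eq:BA.g.asymp}, \eqref{eq:BA.g.Li} — note $\rd_x N$, $\rd_x\gamma$ carry the $\la x\ra^{-1+\alpha}$ weight, and $N^{-1}$ is $O(1)$); and first spatial derivatives of $\beta$ (bounded by $\epsilon^{\f 54}\la x\ra^{-1+\alpha}$, again \eqref{eq:BA.g.Li}). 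Multiplying, every term is $\lesssim \epsilon^{\f 54}\la x\ra^{-1+\alpha+2\epsilon} \lesssim \epsilon^{\f 54}\la x\ra^{-1+3\alpha}$ after choosing $\ep_0$ small so that $2\epsilon \leq \alpha$ (recall $\alpha = 10^{-2}$). Then \eqref{YXj} immediately transfers these bounds to $Y_k(X_k^i)$. For $Y_k(L_k^t)$, formula \eqref{YLt} gives $Y_k(L_k^t) = -(Y_k\log N)/N$; here $E_k\log N$, $X_k\log N$ are $\lesssim \epsilon^{\f 54}\la x\ra^{-1+\alpha}$ by \eqref{spatialintermsofEX} and \eqref{eq:BA.g.Li}, while $L_k\log N = N^{-1}e_0\log N - X_k\log N$ needs $e_0 N = \rd_t N - \beta^i\rd_i N$, controlled by \eqref{eq:BA.g.Li} (the $\rd_t N$ term is $O(\epsilon^{\f 54}\la x\ra^{-1+2\alpha})$ there) — this is the source of the asymmetry in the statement, i.e.\ $L_k L_k^t$ is only $O(\epsilon^{\f 54}\la x\ra^{3\alpha})$ rather than $O(\epsilon^{\f 54}\la x\ra^{-1+3\alpha})$, because $L_k\log N$ does not decay (it involves $\log|x|$ growth in $N_{asymp}$, whose $t$-derivative is $O(1)$ in the weighted norm but only after inserting the weight). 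Finally $Y_k(L_k^i)$ comes from \eqref{YLi}: $Y_k(L_k^i) = -Y_k(X_k^i) - Y_k(\beta^i/N)$, and $Y_k(\beta^i/N) = N^{-1}Y_k\beta^i - \beta^i N^{-2}Y_k N$; expanding $Y_k\beta^i$ and $Y_k N$ via \eqref{spatialintermsofEX} (for $E_k$, $X_k$) or via $L_k = N^{-1}e_0 - X_k^l\rd_l$, and bounding with \eqref{eq:BA.g.asymp}, \eqref{eq:BA.g.Li}, one gets $\lesssim \epsilon^{\f 54}\la x\ra^{-1+3\alpha}$; the case $Y_k = Z_k = L_k$ again loses the decay (because $L_k L_k^t$ appears implicitly through $L_k N$), giving $|L_k L_k^i| \lesssim \epsilon^{\f 54}\la x\ra^{-1+3\alpha}$ as stated (here the $\la x\ra$ prefactor in the displayed inequality absorbs one power). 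Assembling these gives \eqref{dYibounded}; then applying Lemma~\ref{lem:rd.in.terms.of.XEL} once more (losing $\la x\ra^\epsilon \lesssim \la x\ra^\alpha$) converts the $Y_k$-derivative bounds into $\rd_\alpha$-derivative bounds, yielding \eqref{rdYibounded}.

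\textbf{Main obstacle.} The delicate point is the bookkeeping of weights around the $L_k$-derivative and the $\rd_t N$ term: since $N_{asymp}(t)\log|x|$ contributes a genuinely non-decaying, $\la x\ra^{\alpha}$-growing piece to $N$ (via \eqref{eq:metric.decomposition}), quantities like $L_k\log N$ and hence $L_k L_k^t$, $L_k L_k^i$ cannot be expected to decay in $|x|$, which is exactly why the statement separates out the $(Y_k,Z_k)=(L_k,L_k)$ case and why $\rd_t L_k^t$ on the LHS of \eqref{rdYibounded} has no $\la x\ra$ prefactor. One must verify that the $\rd_t N_{asymp}$ and $\rd_t\widetilde N$ contributions are consistent with the claimed $\la x\ra^{4\alpha}$ growth — this uses \eqref{eq:BA.g.asymp} (giving $|\rd_t N_{asymp}|\lesssim\epsilon^{\f 54}$) and \eqref{eq:BA.g.Li} (giving $\|\rd_t\widetilde N\|_{L^\infty_{1-2\alpha}}\lesssim\epsilon^{\f 54}$), together with $\rd_t(\omega(|x|)\log|x|) = 0$ and $|\omega(|x|)\log|x||\lesssim\la x\ra^{2\alpha}$ via \eqref{eq:trivial.calculus}. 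Everything else is a routine, if somewhat lengthy, product estimate, and the constant at the end depends only on $s'$, $s''$, $R$, $\upkappa_0$ as required once $\ep_0$ is taken small enough that all the $\epsilon^{\f 54}$ prefactors dominate the $\la x\ra^\epsilon$ losses.
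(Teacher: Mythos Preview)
Your proposal is correct and follows essentially the same approach as the paper: use the explicit formulas from Proposition~\ref{dXELellipticprop} together with the bootstrap assumptions \eqref{bootstrapK}, \eqref{bootstrapricci}, \eqref{eq:BA.g.asymp}, \eqref{eq:BA.g.Li} and the frame bounds \eqref{Yibounded} to obtain \eqref{dYibounded}, then apply Lemma~\ref{lem:rd.in.terms.of.XEL} to deduce \eqref{rdYibounded}. Your detailed weight bookkeeping---particularly the explanation of why $L_k L_k^t$ alone fails to decay, traced to the $\rd_t N_{asymp}\cdot\omega(|x|)\log|x|$ contribution in $L_k\log N$---makes explicit what the paper's proof leaves implicit.
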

\begin{proof}

\pfstep{Step~1: Proof of \eqref{dYibounded}} The derivatives on the LHS of \eqref{dYibounded} were computed in \eqref{LEi}--\eqref{YLi} in Lemma~\ref{dXELellipticprop}. Using those formulas, and plugging in the bootstrap assumptions \eqref{bootstrapK}, \eqref{bootstrapricci}, \eqref{eq:BA.g.asymp}, \eqref{eq:BA.g.Li}, as well as the estimates \eqref{eq:Xki.Eki.pf} and \eqref{eq:Xkalp.pf}, we obtain the desired result.

\pfstep{Step~2: Proof of \eqref{rdYibounded}} Finally, \eqref{rdYibounded} follows from \eqref{dYibounded} and Lemma~\ref{lem:rd.in.terms.of.XEL}. \qedhere

 \end{proof}

\subsection{Control of the angle between the impulsive waves} \label{controlsection}
\begin{prop}\label{prop:angle}
The following estimate holds:
\begin{equation}\label{partialukcontrol}
\sup_{0 \leq t < T_B}\|\partial_i u_k - c_{i k}\|_{L^{\infty}_{\f 12}(\Sigma_t)} \ls  \epsilon^{\f 54}.
\end{equation}
Moreover, for any $k \neq k'$ we have the following pointwise estimates (recall $\upkappa_0$ in \eqref{cangle2}):
\begin{equation}\label{anglecontrol} 
\begin{split}
\frac{ \upkappa_0}{2} \leq|g(E_k,X_{k'}) | \leq 2,
\end{split}
\end{equation}
\begin{equation} \label{anglecontrol2}
\frac{\upkappa_0^2}{4} \leq |\partial_{t_k} u_{k'}|_{|B(0,3R)} \leq 2
\end{equation}
\end{prop}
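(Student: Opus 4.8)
The plan is to prove the three estimates \eqref{partialukcontrol}, \eqref{anglecontrol}, \eqref{anglecontrol2} in order, each building on the previous. The essential mechanism throughout is that the eikonal quantities start from explicit constant values on $\Sigma_0$ (computed in Lemma~\ref{riccisigma0expression}) and are transported along $L_k$, so all the time-dependence is controlled by integrating the transport equations and using the bootstrap bounds on the Ricci coefficients and the metric.

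First, for \eqref{partialukcontrol}, I would use the formula \eqref{ucartderivative}, $\partial_i u_k = e^{2\gamma}\mu_k^{-1}\delta_{ij}X_k^j$, together with the initial values $(\mu_k)_{|\Sigma_0}=e^\gamma$ and $(X_k^i)_{|\Sigma_0}=e^{-\gamma}\delta^{iq}c_{kq}$ from \eqref{mu(0)formula}, \eqref{X^i(0)formula}, which give $(\partial_i u_k)_{|\Sigma_0}=c_{ik}$ exactly. To control the difference at time $t$, I would write $\partial_i u_k(t) - c_{ik} = \partial_i u_k(t) - \partial_i u_k(0)$ and bound the time derivative. Concretely, integrate $\partial_t (\partial_i u_k) = \partial_i (\partial_t u_k)$ along $t$, using \eqref{dtui} for $\partial_t u_k$; alternatively (and more cleanly) I would use that $\partial_i u_k$ is a component-times-function expression in $\gamma$, $\mu_k$, $X_k^j$, all of which have well-controlled $L_k$-transport equations (\eqref{Lmu} for $\log\mu_k$, \eqref{LEi}--\eqref{YXj} for $X_k^i$) with right-hand sides bounded by $\epsilon^{\f 54}\la x\ra^{-1+4\alp}$ or so (from Lemma~\ref{dgeomvflemma} and the bootstrap bounds \eqref{bootstrapK}, \eqref{bootstrapricci}, \eqref{bootstrapmu}). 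Using $L_k = N^{-1}\partial_{t_k}$ and the fact that the $L_k$-integral from $\Sigma_0$ to $\Sigma_t$ has length $O(1)$ on $B(0,R)$, together with the weight bookkeeping (the claimed weight $\la x\ra^{-\f 12}$ is generous compared to the $\la x\ra^{-1+}$-type decay of the nonlinear sources, after accounting for the $e^{2\gamma}$ factors which are $\la x\ra^{O(\epsilon^{5/4})}$), gives \eqref{partialukcontrol}. Some care is needed because $u_k$ itself grows linearly, but only its \emph{derivatives} appear, and those remain close to constants.

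Second, \eqref{anglecontrol} follows from \eqref{partialukcontrol} plus the transversality assumption \eqref{cangle2}. I would compute $g(E_k,X_{k'})$ in terms of the $c$'s: by \eqref{Yu}, $X_{k'}u_k = \mu_k^{-1} g(X_{k'},X_k)$ and $E_k u_{k'} = \mu_{k'}^{-1} g(E_k, X_{k'})$, and on $\Sigma_0$ these reduce via \eqref{X^i(0)formula}, \eqref{initialEj}, \eqref{mu(0)formula} to $g(E_k,X_{k'})_{|\Sigma_0} = -c_{k2}c_{k'1}+c_{k1}c_{k'2}$, whose absolute value is $\geq\upkappa_0$ by \eqref{cangle2} (and $\leq 1$). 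Then I would propagate: $g(E_k,X_{k'})$ can be written using \eqref{EXinellipticcoord} and \eqref{ucartderivative} (applied to both $k$ and $k'$) as a bilinear expression in $\partial_i u_k$ and $\partial_i u_{k'}$ divided by $e^{2\gamma}\mu_k^{-1}\mu_{k'}^{-1}$-type factors — more precisely $g(E_k, X_{k'}) = \bar g(E_k, X_{k'}) = e^{2\gamma}\delta_{ij}E_k^i X_{k'}^j$ and express $E_k^i$, $X_{k'}^j$ through $\partial u_k$, $\partial u_{k'}$ and $\mu$, $\gamma$. Using \eqref{partialukcontrol} (so $\partial_i u_k$ is within $O(\epsilon^{\f 54})$ of $c_{ik}$) and the bootstrap bounds on $\mu_k$, $\gamma$, the quantity $g(E_k,X_{k'})$ stays within $O(\epsilon^{\f 54})$ of its initial value $-c_{k2}c_{k'1}+c_{k1}c_{k'2}$, hence in $[\upkappa_0/2, 2]$ after taking $\epsilon_0$ small enough.

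Third, \eqref{anglecontrol2} is $|\partial_{t_k}u_{k'}|$, and here $\partial_{t_k} = N L_k$. I would use $L_k u_{k'} = N^{-1}(\partial_t - \beta^q\partial_q)u_{k'} - X_k u_{k'}$ (from $L_k = N^{-1}e_0 - X_k$, i.e.\ \eqref{nXEL}, \eqref{def:e0}); the first piece equals $N^{-1}(e_0 u_{k'})$ which, using the eikonal equation and \eqref{dtui}, is $\mu_{k'}^{-1}$, and $X_k u_{k'} = \mu_{k'}^{-1}g(X_k,X_{k'})$ by \eqref{Yu}. So $\partial_{t_k}u_{k'} = N L_k u_{k'} = N\mu_{k'}^{-1}(1 - g(X_k,X_{k'}))$... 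I should instead directly: $\partial_{t_k}u_{k'} = N L_k u_{k'}$, and $L_k u_{k'} = (N^{-1}e_0 - X_k)u_{k'} = N^{-1}\mu_{k'}^{-1}N - \mu_{k'}^{-1}g(X_k,X_{k'}) = \mu_{k'}^{-1}(1 - g(X_k,X_{k'}))$ — wait, $e_0 u_{k'} = N\mu_{k'}^{-1}$ from \eqref{dtui}? Let me just use the identity $1 - g(X_k,X_{k'})$; using the frame relations $\n = L_{k'}+X_{k'}$ and $g$-orthonormality, $1 - g(X_k,X_{k'}) = g(X_k,X_k) - g(X_k,X_{k'}) = g(X_k, X_k - X_{k'}) = g(X_k, L_{k'}-L_k)$, and since $g(X_k,L_k)=-1$ this is $g(X_k,L_{k'})+1$; meanwhile $g(X_k,L_{k'}) = g(\n - L_k, L_{k'}) = g(\n,L_{k'}) - g(L_k,L_{k'})$, and one shows $g(E_k,X_{k'})^2 = $ (something related to) this. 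The cleanest route: note $g(X_k,X_{k'})^2 + g(E_k,X_{k'})^2 = g(X_{k'},X_{k'}) = 1$ is \emph{not} quite right in Lorentzian signature; rather, decomposing $X_{k'}$ in the $(X_k,E_k,L_k)$ frame via \eqref{inversegXEL} one gets a relation. I would therefore take the pragmatic approach: on $\Sigma_0$, $\partial_{t_k}u_{k'}$ is an explicit expression in the $c$'s equal to $1 - (c_{k1}c_{k'1}+c_{k2}c_{k'2})$ in absolute value up to $O(\epsilon^{5/4})$... and by \eqref{cangle2}, $|{-c_{k2}c_{k'1}+c_{k1}c_{k'2}}|^2 = 1 - |c_{k1}c_{k'1}+c_{k2}c_{k'2}|^2 \geq \upkappa_0^2$, so $|c_{k1}c_{k'1}+c_{k2}c_{k'2}| \leq \sqrt{1-\upkappa_0^2}$, which does \emph{not} immediately give a lower bound on $|1 - (c_{k1}c_{k'1}+c_{k2}c_{k'2})|$ of size $\upkappa_0^2/2$ unless we know the sign — but since $e_0 u_{k'}>0$, $L_k t = N^{-1}>0$, one checks the relevant quantity is $\geq \frac12(1 - \sqrt{1-\upkappa_0^2}) \geq \upkappa_0^2/4$. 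Then propagate using \eqref{partialukcontrol}, the formula $\partial_{t_k}u_{k'} = \beta^q\partial_q u_{k'} + N\mu_{k'}^{-1} - \cdots$ — actually simplest is to go through $\partial_t u_{k'}$ via \eqref{dtui} and the coordinate change $\partial_{t_k} = \partial_t + (\text{spatial})$, bounding the spatial corrections by \eqref{partialukcontrol} and the bootstrap metric bounds, all on the compact set $B(0,3R)$ so weights are harmless.

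\textbf{Main obstacle.} The chief difficulty is the propagation step: showing that $\partial_i u_k$, $g(E_k,X_{k'})$ and $\partial_{t_k}u_{k'}$ stay $O(\epsilon^{5/4})$-close to their explicit constant values on $\Sigma_0$. This requires carefully deriving (or invoking) transport equations for these quantities along $L_k$ — for which the relevant ODEs come from \eqref{LEi}--\eqref{YLi}, \eqref{Lmu}, \eqref{Lvartheta} and the commutator formulas — and then checking that every source term is integrable with the right weight, using \eqref{bootstrapK}--\eqref{bootstrapvarTheta}, \eqref{eq:BA.g.asymp}--\eqref{eq:BA.g.L4}, Lemma~\ref{lem:L.X.E} and Lemma~\ref{dgeomvflemma}. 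The weighted estimate \eqref{partialukcontrol} with the specific weight $\la x\ra^{-1/2}$ is the delicate point: one must verify that the nonlinear corrections, which involve $e^{2\gamma}$ (growing like $\la x\ra^{O(\epsilon)}$) and spatial derivatives of $\gamma$, $\mu_k$ (decaying like $\la x\ra^{-1+\alp}$), combine to a decay rate better than $\la x\ra^{-1/2}$, so that integrating along the $O(1)$-length $L_k$-curves over the region $B(0,R)$ (outside of which $u_k$ is exactly linear since $\tphi$ and the curvature are supported there and the data are exact lines) yields the claimed bound. The lower bounds in \eqref{anglecontrol}, \eqref{anglecontrol2} are then elementary consequences once the closeness-to-initial-values and the algebraic identities in the footnote to Definition~\ref{def:eikonal} are in hand.
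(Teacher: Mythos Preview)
Your approach is essentially the paper's: transport each factor in $\partial_i u_k = e^{2\gamma}\mu_k^{-1}\delta_{ij}X_k^j$ along $L_k$ from its explicit $\Sigma_0$ value to obtain \eqref{partialukcontrol}, then deduce \eqref{anglecontrol} and \eqref{anglecontrol2} from the resulting $O(\epsilon^{5/4})$-closeness of $X_k^i$, $E_k^i$, $\partial_i u_k$ to their initial constant values (the paper writes $\partial_{t_k}u_{k'}$ via $\partial_{t_k}=N L_k$ and $L_k=L_{k'}+X_{k'}-X_k$, exactly your ``pragmatic'' route).

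Two small corrections. First, $u_k$ is \emph{not} exactly linear outside $B(0,R)$ --- the metric has logarithmic tails there --- so the weighted bound \eqref{partialukcontrol} must be obtained globally from the $\epsilon^{5/4}\langle x\rangle^{-1+O(\alpha)}$ decay of the transport sources together with the comparability of $\langle x\rangle$ along $L_k$-curves (which the paper checks first as a preliminary step). Second, your sign worry for \eqref{anglecontrol2} is unnecessary: writing $x=c_{k1}c_{k'1}+c_{k2}c_{k'2}$, the reverse triangle inequality gives $|1-x|\geq 1-|x|\geq \tfrac12(1+|x|)(1-|x|)=\tfrac12(1-|x|^2)\geq\tfrac12\upkappa_0^2$ directly from \eqref{cangle2}, with no appeal to positivity of $e_0 u_{k'}$.
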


\begin{proof} 
\pfstep{Step~0: Preliminaries} To prove \eqref{partialukcontrol}, we will compare the value of $\partial_i u_k$ with its initial value along an integral curve of $L_k$.

First, we need an easy bound that $\la x \ra$ is comparable at any two points along the integral curve of $L_k$. For this, we simply use \eqref{Yibounded}  to obtain
$$ |L_k \la x \ra^2| = |2\de^{ij} L_k^i x_j| \ls \la x\ra^2,$$
and apply Gr\"onwall's inequality. In what follows, we will silently assume the comparability of $\la x \ra$.

\pfstep{Step~1: Estimates for $\rd_i u_k$} Recalling the formula for $\rd_i u_k$ in \eqref{ucartderivative}, we compare each of the factors $e^{2\gamma}$, $X^i_k$ and $\mu_k^{-1}$ at $(u_k,\theta_k,t_k)$ (recall the coordinates in Section~\ref{relationXELgeocoordinatesection}) with their (initial) values at $(u_k, \th_k,0)$. For this purpose, we will consider the $L_k$ derivative of each of these quantities. Recall that by \eqref{thetaE}, $\partial_{t_k} = N \cdot L_k$, where $\rd_{t_k}$ is the coordinate derivative in the $(u_k,\theta_k,t_k)$ coordinate system.
	
	For $e^{2\gamma}$, we use \eqref{eq:BA.g.asymp}, \eqref{eq:BA.g.Li} and \eqref{Yibounded} to obtain
	\begin{equation}\label{eq:e2gamma.diff.est}
	\begin{split}
	|e^{2\gamma(u_k,\theta_k,t_k)}-e^{2\gamma(u_k,\theta_k,0)}| = &\:  2 |\int_0^{t_k} (e^{2\gamma} \rd_{t_k}\gamma) (u_k,\theta_k,s)\, ds| \\
	= &\: 2 |\int_0^{t_k} (e^{2\gamma} N L_k\gamma) (u_k,\theta_k,s)\, ds| \ls \epsilon^{\f 54} \la x \ra^{-1+3\epsilon} .
	\end{split}
	\end{equation}
	
For $X^i_k$, we use Lemmas~\ref{lem:L.X.E} and \ref{dgeomvflemma} with  \eqref{eq:BA.g.asymp} and \eqref{eq:BA.g.Li} to get that
\begin{equation}\label{eq:X.diff.est}
 | X^i_k(u_k,\theta_k,t_k)-X^i_k(u_k,\theta_k,0)| = | \int_0^{t_k} (N L_k X^i_k) (u_k,\theta_k,s)\, ds |\ls \epsilon^{\f 54} \la x \ra^{-1+5\alp}.
 \end{equation}

	Finally, we control the difference of $\mu_k^{-1}$. By the equation \eqref{Lmu} and the estimates in \eqref{bootstrapK}, \eqref{bootstrapmu}, \eqref{eq:BA.g.asymp}, \eqref{eq:BA.g.Li} and Lemma~\ref{lem:L.X.E}, we obtain
	\begin{equation}\label{eq:mu.diff.est}
	\begin{split} 
	|\mu_k^{-1} (u_k,\theta_k,t_k)-\mu_k^{-1} (u_k,\theta_k,0)|  =  |\int_0^{t_k} (\mu_k^{-1} N L_k\log \mu_k) (u_k,\theta_k,s)\, ds| \ls \la x\ra^{-1+3\alp}.
	\end{split}
	\end{equation} 
	
	By \eqref{eikonalinit}, $(\partial_i u_k)_{|\Sigma_0} = c_{ik}$. Thus, we write
	\begin{equation*} \begin{split}
\partial_i u_k(u_k,\theta_k,t_k)-c_{i k} = &\: \delta_{i j} \cdot [e^{2\gamma(u_k,\theta_k,t_k)}-e^{2\gamma(u_k,\theta_k,0)}] \cdot \mu^{-1}_k (u_k,\theta_k,t_k) \cdot X^j_k(u_k,\theta_k,t_k)  \\ & + \delta_{i j} \cdot e^{2\gamma(u_k,\theta_k,0)} \cdot [\mu^{-1}_k (u_k,\theta_k,t_k) -\mu^{-1}_k(u_k,\theta_k,0)]\cdot X^j_k(u_k,\theta_k,t_k)  \\ & + \delta_{i j} \cdot e^{2\gamma(u_k,\theta_k,0)} \cdot \mu^{-1}_k (u_k,\theta_k,0) \cdot[X^j_k(u_k,\theta_k,t_k)-X^j_k(u_k,\theta_k,0)] ,
	\end{split}
	\end{equation*}
	and combining \eqref{eq:e2gamma.diff.est}--\eqref{eq:mu.diff.est} with bootstrap assumptions \eqref{bootstrapmu}, \eqref{eq:BA.g.asymp}, \eqref{eq:BA.g.Li} and Lemma~\ref{lem:L.X.E} yields \eqref{partialukcontrol}.
	
	\pfstep{Step~2: Proof of \eqref{anglecontrol}} In view of \eqref{gauge} and \eqref{EXinellipticcoord}, we have 
	$$ g(E_k,X_{k'}) = e^{2\gamma} \cdot ( E_k^1 \cdot   X_{k'}^1+ E_k^2 \cdot   X_{k'}^2) = e^{2\gamma}\cdot (-X_k^2 \cdot X_{k'}^1 + X_k^1 \cdot X_{k'}^2).$$ 
	Hence, by \eqref{gauge}, the initial condition \eqref{X^i(0)formula}, and the estimates \eqref{eq:e2gamma.diff.est} and \eqref{eq:X.diff.est}, we obtain 
	\begin{equation*}
	\begin{split}
	&\: |g(E_k,X_{k'}) (u_k,\theta_k,t_k)-g(E_k,X_{k'}) (u_k,\theta_k,0)|\\
	= &\: |g(E_k,X_{k'}) (u_k,\theta_k,t_k)  + c_{k 2} \cdot c_{k' 1} - c_{k 1} \cdot  c_{k' 2} |\ls  \la x\ra^{-1+2\epsilon} \cdot \epsilon^{\f 54}.
	\end{split}
	\end{equation*}
	The lower bounds in \eqref{anglecontrol} then follow immediately from \eqref{cangle2}, after taking $0< \epsilon_0 \ll \upkappa_0$, while the upper bounds follow from \eqref{cnormalization}.
	
	\pfstep{Step~3: Proof of \eqref{anglecontrol2}} To fix the notation we take $k'=2$ and $k\neq 2$. By \eqref{defnormal}, \eqref{nXEL} and \eqref{thetaE}, we have
	$$  \partial_{t_k}= N^{-1}  \cdot (L_{2}+ X_2-X_k).$$ 
	Hence, since $L_2 u_2=0$ by definition, we have 
	\begin{equation}\label{eq:dtku2}
	 \partial_{t_k} u_2 = N^{-1} \cdot (X_2^i -X_k^i) \partial_i u_2.
	 \end{equation}

	Using \eqref{eq:X.diff.est} and \eqref{X^i(0)formula}, we have 
	\begin{equation}\label{eq:X.diff.est.local}
	|X_k^i(u_k,\th_k,t_k) - e^{-\gamma(u_k,\th_k,0)} \de^{iq} \cdot c_{kq}|_{|B(0,R)} \ls \ep^{\f 54},\quad |X_2^i(u_2,\th_2,t_2) - e^{-\gamma(u_2,\th_2,0)} \de^{iq} \cdot c_{2q}|_{|B(0,R)} \ls \ep^{\f 54}.
	\end{equation}
	Therefore, combining \eqref{partialukcontrol}, \eqref{eq:X.diff.est.local} with \eqref{eq:BA.g.asymp}, \eqref{eq:BA.g.Li} to estimate the RHS of \eqref{eq:dtku2}, we obtain 
	\begin{equation}\label{eq:dtku2.in.constants}
	|(X_2^i -X_k^i) \partial_i u_2 - \delta^{iq} ( c_{2 q}-c_{k q}) \cdot c_{2 i}|_{|B(0,3R)} \ls \ep^{\f 54}.
	\end{equation}
	In view of \eqref{cnormalization}, we have $\delta^{iq} \cdot c_{2 q} \cdot c_{2 i} =1$. By \eqref{cnormalization} and the Cauchy--Schwarz inequality, we also have $|\delta^{iq} \cdot c_{k q} \cdot c_{2 i}| \leq 1$. Hence, by the triangle inequality and \eqref{cangle2}, we have $|1-\delta^{iq} \cdot c_{k q} \cdot c_{2 i}| \geq 1 - |\delta^{iq} \cdot c_{k q} \cdot c_{2 i}|  \geq \f 12 (1+ |\delta^{iq} \cdot c_{k q} \cdot c_{2 i}|)(1- |\delta^{iq} \cdot c_{k q} \cdot c_{2 i}|)\geq \f 12 \upkappa_0^{2}$. Hence, after choosing $\ep$ smaller, we obtain \eqref{anglecontrol2} by using \eqref{eq:dtku2} and \eqref{eq:dtku2.in.constants}. 	\qedhere

\end{proof}

The following is an immediate consequence of Proposition~\ref{prop:angle}.
\begin{cor}\label{cor:diffeo}
For any $k\neq k'$, the map $(x^1,x^2)\mapsto (u_k, u_{k'})$ is a $C^1$-diffeomorphism with entry-wise pointwise estimates independent of $\de$:
$$\left|\begin{bmatrix}  \f{\rd u_k}{\rd x^1} & \f{\rd u_{k'}}{\rd x^1}  \\  \f{\rd u_k}{\rd x^2} & \f{\rd u_{k'}}{\rd x^2} \end{bmatrix} \right|\ls 1,\quad \left| \begin{bmatrix}  \f{\rd u_k}{\rd x^1} & \f{\rd u_{k'}}{\rd x^1}  \\  \f{\rd u_k}{\rd x^2} & \f{\rd u_{k'}}{\rd x^2} \end{bmatrix}^{-1} \right| \ls 1.$$

\end{cor}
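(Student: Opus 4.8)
The plan is to read off both the pointwise Jacobian bounds and the diffeomorphism property directly from \eqref{partialukcontrol} and the transversality condition \eqref{cangle2}, so that no new work beyond Proposition~\ref{prop:angle} is needed. Fix $t\in[0,T_B)$ and let $F:\RR^2\to\RR^2$ be the map $x\mapsto(u_k(t,x),u_{k'}(t,x))$, whose Jacobian $DF$ is (the transpose of) the matrix in the statement. First I would record the upper bound $|DF|\ls 1$: by \eqref{partialukcontrol} we have $|\rd_i u_k - c_{ik}|\ls\ep^{\f 54}\la x\ra^{-\f 12}\leq\ep^{\f 54}$ and $|c_{ik}|\leq 1$ by \eqref{cnormalization}, hence $|\rd_i u_k|\ls 1$ at every point of $\Sigma_t$, with constants independent of $\de$.

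Next I would bound the determinant from below. Expanding $\det DF = \rd_1 u_k\,\rd_2 u_{k'}-\rd_2 u_k\,\rd_1 u_{k'}$ and inserting the approximation $\rd_i u_k = c_{ik}+O(\ep^{\f 54})$ (using that all the $c$'s are $O(1)$), one finds $\det DF = (c_{1k}c_{2k'}-c_{2k}c_{1k'})+O(\ep^{\f 54})$. By \eqref{cangle2} the constant term has absolute value $\geq\upkappa_0$, so after shrinking $\ep_0$ (depending on $\upkappa_0$) we get $|\det DF|\geq\f{\upkappa_0}{2}$ uniformly on $\Sigma_t$ and in $\de$. Since $DF^{-1}=(\det DF)^{-1}\operatorname{adj}(DF)$ with $|\operatorname{adj}(DF)|\ls|DF|\ls 1$, and implicit constants are allowed to depend on $\upkappa_0$, this gives $|DF^{-1}|\ls 1$, i.e.\ the two matrix estimates in the statement.

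It then remains to upgrade the local diffeomorphism property (immediate from $\det DF\neq 0$ together with the smoothness of $u_k,u_{k'}$ in the $\de$-impulsive setting) to a global one. For this I would write $F(x)=b+Ax+G(x)$, where $A$ is the constant matrix with rows $(c_{1k},c_{2k})$ and $(c_{1k'},c_{2k'})$ --- invertible with $|A^{-1}|\ls\upkappa_0^{-1}$ by \eqref{cangle2} --- and $G$ is the remainder, which by \eqref{partialukcontrol} obeys $|DG|\ls\ep^{\f 54}$ globally (in fact with $\la x\ra^{-\f 12}$ decay). Then $A^{-1}(F-b)=\mathrm{id}+A^{-1}G$ with $|D(A^{-1}G)|\leq|A^{-1}|\,|DG|<\f 12$ after shrinking $\ep_0$, so $A^{-1}(F-b)$ --- and hence $F$ --- is a $C^1$ (indeed smooth) diffeomorphism of $\RR^2$ onto $\RR^2$ by the standard global inverse function theorem for identity-plus-contraction maps.

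I do not expect a genuine obstacle here; the statement is essentially a reformulation of Proposition~\ref{prop:angle}. The only points requiring a little care are that \eqref{partialukcontrol} furnishes \emph{global} smallness of the non-affine part of $F$ (which is what lets the Hadamard argument apply and makes $F$ onto all of $\RR^2$, rather than merely a local diffeomorphism), and that dividing by $\det DF$ is harmless precisely because all implicit constants are permitted to depend on $\upkappa_0$ --- both features already present in the hypotheses and the conclusion.
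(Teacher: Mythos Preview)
Your proof is correct and is exactly the argument the paper has in mind: the paper simply states that the corollary ``is an immediate consequence of Proposition~\ref{prop:angle}'' and gives no further details, so your write-up supplies precisely the verification (Jacobian bound from \eqref{partialukcontrol}, determinant lower bound from \eqref{cangle2}, and the Hadamard/contraction argument for global injectivity and surjectivity) that the paper leaves implicit.
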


\begin{proposition}\label{prop:dx2u}
For any $k=1,2,3$, 
$$|\rd^2_{ij} u_k| \ls \ep^{\f 54}.$$
\end{proposition}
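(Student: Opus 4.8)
The plan is to differentiate the formula $\rd_i u_k = e^{2\gamma}\mu_k^{-1}\de_{ij}X_k^j$ from \eqref{ucartderivative} by a second spatial derivative $\rd_j$ and estimate each resulting factor using the bootstrap assumptions and the preliminary estimates already established in Section~\ref{preliminary.estimates.section}. Concretely, by the Leibniz rule,
\begin{equation*}
\rd^2_{ij} u_k = \rd_j(e^{2\gamma})\cdot \mu_k^{-1}\cdot \de_{il}X_k^l + e^{2\gamma}\cdot \rd_j(\mu_k^{-1})\cdot \de_{il}X_k^l + e^{2\gamma}\cdot \mu_k^{-1}\cdot \de_{il}\rd_j X_k^l.
\end{equation*}
The first term is controlled using $\rd_j\gamma = O(\ep^{\f 54}\la x\ra^{-1+\alp})$ from \eqref{eq:BA.g.Li}, together with $e^{2\gamma} = O(\la x\ra^{C\ep^{\f 54}})$ (from \eqref{eq:BA.g.asymp}, \eqref{eq:BA.g.Li}), the bound $\mu_k^{-1} = O(\la x\ra^{C\ep^{\f 54}})$ (from \eqref{bootstrapmu}), and $X_k^l = O(\la x\ra^\ep)$ from Lemma~\ref{lem:L.X.E}; for $\ep_0$ small the net power of $\la x\ra$ is nonpositive so the term is $O(\ep^{\f 54})$ uniformly. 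The second term is handled identically using $\rd_j\mu_k^{-1} = \mu_k^{-1}\rd_j(\log\mu_k^{-1})$ and the bound on $\rd_x\mu_k$ in \eqref{bootstrapmu} (after splitting off the $\gamma_{asymp}\om(|x|)\log|x|$ part, whose spatial derivative is $O(\la x\ra^{-1})$ on the support of $\om'$ and vanishes outside, hence harmless). The third term uses the bound $|\partial X_k^i|\ls \ep^{\f 54}\la x\ra^{-1+4\alp}$ from Lemma~\ref{dgeomvflemma} (equation \eqref{rdYibounded}), which already contains a decaying weight, again giving $O(\ep^{\f 54})$ after absorbing the mild growth of the other factors for $\ep_0$ small.

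The only mildly delicate point — and what I expect to be the main (though still routine) obstacle — is bookkeeping the weights at spatial infinity: each of $e^{2\gamma}$, $\mu_k^{-1}$, $\varTheta_k$ and the frame components carries a $\la x\ra^{O(\ep^{\f 54})}$ or $\la x\ra^{O(\alp)}$ factor, so one must check that the decaying weights $\la x\ra^{-1+\alp}$ coming from $\rd_x\gamma$, $\rd_x\mu_k$ and $\rd_x X_k$ genuinely dominate. Since $\alp = 10^{-2}$ is fixed and the $\ep^{\f 54}$-dependent exponents can be made arbitrarily small by shrinking $\ep_0$, one has e.g. $-1 + 4\alp + C\ep^{\f 54} < 0$, so each term is bounded by $\ep^{\f 54}\la x\ra^{-\kappa}$ for some $\kappa>0$, in particular bounded in $L^\infty$ by $\ep^{\f 54}$ times a constant depending only on $s',s'',R,\upkappa_0$. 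This is exactly the shape of the claimed estimate.

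One bookkeeping subtlety worth flagging explicitly: the formulas \eqref{rdYibounded} bound $\partial X_k^i$ (all spacetime derivatives), so the spatial derivative $\rd_j X_k^l$ is certainly included; alternatively one could derive $\rd_j X_k^l$ directly from \eqref{spatialintermsofEX} combined with the $XEL$-derivative formulas \eqref{XEi}, \eqref{EEi}, \eqref{YXj} in Proposition~\ref{dXELellipticprop}, but invoking \eqref{rdYibounded} is cleaner. With these inputs the proof is a two-line Leibniz expansion followed by term-by-term weight counting; no transport equation or integration along $L_k$ is needed here, in contrast to the proof of \eqref{partialukcontrol}. I would present it as: expand via \eqref{ucartderivative} and Leibniz; cite \eqref{eq:BA.g.asymp}, \eqref{eq:BA.g.Li}, \eqref{bootstrapmu}, Lemma~\ref{lem:L.X.E}, Lemma~\ref{dgeomvflemma}; conclude by choosing $\ep_0$ small enough that all net $\la x\ra$-exponents are negative.
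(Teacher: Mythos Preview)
Your proposal is correct and follows essentially the same approach as the paper: differentiate \eqref{ucartderivative} by $\rd_j$, expand via Leibniz, and estimate term by term using \eqref{eq:BA.g.asymp}, \eqref{eq:BA.g.Li}, \eqref{bootstrapmu}, Lemma~\ref{lem:L.X.E} and Lemma~\ref{dgeomvflemma}, noting that the decaying weight from the differentiated factor compensates the mild growth of the others. The paper's proof is terser but identical in substance.
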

\begin{proof}
We start with \eqref{ucartderivative} and differentiate by $\rd_j$. Then estimate the resulting terms by \eqref{bootstrapmu}, \eqref{eq:BA.g.asymp}, \eqref{eq:BA.g.Li} and Lemmas~\ref{lem:L.X.E} and \ref{dgeomvflemma}. (Notice that the $\rd_j$ derivative of $e^{2\gamma}$, $\mu_k^{-1}$ or $X_k$ would give sufficient $\la x\ra$ decay to compensate the growth in weights in the other factors.) \qedhere
\end{proof}

\section{Estimates for the metric components in elliptic gauge} \label{metricsection}

We continue to work under the assumptions of Theorem~\ref{thm:bootstrap.metric}.

Our goal in this section is to prove estimates for the metric components $(\gamma,\bt^1,\bt^2,N)$ in elliptic gauge. In particular, we improve the bootstrap assumptions \eqref{eq:BA.g.asymp}--\eqref{eq:BA.g.L4}.

We begin with some preliminaries in \textbf{Sections~\ref{sec:analytic.prelim.metric}} and \textbf{Section~\ref{sec:first.remaks.elliptic}}. The main estimates are given in the following sections:
\begin{itemize}
\item In \textbf{Section~\ref{sec:g.purely.spatial}}, we prove elliptic estimates for purely spatial derivatives of $\mfg$. 
\item In \textbf{Section~\ref{sec:dtg.elliptic}}, we prove the elliptic estimates for $\rd_t \mfg$.
\item In \textbf{Section~\ref{sec:dtg.elliptic.top}}, we prove the top (fractional) order elliptic estimates for $\rd_t \mfg$.
\item In \textbf{Section~\ref{sec:third.der.metric}}, we carry out the elliptic estimates for third derivatives of the metric.
\item In \textbf{Section~\ref{sec:easy.consequence.from.g.est}}, we deduce some estimates for $K$ which follow directly from earlier subsections.
\end{itemize}

\subsection{Analytic preliminaries}\label{sec:analytic.prelim.metric}

\subsubsection{Embeddings of weighted spaces}

Most of the following results can be found in \cite{yCB09}; see also \cite[Lemmas~A2--A3]{HL.elliptic}. The only part not in \cite{yCB09} is the compactness statement in 1(b), which follows readily from the first part of 1(b) together with the Kondrachov compactness theorem (for compact domains).
\begin{proposition}\label{prop:Sobolev.weighted}
\begin{enumerate}
\item Let $m \in \mathbb N \cup \{0\}$, $p\in (1,+\infty)$. 
	\begin{enumerate}
		\item If $m>\f 2p$ and $\sigma \leq \sigma' + \f 2p$, then $\|f\|_{C^0_\sigma(\mathbb R^2)} \ls_{m,p} \| f\|_{W^{m,p}_{\sigma'}(\mathbb R^2)}$.				\item If $m< \f 2p$, then for any $\sigma$, $\|f\|_{L^{\f{2p}{2-mp}}_{\sigma + m}(\mathbb R^2)} \ls_{m,p,\sigma} \|f\|_{W^{m,p}_{\sigma}(\mathbb R^2)}$. For $m< \f 2p$ and $\sigma < \sigma'$, the embedding $W^{m,p}_{\sigma'}(\mathbb R^2) \hookrightarrow L^{\f{2p}{2-mp}}_{\sigma + m}(\mathbb R^2)$ is moreover compact.
	\end{enumerate}
\item Let $1\leq p_1\leq p_2 \leq \infty$ and $\sigma_2 - \sigma_1 > 2(\f 1{p_1} - \f 1{p_2})$. Then $\|f\|_{L^{p_1}_{\sigma_1}(\RR^2)} \ls_{p_1,p_2,\sigma_1,\sigma_2} \|f\|_{L^{p_2}_{\sigma_2}(\RR^2)}$.
\end{enumerate}
\end{proposition}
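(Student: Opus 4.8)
The plan is to reduce every statement to a dyadic annular decomposition and then invoke classical (unweighted) Sobolev/Rellich theory on a fixed annulus, keeping careful track of how the weights scale. Concretely, write $\mathbb{R}^2 = B(0,1) \cup \bigcup_{j\geq 0} A_j$ with $A_j := \{ 2^j \leq |x| \leq 2^{j+1}\}$, note $\langle x \rangle \approx 2^j$ on $A_j$, so that for any $k,p,r$ one has
\[
\|f\|_{W^{k,p}_r(\mathbb{R}^2)}^p \approx \|f\|_{W^{k,p}(B(0,1))}^p + \sum_{j\geq 0}\sum_{|\alpha|\leq k} 2^{jp(r+|\alpha|)}\|\partial^\alpha f\|_{L^p(A_j)}^p,
\]
and similarly for the $L^p_\sigma$ and $C^0_\sigma$ norms (with $\ell^\infty$ over $j$ in the $C^0_\sigma$ case).

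\textbf{Step 1: the embeddings 1(a) and 1(b).} For each $j$, rescale $A_j$ to the fixed annulus $A_0$ by $f_j(y) := f(2^j y)$; then $\|\partial^\alpha f_j\|_{L^p(A_0)} = 2^{j|\alpha| - 2j/p}\|\partial^\alpha f\|_{L^p(A_j)}$, $\|f_j\|_{L^\infty(A_0)} = \|f\|_{L^\infty(A_j)}$, and $\|f_j\|_{L^q(A_0)} = 2^{-2j/q}\|f\|_{L^q(A_j)}$. Applying the standard Sobolev embedding on the fixed domain $A_0$ with a $j$-independent constant --- namely $W^{m,p}(A_0)\hookrightarrow L^\infty(A_0)$ when $m > 2/p$ in case 1(a), and $W^{m,p}(A_0)\hookrightarrow L^q(A_0)$ with $\tfrac1q = \tfrac1p - \tfrac m2$ when $m<2/p$ in case 1(b) --- and undoing the scaling, the resulting powers of $2^j$ collapse exactly because of the hypotheses $\sigma \leq \sigma' + \tfrac2p$ (case 1(a)) and $q = \tfrac{2p}{2-mp}$ (case 1(b)); in the $L^q$ case one also uses $p\leq q$, so that the $\ell^q$-sum over $j$ on the left is controlled by the $\ell^p$-sum on the right, i.e.\ by $\|f\|_{W^{m,p}_\sigma(\mathbb{R}^2)}$. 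The ball $B(0,1)$ contributes only the classical Sobolev inequality.

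\textbf{Step 2: part 2, and the compactness statement.} Part 2 is an immediate weighted H\"older inequality: writing $\langle x\rangle^{\sigma_1}|f| = (\langle x\rangle^{\sigma_2}|f|)\,\langle x\rangle^{\sigma_1-\sigma_2}$ and applying H\"older with exponents $p_2/p_1$ and $(p_2/p_1)'$ (or the obvious modification if $p_2 = \infty$), the factor $\langle x\rangle^{\sigma_1-\sigma_2}$ lies in the dual Lebesgue space precisely when $\sigma_2 - \sigma_1 > 2(\tfrac1{p_1} - \tfrac1{p_2})$. For the compactness assertion in 1(b), I would combine the (already proven) non-endpoint embedding $W^{m,p}_{\sigma'}\hookrightarrow L^{\frac{2p}{2-mp}}_{\sigma+m}$ with the Rellich--Kondrachov theorem on balls: given a sequence bounded in $W^{m,p}_{\sigma'}$, the gain $\sigma' - \sigma > 0$ yields, via the annular decomposition, $\|f\|_{L^{q}_{\sigma+m}(\{|x| > r\})} \lesssim r^{-(\sigma'-\sigma)}\|f\|_{W^{m,p}_{\sigma'}(\mathbb{R}^2)}$, so tails are uniformly small; on each ball $B(0,r)$ one extracts a convergent subsequence by Kondrachov, and a diagonal argument produces a limit in $L^q_{\sigma+m}(\mathbb{R}^2)$.

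None of these steps is genuinely difficult --- the proposition is essentially quoted from \cite{yCB09} and \cite[Lemmas~A2--A3]{HL.elliptic} --- and the only point requiring real care is the exponent bookkeeping in the rescaling argument, in particular the passage between $\ell^p$- and $\ell^q$-summation over the dyadic annuli; the compactness claim is the one genuinely new ingredient and reduces, as indicated, to Kondrachov plus the uniform tail bound.
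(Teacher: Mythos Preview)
Your dyadic-annulus reduction and rescaling for part 1(a), the continuous embedding in 1(b), and part 2 is correct and is the standard argument that the paper's references encode; the paper gives no details beyond citing \cite{yCB09} and \cite{HL.elliptic}, so you have in fact supplied more than it does, and the exponent bookkeeping (including the $\ell^p\hookrightarrow\ell^q$ step over the annuli) is right.

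The compactness assertion in 1(b), however, has a genuine gap --- and it is the same gap present in the paper's one-line justification. Rellich--Kondrachov on a bounded domain $\Omega\subset\mathbb{R}^2$ gives compactness of $W^{m,p}(\Omega)\hookrightarrow L^{q'}(\Omega)$ only for $q'$ \emph{strictly below} the critical exponent $q=\tfrac{2p}{2-mp}$; at $q'=q$ the embedding is continuous but not compact. Concretely, for $m=1$, $p\in(1,2)$, $q=\tfrac{2p}{2-p}$, the rescaled bumps $\phi_n(x)=n^{2/q}\phi(nx)$ with $\phi\in C_c^\infty(B(0,1))$ fixed are bounded in $W^{1,p}_{\sigma'}(\mathbb{R}^2)$ (their support lies in $B(0,1)$, where all weights are $\approx 1$), have constant $L^q_{\sigma+1}$-norm, and converge weakly to zero --- so no subsequence converges strongly in $L^q_{\sigma+1}$. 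Your tail bound from $\sigma'>\sigma$ is correct but cannot defeat this purely local concentration; the step ``on each ball $B(0,r)$ one extracts a convergent subsequence by Kondrachov'' therefore fails at the stated target exponent. This example shows that the compactness assertion in 1(b) is in fact false as written, not merely unproved; the paper's remark that it ``follows readily from the first part of 1(b) together with the Kondrachov compactness theorem'' suffers from exactly the same defect.
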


\subsubsection{Preliminaries on fractional derivatives}

We cite a standard lemma regarding fractional derivatives. 
\begin{lem}[(2.1) in \cite{cMwS2013}]\label{lem:frac.product}
	For any $\th > 0$ and $2\leq p_1,\, p_1',\, p_2,\, p_2'\leq +\infty$ such that $\f 1{p_1} + \f 1{p_2} = \f 12 = \f 1{p_1'} + \f 1{p_2'}$,
	$$\|\Db^\th (fh)\|_{L^2(\RR^2)}\ls_{p_1,p_2,p_1',p_2',\th} \|\Db^\th f\|_{L^{p_1}(\Sigma_t)} \|h\|_{L^{p_2}(\RR^2)} + \|f\|_{L^{p_1'}(\RR^2)} \| \Db^\th h\|_{L^{p_2'}(\RR^2)}.$$
\end{lem}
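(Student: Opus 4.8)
This is the standard fractional Leibniz (Kato--Ponce type) inequality, and the plan is to prove it by a Littlewood--Paley paraproduct decomposition, following \cite{cMwS2013} (the endpoint cases where some $p_i$ or $p_i'$ equals $2$ or $\infty$ require extra care and can be found there); below I outline the argument. Write $f=\sum_{q\geq 0}P_q f$, $h=\sum_{q\geq 0}P_q h$ with inhomogeneous Littlewood--Paley projections, and decompose the product into low-high, high-low and resonant pieces,
\[
fh=\sum_{q}(P_{<q-2}f)(P_q h)+\sum_{q}(P_q f)(P_{<q-2}h)+\sum_{|q-q'|\leq 2}(P_q f)(P_{q'}h)=:\Pi_1+\Pi_2+\Pi_3 .
\]
I would then treat $\Db^\th$ acting on each of the three pieces separately, using Plancherel to pass to the square function, Hölder's inequality with the prescribed exponent pairs, and the standard maximal-function / Littlewood--Paley square-function bounds on $L^p$ for $1<p\leq\infty$.

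For $\Pi_1$, the $q$-th summand has Fourier support in the annulus $\{|\xi|\sim 2^q\}$ on which $\langle\xi\rangle^\th\sim 2^{q\th}$, and these annuli are almost orthogonal, so
\[
\|\Db^\th\Pi_1\|_{L^2(\RR^2)}\ls \Big\|\Big(\sum_q 2^{2q\th}|P_{<q-2}f|^2\,|P_q h|^2\Big)^{1/2}\Big\|_{L^2(\RR^2)}\le \Big\|\big(\sup_q|P_{<q}f|\big)\cdot\Big(\sum_q 2^{2q\th}|P_q h|^2\Big)^{1/2}\Big\|_{L^2(\RR^2)} .
\]
Applying Hölder with $(p_1',p_2')$, bounding $\sup_q|P_{<q}f|$ by a maximal function (hence by $\|f\|_{L^{p_1'}}$, $1<p_1'\leq\infty$) and the square function by $\|\Db^\th h\|_{L^{p_2'}}$, yields the contribution $\|f\|_{L^{p_1'}}\|\Db^\th h\|_{L^{p_2'}}$. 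The term $\Pi_2$ is symmetric under $f\leftrightarrow h$ and gives $\|\Db^\th f\|_{L^{p_1}}\|h\|_{L^{p_2}}$.

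The resonant term $\Pi_3$ is the main obstacle: the two input frequencies may cancel, so $\Db^\th\Pi_3$ is no longer localized to a single dyadic scale. I would write $\Db^\th\Pi_3=\sum_r P_r(\Db^\th\Pi_3)$ and observe that only summands with input scale $q\gtrsim r$ contribute to output scale $2^r$; using Bernstein's inequality to move $\Db^\th$ off each block at cost $2^{q\th}$ and then Young's convolution inequality on $\ell^2_r$ with the kernel $m\mapsto 2^{m\th}\mathbf 1_{m\leq 0}$ — whose $\ell^1$ norm is $(1-2^{-\th})^{-1}<\infty$ precisely because $\th>0$ — one obtains
\[
\|\Db^\th\Pi_3\|_{L^2(\RR^2)}\ls\Big\|\Big(\sum_q 2^{2q\th}|P_q f|^2\,|\widetilde P_q h|^2\Big)^{1/2}\Big\|_{L^2(\RR^2)}\le\Big\|\Big(\sum_q 2^{2q\th}|P_q f|^2\Big)^{1/2}\cdot\sup_{q'}|\widetilde P_{q'}h|\Big\|_{L^2(\RR^2)},
\]
where $\widetilde P_q=\sum_{|q'-q|\leq 2}P_{q'}$. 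Hölder with either $(p_1,p_2)$ or $(p_1',p_2')$ together with the square-function and maximal bounds then shows the $\Pi_3$-contribution is dominated by either admissible product on the right-hand side. Summing the three contributions by the triangle inequality gives the claim; the only genuinely delicate point, besides the use of $\th>0$ in the resonant sum, is the verification of the square-function and maximal bounds at the exponent endpoints $2$ and $\infty$, for which I would simply invoke \cite{cMwS2013}.
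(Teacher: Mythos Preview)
The paper does not prove this lemma at all: it is stated as a citation of a standard result (``We cite a standard lemma regarding fractional derivatives'', with the reference to \cite{cMwS2013}) and used as a black box. Your paraproduct outline is the standard route to such Kato--Ponce inequalities and is correct in spirit; since the authors treat the statement as known, your write-up already goes beyond what the paper supplies.
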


\subsubsection{Standard facts about elliptic estimates}

\begin{defn} \label{def:Delta-1}
	Let $p \in (1,+\infty)$, $-\f 2p < \sigma < 1-\f 2p$. Define $\Delta^{-1}: L^{p}_{\sigma+2}(\RR^2) \rightarrow \mathcal S'(\RR^2)$ by 
	$$\Delta^{-1}f(x)= \f 1{2\pi}\int_{\RR^2} f(y) \log|x-y| dy.$$
\end{defn}

We need a result regarding mapping properties of $\Delta^{-1}$ in weighted Sobolev space, which essentially follows from \cite{Huneau.constraints, rcM1979}. Since we cannot find the exact statement we need, we include a reduction to \cite{Huneau.constraints, rcM1979} for completeness.
\begin{proposition}\label{prop:basic.elliptic.2}
	Let $p \in (1,+\infty)$ and $-\f 2p < \sigma < 1-\f 2p$. Then for every $f\in L^p_{\sigma+2}(\mathbb R^2)$, 
	$$(\Delta^{-1} f)(x) = \f 1{2\pi} (\int_{\mathbb R^2} f(y) \, \ud y)\om(|x|) \log |x| + \widetilde{v}(x),\quad \widetilde{v} \in W^{2,p}_\sigma(\mathbb R^2),$$
	where $\om:\mathbb R\to [0,1]$ is a cutoff function such that $\om(s) \equiv 0$ for $s\leq 1$ and $\om(s) \equiv 1$ for $s\geq 2$.
	
	Moreover, there exists $C = C(p,\sigma)>0$ such that for every $f\in L^p_{\sigma+2}(\mathbb R^2)$, 
	$$\|\Delta^{-1} f - \f 1{2\pi} (\int_{\mathbb R^2} f(y) \, \ud y)\om(|x|) \log |x| \|_{W^{2,p}_\sigma(\mathbb R^2)} \leq C \|f\|_{L^p_{\sigma+2}(\mathbb R^2)}.$$
\end{proposition}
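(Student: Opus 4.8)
The plan is to reduce Proposition~\ref{prop:basic.elliptic.2} to the classical weighted Calderon--Zygmund theory for the Laplacian on $\mathbb R^2$ contained in \cite{rcM1979} (cf.~also \cite{Huneau.constraints}). The cited references give isomorphism statements for $\Delta$ acting between weighted Sobolev spaces $W^{k,p}_\sigma(\mathbb R^2)$, but the subtlety in two dimensions is the logarithmic growth of the Newtonian potential: for $-\f 2p < \sigma < 1-\f 2p$ the operator $\Delta$ from $W^{2,p}_\sigma$ to $L^p_{\sigma+2}$ has a cokernel detected precisely by $f \mapsto \int_{\mathbb R^2} f\, dy$, and the corresponding one-dimensional gap in the range is spanned by (a cutoff of) $\log|x|$. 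So the strategy is: first, split off the logarithmic part explicitly; second, show the remainder lies in $W^{2,p}_\sigma$ with the quantitative bound; third, identify the remainder with $\Delta^{-1} f$ minus the log term via uniqueness.

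Concretely, I would proceed as follows. Given $f \in L^p_{\sigma+2}(\mathbb R^2)$, set $m_f := \f 1{2\pi}\int_{\mathbb R^2} f(y)\, dy$ (finite since $L^p_{\sigma+2}\hookrightarrow L^1$ by Proposition~\ref{prop:Sobolev.weighted}, using $\sigma+2 > 2/p$, hence $\sigma+2-2/p' > 0$ — one checks the weight exponent is in the integrable range). Define $w(x) := m_f\, \om(|x|)\log|x|$, which is smooth, and compute $\Delta w =: h$. A direct computation shows $h$ is supported in the annulus $\{1\leq |x|\leq 2\}$ (where $\om$ varies), hence $h \in L^p_{\sigma+2}$ with norm $\ls |m_f| \ls \|f\|_{L^p_{\sigma+2}}$; moreover $\int_{\mathbb R^2} h\, dy = \int_{\mathbb R^2}\Delta w\, dy$, and since $w(x) = m_f\log|x|$ for $|x|\geq 2$ with $\int_{|x|=r}\rd_r\log|x|\, dS = 2\pi$, one gets $\int h = 2\pi m_f = \int f$ by the divergence theorem. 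Therefore $f - h$ has zero total integral, i.e.~$f-h$ lies in the kernel of the functional obstructing surjectivity, so by the isomorphism theorem of \cite{rcM1979} there is a unique $\widetilde v \in W^{2,p}_\sigma(\mathbb R^2)$ with $\Delta \widetilde v = f - h$ and $\|\widetilde v\|_{W^{2,p}_\sigma} \leq C(p,\sigma)\|f-h\|_{L^p_{\sigma+2}} \leq C(p,\sigma)\|f\|_{L^p_{\sigma+2}}$.

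It remains to check that $w + \widetilde v = \Delta^{-1} f$ with $\Delta^{-1}$ as in Definition~\ref{def:Delta-1}. Both sides satisfy $\Delta(\cdot) = f$ in $\mathcal S'(\mathbb R^2)$, so their difference $u := \Delta^{-1} f - w - \widetilde v$ is a tempered harmonic function, hence a polynomial by Liouville's theorem for distributions. One then argues that the growth of each of the three terms at infinity is $o(|x|)$ in a suitable averaged ($L^p_{\text{loc}}$ on dyadic annuli) sense: $\widetilde v \in W^{2,p}_\sigma \hookrightarrow C^0_{\sigma+2/p-\text{something}}$ (using Proposition~\ref{prop:Sobolev.weighted}(1a) with $\sigma < 1-2/p$, so it decays), $w$ grows only logarithmically, and $\Delta^{-1}f$ grows at most logarithmically by the standard estimate on the Newtonian potential of an $L^1$ function with the given decay (split the convolution into $|x-y| \leq |x|/2$, $|x-y|>|x|/2$). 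A polynomial that is $o(|x|)$ must be constant; and a closer inspection — e.g.~comparing the precise coefficient $\f 1{2\pi}$ in front of $\log|x|$, which is common to $\Delta^{-1}f$ (from the fundamental solution normalization) and to $w$ — forces the constant to be absorbable, so the decomposition is as claimed. Actually, to avoid the constant ambiguity one just redefines $\widetilde v$ to include it (the statement only asserts \emph{some} $\widetilde v \in W^{2,p}_\sigma$, and constants lie in $W^{2,p}_\sigma$ only if $\sigma < -2/p$ — so one must instead pin down the constant to be zero by the matching-of-log-coefficients argument above, which does work since both $\Delta^{-1}f$ and $w$ have log-coefficient exactly $m_f$, forcing $u = O(1)$ and then, since $u \in W^{2,p}_\sigma$ would need $u=0$... this is the one delicate point).

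The main obstacle I anticipate is precisely this last identification step: ensuring the additive constant in the Liouville argument vanishes, since constants are not in $W^{2,p}_\sigma$ for the relevant range of $\sigma$, so one cannot simply absorb it into $\widetilde v$ and must genuinely show it is zero. The clean way is to track asymptotics: establish that $\Delta^{-1}f(x) - m_f\log|x| \to 0$ as $|x|\to\infty$ (averaged over annuli), which is a standard but slightly technical estimate on the Newtonian potential using the decay $f \in L^p_{\sigma+2}$; combined with $w(x) - m_f\log|x| = 0$ for $|x|\geq 2$ and $\widetilde v \to 0$, this shows $u \to 0$, hence $u \equiv 0$. Everything else is bookkeeping with the weighted embeddings of Proposition~\ref{prop:Sobolev.weighted} and citing \cite{rcM1979}.
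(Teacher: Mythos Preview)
Your proposal is correct and follows the same two-stage architecture as the paper: (i) produce the decomposition $w+\widetilde v$ via the weighted elliptic theory of \cite{rcM1979}/\cite{Huneau.constraints}, then (ii) identify it with the explicit Newtonian potential $\Delta^{-1}f$ by a uniqueness argument for the harmonic difference. Your step (i) --- subtracting $m_f\,\om(|x|)\log|x|$ and invoking McOwen's isomorphism on the mean-zero remainder --- is exactly what underlies the paper's citation of \cite[Corollary~2.7]{Huneau.constraints}, so there is no real difference there.

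The genuine (if minor) divergence is in step (ii). Both arguments hinge on the same fact, namely that $\Delta^{-1}f(x)-m_f\log|x|=\tfrac{1}{2\pi}\int f(y)(\log|x-y|-\log|x|)\,dy$ lies in $L^p_\sigma$; the paper cites \cite[Corollary~2]{rcM1979} for precisely this, while you call it a ``standard but slightly technical estimate.'' From there, the paper proceeds concretely: since the harmonic difference $h$ is in $L^p_\sigma$, the polar-coordinate integral $\int_0^\infty r\la r\ra^{\sigma p}(\int_0^{2\pi}|h|^p\,d\vartheta)\,dr$ is finite, so one can find radii $r_i\to\infty$ with $\int_0^{2\pi}|h|^p(r_i,\vartheta)\,d\vartheta\to 0$, apply Poisson's integral formula on $B(0,r_i)$ to force $h\equiv 0$ on $B(0,1)$, and finish by unique continuation. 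Your route --- $u$ tempered harmonic hence polynomial, then $o(|x|)$ hence constant, then kill the constant --- is more conceptual but your discussion of the constant is roundabout. The cleanest version of your argument is simply: $u$ is a polynomial and (by the McOwen estimate plus $\widetilde v\in L^p_\sigma$) $u\in L^p_\sigma$; since $\sigma>-\tfrac{2}{p}$, no nonzero polynomial (not even a constant) lies in $L^p_\sigma$, so $u\equiv 0$. This avoids the separate $o(|x|)$ step and the worry about absorbing constants into $\widetilde v$. The paper's Poisson-formula trick is a hands-on way of extracting the same conclusion without invoking the tempered-harmonic-implies-polynomial fact.
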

\begin{proof}
	By \cite[Corollary~2.7]{Huneau.constraints}\footnote{Notice that technically, \cite[Corollary~2.7]{Huneau.constraints} only gives the result when $p=2$. However, the general case for $p \in (1,+\infty)$ follows with an identical proof. }, given $f \in L^p_{\sigma+2}(\mathbb R^2)$, there exists a function $v$ such that 
	\begin{equation}\label{eq:elliptic.inversion.u}
	(\Delta v)(x) = f(x),\quad v(x) = \f 1{2\pi} (\int_{\mathbb R^2} f(y) \,\ud y)\om(|x|) \log|x| + \widetilde{v}(x),\quad \widetilde{v} \in W^{2,p}_\sigma,
	\end{equation}
	and a constant $C = C(p,\sigma)>0$ such that for every $f\in L^p_{\sigma+2}(\mathbb R^2)$, 
	$$\|\widetilde{v} \|_{W^{2,p}_\sigma(\mathbb R^2)} \leq C \|f\|_{L^p_{\sigma+2}(\mathbb R^2)}.$$
	
	Let (recall Definition~\ref{def:Delta-1})
	\begin{equation}\label{eq:elliptic.inversion.u0}
	v_0(x) := \Delta^{-1} f = \f 1{2\pi}\int_{\mathbb R^2} f(y) \log|x-y| \, \ud y
	\end{equation}
	so that $\Delta v_0 = f$ in the sense of distribution. In order to prove the proposition, it suffices to show that $v_0 = v$.
	
	To achieve this, first note that $h = v_0 -v$ is a harmonic function. In particular, it is a bounded function on $|x|\leq 3$. Moreover, using \eqref{eq:elliptic.inversion.u} and \eqref{eq:elliptic.inversion.u0}, we obtain that for $|x|\geq 2$,
	\begin{equation}\label{eq:elliptic.inversion.diff}
	h(x) = v_0(x) - v(x) = - \widetilde{v}(x) + \f 1{2\pi} \int_{\mathbb R^2} f(y) (\log|x-y| - \log|x|) \,\ud y.
	\end{equation}
	
	Notice that by \cite[Corollary~2]{rcM1979},
	$$\log |x - y| - \log |x| = \int_0^1 \f{\ud}{\ud t} \log |x-ty| \, \ud t = - \int_0^1 \f{y\cdot (x-ty)}{|x-ty|^2} \, \ud t =:\widetilde{R}(x,y)$$
	satisfies, for some $C>0$,
	\begin{equation}\label{eq:elliptic.inversion.R.bddness}
	\sup_{\|w\|_{L^p_{\sigma+2}(\mathbb R^2)} = 1} \left\| \int_{\mathbb R^2} w(y) \widetilde{R}(x,y)\, \ud y \right\|_{L^p_\sigma(\mathbb R^2)} \leq C.
	\end{equation}
	
	By \eqref{eq:elliptic.inversion.diff}, \eqref{eq:elliptic.inversion.R.bddness} and the fact that $f \in L^p_{\sigma+2}(\RR^2) $, $\widetilde{v} \in L^p_\sigma(\RR^2)$, it follows that $h \in L^{p}_\sigma(\RR^2)$. Writing in polar coordinates, this means
	$$ \int_0^{+\infty} \int_0^{2\pi}  \la r \ra^{\sigma p} |h|^p(r,\vartheta) \, \ud \vartheta \, r \, \ud r <+\infty.$$
	In particular, by the mean value theorem, there exists a sequence $\{r_i\}_{i=1}^{+\infty}$, $r_i\to +\infty$ such that as $i\to +\infty$,
	$$\int_0^{2\pi} |h|^p(r_i,\vartheta)\, \ud \vartheta \ls r^{-\sigma p-2}_i \to 0$$
	
	Let $i$ be sufficiently large so that $r_i \geq 2$. Recall that $h$ is harmonic. Then, by Poisson's integral formula,
	$$\sup_{y \in B(0,1)} |h|(y) \ls \int_0^{2\pi} |h|(r_i,\vartheta) \, \ud \vartheta \ls (\int_0^{2\pi} |h|^p(r_i,\vartheta) \, \ud \vartheta)^{\f 1p} \to 0.$$
	
	Hence, $h$ is a harmonic function which is identically $0$ on $B(0,1)$. Unique continuation implies that $h\equiv 0$, which is what we wanted to show. \qedhere
\end{proof}

\begin{proposition}\label{prop:elliptic.1st.der}
	Let $p \in (1,+\infty)$, $\sigma \in (-\f 2 p, 1-\f 2p)$ and $\sigma' > 0$. Then
	$$\rd_i \Delta^{-1} : L^p_{\sigma+2}(\mathbb R^2) \to W^{1,p}_{1-\f 2p-\sigma'}(\mathbb R^2)$$
	is a bounded map.
\end{proposition}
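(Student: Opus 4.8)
The plan is to combine Proposition~\ref{prop:basic.elliptic.2} with the weighted Sobolev embeddings of Proposition~\ref{prop:Sobolev.weighted}. Given $f \in L^p_{\sigma+2}(\mathbb R^2)$ with $-\f 2p < \sigma < 1-\f 2p$, Proposition~\ref{prop:basic.elliptic.2} gives the decomposition
$$(\Delta^{-1} f)(x) = \f 1{2\pi} \Big(\int_{\mathbb R^2} f(y)\,\ud y\Big) \om(|x|)\log|x| + \widetilde v(x), \qquad \widetilde v \in W^{2,p}_\sigma(\mathbb R^2),$$
with $\|\widetilde v\|_{W^{2,p}_\sigma(\mathbb R^2)} \ls \|f\|_{L^p_{\sigma+2}(\mathbb R^2)}$. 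Differentiating once in $x^i$, we must bound $\rd_i \Delta^{-1} f = \f 1{2\pi}(\int f)\,\rd_i\big(\om(|x|)\log|x|\big) + \rd_i \widetilde v$ in $W^{1,p}_{1-\f 2p - \sigma'}(\mathbb R^2)$.

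First I would treat the singular (logarithmic) term. The function $\rd_i(\om(|x|)\log|x|)$ is supported in $|x|\geq 1$, is smooth there, and behaves like $|x|^{-1}$ with its spatial derivative behaving like $|x|^{-2}$; hence it lies in $W^{1,p}_{1-\f 2p-\sigma'}(\mathbb R^2)$ for any $\sigma'>0$ since the weight $\la x\ra^{1-\f 2p-\sigma'}$ applied to $|x|^{-1}$ (and $\la x\ra^{2-\f 2p-\sigma'}$ applied to $|x|^{-2}$) yields $|x|^{-\f 2p - \sigma'}$ near infinity, which is $p$-integrable in $\mathbb R^2$ precisely because $(\f 2p + \sigma')p = 2 + \sigma' p > 2$. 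The coefficient $\int_{\mathbb R^2} f$ is controlled by $\|f\|_{L^p_{\sigma+2}}$: indeed $\int |f| \leq \|f\|_{L^p_{\sigma+2}} \|\la x\ra^{-\sigma-2}\|_{L^{p'}}$, and $\|\la x\ra^{-\sigma-2}\|_{L^{p'}}<\infty$ because $(\sigma+2)p' > 2 \iff \sigma+2 > \f 2p \iff \sigma > \f 2p - 2$, which holds since $\sigma > -\f 2p \geq \f 2p - 2$ for $p\geq 1$ (and strictly for $p>1$; one checks $-\f 2p > \f 2p - 2 \iff p > 2$, so for $p\leq 2$ one instead uses $\sigma > -\f 2p$ directly against the borderline — more carefully, since $-\f 2p < \sigma$, we have $\sigma + 2 > 2 - \f 2p = \f 2{p'} $, giving exactly the needed strict inequality $(\sigma+2)p' > 2$). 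Thus the logarithmic term contributes $\ls \|f\|_{L^p_{\sigma+2}}$ to the target norm.

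For the regular term $\rd_i \widetilde v$: from $\widetilde v \in W^{2,p}_\sigma$ we get $\rd_i \widetilde v \in W^{1,p}_{\sigma+1}$ by definition of the weighted norms. It remains to show the inclusion $W^{1,p}_{\sigma+1}(\mathbb R^2) \hookrightarrow W^{1,p}_{1-\f 2p-\sigma'}(\mathbb R^2)$, which is immediate provided $\sigma + 1 \geq 1 - \f 2p - \sigma'$, i.e. $\sigma \geq -\f 2p - \sigma'$, which holds since $\sigma > -\f 2p > -\f 2p - \sigma'$. (No genuine Sobolev embedding is even needed here, just monotonicity of the weight exponent since the target weight is smaller.) Combining the two contributions gives $\|\rd_i \Delta^{-1} f\|_{W^{1,p}_{1-\f 2p-\sigma'}(\mathbb R^2)} \ls \|f\|_{L^p_{\sigma+2}(\mathbb R^2)}$, as desired.

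The only mild subtlety — and the step I would be most careful about — is the bookkeeping of the weight exponents at infinity for the logarithmic term, ensuring the borderline $p$-integrability is strict (this is exactly where the hypothesis $\sigma'>0$ is used and cannot be dropped), and confirming $\int_{\mathbb R^2} f$ is finite under the stated range of $\sigma$; both reduce to the elementary Hölder computations sketched above. If one prefers, the regular-term inclusion can alternatively be obtained directly from Proposition~\ref{prop:Sobolev.weighted}, but the monotonicity argument is cleaner and suffices.
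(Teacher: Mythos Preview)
Your proof is correct and follows essentially the same route as the paper's: decompose $\Delta^{-1}f$ via Proposition~\ref{prop:basic.elliptic.2}, bound the logarithmic piece using $\|\rd_i(\om(|x|)\log|x|)\|_{W^{1,p}_{1-2/p-\sigma'}} \ls 1$ (where $\sigma'>0$ is needed) together with $|\int f| \ls \|f\|_{L^p_{\sigma+2}}$, and handle $\rd_i\widetilde v \in W^{1,p}_{\sigma+1}$ via the monotone weight inclusion $\sigma+1 > 1-\tfrac 2p-\sigma'$. The paper's argument is slightly terser but identical in substance; your parenthetical verifying $(\sigma+2)p'>2$ meanders a bit before arriving at the clean statement $\sigma+2 > 2-\tfrac 2p = \tfrac 2{p'}$, which is immediate from $\sigma > -\tfrac 2p$.
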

\begin{proof}
	By Proposition~\ref{prop:basic.elliptic.2},
	$$\|\Delta^{-1} f - \f 1{2\pi} (\int_{\mathbb R^2} f(y) \, \ud y)\om(|x|) \log |x| \|_{W^{2,p}_\sigma(\mathbb R^2)} \ls \|f\|_{L^p_{\sigma+2}(\mathbb R^2)}.$$
	It follows that 
	\begin{equation}\label{eq:diD-1.1}
	\|\rd_i \Delta^{-1} f - \f 1{2\pi} (\int_{\mathbb R^2} f(y) \, \ud y)\rd_i (\om(|x|) \log |x|) \|_{W^{1,p}_{\sigma+1}(\mathbb R^2)} \ls \|f\|_{L^p_{\sigma+2}(\mathbb R^2)}.
	\end{equation}
	
	Notice now that for $p$, $\sigma$, $\sigma'$ as given, we have
	\begin{equation}\label{eq:diD-1.2}
	\|\rd_i (\om(|x|) \log |x|)\|_{W^{1,p}_{1-\f 2p - \sigma'}(\mathbb R^2)} \ls 1, \quad |\int_{\mathbb R^2} f(y) \, \ud y|\ls \|f \|_{L^p_{\sigma+2}(\mathbb R^2)}.
	\end{equation}
	
	Note that $1-\f 2p - \sigma' < \sigma +1$ since $-\f 2p <\sigma$. Therefore, using the triangle inequality, \eqref{eq:diD-1.1} and \eqref{eq:diD-1.2}, we obtain
	\begin{equation*}
	\begin{split}
	&\: \|\rd_i \Delta^{-1} f \|_{W^{1,p}_{1-\f 2p-\sigma'}(\mathbb R^2)} \\
	\leq &\: \|\rd_i \Delta^{-1} f - \f 1{2\pi} (\int_{\mathbb R^2} f(y) \, \ud y)\rd_i (\om(|x|) \log |x|) \|_{W^{1,p}_{\sigma+1}(\mathbb R^2)} + |\int_{\mathbb R^2} f(y) \, \ud y| \| \rd_i (\om(|x|) \log |x|)\|_{W^{1,p}_{1-\f 2p-\sigma'}(\mathbb R^2)} \\
	\ls &\: \|f \|_{L^p_{\sigma+2}(\mathbb R^2)},
	\end{split}
	\end{equation*}
	as desired.
\end{proof}

\subsection{Some remarks and conventions}\label{sec:first.remaks.elliptic}

In the remainder of this section, we obtain the desired bounds using the Poisson equations that the metric components $N$, $\bt^i$ and $\gamma$ satisfy. Recall that the metric components admit decompositions as in \eqref{eq:metric.decomposition} and so they are given by 
\begin{equation}\label{eq:metric.Delta-1.def}
\gamma = \Delta^{-1} \Delta \gamma,\quad \bt^i = \Delta^{-1}\Delta \bt^i,\quad N = 1 + \Delta^{-1} \Delta N, 
\end{equation}
where $\Delta^{-1}$ is as in Definition~\ref{def:Delta-1}. Note particularly the constant $1$ term built into the definition of $N$. Moreover, it is useful to note that in the decomposition in \eqref{eq:metric.decomposition}, $\bt^i$ does not have a logarithmic contribution and $\gamma_{asymp}$ is independent of $t$.

We introduce some schematic notations to be used in this section. 
\begin{itemize}
\item We will use $\mfg$ to denote a metric component, i.e.~$\mfg \in \{\gamma, \bt^i, N\}$.
\item Denote by $\Omg(\mfg)$ a smooth function of $\mfg$ such that
\begin{equation}\label{eq:def.Omega.g}
|\Omega(\mfg)(x)|\ls \la x\ra^{{\f {\alp}{10}}},\quad |\rd_t (\Omega(\mfg))(x)|\ls \ep^{\f 54}\la x\ra^{{\f {\alp}{10}}},\quad |\rd_i (\Omega(\mfg))(x)|\ls \ep^{\f 54} \la x \ra^{-1+{\f {\alp}{10}}}.
\end{equation}
This holds for instance for $\f{ e^{2\gamma}}{N^2}$, $\f{e^{2\gamma}}{N}$, $\f{e^{-2\gamma}}{N}$ and $e^{-2\gamma}\log N$, etc.
\item When considering terms on the RHSs of \eqref{Nellipticequation}--\eqref{betaellipticequation}, if the precise structure of the term is unimportant, we will denote the terms schematically by $\Omega(\mfg) \rd_\alp \phi \rd_\bt \phi$ and $\Omega(\mfg) \rd_i \mfg \rd_j \mfg$, where it is understood that $\Omg(\mfg)$ satisfies \eqref{eq:def.Omega.g}.
\end{itemize}

It will also useful introduce the following cutoff function.
\begin{definition}\label{def:varpi}
Fix a cutoff function $\varpi\in C^\i_c(\RR^2)$ such that $\varpi \equiv 1$ on $B(0,2R)$ and $\mathrm{supp}(\varpi) \subseteq B(0,3R)$.
\end{definition}

\subsection{Elliptic estimates for purely spatial derivatives of $\mfg$}\label{sec:g.purely.spatial}

In this subsection, we obtain various estimates for the purely spatial derivatives of $\mfg$: the simple up to $W^{2,4}$ estimates (with weights) in \textbf{Section~\ref{sec:elliptic.W24}}, the higher order $W^{2,2+s'}$ estimates in \textbf{Section~\ref{sec:elliptic.W22+s'}}, and finally, the most difficult $W^{2,\infty}$ estimate (which is a Besov space end-point elliptic estimate; recall Section~\ref{sec:intro.elliptic}) in \textbf{Section~\ref{sec:elliptic.W2i}}.

\subsubsection{The weighted $W^{2,4}$ estimates}\label{sec:elliptic.W24}
 
We begin with the simplest elliptic estimates, which is a direct application of Proposition~\ref{prop:basic.elliptic.2}.
\begin{proposition}\label{prop:elliptic.easy}
	$\gamma$, $\bt^i$ and $N$ admit a decomposition as in \eqref{eq:metric.decomposition}. Moreover, for all $t\in [0,T_B)$:
	\begin{equation}\label{eq:d2g.L4}
	|\gamma_{asymp}| + |N_{asymp}| \ls \ep^{\f 32},\quad \| \widetilde{\gamma}\|_{W^{2,4}_{\f 12 - \alp}(\Sigma_t)} + \| \widetilde{N}\|_{W^{2,4}_{\f 12 - \alp}(\Sigma_t)} + \| \beta \|_{W^{2,4}_{\f 12 - \alp}(\Sigma_t)}  \ls \ep^{\f 32}.
	\end{equation}
	Furthermore, it holds that 
	\begin{equation}\label{eq:d2g.Linfty}
	\| \widetilde{\gamma}\|_{W^{1,\infty}_{1 - \alp}(\Sigma_t)} + \| \widetilde{N}\|_{W^{1,\infty}_{1 - \alp}(\Sigma_t)} + \| \beta \|_{W^{1,\infty}_{1 - \alp}(\Sigma_t)}  \ls \ep^{\f 32}.
	\end{equation}
\end{proposition}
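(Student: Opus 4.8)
The plan is to invert the Poisson equations \eqref{Nellipticequation}, \eqref{gammaellipticequation}, \eqref{betaellipticequation} using $\Delta^{-1}$ (Definition~\ref{def:Delta-1}) and apply the mapping property Proposition~\ref{prop:basic.elliptic.2} together with the bootstrap assumptions to control the right-hand sides. First I would examine the structure of each RHS. Writing the RHSs schematically (as in Section~\ref{sec:first.remaks.elliptic}) as $\Omega(\mfg) \rd_\alp \phi \rd_\bt \phi$ and $\Omega(\mfg) \rd_i \mfg \rd_j \mfg$, I would check that each such term lies in $L^4_{\f 12 - \alp + 2}(\Sigma_t) = L^4_{\f 52 - \alp}(\Sigma_t)$. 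For the $(\rd\phi)^2$-type terms: $\phi = \rphi + \sum_k \tphi$ is compactly supported in $B(0,R)$ by Lemma~\ref{lem:support}, and $\rd\phi \in L^\i$ by \eqref{BA:Li}, so $(\rd\phi)^2 \in L^4$ with compact support, hence trivially in the weighted space (the weight is bounded on $B(0,R)$), with norm $\ls \ep^{\f 32}$ — here one spends one power of $\ep^{\f 34}$ from \eqref{BA:Li} and needs a second factor; using $\|\rd\phi\|_{L^4} \ls \|\rd\phi\|_{L^2}^{1/2}\|\rd\phi\|_{L^\i}^{1/2} \ls \ep^{\f 34}$ (from \eqref{BA:rphi}, \eqref{bootstrapsmallnessenergy}, \eqref{BA:Li}) gives $(\rd\phi)^2$ bounded by $\ep^{\f 32}$ in $L^2$, and combined with the $L^\i$ bound, in $L^4$ as well, with constant $\ep^{\f 32}$. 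The factor $\Omega(\mfg)$ is harmless since $|\Omega(\mfg)|\ls \la x\ra^{\f{\alp}{10}}$ and everything is compactly supported. For the $\Omega(\mfg)\rd_i\mfg\rd_j\mfg$ terms (only present in \eqref{Nellipticequation} and \eqref{betaellipticequation}, coming from $|\mathfrak L\bt|^2$), I would use \eqref{eq:BA.g.Li}: $\|\rd_x\widetilde{\mfg}\|_{L^\i_{-\alp}}\ls \ep^{\f 54}$, and the $\log|x|$ contributions to $\gamma$, $N$ have gradient decaying like $\la x\ra^{-1}$ times $\ep^{\f 54}$ by \eqref{eq:BA.g.asymp}; hence $\rd_i\mfg\rd_j\mfg \in L^\i_{-2+2\alp}\subseteq L^\i_{-2\alp}\cdot\la x\ra^{\text{small}}$... more carefully, the product is $O(\ep^{\f 52})\la x\ra^{-2+2\alp}$, which combined with $\Omega(\mfg)$ lies in $L^4_{\f 52 - \alp}$ provided $-2 + 2\alp + \f{\alp}{10} + (\f 52 - \alp)\cdot 4 < -\f 24$, i.e.\ the weighted $L^4$ integral converges — I would verify the exponent arithmetic ($\alp = 10^{-2}$ makes this comfortable) and note the resulting bound is $O(\ep^{\f 52}) \ll \ep^{\f 32}$.

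Having bounded the RHSs in $L^4_{\f 52 - \alp}(\Sigma_t)$, I would apply Proposition~\ref{prop:basic.elliptic.2} with $p = 4$, $\sigma = \f 12 - \alp$ (checking $-\f 2p = -\f 12 < \f 12 - \alp < 1 - \f 12 = \f 12$, valid for small $\alp$). This immediately yields the decomposition \eqref{eq:metric.decomposition}: $\gamma = -\gamma_{asymp}\om(|x|)\log|x| + \widetilde\gamma$ with $\gamma_{asymp} = -\f 1{2\pi}\int (\mbox{RHS of \eqref{gammaellipticequation}})$, and similarly for $N$; for $\bt^i$ the integral of the RHS vanishes (one should check this — it follows from the fact that $\bt^i = \Delta^{-1}$ of a total divergence, by \eqref{maximality3} and \eqref{Kellipticequation}, or directly from the structure, so there is no logarithm and no constant term, consistent with the remark in Section~\ref{sec:first.remaks.elliptic}). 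The bounds $|\gamma_{asymp}| + |N_{asymp}| \ls \ep^{\f 32}$ follow from $|\int(\mbox{RHS})|\ls \|\mbox{RHS}\|_{L^4_{\f 52-\alp}}\ls \ep^{\f 32}$ (using Proposition~\ref{prop:Sobolev.weighted}.2 or direct estimation since $\f 52 - \alp > \f 12$), and $\|\widetilde{\mfg}\|_{W^{2,4}_{\f 12-\alp}(\Sigma_t)} \ls \ep^{\f 32}$ is exactly the conclusion of Proposition~\ref{prop:basic.elliptic.2}. This proves \eqref{eq:d2g.L4}.

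For \eqref{eq:d2g.Linfty}, I would upgrade to $W^{1,\infty}_{1-\alp}$ via the Sobolev embedding in Proposition~\ref{prop:Sobolev.weighted}.1(a): with $m = 1$, $p = 4$, we have $m = 1 > \f 2p = \f 12$, so $C^0_\sigma \hookleftarrow W^{1,4}_{\sigma'}$ whenever $\sigma \leq \sigma' + \f 24 = \sigma' + \f 12$. Applying this to $\rd_x\widetilde{\mfg}$ and $\widetilde{\mfg}$ themselves: from $\|\widetilde{\mfg}\|_{W^{2,4}_{\f 12 - \alp}}\ls\ep^{\f 32}$ we get $\|\rd_x\widetilde{\mfg}\|_{W^{1,4}_{\f 32-\alp}}\ls\ep^{\f 32}$, hence $\|\rd_x\widetilde{\mfg}\|_{C^0_{2-\alp}}$... wait, I need $\|\rd_x\widetilde{\mfg}\|_{C^0_{1-\alp}}$, which follows with room to spare since $1-\alp \leq (\f 32 - \alp - 1) + \f 12 = 1 - \alp$; actually this is an equality so it is exactly borderline — I would instead start from $\|\widetilde{\mfg}\|_{W^{2,4}_{\f 12 - \alp}}$ meaning $\|\rd_x\widetilde{\mfg}\|_{W^{1,4}_{\f 12 - \alp + 1}} = \|\rd_x\widetilde{\mfg}\|_{W^{1,4}_{\f 32 - \alp}}$, and embed $W^{1,4}_{\f 32-\alp}\hookrightarrow C^0_{2-\alp}$, i.e.\ $\|\rd_x\widetilde{\mfg}\|_{C^0_{2-\alp}}\ls \ep^{\f 32}$; but the claimed bound is the weaker $C^0_{1-\alp}$, done. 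Similarly $\widetilde{\mfg} \in W^{1,4}_{-\f 12-\alp}\hookrightarrow C^0_{-\alp}$ gives $\|\widetilde{\mfg}\|_{C^0_{1-\alp}}\ls\ep^{\f 32}$. Combining gives \eqref{eq:d2g.Linfty}.

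The main obstacle is purely bookkeeping: tracking the weight exponents through Proposition~\ref{prop:basic.elliptic.2} and Proposition~\ref{prop:Sobolev.weighted} to confirm all the admissibility conditions ($-\f 2p < \sigma < 1 - \f 2p$, and $\sigma \leq \sigma' + \f 2p$) hold for $p = 4$, $\alp = 10^{-2}$, and confirming that the $\Omega(\mfg)\rd\mfg\rd\mfg$ contributions decay fast enough at infinity to be in the required weighted $L^4$ space. There is no deep difficulty here — this is the "easy" elliptic estimate, and the genuinely hard endpoint $W^{2,\infty}$ and $\rd_t\mfg$ estimates are deferred to later subsections. One subtlety worth a sentence: the constant in the bound is $\ep^{\f 32}$ rather than $\ep^{\f 54}$ (an improvement over the bootstrap \eqref{eq:BA.g.L4}), and this improvement is exactly what powers the bootstrap closure; it comes from the fact that the RHS is \emph{quadratic} in quantities of size $\ep^{\f 34}$ (for $\rd\phi$) or $\ep^{\f 54}$ (for $\rd\mfg$), so each RHS term is $O(\ep^{\f 32})$ or better.
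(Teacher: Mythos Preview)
Your approach matches the paper's, and the scalar-field terms and the $|\mathfrak L\bt|^2$ terms in \eqref{Nellipticequation}--\eqref{gammaellipticequation} are handled correctly. However, there is a gap in the metric terms for the $\bt$ equation. You slightly misidentified them: \eqref{betaellipticequation} contains $\de^{ik}\de^{jl}(\f{\rd_k N}{N} - 2\rd_k\gamma)(\mathfrak L\bt)_{il}$, not $|\mathfrak L\bt|^2$. For the $\rd_k\gamma\cdot(\mathfrak L\bt)_{il}$ piece, your pure $L^\infty$ approach via \eqref{eq:BA.g.Li} and \eqref{eq:BA.g.asymp} fails by a logarithm: the $-\gamma_{asymp}\,\om(|x|)\log|x|$ contribution means $|\rd_x\gamma|\ls \ep^{\f 54}\la x\ra^{-1}$ only, while $\rd_x\bt\in L^\infty_{2-\alp}$ (from $\|\bt\|_{W^{1,\infty}_{1-\alp}}$), so the product lies in $L^\infty_{3-\alp}$. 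But $L^\infty_{3-\alp}\not\subseteq L^4_{\f 52-\alp}$, since this would require $\la x\ra^{-\f 12}\in L^4(\RR^2)$, which diverges logarithmically. (Your stated product decay rate $\la x\ra^{-2+2\alp}$ is also not what the bootstrap assumptions give.)

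The fix, which is what the paper does, is to use the $W^{2,4}$ bootstrap \eqref{eq:BA.g.L4} for one factor via H\"older:
\[
\|\rd_i\bt\,\rd_j\gamma\|_{L^4_{\f 52-\alp}(\Sigma_t)}\ls \|\rd_i\bt\|_{L^4_{\f 32-\alp}(\Sigma_t)}\|\rd_j\gamma\|_{L^\infty_1(\Sigma_t)}\ls \|\bt\|_{W^{1,4}_{\f 12-\alp}(\Sigma_t)}\|\rd_j\gamma\|_{L^\infty_1(\Sigma_t)}\ls \ep^{\f 52}.
\]
The paper explicitly remarks that this estimate ``saturates the $\la x\ra$ weights.'' Once this is corrected, the rest of your argument --- applying Proposition~\ref{prop:basic.elliptic.2} with $p=4$, $\sigma=\f 12-\alp$ to get \eqref{eq:d2g.L4}, then Proposition~\ref{prop:Sobolev.weighted}.1(a) for \eqref{eq:d2g.Linfty} --- coincides with the paper's.
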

\begin{proof}
	\pfstep{Step~0: The logarithmic term as $|x|\to \infty$} To show that the decomposition \eqref{eq:metric.decomposition} holds, we need that $\gamma_{asymp}$ is a constant, $\gamma_{asymp},\,N_{asymp} \geq 0$, and $\bt^i$ has no logarithmic terms. All these follow from the local existence result in \cite[Theorem~5.4]{HL.elliptic}. From now on, we thus focus on the estimates.

	\pfstep{Step~1: Estimates in \eqref{eq:d2g.L4} for $\gamma$ and $N$} Recall that $\gamma$ and $N$ are given by \eqref{eq:metric.Delta-1.def}. Using Proposition~\ref{prop:basic.elliptic.2}, it thus suffices show that the RHSs of \eqref{Nellipticequation} and \eqref{gammaellipticequation} can be bounded in $L^4_{\f 12 - \alp +2}$ by $\ls \ep^{\f 32}$. We will now prove such a bound.
		
	We start with the scalar field terms. The precise structure of scalar field terms in the RHSs of \eqref{Nellipticequation} and \eqref{gammaellipticequation} is unimportant, and we control general terms of the schematic form $\Omega(\mfg) \rd_\alp \phi \rd_\bt \phi$ (recall Section~\ref{sec:first.remaks.elliptic}). Since $\mathrm{supp}(\phi) \subseteq B(0,R)$, we can ignore all the weights and use \eqref{eq:def.Omega.g} and \eqref{BA:Li} to obtain
	\begin{equation}\label{eq:dphi.dphi.L4}
	\|\Omega(\mfg) \rd_\alp \phi \rd_\bt \phi \|_{L^4_{\f 12 - \alp +2}(\Sigma_t)} \ls \|\rd \phi\|_{L^\infty(\Sigma_t)}^2 \ls \ep^{\f 32}.
	\end{equation}
	
	The remaining terms in \eqref{Nellipticequation} and \eqref{gammaellipticequation} take the form $\f{e^{2\gamma}}{N}|\mathfrak L \bt|^2$ and $\f{e^{2\gamma}}{N^2}|\mathfrak L \bt|^2$. Note that the sign properties of $\gamma_{asymp}$ and $N_{asymp}$ means that $\f{e^{2\gamma}}{N}$ and $\f{e^{2\gamma}}{N^2}$ are favorable in terms of $\la x \ra$ weights. Hence, by H\"older's inequality and the bootstrap assumptions \eqref{eq:BA.g.Li} and \eqref{eq:BA.g.L4}, we have
	\begin{equation}\label{eq:dbt.dbt.L4}
	\begin{split}
	&\: \| (\mathfrak L \bt) (\mathfrak L \bt) \|_{L^4_{\f12 - \alp +2}(\Sigma_t)} \ls  \|\mathfrak L \bt\|_{L^4_{\f12 - \alp + 1}(\Sigma_t)} \|\mathfrak L \bt\|_{L^\i_1(\Sigma_t)} \ls \|\bt\|_{W^{1,4}_{\f 12 - \alp}(\Sigma_t)} \|\bt\|_{W^{1,\i}_{1-\alp}(\Sigma_t)} \ls \ep^{\f 52}.
	\end{split}
	\end{equation}
	
	Combining \eqref{eq:dphi.dphi.L4} and \eqref{eq:dbt.dbt.L4}, we obtain the desired bound
	$$\| \Delta^{-1} (\mbox{RHS of \eqref{Nellipticequation}})\|_{L^4_{\f 12 - \alp +2}(\Sigma_t)} + \| \Delta^{-1} (\mbox{RHS of \eqref{gammaellipticequation}})\|_{L^4_{\f 12 - \alp +2}(\Sigma_t)} \ls \ep^{\f 32}.$$
	
	\pfstep{Step~2: Estimates in \eqref{eq:d2g.L4} for $\bt$} To obtain the bound in \eqref{eq:d2g.L4} for $\bt$, we argue as in Step~1 to bound the RHS of \eqref{betaellipticequation} in $L^4_{\f 12 - \alp +2}$. Clearly, the scalar field term can be bounded exactly as in \eqref{eq:dphi.dphi.L4}. For the remaining terms, we use H\"older's inequality and the bootstrap assumptions \eqref{eq:BA.g.Li} and \eqref{eq:BA.g.L4} to obtain
	\begin{equation}\label{eq:dbt.dgamma.L4}
	\begin{split}
	&\: \| \rd_i \bt \rd_j \gamma \|_{L^4_{\f12 - \alp +2}(\Sigma_t)} \ls  \|\rd_i \bt\|_{L^4_{\f12 - \alp + 1}(\Sigma_t)} \|\rd_j \gamma\|_{L^\i_{1}(\Sigma_t)} \ls  \| \bt \|_{W^{1,4}_{\f12 - \alp}(\Sigma_t)} \|\rd_j \gamma\|_{L^\i_{1}(\Sigma_t)} \ls \ep^{\f 52}.
	\end{split}
	\end{equation}
	Note that the estimate \eqref{eq:dbt.dgamma.L4} saturates the $\la x \ra$ weights. For the $\f{\rd_i \bt \rd_j N}N$, we note that $\f 1N$ is favorable in terms of weight, so it suffices to show the following bound (which can be proven as in \eqref{eq:dbt.dgamma.L4}):
	\begin{equation}\label{eq:dbt.dN.L4}
	\| \rd_i \bt \rd_j N \|_{L^4_{\f12 - \alp +2}(\Sigma_t)} \ls \ep^{\f 52}.
	\end{equation}
	
	Combining \eqref{eq:dphi.dphi.L4}, \eqref{eq:dbt.dgamma.L4}, \eqref{eq:dbt.dN.L4} gives $\| \Delta^{-1} (\mbox{RHS of \eqref{betaellipticequation}})\|_{L^4_{\f 12 - \alp +2}(\Sigma_t)} \ls \ep^{\f 32}$, which gives the desired bound for $\bt$ in \eqref{eq:d2g.L4}.
	
	\pfstep{Step~3: Proof of \eqref{eq:d2g.Linfty}} Finally, \eqref{eq:d2g.Linfty} follows from Sobolev embedding (1(a) of Proposition~\ref{prop:Sobolev.weighted}) and the estimate \eqref{eq:d2g.L4} that we have already obtained. \qedhere
\end{proof}

\subsubsection{The estimates for $2+s'$ derivatives of metric coefficients in $L^2$}\label{sec:elliptic.W22+s'}

\begin{proposition}\label{prop:g.est.top.L2}
	The following estimate holds for all $t\in [0,T_B)$:
	$$\sum_{\mfg \in \{\gamma, \bt^i,N\}} \|\rd^2_{x} \Db^{s'} \mfg \|_{L^2(\Sigma_t)} \ls \ep^{\f 32}.$$
\end{proposition}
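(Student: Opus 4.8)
The plan is to read the estimate off the Poisson equations \eqref{Nellipticequation}, \eqref{gammaellipticequation}, \eqref{betaellipticequation} satisfied by the metric components. By \eqref{eq:metric.Delta-1.def} one has $\rd^2_{ij}\mfg=\rd_i\rd_j\Delta^{-1}F$, where $F$ denotes the corresponding right-hand side (for $N$ the additive constant $1$ contributes nothing). Since $\rd_i\rd_j\Delta^{-1}$ is a Calder\'on--Zygmund operator (Fourier multiplier with bounded symbol $\xi_i\xi_j/|\xi|^2$), it is bounded on $L^2(\RR^2)$ and commutes with $\Db^{s'}$; hence $\|\rd^2_x\Db^{s'}\mfg\|_{L^2(\Sigma_t)}\ls\|\Db^{s'}F\|_{L^2(\Sigma_t)}$, and it remains to bound $\|\Db^{s'}F\|_{L^2(\Sigma_t)}\ls\ep^{\f32}$. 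As in the proof of Proposition~\ref{prop:elliptic.easy}, I would split $F$ into the schematic scalar-field terms $\Omega(\mfg)\,\rd_\alp\phi\,\rd_\bt\phi$ and metric terms $\Omega(\mfg)\,\rd_i\mfg\,\rd_j\mfg$ (with $\Omg(\mfg)$ as in \eqref{eq:def.Omega.g}, every metric term in fact carrying at least one factor of $\rd\bt$), and estimate $\Db^{s'}$ of each in $L^2$.

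For the scalar-field terms I would use that $\mathrm{supp}(\phi)\subseteq B(0,R)$ (Lemma~\ref{lem:support}), so all weights are irrelevant and one may insert the cutoff $\varpi$ of Definition~\ref{def:varpi}: $\Omega(\mfg)\,\rd_\alp\phi\,\rd_\bt\phi=(\varpi\Omega(\mfg))\,\rd_\alp\phi\,\rd_\bt\phi$, with $\varpi\Omega(\mfg)$ compactly supported and bounded in $W^{1,\infty}(\RR^2)\cap W^{2,4}(\RR^2)$ by Proposition~\ref{prop:elliptic.easy}. Then I would distribute $\Db^{s'}$ over the three factors via the fractional Leibniz rule (Lemma~\ref{lem:frac.product}): when $\Db^{s'}$ hits $\varpi\Omega(\mfg)$ one pairs $\|\Db^{s'}(\varpi\Omega(\mfg))\|_{L^4(\Sigma_t)}\ls1$ (interpolation between $L^4$ and $W^{1,4}$) with $\|\rd\phi\|_{L^4(\Sigma_t)}^2\ls\|\rd\phi\|_{L^\infty(\Sigma_t)}^2\ls\ep^{\f32}$, using compact support and \eqref{BA:Li}; when $\Db^{s'}$ hits a derivative of $\phi$ one pairs $\|\varpi\Omega(\mfg)\|_{L^\infty(\Sigma_t)}\ls1$, $\|\rd\phi\|_{L^\infty(\Sigma_t)}\ls\ep^{\f34}$, and $\|\Db^{s'}\rd\phi\|_{L^2(\Sigma_t)}\ls\ep^{\f34}$, the last following from \eqref{BA:rphi} and \eqref{tphiH3/2bootstrap} applied to $\phi=\rphi+\sum_{k=1}^3\tphi$. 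Each contribution is then $\ls\ep^{\f34}\cdot\ep^{\f34}=\ep^{\f32}$.

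For the metric terms I would first substitute \eqref{eq:metric.decomposition}, so that each differentiated factor is either $\rd_i\widetilde\mfg$ or $\rd_j\bt$ --- bounded in $W^{1,4}_{\f32-\alp}(\RR^2)\cap L^\infty_{2-\alp}(\RR^2)$ by $\ep^{\f32}$ by Proposition~\ref{prop:elliptic.easy} --- or the smooth $\la x\ra^{-1}$-decaying function $\rd_i(\omega(|x|)\log|x|)$ with coefficient $\gamma_{asymp}$ resp.\ $N_{asymp}$ of size $\ls\ep^{\f32}$. Such products decay like $\la x\ra^{-3+\alp}$, hence lie in $L^2(\RR^2)$; to bound $\Db^{s'}$ of the product I would again invoke Lemma~\ref{lem:frac.product}, controlling $\|\Db^{s'}f\|_{L^4(\Sigma_t)}\ls\|f\|_{L^4(\Sigma_t)}^{1-s'}\|\rd f\|_{L^4(\Sigma_t)}^{s'}$ for each factor $f$ by interpolation together with the $W^{2,4}$ bound of Proposition~\ref{prop:elliptic.easy}, treating the coefficient $\Omega(\mfg)$ exactly as above, and using the weighted Lebesgue embeddings of Proposition~\ref{prop:Sobolev.weighted} to trade integrability against decay when recombining the factors. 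This produces a bound of order $\ep^{3}$, comfortably below $\ep^{\f32}$.

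The most delicate point --- and what dictates the structure of the argument --- is the borderline character of $\Db^{s'}$ acting on the scalar-field nonlinearity: $\phi$ is only in $H^{1+s'}$, so $\rd^2\phi\notin L^2$ and a crude product rule is unavailable; one is forced to keep the $\Db^{s'}$ on a single first derivative of $\phi$ and exploit the bootstrap quantity $\|\rd\Db^{s'}\phi\|_{L^2}\ls\ep^{\f34}$, which is exactly strong enough, when paired with the $L^\infty$ bound on $\rd\phi$, to land at $\ep^{\f32}$. By contrast the metric quadratic terms are genuinely subcritical here --- two derivatives of $\mfg$ are already controlled in $L^4$, so $\rd_x\mfg\in C^{0,1/2}_{loc}$ --- and only require care with the weights at spatial infinity.
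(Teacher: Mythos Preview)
Your proposal is correct and follows essentially the same approach as the paper: reduce to bounding $\|\Db^{s'}(\Delta\mfg)\|_{L^2}$ via $L^2$-boundedness of $\rd_i\rd_j\Delta^{-1}$, then split into scalar-field and metric terms and use the fractional Leibniz rule (Lemma~\ref{lem:frac.product}) exactly as you describe. The only cosmetic differences are in the choice of H\"older exponents (the paper puts $\Db^{s'}(\varpi\Omega(\mfg))$ in $L^\infty$ via the $C^1$ bound rather than in $L^4$) and in the handling of the metric terms, where the paper distributes explicit $\la x\ra$-weights across the three factors rather than first expanding via \eqref{eq:metric.decomposition}; both routes yield the same $\ep^{5/2}$ bound for those terms.
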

\begin{proof}
	Using \eqref{eq:metric.Delta-1.def} and the $L^2$-boundedness of the operator $\rd^2_{ij}\Delta^{-1}$, it suffices to prove that 
	\begin{equation}\label{eq:H2+s'.metric.1}
	\sum_{\mfg \in \{\gamma, \bt^i,N\}} \| \Db^{s'} (\Delta \mfg) \|_{L^2(\Sigma_t)} \ls \ep^{\f 32}.
	\end{equation}
	(Note in particular that we do not demand weights in this bound.)
	
	Let $\Omega(\mfg)$ be a smooth function of $\mfg$ as in \eqref{eq:def.Omega.g}. Schematically\footnote{We emphasize that this equation is schematic so that the components on the RHS may be different from the component on the LHS.}, 
	\begin{equation}\label{eq:mfgellipticequation.schematic}
	\Delta \mfg = \Omega(\mfg) \rd_\alp \phi \rd_\sigma \phi + \Omega(\mfg) \rd_i \mfg \rd_j \mfg.
	\end{equation}
	
	Since $\mathrm{supp}(\phi) \subseteq B(0,R)$, $\Omega(\mfg) \rd_\alp \phi \rd_\sigma \phi = \varpi  \Omega(\mfg) \rd_\alp \phi \rd_\sigma \phi$ (for $\varpi$ in Definition~\ref{def:varpi}). Thus, by Lemma~\ref{lem:frac.product}, \eqref{eq:def.Omega.g} and the bootstrap assumptions \eqref{BA:rphi}, \eqref{tphiH3/2bootstrap} and \eqref{BA:Li}, we have
	\begin{equation}\label{eq:H2+s'.metric.2}
	\begin{split}
	&\: \|\Db^{s'} (\Omega(\mfg) \rd_\alp \phi \rd_\sigma \phi) \|_{L^2(\Sigma_t)} \\
	\ls &\: \|\Db^{s'} (\varpi \Omega(\mfg))\|_{L^\i(\Sigma_t)} \| \rd \phi\|_{L^4(\Sigma_t)}^2 + \|\varpi \Omega(\mfg)\|_{L^\i(\Sigma_t)} \| \rd\Db^{s'} \phi\|_{L^2(\Sigma_t)} \|\rd\phi\|_{L^\i(\Sigma_t)} \\
	\ls &\: \|\varpi\Omega(\mfg)\|_{C^1(\Sigma_t)} ( \|\rd\phi\|_{L^\i(\Sigma_t)}^2 + \| \rd\Db^{s'} \phi\|_{L^2(\Sigma_t)} \|\rd\phi\|_{L^\i(\Sigma_t)}) \ls \ep^{\f 32}.
	\end{split}
	\end{equation}
	
	Using Lemma~\ref{lem:frac.product} after distributing the weights, and applying \eqref{eq:def.Omega.g} and the bootstrap assumptions \eqref{eq:BA.g.asymp}--\eqref{eq:BA.g.L4} (and recalling $\alp = 10^{-2}$), we obtain
	\begin{equation}\label{eq:H2+s'.metric.3}
	\begin{split}
	&\: \|\Db^{s'} (\Omega(\mfg) \rd_i \mfg \rd_i \mfg )\|_{L^2(\Sigma_t)} \\
	\ls &\: \|\Db^{s'} (\la x\ra^{-\f \alp 2} \Omega(\mfg))\|_{L^\i(\Sigma_t)} \|\la x\ra^{\f \alp 4}\rd_i \mfg\|_{L^4(\Sigma_t)}^2 \\
	&\: + \|\la x\ra^{-\f \alp 2} \Omega(\mfg)\|_{L^\i(\Sigma_t)} \|\Db^{s'} (\la x\ra^{\f \alp 4}\rd_i \mfg)\|_{L^4(\Sigma_t)} \|\la x\ra^{\f \alp 4} \rd_i \mfg\|_{L^4(\Sigma_t)} \\
	\ls &\: \|\Omega(\mfg) \|_{W^{1,\i}_{-\f \alp 2}(\Sigma_t)} \|\rd_i \mfg \|_{W^{1,4}_{\f 12-\alp}(\Sigma_t)}^2 \ls \ep^{\f 52}.
	\end{split}
	\end{equation}
	
	Combining \eqref{eq:mfgellipticequation.schematic}, \eqref{eq:H2+s'.metric.2} and \eqref{eq:H2+s'.metric.3}, we obtain \eqref{eq:H2+s'.metric.1}, as desired. \qedhere
\end{proof}

\subsubsection{The estimates for second spatial derivatives of $\mfg$}\label{sec:elliptic.W2i}

Finally, we control the second spatial derivatives of $\mfg$. This could be thought of as an $L^\i$-endpoint elliptic estimates in Besov space. Notice that the scalar field obeys Besov space estimates in the $(u_a, u_b)$ coordinate system, while the elliptic operator that we need to invert is a constant coefficient operator only in the $(x^1, x^2)$ coordinate system. We will treat this by using the physical space representation of the kernel.

\begin{lem}\label{lem:decomposition.into.diff.Besov}
For $\mfg \in \{N,\,\bt,\,\gamma\}$, $\Delta \mfg$ (c.f.~\eqref{Nellipticequation}--\eqref{betaellipticequation}) admits the decomposition
$$\Delta\mfg = \sum_{ 1\leq a< b\leq 3} F^{(ab)}_{\mfg},$$
where $F_{\mfg}^{(ab)}$ obeys\footnote{Recall the definition of the Besov space in Definition~\ref{def:Besov}.} the following estimates for all $t\in [0,T_B)$:
	\begin{equation}\label{eq:hard.Besov.2}
	\| F_{\mfg}^{(ab)} \|_{B^{u_a,u_b}_{\infty,1}(\Sigma_t)} + \|F_{\mfg}^{(ab)} \|_{L^3(\Sigma_t)} \ls \ep^{\f 32},
	\end{equation}
\end{lem}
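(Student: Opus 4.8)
The idea is to take the elliptic equations \eqref{Nellipticequation}--\eqref{betaellipticequation} for $\mfg \in \{N, \bt, \gamma\}$ and decompose the right-hand sides so that each piece is ``aligned'' with one of the three pairs of transversal eikonal directions. Recall that in all three equations, $\Delta \mfg$ is a sum of terms of the schematic form $\Omega(\mfg) \rd_\alp \phi \rd_\sigma \phi$ and $\Omega(\mfg) \rd_i \mfg \rd_j \mfg$. The metric-squared terms $\Omega(\mfg) \rd_i \mfg \rd_j \mfg$ are easy: by the already-proven estimates \eqref{eq:d2g.L4}--\eqref{eq:d2g.Linfty} (Proposition~\ref{prop:elliptic.easy}), $\rd_i \mfg$ is in $L^\infty$ with a $\la x\ra^{-1+\alp}$ weight and supported essentially everywhere but with good decay; such a term is in $L^3$ with norm $\ls \ep^{\f 52}$, and since it is a fixed $C^{0,s}$-type function (by the $W^{2,4}$ estimate and Sobolev embedding it lies in $C^{0, \f 12-}$) one gets a Besov bound in \emph{any} coordinate system. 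These pieces can simply be lumped into, say, $F^{(12)}_{\mfg}$. The substance is the scalar-field terms $\Omega(\mfg)\rd_\alp \phi \rd_\sigma \phi$; these are the ones that are only Besov-regular, and only in the $(u_a, u_b)$ coordinates.

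The key step is the following. Write $\phi = \rphi + \sum_{k=1}^3 \tphi$, so $\rd_\alp\phi \rd_\sigma\phi$ expands into: (i) the fully regular block involving only $\rphi$ (and cross terms $\rd\rphi \cdot \rd\tphi$), which by \eqref{rphiBbootstrap}--\eqref{tphiBbootstrap} and \eqref{BA:Li} is Besov-bounded in every $(u_a,u_b)$ frame and $L^3$-bounded, and can be assigned arbitrarily, say to $F^{(12)}_{\mfg}$; (ii) the ``diagonal'' blocks $\rd\widetilde\phi_k \cdot \rd\widetilde\phi_k$ for a single $k$; and (iii) the genuinely mixed blocks $\rd\widetilde\phi_j \cdot \rd\widetilde\phi_k$ with $j\neq k$. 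For a diagonal block $\rd\widetilde\phi_k \cdot \rd\widetilde\phi_k$, assign it to $F^{(ab)}_{\mfg}$ for any pair $(a,b)$ containing $k$ — say $(a,b) = (k, k')$ with $k' = k+1 \bmod 3$; one then needs the Besov bound $\|\Omega(\mfg)(\rd\widetilde\phi_k)^2\|_{B^{u_k,u_{k'}}_{\infty,1}} \ls \ep^{\f 32}$. For a mixed block $\rd\widetilde\phi_j\cdot\rd\widetilde\phi_k$ ($j\neq k$), assign it to $F^{(jk)}_{\mfg}$ and prove $\|\Omega(\mfg)\rd\widetilde\phi_j\cdot\rd\widetilde\phi_k\|_{B^{u_j,u_k}_{\infty,1}}\ls\ep^{\f 32}$. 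The $L^3(\Sigma_t)$ bound in \eqref{eq:hard.Besov.2} is uniform across all these assignments and follows directly from \eqref{BA:Li} (giving $L^\infty$ control of $\rd\phi$) and the compact support of $\phi$, with the $\Omega(\mfg)$ factor bounded by \eqref{eq:def.Omega.g}.

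For the Besov estimates themselves, the mechanism is that $B^{u_a,u_b}_{\infty,1}$ is an algebra (up to constants), so $\|\Omega(\mfg) \rd\widetilde\phi_j \cdot \rd\widetilde\phi_k\|_{B^{u_j,u_k}_{\infty,1}} \ls \|\Omega(\mfg)\|_{B^{u_j,u_k}_{\infty,1}} \|\rd\widetilde\phi_j\|_{B^{u_j,u_k}_{\infty,1}} \|\rd\widetilde\phi_k\|_{B^{u_j,u_k}_{\infty,1}}$, and each factor is controlled: the $\Omega(\mfg)$ factor by transferring the $W^{2,4}$-type bounds from Proposition~\ref{prop:elliptic.easy} through the $C^1$-diffeomorphism between $(x^1,x^2)$ and $(u_j,u_k)$ coordinates (Corollary~\ref{cor:diffeo}) and using that $W^{1,p} \hookrightarrow B_{\infty,1}$ in two dimensions for $p>2$; and $\|\rd\widetilde\phi_k\|_{B^{u_j,u_k}_{\infty,1}}$ by the bootstrap assumption \eqref{tphiBbootstrap} — note this is exactly the case $k'=j\neq k$ covered by that assumption, since $j, k$ are distinct. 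The diagonal case $\rd\widetilde\phi_k\cdot\rd\widetilde\phi_k$ in the $(u_k,u_{k'})$ frame is the same, again using \eqref{tphiBbootstrap} with the index $k'\neq k$. \textbf{The main obstacle} is the coordinate transfer for the $\Omega(\mfg)$ factor and the Besov-algebra estimate itself: one must check carefully that the Littlewood--Paley calculus in the $(u_a,u_b)$ coordinates is compatible with the multiplication by $\Omega(\mfg)$, which is naturally estimated in the Cartesian coordinates, and that the diffeomorphism bounds of Corollary~\ref{cor:diffeo} (which are $C^1$, hence just enough for a first-derivative Sobolev embedding into Besov) do not degrade the estimate — this is precisely the ``extra physical space argument'' flagged in Section~\ref{sec:intro.elliptic}. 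A clean way to organize it is to first prove a general lemma: if $h$ is compactly supported with $\|h\|_{B^{u_a,u_b}_{\infty,1}} + \|h\|_{L^3} < \infty$ and $\Omega$ satisfies \eqref{eq:def.Omega.g}, then $\|\Omega h\|_{B^{u_a,u_b}_{\infty,1}} + \|\Omega h\|_{L^3} \ls \|h\|_{B^{u_a,u_b}_{\infty,1}} + \|h\|_{L^3}$, which isolates all the coordinate-change bookkeeping into one place and reduces the proposition to the already-available wave Besov bounds.
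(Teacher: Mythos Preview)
Your overall strategy matches the paper's: decompose $\Delta\mfg$ into pieces aligned to the pairs $(u_a,u_b)$, use that $B^{u_a,u_b}_{\infty,1}$ is an algebra, control $\Omega(\mfg)$ via $W^{1,4}\hookrightarrow B_{\infty,1}$ together with the $C^1$ change of variables (Corollary~\ref{cor:diffeo}), and invoke \eqref{rphiBbootstrap}--\eqref{tphiBbootstrap} for the wave factors. The $L^3$ bound and the metric-squared terms are handled the same way.

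There is, however, a genuine (if easily fixed) gap in your assignment of the cross terms $\rd\rphi\cdot\rd\tphi$. You claim these are ``Besov-bounded in every $(u_a,u_b)$ frame'' and lump them all into $F^{(12)}_{\mfg}$. But the bootstrap assumption \eqref{tphiBbootstrap} only gives $\|\rd\tphi\|_{B^{u_k,u_{k'}}_{\infty,1}}\ls\ep^{3/4}$ for pairs \emph{containing} $k$; in particular there is no bound for $\rd\widetilde\phi_3$ in the $B^{u_1,u_2}_{\infty,1}$ norm, so $\rd\rphi\cdot\rd\widetilde\phi_3$ cannot go into $F^{(12)}_{\mfg}$. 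The paper's fix is exactly what you already do for the diagonal and mixed blocks: each cross term $\rd\rphi\cdot\rd\widetilde\phi_a$ is assigned to some $F^{(ab)}_{\mfg}$ with $a$ appearing in the pair. Once you make this adjustment, your argument goes through. (As a side remark, the ``extra physical space argument'' you flag as the main obstacle is not needed here --- it belongs to the next step, Proposition~\ref{prop:elliptic.Besov}, where one inverts $\Delta$ on the $B^{u_a,u_b}_{\infty,1}$ pieces. For the present lemma the $\Omega(\mfg)$ factor is handled, as you say, by the paper's direct $W^{1,4}\hookrightarrow B_{\infty,1}$ embedding plus Corollary~\ref{cor:diffeo}, with no subtlety.)
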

\begin{proof}
The desired estimate relies only on the schematic form of the equation \eqref{eq:mfgellipticequation.schematic}. After defining the decomposition in Step~1, we first prove the Besov space estimates in Step~2. The $L^3$ estimates are simpler, and the decomposition plays no role. This will be carried out in Step~3.

\pfstep{Step~1: The decomposition} We now define the decomposition. For definiteness, we put all the metric terms and the quadratic $\rphi$ terms in $F_{\mfg}^{(12)}$. The other terms require a more precise decomposition: 
\begin{itemize}
\item $\Omg(\mfg) \rd_\alp \widetilde{\phi}_a \rd_\bt \rphi$ will be put in $F_{\mfg}^{(1a)}$ if $a\neq 1$, and in $F_{\mfg}^{(a2)}$ if $a = 1$;
\item $\Omg(\mfg) \rd_\alp \widetilde{\phi}_a \rd_\bt \widetilde{\phi}_b$ will be put in $F_{\mfg}^{(ab)}$ if $a<b$, in $F_{\mfg}^{(ba)}$ if $b<a$, in $F_{\mfg}^{(1a)}$ if $1<a = b$, and in $F_{\mfg}^{(12)}$ if $a=b=1$.
\end{itemize}
For concreteness, we explicitly give the decomposition when $\mfg = N$. In this case, we decompose RHS of \eqref{Nellipticequation} as
\begin{align}
F_N^{(12)}:= &\: \f{e^{2\gamma}}{4N} |\mathfrak L \bt|^2 + 2 N e^{2\gamma} \cdot [ (\n \rphi)^2 + (\n \widetilde{\phi}_1)^2  + (\n \widetilde{\phi}_2)^2 ] \notag \\
&\: + 4 N e^{2\gamma} \cdot \{ [(\n \widetilde{\phi}_1) + (\n \widetilde{\phi}_2)] \cdot (\n \rphi) +(\n \widetilde{\phi}_1)\cdot (\n \widetilde{\phi}_2) \},\\
F_N^{(13)}:= &\: 2 N e^{2\gamma} \cdot (\n \widetilde{\phi}_3)^2 + 4 N e^{2\gamma} \cdot (\n \widetilde{\phi}_3) \cdot [(\n \rphi) + (\n \widetilde{\phi}_1)],\\
F_N^{(23)}:= &\: 4 N e^{2\gamma} \cdot (\n \widetilde{\phi}_2) \cdot (\n \widetilde{\phi}_3).
\end{align}

\pfstep{Step~2: The Besov estimates} An important ingredient for the estimate is that the Besov space $B^{u_a,u_b}_{\infty,1}$ is an algebra and obeys the estimate
\begin{equation}\label{eq:Besov.algebra}
\|f\cdot h \|_{B^{u_a,u_b}_{\infty,1}} \ls \|f \|_{B^{u_a,u_b}_{\infty,1}} \|h \|_{B^{u_a,u_b}_{\infty,1}}.
\end{equation}
This is obvious using the definition of $B^{u_a,u_b}_{\infty,1}$ and Young's convolution inequality.

\pfstep{Step~2(a): The metric terms}
We first bound terms schematically of the form $\Omg(\mfg) \rd_i \mfg \rd_j \mfg$ (which are in $F^{(12)}_{\mfg}$ of the decomposition). Note the standard Sobolev embedding $W^{1,4}(\mathbb R^2)\hookrightarrow B_{\infty,1}(\mathbb R^2)$. Hence, using also Corollary~\ref{cor:diffeo}, we have $\|f\|_{B^{u_a,u_b}_{\infty,1}(\Sigma_t)} \ls \|f \|_{W^{1,4}(\Sigma_t)}$. In particular, using \eqref{eq:def.Omega.g}, \eqref{eq:BA.g.asymp} and \eqref{eq:BA.g.L4}, it follows that (for any $a\neq b$) $\|\la x \ra^{-\f{\alp}{10}}\Omg(\mfg)\|_{B^{u_a,u_b}_{\infty,1}(\Sigma_t)}\ls 1$ and $\| \la x\ra^{\f{\alp}{20}} \rd_x \mfg\|_{B^{u_a,u_b}_{\infty,1}(\Sigma_t)}\ls \ep^{\f 54}$.

Hence, by \eqref{eq:Besov.algebra}, we have, for any $a\neq b$ (and in particular $(a,b) = (1,2)$),
\begin{equation}
\begin{split}
\| \Omg(\mfg) \rd_i \mfg \rd_j \mfg \|_{B^{u_a,u_b}_{\infty,1}(\Sigma_t)} \ls \| \la x \ra^{-\f{\alp}{10}} \Omg(\mfg)  \|_{B^{u_a,u_b}_{\infty,1}(\Sigma_t)} \| \la x\ra^{\f{\alp}{20}} \rd_x \mfg  \|_{B^{u_a,u_b}_{\infty,1}(\Sigma_t)}^2 \ls \ep^{\f {5}2}.
\end{split}
\end{equation}

\pfstep{Step~2(b): The scalar field terms} Since $\mathrm{supp}(\phi) \subseteq B(0,R)$, we have $\Omg(\mfg) \rd_\alp \phi \rd_\bt \phi = \varpi \Omg(\mfg) \rd_\alp \phi \rd_\bt \phi$. Arguing as in Step~2(a), we have (for any $a\neq b$)
\begin{equation}\label{eq:varpi.Omg.B.est}
\| \varpi \Omg(\mfg) \|_{B^{u_a,u_b}_{\infty,1}(\Sigma_t)} \ls 1.
\end{equation}

To proceed, we need to control $\phi$, for which we use the decomposition $\phi = \rphi + \sum_{k=1}^3 \tphi$. The quadratic term would give the following three types of contributions:
$$\underbrace{\rd \rphi \cdot \rd\rphi}_{=:I},\quad \underbrace{\rd\rphi \cdot \rd \widetilde{\phi}_a}_{=:II},\quad \underbrace{\rd \widetilde{\phi}_a\cdot \rd \widetilde{\phi}_b}_{=:III}.$$
Each of these terms has $B^{u_a,u_b}_{\infty,1}$ norm $\ls \ep^{\f 32}$ after \emph{choosing} suitable $a$ and $b$. More precisely, $\| I \|_{B^{u_a,u_b}_{\infty,1}} \ls \ep^{\f 32}$ for any $(a,b)$ such that $a\neq b$ (by \eqref{eq:Besov.algebra} and \eqref{rphiBbootstrap}); $\| II \|_{B^{u_a,u_b}_{\infty,1}} \ls \ep^{\f 32}$ for $a$ as in the term and any $b\neq a$ (by \eqref{eq:Besov.algebra}, \eqref{rphiBbootstrap} and \eqref{tphiBbootstrap}); $\| III \|_{B^{u_a,u_b}_{\infty,1}} \ls \ep^{\f 32}$ for $a$, $b$ as in the term (by \eqref{eq:Besov.algebra}, \eqref{rphiBbootstrap} and \eqref{tphiBbootstrap}).

Combining this with \eqref{eq:varpi.Omg.B.est}, and using \eqref{eq:Besov.algebra}, it follows that that all the scalar field contributions for $F_{\mfg}$ obey the desired Besov bound. Finally, combining Steps~2(a) and 2(b), we conclude the proof of the Besov bounds in \eqref{eq:hard.Besov.2}.

\pfstep{Step~3: $L^3$ estimates} We begin with the estimates for the metric terms. We have more than enough regularity; the key issue is thus the decay at infinity. Noting that $\la x\ra^{-\f 23 - \alp} \in L^3(\Sigma_t)$, we have, by H\"older's inequality, \eqref{eq:BA.g.asymp}, \eqref{eq:BA.g.Li} and \eqref{eq:def.Omega.g}, that
$$\|\Omg(\mfg) \rd_i \mfg \rd_j \mfg \|_{L^3(\Sigma_t)} \ls \|\la x\ra^{-\f 23 - \alp}\|_{L^3(\Sigma_t)} \|\la x\ra^{- \alp} \Omg(\mfg)\|_{L^\infty(\Sigma)} \|\la x \ra^{\f 13+\alp}\rd_x \mfg \|_{L^\infty(\Sigma_t)}^2 \ls \ep^{\f 52}.$$

Turning to the scalar field terms, we use \eqref{eq:def.Omega.g}, \eqref{BA:Li}, and that $\mathrm{supp}(\phi) \subseteq B(0,R)$ to obtain
\begin{equation*}
\begin{split}
&\: \|\Omg(\mfg) \rd_\alp\rphi \rd_\bt \rphi\|_{L^3(\Sigma_t)},\,\|\Omg(\mfg) \rd_\alp\rphi \rd_\bt \widetilde{\phi}_a\|_{L^3(\Sigma_t)},\,\|\Omg(\mfg) \rd_\alp \widetilde{\phi}_a \rd_\bt \widetilde{\phi}_b \|_{L^3(\Sigma_t)} \\
\ls &\: \|\Omg(\mfg)\|_{ L^\i(\Sigma_t \cap B(0,R))}( \|\rd \rphi\|_{L^\i(\Sigma_t)} + \|\rd \tphi\|_{L^\i(\Sigma_t)})^2 \ls \ep^{\f 32}.
\end{split}
\end{equation*}

Recalling the decomposition in Step~1, and combining the above estimates, we obtain the desired $L^3$ bound in \eqref{eq:hard.Besov.2}. \qedhere

\end{proof}

Using the decomposition in Lemma~\ref{lem:decomposition.into.diff.Besov}, we prove our main elliptic estimate for $\rd^2_{ij}\mfg$:
\begin{proposition}\label{prop:elliptic.Besov}
	The following estimates hold for all $t\in [0,T_B)$:
	\begin{equation} \label{secondderivative.g.estimate}
	\sum_{\mfg \in \{N,\, \bt^l,\, \gamma\}} \max_{i,j} \|\rd^2_{ij} \mfg \|_{L^\i(\Sigma_t)} \ls \ep^{\f 32}.
	\end{equation}
\end{proposition}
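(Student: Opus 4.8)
The plan is to combine the decomposition from Lemma~\ref{lem:decomposition.into.diff.Besov} with a physical-space representation of the operator $\rd^2_{ij}\Delta^{-1}$. Writing $\mfg = \Delta^{-1}(\Delta\mfg) = \sum_{1\leq a<b\leq 3} \Delta^{-1} F^{(ab)}_{\mfg}$ (modulo the harmless constant in $N$, recall \eqref{eq:metric.Delta-1.def}), it suffices to bound $\|\rd^2_{ij} \Delta^{-1} F^{(ab)}_{\mfg}\|_{L^\i(\Sigma_t)}$ for each fixed pair $a\neq b$, for which we have the two estimates $\| F^{(ab)}_{\mfg}\|_{B^{u_a,u_b}_{\infty,1}(\Sigma_t)} + \|F^{(ab)}_{\mfg}\|_{L^3(\Sigma_t)}\ls \ep^{\f 32}$ from \eqref{eq:hard.Besov.2}. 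The second-derivative Newtonian kernel $\rd^2_{ij}\Delta^{-1}$ is a Calder\'on--Zygmund convolution operator; it is bounded on $L^p$ for $1<p<\infty$ but not on $L^\i$, and the role of the Besov estimate is precisely to recover the endpoint.

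First I would split $\rd^2_{ij}\Delta^{-1}$ into a low-frequency (or large-scale) piece and a high-frequency (or small-scale) piece, e.g.~via a dyadic decomposition of the kernel $K_{ij}(z)\sim \partial^2_{ij}\log|z|$, $K_{ij} = \sum_{q} K_{ij}^{(q)}$ with $K_{ij}^{(q)}$ supported at scale $|z|\sim 2^q$ and satisfying $\|K_{ij}^{(q)}\|_{L^1}\ls 1$, $\int K_{ij}^{(q)} = 0$. The far/large-scale part ($q$ large, $2^q \gtrsim 1$) is controlled using the $L^3$ bound on $F^{(ab)}_{\mfg}$ together with a summable kernel bound (Young's inequality gives $\|K^{(q)}_{ij} * F\|_{L^\i}\ls \|K^{(q)}_{ij}\|_{L^{3/2}}\|F\|_{L^3} \ls 2^{-\f 23 q}\|F\|_{L^3}$, which sums for $2^q\gtrsim 1$), and the support property $\mathrm{supp}(\phi)\subseteq B(0,R)$ plus the weight decay of the metric terms handle the tail at spatial infinity (this is where the $L^3$, rather than just $L^2$, control is needed so that convergence of $\int f(y)\log|x-y|\,dy$-type integrals is not an issue). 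The near/small-scale part ($2^q \lesssim 1$) is the delicate one.

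For the near part, the key observation is that on $B(0,3R)$ (where all of $\phi$ and the relevant action happens) the change of variables $(x^1,x^2)\mapsto (u_a,u_b)$ is a $C^1$-diffeomorphism with derivatives bounded above and below independently of $\de$ by Corollary~\ref{cor:diffeo} and Proposition~\ref{prop:dx2u}. Hence the cancellation $\int K^{(q)}_{ij} = 0$ can be exploited against the modulus of continuity of $F^{(ab)}_{\mfg}$ measured in the $(u_a,u_b)$ coordinates: for a single Littlewood--Paley piece $P^{u_a,u_b}_p F^{(ab)}_{\mfg}$, one has $\|K^{(q)}_{ij}*P^{u_a,u_b}_p F^{(ab)}_{\mfg}\|_{L^\i}\ls \min(1, 2^{q+p})\|P^{u_a,u_b}_p F^{(ab)}_{\mfg}\|_{L^\i}$ --- the factor $2^{q+p}$ coming from the mean-zero property of $K^{(q)}_{ij}$ applied to a function oscillating at frequency $\lesssim 2^p$ (with the Jacobian bounds converting $u$-frequencies to $x$-frequencies), and the bound by $\|K^{(q)}_{ij}\|_{L^1}$ otherwise. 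Summing first in $q$ (geometric in $\min(1,2^{q+p})$) and then in $p$ (using $\sum_p \|P^{u_a,u_b}_p F^{(ab)}_{\mfg}\|_{L^\i} = \|F^{(ab)}_{\mfg}\|_{B^{u_a,u_b}_{\infty,1}}\ls \ep^{\f 32}$) yields the desired $L^\i$ bound.

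The main obstacle I expect is the bookkeeping in the near-part estimate: one must carefully track how the $(u_a,u_b)$-frequency localization of $F^{(ab)}_{\mfg}$ interacts with the Euclidean kernel $K^{(q)}_{ij}$ written in $(x^1,x^2)$ coordinates, since the Littlewood--Paley projections $P^{u_a,u_b}_p$ are adapted to the curved coordinates while the convolution is Euclidean. The Jacobian and Hessian bounds on $u_k$ (Corollary~\ref{cor:diffeo}, Proposition~\ref{prop:dx2u}) --- crucially uniform in $\de$ --- are exactly what make the comparison of frequency scales legitimate and keep all constants $\de$-independent; one should also insert the cutoff $\varpi$ from Definition~\ref{def:varpi} where needed so that the diffeomorphism property is only used on a fixed compact set, handling the behavior outside $B(0,3R)$ by the far-part/$L^3$ argument instead. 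Modulo this care, the estimate then follows by the triangle inequality over the (finitely many) pairs $(a,b)$ and over $\mfg\in\{N,\bt^l,\gamma\}$.
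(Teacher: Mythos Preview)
Your overall strategy---physical-space kernel plus the Besov bound from Lemma~\ref{lem:decomposition.into.diff.Besov} for the small-scale part and the $L^3$ bound for the large-scale part---matches the paper's. But there is a genuine gap in the near-part estimate. Your claimed bound $\|K^{(q)}_{ij}*P^{u_a,u_b}_p F\|_{L^\i}\ls \min(1,2^{q+p})\|P^{u_a,u_b}_p F\|_{L^\i}$ is correct, but the sum $\sum_{q\le 0}\min(1,2^{q+p})$ is \emph{not} geometric uniformly in $p$: for $p$ large there are $\sim p$ values of $q$ in $(-p,0]$ where the minimum equals $1$, so the $q$-sum is $\sim p$, and you would end up needing $\sum_{p}(1+p)\|P^{u_a,u_b}_p F\|_{L^\i}$, which is not controlled by the $B^{u_a,u_b}_{\infty,1}$ norm.

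What is missing is a second mechanism in the regime $q+p>0$ (kernel scale larger than the oscillation scale of $P^{u_a,u_b}_p F$), where one should exploit cancellation on the \emph{function} side instead: write $P^{u_a,u_b}_p F$ as a Euclidean divergence up to a lower-order error, integrate by parts onto the kernel, and gain $2^{-(q+p)}$. The paper does precisely this (see \eqref{eq:hard.Besov.6}--\eqref{eq:hard.Besov.9}): it writes $P^{u_a,u_b}_k F=\sum_\ell \rd_\ell H_{k,\ell}+\widetilde H_k$ with $\|H_{k,\ell}\|_{L^\i}\ls 2^{-k}\|P^{u_a,u_b}_k F\|_{L^\i}$ and $\|\widetilde H_k\|_{L^3}\ls 2^{-k}\ep^{3/2}$, the error term $\widetilde H_k$ arising from the Jacobian when converting $\srd_{u_a},\srd_{u_b}$ to $\rd_1,\rd_2$ via Corollary~\ref{cor:diffeo} and Proposition~\ref{prop:dx2u}. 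With both the $2^{q+p}$ and the $2^{-(q+p)}$ bounds available (plus the $L^3$ tail for $\widetilde H$), the double sum becomes genuinely geometric. Equivalently, the paper organizes this as a scale-matched split: for each Littlewood--Paley piece at frequency $2^k$ it cuts the kernel at $|x-y|=2^{-k}$, handling $|x-y|\le 2^{-k}$ by your mean-zero argument and $|x-y|\ge 2^{-k}$ by the divergence-form representation. Once you add this ingredient your outline goes through.
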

\begin{proof}
	By Lemma~\ref{lem:decomposition.into.diff.Besov}, each $\mfg$ satisfies a Poisson equation 
	\begin{equation} \label{Laplaceg}
	\Delta \mfg =  \sum_{ 1\leq a< b\leq 3} F^{(ab)}_{\mfg},
	\end{equation}
	where 
	the inhomogeneous terms $F_{\mfg}^{(ab)}$ obey \eqref{eq:hard.Besov.2}.
	
	Define auxiliary functions $\mfg^{(ab)}_k$ by
	\begin{equation}\label{eq:hard.Besov.def.of.g}
	 \mfg^{(ab)}_k = \Delta^{-1} P_k^{u_a,u_b}  F^{(ab)}_{\mfg}.
	\end{equation}

	By \eqref{eq:metric.Delta-1.def}, it follows that in order to obtain the desired conclusion of the proposition, it suffices to prove
	$$\sum_{ 1\leq a< b\leq 3}\sum_{k\geq 0} \| \rd^2_{ij} \mfg^{(ab)}_k \|_{L^\i(\Sigma_t)} \ls \ep^{\f 32}.$$
	
	From now on, fix $1\leq a < b\leq 3$. We will bound each piece in the sum, treating the $k=0$ and $k>0$ cases separately. (See Steps~1 and 2 below.)
	
	Before we proceed, note that by (the second term in) \eqref{eq:hard.Besov.2}, we also have
	\begin{equation}\label{eq:hard.Besov.3}
	\sup_{k\geq 0} \|P_k^{u_a,u_b} F_{\mfg}^{(ab)} \|_{L^3(\Sigma_t)}  \ls \ep^{\f 32}.
	\end{equation}

	\pfstep{Step~1: The case $k=0$} This is the easy case. Clearly,
	\begin{equation}\label{eq:hard.Besov.easy.1}
	\| P_0 F_{\mfg}^{(ab)} \|_{L^3(\Sigma_t)} + \|\rd_{\ell} (P_0 F_{\mfg}^{(ab)})\|_{L^3(\Sigma_t)} \ls \ep^{\f 32}.
	\end{equation}
	(By definition of $P_0$, $P_0 F_{\mfg}^{(ab)} \in W^{1,3}$ in the $(u_a,u_b)$ coordinates. \eqref{eq:hard.Besov.easy.1} then follows from Corollary~\ref{cor:diffeo}.)
	Using the definition \eqref{eq:hard.Besov.def.of.g}, the bound \eqref{eq:hard.Besov.easy.1}, standard $L^3$ elliptic estimates and then Sobolev embedding, we obtain immediately that
	\begin{equation}\label{eq:hard.Besov.easy.2}
	\max_{i,j}\| \rd^2_{ij} \mfg_0^{(ab)} \|_{L^\i(\Sigma_t)} \ls \ep^{\f 32}.
	\end{equation}

	\pfstep{Step~2: The case $k >0$}
	
	\pfstep{Step~2(a): Extracting information from \eqref{eq:hard.Besov.2} and \eqref{eq:hard.Besov.3}} By \eqref{eq:hard.Besov.2} and the frequency support information of $P_k^{u_a,u_b}  F_{\mfg}^{(ab)}$, we know that 
	\begin{equation}\label{eq:hard.Besov.4}
	\| \srd_{u_a} (P_k^{u_a,u_b} F_{\mfg}^{(ab)}) \|_{L^\i(\Sigma_t)} +  \| \srd_{u_b} (P_k^{u_a,u_b} F_{\mfg}^{(ab)}) \|_{L^\i(\Sigma_t)}  \ls 2^k \| P_k^{u_a,u_b} F_{\mfg}^{(ab)}\|_{L^\i(\Sigma_t)}.
	\end{equation}
	Hence, by Corollary~\ref{cor:diffeo}, 
	\begin{equation}\label{eq:hard.Besov.5}
	\| \rd_i (P_k^{u_a,u_b} F_{\mfg}^{(ab)}) \|_{L^\i(\Sigma_t)}  \ls 2^k \| P_k^{u_a,u_b} F_{\mfg}^{(ab)}\|_{L^\i(\Sigma_t)}.
	\end{equation}
	
	On the other hand, since the Fourier transform of $P_k^{u_a,u_b} F_{\mfg}^{(ab)}$ is by definition supported away from $0$, we can introduce a partition of unity in the angular Fourier directions to deduce that there exist $\widetilde{F}_{\mfg}^{(ab)}$ and $\widetilde{\widetilde{F}}_{\mfg}^{(ab)}$ such that
	\begin{equation}\label{eq:hard.Besov.6}
	P_k^{u_a,u_b} F_{\mfg}^{(ab)} = \srd_{u_a} \widetilde{F}_{\mfg}^{(ab)} + \srd_{u_b} \widetilde{\widetilde{F}}_{\mfg}^{(ab)}.
	\end{equation}
	Moreover, the frequency support of $P_k^{u_a,u_b} F_{\mfg}^{(ab)}$ implies that $\widetilde{F}_{\mfg}^{(ab)}$ and $\widetilde{\widetilde{F}}_{\mfg}^{(ab)}$ can be chosen so that
	\begin{equation}\label{eq:hard.Besov.6.1}
	\| \widetilde{F}_{\mfg}^{(ab)} \|_{L^\i(\Sigma_t)} +  \| \widetilde{\widetilde{F}}_{\mfg}^{(ab)} \|_{L^\i(\Sigma_t)}  \ls 2^{-k} \| P_k^{u_a,u_b} F_{\mfg}^{(ab)}\|_{L^\i(\Sigma_t)}\quad  \| \widetilde{F}_{\mfg}^{(ab)} \|_{L^3(\Sigma_t)} +  \| \widetilde{\widetilde{F}}_{\mfg}^{(ab)} \|_{L^3(\Sigma_t)} \ls 2^{-k} \ep^{\f 32},
	\end{equation}
	where we have used Corollary~\ref{cor:diffeo} (to compare volume forms) and \eqref{eq:hard.Besov.3}.
	
	We now rewrite $\srd_{u_a} \widetilde{F}_{\mfg}^{(ab)} = \sum_{\ell=1}^2 \{\rd_\ell [(\srd_{u_a} x^\ell)  \widetilde{F}_{\mfg}^{(ab)}] - [\rd_\ell (\srd_{u_a} x^\ell) ] \widetilde{F}_{\mfg}^{(ab)} \}$ and similarly for $\srd_{u_b} \widetilde{\widetilde{F}}_{\mfg}^{(ab)}$.
	Therefore, using Corollary~\ref{cor:diffeo} and Proposition~\ref{prop:dx2u}, we deduce from \eqref{eq:hard.Besov.6.1} that 
	\begin{equation}\label{eq:hard.Besov.7}
	P_k^{u_a,u_b} F_{\mfg}^{(ab)} = \sum_{\ell =1,2} \rd_{\ell} H_{\mfg,k,\ell}^{(ab)} + \widetilde{H}_{\mfg,k}^{(ab)}, \quad \| H_{\mfg,k,\ell}^{(ab)} \|_{L^\i(\Sigma_t)} \ls 2^{-k} \| P_k^{u_a,u_b} F_{\mfg}^{(ab)}\|_{L^\i(\Sigma_t)},\quad  \| \widetilde{H}_{\mfg,k}^{(ab)} \|_{L^3(\Sigma_t)}  \ls 2^{-k} \ep^{\f 32} .
	\end{equation}
	
	\pfstep{Step~2(b): Estimating the kernel} We now bound \eqref{eq:hard.Besov.def.of.g}. Differentiating the kernel in Definition~\ref{def:Delta-1}, we have\begin{equation}\label{eq:stupid.kernel}
	\rd^2_{ij} \mfg^{(ab)}_k = \f 1{2\pi} \int_{\mathbb R^2} \frac{\de_{ij} - \f{2(x-y)_i (x-y)_j}{ |x -y|^2 } }{ |x-y|^2 } (P_k^{u_a,u_b} F^{(ab)}_{\mfg})(y)\, \ud y = \f 1{2\pi} \int_{\mathbb R^2} \rd_i (\frac{(x-y)_j}{ |x-y|^2 }) (P_k^{u_a,u_b} F^{(ab)}_{\mfg})(y)\, \ud y.
	\end{equation}
	
	We estimate separately the contributions from $|x-y|\leq 2^{-k}$ and $|x-y|\geq 2^{-k}$. For $|x-y|\leq 2^{-k}$, we use the second representation in \eqref{eq:stupid.kernel}, integrate by parts and use \eqref{eq:hard.Besov.5}, 
	\begin{equation}\label{eq:hard.Besov.8}
	\begin{split}
	&\:  \left| \int_{\{y\in \mathbb R^2:|x-y|\leq 2^{-k}\} } \rd_i (\frac{(x-y)_j}{ |x-y|^2 }) (P_k^{u_a,u_b} F^{(ab)}_{\mfg})(y)\, \ud y \right| \\
	\ls &\: \int_{\{y\in \mathbb R^2:|x-y|\leq 2^{-k}\} }  \f 1{ |x-y| }  |\rd_i(P_k^{u_a,u_b} F^{(ab)}_{\mfg})| (y)\, \ud y +  \| P_k^{u_a,u_b} F^{(ab)}_{\mfg} \|_{L^\i(\Sigma_t)} \\
	\ls &\: 2^{-k}  \|\rd_i(P_k^{u_a,u_b} F^{(ab)}_{\mfg})\|_{L^\i(\Sigma_t)} +  \| P_k^{u_a,u_b} F^{(ab)}_{\mfg} \|_{L^\i(\Sigma_t)}  \ls \| P_k^{u_a,u_b} F_{\mfg}^{(ab)}\|_{L^\i(\Sigma_t)}.
	\end{split}
	\end{equation}
	For $|x-y| \geq 2^{-k}$, we use the first representation in \eqref{eq:stupid.kernel} together with \eqref{eq:hard.Besov.7}. More precisely, we integrate by parts (for the $\rd_{\ell} H_{\mfg,k,\ell}^{(ab)}$ terms), apply Young's inequality and use the bounds in \eqref{eq:hard.Besov.7} to obtain
	\begin{equation}\label{eq:hard.Besov.9}
	\begin{split}
	&\: \left| \int_{\{y\in \mathbb R^2: |x-y|\geq 2^{-k}\} } \frac{\de_{ij} - \f{2(x-y)_i (x-y)_j}{ |x -y|^2 } }{ |x-y|^2 } (P_k^{u_a,u_b} F^{(ab)}_{\mfg})(y)\, \ud y \right| \\
	= &\: \left| \int_{\{y\in \mathbb R^2: |x-y|\geq 2^{-k}\} } \frac{\de_{ij} - \f{2(x-y)_i (x-y)_j}{ |x -y|^2 } }{ |x-y|^2 } (\sum_{\ell =1,2} \rd_{\ell} H_{\mfg,k,\ell}^{(ab)} + \widetilde{H}_{\mfg,k}^{(ab)})(y)\, \ud y \right| \\
	\ls &\: \sum_{\ell=1,2} (\int_{\{y\in \mathbb R^2: |x-y|\geq 2^{-k}\} } \f{|H_{\mfg,k,\ell}^{(ab)}|(y)}{|x-y|^3}  \,\ud y + 2^{k} \|H_{\mfg,k,\ell}^{(ab)}\|_{L^\i(\Sigma_t)} ) \\
	&\: + \int_{\{y\in \mathbb R^2: |x-y|\geq 2^{-k}\} } \f{|\widetilde{H}_{\mfg,k}^{(ab)}|(y)}{|x-y|^2}  \,\ud y \\
	\ls &\: (\int_{\{z\in \mathbb R^2: |z|\geq 2^{-k}\}} \f{1}{|z|^3} \,\ud z + 2^{k} ) (\sum_{\ell = 1,2} \|H_{\mfg,k,\ell}^{(ab)} \|_{L^\i(\Sigma_t)}) + (\int_{\{z\in \mathbb R^2: |z|\geq 2^{-k}\}} \f{1}{|z|^3} \,\ud z)^{\f 23} \|\widetilde{H}_{\mfg,k}^{(ab)}\|_{L^3(\Sigma_t)} \\
	\ls &\: 2^{k}  \sum_{\ell = 1,2} \|H_{\mfg,k,\ell}^{(ab)}\|_{L^\i(\Sigma_t)} + 2^{\f{2k}{3}}  \|\widetilde{H}_{\mfg,k}^{(ab)}\|_{L^3(\Sigma_t)} \ls \| P_k^{u_a,u_b} F_{\mfg}^{(ab)}\|_{L^\i(\Sigma_t)} + 2^{-\f k3} \ep^{\f 32}.
	\end{split}
	\end{equation}
	
	Combining \eqref{eq:stupid.kernel}, \eqref{eq:hard.Besov.8} and \eqref{eq:hard.Besov.9}, and then using \eqref{eq:hard.Besov.2}, we obtain
	\begin{equation}\label{eq:hard.Besov.10}
	\max_{i,j} \sum_{k\geq 1} \|\rd^2_{ij} \mfg^{(ab)}_k\|_{L^\i(\Sigma)} \ls \sum_{k\geq 1} (\| P_k^{u_a,u_b} F_{\mfg}^{(ab)}\|_{L^\i(\Sigma_t)} + 2^{-\f k3} \ep^{\f 32} ) \ls \ep^{\f 32}.
	\end{equation}
	
	Finally, combining \eqref{eq:hard.Besov.10} with \eqref{eq:hard.Besov.easy.2}, we obtain
	\begin{equation}
	\|\rd^2_{ij} \mfg \|_{L^\i(\Sigma)} \ls \max_{i,j,a,b} \sum_{k\geq 0} \|\rd^2_{ij} \mfg^{(ab)}_k\|_{L^\i(\Sigma)} \ls \ep^{\f 32}.
	\end{equation}
\end{proof}

We now improve Proposition~\ref{prop:elliptic.Besov} to obtain some decaying weights at infinity. This is much easier given Proposition~\ref{prop:elliptic.Besov} since we only need to improve the weights in regions away from the support of $\phi$.
\begin{proposition}\label{prop:elliptic.Besov.weighted}
The following estimates hold for all $t\in [0,T_B)$:
	\begin{equation} \label{secondderivative.g.estimate.weighted}
	\sum_{\mfg \in \{N,\, \bt^l,\, \gamma\}} \max_{i,j} \|\rd^2_{ij} \mfg \|_{L^\i_{2-\f \alp 2}(\Sigma_t)} \ls \ep^{\f 32}.
	\end{equation}
\end{proposition}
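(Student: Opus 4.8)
The plan is to deduce the weighted bound from the already-established unweighted bound of Proposition~\ref{prop:elliptic.Besov}, exploiting that away from $\mathrm{supp}(\phi)$ the Poisson equations $\eqref{Nellipticequation}$--$\eqref{betaellipticequation}$ for $\mfg$ have rapidly decaying right-hand sides. For $x$ in any fixed compact set (say $|x|\leq 6R$) the weight $\la x\ra^{2-\f\alp2}$ is $O(1)$, so Proposition~\ref{prop:elliptic.Besov} already gives the claim there; from now on fix $|x|\geq 6R$. Write, via $\eqref{eq:metric.Delta-1.def}$, $\mfg = \Delta^{-1}(F_1 + F_2)$ with $F_1 := \varpi\,\Delta\mfg$ and $F_2 := (1-\varpi)\Delta\mfg$, where $\varpi$ is as in Definition~\ref{def:varpi}. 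Then $F_1$ is supported in $B(0,3R)$ with $\|F_1\|_{L^1(\Sigma_t)}\ls\ep^{\f32}$ (e.g.\ from the $L^3$ bound of Lemma~\ref{lem:decomposition.into.diff.Besov}), so for $|x|\geq 6R$ and $y\in B(0,3R)$ we have $|x-y|\geq\f{|x|}{2}$ and the Hessian of the Newtonian kernel is $\ls|x-y|^{-2}\ls\la x\ra^{-2}$, whence $|\rd^2_{ij}\Delta^{-1}F_1(x)|\ls\la x\ra^{-2}\|F_1\|_{L^1(\Sigma_t)}\ls\ep^{\f32}\la x\ra^{-2-0}$. It remains to treat the $F_2$-contribution.

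Since $\mathrm{supp}(\phi)\subseteq B(0,R)$ and the logarithmic terms in $\gamma,N$ are harmonic for $|y|\geq 2$, on $\{|y|\geq 2R\}$ the functions $F_2$ coincide with the purely metric nonlinearities on the right-hand sides of $\eqref{Nellipticequation}$--$\eqref{betaellipticequation}$. Using the improved metric bounds of Proposition~\ref{prop:elliptic.easy} --- in particular that $|\rd_x\mfg(y)|\ls\ep^{\f32}\la y\ra^{-1}$ (dominated by the $\la y\ra^{-1}$ logarithmic tail for $\gamma$, $N$) while $|\mathfrak L\bt(y)|\ls\ep^{\f32}\la y\ra^{-2+\alp}$ (no logarithmic tail for $\bt$) --- together with $\eqref{eq:def.Omega.g}$, one finds for $|y|\geq 2R$ that $|\Delta\bt^j(y)|\ls\ep^{3}\la y\ra^{-3+\alp}$ and $|\Delta\gamma(y)|+|\Delta N(y)|\ls\ep^{3}\la y\ra^{-4+3\alp}$; hence $|F_2(y)|\ls\ep^{\f32}\la y\ra^{-3+\alp}$ for all $\mfg$. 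Differentiating once more and bounding the resulting $\rd^2_x\mfg$ factors by the \emph{unweighted} Hessian bound $\|\rd^2_x\mfg\|_{L^\infty(\Sigma_t)}\ls\ep^{\f32}$ of Proposition~\ref{prop:elliptic.Besov}, one similarly obtains $|\nabla F_2(y)|\ls\ep^{\f32}\la y\ra^{-1}$ for $|y|\geq 2R$. Note $F_2\in W^{1,\infty}$ genuinely (not merely in some adapted coordinates), which is the reason this region is easier than in Proposition~\ref{prop:elliptic.Besov}.

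Finally, introduce a smooth dyadic partition $1=\sum_{m\geq m_R}\chi_m$ on $\{|y|\geq 2R\}$ with $\chi_m$ supported in $\{2^{m-1}\leq|y|\leq 2^{m+1}\}$ and $m_R\sim\log_2 R$, so that $F_2=\sum_m F_2\chi_m$ with $\|F_2\chi_m\|_{L^\infty(\Sigma_t)}\ls\ep^{\f32}2^{m(-3+\alp)}$, $\|F_2\chi_m\|_{L^1(\Sigma_t)}\ls\ep^{\f32}2^{m(-1+\alp)}$ and $\|\nabla(F_2\chi_m)\|_{L^\infty(\Sigma_t)}\ls\ep^{\f32}2^{-m}$. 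Fix $x$ with $|x|\sim 2^{m_0}$. For $m\leq m_0-2$ or $m\geq m_0+2$ the point $x$ lies outside $\mathrm{supp}(\chi_m)$ at distance $\gtrsim\max(|x|,2^m)$, so $|\rd^2_{ij}\Delta^{-1}(F_2\chi_m)(x)|\ls\max(|x|,2^m)^{-2}\|F_2\chi_m\|_{L^1(\Sigma_t)}$, and summing the two resulting geometric series gives a total $\ls\ep^{\f32}(\la x\ra^{-2}+\la x\ra^{-3+\alp})\ls\ep^{\f32}\la x\ra^{-2+\f\alp2}$. For the $O(1)$ remaining indices with $|m-m_0|\leq1$ one must confront the Calder\'on--Zygmund endpoint; here we use the elementary Besov bound $\|F_2\chi_m\|_{B^0_{\infty,1}}\ls\|F_2\chi_m\|_{L^\infty(\Sigma_t)}\big(1+\log_+\tfrac{\|\nabla(F_2\chi_m)\|_{L^\infty(\Sigma_t)}}{\|F_2\chi_m\|_{L^\infty(\Sigma_t)}}\big)\ls\ep^{\f32}\,m\,2^{m(-3+\alp)}$ together with the standard boundedness of $\rd_i\rd_j\Delta^{-1}$ from $B^0_{\infty,1}$ to $L^\infty$ (alternatively, re-run the kernel argument of Proposition~\ref{prop:elliptic.Besov} directly on the nice compactly supported piece $F_2\chi_m$), to get $|\rd^2_{ij}\Delta^{-1}(F_2\chi_m)(x)|\ls\ep^{\f32}\,m\,2^{m(-3+\alp)}\ls\ep^{\f32}\la x\ra^{-2+\f\alp2}$ since $m\,2^{-m(1-\f\alp2)}\ls1$. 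Adding the $F_1$, the far-$m$, and the near-$m$ contributions yields $|\rd^2_{ij}\mfg(x)|\ls\ep^{\f32}\la x\ra^{-2+\f\alp2}$ for $|x|\geq 6R$, which combined with Proposition~\ref{prop:elliptic.Besov} on $|x|\leq 6R$ gives $\eqref{secondderivative.g.estimate.weighted}$. The main obstacle is precisely this last, near-diagonal, piece: the $L^\infty$ endpoint of the second-order elliptic estimate fails, and it is resolved by the logarithmic Besov estimate above, which is affordable thanks to the genuine $W^{1,\infty}$ control of $F_2$ away from $\mathrm{supp}(\phi)$.
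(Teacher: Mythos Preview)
Your argument is correct but follows a genuinely different route from the paper's. Both proofs begin by invoking Proposition~\ref{prop:elliptic.Besov} to reduce to large $|x|$ and then exploit that the scalar-field sources vanish outside $B(0,R)$; the difference is in how the remaining purely-metric tail is handled.

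The paper \emph{avoids the $L^\infty$ endpoint altogether} by taking one extra spatial derivative: it applies Proposition~\ref{prop:basic.elliptic.2} (the weighted $L^4$ elliptic estimate) to $\rd_l\big((1-\varpi)\mfg\big)$, using that $\Delta\rd_l\big((1-\varpi)\mfg\big)$ lies in $L^4_{\frac52-\frac\alp2}$ with zero mean (the $\phi$-terms drop, and the metric terms $\rd_l(\f{e^{2\gamma}}N|\mathfrak L\bt|^2)$, $\rd_l(\rd_x\gamma\,\rd_x\bt)$, $\rd_l(\f{\rd_xN\,\rd_x\bt}N)$ have enough decay thanks to \eqref{eq:d2g.L4} and \eqref{eq:d2g.Linfty}). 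This gives $\rd_l\big((1-\varpi)\mfg\big)\in W^{2,4}_{\frac12-\frac\alp2}$, and Sobolev embedding (Proposition~\ref{prop:Sobolev.weighted}) yields the weighted $L^\infty$ bound on $\rd^2_{ij}\mfg$.

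Your approach instead confronts the endpoint directly: you split $\Delta\mfg$ as $F_1+F_2$, treat $F_1$ and the off-diagonal dyadic pieces of $F_2$ by direct kernel bounds, and handle the near-diagonal pieces via the logarithmic interpolation $\|f\|_{B^0_{\infty,1}}\ls\|f\|_{L^\infty}\bigl(1+\log_+\f{\|\nabla f\|_{L^\infty}}{\|f\|_{L^\infty}}\bigr)$, which is affordable because the unweighted bound from Proposition~\ref{prop:elliptic.Besov} gives honest $W^{1,\infty}$ control of $F_2$ on each annulus. The paper's route is shorter and uses only ingredients already packaged in Proposition~\ref{prop:basic.elliptic.2}; yours is more explicit about \emph{why} the weighted estimate is soft once the unweighted one is known, and in particular makes transparent that the gain at infinity comes entirely from the $\la y\ra^{-3+\alp}$ pointwise decay of the metric sources.
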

\begin{proof}
Using Proposition~\ref{prop:elliptic.Besov}, we only need a bound when $|x| \geq 3R$. By Definition~\ref{def:varpi}, it thus suffices to bound $|\rd^2_{ij}((1-\varpi) \mfg) |$. In fact, using Sobolev embedding (1(a) of Proposition~\ref{prop:Sobolev.weighted}), it in turn suffices to show
\begin{equation}\label{eq:elliptic.Besov.weighted.goal}
\max_{i,j} \|\rd^2_{ij} (1-\varpi) \mfg \|_{L^4_{\f 32-\f \alp 2}(\Sigma_t)}+ \max_{i,j,l} \| \rd^3_{ijl}((1-\varpi) \mfg) \|_{L^4_{\f 52 - \f \alp 2}(\Sigma_t)} \ls \ep^{\f 32}.
\end{equation}

The key is now to derive an equation $\Delta \rd_l ((1-\varpi) \mfg)$, and use the fact $\mathrm{supp}(\phi) \subseteq B(0,R)$, which guarantees that there is no scalar field contribution after multiplying by the cutoff $1-\varpi$.

We now commute the derivatives with the cutoff. Notice that when at least one derivative falls on $\varpi$, we can put as much $\la x\ra^{-1}$ weights as we need. Thus, we obtain the pointwise bound
\begin{equation}\label{eq:Delta.d.1-pi.mfg}
 |\Delta \rd_l ((1-\varpi) \mfg)| \ls \la x \ra^{-10} (|\mfg| + |\rd_x \mfg| + |\rd_x \rd_x \mfg|) + (1-\varpi)|\Delta \rd_l \mfg|.
 \end{equation}
Using the estimates in \eqref{eq:d2g.L4}, it follows that 
\begin{equation}\label{eq:Delta.d.1-pi.mfg.1}
\sum_{\mfg \in \{\gamma, \bt^i, N\} } \| \la x \ra^{-10} (|\mfg| + |\rd_x \mfg| + |\rd_x \rd_x \mfg|) \|_{L^4_{\f 72 -\alp}(\Sigma_t)} \ls \ep^{\f 32}.
\end{equation}
We now consider $(1-\varpi)|\Delta \rd_l \mfg|$. For this, we recall \eqref{Nellipticequation}--\eqref{betaellipticequation}. Notice that the scalar field terms drop out since $\mathrm{supp}(\phi) \cap \mathrm{supp}(1-\varpi) = \emptyset$. Hence, we only need to control the derivatives of $\f{e^{2\gamma}}{N}|\mathfrak L \bt|^2$, $\rd_x \gamma \rd_x \bt$ and $\f{\rd_x N \rd_x \bt}{N}$. These terms can be controlled in a similar manner as \eqref{eq:dbt.dbt.L4}, \eqref{eq:dbt.dgamma.L4} and \eqref{eq:dbt.dN.L4}, except that since we have an additional $\rd_l$ derivative, we control these terms also using \eqref{eq:d2g.L4}, and get an additional $\la x\ra^{-1}$ weight. In other words,
\begin{equation}\label{eq:Delta.d.1-pi.mfg.2}
\begin{split}
&\: \sum_{\mfg \in \{\gamma, \bt^i, N\}} \| (1-\varpi) \Delta \rd_l \mfg \|_{L^4_{\f 72 -\alp}(\Sigma_t)} \\
\ls &\: \| \rd_l (\f{e^{2\gamma}}{N}|\mathfrak L \bt|^2) \|_{L^4_{\f 72 -\alp}(\Sigma_t)} + \| \rd_l (\rd_x \gamma \rd_x \bt) \|_{L^4_{\f 72 -\alp}(\Sigma_t)} + \| \rd_l (\f{\rd_x N \rd_x \bt}{N}) \|_{L^4_{\f 72 -\alp}(\Sigma_t)} \ls \ep^{\f 52}.
\end{split}
\end{equation}
Notice that we also have $\int_{\Sigma_t} \Delta \rd_i ((1-\varpi) \mfg)\, dx^1\, dx^2 = 0$. (This can be proven by noting that $\Delta \rd_i ((1-\varpi) \mfg)$ is an exact divergence, and then using the compact support of $\phi$ together with the $x$-decay given by \eqref{eq:d2g.Linfty}.) Hence, by Proposition~\ref{prop:basic.elliptic.2} (with $p =4$, $\sigma = \f 12 - \f \alp 2$) and the estimates in \eqref{eq:Delta.d.1-pi.mfg}, \eqref{eq:Delta.d.1-pi.mfg.1} and \eqref{eq:Delta.d.1-pi.mfg.2}, we obtain 
\begin{equation}\label{eq:elliptic.Besov.weighted.goal.almost}
\| \rd_x ( (1-\varpi) g) \|_{W^{2,4}_{\f 12 - \f \alp 2}(\Sigma_t)} \ls \ep^{\f 32}.
\end{equation}
In particular, \eqref{eq:elliptic.Besov.weighted.goal.almost} implies \eqref{eq:elliptic.Besov.weighted.goal}. \qedhere
\end{proof}

\subsection{Elliptic estimates for $\rd_t \mfg$}\label{sec:dtg.elliptic}

We now turn to the elliptic estimates for $\rd_t\mfg$ and its spatial derivatives. The main estimate will be proven in Proposition~\ref{prop:elliptic.dtg}. These estimates should be compared to Proposition~\ref{prop:elliptic.easy} and \ref{prop:elliptic.Besov.weighted}. Notice however, that the estimate is weak: while we control $\rd^2_{ij} \mfg$ in a (weighted) $L^\i(\Sigma_t)$ space, we only prove that $\rd^2_{it} \mfg$ belongs to a (weighted) $L^{\f{2}{s'-s''}}$ space\footnote{In fact, one can replace $\f{2}{s'-s''}$ with an arbitrarily large $p <\infty$ (as long as one takes $\ep$ smaller). (Note, however, that our argument does not give an estimate for $p=\infty$.) The particular estimate we prove here is sufficient for our later applications. }.

As we explained in Section~\ref{sec:intro.elliptic} in the introduction, the main difficulty is that after differentiating the elliptic equations \eqref{Nellipticequation}--\eqref{betaellipticequation}, we have terms the involve second derivatives of $\tphi$, which by \eqref{tphiH2bootstrap} may appear to have $L^2(\Sigma_t)$ norm of size $\de^{-\f 12}$. These terms in particular require a careful analysis using the transversality of the different waves. We will first control the inhomogeneous terms in the elliptic equations for $\rd_t \mfg$: in \textbf{Section~\ref{sec:metric.inho.1}}, we carry out the more straightforward bounds, and the estimates corresponding to the interaction of the different waves are proven in \textbf{Section~\ref{sec:metric.inho.2}}. The main weighted $W^{1,\f{2}{s'-s''}}$ estimates will then be proven in \textbf{Section~\ref{sec:weighted.rdtg}}.

\subsubsection{Estimates for inhomogeneous terms I: the easy terms}\label{sec:metric.inho.1}

\begin{lemma}\label{lem:rodnianski.trick.no.need}
	Let $\Omg(\mfg)$ be as in \eqref{eq:def.Omega.g}. Then the following estimate holds for all $t\in [0,T_B)$:
	$$\|\rd_t [ \Omega(\mfg) (\rd_\alp \phi_{reg}) (\rd_\sigma \phi_{reg}) ] \|_{L^2(\Sigma_t)} \ls \ep^{\f 32}.$$
\end{lemma}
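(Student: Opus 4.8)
The estimate is a ``soft'' one: the only structural input needed is that $\phi_{reg}$ is the regular part of the scalar field and, in particular, is supported in $B(0,R)$ on each $\Sigma_t$ by Lemma~\ref{lem:support}. The plan is to expand the $\rd_t$ derivative by the Leibniz rule,
\begin{align*}
\rd_t [ \Omega(\mfg) (\rd_\alp \phi_{reg}) (\rd_\sigma \phi_{reg}) ] = &\: (\rd_t\Omega(\mfg)) (\rd_\alp \phi_{reg}) (\rd_\sigma \phi_{reg}) \\
&\: + \Omega(\mfg) (\rd_t\rd_\alp \phi_{reg}) (\rd_\sigma \phi_{reg}) + \Omega(\mfg) (\rd_\alp \phi_{reg}) (\rd_t\rd_\sigma \phi_{reg}),
\end{align*}
and then estimate each term separately in $L^2(\Sigma_t)$. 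Because $\phi_{reg}$ is supported in $B(0,R)$, all $\la x\ra$ weights in \eqref{eq:def.Omega.g} are harmless: on the support of the integrand we may use $|\Omega(\mfg)|\ls 1$ and $|\rd_t\Omega(\mfg)|\ls \ep^{\f 54}$.

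For the first term I would pair $L^2\times L^\infty$ (placing $\rd_\sigma\phi_{reg}$ in $L^2$ and $\rd_\alp\phi_{reg}$ in $L^\infty$), using \eqref{BA:rphi} for $\|\rd\phi_{reg}\|_{L^2(\Sigma_t)}\ls \ep^{\f 34}$, \eqref{BA:Li} for $\|\rd\phi_{reg}\|_{L^\infty(\Sigma_t)}\ls \ep^{\f 34}$, and $|\rd_t\Omega(\mfg)|\ls\ep^{\f 54}$ on $B(0,R)$; this gives a bound of size $\ep^{\f 54}\cdot\ep^{\f 34}\cdot\ep^{\f 34}=\ep^{\f{11}4}\ll\ep^{\f 32}$. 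For the second and third (symmetric) terms I would again pair $L^2\times L^\infty$: the factor $\rd_t\rd_\alp\phi_{reg}$ is a genuine second-order spacetime derivative of $\phi_{reg}$, controlled in $L^2(\Sigma_t)$ by $\ep^{\f 34}$ via the $\|\partial^2\phi_{reg}\|_{H^{s'}(\Sigma_t)}$ part of \eqref{BA:rphi}, while $\rd_\sigma\phi_{reg}$ is bounded in $L^\infty(\Sigma_t)$ by $\ep^{\f 34}$ via \eqref{BA:Li}; with $|\Omega(\mfg)|\ls 1$ on the support this yields a bound of size $\ep^{\f 32}$. Summing the three contributions gives the claim.

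There is essentially no obstacle here — this is exactly why the terms treated in Section~\ref{sec:metric.inho.1} are the ``easy'' ones. The only point worth flagging is that one genuinely needs the second-derivative control on $\phi_{reg}$ from \eqref{BA:rphi} (not merely first-derivative control), since $\rd_t$ lands on $\rd_\alp\phi_{reg}$; this is precisely why the bootstrap assumption for the regular part includes $\|\partial^2\phi_{reg}\|_{H^{s'}}$. The substantive difficulties arise only when $\phi_{reg}$ is replaced by $\tphi$, whose second derivatives are only $O(\ep\de^{-\f 12})$ in $L^2$, forcing one to exploit the transversality of the waves; those are deferred to Section~\ref{sec:metric.inho.2}.
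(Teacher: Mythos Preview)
Your proposal is correct and is precisely the argument the paper has in mind: the paper's own proof is the one-line ``This follows from the H\"older inequality, \eqref{eq:def.Omega.g}, compact support of $\rphi$ and the bootstrap assumptions \eqref{BA:rphi} and \eqref{BA:Li},'' which your Leibniz expansion and term-by-term $L^2\times L^\infty$ pairing unpack faithfully.
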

\begin{proof}
	This follows from the H\"older inequality, \eqref{eq:def.Omega.g}, compact support of $\rphi$ and the bootstrap assumptions \eqref{BA:rphi} and \eqref{BA:Li}. \qedhere
\end{proof}

\begin{lemma}\label{lem:rodnianski.trick.1}
	Let $\Omega(\mfg)$ be as in \eqref{eq:def.Omega.g}. Then
	\begin{equation}\label{eq:rodnianski.trick.1.1}
	\|\rd_t [ \Omega(\mfg) (\rd_\alp \tphi) (\rd_\sigma \tphi) ] - \rd_i [ \Omega(\mfg)(N X_k^i +\bt^i) (\rd_\alp \tphi) (\rd_\sigma \tphi)] \|_{L^2(\Sigma_t)} \ls \ep^{\f 32},
	\end{equation}
	and
	\begin{equation}\label{eq:rodnianski.trick.1.2}
	\|\rd_t [ \Omega(\mfg) (\rd_\alp \tphi) (\rd_\sigma \phi_{reg}) ] - \rd_i [ \Omega(\mfg)(N X_k^i +\bt^i) (\rd_\alp \tphi) (\rd_\sigma \phi_{reg})] \|_{L^2(\Sigma_t)} \ls \ep^{\f 32},
	\end{equation}
\end{lemma}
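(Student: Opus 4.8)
The plan is to exploit the decomposition of $\rd_t$ into the good null derivative $L_k$ and a purely spatial derivative, which is the idea sketched in Section~\ref{sec:intro.elliptic}. Concretely, since $X_k$ is tangent to $\Sigma_t$ (so that $X_k = X_k^i\rd_i$ when acting on functions, cf.~\eqref{eq:silly.tangential}), the identity \eqref{eq:partialtLEX} reads $\rd_t = N L_k + \sigma^i\rd_i$ with $\sigma^i := N X_k^i + \bt^i$. Setting $P := \Omega(\mfg)(\rd_\alp\tphi)(\rd_\sigma\tphi)$ in the case of \eqref{eq:rodnianski.trick.1.1} (resp.\ $P := \Omega(\mfg)(\rd_\alp\tphi)(\rd_\sigma\rphi)$ for \eqref{eq:rodnianski.trick.1.2}), I would commute by Leibniz: $\sigma^i\rd_i P = \rd_i(\sigma^i P) - (\rd_i\sigma^i)P$. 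Since $\sigma^i P = \Omega(\mfg)\sigma^i(\rd_\alp\tphi)(\rd_\sigma\tphi)$ is precisely the vector field whose divergence is subtracted in the statement, the matter reduces to proving $\|N L_k P - (\rd_i\sigma^i)P\|_{L^2(\Sigma_t)}\ls\ep^{\f 32}$.

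Next I would dispose of the lower-order term $(\rd_i\sigma^i)P$: expanding $\rd_i\sigma^i = (\rd_i N)X_k^i + N\rd_i X_k^i + \rd_i\bt^i$, the bootstrap assumptions \eqref{eq:BA.g.asymp}--\eqref{eq:BA.g.Li} together with Lemmas~\ref{lem:L.X.E}--\ref{dgeomvflemma} (estimates \eqref{Yibounded} and \eqref{rdYibounded}) bound this quantity by $\ls\ep^{\f 54}$ on $B(0,R)$. Since $P$ is supported in $B(0,R)$ by Lemma~\ref{lem:support}, H\"older's inequality with \eqref{BA:Li} and \eqref{bootstrapsmallnessenergy} (resp.\ \eqref{BA:rphi}) then gives an even smaller bound $\ls \ep^{\f 54}\cdot\ep^{\f 32}$, all $\la x\ra$-weights being irrelevant on the compact support of $\phi$.

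The heart of the argument is the term $N L_k P$. Using Leibniz and the facts that $|N|, |\Omega(\mfg)|\ls 1$ and $|L_k\Omega(\mfg)|\ls\ep^{\f 54}$ on $B(0,R)$ (by \eqref{eq:def.Omega.g} and \eqref{Yibounded}), I would reduce to estimating $\|\Omega(\mfg)(L_k\rd_\alp\tphi)(\rd_\sigma\tphi)\|_{L^2(\Sigma_t)}$ and its symmetric counterpart. Here the good-derivative structure enters: $L_k\rd_\alp\tphi$ is a component of $L_k(\rd\tphi)$, so \eqref{bootstrapsmallnessenergy} gives $\|L_k\rd_\alp\tphi\|_{L^2(\Sigma_t)}\ls\ep^{\f 34}$, and pairing with $\|\rd_\sigma\tphi\|_{L^\infty(\Sigma_t)}\ls\ep^{\f 34}$ from \eqref{BA:Li} yields the bound $\ep^{\f 32}$. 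For \eqref{eq:rodnianski.trick.1.2}, the only new case is when $L_k$ falls on $\rphi$; there I would not invoke a good-derivative bound but simply write $|L_k\rd_\sigma\rphi|\le|L_k^\beta||\rd^2\rphi|\ls\la x\ra^\ep|\rd^2\rphi|$ and use $\|\rd^2\rphi\|_{L^2(\Sigma_t)}\ls\ep^{\f 34}$ from \eqref{BA:rphi} together with the compact support of $\rphi$. Collecting the pieces proves both estimates.

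I do not expect a serious obstacle: this is the simplest instance of the $\rd_t$-decomposition trick (parallel waves only, no interaction of distinct wavefronts), and every potential $\la x\ra$-weight loss is absorbed by the compact support of $\phi$ (Lemma~\ref{lem:support}). The one point that must be stated carefully is the algebraic observation that $\rd_t$ carries no component transverse to $\Sigma_t$ beyond the $L_k$ piece, so that the remaining part of $\rd_t$ is a genuine spatial derivative which can be put in the total-divergence term $\rd_i(\Omega(\mfg)\sigma^i(\rd_\alp\tphi)(\rd_\sigma\tphi))$; the genuinely delicate interaction-of-transversal-waves case is deferred to Section~\ref{sec:metric.inho.2}.
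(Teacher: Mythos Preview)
Your proposal is correct and follows essentially the same route as the paper: decompose $\rd_t = NL_k + (NX_k^i+\bt^i)\rd_i$ via \eqref{eq:partialtLEX}, move the spatial part into the total divergence by Leibniz, and bound the two remaining pieces $NL_kP$ and $(\rd_i\sigma^i)P$ using the bootstrap assumptions \eqref{eq:BA.g.Li}, \eqref{bootstrapsmallnessenergy}, \eqref{BA:Li} (and \eqref{BA:rphi} for the $\rphi$ case) together with the compact support of $\phi$. The paper's proof is organized in the same way, labeling the two pieces $I$ and $II$ in \eqref{eq:rodnianski.trick.1.3}.
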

\begin{proof}
	\pfstep{Step~1: Proof of \eqref{eq:rodnianski.trick.1.1}} By \eqref{nXEL} and \eqref{def:e0}, $L_k = \n - X_k$, $\n = \f 1N(\rd_t -\bt^i \rd_i)$. Therefore, we can decompose $\rd_t$ as follows:
	\begin{equation}\label{eq:dt.decompose}
	\rd_t = N\n + \bt^i \rd_i = N L_k + (N X_k^i +\bt^i)\rd_i.
	\end{equation}
	
	Now we express the term on LHS of \eqref{eq:rodnianski.trick.1.1} using \eqref{eq:dt.decompose} as follows:
	\begin{equation}\label{eq:rodnianski.trick.1.3}
	\begin{split}
	&\: \rd_t [ \Omega(\mfg) (\rd_\alp \tphi) (\rd_\sigma \tphi) ] -\rd_i  [\Omg(\mfg)(N X_k^i +\bt^i) (\rd_\alp \tphi) (\rd_\sigma \tphi)] \\
	=&\: \underbrace{N L_k [ \Omega(\mfg) (\rd_\alp \tphi) (\rd_\sigma \tphi) ]}_{=: I} - \underbrace{\{\rd_i [(N X_k^i +\bt^i)]\} \Omega(\mfg) (\rd_\alp \tphi) (\rd_\sigma \tphi)}_{=: II}.
	\end{split}
	\end{equation}
	
	To estimate term $I$ in \eqref{eq:rodnianski.trick.1.3}, we use $\mathrm{supp}(\tphi)\subseteq B(0,R)$, \eqref{eq:def.Omega.g}, Lemma~\ref{lem:L.X.E}, and the bootstrap assumptions \eqref{eq:BA.g.Li}, \eqref{bootstrapsmallnessenergy} and \eqref{BA:Li} to obtain
	\begin{equation}\label{eq:rodnianski.trick.1.3.1}
	\begin{split}
	\| I \|_{L^2(\Sigma_t)} 
	\ls &\: \| N L_k (\Omg (\mfg))\|_{L^\i(\Sigma_t\cap B(0,R))}  \|\rd \tphi\|_{L^\i(\Sigma_t)}^2 \\
	&\: + \| N \Omg(\mfg)\|_{L^\i(\Sigma_t\cap B(0,R)}  \|L_k \rd \tphi\|_{L^2(\Sigma_t)} \|\rd\tphi\|_{L^\i(\Sigma_t)} \ls \ep^{\f 32}.
	\end{split}
	\end{equation}
	
	The term $II$ can be treated similarly. Using $\mathrm{supp}(\tphi)\subseteq B(0,R)$, \eqref{eq:def.Omega.g}, Lemmas~\ref{lem:L.X.E}, \ref{dgeomvflemma} and the bootstrap assumptions \eqref{eq:BA.g.Li} and \eqref{BA:Li}, we obtain
	\begin{equation}\label{eq:rodnianski.trick.1.3.2}
	\begin{split}
	\| II \|_{L^2(\Sigma_t)} \ls \| \Omega(\mfg) \rd_i [(N X_k^i +\bt^i)]\} \|_{L^\i(\Sigma_t\cap B(0,R))} \|\rd \tphi\|_{L^\i(\Sigma_t)}^2 \ls \ep^{\f 32}.
	\end{split}
	\end{equation}
	
	Combining \eqref{eq:rodnianski.trick.1.3}--\eqref{eq:rodnianski.trick.1.3.2} yields \eqref{eq:rodnianski.trick.1.1}.
	
	\pfstep{Step~2: Proof of \eqref{eq:rodnianski.trick.1.2}} The estimate \eqref{eq:rodnianski.trick.1.2} can be proven in a similar manner. The main only difference is that $\|N L_k [ \Omega(\mfg) (\rd_\alp \tphi) (\rd_\sigma \phi_{reg}) ]\|_{L^2(\Sigma_t)} \ls \ep^{\f 32}$ (c.f.~term $I$ in \eqref{eq:rodnianski.trick.1.3}) has to be proved slightly differently and we use additionally the bootstrap assumption \eqref{BA:rphi} for $\rphi$. The rest of the argument proceeds similarly. \qedhere
\end{proof}

\begin{lemma}\label{lem:dt.dg.dg.est}
The following estimate holds for all $t\in [0,T_B)$:
	\begin{equation}\label{eq:dt.db.db.est}
	\|\rd_t \{ \f{e^{2\gamma}}{N^2} | \mathfrak L \bt |^2 \}\|_{L^{\f{2}{1+s'-s''}}_{-s'+s''-2\alp+2}(\Sigma_t)} \ls \ep^{\f {15}4} + \ep^{\f 54}\sum_{\widetilde{\mfg} \in \{\widetilde{\gamma}, \bt^i,\widetilde{N}\}} \|\rd_t \widetilde{\mfg}\|_{W^{1,\f{2}{1+s'-s''}}_{-s'+s''-2\alp}(\Sigma_t)},
	\end{equation}
	\begin{equation}\label{eq:dt.dN.db.est}
	\|\rd_t \{  \rd_i (\log (Ne^{-\gamma})) (\mathfrak L \bt)_{jl} \}\|_{L^{\f{2}{1+s'-s''}}_{-s'+s''-2\alp+2}(\Sigma_t)} \ls \ep^{\f {15}4} + \ep^{\f 54}\sum_{\widetilde{\mfg} \in \{\widetilde{\gamma}, \bt^i,\widetilde{N}\}} \|\rd_t \widetilde{\mfg}\|_{W^{1,\f{2}{1+s'-s''}}_{-s'+s''-2\alp}(\Sigma_t)}. 
	\end{equation}
\end{lemma}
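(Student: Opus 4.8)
The two estimates \eqref{eq:dt.db.db.est} and \eqref{eq:dt.dN.db.est} concern the $\rd_t$-derivative of quadratic expressions in which both factors are (spatial derivatives of) metric components. The key point is that unlike the scalar-field terms, these do not involve the singular waves $\tphi$, so there is no $\de^{-\f 12}$ loss and no need for the transversality/interaction decomposition used in Section~\ref{sec:metric.inho.2}. The plan is to expand the $\rd_t$ derivative by the Leibniz rule, and then bound each resulting product by H\"older's inequality, placing the ``undifferentiated-in-$t$'' factor in a weighted $L^\i$ or $L^4$ space (using Propositions~\ref{prop:elliptic.easy} and \ref{prop:elliptic.Besov.weighted}, i.e.~the already-established spatial bounds \eqref{eq:d2g.L4}, \eqref{eq:d2g.Linfty} and the $L^\i_{2-\f\alp 2}$ bound for $\rd^2_x\mfg$), and the $\rd_t$-differentiated factor in the weighted $W^{1,\f{2}{1+s'-s''}}$ space appearing on the RHS. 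The exponent bookkeeping is the only delicate part: since $\f{2}{1+s'-s''} \cdot \f 12 + \f 12 \cdot \dots$ must close, I would use that $\f{2}{1+s'-s''} < 2$, combined with Proposition~\ref{prop:Sobolev.weighted}(2) (which lets one trade $L^p_\sigma \hookleftarrow L^q_{\sigma'}$ for $q\geq p$ with improved weight), to absorb the top-order metric factor into an $L^2$-type or $L^\i$-type norm already controlled by Section~\ref{sec:g.purely.spatial}.

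\textbf{Key steps.} First, for \eqref{eq:dt.db.db.est}, write $\rd_t\{\f{e^{2\gamma}}{N^2}|\mathfrak L\bt|^2\} = (\rd_t \f{e^{2\gamma}}{N^2})|\mathfrak L\bt|^2 + \f{e^{2\gamma}}{N^2}\cdot 2(\mathfrak L\bt)(\mathfrak L\rd_t\bt)$, and schematically $\rd_t\f{e^{2\gamma}}{N^2} = \Omg(\mfg)(\rd_t\widetilde\gamma + \rd_t\widetilde N)$ (modulo the $N_{asymp}$ term, which is handled by \eqref{eq:BA.g.asymp} and is lower order). For the first piece, I would put $\rd_t\widetilde{\mfg}$ in the target $W^{1,\f{2}{1+s'-s''}}$ norm (only the $L^{\f{2}{1+s'-s''}}$ part, so really no derivative needed there — actually better to keep the derivative-free piece and bound $|\mathfrak L\bt|^2$ in $L^\i$ with weight $\la x\ra^{2-2\alp}$ via \eqref{eq:d2g.Linfty}... wait, $\mathfrak L\bt$ is first-order, so $\|\mathfrak L\bt\|_{L^\i_{1-\alp}}\ls \ep^{\f 32}$). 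So $\|(\rd_t\f{e^{2\gamma}}{N^2})|\mathfrak L\bt|^2\|_{L^{\f{2}{1+s'-s''}}_{-s'+s''-2\alp+2}} \ls \|\rd_t\widetilde{\mfg}\|_{L^{\f{2}{1+s'-s''}}_{-s'+s''-2\alp}}\cdot\|\mathfrak L\bt\|^2_{L^\i_1}\ls \ep^{\f 54}\|\rd_t\widetilde\mfg\|_{\cdots}$, using $\ep^{\f 32}\cdot\ep^{\f 32} = \ep^3$ dominated appropriately — actually the weight split is $(-s'+s''-2\alp) = (-s'+s''-2\alp) + 1 + 1 - 2$, consistent with one weight-$1$ factor, one weight-$1$ factor, modulo a harmless $\la x\ra^{\f\alp{10}}$ from $\Omg(\mfg)$. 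For the second piece $\f{e^{2\gamma}}{N^2}(\mathfrak L\bt)(\mathfrak L\rd_t\bt)$, bound $\f{e^{2\gamma}}{N^2}$ by $\la x\ra^{\f\alp{10}}$, $\mathfrak L\bt$ by $\|\bt\|_{W^{1,\i}_{1-\alp}}\ls\ep^{\f 32}$, and $\mathfrak L\rd_t\bt$ by $\|\rd_t\bt\|_{W^{1,\f{2}{1+s'-s''}}_{-s'+s''-2\alp}}$ directly — this is the main term generating the second summand on the RHS. The estimate \eqref{eq:dt.dN.db.est} is entirely analogous, with $\mathfrak L\bt$ replaced by $(\mathfrak L\bt)_{jl}$ and the $\Omg(\mfg)$ factor replaced by $\rd_i(\log(Ne^{-\gamma}))$; expanding $\rd_t$ produces $\rd_i\rd_t(\log(Ne^{-\gamma}))\cdot(\mathfrak L\bt)$ and $\rd_i(\log(Ne^{-\gamma}))\cdot(\mathfrak L\rd_t\bt)$, and the first of these is bounded using $\|\rd_i\rd_t\log(Ne^{-\gamma})\|_{L^{\f 2{1+s'-s''}}_{\cdots}}\ls \|\rd_t\widetilde\mfg\|_{W^{1,\f{2}{1+s'-s''}}_{\cdots}}$, while $\mathfrak L\bt$ is put in $L^\i_{1-\alp}$ again (giving the $\ep^{\f 54}$ prefactor).

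\textbf{Main obstacle.} The genuine difficulty is purely the weight and Lebesgue-exponent accounting: one must check that in each product, the sum of the exponents of $\la x\ra$ in the individual factors matches the target weight $-s'+s''-2\alp+2$ (accounting for the $+2$ shift because the RHS lives at the level of $\Delta\mfg$), and that the Lebesgue exponents satisfy the H\"older relation $\f{1+s'-s''}{2} = \f 12 + \f{s'-s''}{2} \cdot(\text{something})$ — here it helps that $\ep^{\f 32}$ factors to spare make the prefactor $\ep^{\f 54}$ attainable with room, and that the $\ep^{\f{15}{4}}$ term on the RHS simply collects all the products in which \emph{every} factor is an already-controlled spatial metric norm (no $\rd_t$). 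One should also be slightly careful that $\log(Ne^{-\gamma})$ and $\f{e^{2\gamma}}{N^2}$ are smooth functions of $\mfg$ of the type $\Omg(\mfg)$ (which holds since $N$ is bounded away from $0$ by \eqref{eq:BA.g.asymp}–\eqref{eq:BA.g.Li} for $\ep_0$ small), so that \eqref{eq:def.Omega.g} applies to them, and that derivatives landing on $\Omg(\mfg)$ can be re-expressed in terms of $\rd\mfg$. None of this is conceptually hard, but it is the step requiring attention.
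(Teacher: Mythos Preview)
Your approach matches the paper's: Leibniz expansion followed by weighted H\"older, placing the $\rd_t$-differentiated factor in the target $W^{1,\f 2{1+s'-s''}}_{-s'+s''-2\alp}$ norm and the remaining factors in weighted $L^\i$ via \eqref{eq:d2g.Linfty}. One place to be more careful in \eqref{eq:dt.dN.db.est}: your claimed bound $\|\rd_i\rd_t\log(Ne^{-\gamma})\|_{L^{\f 2{1+s'-s''}}_{\cdots}}\ls\|\rd_t\widetilde\mfg\|_{W^{1,\cdots}_{\cdots}}$ is not quite right as stated. Since $\rd_t N_{asymp}\neq 0$ in general, $\rd_i\rd_t N$ contains the piece $(\rd_t N_{asymp})\,\rd_i(\om(|x|)\log|x|)\sim\ep^{\f 54}\la x\ra^{-1}$, which does \emph{not} lie in $L^{\f 2{1+s'-s''}}_{-s'+s''-2\alp+1}$ (that weight index exceeds the critical threshold $s''-s'$). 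The fix, as in the paper, is to place $\rd_i\rd_t N$ in the weaker space $L^{\f 2{1+s'-s''}}_{-s'+s''-2\alp}$ (where the $\la x\ra^{-1}$ piece fits and contributes an additive $\ep^{\f 54}$) and compensate by placing $\mathfrak L\bt$ at its full weight $L^\i_{2-\alp}$; the sum $(-s'+s''-2\alp)+(2-\alp)$ then still closes since the prefactor $\f 1N$ is favorable in the $\la x\ra$ weights. Also, expanding $\rd_i\rd_t\log N$ produces a cubic cross-term $N^{-2}(\rd_i N)(\rd_t N)(\mathfrak L\bt)_{jl}$ that you did not isolate; it feeds into the $\ep^{\f{15}4}$ piece and is handled in the paper via the direct bootstrap bound $\|\rd_t\mfg\|_{L^{\f 2{1+s'-s''}}_{-1-s'+s''-\alp}}\ls\ep^{\f 54}$ (from \eqref{eq:BA.g.asymp}, \eqref{eq:BA.g.Li} and part~2 of Proposition~\ref{prop:Sobolev.weighted}).
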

\begin{proof}
\pfstep{Step~1: Preliminaries} Note that $\rd_t \gamma_{asymp} = 0$ (since $\gamma_{asymp}$ is a constant; see \eqref{eq:metric.decomposition}), and thus both $\rd_t\gamma$ and $\rd_t \bt^i$ does not have a logarithmic growing contribution. Hence,
\begin{equation}\label{eq:dt.g.elliptic.proof.1}
\sum_{\mfg\in \{\gamma, \bt^i\}} \| \rd_t \rd_j \mfg \|_{L^{\f{2}{1+s'-s''}}_{-s'+s''-2\alp+1}(\Sigma_t)} \ls \sum_{\widetilde{\mfg} \in \{\widetilde{\gamma}, \bt^i,\widetilde{N}\}} \|\rd_t \widetilde{\mfg}\|_{W^{1,\f{2}{1+s'-s''}}_{-s'+s''-2\alp}(\Sigma_t)}.
\end{equation}
For $\rd_t N$, the logarithmic terms give a worse decay for large $\la x\ra$, but after using \eqref{eq:BA.g.asymp}, we still have
\begin{equation}\label{eq:dt.g.elliptic.proof.2}
\begin{split}
\|\rd_t \rd_i N\|_{L^{\f{2}{1+s'-s''}}_{-s'+s''-2\alp}(\Sigma_t)} \ls &\: \|\rd_t \rd_i \widetilde{N} \|_{L^{\f{2}{1+s'-s''}}_{-s'+s''-2\alp}(\Sigma_t)}+ |\rd_t N_{asymp}|(t) \|\la x\ra^{-1}\|_{L^{\f{2}{1+s'-s''}}_{-s'+s''-\alp}(\Sigma_t)} \\
\ls &\: \|\rd_t \widetilde{N} \|_{W^{1,\f{2}{1+s'-s''}}_{-s'+s''-2\alp}(\Sigma_t)} + \ep^{\f 54}.
\end{split}
\end{equation}

Additionally, Proposition~\ref{prop:Sobolev.weighted}, \eqref{eq:BA.g.asymp} and \eqref{eq:BA.g.Li} imply that
\begin{equation}\label{eq:dt.g.elliptic.proof.3}
\sum_{\mfg \in \{\gamma, \bt^i, N\}} \|\rd_t \mfg \|_{L^{\f{2}{1+s'-s''}}_{-1-s'+s''-\alp}(\Sigma_t)} \ls \sum_{\mfg \in \{\widetilde{\gamma}, \bt^i, \widetilde{N}\}}  \|\rd_t \widetilde{\mfg} \|_{L^{\i}(\Sigma_t)} + \ep^{\f 54} \|\log (2+|x|) \|_{L^{\f{2}{1+s'-s''}}_{-1-s'+s''-\alp}(\Sigma_t)} \ls \ep^{\f 54}.
\end{equation}

\pfstep{Step~2: Proof of \eqref{eq:dt.db.db.est}} Note that because of the signs of $\gamma_{asymp}$ and $N_{asymp}$, the factors $e^{2\gamma}$ and $\f 1N$ are favorable in terms of the $\la x \ra$ weights. Therefore, by H\"older's inequality, \eqref{eq:dt.g.elliptic.proof.1}, \eqref{eq:dt.g.elliptic.proof.3} and \eqref{eq:BA.g.Li}, we obtain
	\begin{equation*}
	\begin{split}
	&\: \|\rd_t \{ \f{e^{2\gamma}}{N^2} |\mathfrak L \bt|^2 \}\|_{L^{\f{2}{1+s'-s''}}_{-s'+s''-2\alp+2}(\Sigma_t)} \\
	\ls &\: \|\rd_t \mathfrak L \bt\|_{L^{\f{2}{1+s'-s''}}_{-s'+s''-2\alp+1}(\Sigma_t)} \| \mathfrak L \bt\|_{L^\i_{1}(\Sigma_t)} + \| \rd_t (\f{e^{2\gamma}}{N^2}) \|_{L^{\f{2}{1+s'-s''}}_{-1-s'+s''-2\alp}(\Sigma_t)} \|\mathfrak L \bt\|_{L^{\i}_{\f 32}(\Sigma_t)}^2 \\
	\ls &\: \|\rd_t \bt\|_{W^{1,\f{2}{1+s'-s''}}_{-s'+s''-2\alp}(\Sigma_t)} \| \bt\|_{W^{1,\i}_{0}(\Sigma_t)}  + \max_{\mfg\in \{\gamma, N\}} \|\rd_t\mfg\|_{L^{\f{2}{1+s'-s''}}_{-1-s'+s''-2\alp}(\Sigma_t)} \|\bt \|_{W^{1,\i}_{\f 12}(\Sigma_t)}^2\\
	\ls &\: \ep^{\f {15}4} + \ep^{\f 54}\sum_{\widetilde{\mfg} \in \{\widetilde{\gamma}, \bt^i,\widetilde{N}\}} \|\rd_t \widetilde{\mfg}\|_{W^{1,\f{2}{1+s'-s''}}_{-s'+s''-2\alp}(\Sigma_t)}.
	\end{split}
	\end{equation*}
	
\pfstep{Step~3: Proof of \eqref{eq:dt.dN.db.est}} This is similar to \eqref{eq:dt.db.db.est}, except we use also \eqref{eq:dt.g.elliptic.proof.2} and have less room in the weights in the $(\rd_i \rd_t N) (\mathfrak L \bt)_{jl}$ and $(\rd_i N) (\rd_t N) (\mathfrak L \bt)_{jl}$ terms. More precisely, 
	\begin{equation*}
	\begin{split}
	&\: \|\rd_t \{  \rd_i (\log (Ne^{-\gamma})) (\mathfrak L \bt)_{jl} \}\|_{L^{\f{2}{1+s'-s''}}_{-s'+s''-2\alp+2}(\Sigma_t)} \\
	\ls &\: ( \| \rd_i N\|_{L^\i_{1}} + \| \rd_i \gamma \|_{L^\i_{1}(\Sigma_t)} ) \| \rd_t \mathfrak L \bt\|_{L^{\f{2}{1+s'-s''}}_{-s'+s''-2\alp+1}(\Sigma_t)} + \| \rd_i \rd_t N \|_{L^{\f{2}{1+s'-s''}}_{-s'+s''-\alp}(\Sigma_t)} \|\mathfrak L \bt\|_{L^{\i}_{2-\alp}(\Sigma_t)} \\
	&\: +  \|\rd_t N \|_{L^{\f{2}{1+s'-s''}}_{-1-s'+s''-\alp}(\Sigma_t)} \| \rd_i N\|_{L^\i_1(\Sigma_t)} \|\mathfrak L \bt\|_{L^{\i}_{2-\alp}(\Sigma_t)} + \| \rd_i \rd_t \gamma \|_{L^{\f{2}{1+s'-s''}}_{-s'+s''-2\alp+1}(\Sigma_t)}  \|\mathfrak L \bt\|_{L^{\i}_{1}(\Sigma_t)} \\
	\ls &\: \ep^{\f {15}4} + \ep^{\f 54}\sum_{\widetilde{\mfg} \in \{\widetilde{\gamma}, \bt^i,\widetilde{N}\}} \|\rd_t \widetilde{\mfg}\|_{W^{1,\f{2}{1+s'-s''}}_{-s'+s''-2\alp}(\Sigma_t)}.
	\end{split}
	\end{equation*}
\end{proof}

\subsubsection{Estimates for inhomogeneous terms II: the interaction terms}\label{sec:metric.inho.2}

We now analyze the contribution coming from two different waves, say $\tphi$ and $\widetilde{\phi}_j$ (for $k \neq j$). Before we prove our main estimate (Lemma~\ref{lem:rodnianski.trick.2}), we need some preliminary observations making use of the transversality of the two singular zones (see Lemmas~\ref{lem:rodnianski.prelim.1} and \ref{lem:rodnianski.prelim.2}).

Given $k \neq j$, we now construct a polar coordinate system. Let $p \in \Sigma_t$ be the point\footnote{Note that such a point is indeed uniquely defined since $(u_k,u_j)$ forms a coordinate system.} corresponding to $u_k = u_j = 0$, and let $z = (z_1,z_2)$ be the elliptic gauge coordinates of the point $p$. Introduce the polar coordinates $(r,\vartheta)$ be the polar coordinates corresponding to the elliptic gauge coordinate system centered at $(z_1,z_2)$, with (recall our convention $(c_{k1}^\perp, c_{k2}^\perp) = (-c_{k2}, c_{k1})$)
\begin{equation}\label{eq:x.in.terms.of.polar}
x-z = r (\cos \vartheta \begin{bmatrix} c_{k1}^\perp \\  c_{k2}^\perp \end{bmatrix} + \sin \vartheta \begin{bmatrix} c_{k1} \\  c_{k2} \end{bmatrix} ).
\end{equation}
In particular, $(r=1, \vartheta = 0)$ corresponds to $x = z + \begin{bmatrix} c_{k1}^\perp \\  c_{k2}^\perp \end{bmatrix}$ in elliptic gauge coordinates. Using moreover \eqref{partialukcontrol}, one sees that $\{(r,\vartheta): \vartheta = 0\}$ is an approximation of the curve $\{x: u_k(t,x) = 0\}$.

Define $\vartheta_0 \in (-\pi, \pi)$ so that $(r=1, \vartheta = \vartheta_0)$ corresponds to $z + \begin{bmatrix} c_{j1}^\perp \\  c_{j2}^\perp \end{bmatrix}$ in elliptic gauge coordinates (recall the $\begin{bmatrix} c_{j1}^\perp \\  c_{j2}^\perp \end{bmatrix}$ has unit length by \eqref{cnormalization}). In other words, we impose
\begin{equation}\label{eq:th_0.def}
\begin{bmatrix} c_{j1}^\perp \\  c_{j2}^\perp \end{bmatrix} = \cos \vartheta_0 \begin{bmatrix} c_{k1}^\perp \\  c_{k2}^\perp \end{bmatrix} + \sin \vartheta_0 \begin{bmatrix} c_{k1} \\  c_{k2} \end{bmatrix}.
\end{equation}
Note that \eqref{cnormalization} implies $\left| \det \begin{bmatrix} c_{k1} & c_{k1}^\perp \\  c_{k2} & c_{k2}^\perp \end{bmatrix} \right| = 1$. Hence, combining this with \eqref{eq:th_0.def} and using \eqref{cangle2}, we obtain
\begin{equation}\label{eq:transversality.in.angle}
\begin{split}
|\sin\vartheta_0| = \left| \det \begin{bmatrix} 0 & \sin \vartheta_0 \\  1 & \cos \vartheta_0 \end{bmatrix} \right| = \left| \det \begin{bmatrix} c_{k1} & c_{k1}^\perp \\  c_{k2} & c_{k2}^\perp \end{bmatrix} \begin{bmatrix} 0 & \sin \vartheta_0 \\  1 & \cos \vartheta_0 \end{bmatrix} \right| = \left| \det \begin{bmatrix} -c_{k2} & -c_{j2} \\  c_{k1} & c_{j1} \end{bmatrix} \right| \geq \upkappa_0.
\end{split}
\end{equation}

The next lemma shows that the singular region from the point of view of $\tphi$ is localized in the region where $\sin\vartheta$ for the above polar coordinates system.
\begin{lemma}\label{lem:rodnianski.prelim.1}
	For $\ep>0$ sufficiently small, $u_k \notin (-\de,\de)$ in the set $\{(r,\vartheta) : r \geq 16\upkappa_0^{-1}\de,\, |\sin \vartheta| \geq \f{\upkappa_0}{8} \}$.
\end{lemma}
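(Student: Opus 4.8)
The statement is a quantitative localization of the singular zone $S_\de^k = \{u_k \in (-\de,\de)\}$ in the polar coordinates $(r,\vartheta)$ centered at the triple-interaction point $p$ (with elliptic-gauge coordinates $z$). The plan is to compare $u_k$ along a ray $\vartheta = \text{const}$ with its value at $r = 0$, which is $u_k(p) = 0$, using the gradient bound \eqref{partialukcontrol}. First I would write, for a fixed $\vartheta$ and $r>0$,
$$u_k(t,x) = u_k(t,z) + \int_0^r \partial_s \big( u_k(t, z + s\,\hat v(\vartheta)) \big)\, ds = \int_0^r \hat v^i(\vartheta)\, \partial_i u_k(t, z + s\, \hat v(\vartheta))\, ds,$$
where $\hat v(\vartheta) = \cos\vartheta \begin{bmatrix} c_{k1}^\perp \\ c_{k2}^\perp \end{bmatrix} + \sin\vartheta \begin{bmatrix} c_{k1} \\ c_{k2}\end{bmatrix}$ is the unit vector in \eqref{eq:x.in.terms.of.polar}, and I used $u_k(t,z) = 0$ (the point $p$ is characterized by $u_k = u_j = 0$).

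Next I would split $\partial_i u_k = c_{ik} + (\partial_i u_k - c_{ik})$ and use \eqref{partialukcontrol}, which gives $|\partial_i u_k - c_{ik}| \ls \ep^{5/4} \la x\ra^{-1/2} \ls \ep^{5/4}$ pointwise (absorbing the weight since we only care about a fixed, although possibly large, region — more carefully one keeps the weight but it only helps). The main term contributes $\int_0^r \hat v^i(\vartheta) c_{ik}\, ds = r\, (\hat v(\vartheta) \cdot c_k)$. Now $\hat v(\vartheta)\cdot c_k = \cos\vartheta\, (c_k^\perp \cdot c_k) + \sin\vartheta\, (c_k \cdot c_k) = \sin\vartheta$, using $|c_k| = 1$ from \eqref{cnormalization} and $c_k^\perp \perp c_k$ by the definition $c_{k1}^\perp = -c_{k2}$, $c_{k2}^\perp = c_{k1}$. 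Hence
$$u_k(t,x) = r\sin\vartheta + O(\ep^{5/4} r),$$
so that $|u_k(t,x)| \geq |r\sin\vartheta| - C\ep^{5/4} r = r\,(|\sin\vartheta| - C\ep^{5/4})$.

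On the set $\{ r \geq 16\upkappa_0^{-1}\de,\ |\sin\vartheta| \geq \upkappa_0/8 \}$, taking $\ep_0$ small enough that $C\ep^{5/4} \leq \upkappa_0/16$, we get $|u_k(t,x)| \geq r\cdot \upkappa_0/16 \geq (16\upkappa_0^{-1}\de)\cdot (\upkappa_0/16) = \de$, so indeed $u_k \notin (-\de,\de)$. I expect no serious obstacle here; the only mild care needed is (i) making sure the $\la x\ra^{-1/2}$ weight in \eqref{partialukcontrol} does not cause trouble — it does not, since it only improves the bound, though if one wants to be careful one should note $\la x\ra \gtrsim 1$ always — and (ii) confirming the elementary identity $\hat v(\vartheta)\cdot c_k = \sin\vartheta$, which is exactly the normalization built into \eqref{eq:x.in.terms.of.polar} and \eqref{cnormalization}. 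This expansion $u_k = r\sin\vartheta + O(\ep^{5/4}r)$ will presumably be reused (with $\vartheta_0$ in place of $0$, via \eqref{eq:th_0.def} and \eqref{eq:transversality.in.angle}) to localize $\widetilde\phi_j$'s singular zone in the complementary angular sector, which is the point of introducing these coordinates in the first place.
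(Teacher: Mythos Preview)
Your proof is correct and essentially identical to the paper's: both integrate $u_k$ along the radial segment from $z$ to the given point, use \eqref{partialukcontrol} to replace $\partial_i u_k$ by $c_{ki}$ up to an $O(\ep^{5/4})$ error, compute $\hat v(\vartheta)\cdot c_k = \sin\vartheta$ from \eqref{cnormalization}, and conclude from the resulting expansion $u_k = r\sin\vartheta + O(\ep^{5/4}r)$. Your anticipation of the analogous argument for $u_j$ (with $\vartheta_0$) is exactly how the paper proceeds in the next lemma.
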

\begin{proof}
	Take a point $y \in \Sigma_t$ such that its $(r,\vartheta)$ coordinates satisfy $r \geq 16\upkappa_0^{-1}\de$ and $|\sin \vartheta| \geq \f{\upkappa_0}{8}$. We want to show that $|u_k(y)|\geq\de$. To this end, we integrate along the radial line $\gamma:[0,1] \to \Sigma_t$ (connecting\footnote{Here, $z$ is as defined above before the lemma, which corresponds to the center for the polar coordinates.} $z$ and $y$) given by 
	$$\gamma(s) = z + s r (\cos \vartheta \begin{bmatrix} c_{k1}^\perp \\  c_{k2}^\perp \end{bmatrix} + \sin \vartheta \begin{bmatrix} c_{k1} \\  c_{k2} \end{bmatrix} ).$$
	To proceed, we use the fundamental theorem of calculus and \eqref{partialukcontrol} to obtain
	\begin{equation}\label{eq:lower.bound.u.angular}
	\begin{split}
	&\: u_k(y) = u_k (\gamma(1))  = u_k(\gamma(1)) - u_k(\gamma(0))  = \int_0^1 \f{d}{ds} u_k(\gamma(s)) \, ds \\
	= &\: \int_0^1 \{ r [ -(\cos \vartheta) c_{k2} + (\sin \vartheta) c_{k1}]  (\rd_1 u_k)(\gamma(s)) +  r [ (\cos \vartheta) c_{k1} + (\sin \vartheta) c_{k2}]  (\rd_2 u_k)(\gamma(s)) \}\, \ud s \\
	= &\: r \int_0^1 (\sin\vartheta)(c_{k1}^2 + c_{k2}^2)\,\ud s+ O(\ep^{\f 54} r) = r\sin\vartheta + O(\ep^{\f 54} r),
	\end{split}
	\end{equation}
	since $c_{k1}^2 + c_{k2}^2 = 1$ by \eqref{cnormalization}. By \eqref{eq:lower.bound.u.angular}, it is clear that since $r\geq 16\upkappa_0^{-1}\de$ and $|\sin \vartheta| \geq \f{\upkappa_0}{8}$, if we choose $\ep$ to be sufficiently small, then $|u_k(y)| \geq \f 12 r |\sin\vartheta| \geq \f 12 \cdot 16\upkappa_0^{-1} \de \cdot \f{\upkappa_0}{8} = \de$, as desired. \qedhere

\end{proof}

The following lemma is related to Lemma~\ref{lem:rodnianski.prelim.1}, but adapted for $u_j$.

\begin{lemma}\label{lem:rodnianski.prelim.2}
	For $\ep>0$ sufficiently small, $u_j \notin (-\de,\de)$ in the set $\{(r,\vartheta) : r \geq 16\upkappa_0^{-1}\de,\, |\sin \vartheta| \leq \f {\upkappa_0}{4} \}$.
\end{lemma}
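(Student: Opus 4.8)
The plan is to mirror the argument of Lemma~\ref{lem:rodnianski.prelim.1} exactly, but now run along a radial line adapted to the angle $\vartheta_0$ that encodes the direction of the $j$-th wavefront. Recall that $\{\vartheta = \vartheta_0\}$ is (by \eqref{eq:th_0.def} and \eqref{partialukcontrol}) an approximation of the curve $\{u_j(t,x) = 0\}$, and that \eqref{eq:transversality.in.angle} gives the quantitative transversality $|\sin\vartheta_0| \geq \upkappa_0$. The point is that $u_j$ varies linearly (to leading order) as one moves in the direction perpendicular to $\{u_j = 0\}$, so it suffices to show that any point with $r \geq 16\upkappa_0^{-1}\de$ and $|\sin\vartheta| \leq \f{\upkappa_0}{4}$ is angularly far (in the sense that $|\sin(\vartheta - \vartheta_0)|$ is bounded below) from the $j$-th wavefront direction.

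First I would record the elementary trigonometric fact that under the hypotheses $|\sin\vartheta| \leq \f{\upkappa_0}{4}$ and $|\sin\vartheta_0|\geq \upkappa_0$, one has $|\sin(\vartheta - \vartheta_0)| = |\sin\vartheta\cos\vartheta_0 - \cos\vartheta\sin\vartheta_0| \geq |\sin\vartheta_0| - |\sin\vartheta| \geq \upkappa_0 - \f{\upkappa_0}{4} = \f{3\upkappa_0}{4} \geq \f{\upkappa_0}{2}$. (Here I should be mildly careful that $\upkappa_0 < \f\pi2$ so that $|\sin\vartheta|\leq \f{\upkappa_0}{4}$ really localizes $\vartheta$ near $0$ or $\pi$ and the subtraction of absolute values is legitimate up to a harmless sign; alternatively one works directly with $c_{j1}^\perp, c_{j2}^\perp$ and the determinant identity as in \eqref{eq:transversality.in.angle} to avoid branch issues entirely.) This replaces the role played by the hypothesis $|\sin\vartheta|\geq \f{\upkappa_0}{8}$ in Lemma~\ref{lem:rodnianski.prelim.1}.

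Next I would integrate $u_j$ along the radial segment $\gamma(s) = z + s\,r(\cos\vartheta\,[c_{k1}^\perp, c_{k2}^\perp]^{\mathsf T} + \sin\vartheta\,[c_{k1}, c_{k2}]^{\mathsf T})$, $s\in[0,1]$, connecting $z$ (the point $u_k = u_j = 0$, hence also $u_j(z) = 0$) to the given point $y$. Applying the fundamental theorem of calculus together with $(\rd_i u_j)_{|\Sigma_0} = c_{ij}$ and the closeness estimate \eqref{partialukcontrol} (i.e.~$\rd_i u_j = c_{ij} + O(\ep^{\f 54})$), the integrand becomes, to leading order, $r\big[(\cos\vartheta)(-c_{k2}c_{j1}^{} + c_{k1}c_{j2}^{})\cdot(\ldots)\big]$; grouping terms exactly as in \eqref{eq:lower.bound.u.angular} and using \eqref{eq:th_0.def}/\eqref{cnormalization} to identify the coefficient, one gets $u_j(y) = r\sin(\vartheta - \vartheta_0) + O(\ep^{\f 54} r)$. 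Then with $r \geq 16\upkappa_0^{-1}\de$ and $|\sin(\vartheta - \vartheta_0)| \geq \f{\upkappa_0}{2}$, choosing $\ep_0$ small enough that the error is at most $\f 14 r|\sin(\vartheta-\vartheta_0)|$, we conclude $|u_j(y)| \geq \f 12 r|\sin(\vartheta-\vartheta_0)| \geq \f 12\cdot 16\upkappa_0^{-1}\de\cdot\f{\upkappa_0}{2} = 4\de \geq \de$, which is the claim.

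The only genuinely delicate point — and it is minor — is the bookkeeping that turns the difference $-c_{k2}c_{j1} + c_{k1}c_{j2}$ (together with its companion $c_{k1}c_{j1} + c_{k2}c_{j2}$) into precisely $\sin(\vartheta-\vartheta_0)$ via the definition of $\vartheta_0$. This is the same linear-algebra identity already used in \eqref{eq:transversality.in.angle}, so I expect no real obstacle; everything else is a verbatim repetition of the proof of Lemma~\ref{lem:rodnianski.prelim.1} with $k$ replaced by $j$ and the radial line recentered in angle by $\vartheta_0$.
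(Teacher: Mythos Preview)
Your proposal is correct and follows essentially the same approach as the paper: first establish the analogue of Lemma~\ref{lem:rodnianski.prelim.1} for $u_j$ with the angle shifted by $\vartheta_0$ (i.e.~$u_j(y) = r\sin(\vartheta-\vartheta_0) + O(\ep^{5/4}r)$), then verify that $|\sin(\vartheta-\vartheta_0)|$ is bounded below on the given set. Your trigonometric bound $|\sin(\vartheta-\vartheta_0)| \geq |\sin\vartheta_0| - |\sin\vartheta| \geq \tfrac{3\upkappa_0}{4}$ is indeed valid (it follows cleanly from $\sin\vartheta_0 = \sin\vartheta\cos(\vartheta_0-\vartheta) + \cos\vartheta\sin(\vartheta_0-\vartheta)$ and the triangle inequality), and is slightly sharper than the paper's bound $|\sin(\vartheta-\vartheta_0)| \geq \tfrac{\upkappa_0}{4}$, which the paper derives by separately estimating $|\sin\vartheta\cos\vartheta_0|$ and $|\sin\vartheta_0\cos\vartheta|$.
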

\begin{proof}
	In an entirely analogous manner as Lemma~\ref{lem:rodnianski.prelim.1}, we can show that 
	\begin{equation}\label{eq:condition.for.uj.away.from.singular}
	\mbox{if $r \geq 16\upkappa_0^{-1} \de$ and $|\sin(\vartheta - \vartheta_0)|\geq \f{\upkappa_0}{8}$, then $u_j \notin (-\de,\de)$.}
	\end{equation} 
	
	Now, given a point $(r,\vartheta) \in \{(r,\vartheta) : r \geq 16\upkappa_0^{-1}\de,\, |\sin \vartheta| \leq \f {\upkappa_0}{4} \}$, we know that
	\begin{itemize}
	\item $|\sin\vartheta\cos\vartheta_0|\leq |\sin\vartheta| \leq \f{\upkappa_0}{4}$, and 
	\item $|\sin\vartheta_0\cos\vartheta| \geq \f {\upkappa_0}2$ (using \eqref{eq:transversality.in.angle} and the fact $|\sin\vartheta|\leq \f{\upkappa_0}{4} \implies |\sin\vartheta|\leq \f 12 \implies |\cos\vartheta|\geq \f 12$).
	\end{itemize}
	Therefore, $|\sin(\vartheta - \vartheta_0)| = |\sin\vartheta\cos\vartheta_0 - \sin\vartheta_0\cos\vartheta| \geq \f{\upkappa_0}{4}$. Consequently, it follows from \eqref{eq:condition.for.uj.away.from.singular} that $u_j \notin (-\de,\de)$ at the given point. \qedhere
\end{proof}

Before we control the interaction terms, we need one more simple lemma.

\begin{lemma}\label{lem:mixed.norm}
For any $k$ and any $k'\neq k$, the following estimate holds for all $t\in [0,T_B)$:
$$\|\rd^2 \tphi\|_{L^2_{u_k} L^\i_{u_{k'}}(\Sigma_t)} \ls \ep^{\f 34} \cdot \de^{-\f 12}. $$
\end{lemma}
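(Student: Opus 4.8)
\textbf{Plan of proof for Lemma~\ref{lem:mixed.norm}.}

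The plan is to reduce the mixed-norm estimate to the bootstrap assumption \eqref{tphiH2bootstrap}, i.e.~$\|\rd^2\tphi\|_{L^2(\Sigma_t)} \ls \ep^{\f 34}\de^{-\f 12}$, combined with the improved estimate \eqref{BA:away.from.singular} (which gives an $\ep^{\f 34}$-bound for $\|\rd^2\tphi\|_{L^2(\Sigma_t\setminus \Sd^k)}$) and a gain coming from integrating in the transversal variable $u_{k'}$ only over the small singular set $S_\de^k$. First I would work in the $(u_k,u_{k'})$ coordinate system on $\Sigma_t$, which by Corollary~\ref{cor:diffeo} is a $C^1$-diffeomorphism of $\Sigma_t$ onto its image with all Jacobian entries and their inverses bounded independently of $\de$; thus Lebesgue measure $dx^1\,dx^2$ and $du_k\,du_{k'}$ are comparable, and the $L^2$ and $L^\i$ norms in the two coordinate systems are comparable up to universal constants. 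So it suffices to bound $\|\rd^2\tphi\|_{L^2_{u_k}L^\i_{u_{k'}}}$ with norms taken in the $(u_k,u_{k'})$ coordinates.

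The key step is the following: for fixed $u_k$, I would estimate the $L^\i_{u_{k'}}$ norm of (any fixed second coordinate derivative component of) $\rd^2\tphi$ using a one-dimensional Sobolev/Gagliardo--Nirenberg inequality in the $u_{k'}$ variable, namely $\| h\|_{L^\i_{u_{k'}}}^2 \ls \|h\|_{L^2_{u_{k'}}} \|\rd_{u_{k'}} h\|_{L^2_{u_{k'}}} + \|h\|_{L^2_{u_{k'}}}^2$, applied with $h = \rd^2\tphi$ (along the slice of constant $u_k$). Here the term $\rd_{u_{k'}} h$ involves a \emph{third} derivative of $\tphi$; since $\rd_{u_{k'}}$ is (by Lemma~\ref{ukuk'toXE.lemma} and the bounds of Section~\ref{preliminary.estimates.section}) a bounded-coefficient combination of $X_{k'}$ and $E_{k'}$, hence of $\rd_t, \rd_1, \rd_2$ via the frame transformations, we have $|\rd_{u_{k'}}\rd^2\tphi|\ls \la x\ra^{C\ep}|\rd^3\tphi|$ on the compact support of $\tphi$, and $\|\rd^3\tphi\|_{L^2(\Sigma_t)}\ls \ep^{\f 34}\de^{-\f 12}$ follows from \eqref{tphiH2bootstrap}, \eqref{EtphiH2bootstrap} together with \eqref{eq:smooththeorem.1}-type bounds — more precisely from the $H^3$-type control that is part of the bootstrap assumptions (or alternatively one reruns the argument keeping only two derivatives, see below). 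Then squaring, integrating in $u_k$, and applying Cauchy--Schwarz in $u_k$ to the product $\|h\|_{L^2_{u_{k'}}}\|\rd_{u_{k'}}h\|_{L^2_{u_{k'}}}$ gives
$$\|\rd^2\tphi\|_{L^2_{u_k}L^\i_{u_{k'}}}^2 \ls \|\rd^2\tphi\|_{L^2(\Sigma_t)} \cdot \|\rd^3\tphi\|_{L^2(\Sigma_t)} + \|\rd^2\tphi\|_{L^2(\Sigma_t)}^2 \ls \ep^{\f 34}\de^{-\f 12}\cdot\ep^{\f 34}\de^{-\f 12} = \ep^{\f 32}\de^{-1},$$
which is exactly the square of the claimed bound (after relabelling $\ep^{\f 32}\le \ep^{\f 34}\cdot\ep^{\f 34}$ and taking square roots). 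A cleaner alternative that avoids invoking three derivatives is to instead split $\Sigma_t = S_\de^k \cup (\Sigma_t\setminus S_\de^k)$: on $\Sigma_t\setminus S_\de^k$ one has the $\ep^{\f 34}$ bound on $\|\rd^2\tphi\|_{L^2}$ from \eqref{BA:away.from.singular}, and since for each fixed $u_{k'}$ the $u_k$-fibre meets $S_\de^k$ only in the interval $u_k\in[-\de,\de]$ of length $2\de$, one can write $\rd^2\tphi(u_k,u_{k'})$ at a point of $S_\de^k$ as the value at the boundary $u_k = \pm\de$ (controlled by the exterior estimate) plus $\int$ of $\rd_{u_k}\rd^2\tphi$ over an interval of length $\le 2\de$; Cauchy--Schwarz in $u_k$ then costs $\de^{\f 12}$ and pairs against $\|\rd^3\tphi\|_{L^2}\ls\ep^{\f 34}\de^{-\f 12}$, again yielding the $\ep^{\f 34}$ after the $\de^{\f 12}$ gain — but this still uses a third derivative. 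I expect the cleanest route is the Gagliardo--Nirenberg argument above.

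The main obstacle is bookkeeping rather than conceptual: one must carefully justify that $\rd_{u_{k'}}$ acting on $\rd^2\tphi$ (expressed in $(t,x^1,x^2)$ derivatives) stays controlled by $\|\rd^3\tphi\|_{L^2(\Sigma_t)}$ with only $\la x\ra^{O(\ep)}$ losses — this requires the frame transformation formulas \eqref{thetaE}, \eqref{Xukukprime}--\eqref{Eukukprime}, the vector-field component bounds \eqref{Yibounded}, \eqref{dYibounded} of Lemmas~\ref{lem:L.X.E}--\ref{dgeomvflemma}, the diffeomorphism bounds of Corollary~\ref{cor:diffeo} and Proposition~\ref{prop:dx2u}, and the compact support of $\tphi$ from Lemma~\ref{lem:support} to absorb the weights — and then that the bootstrap assumptions indeed furnish $\|\rd^3\tphi\|_{L^2(\Sigma_t)}\ls\ep^{\f 34}\de^{-\f 12}$ (via \eqref{tphiH2bootstrap}, \eqref{EtphiH2bootstrap} and the relation $|\rd^3\tphi|\ls \la x\ra^{C\ep}(|\rd^2 E_k\rd_x\tphi| + \text{lower order})$ together with the commutator structure; alternatively one simply cites the $H^3$ control implicit in the bootstrap/a priori framework). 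Once these routine reductions are in place, the one-dimensional Sobolev inequality plus Cauchy--Schwarz in $u_k$ closes the estimate immediately.
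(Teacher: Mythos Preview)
Your overall strategy --- one-dimensional Sobolev embedding in the $u_{k'}$ variable --- is exactly right and is what the paper does. But there is a genuine gap: you need $\|\partial_{u_{k'}}\rd^2\tphi\|_{L^2(\Sigma_t)}\ls \ep^{\f34}\de^{-\f12}$, and you try to obtain this from a general bound $\|\rd^3\tphi\|_{L^2(\Sigma_t)}\ls \ep^{\f34}\de^{-\f12}$. No such bound is among the bootstrap assumptions, and it is in fact false (a generic third derivative of $\tphi$ scales like $\de^{-\f32}$ in $L^2$). The only third-order bound available is the \emph{anisotropic} one \eqref{EtphiH2bootstrap}: $\|\rd E_k\rd_x\tphi\|_{L^2}\le \ep^{\f34}\de^{-\f12}$, which requires one of the three derivatives to be the good tangential field $E_k$.

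The missing structural observation is that $\partialukp$ is \emph{proportional to $E_k$}, not to $E_{k'}$ or $X_{k'}$ as you write. Indeed \eqref{partialukpEX} in Lemma~\ref{ukuk'toXE.lemma} says $\partialukp=\mu_{k'}\,g(E_k,X_{k'})^{-1}E_k$; the coefficient is bounded by \eqref{bootstrapmu}, \eqref{eq:BA.g.asymp}--\eqref{eq:BA.g.Li}, Lemma~\ref{lem:L.X.E} and \eqref{anglecontrol}. With this in hand the paper's proof is immediate: first use the wave equation to replace $\rd^2\tphi$ by $\rd\rd_x\tphi$; then one-dimensional Sobolev in $u_{k'}$ gives $\|\rd\rd_x\tphi\|_{L^2_{u_k}L^\i_{u_{k'}}}\ls \|\partialukp\rd\rd_x\tphi\|_{L^2}\ls \|E_k\rd\rd_x\tphi\|_{L^2}$; finally commute $[\rd,E_k]$ via Lemma~\ref{dgeomvflemma} and apply \eqref{EtphiH2bootstrap}. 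Your alternative route (splitting at $S_\de^k$ and integrating $\rd_{u_k}\rd^2\tphi$ over $|u_k|\le\de$) fails for the same reason: $\rd_{u_k}$ is the \emph{bad} direction $X_k$, and no $\ep^{\f34}\de^{-\f12}$ bound for $\|X_k\rd^2\tphi\|_{L^2}$ is available.
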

\begin{proof}
Using the wave equation if necessary, we only need to estimate $\rd \rd_x \tphi$. Using the $1$-dimensional Sobolev embedding, we have
$$ \|\rd \rd_x\tphi \|_{L^2_{u_k} L^\i_{u_{k'}}(\Sigma_t)} \ls \|\partialukp \rd \rd_x\tphi\|_{L^2(\Sigma_t)} \ls \|\rd E_k \rd_x \tphi\|_{L^2(\Sigma_t)} \ls \ep^{\f 34} \cdot \de^{-\f 12},$$
where we have used \eqref{partialukpEX}, \eqref{bootstrapmu}, \eqref{eq:BA.g.asymp}, \eqref{eq:BA.g.Li}, Lemma~\ref{lem:L.X.E} and \eqref{anglecontrol} to compare $\partialukp$ and $E_k$, and used Lemma~\ref{dgeomvflemma} to commute $[\rd, E_k]$. Finally, we apply the bootstrap assumption \eqref{EtphiH2bootstrap}. \qedhere
\end{proof}

We are now ready to prove the main estimate for the interaction terms. 

\begin{lemma}\label{lem:rodnianski.trick.2}
	Let $\Omega(\mfg)$ be as in \eqref{eq:def.Omega.g}. Then, for $k\neq j$, there exist $t$-independent functions $\widetilde{\zeta}_{\mathrm{int}}$ and $\widetilde{\zeta}_{\mathrm{ang}}$ with $L^\i$ norms $\ls 1$ (defined precisely in the proof) such that for any $p \in [1,2)$, the following estimate holds for all $t\in [0,T_B)$:
	\begin{equation*}
	\begin{split} 
	\|\rd_t [ \Omega(\mfg) (\rd_\alp \tphi) (\rd_\sigma \widetilde{\phi}_j) ] - &\rd_i [ (1-\widetilde{\zeta}_{\mathrm{int}}) \widetilde{\zeta}_{\mathrm{ang}} \Omega(\mfg)(N X_k^i +\bt^i) (\rd_\alp \tphi) (\rd_\sigma \widetilde{\phi}_j)] \\
	&- \rd_i [ (1-\widetilde{\zeta}_{\mathrm{int}}) (1-\widetilde{\zeta}_{\mathrm{ang}}) \Omega(\mfg)(N X_j^i +\bt^i) (\rd_\alp \tphi) (\rd_\sigma \widetilde{\phi}_j)] \|_{L^{p}(\Sigma_t)} \ls \f{\ep^{\f 32}}{(2-p)^{\f 1p}}.
	\end{split}
	\end{equation*}
\end{lemma}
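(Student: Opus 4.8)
The idea is to use the decomposition $\rd_t = N L_k + (N X_k^i + \bt^i)\rd_i$ in the singular zone $S^k_\de$ of $\tphi$ (where $L_k$ is good for $\tphi$), and to use the \emph{other} decomposition $\rd_t = N L_j + (N X_j^i + \bt^i)\rd_i$ in the singular zone $S^j_\de$ of $\widetilde\phi_j$ (where $L_j$ is good for $\widetilde\phi_j$). Away from both singular zones, either decomposition works because $\rd^2\tphi$ and $\rd^2\widetilde\phi_j$ are both in $L^2$ without $\de^{-1/2}$ losses (bootstrap assumption \eqref{BA:away.from.singular}). The transversality of the two singular zones (Lemmas~\ref{lem:rodnianski.prelim.1} and \ref{lem:rodnianski.prelim.2}) guarantees that $S^k_\de$ and $S^j_\de$, localized to $B(0,3R)$, are separated in the angular variable $\vartheta$ of the polar coordinate system constructed before the lemma: $S^k_\de$ lives where $|\sin\vartheta|\lesssim \upkappa_0$ and $S^j_\de$ lives where $|\sin(\vartheta-\vartheta_0)|\lesssim\upkappa_0$, and by \eqref{eq:transversality.in.angle} these are disjoint. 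So I would fix a partition of unity in $\vartheta$: let $\widetilde\zeta_{\mathrm{ang}}$ be a smooth $t$-independent cutoff (a function of $\vartheta$, suitably extended near $r=0$, times $\varpi$) equal to $1$ where $|\sin\vartheta|\leq\tfrac{\upkappa_0}{8}$ and supported where $|\sin\vartheta|\leq\tfrac{\upkappa_0}{4}$, so that $1-\widetilde\zeta_{\mathrm{ang}}$ is supported where $|\sin\vartheta|\geq\tfrac{\upkappa_0}{8}$, hence (Lemma~\ref{lem:rodnianski.prelim.1}) away from $S^k_\de$; and let $\widetilde\zeta_{\mathrm{int}}$ be a smooth $t$-independent cutoff equal to $1$ on $\{r\leq 8\upkappa_0^{-1}\de\}$ and supported in $\{r\leq 16\upkappa_0^{-1}\de\}$, which localizes the innermost region where both waves may be singular.

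\textbf{Key steps.} First, write the decomposition
\[
\rd_t[\Omega(\mfg)(\rd_\alp\tphi)(\rd_\sigma\widetilde\phi_j)] = \widetilde\zeta_{\mathrm{ang}}\,\rd_t[\cdots] + (1-\widetilde\zeta_{\mathrm{ang}})\,\rd_t[\cdots],
\]
then further split each piece using $\widetilde\zeta_{\mathrm{int}}$ and $1-\widetilde\zeta_{\mathrm{int}}$. On $\{\widetilde\zeta_{\mathrm{int}}=1\}$, i.e.\ $r\lesssim\de$, I estimate $\rd_t[\cdots]$ directly in $L^p$: the volume of this set is $O(\de^2)$, and $\rd^2\tphi,\rd^2\widetilde\phi_j$ are in $L^2$ with norm $\lesssim\ep^{3/4}\de^{-1/2}$, while $\rd\tphi,\rd\widetilde\phi_j$ are in $L^\infty$ with norm $\lesssim\ep^{3/4}$; by Hölder in $L^p$ with $p<2$ one gains a positive power of $\de$ (with the stated $(2-p)^{-1/p}$ dependence from the Hölder exponent on $\la x\ra^{-1}$ weights or on the volume factor), so this contribution is $\lesssim \ep^{3/2}/(2-p)^{1/p}$ and \emph{needs no total-derivative structure}. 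For the region $\{1-\widetilde\zeta_{\mathrm{int}}\neq 0\}$ where $\widetilde\zeta_{\mathrm{ang}}\neq0$: here $u_j\notin(-\de,\de)$ by Lemma~\ref{lem:rodnianski.prelim.2}, so $\rd^2\widetilde\phi_j\in L^2$ without loss; on the same region I use the $k$-decomposition $\rd_t = NL_k + (NX_k^i+\bt^i)\rd_i$, commute the cutoff $(1-\widetilde\zeta_{\mathrm{int}})\widetilde\zeta_{\mathrm{ang}}$ through to form the stated total spatial derivative $\rd_i[(1-\widetilde\zeta_{\mathrm{int}})\widetilde\zeta_{\mathrm{ang}}\Omega(\mfg)(NX_k^i+\bt^i)(\rd_\alp\tphi)(\rd_\sigma\widetilde\phi_j)]$, leaving two error terms: the $NL_k[\cdots]$ term (controlled as in Lemma~\ref{lem:rodnianski.trick.1} using that $L_k$ is good for $\tphi$, bootstrap \eqref{bootstrapsmallnessenergy}, and that $\rd^2\widetilde\phi_j\in L^2$ here) and the term where the derivative hits the cutoff or the coefficients (controlled using $|\rd\widetilde\zeta|\lesssim\de^{-1}\mathbbm 1_{\{r\sim\de\}}$ plus Lemmas~\ref{lem:L.X.E}, \ref{dgeomvflemma}, and Lemma~\ref{lem:mixed.norm} to absorb the $L^2_{u_k}L^\infty_{u_{k'}}$ mixed norm against the thin $\de$-width of the cutoff transition in $u_k$). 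Symmetrically, on $\{1-\widetilde\zeta_{\mathrm{int}}\neq0\}$ where $1-\widetilde\zeta_{\mathrm{ang}}\neq0$, one has $u_k\notin(-\de,\de)$ by Lemma~\ref{lem:rodnianski.prelim.1}, so $\rd^2\tphi\in L^2$ without loss, and I use the $j$-decomposition $\rd_t = NL_j + (NX_j^i+\bt^i)\rd_i$ to produce the second stated total derivative, with analogous error terms. Collecting all pieces and using that the cutoffs are $t$-independent with $L^\infty$ norm $\lesssim 1$ gives the claim.

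\textbf{Main obstacle.} The delicate point is the error term in which a spatial derivative lands on the cutoff $\widetilde\zeta_{\mathrm{int}}$ (or $\widetilde\zeta_{\mathrm{ang}}$ near the origin): $\rd_i\widetilde\zeta_{\mathrm{int}}$ is of size $\de^{-1}$ and supported on an annulus $r\sim\de$, so that, multiplied by $\Omega(\mfg)(NX_k^i+\bt^i)(\rd_\alp\tphi)(\rd_\sigma\widetilde\phi_j)$, one must show the product is $\lesssim\ep^{3/2}/(2-p)^{1/p}$ in $L^p$. The resolution is that on this annulus $r\sim\de$ we may as well use the \emph{trivial} bound: $\rd\tphi,\rd\widetilde\phi_j\in L^\infty$ with norm $\lesssim\ep^{3/4}$, the annulus has area $O(\de^2)$, and $\de^{-1}\cdot\de^{2/p}\to 0$ as $\de\to0$ for $p<2$ — so again no total-derivative structure or mixed norm is actually needed there; this is the same mechanism as the $\{\widetilde\zeta_{\mathrm{int}}=1\}$ region. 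A secondary subtlety is verifying that $\widetilde\zeta_{\mathrm{ang}}$, which is naturally a function of the angle $\vartheta$ only, extends to a genuinely smooth compactly supported function on $\RR^2$ with bounded derivatives (it is not smooth at $r=0$ as written); one handles this by multiplying by $\omega(r/\de)$ or by redefining it to be constant for $r\lesssim\de$, which is harmless since that innermost region is already handled by the $\widetilde\zeta_{\mathrm{int}}$ piece. One then checks that with these choices the error from $[\rd_i,\widetilde\zeta_{\mathrm{ang}}]$ in the bulk ($r\gtrsim\de$) has $|\rd\vartheta|\lesssim r^{-1}$, which combined with the weights and the $L^\infty$ bounds on $\rd\tphi,\rd\widetilde\phi_j$ is integrable to the $p$-th power with the stated constant.
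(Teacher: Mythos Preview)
Your overall architecture is correct and matches the paper closely: the same cutoffs $\widetilde\zeta_{\mathrm{int}}$ and $\widetilde\zeta_{\mathrm{ang}}$ in the polar coordinates, the same use of Lemmas~\ref{lem:rodnianski.prelim.1}--\ref{lem:rodnianski.prelim.2} to place $S^j_\de$ and $S^k_\de$ on disjoint angular sectors outside the core $\{r\lesssim\de\}$, and the correct identification of the angular-cutoff commutator (with $|\rd_i\widetilde\zeta_{\mathrm{ang}}|\lesssim |x-z|^{-1}$) as the sole source of the $(2-p)^{-1/p}$ loss.

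There is, however, a genuine gap in your treatment of the innermost region $\{\widetilde\zeta_{\mathrm{int}}=1\}$. Your argument there reads: volume $O(\de^2)$, $\|\rd^2\tphi\|_{L^2}\lesssim\ep^{3/4}\de^{-1/2}$, $\|\rd\widetilde\phi_j\|_{L^\infty}\lesssim\ep^{3/4}$, and H\"older. But this yields
\[
\|\rd^2\tphi\cdot\rd\widetilde\phi_j\|_{L^p(\{r\lesssim\de\})}\ \lesssim\ \ep^{3/4}\cdot\ep^{3/4}\de^{-1/2}\cdot\de^{2(\f1p-\f12)}\ =\ \ep^{3/2}\de^{\f2p-\f32},
\]
which for $p\in(\tfrac43,2)$ \emph{blows up} as $\de\to0$. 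The claimed bound $\ep^{3/2}(2-p)^{-1/p}$ is $\de$-independent, so your mechanism does not cover this range of $p$; and the parenthetical attributing the $(2-p)^{-1/p}$ constant to ``the H\"older exponent on $\la x\ra^{-1}$ weights or the volume factor'' does not apply here (there is no $|x-z|^{-1}$ singularity in this term).

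The fix is precisely Lemma~\ref{lem:mixed.norm}, which you cite but misplace. On $\{|x-z|\lesssim\de\}$ the ranges of $u_k$ and $u_j$ are each of length $\lesssim\de$ (Corollary~\ref{cor:diffeo}), so
\[
\|\rd^2\tphi\|_{L^2(\{r\lesssim\de\})}\ \lesssim\ \|\rd^2\tphi\|_{L^2_{u_k}L^\infty_{u_j}}\cdot\|1\|_{L^\infty_{u_k}L^2_{u_j}(\{r\lesssim\de\})}\ \lesssim\ \ep^{3/4}\de^{-1/2}\cdot\de^{1/2}=\ep^{3/4},
\]
giving $\|\widetilde\zeta_{\mathrm{int}}\rd_t[\cdots]\|_{L^2(\Sigma_t)}\lesssim\ep^{3/2}$ with no $\de$-loss at all. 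Conversely, the term you single out as ``the delicate point'' --- where $\rd_i$ lands on $\widetilde\zeta_{\mathrm{int}}$ --- involves only \emph{first} derivatives $\rd_\alpha\tphi,\rd_\sigma\widetilde\phi_j$; your trivial $L^\infty\times$volume bound $\de^{-1}\cdot\de^{2/p}$ works there even at $p=2$, and Lemma~\ref{lem:mixed.norm} is not needed. In short: swap where you invoke the mixed-norm lemma, and the argument closes.
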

\begin{proof}

	\pfstep{Step~1: Defining the decomposition}
	Recall the polar coordinates $(r,\vartheta)$ in \eqref{eq:x.in.terms.of.polar}. We introduce two cut-off functions. First, define a radial cut-off function $\widetilde{\zeta}_{\mathrm{int}} = \widetilde{\zeta}_{\mathrm{int}}(r)$ be a non-negative function which $=1$ when $r \leq 16\upkappa_0^{-1}\de$ and $=0$ when $r \geq 20\upkappa_0^{-1}\de$. $\widetilde{\zeta}_{\mathrm{int}}$ can chosen so that
	\begin{equation}\label{eq:zeta.int.est}
	|\widetilde{\zeta}_{\mathrm{int}}|\ls 1,\quad |\rd_i \widetilde{\zeta}_{\mathrm{int}}| \ls \de^{-1}.
	\end{equation}
	Second, define an angular cut-off function $\widetilde{\zeta}_{\mathrm{ang}} = \widetilde{\zeta}_{\mathrm{ang}}(\vartheta)$ to be a non-negative function, smooth in $\vartheta$, which $=1$ when $|\sin\vartheta| \leq \f{\upkappa_0}{8}$ and $=0$ when $|\sin\vartheta|\geq \f {\upkappa_0}{4}$. Note that while the derivatives of $\widetilde{\zeta}_{\mathrm{ang}}$ with respect to $\vartheta$ are $\de$-independent, the derivative $\rd_i \vartheta$ is unbounded and obeys only $|\rd_i \vartheta|\ls \f 1r$. As a result, $\widetilde{\zeta}_{\mathrm{ang}}$ can only be chosen to obey the following bounds:
	\begin{equation}\label{eq:zeta.ang.est}
	|\widetilde{\zeta}_{\mathrm{ang}}| \ls 1,\quad |\rd_i \widetilde{\zeta}_{\mathrm{ang}}|(x) \ls \f 1{|x-z|}.
	\end{equation}
	
	Using the above cutoffs and \eqref{eq:dt.decompose}, for $\mathfrak I :=  \Omega(\mfg) (\rd_\alp \tphi) (\rd_\sigma \widetilde{\phi}_j)$, we can rewrite
	\begin{equation}\label{eq:rodnianski.trick.trans.decomp}
	\begin{split}
	\rd_t \mathfrak I
	= &\: \widetilde{\zeta}_{\mathrm{int}} \rd_t \mathfrak I + (1-\widetilde{\zeta}_{\mathrm{int}}) \widetilde{\zeta}_{\mathrm{ang}}\rd_t \mathfrak I + (1-\widetilde{\zeta}_{\mathrm{int}}) (1 - \widetilde{\zeta}_{\mathrm{ang}}) \rd_t \mathfrak I \\
	= &\: \underbrace{\widetilde{\zeta}_{\mathrm{int}} \rd_t \mathfrak I}_{=:I} + \underbrace{ (1-\widetilde{\zeta}_{\mathrm{int}}) \widetilde{\zeta}_{\mathrm{ang}} (N L_k + (N X_k^i +\bt^i)\rd_i) \mathfrak I}_{=:II} + \underbrace{ (1-\widetilde{\zeta}_{\mathrm{int}}) (1 - \widetilde{\zeta}_{\mathrm{ang}}) (N L_j + (N X_j^i +\bt^i)\rd_i) \mathfrak I}_{=:III}.
	\end{split}
	\end{equation}
	
	In the following steps, we consider each of terms $I$, $II$ and $III$.

	\pfstep{Step~2: The region near the interaction zone (Term $I$ in \eqref{eq:rodnianski.trick.trans.decomp})}
	The key here is to use the smallness of the interaction zone. We have
	\begin{equation}\label{eq:rodnianski.trick.int.1}
	\begin{split}
	&\:\|I\|_{L^2(\Sigma_t)} =  \| \widetilde{\zeta}_{\mathrm{int}} \rd_t [\Omg(\mfg) (\rd_\alp \tphi) (\rd_\sigma \widetilde{\phi}_j)] \|_{L^2(\Sigma_t)}\\
	\ls &\: \| \rd_t \Omg(\mfg)\|_{L^\i(\Sigma_t \cap B(0,R))} \|\rd \tphi\|_{L^\i(\Sigma_t)} \| \rd \widetilde{\phi}_j \|_{L^\i(\Sigma_t)} + \| \rd^2 \tphi\|_{L^2(\Sigma_t\cap \{x:|x-z|\ls \de\})}  \| \rd \widetilde{\phi}_j \|_{L^\i(\Sigma_t)}\\
	&\: + \| \rd \tphi\|_{L^\i(\Sigma_t)} \| \rd^2 \widetilde{\phi}_j \|_{L^2(\Sigma_t\cap \{x:|x-z|\ls \de\})},
	\end{split}
	\end{equation}
	where we have used that $\mathrm{supp} (\widetilde{\zeta}_{\mathrm{int}}) \subseteq  \{x:|x-z|\ls \de\}$.
	
	The first term in \eqref{eq:rodnianski.trick.int.1} is obviously $\ls \ep^{\f {11}4}$ using \eqref{eq:def.Omega.g} and the bootstrap assumption \eqref{BA:Li}.
	
	For the second term in \eqref{eq:rodnianski.trick.int.1}, we start by noting that by Corollary~\ref{cor:diffeo},
	$$(\sup_{y,\,y' \in \{x:|x-z|\ls \de\}} |u_k(y) - u_k(y')|) + (\sup_{y,\,y' \in \{x:|x-z|\ls \de\}} |u_j(y) - u_j(y')|) \ls \de.$$
	As a result, by Corollary~\ref{cor:diffeo}, H\"older's inequality and Lemma~\ref{lem:mixed.norm}, we have
	$$\| \rd^2 \tphi\|_{L^2(\Sigma_t\cap \{x:|x-z|\ls \de\})} \ls \|\rd^2\tphi \|_{L^2_{u_k} L^\i_{u_j}(\Sigma_t)} \|1\|_{L^\i_{u_k} L^2_{u_j} (\{x:|x-z|\ls \de\}) } \ls (\ep^{\f 34} \de^{-\f 12}) \de^{\f 12} = \ep^{\f 34}.$$
	In particular, using also the bootstrap assumption \eqref{BA:Li}, we obtain
	$$\| \rd^2 \tphi\|_{L^2(\Sigma_t\cap \{x:|x-z|\ls \de\})}  \| \rd \widetilde{\phi}_j \|_{L^\i(\Sigma_t)} \ls \ep^{\f 32}.$$
	
	The third term in \eqref{eq:rodnianski.trick.int.1} can be treated similarly as the second term so that altogether we have 
	\begin{equation}\label{eq:rodnianski.trick.int.1.final}
	\begin{split}
	\|I \|_{L^2(\Sigma_t)} = \| \widetilde{\zeta}_{\mathrm{int}} \rd_t [\Omg(\mfg) (\rd_\alp \tphi) (\rd_\sigma \widetilde{\phi}_j)] \|_{L^2(\Sigma_t)} \ls \ep^{\f 32}.
	\end{split}
	\end{equation}
	
	\pfstep{Step~3: The remaining region (Terms $II$ and $III$ in \eqref{eq:rodnianski.trick.trans.decomp})} We first consider term $II$ of \eqref{eq:rodnianski.trick.trans.decomp}. The key observation is that by Lemma~\ref{lem:rodnianski.prelim.2}, on the support of $(1-\widetilde{\zeta}_{\mathrm{int}}) \widetilde{\zeta}_{\mathrm{ang}}$, $u_j \notin [-\de,\de]$. As a result, we have $\| (1-\widetilde{\zeta}_{\mathrm{int}}) \widetilde{\zeta}_{\mathrm{ang}} \rd^2 \widetilde{\phi}_j \|_{L^2(\Sigma_t)} \ls \ep^{\f 34}$ by \eqref{BA:away.from.singular}.
	
	We now move onto the details. We write
	\begin{equation}\label{eq:rodnianski.trick.hardest.term}
	\begin{split}
	II = &\: (1-\widetilde{\zeta}_{\mathrm{int}}) \widetilde{\zeta}_{\mathrm{ang}} (N L_k + (N X_k^i +\bt^i)\rd_i) [\Omg(\mfg) (\rd_\alp \tphi) (\rd_\sigma \widetilde{\phi}_j)] \\
	=&\: \underbrace{(1-\widetilde{\zeta}_{\mathrm{int}}) \widetilde{\zeta}_{\mathrm{ang}} N L_k  [\cdots ] }_{=:II_1} + \underbrace{(1-\widetilde{\zeta}_{\mathrm{int}}) \widetilde{\zeta}_{\mathrm{ang}}  (N X_k^i +\bt^i)\rd_i [\cdots]}_{=:II_2}.
	\end{split}
	\end{equation}
	
	For $II_1$ in \eqref{eq:rodnianski.trick.hardest.term}, we compute
	\begin{equation}\label{eq:rodnianski.trick.hardest.term.I}
	\begin{split}
	II_1 = &\: \underbrace{(1-\widetilde{\zeta}_{\mathrm{int}}) \widetilde{\zeta}_{\mathrm{ang}} N \{ (L_k \Omg(\mfg)) (\rd_\alp \tphi) (\rd_\sigma \widetilde{\phi}_j) +  \Omg(\mfg) (L_k \rd_\alp \tphi) (\rd_\sigma \widetilde{\phi}_j)  \} }_{=:II_{1,1}}\\
	&\: + \underbrace{ (1-\widetilde{\zeta}_{\mathrm{int}}) \widetilde{\zeta}_{\mathrm{ang}} N\Omg(\mfg) (\rd_\alp \tphi) (L_k \rd_\sigma \widetilde{\phi}_j) }_{=:II_{1,2}}.
	\end{split}
	\end{equation}
	
	The term $II_{1,1}$ is easy, particularly because $L_k$ is a regular vector field for $\tphi$. More precisely, using \eqref{eq:def.Omega.g}, \eqref{bootstrapsmallnessenergy} and \eqref{BA:Li}, we obtain $\|II_{1,2}\|_{L^2(\Sigma_t)} \ls \ep^{\f 32}$.
	
	For $II_{1,2}$ in \eqref{eq:rodnianski.trick.hardest.term.I}, the key is that Lemma~\ref{lem:rodnianski.prelim.2} implies that $\mathrm{supp}(II_{1,2}) \subseteq \Sigma_t \setminus S^j_\de$. Therefore, we use \eqref{eq:def.Omega.g}, \eqref{BA:away.from.singular} and \eqref{BA:Li} to obtain $\|II_{1,2}\|_{L^2(\Sigma_t)} \ls \ep^{\f 32}$.

	For $II_2$ in \eqref{eq:rodnianski.trick.hardest.term}, we compute
	\begin{equation}
	\begin{split}
	II_2 =&\: \underbrace{\rd_i  [(1-\widetilde{\zeta}_{\mathrm{int}}) \widetilde{\zeta}_{\mathrm{ang}}  (N X_k^i +\bt^i) \Omg(\mfg) (\rd_\alp\tphi)(\rd_\sigma\widetilde{\phi}_{j})]}_{=:II_{2,1}}  - \underbrace{ (1-\widetilde{\zeta}_{\mathrm{int}}) \widetilde{\zeta}_{\mathrm{ang}} [\rd_i(N X_k^i +\bt^i)] \Omg(\mfg) (\rd_\alp\tphi)(\rd_\sigma\widetilde{\phi}_{j}) }_{=:II_{2,2}}\\
	&\: + \underbrace{ (\rd_i \widetilde{\zeta}_{\mathrm{int}}) \widetilde{\zeta}_{\mathrm{ang}} (N X_k^i +\bt^i) \Omg(\mfg) (\rd_\alp\tphi)(\rd_\sigma\widetilde{\phi}_{j}) }_{=:II_{2,3}} - \underbrace{ (1-\widetilde{\zeta}_{\mathrm{int}}) (\rd_i\widetilde{\zeta}_{\mathrm{ang}}) (N X_k^i +\bt^i) \Omg(\mfg) (\rd_\alp\tphi)(\rd_\sigma\widetilde{\phi}_{j}) }_{=:II_{2,4}}.
	\end{split}
	\end{equation}
	
	$II_{2,1}$ is one of the main terms we have in the statement of the lemma. $II_{2,2}$ can be handled just as term $II$ in \eqref{eq:rodnianski.trick.1.3} so that $\|II_{2,2}\|_{L^2(\Sigma_t)} \ls \ep^{\f 32}$ by \eqref{eq:rodnianski.trick.1.3.2}. The term $II_{2,3}$ has $L^\i$ norm $\ls \ep^{\f 32} \de^{-1}$ (using \eqref{eq:BA.g.Li}, \eqref{BA:Li}, Lemma~\ref{lem:L.X.E}, \eqref{eq:def.Omega.g}, \eqref{eq:zeta.int.est} and \eqref{eq:zeta.ang.est}), but $\rd_i \widetilde{\zeta}_{\mathrm{int}}$ is supported in $\{|x-z|\ls \de\}$. Thus, using H\"older's inequality,
	$$\|II_{2,3} \|_{L^2(\Sigma_t)} \ls \|II_{2,3}\|_{L^\i(\Sigma_t)} \|1 \|_{L^2(\Sigma_t\cap \{|x-z|\ls \de\})} \ls \ep^{\f 32} \de^{-1} (\de^2)^{\f 12} \ls \ep^{\f 32}.$$
	Now $II_{2,4}$ is compactly supported in $B(0,R)$, and is bounded in $L^\i$ above by $\ls \f{\ep^{\f 32}}{|x-z|}$ (by \eqref{eq:BA.g.Li}, \eqref{BA:Li}, Lemma~\ref{lem:L.X.E}, \eqref{eq:def.Omega.g}, \eqref{eq:zeta.int.est} and \eqref{eq:zeta.ang.est}). It follows that\footnote{Note that $\f{1}{|x-z|} \notin L^2_{loc}$ in two dimensions.} for $p\in [1,2)$,
	$$\|II_{2,4} \|_{L^p(\Sigma_t)} \ls \ep^{\f 32}(\int_{B(0,R)} \, \f{\ud x}{|x-z|^p})^{\f 1p} \ls \f{\ep^{\f 32}}{(2-p)^{\f 1p}}. $$
	
	Putting all the estimates above (using also that the $L^2$ norm controls the $L^p$ norm since the support of each term $\subseteq B(0,R)$), we obtain that for every $p\in [1.2)$,
	\begin{equation}\label{eq:rodnianski.trick.int.2}
	\|II - \rd_i  [(1-\widetilde{\zeta}_{\mathrm{int}}) \widetilde{\zeta}_{\mathrm{ang}}  (N X_k^i +\bt^i) \Omg(\mfg) (\rd_\alp\tphi)(\rd_\sigma\widetilde{\phi}_{j})]\|_{L^p(\Sigma_t)} \ls \f{\ep^{\f 32}}{(2-p)^{\f 1p}}.
	\end{equation}
	
	Finally, term $III$ in \eqref{eq:rodnianski.trick.trans.decomp} can be treated in a similar way as term $II$, except we use Lemma~\ref{lem:rodnianski.prelim.1} instead of Lemma~\ref{lem:rodnianski.prelim.2}. Hence, we have
	\begin{equation}\label{eq:rodnianski.trick.int.3}
	\|III - \rd_i [ (1-\widetilde{\zeta}_{\mathrm{int}}) (1-\widetilde{\zeta}_{\mathrm{ang}}) \Omega(\mfg)(N X_j^i +\bt^i) (\rd_\alp \tphi) (\rd_\sigma \widetilde{\phi}_j)] \|_{L^p(\Sigma_t)}  \ls \f{\ep^{\f 32}}{(2-p)^{\f 1p}};
	\end{equation}
	we omit the details. 
	
	Combining \eqref{eq:rodnianski.trick.trans.decomp}, \eqref{eq:rodnianski.trick.int.1.final}, \eqref{eq:rodnianski.trick.int.2} and \eqref{eq:rodnianski.trick.int.3} yields the lemma. \qedhere
\end{proof}

\subsubsection{The main weighted $\protect W^{1,\f 2{s'-s''}}$ estimates for $\rd_t \mfg$}\label{sec:weighted.rdtg}

\begin{proposition}\label{prop:elliptic.dtg}
	Decomposing $\gamma, \bt^i, N$ as in \eqref{eq:metric.decomposition}, the following estimates hold for all $t\in [0,T_B)$:
	\begin{equation}\label{eq:dtg.main}
	|\rd_t N_{asymp}|(t) + \sum_{\widetilde{\mfg} \in \{\widetilde{\gamma}, \bt^i,\widetilde{N}\}}  \|\rd_t \widetilde{\mfg} \|_{W^{1,\f 2{s'-s''}}_{1-s'+s''-2\alp}(\Sigma_t)} \ls \ep^{\f 32}.
	\end{equation}
	Using also Proposition~\ref{prop:Sobolev.weighted}, it follows moreover that
	\begin{equation}\label{eq:dtg.improve.bootstrap}
	|\rd_t N_{asymp}|(t) + \sum_{\widetilde{\mfg} \in \{\widetilde{\gamma}, \bt^i,\widetilde{N}\}}  \|\rd_t \widetilde{\mfg} \|_{L^\i_{1-2\alp}(\Sigma_t)} \ls \ep^{\f 32}.
	\end{equation}
\end{proposition}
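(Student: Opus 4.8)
\textbf{Proof proposal for Proposition~\ref{prop:elliptic.dtg}.}

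The plan is to differentiate each of the elliptic equations \eqref{Nellipticequation}--\eqref{betaellipticequation} by $\rd_t$ and invert $\Delta$, setting up a self-improving (fixed-point) estimate in the norm $\sum_{\widetilde{\mfg}} \|\rd_t \widetilde{\mfg}\|_{W^{1,\frac{2}{s'-s''}}_{1-s'+s''-2\alp}(\Sigma_t)}$. The scheme of the argument is: (1) reduce \eqref{eq:dtg.improve.bootstrap} to \eqref{eq:dtg.main} by Sobolev embedding (Proposition~\ref{prop:Sobolev.weighted}, part 1(a), noting $\frac{2}{s'-s''} > 2 > \frac 1{1}$ so that $W^{1,\frac 2{s'-s''}}_{1-s'+s''-2\alp}\hookrightarrow L^\i_{1-2\alp}$, also picking out $\rd_t N_{asymp}$ from the logarithmic part via Proposition~\ref{prop:basic.elliptic.2}); (2) apply $\rd_t$ to \eqref{Nellipticequation}--\eqref{betaellipticequation} and write schematically $\Delta \rd_t\mfg = \rd_t[\Omega(\mfg)\rd_\alp\phi\rd_\sigma\phi] + \rd_t[\Omega(\mfg)\rd_i\mfg\rd_j\mfg]$ (being careful with the exact structure of the $N^{-1}\rd_i N$ terms in \eqref{betaellipticequation}, which is why Lemma~\ref{lem:dt.dg.dg.est} is stated with $\log(Ne^{-\gamma})$); (3) control the right-hand side; (4) invert $\Delta$ using Proposition~\ref{prop:elliptic.1st.der} (with $p = \frac 2{s'-s''}$, $\sigma$ chosen so that $1 - \frac 2p - \sigma' = 1-s'+s''-2\alp$ after absorbing a bit of weight loss into $\sigma'$) to land the output in $W^{1,\frac 2{s'-s''}}_{1-s'+s''-2\alp}$, and absorb the small factor $\ep^{\frac 54}\sum\|\rd_t\widetilde{\mfg}\|$ coming from the metric self-interaction terms into the left side.

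The core of step (3) is organizing the scalar-field contributions, decomposing $\phi = \rphi + \sum_{k=1}^3 \tphi$ so that the quadratic terms split into: $\rd_t[\Omega(\mfg)\rd\rphi\,\rd\rphi]$, handled by Lemma~\ref{lem:rodnianski.trick.no.need}; the mixed $\rd_t[\Omega(\mfg)\rd\tphi\,\rd\rphi]$ and the parallel terms $\rd_t[\Omega(\mfg)\rd\tphi\,\rd\tphi]$ (same $k$), handled by Lemma~\ref{lem:rodnianski.trick.1}, where the key gain is rewriting $\rd_t$ as $NL_k + (NX_k^i + \bt^i)\rd_i$ so that $L_k$ is a good derivative for $\tphi$ and the spatial-derivative piece $\rd_i[\Omega(\mfg)(NX_k^i+\bt^i)(\rd_\alp\tphi)(\rd_\sigma\tphi)]$ is a total spatial divergence whose $\Delta^{-1}$-inversion gains a full derivative; the transversal interaction terms $\rd_t[\Omega(\mfg)\rd\tphi\,\rd\widetilde{\phi}_j]$ with $k\neq j$, handled by Lemma~\ref{lem:rodnianski.trick.2}, which produces a similar divergence structure modulo an $L^p(\Sigma_t)$ error (for $p<2$) of size $\frac{\ep^{3/2}}{(2-p)^{1/p}}$; and finally the pure metric terms $\rd_t[\frac{e^{2\gamma}}{N^2}|\mathfrak L\bt|^2]$ and $\rd_t[\rd_i(\log(Ne^{-\gamma}))(\mathfrak L\bt)_{jl}]$, handled by Lemma~\ref{lem:dt.dg.dg.est}, whose output is exactly of the form $\ep^{15/4} + \ep^{5/4}\sum_{\widetilde{\mfg}}\|\rd_t\widetilde{\mfg}\|_{W^{1,\frac{2}{1+s'-s''}}_{\cdots}}$. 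One then notes that terms written as $\rd_i[\,\cdots]$ are handled by $\rd_i\Delta^{-1}\rd_i = $ (bounded on appropriate weighted $L^p$), so $\Delta^{-1}\rd_i[F^i]$ is controlled by $\|F^i\|$ in a weighted $L^p$ rather than requiring a derivative of $F^i$; this is where the $L^{\frac 2{1+s'-s''}}$ bounds (rather than $L^{\frac 2{s'-s''}}$) on the RHS feed, through Calderón--Zygmund / Proposition~\ref{prop:elliptic.1st.der} applied to the divergence form, into the desired $W^{1,\frac{2}{s'-s''}}$ estimate on $\rd_t\mfg$ after accounting for the $H^{s'}$-type room (the loss $s'-s''$ of Sobolev exponent being exactly what makes $\frac{2}{s'-s''}$ finite).

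The main obstacle, as flagged in Section~\ref{sec:intro.elliptic}, is the transversal interaction term (Lemma~\ref{lem:rodnianski.trick.2}): the decomposition of $\rd_t$ into a good null derivative plus spatial derivatives is now \emph{spatially dependent} (one must use $L_k$ away from the $u_j$-singular zone and $L_j$ away from the $u_k$-singular zone, patched by the cutoffs $\widetilde{\zeta}_{\mathrm{int}}, \widetilde{\zeta}_{\mathrm{ang}}$), so the patched vector field is not regular — in particular $\rd_i\widetilde{\zeta}_{\mathrm{ang}}$ is only $O(|x-z|^{-1})$, which fails to be in $L^2_{loc}$ in two dimensions. Consequently the divergence-form rewriting produces an error term that can only be estimated in $L^p(\Sigma_t)$ for $p<2$, with a constant blowing up like $(2-p)^{-1/p}$ as $p\to 2$. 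This is why the target norm involves $\frac{2}{s'-s''}$ rather than $\infty$: after inverting $\Delta$ on a divergence of an $L^p$ function and using the available $H^{s'}$-regularity of the scalar field to trade a bit of regularity for integrability, one picks some fixed $p<2$ (depending only on $s'-s''$) and arrives at $W^{1,\frac{2}{s'-s''}}$ with weights $1-s'+s''-2\alp$. Once all RHS pieces are bounded by $C\ep^{\frac 32} + C\ep^{\frac 54}\sum_{\widetilde{\mfg}}\|\rd_t\widetilde{\mfg}\|_{W^{1,\frac{2}{s'-s''}}_{1-s'+s''-2\alp}(\Sigma_t)}$, choosing $\ep_0$ small lets one absorb the second term and conclude \eqref{eq:dtg.main}; then \eqref{eq:dtg.improve.bootstrap} follows by the stated Sobolev embedding.
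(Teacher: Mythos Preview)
Your proposal correctly identifies the architecture: differentiate \eqref{Nellipticequation}--\eqref{betaellipticequation} by $\rd_t$, split the scalar-field inhomogeneity via Lemmas~\ref{lem:rodnianski.trick.no.need}, \ref{lem:rodnianski.trick.1}, \ref{lem:rodnianski.trick.2} into a divergence piece $\rd_i\mathfrak H^i_\mfg$ (with $\mathfrak H^i_\mfg\in L^\infty$, compactly supported) plus an $L^p$ remainder $\mathfrak F_\mfg$ for some fixed $p<2$, treat the pure metric terms $\mathfrak G_\mfg$ via Lemma~\ref{lem:dt.dg.dg.est}, invert $\Delta$, and absorb.

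There is, however, a genuine gap in your absorption step. You claim the right-hand side is bounded by $C\ep^{3/2} + C\ep^{5/4}\sum_{\widetilde{\mfg}}\|\rd_t\widetilde{\mfg}\|_{W^{1,2/(s'-s'')}_{1-s'+s''-2\alp}}$, i.e.\ in the \emph{target} norm. But Lemma~\ref{lem:dt.dg.dg.est} only yields the \emph{weaker} norm $\|\rd_t\widetilde{\mfg}\|_{W^{1,2/(1+s'-s'')}_{-s'+s''-2\alp}}$ on the right. To run your one-step absorption you would need $W^{1,2/(s'-s'')}_{1-s'+s''-2\alp}\hookrightarrow W^{1,2/(1+s'-s'')}_{-s'+s''-2\alp}$; checking part~2 of Proposition~\ref{prop:Sobolev.weighted} termwise gives the requirement $\sigma_2-\sigma_1>2(\tfrac1{p_1}-\tfrac1{p_2})$, which here reads $1>1$ --- the embedding is exactly borderline and fails. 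So the loop does not close directly in the strong norm.

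The paper resolves this with a two-pass argument. First invert $\Delta$ on $\mathfrak F_\mfg+\mathfrak G_\mfg$ via Proposition~\ref{prop:basic.elliptic.2} with $p=\tfrac{2}{1+s'-s''}$ (not Proposition~\ref{prop:elliptic.1st.der}, which is reserved for the divergence piece $\rd_i\mathfrak H^i_\mfg$); this lands in $W^{2,2/(1+s'-s'')}_{-s'+s''-2\alp}$, and together with the $\mathfrak H^i$ contribution one obtains
\[
|\rd_t N_{asymp}|+\sum_{\widetilde{\mfg}}\|\rd_t\widetilde{\mfg}\|_{W^{1,2/(1+s'-s'')}_{-s'+s''-2\alp}} \ls \ep^{3/2} + \ep^{5/4}\sum_{\widetilde{\mfg}}\|\rd_t\widetilde{\mfg}\|_{W^{1,2/(1+s'-s'')}_{-s'+s''-2\alp}},
\]
which closes in the \emph{weaker} norm. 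Only then does one feed this bound back into the $W^{2,2/(1+s'-s'')}$ estimate and Sobolev-embed (part~1(b) of Proposition~\ref{prop:Sobolev.weighted}, which gives precisely $W^{2,2/(1+s'-s'')}_{-s'+s''-2\alp}\hookrightarrow W^{1,2/(s'-s'')}_{1-s'+s''-2\alp}$) to reach \eqref{eq:dtg.main}. Your outline has all the right pieces; the missing step is running the absorption in the lower-integrability space first.
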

\begin{proof}
	The fact that $\gamma, \bt^i, N$ admit the decomposition \eqref{eq:metric.decomposition}, and that $\gamma_{asymp}$ being a constant (and hence $\rd_t \gamma_{asymp} = 0$), is again a consequence of the local existence result in \cite[Theorem~5.4]{HL.elliptic}. From now on, we focus on deriving the estimates using \eqref{Nellipticequation}--\eqref{betaellipticequation}.
	
	\pfstep{Step~1: Decomposition of $\Delta \rd_t \mfg$} Differentiating \eqref{Nellipticequation}--\eqref{betaellipticequation} by $\rd_t$, we obtain, for $\mfg \in \{\gamma, \bt^i, N\}$,
	\begin{equation}\label{eq:eqn.for.dtg}
	\Delta \rd_t \mfg = \mathfrak G_{\mfg} + \mathfrak T_{\mfg},
	\end{equation}
	where $\mathfrak G_{\mfg}$ are the metric terms, given explicitly by
	\begin{equation}\label{eq:def.mfG_mfg}
	\mathfrak G_\gamma := -\rd_t [\f{e^{2\gamma}}{8N} |\mathfrak L\bt|^2],\quad \mathfrak G_N := \rd_t[\f{e^{2\gamma}}{4N} |\mathfrak L\bt|^2],\quad \mathfrak G_{\bt^j} := 2 \rd_t[ \delta^{i k } \delta^{j l } \partial_k (\log (N  e^{-2\gamma})) (\mathfrak L \bt)_{i l }] ;
	\end{equation}
	and, for any $\mfg$, $\mathfrak T_{\mfg}$ takes the schematic form $\mathfrak T_{\mfg} = \rd_t \{ \Omg(\mfg) \rd_\alp \phi \rd_\sigma\phi \}$.
	
	By \eqref{eq:metric.Delta-1.def} and \eqref{eq:eqn.for.dtg}, 
	\begin{equation}\label{eq:eqn.for.dtg.inverted}
	\rd_t\mfg = \Delta^{-1} (\mathfrak G_{\mfg} + \mathfrak T_{\mfg}).
	\end{equation}
	
	\pfstep{Step~1(a): The metric term $\mathfrak G_{\mfg}$ in \eqref{eq:eqn.for.dtg.inverted}} By Lemma~\ref{lem:dt.dg.dg.est}, the terms in \eqref{eq:def.mfG_mfg} can be bounded as follows:
	\begin{equation}\label{eq:dtg.est.4}
	\sum_{\mfg\in \{\gamma, \bt^i, N\}} \|\mathfrak G_\mfg \|_{L^{\f{2}{1+s'-s''}}_{-s'+s''-2\alp+2}(\Sigma_t)} \ls \ep^{\f {15}4} + \ep^{\f 54}\sum_{\widetilde{\mfg} \in \{\widetilde{\gamma}, \bt^i,\widetilde{N}\}} \|\rd_t \widetilde{\mfg}\|_{W^{1,\f{2}{1+s'-s''}}_{-s'+s''-2\alp}(\Sigma_t)}.
	\end{equation}
	
	\pfstep{Step~1(b): The scalar field term $\mathfrak T_{\mfg}$ in \eqref{eq:eqn.for.dtg.inverted}} Expanding 
	$$\mathfrak T_{\mfg} = \rd_t \{ \Omg(\mfg) \rd_\alp \phi \rd_\sigma\phi \} = \rd_t \{ \Omg(\mfg) (\rd_\alp \rphi + \sum_{k=1}^3 \rd_\alp \tphi) (\rd_\sigma\rphi + \sum_{\ell=1}^3 \rd_\sigma \widetilde{\phi}_\ell \},$$ 
	and using Lemmas~\ref{lem:rodnianski.trick.no.need}, \ref{lem:rodnianski.trick.1} and \ref{lem:rodnianski.trick.2}, we obtain a decomposition
	\begin{equation}\label{eq:dtg.est.1}
	\mathfrak T_{\mfg} =  \mathfrak F_{\mfg} + \rd_i \mathfrak H_{\mfg}^i,
	\end{equation}
	where $\mathfrak F_{\mfg}$ and $\mathfrak H_{\mfg}^i$ ($i=1,2$) are smooth and compactly supported in $B(0,R)$ (for each $t$) and 
	\begin{equation}\label{eq:dtg.est.2}
	\| \mathfrak F_{\mfg}\|_{L^{\f 2{1+s'-s''}}(\Sigma_t)} \ls \ep^{\f 32},\quad \| \mathfrak H_{\mfg}^i\|_{L^\i(\Sigma_t)} \ls \ep^{\f 32}.
	\end{equation} 
	
	\pfstep{Step~2: Bounding $\rd_t N_{asymp}$} By \eqref{eq:eqn.for.dtg.inverted} and Proposition~\ref{prop:basic.elliptic.2}, $\rd_t N_{asymp} = \f 1{2\pi} \int_{\Sigma_t} (\mathfrak F_N + \mathfrak G_N + \rd_i \mathfrak H^i_N) \, dx = \f 1{2\pi} \int_{\Sigma_t} (\mathfrak F_N + \mathfrak G_N) \, dx$ (since $\mathrm{supp}(\mathfrak H^i_N)\subseteq B(0,R)$). Hence, by part 2 of Proposition~\ref{prop:Sobolev.weighted}, \eqref{eq:dtg.est.4} and \eqref{eq:dtg.est.2},
	\begin{equation}\label{eq:dtg.est.5}
	\begin{split}
	|\rd_t N_{asymp}|(t) = &\: \left| \f 1{2\pi} \int_{\Sigma_t} (\mathfrak F_N + \mathfrak G_N ) \, dx \right| \ls \| \mathfrak F_N \|_{L^{\f 2{1+s'-s''}}_{-s'+s''+\alp+1}(\Sigma_t)} + \|\mathfrak G_N \|_{L^{\f 2{1+s'-s''}}_{-s'+s''+\alp+1}(\Sigma_t)} \\
	\ls &\: \ep^{\f 32} + \ep^{\f 54}\sum_{\widetilde{\mfg} \in \{\widetilde{\gamma}, \bt^i,\widetilde{N}\}} \|\rd_t \widetilde{\mfg}\|_{W^{1,\f{2}{1+s'-s''}}_{-s'+s''-2\alp}(\Sigma_t)}.
	\end{split}
	\end{equation}
	
	\pfstep{Step~3: Bounding $\Delta^{-1}(\mathfrak F_{\mfg} + \mathfrak G_{\mfg})$} Using the obvious notation $\rd_t\mfg_{asymp} = \rd_t N_{asymp}$ for $\mfg = N$, and $\rd_t \mfg_{asymp} = 0$ for $\mfg \in \{ \gamma, \bt^i \}$. By Proposition~\ref{prop:basic.elliptic.2}, 	
	\begin{equation}\label{eq:dtg.est.6}
	\begin{split}
	&\: \sum_{\mfg\in \{\gamma,\bt^i,N\}} \| \Delta^{-1} ( \mathfrak F_{\mfg} + \mathfrak G_\mfg) - \rd_t \mfg_{asymp}(t) \om(|x|)\log|x| \|_{W^{2,\f{2}{1+s'-s''}}_{-s'+s''-2\alp}(\Sigma_t)} \\
	\ls &\: \|\mathfrak F_\mfg + \mathfrak G_\mfg \|_{L^{\f{2}{1+s'-s''}}_{-s'+s''-2\alp+2}(\Sigma_t)} 
	\ls \ep^{\f 32} + \ep^{\f 54}\sum_{\widetilde{\mfg} \in \{\widetilde{\gamma}, \bt^i,\widetilde{N}\}} \|\rd_t \widetilde{\mfg}\|_{W^{1,\f{2}{1+s'-s''}}_{-s'+s''-2\alp}(\Sigma_t)},
	\end{split}
	\end{equation}
	where we have used \eqref{eq:dtg.est.2}, the support properties of $\mathfrak F_{\mfg}$, and \eqref{eq:dtg.est.4}.
	
	Sobolev embedding (1(b) of Proposition~\ref{prop:Sobolev.weighted}) applied to \eqref{eq:dtg.est.6} gives additionally that
	\begin{equation}\label{eq:dtg.est.7}
	\begin{split}
	&\: \|\Delta^{-1} ( \mathfrak F_{\mfg} + \mathfrak G_\mfg) - \rd_t \mfg_{asymp}(t) \om(|x|)\log|x| \|_{W^{1,\f{2}{s'-s''}}_{1-s'+s''-2\alp}(\Sigma_t)}\\
	\ls &\: \ep^{\f 32} + \ep^{\f 54}\sum_{\widetilde{\mfg} \in \{\widetilde{\gamma}, \bt^i,\widetilde{N}\}} \|\rd_t \widetilde{\mfg}\|_{W^{1,\f{2}{1+s'-s''}}_{-s'+s''-2\alp}(\Sigma_t)}.
	\end{split}
	\end{equation}
	
	\pfstep{Step~3: Bounding $\Delta^{-1} \rd_t \mathfrak H_{\mfg}^i$} Using Proposition~\ref{prop:elliptic.1st.der}, \eqref{eq:dtg.est.2}, $\mathrm{supp}(\mathfrak H_{\gamma}^i)\subseteq B(0,R)$, as well as part 2 of Proposition~\ref{prop:Sobolev.weighted}, we have
	\begin{equation}\label{eq:dtg.est.8}
	\sum_{\mfg\in \{\gamma,\bt^j,N\}} \|\rd_i \Delta^{-1} \mathfrak H^i_\mfg\|_{W^{1,\f{2}{1+s'-s''}}_{-s'+s''-2\alp}(\Sigma_t)} \ls \sum_{\mfg\in \{\gamma,\bt^j,N\}} \|\rd_i \Delta^{-1} \mathfrak H^i_\mfg \|_{W^{1,\f{2}{s'-s''}}_{1-s'+s''-\alp}(\Sigma_t)}\ls \ep^{\f 32}.
	\end{equation}
	\pfstep{Step~4: Obtaining the $W^{1,\f{2}{1+s'-s''}}_{-s'+s''-2\alp}(\Sigma_t)$ estimates} We now combine \eqref{eq:eqn.for.dtg.inverted}, \eqref{eq:dtg.est.1}, \eqref{eq:dtg.est.5}, \eqref{eq:dtg.est.6} and (first term in) \eqref{eq:dtg.est.8} to obtain\footnote{The reader may have noted that we have not used the estimate \eqref{eq:dtg.est.7} proven above. It will be used in Step~5 below.}
	\begin{equation}\label{eq:easy.elliptic.dtg.almost.final}
	|\rd_t N_{asymp}|(t) + \sum_{\widetilde{\mfg} \in \{\widetilde{\gamma}, \bt^i,\widetilde{N}\}}  \|\rd_t \widetilde{\mfg} \|_{W^{1,\f{2}{1+s'-s''}}_{-s'+s''-2\alp}(\Sigma_t)} \ls \ep^{\f 32} + \ep^{\f 54}\sum_{\widetilde{\mfg} \in \{\widetilde{\gamma}, \bt^i,\widetilde{N}\}} \|\rd_t \widetilde{\mfg}\|_{W^{1,\f{2}{1+s'-s''}}_{-s'+s''-2\alp}(\Sigma_t)}.
	\end{equation}
	Choosing $\ep$ smaller if necessary, we can absorb the term $\ep^{\f 54}\sum_{\widetilde{\mfg} \in \{\widetilde{\gamma}, \bt^i,\widetilde{N}\}} \|\rd_t \widetilde{\mfg}\|_{W^{1,\f{2}{1+s'-s''}}_{-s'+s''-2\alp}(\Sigma_t)}$ on the RHS of \eqref{eq:easy.elliptic.dtg.almost.final} by the corresponding term on the LHS, giving
	\begin{equation}\label{eq:easy.elliptic.dtg.almost.almost.final}
	|\rd_t N_{asymp}|(t) + \sum_{\widetilde{\mfg} \in \{\widetilde{\gamma}, \bt^i,\widetilde{N}\}}  \|\rd_t \widetilde{\mfg} \|_{W^{1,\f{2}{1+s'-s''}}_{-s'+s''-2\alp}(\Sigma_t)} \ls \ep^{\f 32}.
	\end{equation}
	
	\pfstep{Step~5: Obtaining the $W^{1,\f{2}{s'-s''}}_{1-s'+s''-2\alp}(\Sigma_t)$ estimates} Plugging \eqref{eq:easy.elliptic.dtg.almost.almost.final} into \eqref{eq:dtg.est.7}, and combining it with \eqref{eq:eqn.for.dtg.inverted}, \eqref{eq:dtg.est.1}, and (second term in) \eqref{eq:dtg.est.8}, we thus obtain the desired estimate \eqref{eq:dtg.main}. \qedhere

\end{proof}

\subsection{Elliptic estimates for $\Db^{s'}\rd_i$ derivatives of $\rd_t\mfg$}\label{sec:dtg.elliptic.top}

\begin{proposition}
	Let $\mathcal P_0$ be a cutoff in frequency (corresponding to the elliptic gauge coordinates) to $|\xi| \ls 1$. Then, for every $t\in [0,T_B)$,
	\begin{equation}\label{eq:DsdidtN.main}
	\|\Db^{s'} [\rd_i \rd_t N - (\rd_t N_{asymp})(t) \mathcal P_0  \rd_i (\om (|x|) \log |x|)] \|_{L^2(\Sigma_t)} \ls \ep^{\f 32},
	\end{equation}
	\begin{equation}\label{eq:Dsdidtgammabt.main}
	\|\Db^{s'} \rd_i \rd_t \gamma \|_{L^2(\Sigma_t)} + \|\Db^{s'} \rd_i \rd_t \bt \|_{L^2(\Sigma_t)} \ls \ep^{\f 32}.
	\end{equation}
\end{proposition}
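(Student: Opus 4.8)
The plan is to differentiate the elliptic equations \eqref{Nellipticequation}--\eqref{betaellipticequation} once in $\rd_t$ and then apply the operator $\Db^{s'}\rd_i\Delta^{-1}$, exactly as in the proof of Proposition~\ref{prop:elliptic.dtg}, but now controlling the inhomogeneous terms one order higher in $L^2$-based spaces rather than in $W^{1,\f{2}{s'-s''}}$ spaces. Concretely, recalling \eqref{eq:eqn.for.dtg}--\eqref{eq:eqn.for.dtg.inverted}, we write $\rd_t\mfg = \Delta^{-1}(\mathfrak G_{\mfg}+\mathfrak T_{\mfg})$, apply $\Db^{s'}\rd_i$, and use the $L^2$-boundedness of $\Db^{s'}\rd_i\rd_j\Delta^{-1}$ (after peeling off the logarithmic contribution, which for $N$ is precisely the $(\rd_t N_{asymp})\mathcal P_0\rd_i(\om\log|x|)$ term, and which is absent for $\gamma$ and $\bt$ since $\gamma_{asymp}$ is constant and $\bt$ has no logarithmic part). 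Thus it suffices to prove $\|\Db^{s'}\rd_i(\mathfrak G_{\mfg})\|_{L^2(\Sigma_t)} + \|\Db^{s'}\rd_i(\mathfrak T_{\mfg})\|_{L^2(\Sigma_t)}\ls \ep^{\f 32}$, modulo the logarithmic term, for a suitable reformulation (one may instead estimate $\Db^{s'}$ of $\mathfrak G_{\mfg}$ and $\mathfrak T_{\mfg}$ in a negative-one-order weighted space, e.g.\ $\Db^{s'}$ applied to $\mathfrak G_{\mfg}$ in $\dot H^{-1}$-type norms, but it is cleaner to carry the extra $\rd_i$ through to the source).

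For the metric term $\mathfrak G_{\mfg}$ (see \eqref{eq:def.mfG_mfg}), which is schematically $\rd_t$ of $\Omg(\mfg)(\rd_x\mfg)^2$, we differentiate once more in $\rd_i$ and apply Lemma~\ref{lem:frac.product} after distributing $\la x\ra$ weights, using the bootstrap assumptions \eqref{eq:BA.g.asymp}--\eqref{eq:BA.g.L4}, the already-proven estimates \eqref{eq:dtg.main}--\eqref{eq:dtg.improve.bootstrap} of Proposition~\ref{prop:elliptic.dtg}, and Proposition~\ref{prop:g.est.top.L2} for the $\Db^{s'}\rd^2_x\mfg\in L^2$ bound; the structure is entirely parallel to \eqref{eq:H2+s'.metric.3} and Lemma~\ref{lem:dt.dg.dg.est}, with all terms closing with a power $\ep^{\f 52}$ or $\ep^{\f 54}\cdot(\text{bootstrap})$, the latter being absorbable. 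For the scalar field term $\mathfrak T_{\mfg} = \rd_t\{\Omg(\mfg)\rd_\alp\phi\rd_\sigma\phi\}$, we expand $\phi = \rphi + \sum_k\tphi$. The $\rphi\cdot\rphi$ and $\rphi\cdot\tphi$ pieces are handled directly using $\rphi\in H^{2+s'}$ (bootstrap \eqref{BA:rphi}), the $L^\i$ bounds \eqref{BA:Li}, and the flux/energy control of $\tphi$ — these are lower order because $\rphi$ carries two good derivatives. For the $\tphi\cdot\widetilde{\phi}_j$ pieces we must again distinguish $j=k$ from $j\neq k$ exactly as in Section~\ref{sec:metric.inho.2}: when $j=k$ we use that $L_k$ is a good derivative for $\tphi$ together with the $\de$-independent flux bound \eqref{BA:flux.for.tphi.improved.2}, and the decomposition of $\rd_t$ as in \eqref{eq:dt.decompose}; when $j\neq k$ we use the transversality cutoffs $\widetilde\zeta_{\mathrm{int}}$, $\widetilde\zeta_{\mathrm{ang}}$ of Lemma~\ref{lem:rodnianski.trick.2} together with the improved estimate \eqref{BA:away.from.singular} away from $S^k_\de$ and the mixed-norm bound of Lemma~\ref{lem:mixed.norm}.

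The main obstacle I anticipate is the top-order ($\Db^{s'}$-differentiated) version of the interaction estimate (the analogue of Lemma~\ref{lem:rodnianski.trick.2}): one needs that after applying $\Db^{s'}$ the total-divergence structure is preserved in a form that still lands in $L^2$, since a priori $\Db^{s'}$ of a product of an $L^\i$ cutoff with a compactly-supported $L^2$ function need not be $L^2$ unless the cutoff is smooth at the relevant scale — and $\widetilde\zeta_{\mathrm{int}}$ has derivatives of size $\de^{-1}$ and $\widetilde\zeta_{\mathrm{ang}}$ has derivatives blowing up like $|x-z|^{-1}$. I expect this is handled by noting that $\Db^{s'}$ with $s'<\f12$ loses less than half a derivative, so the $\de^{-1}$ from $\rd\widetilde\zeta_{\mathrm{int}}$ is compensated by the $\de^{\f12}$ volume of its support together with the $\de^{-\f12}$-bounded $L^2$ norms, with an extra $\de^{-s'}$ that is still beaten by a net power of $\de$ provided $s'<\f12$ — i.e.\ one carries out the analysis in the $(u_k,u_{k'})$ coordinates where fractional derivatives interact cleanly with the cutoffs and with Corollary~\ref{cor:diffeo}. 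For the $\widetilde\zeta_{\mathrm{ang}}$ term, which produced only an $L^p$ ($p<2$) bound in Lemma~\ref{lem:rodnianski.trick.2} due to the $|x-z|^{-1}$ singularity, one must check that after the $\Db^{s'}\rd_i\Delta^{-1}$ smoothing this still yields an $L^2$ bound on $\Db^{s'}\rd_i\rd_t\mfg$; this should follow because $\rd_i\Delta^{-1}$ gains a full derivative, so even an $L^p$, $p$ slightly below $2$, source becomes $W^{1,p}\hookrightarrow$ something embedding into the $\Db^{s'}$-dual scale. Assembling these, the final estimates \eqref{eq:DsdidtN.main}--\eqref{eq:Dsdidtgammabt.main} follow by the same absorption argument as at the end of Proposition~\ref{prop:elliptic.dtg}.
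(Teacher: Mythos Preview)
Your reduction to estimating $\|\Db^{s'}\rd_i(\mathfrak G_{\mfg})\|_{L^2}+\|\Db^{s'}\rd_i(\mathfrak T_{\mfg})\|_{L^2}$ is one full order too strong and is not achievable: applying $\Db^{s'}\rd_i$ to $\mathfrak T_{\mfg}=\rd_t\{\Omg(\mfg)(\rd\phi)^2\}$ produces terms with three derivatives on $\tphi$, for which there is no $\de$-uniform $L^2$ control. The operator $\Db^{s'}\rd_i\Delta^{-1}$ has order $s'-1<0$ at high frequency, so one in fact only needs the source in $H^{s'-1}$-type spaces (your parenthetical $\dot H^{-1}$ alternative), and this is precisely what the paper exploits. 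Your anticipated ``top-order version of Lemma~\ref{lem:rodnianski.trick.2}'' is therefore unnecessary.

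The paper's argument is considerably simpler than you outline. One splits $\Db^{s'}\rd_i\rd_t N$ into a low-frequency piece $\Db^{s'}\mathcal P_0[\rd_i\rd_t N - (\rd_tN_{asymp})\rd_i(\om\log|x|)]$ and a high-frequency piece $\Db^{s'}(I-\mathcal P_0)\rd_i\rd_t N$. On low frequencies $\Db^{s'}\mathcal P_0$ is $L^2$-bounded, so this piece reduces to $\|\rd_i\rd_t\widetilde N\|_{L^2}$, which follows by H\"older from the already-proven weighted $W^{1,\f{2}{s'-s''}}$ bound of Proposition~\ref{prop:elliptic.dtg}. On high frequencies one has $\|\Db^{s'}(I-\mathcal P_0)\rd_i\rd_t N\|_{L^2}\ls\|\Db^{-1+s'}\Delta\rd_t N\|_{L^2}$, and now one \emph{reuses} the decomposition $\Delta\rd_t N=\mathfrak F_N+\mathfrak G_N+\rd_j\mathfrak H^j_N$ already established in Section~\ref{sec:dtg.elliptic}: the $\mathfrak F_N,\mathfrak G_N\in L^{3/2}$ bounds combine with the Sobolev embedding $\Db^{-1+s'}:L^{3/2}\to L^2$, while for the divergence term one needs only $\|\Db^{s'}\mathfrak H^j_N\|_{L^2}$. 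The key point you miss is that $\mathfrak H^j_N$ contains \emph{no second derivatives of $\phi$} --- it is schematically $(\text{cutoffs})\cdot\Omg(\mfg)(NX^i_k+\bt^i)(\rd\tphi)(\rd\widetilde\phi_j)$ --- so $\|\Db^{s'}\mathfrak H^j_N\|_{L^2}$ follows directly from the fractional product rule together with $\|\rd\Db^{s'}\tphi\|_{L^2}\ls\ep^{3/4}$ and $\de$-independent bounds on the cutoffs: $\|\Db^{s'}[\varpi(1-\widetilde\zeta_{\mathrm{int}})]\|_{L^2}\ls\|\varpi(1-\widetilde\zeta_{\mathrm{int}})\|_{H^1}\ls 1$ (since $|\rd\widetilde\zeta_{\mathrm{int}}|\ls\de^{-1}$ on a set of area $\ls\de^2$) and $\|\Db^{s'}(\varpi\widetilde\zeta_{\mathrm{ang}})\|_{L^2}\ls\|\varpi\widetilde\zeta_{\mathrm{ang}}\|_{W^{1,3/2}}\ls 1$ (since $|x-z|^{-1}\in L^{3/2}_{loc}$). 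No $\de$-power-counting is needed.
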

\begin{proof}
	We only prove \eqref{eq:DsdidtN.main} since it features a low-frequency correction which is not in $L^2$ (coming from $\rd_tN_{asymp}$ potentially non-vanishing). The estimate \eqref{eq:Dsdidtgammabt.main} is similar but slightly simpler; we omit the details.
	
	By \eqref{eq:eqn.for.dtg.inverted} and \eqref{eq:dtg.est.1}, we write
	\begin{equation}\label{eq:DsdidtN.1}
	\begin{split}
	&\: \Db^{s'} \rd_i \rd_t  N -\Db^{s'} \mathcal P_0  [(\rd_t N_{asymp})\rd_i (\om (|x|) \log |x|) ]\\
	= &\: \Db^{s'} \mathcal P_0 \rd_i \rd_t  N -\Db^{s'} \mathcal P_0 [  (\rd_t N_{asymp})\rd_i (\om (|x|) \log |x|)] +  \Db^{s'} (I -\mathcal P_0) \rd_i \rd_t  N\\
	= &\: \underbrace{ \Db^{s'} \mathcal P_0 [\rd_i \Delta^{-1} (\mathfrak F_N + \mathfrak G_N + \rd_j \mathfrak H_N^j) - (\rd_t N_{asymp})\rd_i (\om (|x|) \log |x|)]}_{=:I} + \underbrace{ \Db^{s'} (I - \mathcal P_0) \rd_i \rd_t  N}_{=:II}.
	\end{split}
	\end{equation}
	
	For $I$, we use bounded frequency, i.e.~the fact $\Db^{s'} \mathcal P_0: L^2(\Sigma_t) \to L^2(\Sigma_t)$ is bounded, H\"older's inequality and \eqref{eq:dtg.main} to obtain
	\begin{equation}\label{eq:DsdidtN.2}
	\begin{split}
	\| I \|_{L^2(\Sigma_t)} \ls &\: \| \rd_i \Delta^{-1} (\mathfrak F_N + \mathfrak G_N + \rd_j \mathfrak H_N^j) - (\rd_t N_{asymp})\rd_i (\om (|x|) \log |x|) \|_{L^2(\Sigma_t)} \\
	\ls &\:  \|\rd_i \rd_t \widetilde{N} \|_{L^2(\Sigma_t)} \ls \| \la x\ra^{2-s'+s''-\alp} \rd_i \rd_t \widetilde{N} \|_{L^{\f 2{s'-s''}}(\Sigma_t)} \|\la x\ra^{-2+s'-s''+\alp} \|_{L^{\f{2}{1-s'+s''}}(\Sigma_t)} \ls \ep^{\f 32}.
	\end{split}
	\end{equation}

	For $II$ in \eqref{eq:DsdidtN.1}, we use that the frequency is bounded away from $0$ so that by Plancherel's theorem,
	\begin{equation}\label{eq:DsdidtN.3}
	\|II\|_{L^2(\Sigma_t)} = \|\Db^{s'} (I - \mathcal P_0) \rd_i \rd_t  N \|_{L^2(\Sigma_t)} \ls \| \Db^{-1+s'} \Delta \rd_t N \|_{L^2(\Sigma_t)}.
	\end{equation}
	
	The remaining of the proof concerns bounding \eqref{eq:DsdidtN.3}. First, by \eqref{eq:eqn.for.dtg} and \eqref{eq:dtg.est.1}, Sobolev embedding ($\Db^{-1+s'}: L^{\f 32}(\Sigma_t) \to L^2(\Sigma_t)$ is bounded) and Plancherel's theorem,
	\begin{equation}\label{eq:DsdidtN.4}
	\| \Db^{-1+s'} \Delta \rd_t N \|_{L^2(\Sigma_t)} \ls \|\mathfrak F_N\|_{L^{\f 32}(\Sigma_t)} + \|\mathfrak G_N\|_{L^{\f 32}(\Sigma_t)} + \max_i \|\Db^{s'} \mathfrak H_N^i\|_{L^2(\Sigma_t)}.
	\end{equation}
	
	The $\mathfrak F_N$ and $\mathfrak G_N$ terms are easier. Since $\mathrm{supp}(\mathfrak F_N)\subseteq B(0,R)$, by H\"older's inequality and \eqref{eq:dtg.est.2},
	\begin{equation}\label{eq:DsdidtN.5}
	\|\mathfrak F_N\|_{L^{\f 32}(\Sigma_t)} \ls \ep^{\f 32}.
	\end{equation}
	Using H\"older's inequality, $s'-s'' < \f 13$, \eqref{eq:dtg.est.4} and \eqref{eq:easy.elliptic.dtg.almost.almost.final}, we also have
	\begin{equation}\label{eq:DsdidtN.6}
	\|\mathfrak G_N \|_{L^{\f 32}(\Sigma_t)} \ls \|\mathfrak G_N \|_{L^{\f{2}{1+s'-s''}}_{-s'+s''-2\alp+2}} \|\la x\ra^{s'-s''+2\alp-2}\|_{L^{\f 6{1-3s'+3s''}}} \ls \|\mathfrak G_N \|_{L^{\f{2}{1+s'-s''}}_{-s'+s''-2\alp+2}} \ls \ep^{\f 32}.
	\end{equation}
	
	To handle $\mathfrak H_N^i$, we need a more explicit form of $\mathfrak H_N^i$. Going back to Lemmas~\ref{lem:rodnianski.trick.no.need}, \ref{lem:rodnianski.trick.1} and \ref{lem:rodnianski.trick.2}, we see that schematically $\mathfrak H_N^i$ takes one of the following four forms
	\begin{equation}\label{eq:DsdidtN.7}
	\begin{split}
	\Omega(\mfg)(N X_k^i +\bt^i) (\rd_\alp \tphi) (\rd_\sigma \tphi),\quad \Omega(\mfg)(N X_k^i +\bt^i) (\rd_\alp \tphi) (\rd_\sigma \phi_{reg}),\\
	\underbrace{(1-\widetilde{\zeta}_{\mathrm{int}}) \widetilde{\zeta}_{\mathrm{ang}} \Omega(\mfg)(N X_k^i +\bt^i) (\rd_\alp \tphi) (\rd_\sigma \widetilde{\phi}_j)}_{=:*},\quad (1-\widetilde{\zeta}_{\mathrm{int}}) (1-\widetilde{\zeta}_{\mathrm{ang}}) \Omega(\mfg)(N X_j^i +\bt^i) (\rd_\alp \tphi) (\rd_\sigma \widetilde{\phi}_j).
	\end{split}
	\end{equation}
	
	They can all be handled similarly; with the last two terms being slightly harder due to the cutoffs $\widetilde{\zeta}_{\mathrm{int}}$ and $\widetilde{\zeta}_{\mathrm{ang}}$. We will take the $*$ term as an example. We first handle the fractional derivatives of the cutoffs. First, by interpolation, \eqref{eq:zeta.int.est} and the support of $\rd_x\widetilde{\zeta}$,
	\begin{equation}\label{eq:DsdidtN.cutoff.1}
	\| \Db^{s'} [\varpi (1-\widetilde{\zeta}_{\mathrm{int}})] \|_{L^2(\Sigma_t)} \ls \|\varpi (1-\widetilde{\zeta}_{\mathrm{int}})\|_{H^1(\Sigma_t)} \ls 1.
	\end{equation}
	Also, by Sobolev embedding and \eqref{eq:zeta.ang.est},
	\begin{equation}\label{eq:DsdidtN.cutoff.2}
	\| \Db^{s'} (\varpi \widetilde{\zeta}_{\mathrm{ang}}) \|_{L^2(\Sigma_t)} \ls \|\varpi \widetilde{\zeta}_{\mathrm{ang}} \|_{W^{1,\f 32}(\Sigma_t)}  \ls 1.
	\end{equation}
	
	Notice now that since $\mathrm{supp}(\phi)\subseteq B(0,R)$, we have $* = \varpi^2 *$. Therefore, by repeated applications of Lemma~\ref{lem:frac.product}, we have
	\begin{equation}\label{eq:DsdidtN.8}
	\begin{split}
	\|\Db^{s'} (*)\|_{L^2(\Sigma_t)} \ls &\: \|\Db^{s'} [\varpi (1-\widetilde{\zeta}_{\mathrm{int}})]\|_{L^2(\Sigma_t)} \|\varpi \widetilde{\zeta}_{\mathrm{ang}}\|_{L^\i(\Sigma_t)} \|\rd \tphi\|_{L^\i(\Sigma_t)} \|\rd \widetilde{\phi}_j\|_{L^\i(\Sigma_t)} \\
	&\: + \| \varpi (1-\widetilde{\zeta}_{\mathrm{int}}) \|_{L^\i(\Sigma_t)} \|\Db^{s'}(\varpi \widetilde{\zeta}_{\mathrm{ang}}) \|_{L^2(\Sigma_t)} \|\rd \tphi\|_{L^\i(\Sigma_t)} \|\rd \widetilde{\phi}_j\|_{L^\i(\Sigma_t)} \\
	&\: + \| \varpi (1-\widetilde{\zeta}_{\mathrm{int}}) \|_{L^\i(\Sigma_t)} \|\varpi \widetilde{\zeta}_{\mathrm{ang}}\|_{L^\i(\Sigma_t)} \|\rd \Db^{s'} \tphi\|_{L^2(\Sigma_t)} \|\rd \widetilde{\phi}_j\|_{L^\i(\Sigma_t)} \\
	&\: + \| \varpi (1-\widetilde{\zeta}_{\mathrm{int}}) \|_{L^\i(\Sigma_t)} \|\varpi \widetilde{\zeta}_{\mathrm{ang}}\|_{L^\i(\Sigma_t)} \|\rd \tphi\|_{L^\i(\Sigma_t)} \|\rd \Db^{s'} \widetilde{\phi}_j\|_{L^2(\Sigma_t)} \ls \ep^{\f 32},
	\end{split}
	\end{equation}
	where in the last estimate we have used \eqref{eq:DsdidtN.cutoff.1}, \eqref{eq:DsdidtN.cutoff.2}, and the bootstrap assumptions \eqref{tphiH3/2bootstrap} and \eqref{BA:Li}.
	
	We can handle the other terms in \eqref{eq:DsdidtN.7} in a similar manner as \eqref{eq:DsdidtN.8}, so that when combined with \eqref{eq:DsdidtN.3}, \eqref{eq:DsdidtN.4}, \eqref{eq:DsdidtN.5} and \eqref{eq:DsdidtN.6}, we obtain the following bound for $II$ in \eqref{eq:DsdidtN.1}:
	\begin{equation}\label{eq:DsdidtN.9}
	\|II\|_{L^2(\Sigma_t)} \ls \ep^{\f 32}.
	\end{equation}
	
	Finally, combining \eqref{eq:DsdidtN.1}, \eqref{eq:DsdidtN.2} and \eqref{eq:DsdidtN.9}, we obtain \eqref{eq:DsdidtN.main}. \qedhere

\end{proof}

 \subsection{Estimate of three derivatives of the metric}\label{sec:third.der.metric}
 
 Our final elliptic estimate concerns third derivatives of the metric; see Proposition~\ref{dnablanablagwithloss.prop}. Note that
 \begin{enumerate}
 \item the estimate allows for at most one $\rd_t$ derivative, and
 \item the bound blows up as $\de\to 0$.
 \end{enumerate}
 \begin{prop} \label{dnablanablagwithloss.prop}
 The following estimate holds for all $t\in [0,T_B)$:
 	\begin{equation} \label{dnablanablagwithloss.eq}
 	 \sum_{\mfg \in \{ \gamma,\,\bt^i,\,N\}} \| \partial \partial_x^2 \mfg \|_{L^2(\Sigma_t)} \lesssim \epsilon^{\frac{3}{2}} \cdot \delta^{-\frac{1}{2}}.
 	\end{equation}
 \end{prop}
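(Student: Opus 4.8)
The plan is to differentiate the elliptic equations \eqref{Nellipticequation}--\eqref{betaellipticequation} once more by a spatial derivative $\rd_\ell$, then invert the Laplacian, gaining two derivatives, and estimate the resulting inhomogeneity in $L^2(\Sigma_t)$. For a purely spatial top derivative $\rd^3_{\ell i j}\mfg$, this is governed by $L^2$-boundedness of $\rd^3_{\ell i j}\Delta^{-1}$ (after dealing with the logarithmic tails via the cutoff $1-\varpi$ exactly as in Proposition~\ref{prop:elliptic.Besov.weighted}), so the task reduces to bounding $\|\rd_\ell (\mbox{RHS of \eqref{Nellipticequation}--\eqref{betaellipticequation}})\|_{L^2(\Sigma_t)}$. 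The metric self-interaction terms $\Omega(\mfg)\rd_i\mfg\rd_j\mfg$, once differentiated, produce terms like $\Omega(\mfg)\rd^2_x\mfg\,\rd_x\mfg$, which by H\"older and the bootstrap assumptions \eqref{eq:BA.g.asymp}--\eqref{eq:BA.g.L4} together with the already-established $\|\rd^2_x\mfg\|_{L^\i_{2-\alp/2}}\ls\ep^{\f 32}$ of Proposition~\ref{prop:elliptic.Besov.weighted} are bounded by $\ep^{\f 52}$; no $\de$-loss arises here. The scalar field term $\Omega(\mfg)\rd_\alp\phi\rd_\sigma\phi$ differentiated gives $\Omega(\mfg)\rd_x\rd_\alp\phi\,\rd_\sigma\phi + \Omega(\mfg)\rd_x\Omega(\mfg)\rd_\alp\phi\rd_\sigma\phi$; using $\mathrm{supp}(\phi)\subseteq B(0,R)$, \eqref{BA:Li} for the lower-order factor, and \eqref{tphiH2bootstrap}/\eqref{BA:rphi} for $\|\rd^2\phi\|_{L^2(\Sigma_t)}\ls\ep^{\f 34}\de^{-\f 12}$, this yields the claimed $\ep^{\f 32}\de^{-\f 12}$ bound. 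The first-order-in-time-and-two-spatial case (i.e. $\rd_t\rd^2_x\mfg$) is handled analogously but starting from the $\rd_t$-differentiated equation \eqref{eq:eqn.for.dtg}: differentiating once more by $\rd_\ell$, one bounds $\Delta^{-1}\rd_\ell(\mathfrak G_\mfg + \mathfrak T_\mfg)$ in $L^2$, where the metric part $\mathfrak G_\mfg$ is controlled using Proposition~\ref{prop:elliptic.dtg}'s output $\|\rd_t\widetilde{\mfg}\|_{W^{1,p}}$ bounds and the already-established spatial bounds, and the scalar part, after the Rodnianski-type decomposition $\mathfrak T_\mfg = \mathfrak F_\mfg + \rd_i\mathfrak H^i_\mfg$ of Lemmas~\ref{lem:rodnianski.trick.no.need}--\ref{lem:rodnianski.trick.2}, contributes terms with at most $\rd^2\phi$ or a further $\rd_x$ hitting the already-decomposed pieces.

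Concretely, I would structure the proof in three steps. \emph{Step 1 (reduction to inhomogeneity bounds).} Write each $\rd_\ell\mfg = \Delta^{-1}\Delta\rd_\ell\mfg$, observe that for $|x|\geq 3R$ one commutes with $1-\varpi$ as in Proposition~\ref{prop:elliptic.Besov.weighted} (no scalar field there, weights trivially gained), and for $|x|\leq 3R$ one uses the $L^2$-boundedness of $\rd^2_{ij}\Delta^{-1}$; thus it suffices to bound $\|\rd_\ell\Delta\mfg\|_{L^2(\Sigma_t)}$ (and similarly $\|\rd_\ell\rd_t\Delta\mfg\|_{L^2(\Sigma_t)}$ and the easier $\|\rd_t\Delta\mfg\|_{L^2}$-type terms already present). \emph{Step 2 (spatial top derivatives).} Bound $\|\rd_\ell(\Omega(\mfg)\rd_i\mfg\rd_j\mfg)\|_{L^2}$ and $\|\rd_\ell(\Omega(\mfg)\rd_\alp\phi\rd_\sigma\phi)\|_{L^2}$ using the schematic form \eqref{eq:mfgellipticequation.schematic}, the properties \eqref{eq:def.Omega.g}, the metric bootstrap assumptions, Proposition~\ref{prop:elliptic.Besov.weighted}, and the $H^2$ bootstrap assumptions \eqref{tphiH2bootstrap}, \eqref{BA:rphi}; the only term with a $\de^{-\f 12}$ is $\Omega(\mfg)\rd^2_x\tphi\,\rd_x\phi$ and similar, giving the stated loss. \emph{Step 3 (one $\rd_t$, two $\rd_x$).} Starting from \eqref{eq:eqn.for.dtg.inverted}, differentiate by $\rd_\ell$; the metric part $\rd_\ell\mathfrak G_\mfg$ is controlled by Lemma~\ref{lem:dt.dg.dg.est}-type estimates supplemented with the now-available $W^{1,\f{2}{s'-s''}}$ bounds on $\rd_t\widetilde\mfg$ from Proposition~\ref{prop:elliptic.dtg} and the $L^2$ bounds on $\rd^2_x\mfg$; the scalar part $\rd_\ell\mathfrak T_\mfg$ is expanded via $\phi = \rphi + \sum_k\tphi$ and the decompositions in Lemmas~\ref{lem:rodnianski.trick.no.need}--\ref{lem:rodnianski.trick.2}, with one extra $\rd_\ell$ distributed by Leibniz, producing terms bounded either by $\ep^{\f 32}$ (using \eqref{BA:Li}, \eqref{BA:rphi}, flux/energy bootstrap assumptions) or by $\ep^{\f 32}\de^{-\f 12}$ (when the extra derivative lands on $\tphi$ creating $\rd^2\tphi$, controlled by \eqref{tphiH2bootstrap} and, where the interaction localization matters, Lemma~\ref{lem:mixed.norm} together with Lemmas~\ref{lem:rodnianski.prelim.1}--\ref{lem:rodnianski.prelim.2}).

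The main obstacle I anticipate is the $\rd_t$-and-two-$\rd_x$ estimate, specifically controlling the terms where the additional $\rd_\ell$ lands on the auxiliary functions $\mathfrak H^i_\mfg$ coming from the Rodnianski trick, since $\rd_\ell\mathfrak H^i_\mfg$ involves $\rd_\ell$ of products like $(1-\widetilde\zeta_{\mathrm{int}})\widetilde\zeta_{\mathrm{ang}}\Omega(\mfg)(NX_k^i+\bt^i)(\rd_\alp\tphi)(\rd_\sigma\widetilde\phi_j)$, so that a derivative may hit the sharp cutoffs $\widetilde\zeta_{\mathrm{int}}$ (losing $\de^{-1}$) or $\widetilde\zeta_{\mathrm{ang}}$ (losing $|x-z|^{-1}$), or hit a $\rd\tphi$ creating $\rd^2\tphi$. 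However, since we now only require an $L^2$ bound with an allowed $\de^{-\f 12}$ loss (rather than the borderline $L^{\f{2}{s'-s''}}$ bound of Proposition~\ref{prop:elliptic.dtg}), these terms are in fact more tractable: the $\de^{-1}$ from $\rd_\ell\widetilde\zeta_{\mathrm{int}}$ is multiplied by the volume $\de^2$ of its support (giving net $\de$, hence fine), while $\rd_\ell\widetilde\zeta_{\mathrm{ang}}$ is integrable against the $\rd^2\tphi$ factor with its $\de^{-\f 12}$ once one uses the mixed-norm estimate of Lemma~\ref{lem:mixed.norm} and Lemma~\ref{lem:rodnianski.prelim.2} to push $\widetilde\phi_j$ away from its singular zone. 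So the real work is bookkeeping all these cases and checking that none of them exceeds $\ep^{\f 32}\de^{-\f 12}$; I do not expect any genuinely new analytic input beyond what is already assembled in Sections~\ref{metricsection}.
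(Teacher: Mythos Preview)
Your proposal would work, but it is considerably more elaborate than necessary, and the paper's proof is much shorter. The key simplification you are missing: since the target estimate permits the $\de^{-\f 12}$ loss, the entire Rodnianski-type machinery of Lemmas~\ref{lem:rodnianski.trick.no.need}--\ref{lem:rodnianski.trick.2} (which was built precisely to \emph{avoid} that loss in Proposition~\ref{prop:elliptic.dtg}) is not needed at all. The paper simply observes that by $L^2$-boundedness of $\rd^2_{ij}\Delta^{-1}$ it suffices to bound $\|\rd\Delta\mfg\|_{L^2(\Sigma_t)}$ for any spacetime derivative $\rd$ (including $\rd_t$), then differentiates the schematic equation \eqref{eq:mfgellipticequation.schematic} directly by $\rd$ and collects three pointwise pieces: a harmless decaying term, a metric term $\ep^{\f 54}\la x\ra^{-\f 12}|\rd\rd_x\mfg|$ controlled via \eqref{eq:BA.g.L4} and Proposition~\ref{prop:elliptic.dtg}, and the scalar field term $\ep^{\f 34}|\rd^2\phi|$ controlled crudely by \eqref{BA:rphi} and \eqref{tphiH2bootstrap}. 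No separate treatment of $\rd_t$ versus $\rd_\ell$, no cutoffs $\widetilde\zeta_{\mathrm{int}},\widetilde\zeta_{\mathrm{ang}}$, no decomposition $\mathfrak F_\mfg+\rd_i\mathfrak H^i_\mfg$, and no case analysis of where derivatives land. Your own last paragraph correctly identifies that the $\de^{-\f 12}$ allowance makes the hard terms ``more tractable'', but the full conclusion is that they become so tractable that the decomposition is superfluous: just bound $\|\rd_t(\Omega(\mfg)\rd_\alp\phi\rd_\sigma\phi)\|_{L^2}\ls\|\rd^2\phi\|_{L^2}\|\rd\phi\|_{L^\i}\ls\ep^{\f 32}\de^{-\f 12}$ directly.
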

 	\begin{proof}
 		By the $L^2$-boundedness of $\rd^2_{ij} \Delta^{-1}$, it suffices to show that $\| \rd \Delta \mfg\|_{L^2(\Sigma_t)} \ls \ep^{\f 32} \cdot \de^{-\f 12}$. 
		
		Differentiating \eqref{Nellipticequation}--\eqref{betaellipticequation} by $\rd$, it follows from \eqref{eq:BA.g.asymp}, \eqref{eq:BA.g.Li} and \eqref{BA:Li} that
		$$\sum_{\mfg \in \{\gamma, \bt^i ,N\} } |\rd \Delta\mfg| \ls \underbrace{\ep^{\f 32}\la x\ra^{-2-\alp} }_{=:I} + \underbrace{\ep^{\f 54} \sum_{\mfg \in \{\gamma, \bt^i ,N \} } \la x\ra^{-\f 12} |\rd \rd_x \mfg| }_{=:II} + \underbrace{ \ep^{\f 34} |\rd^2 \phi| }_{=:III}.$$
		We control the $L^2(\Sigma_t)$ norm of each term. Obviously, $\| I \|_{L^2(\Sigma_t)} \ls \ep^{\f 32}$. By \eqref{eq:BA.g.L4} (with Proposition~\ref{prop:Sobolev.weighted}) and Proposition~\ref{prop:elliptic.dtg}, $\|II \|_{L^2(\Sigma_t)} \ls \ep^{\f 52}$. Finally, by \eqref{BA:rphi} and \eqref{tphiH2bootstrap}, $\|III \|_{L^2(\Sigma_t)} \ls \ep^{\f 32}\cdot \de^{-\f 12}$. \qedhere
		
 	\end{proof}

\subsection{Estimate for $K$}\label{sec:easy.consequence.from.g.est}

\begin{proposition}\label{prop:K}
The following estimate holds for all $t\in [0,T_B)$:
\begin{equation}\label{Kmainestimate}
\| K \|_{L^{\infty}_{2-\alpha}(\Sigma_t)} + \| \rd_x K \|_{L^{\infty}_{2-\alpha}(\Sigma_t)} + \|\rd_t K \|_{L^{\f 2{s'-s''}}_{2-s'+s''+\alp}(\Sigma_t)} \ls \ep^{\f 32}.
\end{equation}
\end{proposition}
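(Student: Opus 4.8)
The estimate for $K$ follows directly from the definition of $K$ in terms of the metric components together with the elliptic estimates already established in Sections~\ref{sec:g.purely.spatial}--\ref{sec:dtg.elliptic.top}. Recall from \eqref{maximality3} that
\[
K_{ij} = -\f{e^{2\gamma}}{2N}(\mathfrak L\bt)_{ij},\qquad (\mathfrak L\bt)_{ij} = -\partial_q\bt^q\,\de_{ij} + \partial_i\bt^q\,\de_{qj} + \partial_j\bt^q\,\de_{iq}.
\]
Thus $K$ is a product of $\f{e^{2\gamma}}{N}$ (which is an $\Omega(\mfg)$-type factor in the sense of \eqref{eq:def.Omega.g}, with favorable weights since $\gamma_{asymp}, N_{asymp}\geq 0$) with first spatial derivatives of $\bt$, and $\partial_x K$, $\partial_t K$ are handled by the Leibniz rule.

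\textbf{Step 1: the $L^\i_{2-\alp}$ bound for $K$.} Using $|K|\ls \left|\f{e^{2\gamma}}{N}\right|\,|\rd_x\bt|$, apply the pointwise bound $\left|\f{e^{2\gamma}}{N}\right|\ls \la x\ra^{\f\alp{10}}$ (or even better weights, using \eqref{eq:BA.g.asymp}) together with $\|\bt\|_{W^{1,\i}_{1-\alp}(\Sigma_t)}\ls \ep^{\f 32}$ from \eqref{eq:d2g.Linfty} of Proposition~\ref{prop:elliptic.easy}. This gives $\|K\|_{L^\i_{2-\alp}(\Sigma_t)}\ls \ep^{\f 32}$ after distributing weights (the $\la x\ra^{-1+\alp}$ decay of $\rd_x\bt$ together with the mild growth of $\f{e^{2\gamma}}N$ yields the $\la x\ra^{-2+\alp}$ decay).

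\textbf{Step 2: the $L^\i_{2-\alp}$ bound for $\rd_x K$.} By Leibniz, $|\rd_x K|\ls \left|\rd_x\!\left(\f{e^{2\gamma}}N\right)\right||\rd_x\bt| + \left|\f{e^{2\gamma}}N\right||\rd_x^2\bt|$. The first term uses $\left|\rd_x(\f{e^{2\gamma}}N)\right|\ls \ep^{\f 54}\la x\ra^{-1+\f\alp{10}}$ from \eqref{eq:def.Omega.g} and the $W^{1,\i}_{1-\alp}$ bound on $\bt$ from Proposition~\ref{prop:elliptic.easy}; the second term uses the $W^{2,\i}_{2-\f\alp2}$ bound $\|\rd^2_{ij}\bt\|_{L^\i_{2-\f\alp2}(\Sigma_t)}\ls \ep^{\f 32}$ established in Proposition~\ref{prop:elliptic.Besov.weighted}. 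Both contributions are $\ls \ep^{\f 32}$ after tracking weights.

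\textbf{Step 3: the $L^{\f 2{s'-s''}}_{2-s'+s''+\alp}$ bound for $\rd_t K$.} By Leibniz, $|\rd_t K| \ls \left|\rd_t(\f{e^{2\gamma}}N)\right||\rd_x\bt| + \left|\f{e^{2\gamma}}N\right||\rd_x\rd_t\bt|$. For the first term, use $\left|\rd_t(\f{e^{2\gamma}}N)\right|\ls \ep^{\f 54}\la x\ra^{\f\alp{10}}$ (combining \eqref{eq:def.Omega.g} with $\rd_t\gamma$, $\rd_t N$ controlled via Proposition~\ref{prop:elliptic.dtg}, or more simply the bootstrap bound \eqref{eq:BA.g.Li}) together with the $L^\i_1$ bound on $\rd_x\bt$; since the resulting function is supported-weighted and lies in every $L^p$ with the stated weight, this is $\ls\ep^{\f 32}$. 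For the second and main term, apply the weighted $W^{1,\f 2{s'-s''}}_{1-s'+s''-2\alp}$ estimate on $\rd_t\bt$ from Proposition~\ref{prop:elliptic.dtg} (giving $\rd_x\rd_t\bt\in L^{\f 2{s'-s''}}_{-s'+s''-2\alp+1}$ with norm $\ls\ep^{\f 32}$), combined with H\"older's inequality against the favorable weight factor $\f{e^{2\gamma}}N\ls \la x\ra^{\f\alp{10}}$; the exponent and weight bookkeeping yields precisely $\|\rd_t K\|_{L^{\f 2{s'-s''}}_{2-s'+s''+\alp}(\Sigma_t)}\ls\ep^{\f 32}$.

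\textbf{Main obstacle.} There is no genuine difficulty here: this proposition is a bookkeeping corollary of the substantive elliptic estimates (Propositions~\ref{prop:elliptic.easy}, \ref{prop:elliptic.Besov.weighted}, \ref{prop:elliptic.dtg}) proven earlier in the section. The only point requiring mild care is ensuring the weight exponents line up in Step~3 — in particular that the $\f 2{s'-s''}$-integrable weighted bound on $\rd_x\rd_t\bt$ from Proposition~\ref{prop:elliptic.dtg}, once multiplied by the mildly growing coefficient $\f{e^{2\gamma}}N$, still lands in $L^{\f 2{s'-s''}}_{2-s'+s''+\alp}$; this holds because $(2-s'+s''+\alp) = (-s'+s''-2\alp+1) + 1 + 3\alp$ and the extra $\la x\ra^{\f\alp{10}}$ growth of the coefficient is absorbed by choosing the weight exponent comfortably below the threshold (using $\alp = 10^{-2}$, so $3\alp + \f\alp{10} < 1$ with room to spare).
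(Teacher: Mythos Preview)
Your proposal is correct and follows exactly the paper's approach: invoke the formula \eqref{maximality3} for $K$ in terms of $\gamma$, $\bt$, $N$, observe that the coefficient $\f{e^{2\gamma}}{N}$ is favorable for the $\la x\ra$ weights, and read off the bounds from Propositions~\ref{prop:elliptic.easy}, \ref{prop:elliptic.Besov.weighted}, and \ref{prop:elliptic.dtg}. Your write-up simply expands the paper's one-sentence proof into the Leibniz-rule details; the only quibble is that in Step~3 your weight arithmetic is slightly off (from the definition of $W^{1,p}_r$, the bound $\|\rd_t\bt\|_{W^{1,\frac{2}{s'-s''}}_{1-s'+s''-2\alp}}\ls\ep^{\f32}$ places $\rd_x\rd_t\bt$ in $L^{\frac{2}{s'-s''}}_{2-s'+s''-2\alp}$, not $L^{\frac{2}{s'-s''}}_{1-s'+s''-2\alp}$), but this only improves your intermediate estimate and the overall argument is unaffected.
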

\begin{proof}
We use the formula \eqref{maximality3} to write $K$ in terms $\gamma$, $\bt$ and $N$. Notice that $\f{e^{2\gamma}}N$ is favorable in terms of the $\la x\ra$ weights. Hence, the estimates follow from Propositions~\ref{prop:elliptic.easy}, \ref{prop:elliptic.Besov.weighted} and \ref{prop:elliptic.dtg}. \qedhere
\end{proof}

\section{Estimates for the Ricci coefficients and related geometric quantities}\label{sec:Ricci.coeff}
 
 We continue to work under the assumptions of Theorem~\ref{thm:bootstrap.metric}.

Our goal in this section is to control the remaining geometric quantities, particularly those related to the eikonal functions $u_k$. In \textbf{Section~\ref{sec:Ricci.coefficients}}, we bound the Ricci coefficients $\chi_k$, $\eta_k$ and their derivatives. In \textbf{Section~\ref{sec:mu.Theta}}, we bound the metric coefficients $\mu_k$ and $\varTheta_k$ (in the $(u_k,\th_k,t_k)$ coordinates. Finally, in \textbf{Section~\ref{sec:second.derivative.X.E.L}}, we estimate the second derivatives of the commutation fields.
 
 \subsection{Estimates for the Ricci coefficients and their derivatives}\label{sec:Ricci.coefficients}

 In this subsection we bound the Ricci coefficients and their derivatives. estimates, which require a treatment of the quadratic interaction between two impulsive waves:
 \begin{prop}\label{prop:Ricci}
 	The following estimates hold for all $t\in [0,T_B)$ and all $u_k \in \RR$:

 \begin{align} 
 \label{ricci.pointwise.main.estimate}
\| \chi_k \| _{ L^{\infty}_{1-\alpha}(\Sigma_t)}+ \| \eta_k \| _{ L^{\infty}_{1-\alpha}(\Sigma_t)} \lesssim \ep^{\f 32},\\
 \label{nablachi.estimate}
\| \partial_x  \chi_k \|_{L^2_{\theta_k}(\Sigma_t \cap C^k_{u_k})} \lesssim \ep^{\f 32}, \\
 \label{Eeta.estimate}
\| E_k \eta_k \|_{ L^2_{\theta_k}(\Sigma_t \cap C^k_{u_k})} \lesssim \ep^{\f 32}, \\
 \label{nablaeta.estimate}
\|\rd_x \eta_k \|_{L^2(\Sigma_t \cap B(0,3R))} \ls \ep^{\f 32}.
 \end{align}

 \end{prop}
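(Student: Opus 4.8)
The plan is to integrate the transport equations \eqref{Lchi} and \eqref{Leta} along integral curves of $L_k$, using that $\rd_{t_k} = N L_k$ and that $\Sigma_0$ provides the initial data computed in Lemma~\ref{riccisigma0expression}. The pointwise bounds \eqref{ricci.pointwise.main.estimate} come first: writing $\chi_k(u_k,\th_k,t_k) = \chi_k(u_k,\th_k,0) + \int_0^{t_k} (N L_k\chi_k)(u_k,\th_k,s)\,ds$, one substitutes \eqref{Lchi}, whose right-hand side is $-2(L_k\phi)^2 - \chi_k^2 + (K(X_k,X_k)-X_k\log N)\chi_k$. The term $(L_k\phi)^2$ is controlled in $L^\infty_{2-2\alp}$ by the bootstrap Lipschitz bound \eqref{BA:Li} (with the weight coming from Lemma~\ref{lem:L.X.E}), the coefficient $K(X_k,X_k) - X_k\log N$ is bounded in weighted $L^\infty$ by Proposition~\ref{prop:K}, \eqref{eq:BA.g.asymp}--\eqref{eq:BA.g.Li}, so a Gr\"onwall argument closes the estimate starting from the $O(\ep^{\f 32})$ initial value in \eqref{chisigma0} (using the elliptic estimates on $\gamma$ and $K$ at $t=0$, cf.\ Lemma~\ref{lem:local}); the same works for $\eta_k$ using \eqref{Leta} and \eqref{etasigma0}. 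One must take care that $\la x\ra$ is comparable along $L_k$-integral curves, which is the Gr\"onwall argument already used in Step~0 of the proof of Proposition~\ref{prop:angle}.

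For the first-derivative bounds, the key is to differentiate \eqref{Lchi} and \eqref{Leta}. Differentiating \eqref{Lchi} by $\rd_q$ and commuting $[\rd_q, L_k]$ using the frame coefficient bounds in Lemma~\ref{dgeomvflemma}, one gets schematically $L_k(\rd_q\chi_k) = -4(L_k\rd_q\phi)(L_k\phi) + (\text{lower order, involving } \rd_q\chi_k,\ \rd_x K,\ \rd_x^2\log N, \ldots)$; the dangerous term is $(L_k\rd_q\phi)(L_k\phi)$. We decompose $\phi = \rphi + \sum_j \tphi$. The regular part $L_k\rd_q\rphi$ is controlled in $L^2(C^k_{u_k})$ by the flux bound \eqref{BA:flux.for.rphi}. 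The $k$-th singular part $L_k\rd_q\tphi$ for $j=k$ is handled by \eqref{BA:flux.for.tphi.improved.2} (which is $\de$-independent). For $j\neq k$, the flux \eqref{BA:flux.for.tphi} only gives $\de^{-\f12}$, but $\rd^2\tphi$ is essentially concentrated in $S^k_\de$ (by \eqref{BA:away.from.singular} away from it), so integrating $L_k(\rd_q\chi_k)$ in $t_k$ — which only requires an $L^1_{t_k}$, hence $L^1$ along the curve — over the $\de$-length region gives back a factor $\de^{\f12}$ by Cauchy--Schwarz, compensating the $\de^{-\f12}$. Taking $L^2_{\th_k}$ norms throughout and using Minkowski's integral inequality (to pull the $t_k$-integral out) yields \eqref{nablachi.estimate}; the argument for $E_k\eta_k$ from \eqref{Leta} is essentially identical, with $E_k$ replacing $\rd_q$, noting $E_k$ is tangent to $C^k_{u_k}$ so $L_k E_k\eta_k$ integrates cleanly.

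For \eqref{nablaeta.estimate}, the general spatial derivative $\rd_x\eta_k$ is worse: running the $j=k$ case of the above argument, the term $L_k\rd_x\tphi$ is \emph{not} covered by \eqref{BA:flux.for.tphi.improved.2} (which only controls $L_k\rd_x\tphi$ in the good derivative combination or $E_kE_k\tphi$), so one instead uses \eqref{BA:flux.for.tphi} directly with its $\de^{-\f12}$, and cannot gain it back when the offending singular part is the $k$-th one itself (there is no transversal smallness to exploit). Consequently one only controls $\rd_x\eta_k$ in $L^2_{u_k}L^2_{\th_k} = L^2(\Sigma_t)$ (restricted to $B(0,3R)$ using finite speed of propagation, Lemma~\ref{lem:support}, and the compact support of the sources), picking up $\|L_k\rd_x\tphi\|_{L^2(C^k_{u_k})} \ls \ep^{\f34}\de^{-\f12}$ but then integrating $du_k$ over the $\de$-wide singular zone to regain $\de^{\f12}$ — this is the same Cauchy--Schwarz-in-$u_k$ mechanism but now in the $L^2_{u_k}$ rather than $L^\infty_{u_k}$ norm. \textbf{The main obstacle} is bookkeeping the various $\de$-powers correctly in this last step and in the transversality argument for \eqref{nablachi.estimate}: one has to verify that after commuting derivatives and expanding the nonlinearity, every term that carries a $\de^{-\f12}$ is either multiplied by a genuinely $\de$-localized factor (so Cauchy--Schwarz returns $\de^{\f12}$) or is a good-derivative term covered by a $\de$-independent bootstrap bound — there is no room to spare, and a single mismatched power would break the estimate.
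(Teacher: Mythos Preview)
Your approach is essentially the paper's, and the mechanisms you describe (integrate the transport equations, use the flux bounds, exploit the $\de$-smallness of the transversal intersection via Cauchy--Schwarz) are correct. There are, however, two points where your account diverges from what actually happens.

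\textbf{The problematic term in \eqref{nablaeta.estimate}.} You write that in the $j=k$ case the obstruction is $L_k\rd_x\tphi$ and that \eqref{BA:flux.for.tphi.improved.2} does not cover it. But \eqref{BA:flux.for.tphi.improved.2} is \emph{precisely} the bound $\|L_k\rd_x\tphi\|_{L^2(C^k_{u_k})}\le\ep^{3/4}$, so that term is fine. The real obstruction comes from the structure of \eqref{Leta}: its right-hand side is $-2(L_k\phi)(E_k\phi)+\cdots$, so differentiating by $\rd_i$ produces, in addition to $(L_k\rd_i\phi)(E_k\phi)$, the term $(L_k\phi)(E_k\rd_i\phi)$. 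For the $k$-th wave, $E_k\rd_i\tphi$ with a generic $\rd_i$ (which has an $X_k$ component) is \emph{not} controlled by \eqref{BA:flux.for.tphi.improved.2}; one only has the $\de^{-1/2}$-large bound \eqref{BA:flux.for.tphi}. This is exactly why $\rd_x\eta_k$ is worse than $\rd_x\chi_k$: the Raychaudhuri equation \eqref{Lchi} has $(L_k\phi)^2$, so only $L_k\rd_i\phi$ appears after differentiation; \eqref{Leta} has the mixed product $(L_k\phi)(E_k\phi)$. Your recovery mechanism (carry the $\de^{-1/2}$ on the $L^2_{\th_k}$ norm for $u_k\in[-\de,\de]$, use \eqref{BA:flux.for.tphi.improved} outside, then take $L^2_{u_k}$ to regain $\de^{1/2}$) is correct once applied to the right term.

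\textbf{A renormalization you did not mention.} Before differentiating \eqref{Leta} by $\rd_i$, the paper first uses \eqref{Lvartheta} to rewrite it as
\[
L_k(\varTheta_k\eta_k) = -2\varTheta_k(L_k\phi)(E_k\phi) - \varTheta_k\chi_k\bigl(K(E_k,X_k)-E_k\log N\bigr),
\]
which absorbs the linear $-\chi_k\eta_k$ term on the right of \eqref{Leta}. Without this, differentiating would produce $\chi_k\cdot\rd_i\eta_k$, which you would have to Gr\"onwall away in the $L^2_{\th_k}$ norm; this is awkward since at intermediate times you only have the $\de^{-1/2}$-large $\|\rd_i\eta_k\|_{L^2_{\th_k}}$ bound on the singular strip. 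The $\varTheta_k$-weighting sidesteps this cleanly.
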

 \begin{proof}
In this proof, we prove estimates by solving transport equations and integrating along the integral curves of $L_k$. Recall in particular that in the coordinate system $(u_k,\theta_k,t_k)$,  $\partial_{t_k} = N \cdot L_k$ by \eqref{thetaE}.
 
\pfstep{Step~1: Controlling $\chi_k$ and $\eta_k$ (Proof of \eqref{ricci.pointwise.main.estimate})} Using the transport equations \eqref{Lchi}, \eqref{Leta},  the bootstrap assumptions \eqref{bootstrapK}, \eqref{bootstrapricci}, \eqref{BA:Li}, and the estimates in \eqref{Yibounded}, we obtain
\begin{equation}\label{eq:Lchi.Leta.est}
|L_k \chi_k| +  |L_k \eta_k|\ls \ep^{\f 32} \cdot \la x\ra^{-2+3\alp} \ls \ep^{\f 32} \cdot \la  x\ra^{-1+\alpha}.
\end{equation}
Note that
\begin{itemize}
\item the initial $\chi_k$ and $\eta_k$ are bounded by $\la x\ra^{-1+\alp}$ (see point 5~in the proof of Lemma~\ref{lem:local}), and
\item that $\la x\ra$ are comparable between any two points on the integral curve of $L_k$ (see Step~0 of Proposition~\ref{prop:angle}).
\end{itemize}
Hence, integrating \eqref{eq:Lchi.Leta.est} along the integral curve of $L_k$, we obtain \eqref{ricci.pointwise.main.estimate}.

\pfstep{Step~2: Controlling derivatives of $\chi_k$ (Proof of \eqref{nablachi.estimate})} 
\pfstep{Step~2(a): Preliminary reductions}
First, we commute \eqref{Lchi} with $\partial_i$, and rewrite $L_k = N^{-1} \cdot \rd_{t_k}$ (using \eqref{thetaE}): 
\begin{equation}\label{eq:L.di.chik}
\begin{split}
\rd_{t_k} \rd_i \chi_k 
= &\: \underbrace{N [L_k,\rd_i]\chi_k}_{=:A} - \underbrace{4 N([\rd_i, L_k] \phi)(L_k \phi )}_{=:B} \\
&\: - \underbrace{4 N(L_k \rd_i\phi)(L_k \phi )}_{=:D} - \underbrace{ N \rd_i (\chi^2_k - ( K( X_k,X_k) -X_k \log(N)) \cdot \chi_k)}_{=:E}.
\end{split}
\end{equation}

We first control $A$, $B$ and $E$  of \eqref{eq:L.di.chik} in the $L^2_{\th_k}(\Sigma_t\cap C_{u_k}^k)$ norm (see Definition~\ref{def:L2.in.th_k}). 

Using Lemma~\ref{dgeomvflemma}, \eqref{eq:BA.g.asymp}--\eqref{eq:BA.g.Li}, \eqref{eq:trivial.calculus}, \eqref{timeintermsofLandspace} and \eqref{eq:Lchi.Leta.est} in order, we obtain $|A| \ls \ep^{\f 54}\la x\ra^{-1+4\alp+\ep} |\rd\chi_k| \ls \ep^{\f 54}\la x\ra^{-1+4\alp+2\ep}(|L_k \chi_k| + |\rd_x \chi_k |) \ls \ep^{\f {11}4}\la x\ra^{-3+7\alp+2\ep} + \ep^{\f 54}\la x\ra^{-1+4\alp+2\ep}|\rd_x \chi_k |$. Note that using $L_k \th_k = 0$, $|(\th_k)_{|\Sigma_0}|\ls \la x\ra$ (by \eqref{thetainit}), and the comparability of $\la x\ra$ along integral curves of $L_k$, we have $\la x\ra^{-1} \ls \la \th_k\ra^{-1}$. Hence, using also \eqref{bootstrapnablaricci},
$$ \|A\|_{L^2_{\th_k}(\Sigma_t\cap C_{u_k}^k)}  \ls \ep^{\f {11}4} + \ep^{\f 54}  \| \la x \ra^{-\f 12 -\alp} \rd_x \chi_k\|_{L^2_{\th_k}(\Sigma_t\cap C_{u_k}^k)}  \ls \ep^{\f 52}.$$

Using Lemmas~\ref{lem:L.X.E} and \ref{dgeomvflemma}, $\mathrm{supp}(\phi)\subseteq B(0,R)$ and \eqref{BA:Li}, it follows easily that $|B|\ls \ep^{\f 32} \la x\ra^{-2+\alp}$. Using Lemma~\ref{lem:L.X.E}, \eqref{bootstrapK}--\eqref{bootstrapnablaricci}, \eqref{eq:BA.g.asymp}--\eqref{eq:BA.g.Li}, we have $|E| \ls \ep^{\f 52}\la x \ra^{-2+2\alp}$. Arguing as for term $A$, both $B$ and $E$ can be controlled in $L^2_{\th_k}(\Sigma_t\cap C_{u_k}^k)$ by $\ls \ep^{\f 32}$.

Combining all the above estimates, it follows that (with $D$ as in \eqref{eq:L.di.chik})
\begin{equation}\label{eq:Lk.rdi.chik}
\begin{split}
 \|\partial_{t_k} \partial_i \chi_k + D\|_{L^2_{\th_k}(\Sigma_t\cap C_{u_k}^k)} \ls \ep^{\f 32}.
\end{split}
\end{equation} 

We now turn to the term $D$ in \eqref{eq:L.di.chik} (and \eqref{eq:Lk.rdi.chik}). Using the decomposition $\phi= \sum_{q=1}^3 \widetilde{\phi}_q + \rphi$, the $L^\infty$ bootstrap assumption \eqref{BA:Li} for $\rd\phi$, and Lemma~\ref{lem:L.X.E}, we obtain the following pointwise bounds for $D$:
\begin{equation}\label{eq:Lk.rdi.chik.main.term}
|D| \ls |L_k \rd_i \rphi| + |L_k \rd_i \tphi| + \sum_{q\neq k} |L_k \partial_i \phi_{q}|.
\end{equation}

We now bound $\rd_i \chi_k$ using \eqref{eq:Lk.rdi.chik}, by first integrating along the integral curve of $t_k$ for every fixed $(u_k, \th_k)$, and then taking the $\| \la \th_k \ra^{-\f 12 - \alp} \cdot \|_{L^2_{\th_k}}$ norm. Writing in the $(u_k, \th_k, t_k)$ coordinate system, \eqref{eq:Lk.rdi.chik}, \eqref{eq:Lk.rdi.chik.main.term} and the initial data bound (obtained in part 5~of the proof of Lemma~\ref{lem:local}) imply that
\begin{equation}\label{eq:error.in.dxchi}
\begin{split}
	&\:\| \rd_i \chi_k \|_{L^2(\Sigma_t\cap C_{u_k}^k)} \\
	\ls &\: \| \rd_i \chi_k \|_{L^2(\Sigma_0\cap C_{u_k}^k)} + \| \int_0^t |D|(u_k,\cdot,t') \, dt' \|_{L^2(\Sigma_0\cap C_{u_k}^k)}\\
	\ls &\:  \ep^2 + \underbrace{\ep^{\f 34} \| \int_{0}^t |L_k \partial_i \rphi|(u_k,\cdot,t') dt' \|_{L^2((\Sigma_t\cap C_{u_k}^k))} }_{=:I} + \underbrace{ \ep^{\f 34} \| \int_{0}^t |L_k \partial_i \phi_k|(u_k,\cdot,t') dt' \|_{L^2(\Sigma_t\cap C_{u_k}^k)} }_{=:II}\\
	&\: + \underbrace{ \ep^{\f 34} \sum_{q\neq k} \| \int_{0}^t |L_k \partial_i \phi_{q}|(u_k,\cdot,t') dt' \|_{L^2(\Sigma_t\cap C_{u_k}^k)} }_{=:III}.
\end{split}
\end{equation}
We will bound the terms $I$, $II$ and $III$ in the following substeps.

\pfstep{Step~2(b): The easy terms $I$ and $II$ in \eqref{eq:error.in.dxchi}} To handle the terms $I$ and $II$ in \eqref{eq:error.in.dxchi}, we first use Minkowski's inequality in the $\th_k$ variable, and then use the Cauchy--Schwarz inequality in $t_k$ to obtain that
\begin{equation}\label{eq:error.in.dxchi.easy}
I \ls \ep^{\f 34} \|L_k \partial_i \rphi \|_{L^2(C_{u_k}^k([0,T_B)))}, \quad II \ls \ep^{\f 34} \|L_k \partial_i \tphi \|_{L^2(C_{u_k}^k([0,T_B)))}.
\end{equation}
The terms in \eqref{eq:error.in.dxchi.easy} are bounded above by $\ep^{\f 32}$ by \eqref{BA:flux.for.rphi} and \eqref{BA:flux.for.tphi.improved.2} respectively.

\pfstep{Step~2(c): The main term $III$ in \eqref{eq:error.in.dxchi}} We now turn to term $III$ in \eqref{eq:error.in.dxchi}, which is more delicate and requires the transversality of the different waves.

Fix $q\neq k$. Take a constant-$(u_k, \theta_k)$ curve (parametrized by $t_k$) which passes through the support of $\phi$ for some $t_k \in [0,1]$. Using the bootstrap assumptions \eqref{eq:BA.g.asymp}--\eqref{eq:BA.g.Li} on the metric, and the fact that $\mathrm{supp}(\phi) \subseteq B(0,R)$, it is easy to check that for $t \in [0,T_B)\subseteq [0,1)$, the whole curve is contained in $B(0,3R)$.

Define $T^{\mp}_{k,q}$ (depending on the chosen constant-$(u_k, \theta_k)$ curve) by
$$T^{-}_{k,q}(u_k,\theta_k):= \inf \{ t \geq 0, (u_k,\theta_k,t) \in S^q_{\delta}\},\quad T^{+}_{k,q}(u_k,\theta_k):= \sup \{ t \geq 0, (u_k,\theta_k,t) \in S^q_{\delta}\}$$ (recall the definition \eqref{defS}). 

Let us consider only the case that $0<T^{-}_{k,q}(u_k,\theta_k)<T^{+}_{k,q}(u_k,\theta_k)<t$ (if not the proof is even easier). In view of the fact that $\partial_{t_k} u_q \in (\frac{\upkappa_0^2}{4},2)$ on $B(0,3R)$ by \eqref{anglecontrol2}, and that (by definition) $ \int_{T^{-}_{k,q}(u_k,\theta_k)}^{T^{+}_{k,q}(u_k,\theta_k)} \partial_{t_k} u_q \, dt_k = 2\de$, we get that 
\begin{equation} \label{sizesingzone}
\delta \leq T^{+}_{k,q}(u_k,\theta_k)-T^{-}_{k,q}(u_k,\theta_k) \leq \f{8\delta}{\upkappa_0^2}. 
\end{equation} 

We split the integral $\int_0^t$ term $III$ in \eqref{eq:error.in.dxchi} into an integral in $(S_\de^k)^c$, i.e.\ $\int_0^{T^{-}_{k,q}(u_k,\theta_k)}+ \int_{T^{+}_{k,q}(u_k,\theta_k)}^t $, and another integral in $S_\de^k$, i.e.\  $\int_{T^{-}_{k,q}(u_k,\theta_k)}^{T^{+}_{k,q}(u_k,\theta_k)} $. 

Note that, since $\widetilde{\phi}_{q} \equiv 0$ on 
$C_{\leq -\delta}^q$ (by Lemma~\ref{lem:support}), the $\int_0^{T^{-}_{k,q}(u_k,\theta_k)}$ integral is trivial. Using the Cauchy--Schwarz inequality and the bootstrap assumption \eqref{BA:flux.for.tphi.improved}, we obtain
$$ \|\int_{T^{+}_{k,q}(u_k,\theta_k)}^t |L_k  \partial_i\widetilde{\phi}_q|(u_k,\cdot,t') dt'  \|_{L^2_{\theta_k}(\Sigma_t \cap C^k_{u_k})} \ls \ep \cdot  \|L_k  \partial_i \tilde{\phi}_q \|_{L^2( C^k_{u_k}([0,T_B)) \cap S^q_{\de})} \ls \ep^{\f 34}.$$ 
It therefore follows that
\begin{equation}\label{proof.chi.4}
 \|\int_0^{T^{-}_{k,q}(u_k,\th_k)} |L_k  \partial_i\widetilde{\phi}_q|(u_k,\cdot,t') dt'  \|_{L^2_{\theta_k}(\Sigma_t \cap C^k_{u_k})} + \|\int_{T^{+}_{k,q}(u_k,\theta_k)}^t |L_k  \partial_i\widetilde{\phi}_q|(u_k,\cdot,t') dt'  \|_{L^2_{\theta_k}(\Sigma_t \cap C^k_{u_k})} \ls \ep^{\f 34}.
\end{equation}
Then we turn to the integral on the singular zone, whose smallness we will exploit. This time, Cauchy--Schwarz gives, in view of \eqref{sizesingzone}: 
\begin{equation}\label{proof.chi.5}
  \|\int_{T^{-}_{k,q}(u_k,\th_k)}^{T^{+}_{k,q}(u_k,\th_k)} |L_k  \partial_i\tilde{\phi}_q|  (u_k,\cdot,t') dt'  \|_{L^2_{\theta_k}(\Sigma_t \cap C^k_{u_k})} \ls \sdelta \cdot  \| L_k  \partial_i\tilde{\phi}_q    \|_{L^2( C^k_{u_k} \cap [0,T_B])}  \ls  \sdelta \cdot (\ep^{\f 34} \de^{-\f 12}) \ls \epsilon^{\f 34}, 
 \end{equation}
where we used \eqref{BA:flux.for.tphi}. 

Combining \eqref{proof.chi.4} and \eqref{proof.chi.5}, term $III$ in \eqref{eq:error.in.dxchi} is bounded by $III \ls \ep^{\f 32}$. 

\pfstep{Step~2(d): Putting everything together} Combining the estimates in Steps~2(b) and 2(c), we have thus shown that the terms $I$, $II$ and $III$ in \eqref{eq:error.in.dxchi} are bounded above by $\ls \ep^{\f 32}$ (for all $t\in [0,T_B)$ and $u_k \in \mathbb R$). Thus, using \eqref{eq:error.in.dxchi}, we obtain the desired estimate \eqref{nablachi.estimate}.

\pfstep{Step~3: Controlling $E_k\eta_k$ (Proof of \eqref{Eeta.estimate})} The proof is broadly similar to that of \eqref{nablachi.estimate} so we only explain the difference. By \eqref{Leta} and using similar arguments as in Step~2, we get  
\begin{equation} \label{proof.eta.0}
\sup_{0 \leq t < T_B, u_k \in \RR}\|  \partial_{t_k} E_k \eta_k + 2 L_k E_k\phi \cdot E_k \phi + 2 L_k \phi \cdot E_k^2 \phi\|_{L^2_{\theta_k}(\Sigma_t \cap C^k_{u_k})}  \ls \ep^{\f 32}.
\end{equation} 
Using \eqref{BA:Li}, Lemmas~\ref{lem:L.X.E} and \ref{dgeomvflemma}, we have $|L_k E_k\phi \cdot E_k \phi + L_k \phi \cdot E_k^2 \phi| \ls \ep^{\f 32} + \ep^{\f 34} |L_k \rd_x \phi| + \ep^{\f 34} |E_k^2 \phi|$. The $|L_k \rd_x \phi|$ term can be treated exactly like the terms in \eqref{eq:error.in.dxchi} to obtain
\begin{equation} \label{proof.eta.1}
\sup_{0 \leq t < T_B, u_k \in \RR} \ep^{\f 34} \|\int_0^{t} |L_k \rd_x \phi| (u_k,\cdot,t') dt'  \|_{L^2_{\theta_k}(\Sigma_t \cap C^k_{u_k})} \ls \ep^{\f 32}.
\end{equation} 
For the $E_k^2 \phi$ term, we split it into $|E_k^2 \rphi|$, $|E_k^2 \tphi|$ and $|E_k^2 \tilde{\phi}_q|$ for $q \neq k$ (c.f.~\eqref{eq:error.in.dxchi}). 
\begin{itemize}
\item The $|E_k^2 \rphi|$ term can be controlled similar to term $I$ in \eqref{eq:error.in.dxchi}.
\item The $|E_k^2 \tphi|$ term can be addressed like term $II$ in \eqref{eq:error.in.dxchi} (except for using the second, instead of the first, term in \eqref{BA:flux.for.tphi.improved.2}).
\item The $|E_k^2 \tilde{\phi}_q|$ (with $q\neq k$) can be treated as term $III$ in \eqref{eq:error.in.dxchi}.
\end{itemize} 
Altogether this gives
\begin{equation} \label{proof.eta.2}
\sup_{0 \leq t < T_B, u_k \in \RR} \|\int_0^{t} |E_k^2 \phi|(u_k,\cdot,t') dt'  \|_{L^2_{\theta_k}(\Sigma_t \cap C^k_{u_k})} \ls \ep^{\f 32}.
\end{equation}

Combining \eqref{proof.eta.0}, \eqref{proof.eta.1} and \eqref{proof.eta.2} with the initial data bound (obtained in part 5~of the proof of Lemma~\ref{lem:local})  gives \eqref{Eeta.estimate}.

\pfstep{Step~4: Controlling $\rd_x\eta_k$ (Proof of \eqref{nablaeta.estimate})} Using \eqref{Lvartheta}, we rewrite \eqref{Leta} as
\begin{equation}\label{Leta.2}
L_k (\varTheta_k \eta_k) = -2 \varTheta_k L_k \phi \cdot E_k \phi - \varTheta_k \chi_k  \cdot  ( K(E_k,X_k) - E_k\log N).
\end{equation}
(This rewriting absorbs the linear $\eta_k$ term on the RHS, so that when differentiating the equation by $\rd_i$, we do not have a linear $\chi_k \cdot \rd_i \eta_k$ term.) Differentiating \eqref{Leta.2} by $\rd_i$ and arguing as in Steps~2 and 3, we obtain 
\begin{equation} \label{proof.dxeta.0}
\sup_{0 \leq t < T_B, u_k \in \RR}\|  \partial_{t_k} \rd_i (\varTheta_k \eta_k) + 2 \varTheta_k \cdot L_k \rd_i \phi \cdot E_k \phi + 2 \varTheta_k \cdot L_k \phi \cdot E_k \rd_i \phi\|_{L^2_{\theta_k}(\Sigma_t \cap C^k_{u_k})}  \ls \ep^{\f 32}.
\end{equation} 
After putting $\varTheta_k \cdot E_k\phi$ in $L^\infty$ (by \eqref{bootstrapvarTheta}, $\mathrm{supp}(\tphi) \subseteq B(0,R)$, \eqref{BA:Li} and Lemma~\ref{lem:L.X.E}), we have $|\varTheta_k \cdot L_k \rd_i \phi \cdot E_k \phi |\ls |L_k \rd_i \phi|$. We can then proceed as with the terms in \eqref{eq:error.in.dxchi} in Step~2.

Now for the $\varTheta_k \cdot L_k \phi \cdot E_k \rd_i \phi$ term in \eqref{proof.dxeta.0}, we first bound it pointwise by $|E_k \rd_i \phi|$ (by \eqref{bootstrapvarTheta}, $\mathrm{supp}(\tphi) \subseteq B(0,R)$, \eqref{BA:Li} and Lemma~\ref{lem:L.X.E}), and then decompose into the terms $|E_k \rd_i \rphi|$, $|E_k \rd_i \tphi|$ and $|E_k \rd_i \widetilde{\phi}_q|$ where $q\neq k$. The terms $|E_k \rd_i \rphi|$ and $|E_k \rd_i \widetilde{\phi}_q|$ are similar to terms $I$ and $III$ in \eqref{eq:error.in.dxchi}.

It remains to control $|E_k \rd_i \tphi|$. The key issue is that using the bootstrap assumption \eqref{BA:flux.for.tphi} for the flux, we only have an estimate that is large of order $\de^{-\f 12}$.
\begin{equation} \label{proof.dxeta.2}
\ep^{\f 34} \sup_{0 \leq t < T_B} \|\int_0^{t}  |E_k \rd_i \tphi| (u_k,\cdot,t')\, dt'  \|_{L^2_{\theta_k}(\Sigma_t \cap C^k_{u_k})} \ls \ep^{\f 32}\cdot \de^{-\f 12}.
\end{equation}
The important point, however, is that for $u_k \not\in [-\de,\de]$, we have a better estimates using \eqref{BA:flux.for.tphi.improved}:
\begin{equation} \label{proof.dxeta.3}
\ep^{\f 34} \sup_{0 \leq t < T_B, u_k \not\in [-\de,\de]} \|\int_0^{t}  |E_k \rd_i \tphi| (u_k,\cdot,t') \,dt'  \|_{L^2_{\theta_k}(\Sigma_t \cap C^k_{u_k})} \ls \ep^{\f 32}.
\end{equation}

Combining \eqref{proof.dxeta.0}, \eqref{proof.dxeta.2} and \eqref{proof.dxeta.3}, the bounds which are similar to Step~2, as well as the initial data bound $\|  \rd_x (\varTheta_k \cdot \eta_k)\|_{L^2_{\th_k}(\Sigma_0 \cap C^k_{u_k})} \ls \ep^2$ (which can be proven as in point 5~of Lemma~\ref{lem:local}), we obtain that 
\begin{equation}\label{proof.dxeta.almost.almost.final}
\sup_{0 \leq t < T_B, u_k \in \RR}\| \rd_x (\varTheta_k \cdot \eta_k) \|_{ L^2_{\theta_k}(\Sigma_t \cap C^k_{u_k})} \lesssim \ep^{\f 32} \cdot \de^{-\f 12}, \quad\sup_{0 \leq t < T_B, u_k \not\in [-\de,\de]}\| \rd_x (\varTheta_k \cdot \eta_k) \|_{ L^2_{\theta_k}(\Sigma_t \cap C^k_{u_k})} \lesssim \ep^{\f 32}.
\end{equation}

We now integrate \eqref{proof.dxeta.almost.almost.final} over $u_k$, where when we use the weaker estimate when $u_k \in [-\de,\de]$, we combine it with the short length scale. Hence,
\begin{equation}\label{proof.dxeta.almost.final}
\sup_{0 \leq t < T_B}\|  \rd_x (\varTheta_k \cdot \eta_k) \|_{ L^2_{u_k} L^2_{\theta_k}(\Sigma_t )} \lesssim \ep^{\f 32}.
\end{equation}

Finally, we change variables from $(u_k, \th_k)$ to $(x^1, x^2)$. Comparing \eqref{volelliptict=0} with \eqref{volthetau}, we see that $dx^1 \, dx^2 = e^{-2\gamma} \mu_k^{-2} \varTheta_k^{-2} \, du_k\, d\th_k$. The estimates in \eqref{bootstrapmu}, \eqref{bootstrapvarTheta} and \eqref{eq:BA.g.Li} imply that (for $B(0,3R)$ to be understood in the $(x^1,x^2)$ coordinates) $\|\rd_x(\varTheta_k \eta_k)\|_{L^2_{x^1,x^2}(\Sigma_t \cap B(0,3R))} \ls \|\rd_x (\varTheta_k \eta_k)\|_{L^2_{u_k} L^2_{\th_k}(\Sigma_t \cap B(0,3R))}$. Combining this estimate with \eqref{proof.dxeta.almost.final}, and using also \eqref{bootstrapvarTheta} and  \eqref{ricci.pointwise.main.estimate} (established above), we obtain \eqref{nablaeta.estimate}. \qedhere

 \end{proof}

In the course of the proof of Proposition~\ref{prop:Ricci}, we also proved estimates for $\chi_k$ and $\eta_k$ with $L_k$ derivatives, which we collect in the following proposition. In particular, while we have no control over general second derivatives of $\chi_k$ and $\eta_k$, we do bound the combinations of second derivatives with at least one $L_k$. This will also turn out to be important in \cite{LVdM2}.
\begin{prop}\label{prop:L.chi.eta}
The following estimates hold for all $t\in [0,T_B)$:
\begin{equation}\label{eq:Lchi.Leta}
\|L_k \chi_k \|_{L^\i_{1-\alp}(\Sigma_t)} +  \|L_k \eta_k \|_{L^\i_{1-\alp}(\Sigma_t)} \ls \ep^{\f 32},
\end{equation}
\begin{equation}\label{eq:L2chi.etc}
 \| L_k^2 \chi_k \|_{L^2(\Sigma_t \cap B(0,R))} + \|L_k^2 \eta_k \|_{L^2(\Sigma_t \cap B(0,R))} + \|L_k \rd_x \chi_k \|_{L^2(\Sigma_t \cap B(0,R))} + \|L_k E_k \eta_k \|_{L^2(\Sigma_t \cap B(0,R))} \ls \ep^{\f 32}.
 \end{equation}
\end{prop}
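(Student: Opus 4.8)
\textbf{Proof plan for Proposition~\ref{prop:L.chi.eta}.}
The idea is that the estimates \eqref{eq:Lchi.Leta} are essentially already contained in the proof of Proposition~\ref{prop:Ricci}, while \eqref{eq:L2chi.etc} follows by differentiating the transport equations \eqref{Lchi}, \eqref{Leta} once more by $L_k$ (or once by $L_k$ and once by $E_k$) and using the bounds already established. First, for \eqref{eq:Lchi.Leta}: I would simply recall the pointwise bound \eqref{eq:Lchi.Leta.est} obtained in Step~1 of the proof of Proposition~\ref{prop:Ricci}, which states $|L_k\chi_k|+|L_k\eta_k| \ls \ep^{\f 32}\la x\ra^{-1+\alp}$. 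This is precisely the claimed bound, so nothing new is needed beyond pointing to it.

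For \eqref{eq:L2chi.etc}, the plan is to apply $L_k$ directly to \eqref{Lchi}. This gives schematically
\[
L_k^2\chi_k = -4(L_k\phi)(L_k^2\phi) - 2\chi_k(L_k\chi_k) + L_k[(K(X_k,X_k)-X_k\log N)\chi_k],
\]
where the first term on the right involves $L_k^2\phi$, the second is controlled by \eqref{ricci.pointwise.main.estimate} and \eqref{eq:Lchi.Leta}, and the third requires $L_k$ applied to the metric quantities together with $L_k\chi_k$. To handle the $L_k^2\phi$ term in $L^2(\Sigma_t\cap B(0,R))$, I would decompose $\phi = \rphi+\sum_q\widetilde\phi_q$, bound $L_k^2\rphi$ using the flux bootstrap \eqref{BA:flux.for.rphi} on $C^k_{u_k}$ (after integrating, as in Step~2(b) of Proposition~\ref{prop:Ricci}), bound $L_k^2\widetilde\phi_k$ using \eqref{BA:flux.for.tphi.improved.2} (the $L_k\rd_x\widetilde\phi$ flux bound), and handle the transversal pieces $L_k^2\widetilde\phi_q$ ($q\neq k$) exactly as term $III$ in \eqref{eq:error.in.dxchi}, splitting the $t_k$-integral over the singular zone $S^q_\de$ (of length $\ls\de$ by \eqref{sizesingzone}) and its complement, using \eqref{BA:flux.for.tphi.improved} away from $S^q_\de$ and \eqref{BA:flux.for.tphi} inside with a Cauchy--Schwarz gain of $\sdelta$. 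Since we are working over $B(0,R)$ and on constant-$(u_k,\th_k)$ curves, the $L^2(\Sigma_t\cap B(0,R))$ norm is controlled by $\sup_{u_k}\|\cdot\|_{L^2(\Sigma_t\cap C^k_{u_k})}$ up to the (harmless, by Corollary~\ref{cor:diffeo} and \eqref{bootstrapmu}, \eqref{bootstrapvarTheta}) volume-form comparison. The terms $L_k\rd_x\chi_k$ and $L_kE_k\eta_k$ are handled the same way: $L_k\rd_x\chi_k$ is obtained by commuting $[\rd_i,L_k]$ through \eqref{Lchi} and then applying $L_k$, producing again an $L_k\rd_i L_k\phi$ factor which is an $L_k\rd_x\phi$-type term already estimated in \eqref{eq:error.in.dxchi}; similarly $L_kE_k\eta_k$ uses the rewritten equation \eqref{Leta.2} and the $E_kE_k\phi$, $E_k\rd_x\phi$ estimates from Steps~3--4 of Proposition~\ref{prop:Ricci}.

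The main obstacle is bookkeeping rather than a genuinely new difficulty: one has to check that every second-order derivative of $\phi$ that appears after differentiating the transport equations again is either (i) of the form $L_k$(good derivative) so that a $\de$-independent flux bound applies, or (ii) of transversal type $\widetilde\phi_q$ with $q\neq k$, so that the smallness of the singular zone compensates the $\de^{-\f12}$-large flux bound \eqref{BA:flux.for.tphi}. One should verify in particular that no $\rd_x\rd_x\widetilde\phi_k$ term (which would only obey \eqref{BA:flux.for.tphi} with a $\de^{-\f12}$ loss on $C^k_{u_k}$ and no compensating transversality) ever arises — and indeed it does not, because in \eqref{Lchi}--\eqref{Leta} every occurrence of a second derivative of $\phi$ comes with at least one $L_k$ or, after commuting, with an $E_k$ hitting $\eta_k$ only, matching \eqref{BA:flux.for.tphi.improved.2}. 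Once this structural check is carried out, the remaining lower-order terms (metric quantities, $\chi_k^2$, $K$, $\log N$ and their $L_k$-derivatives) are all estimated by the bootstrap assumptions \eqref{bootstrapK}--\eqref{bootstrapvarTheta} together with \eqref{eq:Lchi.Leta.est}, \eqref{eq:frames.in.thm} and Lemma~\ref{dgeomvflemma}, and the proof closes by integrating along integral curves of $L_k$ exactly as in Proposition~\ref{prop:Ricci}.
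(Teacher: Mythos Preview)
Your treatment of \eqref{eq:Lchi.Leta} is correct and coincides with the paper's: both simply invoke the pointwise estimate \eqref{eq:Lchi.Leta.est} already proved in Step~1 of Proposition~\ref{prop:Ricci}.

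For \eqref{eq:L2chi.etc}, however, your plan contains a genuine conceptual error. You correctly write $L_k^2\chi_k$ (and the analogous quantities) as an explicit expression by applying $L_k$ to \eqref{Lchi}, and you correctly note that the main scalar-field contribution is $(L_k\phi)(L_k^2\phi)$. But you then propose to control $L_k^2\phi$ in $L^2(\Sigma_t\cap B(0,R))$ via the flux bootstrap assumptions \eqref{BA:flux.for.rphi}, \eqref{BA:flux.for.tphi.improved.2}, \eqref{BA:flux.for.tphi} together with the transversality/short-interval argument of Step~2(c) in Proposition~\ref{prop:Ricci}, ``after integrating, as in Step~2(b)''. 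This is where the reasoning breaks down: in Proposition~\ref{prop:Ricci} the flux bounds are used because one is \emph{integrating a transport equation in $t_k$}, and the Cauchy--Schwarz step converts the $t_k$-integral of the source into the $L^2(C^k_{u_k})$ flux. Here there is no transport equation to integrate---$L_k^2\chi_k$ is already the explicit right-hand side, and you must bound it pointwise-in-$t$ in $L^2(\Sigma_t)$. A flux bound on $C^k_{u_k}$ (i.e.\ $L^2_{t_k,\theta_k}$ at fixed $u_k$) does not give control of $L^2_{\theta_k}$ on a fixed slice $\Sigma_t$, and there is no ``$t_k$-integral'' to perform.

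The paper's route is much more direct and avoids flux bounds altogether. For $L_k\rd_x\chi_k$ and $L_kE_k\eta_k$ one simply recalls the intermediate estimates \eqref{eq:Lk.rdi.chik} and \eqref{proof.eta.0} (which already bound all terms except the main quadratic piece $D$ in $L^2_{\theta_k}$), and then controls the remaining factor---of the schematic form $(L_k\phi)(L_k\rd\phi)$ or $(L_k\phi)(E_k^2\phi)$---directly in $L^2(\Sigma_t)$ by putting $L_k\phi$ in $L^\infty$ via \eqref{BA:Li} and the second factor in $L^2(\Sigma_t)$ via the \emph{spacelike} energy bootstrap \eqref{bootstrapsmallnessenergy}. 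The same mechanism handles $L_k^2\chi_k$ and $L_k^2\eta_k$ after differentiating \eqref{Lchi}, \eqref{Leta} by $L_k$: the lower-order terms are controlled by Lemmas~\ref{lem:L.X.E}, \ref{dgeomvflemma}, Propositions~\ref{prop:elliptic.Besov.weighted}, \ref{prop:elliptic.dtg}, \ref{prop:K} and \eqref{eq:Lchi.Leta}, while the top-order scalar-field term is again $(L_k\phi)\cdot(L_k\rd\phi)$, bounded in $L^2(\Sigma_t)$ by \eqref{BA:Li} and \eqref{bootstrapsmallnessenergy}. No transversality or singular-zone splitting is needed.
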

\begin{proof}
The estimates for $L_k \chi_k$ and $L_k \eta_k$ follows from \eqref{eq:Lchi.Leta.est}. The estimates for $L_k \rd_x \chi_k$ and $L_k E_k \eta_k$ follow from combining \eqref{eq:Lk.rdi.chik}, \eqref{proof.eta.0} with \eqref{bootstrapsmallnessenergy} and \eqref{BA:Li}. Finally, the estimates for $L_k^2\eta_k$ and $L_k^2\chi_k$ follow from differentiating \eqref{Leta} and \eqref{Lchi} by $L_k$, and then controlling the resulting terms using \eqref{bootstrapsmallnessenergy}, \eqref{BA:Li}, Lemmas~\ref{lem:L.X.E} and \ref{dgeomvflemma}, Propositions~\ref{prop:elliptic.Besov.weighted}, \ref{prop:elliptic.dtg} and \eqref{prop:K}, and \eqref{eq:Lchi.Leta}. \qedhere
\end{proof}

 \subsection{Estimates for $\mu_k$ and $\varTheta_k$}\label{sec:mu.Theta}

 We next consider the estimates for $\mu_k$ and $\varTheta_k$.
 
 \begin{prop}\label{prop:mu.varTheta}
 	 The following estimates hold for all $t\in [0,T_B)$ and all $u_k \in \RR$:
 	\begin{align} 
	\label{mu.main.estimate}
 	\|  \log \mu_k - \gamma_{asymp} \om(|x|) \log |x| \|_{L^{\infty}_{1-\alpha}(\Sigma_t)} +\| \partial_x \mu_k  \|_{L^{\infty}_{1-\alpha}(\Sigma_t)}  \lesssim \ep^{\f 32} , \\
  	 \label{vartheta.main.estimate}
	\| \log(\varTheta_k) - \gamma_{asymp} \om(|x|) \log|x|\|_{L^{\infty}_{1-2\alp}(\Sigma_t)} + \| \la x \ra^{-\alp} \rd_x \log \varTheta_k \|_{L^2_{\th_k}(\Sigma_t\cap C^k_{u_k})} \lesssim \ep^{\f 32}.
\end{align}
 \end{prop}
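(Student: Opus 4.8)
The plan is to prove Proposition~\ref{prop:mu.varTheta} by solving the transport equations \eqref{Lmu} and \eqref{Lvartheta} along the integral curves of $L_k$ (equivalently, along constant-$(u_k,\theta_k)$ curves, since $\rd_{t_k} = N\cdot L_k$ by \eqref{thetaE}), exactly in the spirit of the proofs of Propositions~\ref{prop:angle} and \ref{prop:Ricci}. Throughout I would use that $\la x\ra$ is comparable between any two points on an integral curve of $L_k$ (Step~0 of Proposition~\ref{prop:angle}), and that $\la x\ra^{-1}\ls \la\theta_k\ra^{-1}$ on such curves (via $L_k\theta_k=0$, \eqref{thetainit}, and the comparability just mentioned). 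I would also repeatedly invoke the initial-data bounds from point~5 in the proof of Lemma~\ref{lem:local} (the formulas \eqref{mu(0)formula}, \eqref{initialvarTheta} give $(\mu_k)_{|\Sigma_0} = (\varTheta_k)_{|\Sigma_0} = e^\gamma$, so that on $\Sigma_0$ one has $\log\mu_k - \gamma_{asymp}\om\log|x| = \widetilde\gamma$, with the weighted bounds from Proposition~\ref{prop:elliptic.easy}, and similarly for $\varTheta_k$, $\rd_x\mu_k$, $\rd_x\log\varTheta_k$).

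For the $\mu_k$ estimates in \eqref{mu.main.estimate}: from \eqref{Lmu}, $L_k\log\mu_k = K(X_k,X_k) - X_k\log N$, and using \eqref{bootstrapK}, \eqref{eq:BA.g.asymp}--\eqref{eq:BA.g.Li}, Lemma~\ref{lem:L.X.E}, and Lemma~\ref{lem:rd.in.terms.of.XEL}, one gets $|L_k\log\mu_k|\ls \ep^{\f 54}\la x\ra^{-1+C\alp}$, hence $\rd_{t_k}(\log\mu_k - \gamma_{asymp}\om\log|x|)$ is bounded by the same after accounting for $L_k(\gamma_{asymp}\om\log|x|)$ (note $\gamma_{asymp}$ is a constant and $|L_k(\om\log|x|)|\ls \la x\ra^{-1+\ep}$). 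Integrating in $t_k$ and adding the initial bound gives the $L^\i_{1-\alp}$ bound for $\log\mu_k - \gamma_{asymp}\om\log|x|$. For $\rd_x\mu_k$, I would differentiate \eqref{Lmu} by $\rd_i$, commute $[L_k,\rd_i]$ using Lemma~\ref{dgeomvflemma}, and obtain a transport equation for $\rd_i\log\mu_k$ whose inhomogeneity involves $\rd_x K$, $\rd^2_x N$, and $\rd_x X_k$ — all controlled pointwise by \eqref{bootstrapK}, Proposition~\ref{prop:elliptic.Besov.weighted}, and Lemma~\ref{dgeomvflemma} — plus the term $[L_k,\rd_i]\log\mu_k \sim (\rd_x X_k)(\rd_x\log\mu_k)$, which is lower order and absorbed since $\rd_x\log\mu_k$ was just bounded. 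Integrating again yields the $L^\i_{1-\alp}$ bound for $\rd_x\mu_k = \mu_k\rd_x\log\mu_k$.

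For the $\varTheta_k$ estimates in \eqref{vartheta.main.estimate}: the pointwise bound is the same argument using \eqref{Lvartheta}, $L_k\log\varTheta_k = \chi_k$, and the bound \eqref{ricci.pointwise.main.estimate} for $\chi_k$ in $L^\i_{1-\alp}$ established in Proposition~\ref{prop:Ricci} (one loses slightly to $L^\i_{1-2\alp}$ to be safe with weight bookkeeping along the curve). The $L^2_{\theta_k}$ bound for $\la x\ra^{-\alp}\rd_x\log\varTheta_k$ is the analogue of \eqref{nablachi.estimate}: differentiating \eqref{Lvartheta} by $\rd_i$ gives $\rd_{t_k}\rd_i\log\varTheta_k = N\rd_i\chi_k + N[L_k,\rd_i]\log\varTheta_k + (\rd_i N)\chi_k$; the first term is exactly $\rd_x\chi_k$ which is in $L^2_{\theta_k}(\Sigma_t\cap C^k_{u_k})$ by \eqref{nablachi.estimate}, the commutator term is $\sim(\rd_x X_k)(\rd_x\log\varTheta_k)$ which is absorbed after putting $\rd_x X_k\in L^\i$ (Lemma~\ref{dgeomvflemma}) and using the bootstrap \eqref{bootstrapvarTheta} (or rather bootstrapping this very quantity), and the last term is lower order. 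Integrating in $t_k$ and taking the $\|\la\theta_k\ra^{-\f12-\alp}\cdot\|_{L^2_{\theta_k}}$-type norm (using $\la x\ra^{-1}\ls\la\theta_k\ra^{-1}$ to convert weights), together with the initial bound $\|\rd_x\log\varTheta_k\|_{L^2_{\theta_k}(\Sigma_0\cap C^k_{u_k})}\ls \ep^2$, gives the claim.

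The main obstacle I anticipate is the $\rd_x\log\varTheta_k$ estimate: one must carefully track the weight $\la x\ra^{-\alp}$ through the change of variables and the comparability estimates, and in particular ensure that the commutator term $[L_k,\rd_i]\log\varTheta_k$ genuinely closes (i.e.~that the coefficient $\rd_x X_k$, which grows like $\la x\ra^{-1+C\alp}$, times a $\rd_x\log\varTheta_k$ that is only in $L^2_{\theta_k}$ with a weight, is controllable after the Grönwall/absorption step). This is analogous to — and no harder than — the corresponding step for $\rd_x\chi_k$ in Step~2(a) of Proposition~\ref{prop:Ricci}, so I expect it to go through, but it requires the same careful bookkeeping. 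Unlike the $\chi_k$ and $\eta_k$ cases, there is no need here to exploit transversality of the waves, because the inhomogeneities for $\mu_k$ and $\varTheta_k$ involve only the metric quantities and $\chi_k$ (whose hard estimates have already been done), not second derivatives of the $\tphi$'s directly.
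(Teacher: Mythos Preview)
Your proposal is correct and follows essentially the same approach as the paper: integrate the transport equations \eqref{Lmu} and \eqref{Lvartheta} (and their $\rd_x$-differentiated versions) along integral curves of $L_k$, using the initial values from Lemma~\ref{riccisigma0expression} together with the already-established bounds on $K$, $\rd_x K$, $\rd_x^2\mfg$, $\chi_k$, and $\rd_x\chi_k$. One small correction: to obtain the claimed $\ep^{3/2}$ power (rather than the $\ep^{5/4}$ you wrote), you should invoke the \emph{improved} estimates \eqref{Kmainestimate} and Proposition~\ref{prop:elliptic.Besov.weighted} in place of the raw bootstrap assumptions \eqref{bootstrapK}, \eqref{eq:BA.g.Li}---the paper does exactly this.
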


\begin{proof}
By initial condition \eqref{mu(0)formula}, and the bounds in Proposition~\ref{prop:elliptic.easy}, 
$$\| \log \mu_k - \gamma_{asymp} \om(|x|) \log |x| \|_{L^\i_{1-\alp}(\Sigma_0)} \ls \ep^{\f 32}.$$
Integrating the transport equation \eqref{Lmu}, and using the estimates in Lemma~\ref{lem:L.X.E} regarding $|X^i_k|$ together with \eqref{eq:d2g.L4}, \eqref{eq:d2g.Linfty}, \eqref{Kmainestimate}, and the comparability of $\la x\ra$ along integral curves of $\rd_{t_k}$, we get 
$$\sup_{0 \leq t < T_B}\| \log \mu_k - \gamma_{asymp} \om(|x|) \log |x| \|_{L^{\infty}_{1-\alpha}(\Sigma_t)} \ls \ep^{\f 32},$$
which controls the first term in  \eqref{mu.main.estimate}.

Similarly, after commuting \eqref{Lmu} with $\partial_i$, we can integrate the transport equation using Lemma~\ref{dgeomvflemma}, \eqref{secondderivative.g.estimate.weighted} and \eqref{Kmainestimate}, in addition to \eqref{mu(0)formula}, Lemma~\ref{lem:L.X.E} and \eqref{eq:d2g.Linfty}. This bounds the second term in \eqref{mu.main.estimate}. 

The estimate \eqref{vartheta.main.estimate} can be obtained in a similar manner. Here, we use the transport equation \eqref{Lvartheta}, and the initial condition is given by \eqref{initialvarTheta}. To bound the initial value, we use Proposition~\ref{prop:elliptic.easy}, . In order to control the inhomogeneous term $\chi_k$ in \eqref{Lvartheta} (and its $\rd_x$ derivative), we use the estimates in Proposition~\ref{prop:Ricci}, and the comparability of $\la x\ra$ along integral curves of $\rd_{t_k}$. \qedhere
\end{proof}

\subsection{Second derivatives of the commutation vector fields}\label{sec:second.derivative.X.E.L}

\begin{lem}\label{lem:rdrd.in.terms.of.geometric}
For every sufficiently regular function $f$,
\begin{align}
\label{eq:rdrdx.in.terms.of.geometric}
\|\rd \rd_x f \|_{L^2(\Sigma_t\cap B(0,3R))} \ls &\: \sum_{Y_k \in \{L_k,X_k,E_k \}} \sum_{Z_k \in \{X_k,E_k\} }\| Y_k Z_k f \|_{L^2(\Sigma_t\cap B(0,3R))} + \| \rd_x f \|_{L^2(\Sigma_t\cap B(0,3R))}, \\
\label{eq:rdrd.in.terms.of.geometric}
\|\rd^2 f \|_{L^2(\Sigma_t\cap B(0,3R))} \ls &\: \sum_{Y_k,Z_k \in \{L_k,X_k,E_k \}} \| Y_k Z_k f \|_{L^2(\Sigma_t\cap B(0,3R))} + \| \rd f \|_{L^2(\Sigma_t\cap B(0,3R))}.
\end{align}
\end{lem}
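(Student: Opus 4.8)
The statement to prove is Lemma~\ref{lem:rdrd.in.terms.of.geometric}: on the compact set $B(0,3R)$, second coordinate derivatives of a function are controlled by second derivatives along the frame vector fields $(L_k, X_k, E_k)$, plus lower order terms. Let me think about the structure.

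The key point: we want to invert the relationship expressing $\rd$ in terms of $(L_k, X_k, E_k)$. From the earlier lemmas (Lemma~\ref{lem:rd.in.terms.of.XEL}), we have $|\rd_i f| \ls \la x\ra^\ep(|E_k f| + |X_k f|)$ and $|\rd_t f| \ls \la x\ra^\ep(|L_k f| + |X_k f|) + \la x\ra^{-1+\ep}|E_k f|$. On $B(0,3R)$ the weights are $O(1)$, so $|\rd f| \ls |L_k f| + |X_k f| + |E_k f|$ pointwise.

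For second derivatives: we want to express $\rd_\alpha \rd_\beta f$. Write $\rd_\alpha = \sum_j a^j_\alpha Z_k^{(j)}$ where $Z_k^{(j)}$ ranges over the frame and $a^j_\alpha$ are the transition coefficients (involving $L_k^\beta$, $X_k^i$, $E_k^i$, $N^{-1}$, $\beta^i$, etc.). Then $\rd_\alpha \rd_\beta f = \sum_{j,l} a^j_\alpha Z_k^{(j)}(a^l_\beta Z_k^{(l)} f) = \sum_{j,l} a^j_\alpha a^l_\beta Z_k^{(j)} Z_k^{(l)} f + \sum_{j,l} a^j_\alpha (Z_k^{(j)} a^l_\beta)(Z_k^{(l)} f)$. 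The first sum is the desired frame-second-derivative term (times $O(1)$ coefficients on $B(0,3R)$). The second sum involves first derivatives of the transition coefficients $a^l_\beta$ contracted with $Z_k^{(l)} f$; the derivatives $Z_k^{(j)} a^l_\beta$ are exactly the quantities $Y_k Z^\beta_k$ controlled in Lemma~\ref{dgeomvflemma} (bounded by $\ep^{5/4}\la x\ra^{O(\alp)}$, hence $O(\ep^{5/4})$ on $B(0,3R)$), except for the $L_k L_k^t$ and $L_k L_k^i$ combinations which are also bounded there. So this error term is $\ls |Z_k^{(l)} f|$, and then by Lemma~\ref{lem:rd.in.terms.of.XEL} (or rather the bound $|Z_k f|\ls |\rd f|$, which also holds since $Z_k$ are linear combinations of $\rd$ with bounded coefficients by Lemma~\ref{lem:L.X.E}) this is $\ls |\rd f|$. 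There is a subtlety: we need $|Z_k f| \ls |\rd f|$, which is immediate from $Z_k = Z_k^\alpha \rd_\alpha$ and $|Z_k^\alpha|\ls 1$ on $B(0,3R)$ (Lemma~\ref{lem:L.X.E}).

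The plan is therefore: first, for \eqref{eq:rdrd.in.terms.of.geometric}, expand $\rd^2_{\alpha\beta} f$ by writing each $\rd$ in the frame $(L_k, X_k, E_k)$ using the explicit transformation formulas (\eqref{partial12EX}, \eqref{eq:partialtLEX}, i.e.\ $\rd_i = e^{2\gamma}(-X^2_k E_k + E^2_k X_k)$ type identities and $\rd_t = NL_k + NX_k + \bt^i\rd_i$), apply the product rule, and group terms into (a) products of bounded coefficients with $Y_k Z_k f$ for $Y_k, Z_k$ in the frame, and (b) products of (first frame-derivatives of the transition coefficients) with (first frame-derivatives of $f$). For (a), bound the coefficients by $O(1)$ on $B(0,3R)$ using Lemma~\ref{lem:L.X.E} and the metric estimates \eqref{eq:BA.g.asymp}--\eqref{eq:BA.g.Li}. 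For (b), bound the coefficient derivatives by $O(\ep^{5/4})$ (hence $O(1)$) using Lemma~\ref{dgeomvflemma}, and bound $|Y_k f| \ls |\rd f|$ using Lemma~\ref{lem:L.X.E}. Take the $L^2(\Sigma_t\cap B(0,3R))$ norm and use the triangle inequality. For \eqref{eq:rdrdx.in.terms.of.geometric}, repeat the argument but only expanding the \emph{inner} $\rd_x$ derivative (one of the two $\rd$'s is already a spatial $\rd_x$, so only the outer needs to be a general $\rd$ decomposed into $\{L_k, X_k, E_k\}$ while the inner is already $\rd_x = $ combination of $\{X_k, E_k\}$ only) — actually more precisely, write the inner $\rd_x f$ in terms of $\{X_k, E_k\}$ (no $L_k$), then differentiate by the outer $\rd$ which we write in terms of $\{L_k, X_k, E_k\}$; the product rule gives terms $Y_k Z_k f$ with $Y_k\in\{L_k,X_k,E_k\}$, $Z_k\in\{X_k,E_k\}$ plus lower order, matching the claimed right-hand side.

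The argument is essentially bookkeeping; the only mild obstacle is organizing the commutator/product-rule terms so that the coefficient derivatives that appear are precisely those controlled in Lemma~\ref{dgeomvflemma} (and not, say, an uncontrolled second derivative of a frame component, or the one combination $\rd^2$ of $L_k^t$ that is weaker) — but since each transition coefficient is built from $L_k^\beta, X_k^i, E_k^i, N, \bt^i$ and their \emph{first} derivatives only, and we only apply \emph{one} frame derivative to them in step (b), everything needed is in Lemmas~\ref{lem:L.X.E}, \ref{dgeomvflemma} and the elliptic estimates of Section~\ref{metricsection}. I would also note that restricting to $B(0,3R)$ is what makes all the $\la x\ra$ weights harmless, so no weighted version is claimed or needed here.
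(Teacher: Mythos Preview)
Your proposal is correct and follows essentially the same approach as the paper: express the inner spatial derivative via \eqref{partial12EX} (and, for \eqref{eq:rdrd.in.terms.of.geometric}, additionally the time derivative via \eqref{eq:partialtLEX}) in terms of $\{X_k,E_k\}$ (respectively $\{L_k,X_k,E_k\}$), then differentiate by the outer $\rd$ and control the resulting coefficient derivatives using Lemmas~\ref{lem:L.X.E}, \ref{dgeomvflemma} and the metric estimates \eqref{eq:d2g.L4}--\eqref{eq:d2g.Linfty}, with the weights harmless on $B(0,3R)$. The only cosmetic difference is that the paper keeps the outer derivative as a coordinate $\rd$ until obtaining the pointwise bound $|\rd\rd_i f|\ls |\rd E_k f|+|\rd X_k f|+|E_k f|+|X_k f|$, and only afterwards invokes Lemma~\ref{lem:rd.in.terms.of.XEL} to pass from $\rd$ to $\{L_k,X_k,E_k\}$, whereas you expand both derivatives in the frame simultaneously.
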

\begin{proof}
Acting the vector fields in \eqref{partial12EX} on $f$, and further differentiating by $\rd$, and using the estimates in Lemmas~\ref{lem:L.X.E} and \ref{dgeomvflemma} and \eqref{eq:d2g.L4}--\eqref{eq:d2g.Linfty}, we obtain the pointwise bound in $B(0,3R)$.
$$ |\rd \rd_i f| \ls |\rd E_k f| + |\rd X_k f| + |E_k f| + |X_k f|,$$
which implies \eqref{eq:rdrdx.in.terms.of.geometric} after also using Lemma~\ref{lem:rd.in.terms.of.XEL}.

To prove \eqref{eq:rdrd.in.terms.of.geometric}, we need to additionally control $\rd \rd_t f$, which can be done with a similar argument except for starting with \eqref{eq:partialtLEX}; we omit the details. \qedhere
\end{proof}

\begin{prop} \label{secondderivativeYprop} The following estimates\footnote{We remark that some components of the second derivatives are in fact better. For instance, the derivatives of $L_k E^i_k$ and $E_k E_k^i$ obey better bounds (in terms of the $L^p$ space) than $X_k E_k^i$. Such improvements will not be made precise, nor will they be useful.} for the second derivatives of the coefficients of the vector fields $L_k$, $E_k$ and $X_k$ hold for all $t\in [0,T_B)$:
$$\|\rd^2 E^i_k\|_{L^2(\Sigma_t\cap B(0,3R))} + \|\rd^2 X^i_k\|_{L^2(\Sigma_t\cap B(0,3R))} + \|\rd\rd_x L^\alp_k\|_{L^2(\Sigma_t\cap B(0,3R))}  \ls \ep^{\f 54}.$$	
\end{prop}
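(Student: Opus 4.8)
The plan is to estimate the second derivatives $\rd^2 E_k^i$, $\rd^2 X_k^i$ and $\rd\rd_x L_k^\alp$ by differentiating once more the first-order formulas in Proposition~\ref{dXELellipticprop} (i.e.~\eqref{LEi}--\eqref{YLi}) and then controlling the resulting terms using the already-established metric, Ricci, $\mu_k$, $\varTheta_k$ and commutator estimates. The key structural point, which makes Lemma~\ref{lem:rdrd.in.terms.of.geometric} the right tool, is that the first-order derivatives of the frame coefficients in \eqref{LEi}--\eqref{EEi} were expressed \emph{in terms of the null-frame derivatives} $L_k$, $X_k$, $E_k$ acting on the coefficients, and in terms of quantities like $\chi_k$, $\eta_k$, $K(E_k,X_k)$, $\rd_x\gamma$, $\rd_x\bt$, $\rd_x N$. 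So I will work in the region $B(0,3R)$, apply \eqref{eq:rdrdx.in.terms.of.geometric}--\eqref{eq:rdrd.in.terms.of.geometric} to reduce each $\|\rd^2(\cdot)\|_{L^2}$ to a sum of $\|Y_k Z_k(\cdot)\|_{L^2}$ (with $Y_k, Z_k$ null-frame vector fields, appropriately restricted for the $L_k^\alp$ case to $Y_k Z_k$ with $Z_k \in \{X_k,E_k\}$ or with a $\rd_x$), plus lower-order $\|\rd(\cdot)\|_{L^2}$ terms which are already controlled by Lemma~\ref{dgeomvflemma}.

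The main computation is then: take a null-frame vector field $Y_k$ and hit \eqref{LEi}, \eqref{XEi}, \eqref{EEi} with it. This produces (schematically) three types of terms. First, $Y_k$ of a Ricci coefficient: $Y_k \chi_k$, $Y_k\eta_k$, $Y_k(K(E_k,X_k))$, etc. For $Y_k = L_k$ these are bounded in $L^\i_{1-\alp}$ by \eqref{eq:Lchi.Leta}; for $Y_k = E_k$ we have $\|E_k\eta_k\|_{L^2_{\th_k}(\Sigma_t\cap C^k_{u_k})}$ from \eqref{Eeta.estimate} (and after integrating in $u_k$ over $B(0,3R)$, an $L^2$ bound); for $Y_k = X_k$ or a generic spatial derivative, $\|\rd_x\chi_k\|_{L^2_{\th_k}(\Sigma_t\cap C^k_{u_k})}$ from \eqref{nablachi.estimate} and $\|\rd_x\eta_k\|_{L^2(\Sigma_t\cap B(0,3R))}$ from \eqref{nablaeta.estimate} do the job; for $K(E_k,X_k)$ and its derivatives one uses Proposition~\ref{prop:K} together with \eqref{Yibounded}, \eqref{dYibounded}, \eqref{rdYibounded} and the product rule. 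Second, $Y_k$ of a metric quantity $\rd_x\mfg/N$ (where $\mfg\in\{\gamma,\bt,N\}$): writing $Y_k = Y_k^\alp\rd_\alp$ and using \eqref{Yibounded}, this becomes a second derivative of $\mfg$; the purely spatial part is in $L^\i_{2-\f\alp2}$ by Proposition~\ref{prop:elliptic.Besov.weighted}, and the one-$\rd_t$ part is in $L^{\f2{s'-s''}}_{2-s'+s''-2\alp}$ by Proposition~\ref{prop:elliptic.dtg} (which on the compact set $B(0,3R)$ embeds into $L^2$). Third, terms where $Y_k$ hits the frame coefficients $X_k^i$, $E_k^i$, $L_k^\alp$ themselves, or the coefficients $\mu_k$, $\varTheta_k$: these are $Y_k Z_k^\bt$ with $(Y_k,Z_k)\neq(L_k,L_k)$, bounded by $\ep^{\f54}\la x\ra^{3\alp}$ in \eqref{dYibounded}, or $\rd_x\mu_k$, $\rd_x\varTheta_k$, $\rd_x\log\mu_k$, controlled by \eqref{mu.main.estimate}, \eqref{vartheta.main.estimate}; multiplying these bounded (on $B(0,3R)$) factors against other bounded factors gives the $\ep^{\f54}$ (indeed better) size. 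Each product of two factors is handled by putting the less regular factor (the one controlled only in $L^2$, typically a second derivative of $\mfg$ or a $\rd_x$ of a Ricci coefficient) in $L^2$ and everything else in $L^\i$, using that on $B(0,3R)$ all weights are harmless and all the $L^\i$ bounds above are $\ls\ep^{\f54}$ or $\ls\ep^{\f32}$ or $\ls 1$ as appropriate.

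There is one subtlety in the $L_k^\alp$ case: \eqref{YLi} gives $Y(L_k^i) = -Y(X_k^i) - Y(\bt^i/N)$ and \eqref{YLt} gives $Y(L_k^t) = -Y\log N/N$, so $\rd\rd_x L_k^\alp$ reduces to $\rd\rd_x X_k^i$ (already treated) plus $\rd\rd_x(\bt^i/N)$ and $\rd\rd_x(\log N/N)$, i.e.~purely-spatial-times-one-extra derivative of metric components; one must be careful that Lemma~\ref{lem:rdrd.in.terms.of.geometric} only lets one reduce $L_k$-type double derivatives of $L_k^\alp$ through $Z_k\in\{X_k,E_k\}$ or $\rd_x$ (never $L_k L_k L_k^t$, which would need $\rd_t^2$ of $N$, uncontrolled) — but this is exactly the restriction built into \eqref{eq:rdrdx.in.terms.of.geometric}, and the metric estimates only ever give us $\rd\rd^2_x\mfg$ with at most one $\rd_t$, which is what Proposition~\ref{dnablanablagwithloss.prop} or (better, on $B(0,3R)$) Propositions~\ref{prop:elliptic.Besov.weighted}, \ref{prop:elliptic.dtg} provide. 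The main obstacle is thus purely organizational: keeping track, term by term, of which derivative lands where so that one never creates a genuinely uncontrolled object (a general $\rd^2\chi_k$, a general $\rd^2\eta_k$, or a $\rd_t^2\mfg$), and checking that every term that survives is a product of one $L^2$-bounded factor (of size $\ls\ep^{\f54}$, or $\ls\ep^{\f32}$ with room to spare) with $L^\i$ factors of size $O(1)$. I expect no new idea is needed beyond carefully invoking Lemma~\ref{lem:rdrd.in.terms.of.geometric}, Proposition~\ref{prop:Ricci}, Proposition~\ref{prop:L.chi.eta}, Proposition~\ref{prop:mu.varTheta}, Proposition~\ref{prop:K}, Propositions~\ref{prop:elliptic.Besov.weighted}--\ref{prop:elliptic.dtg}, and Lemmas~\ref{lem:L.X.E}--\ref{dgeomvflemma}, and collecting the bounds; the $\ep^{\f54}$ on the right-hand side (rather than $\ep^{\f32}$) is not sharp and simply reflects the weakest input used.
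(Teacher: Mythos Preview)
Your proposal is correct and follows essentially the same approach as the paper: reduce via Lemma~\ref{lem:rdrd.in.terms.of.geometric} to $\|\rd Z_k E_k^i\|_{L^2}$ (and $\|\rd Z_k L_k^\alpha\|_{L^2}$ with $Z_k\in\{X_k,E_k\}$ only), differentiate \eqref{LEi}--\eqref{YLi} once to obtain on $B(0,3R)$ the schematic pointwise bound $|\rd Z_k E_k^i| \ls \ep^{\f32} + \sum_{\mfg}|\rd\rd_x\mfg| + |\rd K| + |\rd\chi_k| + |\rd\eta_k|$, and close in $L^2$ using Propositions~\ref{prop:elliptic.Besov.weighted}, \ref{prop:elliptic.dtg}, \ref{prop:K}, \ref{prop:Ricci}, \ref{prop:L.chi.eta} together with Lemmas~\ref{lem:L.X.E}--\ref{dgeomvflemma}. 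Your references to $\mu_k$, $\varTheta_k$ and Proposition~\ref{dnablanablagwithloss.prop} are extraneous (these quantities do not appear in \eqref{LEi}--\eqref{YLi}, and no third metric derivative is needed) but harmless.
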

\begin{proof} \pfstep{Step~1: Estimates for $E_k^i$ and $X_k^i$}  By Lemma~\ref{lem:rdrd.in.terms.of.geometric}, to obtain the estimates for $E_k^i$ and $X_k^i$, it suffices to bound $\|\rd Z_k E_k^i\|_{L^2(\Sigma_t\cap B(0,3R))}$ and $\|\rd Z_k X_k^i\|_{L^2(\Sigma_t\cap B(0,3R))}$ for $Z_k \in \{L_k,X_k,E_k\}$ . Moreover, in view of \eqref{YXj}, it in fact suffices to bound only $\|\rd Z_k E_k^i\|_{L^2(\Sigma_t\cap B(0,3R))}$. 

To control $\|\rd Z_k E_k^i\|_{L^2(\Sigma_t\cap B(0,3R))}$, we differentiate the equations \eqref{LEi}--\eqref{EEi} in Proposition~\ref{dXELellipticprop}. To treat the resulting terms, note that we only need estimates in $B(0,3R)$. We then use Lemmas~\ref{lem:L.X.E} and \ref{dgeomvflemma} to bound $L_k^\alp$, $X_k^i$, $E_k^i$ and their first derivatives in $L^\i$, use \eqref{eq:d2g.L4}, \eqref{eq:d2g.Linfty} and \eqref{eq:dtg.improve.bootstrap} to control the metric and its first derivatives in $L^\i$, and use \eqref{Kmainestimate} and \eqref{ricci.pointwise.main.estimate} to bound $K$, $\chi_k$ and $\eta_k$ in $L^\i$. Thus, on the set $\Sigma_t\cap B(0,3R)$, we have the pointwise bound
\begin{equation}\label{eq:second.der.of.Y.pointwise}
\sum_{Z_k \in \{L_k, X_k, E_k\}} |\rd Z_k E_k^i| \ls \ep^{\f 32} + \sum_{\mfg \in \{\gamma,\bt^i,N\} } |\rd \rd_x \mfg| + |\rd K| + |\rd \chi_k| + |\rd \eta_k|.
\end{equation}
We take the $L^2(\Sigma_t\cap B(0,3R))$ norm \eqref{eq:second.der.of.Y.pointwise}: the metric terms are bounded by \eqref{secondderivative.g.estimate}, and \eqref{eq:dtg.main} (and the fact that $s'-s'' <\f 12$), $|\rd K|$ is bounded by \eqref{Kmainestimate}, and $|\rd \chi_k|$ and $|\rd \eta_k|$ are bounded by Lemma~\ref{lem:rd.in.terms.of.XEL}, \eqref{eq:Lchi.Leta}, \eqref{nablachi.estimate} and \eqref{nablaeta.estimate}. We thus obtain $\sum_{Z_k \in \{L_k, X_k, E_k\}} \|\rd Z_k E_k^i \|_{L^2(\Sigma_t\cap B(0,3R))} \ls \ep^{\f 32}$, as desired.

\pfstep{Step~2: Estimates for $L^\alp_k$ and $L^i_k$} Using Lemma~\ref{lem:rdrd.in.terms.of.geometric}, it suffices to bound $\sum_{Z_k \in \{X_k, E_k\}} \|\rd Z_k L_k^\alp\|_{L^2(\Sigma_t\cap B(0,3R))}$ (note in particular that $Z_k \neq L_k$). Differentiating \eqref{YLt} and \eqref{YLi}, and using the bound \eqref{eq:second.der.of.Y.pointwise} above together with \eqref{eq:d2g.L4}, \eqref{eq:d2g.Linfty} and \eqref{eq:dtg.improve.bootstrap}, we obtain that on $\Sigma_t \cap B(0,3R)$,
\begin{equation}
\sum_{Z_k \in \{X_k, E_k\} } |\rd Z_k L_k^\alp| \ls \ep^{\f 32} + \sum_{\mfg \in \{\gamma,\bt^i,N\} } |\rd \rd_x \mfg| + |\rd K| + |\rd \chi_k| + |\rd \eta_k|.
\end{equation}
Notice that all the terms have appeared in \eqref{eq:second.der.of.Y.pointwise}, and we then proceed as in Step~1. \qedhere

\end{proof}

\section{Conclusion of the proof of Theorem~\ref{thm:bootstrap.metric}}\label{sec:thm.bootstrap.metric.conclusion}

We continue to work under the assumptions of Theorem~\ref{thm:bootstrap.metric}. We now conclude the proof of Theorem~\ref{thm:bootstrap.metric}.

\begin{proof}[Proof of Theorem~\ref{thm:bootstrap.metric}]
To conclude the proof of Theorem~\ref{thm:bootstrap.metric}, we only need to collect already proven facts:
\begin{itemize}
\item \eqref{eq:BA.g.asymp}--\eqref{eq:BA.g.L4} all hold with $\ep^{\f 54}$ replaced by $C \ep^{\f 32}$ thanks to Proposition~\ref{prop:elliptic.easy} and \eqref{eq:dtg.improve.bootstrap}.
\item \eqref{bootstrapK} is improved by \eqref{Kmainestimate}, \eqref{bootstrapricci} and \eqref{bootstrapnablaricci} are improved by estimates in Proposition~\ref{prop:Ricci}, and \eqref{bootstrapmu} and \eqref{bootstrapvarTheta}  are improved by estimates in Proposition~\ref{prop:mu.varTheta}. All these estimates are improved from $\ep^{\f 54}$ to $C \ep^{\f 32}$.
\item The estimates in \eqref{eq:wavefront} for $u_k$ follow from Corollary~\ref{cor:diffeo} and Proposition~\ref{prop:dx2u}.
\item The estimates in \eqref{eq:frames.in.thm} for $L_k^\bt$, $E_k^i$ and $X_k^i$ follow from Lemmas~\ref{lem:L.X.E} and \ref{dgeomvflemma}.
\item The estimates in \eqref{metric.est} for the metric components follow from Propositions~\ref{prop:elliptic.easy}, \ref{prop:elliptic.Besov.weighted}, and \ref{prop:elliptic.dtg}. \qedhere
\end{itemize}
\end{proof}

\appendix

\section{Solving the constraint equations}\label{sec:appendix}

This appendix concerns the constraint equations for the initial data. In our setting, we cannot directly use the result in \cite{Huneau.constraints} (or \cite{HL.elliptic}) to solve the constraints. We will instead need a modification which we sketch in this appendix.

We first explain why \cite{Huneau.constraints} is not applicable in our setting. In \cite{Huneau.constraints, HL.elliptic}, one prescribes $\phi$ and $\dot \phi:= e^{2\gamma} \n\phi$ so that one can directly impose the integrability condition
\begin{equation}\label{eq:integrability.appendix}
\int_{\Sigma_0} e^{2\gamma} \,(\n\phi)\, (\rd_j \phi) \, dx^1\, dx^2 = \int_{\Sigma_0} \dot\phi \, (\rd_j \phi) \, dx^1\, dx^2 = 0,\quad j=1,2.
\end{equation}
Using this condition, one can solve for $\gamma$ and $K$ as a coupled system of nonlinear elliptic equations. However, in our case, we need to impose initial data with the additional condition that $\n \tphi - X_k \tphi$ is better than generic first derivatives of $\tphi$. In terms of $\tphi$ and $\dot{\widetilde{\phi}}_k$, this corresponds $e^{-2\gamma} \dot{\widetilde{\phi}}_k - e^{-\gamma} \cdot \de^{iq} \cdot c_{kq} \cdot \rd_i \tphi$ being better. This, however, cannot be imposed with the scheme in \cite{Huneau.constraints} since $\gamma$ is not known a priori.

Instead, we prescribe $\phi$ and $\underline{\phi} := e^\gamma \n \phi$. In order to impose the condition \eqref{eq:integrability.appendix}, we need to introduce a two-parameter family of data and show that there exists a choice of parameters such that \eqref{eq:integrability.appendix} holds.

In \textbf{Section~\ref{sec:general.lemma.constraint}}, we prove a general lemma for solving the constraint equations. In \textbf{Section~\ref{sec:nondegeneracy}}, we then expand on the non-degeneracy condition \eqref{eq:nondegeneracy.in.thm}, in preparation of solving the constraint equations in our setting. In \textbf{Section~\ref{sec:appendix.constraint.for.rough}}, we then solve the constraint equations to construct examples of initial data sets satisfying the assumptions of Definition~\ref{roughdef}. Finally, in \textbf{Section~\ref{sec:data.approx}}, we solve the constraint equations in order to construct $\de$-impulsive wave data that approximate impulsive wave data, hence proving Lemma~\ref{lem:data.approx}.

\subsection{A general lemma for solving the constraint equations}\label{sec:general.lemma.constraint}

Let $\mathscr K\subseteq \RR^2$ be a compact, convex set and consider a two parameter family $\underline{\phi}^{{\bm \lambda}}$ parametrized by $\bm \lambda = (\lambda_1, \lambda_2) \in \mathscr K$ such that the following holds:
\begin{enumerate}
\item For any $\bm\lambda \in \mathscr K$, $(\phi, \uphi^{{\bm \lambda}})$ satisfies $\mathrm{supp}(\phi), \, \mathrm{supp}(\uphi^{{\bm \lambda}}) \subseteq B(0,R)$ and obeys the estimate 
\begin{equation}\label{eq:phi.uphi.basic.bound}
\|\phi\|_{W^{1,4}(\RR^2)} + \|\uphi^{\bm \lambda} \|_{L^4(\RR^2)} \leq \ep.
\end{equation}
\item $\bm \lambda \mapsto \underline{\phi}^{\bm \lambda}$ is a continuous map $\mathscr K \to L^4(\RR^2)$.
\item For any $\underline{\gamma} = -\underline{\gamma}_{asymp}\om(|x|) \log|x| + \widetilde{\underline{\gamma}}$ with $|\underline{\gamma}_{asymp}|\leq \ep$ and $\| \widetilde{\underline{\gamma}} \|_{W^{1,4}_{\f 14}(\Sigma_0)}\leq \ep$, there exists $\bm \lambda \in \mathscr K$ such that 
\begin{equation}\label{eq:integrability.in.appendix}
\int_{\Sigma_0} e^{\underline{\gamma}} \underline{\phi}^{\bm \lambda} (\rd_j \phi) \, dx^1 dx^2 = 0,\quad j=1,2.
\end{equation}
\end{enumerate}

\begin{lemma}\label{lem:constraint}
For any $R>0$, there exists $\ep_0 = \ep_0(R) >0$ such that the following holds.

Suppose $(\phi, \uphil)$ satisfies the conditions 1--3 above. Then, if $\ep \in (0, \ep_0]$, there exist $\bm \lambda_0 \in \mathscr K$ and functions $(\gamma, K)$ such that $(\phi, \phi' = e^{-\gamma} \uphi^{\bm \lambda_0},\gamma, K)$ is an admissible initial data set (see Definition~\ref{datachoice}).
\end{lemma}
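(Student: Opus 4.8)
The plan is to use the prescribed data $(\phi, \uphi^{\bm\lambda})$ as input to a fixed-point argument that constructs $(\gamma, K)$ as the solution of the coupled nonlinear elliptic system consisting of the Hamiltonian-type constraint \eqref{eq:gamma.constraint} for $\gamma$ and the momentum constraint \eqref{Kellipticequation} for $K$ (rewritten in terms of $\bt$, or equivalently solved directly for $K$ together with the traceless condition). The key structural point is that once one fixes $\underline{\phi}^{\bm\lambda}$ and the yet-to-be-determined $\gamma$, the relation $\phi' = e^{-\gamma}\uphi^{\bm\lambda}$ converts the integrability condition \eqref{compatibility} into precisely \eqref{eq:integrability.in.appendix}. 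So the construction decouples into: (i) for each fixed small $\gamma$ of the prescribed asymptotic form, produce the parameter $\bm\lambda$ making \eqref{eq:integrability.in.appendix} hold (this is hypothesis 3); (ii) with that $\bm\lambda$, solve the elliptic constraint system for a \emph{new} pair $(\gamma', K')$; (iii) show the map $\gamma \mapsto \bm\lambda \mapsto \gamma'$ has a fixed point, for which $(\phi, e^{-\gamma}\uphi^{\bm\lambda}, \gamma, K)$ is then admissible.

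\textbf{Key steps.} First, I would set up the function space: let $\mathscr X$ be the closed ball of radius $\ep$ (say) in the affine space of functions $\underline{\gamma} = -\underline{\gamma}_{asymp}\om(|x|)\log|x| + \widetilde{\underline{\gamma}}$ with $|\underline{\gamma}_{asymp}| + \|\widetilde{\underline{\gamma}}\|_{W^{1,4}_{1/4}(\Sigma_0)} \leq \ep$, intersected with $\underline\gamma_{asymp}\ge 0$. Second, define the map $\Phi: \mathscr X \to \mathscr X$ as follows. Given $\underline\gamma \in \mathscr X$, hypothesis 3 yields $\bm\lambda = \bm\lambda(\underline\gamma) \in \mathscr K$ with \eqref{eq:integrability.in.appendix}; set $\phi' := e^{-\underline\gamma}\uphi^{\bm\lambda}$, which by \eqref{eq:phi.uphi.basic.bound}, the support property, and the smallness of $\underline\gamma$ lies in $L^4$ with norm $\ls \ep$. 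Then solve the momentum constraint \eqref{Kellipticequation} for $K$ (using Proposition~\ref{prop:basic.elliptic.2} / Proposition~\ref{prop:elliptic.1st.der}-type estimates on $\RR^2$, noting the RHS $2 e^{2\underline\gamma}\n\phi\,\rd_j\phi$ is controlled once $\phi'$ and $\underline\gamma$ are), obtaining $K \in H^1_{7/8}(\RR^2)$ traceless with norm $\ls \ep$; here one should note the integrability of the RHS guarantees no logarithmic growth in $K$, and one recovers $\bt$ (no log term) by inverting $\mathfrak L$. Finally solve \eqref{eq:gamma.constraint} for the new $\gamma' = -\gamma'_{asymp}\om\log|x| + \widetilde{\gamma'}$; by Proposition~\ref{prop:basic.elliptic.2} the RHS $-\de^{il}(\rd_i\phi)(\rd_l\phi) - \frac{e^{-2\gamma}}2|K|^2 - e^{2\gamma}(\n\phi)^2 \in L^2_{?}$ (with appropriate weight, using compact support of $\phi$ and the weighted bounds on $K$) gives $\gamma'_{asymp} = \frac1{4\pi}\int(\text{RHS}) \le 0$... one has to be slightly careful with the sign and use $\gamma_{asymp}\ge 0$ via the sign of the RHS; in any case $|\gamma'_{asymp}| + \|\widetilde{\gamma'}\|_{W^{2,?}} \ls \ep^2 \le \ep$ for $\ep_0$ small, so $\Phi$ maps $\mathscr X$ into itself.

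\textbf{Continuity and the fixed point.} The third main step is to verify that $\Phi$ is continuous on $\mathscr X$ (in a suitable topology in which $\mathscr X$ is compact, e.g.\ a weaker norm) so that Schauder's fixed point theorem applies. Continuity of $\underline\gamma \mapsto \bm\lambda$ follows from hypothesis 2 together with an argument that the solution set of \eqref{eq:integrability.in.appendix} varies continuously (here one may need that the selection $\bm\lambda(\underline\gamma)$ can be made continuous — this is the delicate point: if the selection is not canonical, one instead works with the set-valued map and invokes a Kakutani-type fixed point theorem, or builds the continuous selection into hypothesis 3's proof in Section~\ref{sec:appendix.constraint.for.rough}). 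Continuity of the elliptic solution operators is standard from the linearity of the PDEs and the mapping estimates. Once the fixed point $\bm\lambda_0$, $(\gamma, K)$ is obtained, one checks the remaining items of Definition~\ref{datachoice}: the regularity classes $\widetilde\gamma \in H^2_{-1/8}$, $K \in H^1_{7/8}$ (from the elliptic estimates with the correct weights), the constraint \eqref{eq:gamma.constraint} (solved by construction) and \eqref{Kellipticequation} (solved by construction), and the integral compatibility \eqref{compatibility}, which is exactly \eqref{eq:integrability.in.appendix} after substituting $\phi' = e^{-\gamma}\uphi^{\bm\lambda_0}$.

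\textbf{Main obstacle.} I expect the principal difficulty to be the continuous (or appropriately measurable/selectable) dependence of the parameter $\bm\lambda$ on $\underline\gamma$, i.e.\ turning hypothesis 3 into a usable input for a topological fixed-point theorem. The cleanest route is probably to arrange, in the applications (Sections~\ref{sec:appendix.constraint.for.rough}--\ref{sec:data.approx}), that the family $\uphi^{\bm\lambda}$ depends affinely on $\bm\lambda$ and that the two functionals $\bm\lambda \mapsto \int e^{\underline\gamma}\uphi^{\bm\lambda}\rd_j\phi$ form a map whose Jacobian is uniformly invertible (this is where the non-degeneracy condition \eqref{eq:nondegeneracy.in.thm} enters), so that the implicit function theorem gives a genuine continuous $\bm\lambda(\underline\gamma)$ and Schauder (or even Banach, if everything is set up contractively) closes the argument. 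The weighted-space bookkeeping for $K$ and $\gamma$ (ensuring the weights $H^2_{-1/8}$, $H^1_{7/8}$ are exactly met and that there is no spurious logarithmic growth in $\bt$) is routine given Propositions~\ref{prop:basic.elliptic.2}, \ref{prop:elliptic.1st.der} and \ref{prop:Sobolev.weighted}, but must be done with care.
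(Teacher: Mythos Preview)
Your approach is essentially the same as the paper's: a Schauder fixed-point argument in weighted spaces, using hypothesis~3 to select $\bm\lambda$ so that the integrability condition \eqref{eq:integrability.in.appendix} holds, then inverting the Laplacian (via Proposition~\ref{prop:basic.elliptic.2}) to solve for $K$ and $\gamma$, with the quadratic gain $\ls\ep^2$ guaranteeing the map lands back in the $\ep$-ball. The paper differs only in that it iterates on the full tuple $(\bm\lambda,\gamma_{asymp},\widetilde\gamma,K)$ rather than on $\gamma$ alone; in particular $K$ is carried as a separate variable in $L^4_{5/4}$, and the compactness for Schauder comes from the embedding $H^2_{-1/8}\times H^1_{7/8}\hookrightarrow W^{1,4}_{1/4}\times L^4_{5/4}$ together with the finite-dimensionality of $\mathscr K\times[0,\ep]$. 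Your version, solving for $K$ as a functional of $\underline\gamma$ at each step, also works and is arguably more economical; one minor imprecision is that the reference to recovering $\beta$ is a distraction---at the constraint level one solves directly $K^*=2[\mathfrak L\,\Delta^{-1}\sigma]$ with $\sigma^i=\delta^{ij}e^{\underline\gamma}\uphi^{\bm\lambda}\rd_j\phi$, and the vanishing integral (which is exactly \eqref{eq:integrability.in.appendix}) kills the logarithmic part of $\Delta^{-1}\sigma$.

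You have correctly identified the genuine subtlety: hypothesis~3 only asserts \emph{existence} of $\bm\lambda$, so a continuous selection $\underline\gamma\mapsto\bm\lambda(\underline\gamma)$ is not given. The paper's proof is equally terse on this point---it simply writes ``$\bm\lambda^*$ is chosen so that \eqref{eq:integrability.in.appendix} holds'' and invokes Schauder. In the only two applications (Lemmas~\ref{lem:data.exist!} and~\ref{lem:data.approx}), $\uphi^{\bm\lambda}$ is affine in $\bm\lambda$ and the $2\times 2$ linear system determining $\bm\lambda$ has a uniformly invertible matrix (this is precisely where the non-degeneracy condition \eqref{eq:nondegeneracy.in.thm} enters via Lemma~\ref{lem:nondegeneracy.consequence}), so the selection is in fact unique and manifestly continuous. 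Your proposed resolution---building this into hypothesis~3 or falling back on Kakutani---is exactly right, and matches how the lemma is actually used.
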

\begin{proof}
Denote by $\overline{B}_{\mathscr Y}(0,\ep)$ the closed ball of radius $\ep$ around $0$ in a Banach space $\mathscr Y$. Let $\Phi: \mathscr K \times [0,\ep]\times \overline{B}_{W^{1,4}_{\f 14}(\Sigma_0)}(0,\ep) \times \overline{B}_{L^4_{\f 54}(\Sigma_0)}(0,\ep) \to \mathscr K \times [0,\ep]\times \overline{B}_{W^{1,4}_{\f 14}(\Sigma_0)}(0,\ep) \times \overline{B}_{L^4_{\f 54}(\Sigma_0)}(0,\ep)$ be given by $(\bm \lambda, \gamma_{asymp}, \widetilde{\gamma}, K) \mapsto (\bm \lambda^*, \gamma_{asymp}^*, \widetilde{\gamma}^*, K^*)$, where
\begin{enumerate}
\item $\bm \lambda^*$ is chosen so that \eqref{eq:integrability.in.appendix} holds.
\item $\gamma^* = -\gamma_{asymp}^* \om(|x|) \log|x| + \widetilde{\gamma}^*$ is given by 
\begin{equation}\label{eq:gamma*.def}
\gamma^* = \Delta^{-1} ( -\de^{il} (\rd_i \phi)(\rd_l\phi) - \f{e^{-2\gamma}}{2} |K|^2 -  (\uphi^{\bm \lambda^*})^2 ),
\end{equation}
where $\Delta^{-1}$ is as in Definition~\ref{def:Delta-1}. 
\item Define $\sigma^i := \de^{ij} \cdot e^{\gamma} \cdot \uphi^{\bm \lambda^*} \cdot \partial_j \phi$. Impose $K^*$ to be
\begin{equation}\label{eq:K*.def}
 K^*_{i j} = 2 [\mathfrak L \Delta^{-1} \sigma]_{ij},
 \end{equation}
with $\bm \lambda^*$ as in 1~above, $\Delta^{-1}$ as in Definition~\ref{def:Delta-1}, and $\mathfrak L$ the conformal Killing operator as in \eqref{maximality3}.
\end{enumerate}

The equations \eqref{eq:gamma*.def} and \eqref{eq:K*.def} are easy to solve: for $\ep_0 >0$ sufficiently small and $\ep\in (0,\ep_0]$, if $(\gamma_{asymp}, \widetilde{\gamma}, K) \in [0,\ep] \times \overline{B}_{W^{1,4}_{\f 14}(\Sigma_0)}(0,\ep) \times \overline{B}_{L^4_{\f 54}(\Sigma_0)}(0,\ep)$, it follows from Proposition~\ref{prop:basic.elliptic.2}, \eqref{eq:integrability.in.appendix} and Proposition~\ref{prop:Sobolev.weighted} that
\begin{equation}\label{eq:constraint.bounds.ep2}
|\gamma_{asymp}|,\, \| \widetilde{\gamma}^* \|_{H^2_{-\f 18}(\Sigma_0)},\, \| K^* \|_{H^1_{\f 78}(\Sigma_0)} \ls \ep^2
\end{equation}
 for some implicit constants depending only on $R$. (This can for instance be proven as \cite[Lemma~7.1]{HL.elliptic}, with $\de = -\f 14$ in the notation there, and noting that there is in fact extra room in the weights.) In particular, using 1(b) of Proposition~\ref{prop:Sobolev.weighted}, it follows, after choosing $\ep_0$ smaller if necessary, that $\Phi$ indeed maps into $\mathscr K \times [0,\ep]\times \overline{B}_{W^{1,4}_{\f 14}(\Sigma_0)}(0,\ep) \times \overline{B}_{L^4_{\f 54}(\Sigma_0)}(0,\ep)$ (as stated above), and moreover, $\Phi$ is compact.

By Schauder's fixed point theorem, $\Phi$ has a fixed point $(\bm \lambda_0, \gamma_{asymp}, \widetilde{\gamma}, K)$. As a result, $(\phi, \phi' := e^{-\gamma}\uphi^{\bm \lambda_{0}}, \gamma:= -\gamma_{asymp} \om(|x|) \log|x| + \widetilde{\gamma}, K)$ constitutes an admissible initial data set. \qedhere
\end{proof}

\subsection{Lemmas on the non-degeneracy assumption}\label{sec:nondegeneracy}

In this subsection, we prove two lemmas related to the non-degeneracy condition \eqref{eq:nondegeneracy.in.thm}. First, in Lemma~\ref{lem:some.nontrivial.lower.bd}, we prove the assertion in Remark~\ref{rmk:nondegeneracy} that LHS of \eqref{eq:nondegeneracy.in.thm} is non-zero for non-identically zero, compactly supported $\phi$. Then, in Lemma~\ref{lem:nondegeneracy.consequence} we deduce a consequence of the condition \eqref{eq:nondegeneracy.in.thm} which will be used when solving the constraint equations for the impulsive and $\de$-impulsive gravitational waves.

\begin{lem}\label{lem:some.nontrivial.lower.bd}
Let $\phi \in H^1(\Sigma_0)$ be compactly supported and non-identically vanishing. Then 
\begin{equation}\label{eq:some.nontrivial.lower.bd}
 \left\| \rd_1 \phi - \f{\la \rd_1\phi,\, \rd_2\phi \ra_{L^2(\Sigma_0, dx)}}{\|\rd_2\phi \|_{L^2(\Sigma_0)}^2} \rd_2\phi \right\|_{H^{-3}(\Sigma_0)}\neq 0,\quad \left\| \rd_2 \phi - \f{\la \rd_1\phi,\, \rd_2\phi \ra_{L^2(\Sigma_0, dx)}}{\|\rd_1\phi \|_{L^2(\Sigma_0)}^2} \rd_1\phi \right\|_{H^{-3}(\Sigma_0)} \neq 0.
 \end{equation}
\end{lem}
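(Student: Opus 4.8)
The key observation is that both expressions in \eqref{eq:some.nontrivial.lower.bd} are orthogonal projections (in the $L^2(\Sigma_0, dx)$ inner product) of $\rd_1\phi$ and $\rd_2\phi$ onto the orthogonal complement of the span of the other. So the first expression vanishes in $H^{-3}(\Sigma_0)$ if and only if it vanishes in $L^2(\Sigma_0)$ (since $H^{-3}$ is a genuine norm on $L^2$ functions — the Fourier multiplier $\la\xi\ra^{-3}$ is nowhere zero), which happens if and only if $\rd_1\phi$ is a scalar multiple of $\rd_2\phi$ in $L^2$, i.e. there exists $c \in \RR$ with $\rd_1\phi = c\,\rd_2\phi$ as elements of $L^2(\Sigma_0)$. (The degenerate cases where $\rd_1\phi \equiv 0$ or $\rd_2\phi \equiv 0$ must be handled separately but are easy: if $\rd_2\phi\equiv 0$ then the first expression is just $\rd_1\phi$, and if that also vanished then $\phi$ would be constant, hence $\equiv 0$ by compact support, contradiction; symmetrically for the second.)

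\textbf{The main step.} The heart of the matter is then the claim: if $\phi \in H^1(\Sigma_0)$ is compactly supported and $\rd_1\phi = c\,\rd_2\phi$ (almost everywhere, for some constant $c$), then $\phi \equiv 0$. The plan is to show $\phi$ is constant along a foliation of lines and then use compact support. Concretely, the relation $\rd_1\phi - c\,\rd_2\phi = 0$ says that the directional derivative of $\phi$ along the constant vector field $V = \rd_1 - c\,\rd_2$ vanishes as a distribution. Change coordinates linearly so that $V$ becomes $\rd_{y^1}$: writing $y^1 = $ (a coordinate along $V$) and $y^2 = $ (a transverse coordinate, e.g. $y^2 = x^2$ works since $V$ has nonzero $x^1$-component, or $y^2 = x^1$ if $c\neq 0$), the equation becomes $\rd_{y^1}\phi = 0$, so $\phi$ depends only on $y^2$. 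But $\phi$ has compact support in $\RR^2$, which forces $\phi(y^2) \equiv 0$: for each fixed $y^2$ the line $\{y^1 = t\}$ exits the support, and $\phi$ being constant in $y^1$ and eventually $0$ means it is $0$ everywhere on that line. A clean way to make this rigorous at the $H^1$ level (avoiding delicate trace/mollification issues) is: mollify $\phi$ to get $\phi_\varepsilon = \phi * \rho_\varepsilon \in C^\infty$, note $\rd_{y^1}\phi_\varepsilon = 0$ identically, so $\phi_\varepsilon$ is genuinely independent of $y^1$; since $\phi_\varepsilon \to \phi$ in $L^2$ and each $\phi_\varepsilon$ is $y^1$-independent, the limit $\phi$ is (a.e.\ equal to) a $y^1$-independent function; then compact support closes it.

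\textbf{Expected obstacle.} The routine calculations (the projection/orthogonality identities, the Fourier-analytic fact that $\|\cdot\|_{H^{-3}}$ does not annihilate nonzero $L^2$ functions, the linear change of variables) are all straightforward. The one point requiring a little care is the argument that $\rd_V\phi = 0$ in the distributional sense actually forces $\phi$ to be genuinely constant along the $V$-lines for a function that is only $H^1$ — the mollification argument above handles this, but one must be slightly careful that mollification commutes with the constant-coefficient operator $\rd_V$ (it does) and that the resulting pointwise statement passes to the limit. I expect this to be the only place where the write-up needs more than one line; everything else is bookkeeping. Finally, once the claim is established, \eqref{eq:some.nontrivial.lower.bd} follows by contraposition, together with the easy degenerate cases noted above, completing the proof.
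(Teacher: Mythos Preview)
Your proposal is correct and follows essentially the same approach as the paper's proof: both reduce to showing that $\rd_1\phi + a\,\rd_2\phi = 0$ (or $a\,\rd_1\phi + \rd_2\phi = 0$) for a constant $a$ forces $\phi\equiv 0$ via the ``constant along lines $+$ compact support'' argument, and then pass from nonvanishing in $L^2$ to nonvanishing in $H^{-3}$. The paper's write-up is terser (it simply asserts ``the same argument shows\ldots'' without the mollification step), while you spell out the $H^1$-level rigor more carefully; but the logic is identical.
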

\begin{proof}
Take $\phi$ as in the assumption of the lemma. By the compact support assumption, both $\rd_1\phi$ and $\rd_2\phi$ are not identically $0$. The same argument shows that it is impossible to have $\rd_1 \phi + a \rd_2 \phi = 0$ or $a \rd_1 \phi + \rd_2\phi = 0$ for some constant $a\in \RR$. It thus follows from the Cauchy--Schwarz inequality that 
$$ \left\| \rd_1 \phi - \f{\la \rd_1\phi,\, \rd_2\phi \ra_{L^2(\Sigma_0, dx)}}{\|\rd_2\phi \|_{L^2(\Sigma_0)}^2} \rd_2\phi \right\|_{L^2(\Sigma_0)}\neq 0, \quad \left\| \rd_2 \phi - \f{\la \rd_1\phi,\, \rd_2\phi \ra_{L^2(\Sigma_0, dx)}}{\|\rd_1\phi \|_{L^2(\Sigma_0)}^2} \rd_1\phi \right\|_{L^2(\Sigma_0)} \neq 0.$$
Clearly, since a function with non-zero $L^2$ norm must be non-vanishing (and hence has non-zero $H^{-3}$ norm), we obtain \eqref{eq:some.nontrivial.lower.bd}. \qedhere
\end{proof}

\begin{lem}\label{lem:nondegeneracy.consequence}
Suppose that $\mathrm{supp}(\phi) \subseteq B(0,\f R2)$, and
\begin{equation}\label{eq:nondegenerate.in.terms.of.H-3}
\left\| \rd_1 \phi - \f{\la \rd_1\phi,\, \rd_2\phi \ra_{L^2(\Sigma_0, dx)}}{\|\rd_2\phi \|_{L^2(\Sigma_0)}^2} \rd_2\phi \right\|_{H^{-3}(\Sigma_0)} \times \left\| \rd_2 \phi - \f{\la \rd_1\phi,\, \rd_2\phi \ra_{L^2(\Sigma_0, dx)}}{\|\rd_1\phi \|_{L^2(\Sigma_0)}^2} \rd_1\phi \right\|_{H^{-3}(\Sigma_0)} \geq  \ep^{\f 52}.
\end{equation}

Then there exist smooth functions $\psi_1$, $\psi_2$ compactly supported in $B(0,R)$ such that
\begin{equation}\label{eq:nondegeneracy}
\max_{j=1,2} \| \psi_j \|_{H^3(\Sigma_0)} \leq 1, \quad \left| \det  \begin{bmatrix}
 		\int_{\Sigma_0} \psi_1 \rd_1 \phi \, dx^1\, dx^2 & 	\int_{\Sigma_0} \psi_1 \rd_2 \phi \, dx^1\, dx^2   \\
 		\int_{\Sigma_0} \psi_2 \rd_1 \phi \, dx^1\, dx^2 & 	\int_{\Sigma_0} \psi_2 \rd_2 \phi \, dx^1\, dx^2  
 		\end{bmatrix} \right| \gtrsim \ep^{\f52}.
\end{equation}
\end{lem}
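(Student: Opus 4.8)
\textbf{Proof strategy for Lemma~\ref{lem:nondegeneracy.consequence}.} The plan is to dualize the $H^{-3}$ norms: for each $j=1,2$, the quantity $\left\| \rd_j \phi - \f{\la \rd_1\phi,\rd_2\phi\ra}{\|\rd_{3-j}\phi\|_{L^2}^2} \rd_{3-j}\phi \right\|_{H^{-3}(\Sigma_0)}$ equals $\sup \{ \la \psi, \rd_j\phi - c_j \rd_{3-j}\phi\ra : \|\psi\|_{H^3(\Sigma_0)} \leq 1 \}$, where $c_j := \f{\la \rd_1\phi,\rd_2\phi\ra}{\|\rd_{3-j}\phi\|_{L^2}^2}$. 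Since $\phi$ is compactly supported in $B(0,\f R2)$ and the $H^3$ norm is preserved (up to a loss) under truncation by a fixed cutoff equal to $1$ on $B(0,\f R2)$ and supported in $B(0,R)$, I can take the testing functions to be compactly supported in $B(0,R)$, and then mollify them slightly (which changes the pairing and the norm by an arbitrarily small amount) so as to make them smooth. So fix $\widetilde\psi_j$ smooth, supported in $B(0,R)$, with $\|\widetilde\psi_j\|_{H^3(\Sigma_0)}\leq 1$ and $\int_{\Sigma_0} \widetilde\psi_j (\rd_j\phi - c_j\rd_{3-j}\phi)\, dx \geq \f 12 \left\|\rd_j\phi - c_j\rd_{3-j}\phi\right\|_{H^{-3}(\Sigma_0)}$, say.

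\textbf{Key steps.} First I would record the abbreviations $a_{jl} := \int_{\Sigma_0}\widetilde\psi_j\,\rd_l\phi\, dx$ and compute that, by construction, $a_{11}-c_1 a_{12} \geq \f 12 m_1$ and $a_{22}-c_2 a_{21}\geq \f 12 m_2$ where $m_j$ denotes the $j$-th $H^{-3}$ factor in \eqref{eq:nondegenerate.in.terms.of.H-3}. Next, the point is to express $\det\begin{bmatrix} a_{11} & a_{12}\\ a_{21} & a_{22}\end{bmatrix}$ in terms of these good combinations: an elementary manipulation gives
\begin{equation*}
(a_{11}-c_1 a_{12})(a_{22}-c_2 a_{21}) = a_{11}a_{22} - c_2 a_{11}a_{21} - c_1 a_{12}a_{22} + c_1 c_2 a_{12}a_{21},
\end{equation*}
while using $c_1\|\rd_2\phi\|_{L^2}^2 = c_2\|\rd_1\phi\|_{L^2}^2 = \la\rd_1\phi,\rd_2\phi\ra$ one checks that the cross terms combine so that $\det\begin{bmatrix} a_{11}&a_{12}\\a_{21}&a_{22}\end{bmatrix}(1 - c_1 c_2) = (a_{11}-c_1 a_{12})(a_{22}-c_2 a_{21}) - (\text{a remainder})$; the cleanest route is to substitute $\widehat\psi_1 := \widetilde\psi_1$, $\widehat\psi_2 := \widetilde\psi_2$ but then change basis in the target, i.e.\ replace the columns $\rd_l\phi$ by the orthogonalized pair and note $\det$ is a bilinear invariant. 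Concretely: let $e_1 := \rd_1\phi - c_1\rd_2\phi$ and $e_2 := \rd_2\phi - c_2\rd_1\phi$; then $\begin{bmatrix}\rd_1\phi\\\rd_2\phi\end{bmatrix} = \f{1}{1-c_1c_2}\begin{bmatrix}1 & c_1\\ c_2 & 1\end{bmatrix}\begin{bmatrix}e_1\\e_2\end{bmatrix}$ provided $c_1c_2\neq 1$, and since $|c_1c_2| = \la\rd_1\phi,\rd_2\phi\ra^2/(\|\rd_1\phi\|^2\|\rd_2\phi\|^2) < 1$ strictly (Cauchy--Schwarz is strict unless $\rd_1\phi,\rd_2\phi$ are proportional, which is impossible by the argument in Lemma~\ref{lem:some.nontrivial.lower.bd} given $\phi\not\equiv 0$), we get $\det\begin{bmatrix}a_{11}&a_{12}\\a_{21}&a_{22}\end{bmatrix} = \f{1}{1-c_1c_2}\,\big(\la\widetilde\psi_1,e_1\ra\la\widetilde\psi_2,e_2\ra - c_1c_2\la\widetilde\psi_1,e_2\ra\la\widetilde\psi_2,e_1\ra\big)$, which is awkward because of the second product.

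\textbf{The main obstacle, and how to get around it.} The real difficulty is that a single pair $(\widetilde\psi_1,\widetilde\psi_2)$ testing the two norms separately does not automatically make the $2\times 2$ matrix nondegenerate — the cross terms can conspire. The fix is to choose $\widetilde\psi_1$ to test $e_1 = \rd_1\phi - c_1\rd_2\phi$ and $\widetilde\psi_2$ to test $e_2 = \rd_2\phi - c_2\rd_1\phi$, and then replace the columns in the determinant by $e_1, e_2$ via the change of basis above; the relevant $2\times 2$ determinant is then $\f{1}{1-c_1c_2}\det\begin{bmatrix}\la\widetilde\psi_1,e_1\ra & \la\widetilde\psi_1,e_2\ra\\ \la\widetilde\psi_2,e_1\ra & \la\widetilde\psi_2,e_2\ra\end{bmatrix}$, so it suffices to arrange $|\la\widetilde\psi_1,e_1\ra|\gtrsim m_1$ and $|\la\widetilde\psi_1,e_2\ra|, |\la\widetilde\psi_2,e_1\ra|$ \emph{small} relative to $\la\widetilde\psi_1,e_1\ra\la\widetilde\psi_2,e_2\ra$. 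But $e_1\perp e_2$ is generally false; instead I would orthogonalize: let $f_1 := e_1$ and $f_2 := e_2 - \f{\la e_1,e_2\ra}{\|e_1\|^2}e_1$, which is literally the definition already baked into \eqref{eq:nondegeneracy.in.thm} — indeed $e_1$ and $e_2$ \emph{are} the two Gram--Schmidt residuals of $\rd_1\phi$ against $\rd_2\phi$ and vice versa, so $\la e_1, e_2\ra = \la\rd_1\phi - c_1\rd_2\phi, \rd_2\phi - c_2\rd_1\phi\ra = \la\rd_1\phi,\rd_2\phi\ra - c_2\|\rd_1\phi\|^2 - c_1\|\rd_2\phi\|^2 + c_1c_2\la\rd_1\phi,\rd_2\phi\ra = \la\rd_1\phi,\rd_2\phi\ra(c_1c_2 - 1) + $ (terms that vanish by the definitions of $c_1,c_2$) $= -(1-c_1c_2)\la\rd_1\phi,\rd_2\phi\ra$, which need not be zero. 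So the genuinely clean statement is: pick $\widetilde\psi_1$ realizing (up to factor $\f12$) $\la\widetilde\psi_1, e_1\ra = m_1$ and $\la\widetilde\psi_1, e_2\ra = 0$ by first projecting $\widetilde\psi_1$ off the $H^3$-dual direction of $e_2$ (possible since $e_1,e_2$ span a $2$-dimensional subspace of $H^{-3}$ and one can test $e_1$ by an element of the $H^3$-annihilator of $e_2$, losing only a harmless constant), and symmetrically for $\widetilde\psi_2$; then the matrix is diagonal with entries $\gtrsim m_1, m_2$ and the determinant is $\gtrsim m_1 m_2 \geq \ep^{\f 52}$ by \eqref{eq:nondegenerate.in.terms.of.H-3}. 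Finally I would absorb the constant $1-c_1c_2 \in (0,1]$ (it only helps the lower bound, since dividing by something $\leq 1$ increases the determinant) and the mollification/cutoff losses by renormalizing $\psi_j := \widetilde\psi_j / \|\widetilde\psi_j\|_{H^3}$, which keeps $\|\psi_j\|_{H^3}\leq 1$ and only costs a bounded multiplicative constant in the determinant lower bound, completing the proof.
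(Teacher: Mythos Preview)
Your overall strategy---dualize the $H^{-3}$ norms and pick test functions that diagonalize the $2\times 2$ pairing matrix---matches the paper's, but you reach it by a detour. The paper does not change basis to $(e_1,e_2)$ at all: it simply imposes $\la\psi_1,\rd_2\phi\ra_{L^2}=0$ and $\la\psi_2,\rd_1\phi\ra_{L^2}=0$, which kills both off-diagonal entries of $[a_{jl}]$ directly. Under the first constraint one has $a_{11}=\la\psi_1,\rd_1\phi\ra=\la\psi_1,e_1\ra$ automatically (the $c_1\rd_2\phi$ piece drops out), so one is already testing $e_1$ without ever introducing the change-of-basis factor $1/(1-c_1c_2)$ or the auxiliary matrix $[b_{jl}]$. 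The paper then identifies this constrained supremum with $\|e_1\|_{H^{-3}}$ via the explicit candidate $\psi=\Db^{-6}e_1$, and the determinant is the product of the two diagonal entries.

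The step in your argument that is not justified is the clause ``losing only a harmless constant'': the constrained supremum $\sup\{\la\psi,e_1\ra:\|\psi\|_{H^3}\le 1,\ \la\psi,e_2\ra=0\}$ equals $\|e_1\|_{H^{-3}}\sin\vartheta$, where $\vartheta$ is the $H^{-3}$-angle between $e_1$ and $e_2$, and nothing you have written bounds $\sin\vartheta$ from below. The coefficients $c_j$ are $L^2$-projections, so $\la e_1,\rd_2\phi\ra_{L^2}=0$ and $\la e_2,\rd_1\phi\ra_{L^2}=0$, but this says nothing about $H^{-3}$-angles; $e_1$ and $e_2$ could in principle be nearly parallel in $H^{-3}$, making the constant far from harmless. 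Without such a lower bound the product $\la\widetilde\psi_1,e_1\ra\la\widetilde\psi_2,e_2\ra$ need not dominate $m_1m_2$, and the argument does not close. (For what it is worth, the paper's constraint $\la\psi_1,\rd_2\phi\ra=0$ in place of your $\la\psi_1,e_2\ra=0$ meets the same issue: the candidate $\Db^{-6}e_1$ lies in that annihilator only if $\la e_1,\rd_2\phi\ra_{H^{-3}}=0$, which again would require $H^{-3}$- rather than $L^2$-orthogonality.)
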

\begin{proof}
For this proof, let $\|\cdot \|_{H^3(\Sigma_0)}:= \|\Db^3 (\cdot) \|_{L^2(\Sigma_t)}$ (which is equivalent to that in Definition~\ref{def:Sobolev.norm}). Given $\rd_1 \phi$ and $\rd_2 \phi$, notice that
$$\sup_{\substack{ \psi \in C^\infty_c(\Sigma_0) \setminus \{0\},\\
\int_{\Sigma_0} \psi \rd_2 \phi \, dx^1\, dx^2 =0} } \f{ \int_{\Sigma_0} \psi \rd_1 \phi \, dx^1\, dx^2 }{ \|\psi \|_{H^3(\Sigma_0)}} 
= 
\sup_{\substack{ \psi \in C^\infty_c(\Sigma_0) \setminus \{0\},\\
\int_{\Sigma_0} \psi \rd_2 \phi \, dx^1\, dx^2 =0} } \f{ \int_{\Sigma_0} \psi (\rd_1 \phi - \f{\la \rd_1\phi,\, \rd_2\phi \ra_{L^2(\Sigma_0,dx)}}{\|\rd_2\phi \|_{L^2(\Sigma_0)}^2} \rd_2\phi) \, dx^1\, dx^2 }{ \|\psi \|_{H^3(\Sigma_0)}}.$$
It is thus easy to see that the supremum is achieved by $\psi = \Db^{-6}(\rd_1 \phi - \f{\la \rd_1\phi,\, \rd_2\phi \ra_{L^2(\Sigma_0,dx)}}{\|\rd_2\phi \|_{L^2(\Sigma_0)}^2} \rd_2\phi)$, and that the supremum is 
$$\sup_{\substack{ \psi \in C^\infty_c(\Sigma_0) \setminus \{0\},\\
\int_{\Sigma_0} \psi \rd_2 \phi \, dx^1\, dx^2 =0} }
 \f{ \int_{\Sigma_0} \psi \rd_1 \phi \, dx^1\, dx^2 }{ \|\psi \|_{H^3(\Sigma_0)}} = \left\| \rd_1 \phi - \f{\la \rd_1\phi,\, \rd_2\phi \ra_{L^2(\Sigma_0, dx)}}{\|\rd_2\phi \|_{L^2(\Sigma_0)}^2} \rd_2\phi \right\|_{H^{-3}(\Sigma_0)}.$$
A similar statement holds after switching $\rd_1\phi$ and $\rd_2 \phi$. Therefore, using also that $\mathrm{supp}(\phi) \subseteq B(0,\f R2)$, we deduce that there are smooth functions $\psi_1$, $\psi_2$ compactly supported in $B(0,R)$ with $\max_{j=1,2} \|\psi_j\|_{H^3(\Sigma_0)}\leq 1$ such that $\la \psi_1, \rd_2\phi\ra_{L^2(\Sigma_0,dx)} = 0 = \la \psi_2, \rd_1\phi\ra_{L^2(\Sigma_0,dx)}$ and for $i,j=1,2$, $i\neq j$, \textbf{without summation},
$$  \int_{\Sigma_0} \psi_j \rd_j \phi \, dx^1\, dx^2 \geq \f 12\left\| \rd_j \phi - \f{\la \rd_j\phi,\, \rd_i\phi \ra_{L^2(\Sigma_0, dx)}}{\|\rd_i\phi \|_{L^2(\Sigma_0)}^2} \rd_i\phi \right\|_{H^{-3}(\Sigma_0)}.$$
Assuming also that \eqref{eq:nondegenerate.in.terms.of.H-3} holds, this implies that \eqref{eq:nondegeneracy} holds for this choice of $\psi_j$. \qedhere
\end{proof}

\subsection{Construction of impulsive wave data}\label{sec:appendix.constraint.for.rough}

It is now straightforward to apply Lemma~\ref{lem:constraint} to construct initial data set satisfying conditions in Definition~\ref{roughdef}. We will simply be content with constructing some --- instead of classifying all --- such initial data sets. To simplify the exposition, let us construct special examples\footnote{This particular choice we make here allows \eqref{eq:integrability.in.appendix} to be checked more easily.} such that $\tphi$ and $\tphi'$ are of size $O(\ep^2)$, $\rphi$ is of size $O(\ep)$, and $\rphi'$ is of size $O(\ep^{\f 32})$.

\begin{lem}\label{lem:data.exist!}
There exist a large class of admissible initial data sets $(\gamma, K, \phi, \phi')$ satisfying the assumptions of Definition~\ref{roughdef}.
\end{lem}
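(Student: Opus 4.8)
\textbf{Proof proposal for Lemma~\ref{lem:data.exist!}.}
The plan is to apply the general existence result Lemma~\ref{lem:constraint}, so the bulk of the work is to exhibit, for any admissible choice of the geometric parameters $(a_k, c_{kj}, b_k)$ (satisfying the normalization \eqref{cnormalization} and the transversality \eqref{cangle2}) and of the H\"older/support parameters $(s', s'', R, \upkappa_0)$, a scalar function $\phi$ and a two-parameter family $\uphil$ satisfying conditions 1--3 preceding Lemma~\ref{lem:constraint}, \emph{and} such that the resulting data set also obeys conditions \ref{data2}--\ref{nondegeneracy} of Definition~\ref{roughdef}. First I would choose $\rphi$: fix any smooth, non-identically-zero function supported in $B(0,\f R2)$ whose $H^{2+s'}$ norm is an $O(\ep)$ quantity — say $\ep$ times a fixed normalized profile — so that \eqref{eq:datareg} holds; one should also arrange the non-degeneracy \eqref{eq:nondegeneracy.in.thm}, which by Lemma~\ref{lem:some.nontrivial.lower.bd} is automatic for non-trivial compactly supported $\phi$ once we note $\phi$ and $\rphi$ agree up to the small $\tphi$ pieces, and by Remark~\ref{rmk:nondegeneracy} can be guaranteed after an $O(\ep^{6/5})$ perturbation if needed.

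Next I would construct the singular parts $\tphi$. For each $k$, take a profile of the form $\tphi = \ep^2 \, \chi_k(u_k)\, \zeta_k(\th_k)$ on $\Sigma_0$ (with $u_k, \th_k$ the linear functions $a_k + c_{kj}x^j$, $b_k + c^\perp_{kj}x^j$ of \eqref{eikonalinit}, \eqref{thetainit}), where $\zeta_k$ is smooth compactly supported and $\chi_k$ is a fixed Lipschitz, piecewise-smooth function of one variable with a jump in its first derivative at $0$, supported in $\{u_k \geq 0\}$. Because $u_k$ is a linear function of $x$ with unit-length gradient, differentiating $\tphi$ twice in $x$ produces exactly a term $\chi_k''(u_k)$ times a bounded smooth factor, which is a Radon measure (a Dirac mass on $\{u_k = 0\}$ plus an $L^\infty$ part); this verifies the structure assumptions \eqref{eq:assumption.rough.energy}--\eqref{eq:assumption.rough.energy.commuted} and the measure conditions \eqref{eq:T.assumptions.1}--\eqref{eq:T.assumptions.3} (with $T_{\mu\nu,k}$ the jump-of-the-derivative Dirac masses), with all norms of size $O(\ep^2)\leq O(\ep)$. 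One must check in particular that $E_k\tphi$ and $L_k\tphi$ (equivalently $\tphi' - X_k\tphi$) are genuinely more regular: since $E_k, L_k$ are, on $\Sigma_0$, the \emph{tangential} directions to $\{u_k = \mathrm{const}\}$ (by \eqref{initialEj}, \eqref{thetadef}), applying them to $\tphi$ hits $\zeta_k$ and not the singular factor $\chi_k$, which is exactly the desired gain. For $\phi'$ I would simply set $\tphi' := X_k \tphi$ on each singular piece (so $\tphi' - X_k\tphi = 0$, trivially satisfying the $L_k\tphi$-type bounds) and choose $\rphi'$ to be an $O(\ep^{3/2})$ function; assembling, $\phi := \rphi + \sum_k \tphi$, $\phi' := \rphi' + \sum_k \tphi'$, and $\uphil := e^{\gamma}\phi'$... except here is the subtlety: $\uphil$ must be prescribed \emph{before} $\gamma$ is known, so instead one prescribes $\uphil$ directly (a function of the parameters $\bm\lambda$) and recovers $\phi' = e^{-\gamma}\uphil$ at the end.

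The role of the two parameters $\bm\lambda = (\lambda_1, \lambda_2)$ is to enforce the integral compatibility condition \eqref{eq:integrability.in.appendix}, i.e.~$\int_{\Sigma_0} e^{\underline\gamma}\,\uphil\,(\rd_j\phi)\,dx = 0$ for $j=1,2$, and here I would invoke Lemma~\ref{lem:nondegeneracy.consequence}: its hypothesis is precisely the non-degeneracy \eqref{eq:nondegeneracy.in.thm}, and it produces smooth compactly supported $\psi_1,\psi_2$ (with $H^3$ norms $\leq 1$) whose pairing matrix against $(\rd_1\phi, \rd_2\phi)$ has determinant $\gtrsim \ep^{5/2}$. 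Defining $\uphil := \underline{\phi}^{(0)} + \lambda_1\psi_1 + \lambda_2\psi_2$ where $\underline{\phi}^{(0)}$ is the ``base'' choice $e^{\gamma_0}$ times the $\phi'$ assembled above (any fixed reference $\gamma_0$, or just $\underline{\phi}^{(0)} = \phi'$), the map $\bm\lambda \mapsto \big(\int e^{\underline\gamma}\uphil\rd_1\phi,\ \int e^{\underline\gamma}\uphil\rd_2\phi\big)$ is affine in $\bm\lambda$ with linear part the matrix from Lemma~\ref{lem:nondegeneracy.consequence} perturbed by $O(\ep)$ (from $e^{\underline\gamma} - 1$), hence invertible for $\ep$ small; so a unique $\bm\lambda \in \mathscr K$ (a fixed large ball, e.g.~$\bm\lambda$ of size $O(\ep)$) solving \eqref{eq:integrability.in.appendix} exists and depends continuously on $\underline\gamma$ — this is conditions 2 and 3. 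Condition 1, the $W^{1,4}\times L^4$ bound \eqref{eq:phi.uphi.basic.bound}, is immediate from the explicit construction (everything is $O(\ep)$ and supported in $B(0,R)$). Lemma~\ref{lem:constraint} then yields $\gamma, K$ and the admissible data set. The main obstacle is purely bookkeeping: one must carry the different size scales ($O(\ep)$ for $\rphi$, $O(\ep^2)$ for $\tphi$, $O(\ep^{3/2})$ for $\rphi'$) consistently through all the norm checks, and verify that the special choice $\tphi' = X_k\tphi$ is compatible with solving \eqref{eq:integrability.in.appendix} while keeping the parameter perturbation $\lambda_j\psi_j$ small enough not to spoil \eqref{eq:T.assumptions.1}--\eqref{eq:T.assumptions.3} — this is why one takes the parameters of size $O(\ep)$ and the $\psi_j$ of unit $H^3$ norm, so that $\lambda_j\psi_j$ contributes an $O(\ep)$ \emph{smooth} correction absorbed into $\rphi'$. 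No genuinely hard analysis is needed beyond what Lemmas~\ref{lem:constraint}, \ref{lem:some.nontrivial.lower.bd} and \ref{lem:nondegeneracy.consequence} already provide.
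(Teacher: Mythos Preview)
Your proposal is correct and follows essentially the same approach as the paper: prescribe $\rphi$ of size $O(\ep)$ and $(\tphi, \widetilde{\uphi}_k)$ of size $O(\ep^2)$, invoke Lemma~\ref{lem:nondegeneracy.consequence} to obtain $\psi_1,\psi_2$, modulate $\uphil$ by $\lambda_1\psi_1+\lambda_2\psi_2$ so that the affine map $\bm\lambda\mapsto(\int e^{\underline\gamma}\uphil\rd_j\phi)_{j=1,2}$ can be inverted (determinant $\gtrsim\ep^{5/2}$, inverse entries $O(\ep^{-3/2})$), and then apply Lemma~\ref{lem:constraint}. The only cosmetic difference is that the paper takes the regular part $\underline{\phi}_{reg}^{\bm\lambda}$ to be \emph{purely} the modulation $\lambda_1\psi_1+\lambda_2\psi_2$ (no base term), so the right-hand side of the compatibility system is just $-\int e^{\underline\gamma}(\sum_k\widetilde{\uphi}_k)\rd_j\phi=O(\ep^3)$, yielding $|\bm\lambda|\ls\ep^{3/2}$ and $\mathscr K=[-\ep^{5/4},\ep^{5/4}]^2$; your inclusion of an $O(\ep^{3/2})$ base in $\underline\phi^{(0)}$ makes the right-hand side $O(\ep^{5/2})$ and hence $|\bm\lambda|\ls\ep$, which still closes but with less room in condition~1.
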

\begin{proof}
\pfstep{Step~1: Prescribing $\rphi$, $\tphi$ and $\underline{\widetilde{\phi}}_k$} Impose $\rphi$, $\tphi$ and $\underline{\widetilde{\phi}}_k$ satisfying the following conditions:
\begin{itemize}
\item The transversality condition in \ref{data.transverse} and the support properties in \ref{data2} of Definition~\ref{roughdef} hold. Moreover, $\rphi$ and $\tphi$ satisfies the stronger support assumptions $\mathrm{supp}(\rphi),\,\mathrm{supp}(\tphi)\subseteq B(0,\f R4)$.
\item The following estimates hold for $k=1,2,3$:
\begin{subequations}
	\begin{align}
	\label{eq:constraint.data.assumption.1}
	\|\rphi\|_{H^{2+s'}(\Sigma_0)} \leq &\: 0.1 \ep, \\
	\label{eq:constraint.data.assumption.2}
	\|\tphi\|_{W^{1,\infty}(\Sigma_0)} + \|\tphi\|_{H^{1+s'}(\Sigma_0)} + \|\underline{\widetilde{\phi}}_k \|_{L^\infty(\Sigma_0)} + \|\underline{\widetilde{\phi}}_k \|_{H^{s'}(\Sigma_0)} \leq &\: \ep^2, \\
	\label{eq:constraint.data.assumption.3}
	\| \de^{iq} \cdot c^{\perp}_{kq} \cdot \rd_i  \tphi\|_{H^{1+s''}(\Sigma_0)} + \|\de^{iq} \cdot c^{\perp}_{kq} \cdot \rd_i \underline{\widetilde{\phi}}_k\|_{H^{s''}(\Sigma_0)} + \|  \underline{\widetilde{\phi}}_k - \de^{iq} \cdot c_{kq} \cdot \rd_i  \tphi \|_{H^{1+s''}(\Sigma_0)} \leq &\: \ep^2.
	\end{align}
	\end{subequations}
\item For $k = 1,2,3$, there exist signed Radon measures $\mathfrak T_{i j,k}$, $\underline{\mathfrak  T}_{i,k}$, $\mathfrak T_{ijE,k}$, $\underline{\mathfrak  T}_{iE,k}$ and $\mathfrak T_{ijL,k}$ such that 
\begin{equation}
\begin{split}
\| \rd^2_{ij}\tphi - \mathfrak T_{i j,k}\|_{L^2(\Sigma_0)} + \| \rd_{i}\underline{\widetilde{\phi}}_k - \underline{\mathfrak  T}_{i,k}\|_{L^2(\Sigma_0)} + \| \rd^2_{ij}(\de^{lq} \cdot c^{\perp}_{kq} \cdot \rd_l  \tphi) - \mathfrak T_{ijE,k}\|_{L^2(\Sigma_0)} &\\
+ \| \rd_{i}(\de^{lq} \cdot c^{\perp}_{kq} \cdot \rd_l  \tphi') - \underline{\mathfrak T}_{iE,k}\|_{L^2(\Sigma_0)} + \| \rd^2_{ij}(\underline{\widetilde{\phi}}_k - \de^{lq} \cdot c_{kq} \cdot \rd_l  \tphi) - \mathfrak T_{ijL,k}\|_{L^2(\Sigma_0)} & \leq \ep^2,
\end{split}
\end{equation}
\begin{equation} 
\mathrm{supp}(\mathfrak T_{i j,k}) \cup \mathrm{supp}(\underline{\mathfrak  T}_{i,k}') \cup \mathrm{supp}(\mathfrak T_{i j E,k}) \cup \mathrm{supp}(\underline{\mathfrak  T}_{iE,k}') \cup  \mathrm{supp}(\mathfrak T_{i j L,k}) \subseteq \{ u_k=0\},
\end{equation}  
and
\begin{equation}\label{eq:constraint.data.assumption}
T.V.(\mathfrak T_{i j,k })+  T.V.(\underline{\mathfrak  T}_{i,k }') + T.V.(\mathfrak T_{i j E,k })+  T.V.(\underline{\mathfrak  T}_{i E,k }') + T.V.(\mathfrak T_{i j L,k }) \leq \ep^2.
\end{equation} 
\item For $\phi := \phi_{reg} + \sum_{k=1}^3 \tphi$, the non-degeneracy condition \ref{nondegeneracy} in Definition~\ref{roughdef} holds.
\end{itemize}

\pfstep{Step~2: Prescribing $\underline{\phi}_{reg}$ and using Lemma~\ref{lem:constraint}} 
By the non-degeneracy assumption and Lemma~\ref{lem:nondegeneracy.consequence} (with $\f R 2$ instead of $R$), we can now fix smooth functions $\psi_1$, $\psi_2$ compactly supported in $B(0,\f R 2)$ satisfying \eqref{eq:nondegeneracy}.

For $\bm \lambda = (\lambda_1, \lambda_2) \in [-\ep^{\f 54}, \ep^{\f 54}]\times [-\ep^{\f 54}, \ep^{\f 54}] =:\mathscr K$, define 
$$\phi:= \phi_{reg} + \sum_{k=1}^3 \tphi,\quad \underline{\phi}_{reg}^{\bm \lambda} := \sum_{j=1}^2 \lambda_j \psi_j,\quad \underline{\phi}^{\bm \lambda} := \underline{\phi}_{reg}^{\bm \lambda} + \sum_{k=1}^3 \underline{\widetilde{\phi}}_k.$$ 
We now check that $(\phi, \underline{\phi}^{\bm \lambda})$ obey conditions 1--3 preceding Lemma~\ref{lem:constraint}. The only non-trivial condition to check is condition~3, which translates to finding $\bm \lambda = (\lambda_1, \lambda_2) \in \mathscr K$ such that for $j=1,2$,
\begin{equation}\label{eq:rough.data.compatibility.rephrased}
\sum_{i=1}^2 \lambda_i \int_{\Sigma_0} e^{\underline{\gamma}} \psi_i \rd_j \phi \, dx^1\, dx^2 = - \int_{\Sigma_0} e^{\underline{\gamma}} (\sum_{k=1}^3 \widetilde{\uphi}_k) \rd_j \phi \, dx^1 \, dx^2.
\end{equation}
To see that \eqref{eq:rough.data.compatibility.rephrased} holds, note that given $\underline{\gamma}$ as in condition~3, we have $|e^{\underline{\gamma}} - 1| \leq \max \{e^{\underline{\gamma}}, e^{-\underline{\gamma}}\}|\underline{\gamma}| \ls \ep$ on $B(0,\f R2)$. Hence, using \eqref{eq:nondegeneracy}, \eqref{eq:constraint.data.assumption.1} and \eqref{eq:constraint.data.assumption.2}, we obtain that for $\ep_0$ sufficiently small,
\begin{equation}\label{eq:nondegeneracy.modulated.1}
\begin{split}
 &\:\left| \det  \begin{bmatrix}
 		\int_{\Sigma_0} e^{\underline{\gamma}}\psi_1 \rd_1 \phi \, dx^1\, dx^2 & 	\int_{\Sigma_0} e^{\underline{\gamma}}\psi_1 \rd_2 \phi \, dx^1\, dx^2   \\
 		\int_{\Sigma_0} e^{\underline{\gamma}}\psi_2 \rd_1 \phi \, dx^1\, dx^2 & 	\int_{\Sigma_0} e^{\underline{\gamma}} \psi_2 \rd_2 \phi \, dx^1\, dx^2  
 		\end{bmatrix} \right| \\
		\geq &\: \left| \det  \begin{bmatrix}
 		\int_{\Sigma_0} \psi_1 \rd_1 \phi \, dx^1\, dx^2 & 	\int_{\Sigma_0} \psi_1 \rd_2 \phi \, dx^1\, dx^2   \\
 		\int_{\Sigma_0} \psi_2 \rd_1 \phi \, dx^1\, dx^2 & 	\int_{\Sigma_0} \psi_2 \rd_2 \phi \, dx^1\, dx^2  
 		\end{bmatrix} \right| - C\ep^3 \gtrsim \ep^{\f52} - C\ep^3 \gtrsim \ep^{\f 52},
\end{split}
\end{equation}
which, after using again \eqref{eq:constraint.data.assumption.1}, \eqref{eq:constraint.data.assumption.2} and the upper bound $\|\psi_i \|_{H^3(\Sigma_t)} \leq 1$, imply the entry-wise bound
\begin{equation}\label{eq:rough.data.compatibility.rephrased.1}
\begin{bmatrix}
 		\int_{\Sigma_0} e^{\underline{\gamma}}\psi_1 \rd_1 \phi \, dx^1\, dx^2 & 	\int_{\Sigma_0} e^{\underline{\gamma}}\psi_1 \rd_2 \phi \, dx^1\, dx^2   \\
 		\int_{\Sigma_0} e^{\underline{\gamma}}\psi_2 \rd_1 \phi \, dx^1\, dx^2 & 	\int_{\Sigma_0} e^{\underline{\gamma}}\psi_2 \rd_2 \phi \, dx^1\, dx^2  
 		\end{bmatrix}^{-1} = O(\ep^{-\f 32}).
\end{equation}
On the other hand, using \eqref{eq:constraint.data.assumption.1}, \eqref{eq:constraint.data.assumption.2}, we see that the RHS of \eqref{eq:rough.data.compatibility.rephrased} obeys
\begin{equation}\label{eq:rough.data.compatibility.rephrased.2}
\left| \int_{\Sigma_0} e^{\underline{\gamma}} (\sum_{k=1}^3 \widetilde{\uphi}_k) \rd_j \phi \, dx^1 \, dx^2 \right| \ls \max_k \|\widetilde{\uphi}_k \|_{L^2(\Sigma_t)} \|\rd_j \phi\|_{L^2(\Sigma_t)} \ls \ep^3.
\end{equation}
Combining \eqref{eq:rough.data.compatibility.rephrased.1} and \eqref{eq:rough.data.compatibility.rephrased.2}, we see that \eqref{eq:rough.data.compatibility.rephrased} can be solved with $\bm \lambda$ such that $|\lambda_1|,\,|\lambda_2| \ls \ep^{\f 32}$; in particular, $\bm \lambda \in \mathscr K$.

Therefore, Lemma~\ref{lem:constraint} shows that there exist $\bm \lambda_0 \in \mathscr K$ and an admissible initial data set $(\phi, \phi':= e^{-\gamma}\underline{\phi}^{\bm \lambda_0}, \gamma, K)$ with the prescribed data. Returning to the assumptions in Step~1 above, it is easy to check that the data set satisfies all the conditions in Definition~\ref{roughdef}. \qedhere
\end{proof}

\subsection{Construction of approximate data (Proof of Lemma~\ref{lem:data.approx})}\label{sec:data.approx}

\begin{proof}[Proof of Lemma~\ref{lem:data.approx}]
Let $(\phi, \phi', \gamma, K)$ be as in Definition~\ref{roughdef}. In particular, we have decompositions $\phi = \rphi + \sum_{k=1}^3 \tphi$ and $\phi' = \rphi' + \sum_{k=1}^3 \tphi'$.

\pfstep{Step~1: Definitions of $\tphi^{(\de),*}$ and $\widetilde{\underline{\phi}}_k^{(\de),*}$: one-dimensional mollifications} As a first step towards prescribing $\tphi^{(\de)}$ and $\widetilde{\underline{\phi}}_k^{(\de)}$ for the initial data of the $\de$-impulsive waves, we first define approximations of them, denoted respectively by $\tphi^{(\de),*}$ and $\widetilde{\underline{\phi}}_k^{(\de),*}$, which are non-smooth but already satisfy the desired estimates.

Let $\varkappa:\mathbb R\to [0,1]$ be smooth and such that $\varkappa(\tau) = 1$ for $\tau\leq 1$, $\varkappa(\tau) = 0$ for $\tau\geq 2$, and $\int_{\mathbb R} \varkappa = 1$.

Define $\tphi^{(\de),*}$ and $\widetilde{\underline{\phi}}_k^{(\de),*}$ by performing $1$-dimensional mollifications and translating by $\f \de 2$ in the direction parallel to $\rd_{u_k}$:
\begin{align}
\label{eq:the.mollification}
\tphi^{(\de),*}(x) := &\: \f 8\de \int_{\mathbb R}  \varkappa( \f{8s}{\de}) \tphi(x^1-c_{k1}s + \f 12 c_{k1} \de, x^2-c_{k2}s + \f 12 c_{k2} \de) \, ds, \\
\label{eq:the.other.mollification}
\widetilde{\underline{\phi}}_k^{(\de),*}(x) := &\: \f 8{\de} \int_{\mathbb R}  \varkappa(\f{8 s}{\de}) (e^{\gamma} \widetilde{\phi}'_k)(x^1-c_{k1}s + \f 12 c_{k1} \de, x^2-c_{k2}s + \f 12 c_{k2} \de) \, ds.
\end{align}

Using the constraint equation \eqref{eq:gamma.constraint}, we can prove a bound (see e.g.,~\eqref{eq:constraint.bounds.ep2}) $|e^{\gamma}|,\, |e^{-\gamma}| \leq \ep^{\f 32} \la x\ra^{\ep}$ (for $\ep_0$ sufficiently small). This allows us to pass between bounds for $\tilde{\phi}'_k$ and $e^{\gamma} \tilde{\phi}'_k$ with only a small error.

In particular, the following are easy to check.
\begin{enumerate}
\item $\tphi^{(\de),*}$ and $\widetilde{\underline{\phi}}_k^{(\de),*}$ are supported in $\{x\in \Sigma_0: u_k(0,x) \geq - \f {3\de} 4\}\cap B(0,\f {2R}3)$.

\begin{itemize}
\item To see this, note that the support being in $B(0,\f {2R}3)$ is obvious by \eqref{eq:the.mollification}, \eqref{eq:the.other.mollification} since  $\tphi$, $\widetilde{\phi}'_k$ are supported in $B(0,\f R2)$, and $\de\leq 1$, $R\geq 10$. Now since
\begin{itemize}
\item $\tphi(y)$ is supported in $u_k = a_k + c_{ki} y^i \geq 0$, 
\item $\varkappa(\f{8s}{\de})$ is supported in $s\in [-\f \de 4, \f \de 4]$, 
\end{itemize}
we deduce that $\tphi^{(\de),*}$ is non-vanishing only when $a_k + c_{ki} (x^i + \f 14 c_{ki}\de + \f 12 c_{ki} \de) \geq 0$. Since $\sum_{i=1}^2 c_{ki}^2 = 1$, this means that $\mathrm{supp}(\tphi^{(\de),*}) \subseteq \{ x\in  \Sigma_0: u_k \geq -\f {3\de}4\}$. Similarly for $\widetilde{\underline{\phi}}_k^{(\de),*}$.
\end{itemize}
\item There exists a decreasing function $b:[0,1] \to [0,1]$ with $\lim_{\de\to 0} b(\de) = 0$ such that 
\begin{equation}\label{eq:de.approx.H1.nonquant}
 \| \tphi^{(\de),*} - \tphi \|_{H^{1+s'}(\Sigma_0)} + \| \widetilde{\underline{\phi}}_k^{(\de),*} -  e^\gamma \widetilde{\phi}'_k\|_{H^{s'}(\Sigma_0)} \leq  b(\de). 
 \end{equation}
\item The following more quantitative convergences hold for lower norms:
\begin{equation}\label{eq:de.approx.H1.quant}
\| \tphi^{(\de),*} - \tphi \|_{H^1(\Sigma_0)} + \| \widetilde{\underline{\phi}}_k^{(\de),*} - e^\gamma \widetilde{\phi}'_k \|_{L^2(\Sigma_0)} \ls \ep \de^{s'}.
\end{equation}
\item $\tphi^{(\de),*}$ and $\widetilde{\underline{\phi}}_k^{(\de),*}$ obey the following estimates (which follow from the given estimates \eqref{eq:assumption.rough.energy} and \eqref{eq:assumption.rough.energy.commuted}):
\begin{align}
	\label{eq:constraint.data.smooth.assumption.2}
	\|\tphi^{(\de),*} \|_{W^{1,\infty}(\Sigma_0)} + \|\tphi^{(\de),*} \|_{H^{1+s'}(\Sigma_0)} + \|\underline{\widetilde{\phi}}_k^{(\de),*} \|_{L^\infty(\Sigma_0)} + \|\underline{\widetilde{\phi}}_k^{(\de),*} \|_{H^{s'}(\Sigma_0)} \leq &\: \f{3\ep}2, \\
	\label{eq:constraint.data.smooth.assumption.3}
	\| \de^{iq} c^{\perp}_{kq} \rd_i  \tphi^{(\de),*} \|_{H^{1+s''}(\Sigma_0)} + \|\de^{iq}  c^{\perp}_{kq} \rd_i \underline{\widetilde{\phi}}_k^{(\de),*} \|_{H^{s''}(\Sigma_0)} + \|  \underline{\widetilde{\phi}}_k^{(\de),*} - \de^{iq}  c_{kq}  \rd_i  \tphi^{(\de),*} \|_{H^{1+s''}(\Sigma_0)} \leq &\: \f{3\ep}2.
\end{align}
\item The following holds (which follow from the estimates and support properties in \eqref{eq:T.assumptions.1}--\eqref{eq:T.assumptions.3}):
\begin{align}
	\label{eq:delta.waves.proof.1}
		\notag \|  \tphi^{(\de),*} \|_{H^2(\Sigma_0)}+ \| \underline{\widetilde{\phi}}_k^{(\de),*} \|_{H^1(\Sigma_0)}+    \| E_k \tphi^{(\de),*}\|_{H^2(\Sigma_0)} \qquad \qquad \qquad & \\
		+ \| E_k  \underline{\widetilde{\phi}}_k^{(\de),*} \|_{H^1(\Sigma_0)} + \| \underline{\widetilde{\phi}}_k^{(\de),*} - \de^{iq}  c_{kq}  \rd_i  \tphi^{(\de),*} \|_{H^1(\Sigma_0)}  \leq &\:  \f{9\ep}4 \cdot \delta^{-\frac{1}{2}}, \\
	\label{eq:delta.waves.proof.2}
		 \| \tphi^{(\de),*} \|_{H^2(\Sigma_0 \setminus S^k(-\f{3\de}4,-\f \de 4))} + \| \underline{\widetilde{\phi}}_k^{(\de),*} \|_{H^1(\Sigma_0 \setminus S^k(-\f{3\de}4,-\f \de 4))} \leq &\: \f{3\ep}2.
\end{align}
\begin{itemize}
\item To see \eqref{eq:delta.waves.proof.2}, we in particular use a support argument like in point 1~above to show that the singular part does not contribute outside $S^k(-\f{3\de}4,-\f \de 4)$.
\end{itemize}

\end{enumerate}

\pfstep{Step~2: Defining $\phi_{reg}^{(\de)}$, $\tphi^{(\de)}$, $\uphi_{reg}^{\bm \lambda, (\de)}$ and $\widetilde{\uphi}_k^{(\de)}$} We now define $\phi^{(\de)}$ and $\uphi^{\bm\lambda, (\de)}$, in anticipation of using Lemma~\ref{lem:constraint} to obtain an admissible initial data set. We define
\begin{equation}\label{eq:de.approx.def}
\phi^{(\de)} := \phi_{reg}^{(\de)} + \sum_{k=1}^3 \tphi^{(\de)},\quad \uphi^{\bm\lambda, (\de)} := \uphi_{reg}^{\bm \lambda, (\de)} + \sum_{k=1}^3 \widetilde{\uphi}_k^{(\de)},
\end{equation}
where $\phi_{reg}^{(\de)}$, $\tphi^{(\de)}$, $ \uphi_{reg}^{\bm \lambda, (\de)}$ and $\widetilde{\uphi}_k^{(\de)}$ are defined as follows:
\begin{enumerate}
\item For $\phi_{reg}$ as in Definition~\ref{roughdef}, define $\phi_{reg}^{(\de)}$ to be a smooth approximation of $\rphi$, supported in $B(0,R)$, and such that
\begin{equation}\label{eq:de.approx.H1.quant.0}
\| \rphi^{(\de)} - \rphi \|_{H^{2+s'}(\Sigma_0)} \leq \ep \de^{s'}.
\end{equation}
\item For $\tphi^{(\de),*}$, $\widetilde{\underline{\phi}}_k^{(\de),*}$ as in Step~1, define $\tphi^{(\de)}$ and $\widetilde{\underline{\phi}}_k^{(\de)}$ to respectively be smooth approximations of $\tphi^{(\de),*}$ and $\widetilde{\underline{\phi}}_k^{(\de),*}$, with support in $B(0, R)$, such that the following holds:
\begin{enumerate}
\item $\tphi^{(\de)}$ and $\widetilde{\underline{\phi}}_k^{(\de)}$ are supported in $\{x\in \Sigma_0: u_k(0,x) \geq - \de \}\cap B(0,R)$.
\item $(\tphi^{(\de)}, \widetilde{\underline{\phi}}_k^{(\de)})$ is close to $(\tphi^{(\de),*}, \widetilde{\underline{\phi}}_k^{(\de),*})$ in the following sense:
\begin{equation}\label{eq:de.approx.H1.quant.1}
\| \tphi^{(\de)} - \tphi^{(\de),*} \|_{H^{1+s'}(\Sigma_0)} + \| \widetilde{\underline{\phi}}_k^{(\de)} - \widetilde{\underline{\phi}}_k^{(\de),*}\|_{H^{s'}(\Sigma_0)} \leq \ep \de^{s'}.
\end{equation}
\item The estimates \eqref{eq:constraint.data.smooth.assumption.2}--\eqref{eq:delta.waves.proof.2} hold, with $(\tphi^{(\de),*}, \widetilde{\underline{\phi}}_k^{(\de),*})$ replaced by $(\tphi^{(\de)}, \widetilde{\underline{\phi}}_k^{(\de)})$, with $\ep$ replaced by $\f{4\ep}3$, and (in the case of \eqref{eq:delta.waves.proof.2}) with $S^k(-\f{3\de}4,-\f \de 4)$ replaced by $S^k(-\de,0)$.
\end{enumerate}
\item For $\uphi_{reg}^{\bm \lambda, (\de)}$, first define $\phi_{reg}'^{(\de),*}$ be a smooth approximation of $\phi_{reg}'$, supported in $B(0,R)$, and such that
\begin{equation}\label{eq:de.approx.H1.quant.2}
\| \phi_{reg}'^{(\de),*} - \rphi' \|_{H^{2+s'}(\Sigma_0)} \leq \ep \de^{s'}.
\end{equation}
Next, using \eqref{eq:nondegeneracy.in.thm}, we can fix $\psi_1$, $\psi_2$ as in the conclusion of Lemma~\ref{lem:nondegeneracy.consequence}. Define then 
\begin{equation}
\uphi_{reg}^{\bm \lambda, (\de)} := e^{\gamma} \phi_{reg}'^{(\de),*} + \lambda_1 \psi_1 + \lambda_2 \psi_2.
\end{equation}
\end{enumerate}

\pfstep{Step~3: Verifying \eqref{eq:integrability.in.appendix}} In order to apply Lemma~\ref{lem:constraint}, we first need to verify \eqref{eq:integrability.in.appendix}. 

Suppose $\underline{\gamma} = -\underline{\gamma}_{asymp} \om(|x|) \log|x| + \widetilde{\underline{\gamma}}$ is given such that $|\underline{\gamma}_{asymp}|\leq \ep$ and $\|\widetilde{\underline{\gamma}}\|_{W^{1,4}_{\f 14}(\Sigma_0)}\leq \ep$. Let $\mathscr K = [-\ep,\ep]\times [-\ep,\ep]$ and our goal is to find $\bm \lambda = (\lambda_1, \lambda_2) \in \mathscr K$ such that \eqref{eq:integrability.in.appendix} is satisfied. Given the definitions in Step~2, this means that need to solve for $\lambda_j$ (with $j=1,2$) which satisfies
\begin{equation}\label{eq:solving.for.modulation.delta}
\begin{split}
&\: \sum_{i=1}^2 \lambda_i \int_{\Sigma_0} e^{\underline \gamma} \psi_i \rd_j \phi^{(\de)} \, dx^1 \, dx^2 
= - \int_{\Sigma_0} (e^{2 \underline\gamma} \widetilde{\phi}_{reg}'^{(\de),*} + e^{\underline{\gamma}} \sum_{k=1}^3 \widetilde{\uphi}^{(\de)}_k) \rd_j \phi^{(\de)} \, dx^1 \, dx^2.
\end{split}
\end{equation}

\pfstep{Step~3(a): Controlling the RHS of \eqref{eq:solving.for.modulation.delta}}
We compute
\begin{equation}\label{eq:diff.in.compatibility}
\begin{split}
&\: (e^{2 \underline\gamma} \phi_{reg}'^{(\de),*} + e^{\underline{\gamma}} \sum_{k=1}^3 \widetilde{\uphi}^{(\de)}_k) \rd_j \phi^{(\de)} \\
= &\: \underbrace{(e^{2 \gamma} \phi_{reg}' + e^{2\gamma} \sum_{k=1}^3 \tphi') \rd_j \phi}_{=:I} + \underbrace{[ (e^{2 \underline\gamma} - e^{2\gamma}) \phi_{reg}'^{(\de),*} + ( e^{\underline{\gamma}} - e^\gamma ) \sum_{k=1}^3 \widetilde{\uphi}^{(\de)}_k ]\rd_j \phi^{(\de)}}_{=:II}\\
&\: + \underbrace{ e^{2 \gamma} (\phi_{reg}'^{(\de),*}\rd_j \phi^{(\de)} - \widetilde{\phi}_{reg}'\rd_j \phi)  + e^{\gamma} (\rd_j \phi^{(\de)} \sum_{k=1}^3 \widetilde{\uphi}^{(\de)}_k - \rd_j \phi \sum_{k=1}^3 e^\gamma \widetilde{\phi}'_k)}_{=:III} .
\end{split}
\end{equation}
Term I in \eqref{eq:diff.in.compatibility} integrates to $0$ since $(\gamma, K, \phi, \phi')$ is a given admissible initial data set and thus obeys \eqref{compatibility}. For the term $II$ in \eqref{eq:diff.in.compatibility}, note that $\gamma$ and $\underline{\gamma}$ can be bounded respectively by \eqref{eq:constraint.bounds.ep2} and the assumptions on $\underline{\gamma}$ (see beginning of Step~3). Hence, using $|e^x - 1| \leq \max\{e^x, e^{-x}\} |x|$, we obtain the following bound on $B(0,\f R2)$,
\begin{equation}\label{eq:integrability.de.termII.prep}
|e^{2 \underline\gamma} - e^{2\gamma}| \leq |e^{2 \underline\gamma} - 1| +  | e^{2\gamma} - 1 | \leq \max\{ e^{2 \underline\gamma}, e^{2\gamma}, \, e^{-2 \underline\gamma}, e^{-2\gamma}\} |(\underline\gamma| +  |\gamma|) \ls \ep.
\end{equation}
Using also the bounds for $\phi_{reg}'^{(\de),*}$, $\widetilde{\uphi}^{(\de)}_k$ and $\phi^{(\de)}$ from Step~2, we thus have
\begin{equation}\label{eq:integrability.de.termII}
\left| \int_{\Sigma_0} \mbox{(Term $II$ in \eqref{eq:diff.in.compatibility})} \, dx^1 \, dx^2 \right| \ls \ep^3.
\end{equation}
Finally, for term $III$, we use \eqref{eq:de.approx.H1.quant}, \eqref{eq:de.approx.def}--\eqref{eq:de.approx.H1.quant.2} together with \eqref{eq:assumption.rough.energy} and the bounds for $\gamma$ as for term $II$ to obtain 
\begin{equation}\label{eq:integrability.de.termIII}
\left| \int_{\Sigma_0} \mbox{(Term $III$ in \eqref{eq:diff.in.compatibility})} \, dx^1 \, dx^2 \right| \ls \ep^2 \de^{s'}.
\end{equation}

\pfstep{Step~3(b): Solving \eqref{eq:solving.for.modulation.delta}} Using \eqref{eq:datareg}, \eqref{eq:assumption.rough.energy}, \eqref{eq:de.approx.def}, \eqref{eq:de.approx.H1.quant}, \eqref{eq:de.approx.H1.quant.0}, \eqref{eq:de.approx.H1.quant.1}, and the fact that $|e^{\underline{\gamma}}-1| \ls \ep$ on the supports of $\psi_j$ and $\phi$, it follows that
\begin{equation}\label{eq:nondegeneracy.modulated}
\begin{split}
 &\:\left| \det  \begin{bmatrix}
 		\int_{\Sigma_0} e^{\underline{\gamma}}\psi_1 \rd_1 \phi^{(\de)} \, dx^1\, dx^2 & 	\int_{\Sigma_0} e^{\underline{\gamma}}\psi_1 \rd_2 \phi^{(\de)} \, dx^1\, dx^2   \\
 		\int_{\Sigma_0} e^{\underline{\gamma}}\psi_2 \rd_1 \phi^{(\de)} \, dx^1\, dx^2 & 	\int_{\Sigma_0} e^{\underline{\gamma}} \psi_2 \rd_2 \phi^{(\de)} \, dx^1\, dx^2  
 		\end{bmatrix} \right| \\
		\geq &\: \left| \det  \begin{bmatrix}
 		\int_{\Sigma_0} \psi_1 \rd_1 \phi \, dx^1\, dx^2 & 	\int_{\Sigma_0} \psi_1 \rd_2 \phi \, dx^1\, dx^2   \\
 		\int_{\Sigma_0} \psi_2 \rd_1 \phi \, dx^1\, dx^2 & 	\int_{\Sigma_0} \psi_2 \rd_2 \phi \, dx^1\, dx^2  
 		\end{bmatrix} \right| - C (\ep \de^{s'} + \ep^3) \gtrsim \ep^{\f52} - C (\ep \de^{s'} + \ep^3) \gtrsim \ep^{\f 52},
\end{split}
\end{equation}
for $\ep_0$ (and hence $\ep$) sufficiently small.

Using the lower bound on the determinant in \eqref{eq:nondegeneracy.modulated}, as well as the upper bounds in \eqref{eq:assumption.rough.energy} and \eqref{eq:nondegeneracy}, it follows that we have the entry-wise bound
\begin{equation}\label{eq:modulated.inverse.est}
\begin{bmatrix}
 		\int_{\Sigma_0} e^{\underline{\gamma}}\psi_1 \rd_1 \phi \, dx^1\, dx^2 & 	\int_{\Sigma_0} e^{\underline{\gamma}}\psi_1 \rd_2 \phi \, dx^1\, dx^2   \\
 		\int_{\Sigma_0} e^{\underline{\gamma}}\psi_2 \rd_1 \phi \, dx^1\, dx^2 & 	\int_{\Sigma_0} e^{\underline{\gamma}}\psi_2 \rd_2 \phi \, dx^1\, dx^2  
 		\end{bmatrix}^{-1} = O(\ep^{-\f 32}).
\end{equation}

Using \eqref{eq:modulated.inverse.est} and the estimates in Step~3(a), and recalling also that $\de^{s'} \leq \ep$, we then invert the linear matrix to solve \eqref{eq:solving.for.modulation.delta} with some $(\lambda_1, \lambda_2)$ satisfying
\begin{equation}\label{eq:easy.bounds.on.lambda}
 |\lambda_1| + |\lambda_2| \ls \ep^{-\f 32} (\ep^3 + \ep^{2} \de^{s'}) \ls \ep^{\f 32}.
 \end{equation}
In particular, $\bm \lambda = (\lambda_1, \lambda_2) \in \mathscr K$. We have thus verified \eqref{eq:integrability.in.appendix}.

\pfstep{Step~4: Application of Lemma~\ref{lem:constraint}} By Step~3 and Lemma~\ref{lem:constraint}, we know that there exist $\bm \lambda_0$ and functions $(\gamma^{(\de)},K^{(\de)})$ such that $(\phi^{(\de)}, (\phi')^{(\de)} = e^{-\gamma^{(\de)}} \uphi^{\bm\lambda_0,(\de)}, \gamma^{(\de)}, K^{(\de)})$ is an admissible initial data set.

\pfstep{Step~5: Checking the conclusions of Lemma~\ref{lem:data.approx}} First, we prove point 1~of Lemma~\ref{lem:data.approx}, i.e.~we check that $(\phi^{(\de)}, (\phi')^{(\de)} = e^{-\gamma^{(\de)}} \uphi^{\bm\lambda_0,(\de)}, \gamma^{(\de)}, K^{(\de)})$ (given by Step~4) forms an admissible initial data set for three $\de$-impulsive waves with parameters $(3\ep,s',s'',2R,\upkappa_0)$ in Definition~\ref{smoothdef}.
\begin{itemize}
\item The transversality condition holds trivially since $c_{ki}$ is defined as for the given $(\phi, \phi', \gamma, K)$.
\item The required support properties follow from points 1, 2(a), and 3 in Step~2.
\item The estimates in points \ref{datareg}--\ref{data5} of Definition~\ref{roughdef} and in \eqref{eq:delta.waves.1}--\eqref{eq:delta.waves.2} follow easily from \eqref{eq:de.approx.def}, the conditions on $\phi_{reg}^{(\de)}$, $\tphi^{(\de)}$, $ \uphi_{reg}^{\bm \lambda, (\de)}$ and $\widetilde{\uphi}_k^{(\de)}$ in Step~2, together with $\|\psi_j \|_{H^3(\Sigma_0)} \leq 1$ and the bound \eqref{eq:easy.bounds.on.lambda}.
\end{itemize}

We finally need to check the desired convergence (point 2~of Lemma~\ref{lem:data.approx}). By the definition \eqref{eq:de.approx.def}, and the estimates \eqref{eq:de.approx.H1.nonquant}, \eqref{eq:de.approx.H1.quant.0}, \eqref{eq:de.approx.H1.quant.1} and \eqref{eq:de.approx.H1.quant.2}, we have
\begin{equation}\label{eq:check.convergence.1}
\begin{split}
&\: \|\rphi^{(\de)} - \rphi \|_{H^{1+s'}(\Sigma_0)} + \|(\rphi')^{(\de)} - \rphi' \|_{H^{s'}(\Sigma_0)}  \\
&\: \qquad + \max_k (\|\tphi^{(\de)} - \tphi \|_{H^{1+s'}(\Sigma_0)} + \|(\tphi')^{(\de)} - \tphi' \|_{H^{s'}(\Sigma_0)})
 \ls b(\de) + |\lambda_1| + |\lambda_2|.
\end{split}
\end{equation}

To proceed, we need to bound $|\lambda_1|$, $|\lambda_2|$ with an estimate better than \eqref{eq:easy.bounds.on.lambda}. For this, we need a better bound compared to \eqref{eq:integrability.de.termII}. Instead of \eqref{eq:integrability.de.termII.prep}, we use Sobolev embedding (part 1 of Proposition~\ref{prop:Sobolev.weighted}) to obtain $|e^{2\gamma^{(\de)}} - e^{2\gamma}| \ls |\gamma_{asymp} - \gamma^{(\de)}_{asymp}| + \|\gamma - \gamma^{(\de)}\|_{H^2_{-\f 18}(\Sigma_0)}$ on the support of $\phi$. This in turn implies that 
$$\left| \int_{\Sigma_0} \mbox{(Term $II$ in \eqref{eq:diff.in.compatibility})} \, dx^1 \, dx^2 \right| \ls \ep^2(|\gamma_{asymp} - \gamma^{(\de)}_{asymp}| + \|\gamma - \gamma^{(\de)}\|_{H^2_{-\f 18}(\Sigma_0)}).$$
Hence, combining this with \eqref{eq:modulated.inverse.est}, \eqref{eq:integrability.de.termIII} and the fact that $I$ in \eqref{eq:diff.in.compatibility} integrates to $0$, we obtain
\begin{equation}\label{eq:check.convergence.2}
|\lambda_1| + |\lambda_2| \ls \ep^{\f 12} \de^{s'} + \ep^{\f 12} (|\gamma_{asymp} - \gamma^{(\de)}_{asymp}| + \|\gamma - \gamma^{(\de)}\|_{H^2_{-\f 18}(\Sigma_0)}).
\end{equation}
On the other hand, taking the difference of the elliptic equations for $(\gamma, K)$ (which hold because of Definition~\ref{roughdef}) and those for $(\gamma^{(\de)}, K^{(\de)})$ (which hold because of Step~4), we have
\begin{equation}\label{eq:check.convergence.3}
\begin{split}
&\: |\gamma_{asymp} - \gamma^{(\de)}_{asymp}| + \|\gamma - \gamma^{(\de)}\|_{H^2_{-\f 18}(\Sigma_0)} + \| K - K^{(\de)} \|_{H^1_{\f 78}(\Sigma_0)} \\
\ls &\: \ep (\de^{s'} + |\lambda_1| + |\lambda_2| + |\gamma_{asymp} - \gamma^{(\de)}_{asymp}| + \|\gamma - \gamma^{(\de)}\|_{H^2_{-\f 18}(\Sigma_0)} + \| K - K^{(\de)} \|_{H^1_{\f 78}(\Sigma_0)}).
\end{split}
\end{equation}

By \eqref{eq:check.convergence.1}--\eqref{eq:check.convergence.3}, we thus obtain
\begin{align*}
\mbox{LHS of \eqref{eq:check.convergence.1}} \ls &\: b(\de) + \ep \de^{s'}, \\
|\lambda_1| + |\lambda_2|  +|\gamma_{asymp} - \gamma^{(\de)}_{asymp}| + \|\gamma - \gamma^{(\de)}\|_{H^2_{-\f 18}(\Sigma_0)} + \| K - K^{(\de)} \|_{H^1_{\f 78}(\Sigma_0)} \ls &\: \ep \de^{s'}.
\end{align*}
Since $\lim_{\de\to 0} b(\de) = 0$ by point 2~in Step~1, this implies the desired convergence statement. \qedhere
\end{proof}

\bibliographystyle{DLplain}
\bibliography{Threewaves}

\end{document}